\renewcommand{\@pnumwidth}{1.75em}
\renewcommand{\@tocrmarg}{2.75em}
\definecolor{grau}{rgb}{0.55,0.55,0.55}
\newcommand{\HRule}{\color{grau}{\rule{\linewidth}{0.5mm}}}
\newcommand{\HH}{\mathcal{H}}
\newcommand{\FF}{\mathcal{F}}
\newcommand{\A}{\mathfrak{A}}
\newcommand{\CC}{\mathcal{C}}
\newcommand{\IC}{\mathbb{C}}
\newcommand{\IR}{\mathbb{R}}
\newcommand{\IRinf}{\mathbb{R}\cup\lbrace \pm \infty \rbrace}
\newcommand{\IZ}{\mathbb{Z}}
\newcommand{\IN}{\mathbb{N}}
\newcommand{\cu}{{\mathrm U}}
\newcommand{\Ur}{\cu_{\rm res}}
\newcommand{\Gl}{{\mathrm{GL}}}
\newcommand{\Gres}{\widetilde{\Gl}_{\rm res}}
\newcommand{\GresO}{\widetilde{\Gl}^{0}_{\rm res}}
\newcommand{\Greso}{\Gl^{0}_{\rm res}}
\newcommand{\Ures}{\widetilde{\cu}_{\rm res}}
\newcommand{\scpro}{\langle \cdot , \cdot \rangle}
\newcommand{\Gr}{\mathrm{Gr}(\mathcal{H})}
\newcommand{\Grass}{\mathrm{Gr}}
\newcommand{\Hplusn}{\mathcal{H}_{\lbrace \geq -n \rbrace}}
\newcommand{\GL}{\mathrm{GL}^1(\mathcal{H}_+)}
\newcommand{\Hol}{{\mathrm{Hol}}}
\newcommand{\TC}{I_1(\HH_+)}
\newcommand{\HS}{I_2(\HH)}
\newcommand{\prc}{\mathrm{pr}_{\mathbb{C}}}
\newcommand{\pol}{\mathtt{Pol}}
\newcommand{\sa}{{\sf A}}
\newcommand{\car}{{\mathcal{A(H)}}}
\newcommand{\cc}{{\mathcal{C}}}
\newcommand{\cs}{{\mathcal{S}}}
\newcommand{\csq}{{\mathbb{S}}}
\newcommand{\sk}[1]{\left\langle #1\right\rangle}
\newcommand{\fingbox}{\addtolength{\fboxsep}{5pt}\boxed}
\newcommand{\UU}{{\mathfrak{U}}}
\newcommand{\UUt}{{\widetilde{U}}}
\newcommand{\T}{\operatorname{T}}
\newcommand{\ind}{\operatorname{ind}}
\newcommand{\im}{\operatorname{im}}
\newcommand{\coker}{\operatorname{coker}}
\newcommand{\charge}{\operatorname{charge}}
\newcommand{\trace}{\operatorname{tr}}
\newcommand{\spn}{\mathrm{span}}
\newcommand{\seas}{\mathtt{Seas}(\mathcal{H})}
\newcommand{\iseas}{\mathtt{Seas}^{\perp}(\mathcal{H})}
\newcommand{\ocean}{\mathtt{Ocean}}
\newcommand{\St}{\mathrm{St}(\HH)}
\newcommand{\lop}[1]{{\cal L}_{#1}}
\newcommand{\rop}[1]{{\cal R}_{#1}}
\newcommand{\bwdge}{{\textstyle\bigwedge}}
\newcommand{\wedgeu}{{\sideset{}{_\cu}\bigwedge}}
\numberwithin{equation}{section}
\theoremstyle{plain}
\newtheorem{Theorem}{Theorem}[section]
\newtheorem{Lemma}[Theorem]{Lemma}
\newtheorem{Corollary}[Theorem]{Corollary}
\newtheorem{Proposition}[Theorem]{Proposition}
\newtheorem*{nnprop}{Proposition}
\theoremstyle{definition}
\newtheorem{Definition}[Theorem]{Definition}
\newtheorem*{nndef}{Definition}
\newtheorem{Construction}[Theorem]{Construction}
\newtheorem{Remark}[Theorem]{Remark}
\newtheorem{Example}[Theorem]{Examples}
\begin{document}


\begin{titlepage}
\begin{center}
\begin{bfseries}
\Large Ludwig-Maximilians-Universit\"at M\"unchen\\[0.9cm]

\Large Diplomarbeit
\underline{\parbox{\linewidth}{\huge }}

\HRule \\[0.5cm]
{ \color{black}\LARGE \bfseries Time Evolution in the external field problem\\[0.5cm] of Quantum Electrodynamics}\\[0.4cm]

\HRule \\[1cm]
\color{black}
\Large Dustin Lazarovici\\[0.75cm]
\end{bfseries}

\Large betreut von Prof. Dr. Peter Pickl,\\
 Mathematisches Institut der  Ludwig-Maximilians-Universit\"at M\"unchen \\[1cm]

- Zur Erlangung des Grades eines Diplom-Mathematikers - \\[2cm]
\begin{center}
   \includegraphics[width=2in]{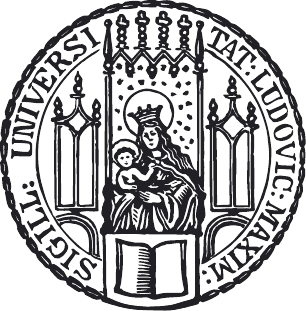}
\end{center}
\vspace{2cm}

\begin{bfseries} \Large M\"unchen, den 30. September 2011\end{bfseries}
\center{\"Uberarbeitete Version vom 19. Oktober 2011.}

\end{center}
\end{titlepage}

\newpage 
\thispagestyle{empty}
\quad 
\newpage

\begin{titlepage}
\begin{center}
\begin{bfseries}
\Large Ludwig-Maximilians-Universit\"at M\"unchen\\[0.9cm]

 \Large Diploma Thesis

\underline{\parbox{\linewidth}{\huge }}
\HRule \\[0.5cm]
{ \color{black}\LARGE \bfseries Time Evolution in the external field problem\\[0.5cm] of Quantum Electrodynamics}\\[0.4cm]
\HRule \\[1.2cm]
\color{black}
\LARGE Dustin Lazarovici\\[0.75cm]
\end{bfseries}

\Large under supervision of Prof. Dr. Peter Pickl,\\
 Mathematisches Institut der  Ludwig-Maximilians-Universit\"at M\"unchen \\[1cm]
 
 - For the achievement of the degree : Diploma in Mathematics - \\[2cm]

\begin{center}
   \includegraphics[width=2in]{siegel.pdf}
\end{center}
\vspace{2cm}

\begin{bfseries}\Large Submitted: September 30th, 2011 \end{bfseries}

\newpage
  \thispagestyle{empty}
  \vspace*{\stretch{1}}
\cleardoublepage
\newpage 
\thispagestyle{empty}
\quad 
\newpage

\end{center}
\end{titlepage}
\newpage
\thispagestyle{empty}
\quad 
\newpage
\pagenumbering{roman}
{
\begin{center}\begin{bfseries} \Large Time Evolution in the external field problem\\ of Quantum Electrodynamics \end{bfseries}\end{center}

\renewcommand\thefootnote{\fnsymbol{footnote}}
\begin{center}
\large Dustin Lazarovici \footnote{Mathematisches Institut, LMU M\"unchen. Dustin.Lazarovici@mathematik.uni-muenchen.de}\\ \vspace{0.1cm}
\end{center}
}
\renewcommand{\thefootnote}{\arabic{footnote}}

\vspace*{0.4cm}
\begin{center}{\bf Abstract}\end{center}
\small \noindent

A general problem of quantum field theories is the fact that the free vacuum and the vacuum for an interacting theory belong to different, non-equivalent representations of the canonical (anti-)commutation relations. In the external field problem of QED, we encounter this problem in the form that the Dirac time evolution for an external field with non-vanishing magnetic components will not satisfy the Shale-Stinespring condition, known to be necessary and sufficient for the existence of an implementation on the fermionic Fock space. Therefore, a second quantization of the time evolution in the usual way is impossible.

In this work, we will present several rigorous approaches to QED in time-dependent, external fields and analyze in what sense a time evolution can exist in the second quantized theory. It is shown that the construction of the fermionic Fock space as an ``infinite wedge space'', introduced in \cite{DeDueMeScho}, is equivalent to the more established construction of the Fock space as the space of holomorphic sections in the dual of the determinant bundle over the infinite-dimensional Grassmann manifold (see e.g. \cite{PreSe}). 

Concerning the problem of the time evolution, we first give a comprehensive presentation of the solutions proposed by Deckert et. al. in \cite{DeDueMeScho}, where the time evolution is realized as unitary transformations between time-varying Fock spaces, and by Langmann and Mickelsson in \cite{LaMi}, where the authors construct a ``renormalization'' for the time evolution and present a method to fix the phase of the second quantized scattering operator by parallel transport in the principle fibre bundle $\Gres(\HH) \rightarrow \Gl_{\rm res}(\HH)$.

We provide a systematic treatment of the ``renormalizations'' introduced in $\cite{LaMi}$ and show how they can be used to translate between the second quantization procedure on time-varying Fock spaces and the second quantization of the renormalized time evolution. We argue that non-uniqueness of the renormalizations corresponds to an additional freedom in the construction of the second quantized S-operator in \cite{LaMi} which is expressed by the holonomy of the of the principle bundle $\Ures(\HH)$ and might no have been fully appreciated in the original paper.

We provide rigorous proof for the fact that the second quantization by parallel transport preserves causality. These findings seem to refute claims made in \cite{Scha} that the phase of the second quantized S-matrix is essentially determined by the requirement of causality. The result can also be applied in the context of time-varying, showing that the implementations can be chosen such that the semi-group structure of the time evolution is preserved.

We propose a simple solution to the problem of gauge anomalies in the procedure of Langmann and Mickelsson, showing that the second quantization of the scattering operator can be made gauge-invariant by using a suitable class of renormalizations.

\tableofcontents 

\lhead{}
\onehalfspace
\newpage
\lhead{NOTATION}
\section*{Notation and Mathematical Preliminaries}
\vspace*{0.4cm}
\noindent Throughout this work we use ``natural units'' in which $\hbar = c = 1$.\\
We use the Minkowski metric with signature $(+,-,-,-)$.\\

\noindent Furthermore, we introduce the following notations:
\begin{description}
\item[$\HH$] a separable, complex Hilbert space.
\item [$\mathcal{B}(\HH)$] the space of bounded operators on $\HH$.
\item [$\Gl(\HH)$] the space of bounded automorphisms of $\HH$.
\item [$\cu(\HH)$] the group of unitary automorphisms of $\HH$.

\item[$I_p(\HH, \HH')$] the \textit{p-th Schatten class} of linear operators $T: \HH \rightarrow \HH'$ for which \begin{equation*}(\lVert T \rVert_p)^p := \trace[(T^*T)^{p/2}] < \infty. \end{equation*}
$\lVert \cdot\rVert_p$ is a norm that makes $I_p(\HH, \HH')$ a Banach space. 
It satisfies\\ $\lVert AT \rVert_p \leq \lVert A \rVert \, \lVert T \rVert_p$ and $\lVert TB \rVert_p \leq \lVert B \rVert \, \lVert T \rVert_p$, for $A \in \mathcal{B}(\HH'), B \in \mathcal{B}(\HH)$.
Thus, $I_p(\HH, \HH) =: I_p(\HH)$ is a two-sided ideal in the algebra of bounded operators on $\HH$.\\  
If $T = \sum\limits_{k\geq 0} \mu_k \lvert f_k \rangle \langle e_k\rvert$ is a singular-value decomposition, the p-th Schatten norm corresponds to the $\ell^p$ norm on the sequence $(\mu_k)_k$ of singular values.\\
If an operator $T$ is in  $I_p(\HH, \HH')$ for some $p$, then $T$ is compact. 
\end{description}
In particular,
\begin{description}
\item[$I_1(\HH)$] the ideal of \textit{trace-class operators} for which $\trace(T) := \sum\limits_{k\geq0} \sk{e_k, T e_k}$ is well-defined and independent of the Hilbert basis $(e_k)_{k\geq0}$ of $\HH$. 
\item [$I_2(\HH, \HH')$] the class of \textit{Hilbert-Schmidt operators} $\HH \rightarrow \HH'$. The product of two Hilbert-Schmidt operators is in the trace class with $\lVert ST \rVert_1 \leq \lVert S \rVert_2\,\lVert T \rVert _2$. 
\item[$Id + I_1(\HH)$] $= \lbrace T  = Id + A \mid A \in I_1(\HH)\rbrace\;$ the set of operators for which the \textit{Fredholm determinant} is well defined by
\begin{equation*}  \det(1+A) := \sum\limits_{k=0}^\infty \trace\bigl(\sideset{}{^k}\bigwedge (A)\bigr) = \sum\limits_{k=0}^\infty \sum\limits_{i_1 < \dots < i_k} \lambda_{i_1}\cdots\lambda_{i_k}, \end{equation*}
for $(\lambda_n)_n$, the eigenvalues of $A$. The Fredholm determinant has the following properties:
\begin{enumerate}[i)]
\item $\lvert \det(1+A) \rvert \leq \exp(\lVert A \rVert_1)$.
\item $(1 + A)$ is invertible, if and only if $\det(1 +A) \neq 0$.
\item $\det(S) \det(T) = \det(S T)$, for $S, T \in Id + I_1(\HH)$.
\item If $(e_k)_{k\geq0}$ is on ONB of $\HH$, then
$\det(T) = \lim\limits_{N \rightarrow \infty} \det\bigl(\langle e_i, T e_j\rangle\bigr)_{i,j \leq N}$
\end{enumerate}
\item[$\Gl^1(\HH)$] := $\Gl(\HH) \cap \bigl(Id + I_1(\HH)\bigr)$ denotes the set of bdd. isomorphism with well-defined Frdholm determinant.
\item[$\cu^1(\HH)$] := $\cu(\HH)  \cap \bigl(Id + I_1(\HH)\bigr)$ denotes the set of unitary operators with well-defined Frdholm determinant.
\end{description}

\vspace*{0.4cm}
\noindent All further notations will be introduced in the course of the work.

\chapter*{Preface}
\pagestyle{fancy}
\pagenumbering{arabic}
\lhead{PREFACE}
\cfoot{\thepage}
\setcounter{page}{1}
\addcontentsline{toc}{chapter}{Preface}

Quantum Electrodynamics, or short: QED, is widely considered to be the most successful theory in entire physics. Indeed, its predictions have been confirmed time and time again with remarkable precision by various experiments in particle accelerators and laboratories all over the world. Probably the most famous and most spectacular demonstration of the potency of Quantum Electrodynamics is the prediction of the anomalous magnetic moment of the electron, known as ``g - 2'' in the physical literature. This electron g-factor has been measured with an accuracy of 7.6 parts in $10^{13}$, i.e. with a stupendous precision of 12 decimal places and found to be in full agreement with the theoretical prediction (Odom et.al., Phys. Rev. Lett. 97, 030801 (2006)). Actually, we have to be more precise:  since the fine structure constant $\alpha$ enters every QED-calculation as a free parameter, we have to gauge it by other experiments or, equivalently, express every QED-measurement as an independent measurement of $\alpha$. In this sense, theory and experiment are in agreement up to 0.37 parts per billion i.e. to 10 decimal places in the determination of $\alpha$ (Hanneke et.al., Phys. Rev. Lett. 100, 120801 (2008)).
This has often been called the best prediction in physics and whether this is adequate or not, it is certainly very impressive.\\
In one of the standard textbooks on quantum field theory it is even said that \textit{``Quantum Electrodynamics (QED) is perhaps the best fundamental physical theory we have''}. 
(Peskin, Schr\"oder, ``An Introduction to Quantum Field Theory'',  1995 ).\\

\noindent In this light, it might seem surprising that many of the brilliant minds that actually developed the theory were not quite as enthusiastic about it. In a talk given in 1975, P.A.M. Dirac famously expressed:

\begin{quote}\textit{``Most physicists are very satisfied with the situation. They say, Quantum Electrodynamics is a good theory, and we do not have to worry about it any more. I must say that I am very dissatisfied with the situation, because this so-called good theory does involve neglecting infinities which appear in its equations, neglecting them in an arbitrary way. This is just not sensible mathematics. Sensible mathematics involves neglecting a quantity when it turns out to be small - not neglecting it just because it is infinitely great and you do not want it!''} \footnote{What I find most remarkable about this quote is that it takes such a great scientist to state something so obvious and be taken seriously.} 
\begin{flushright}
\footnotesize{Cited after: H. Kragh, Dirac: A scientific biography, CUP 1990}
\end{flushright}\end{quote}

\noindent And even in his Nobel lecture, where the occasion would have excused some enthusiasm, Richard Feynman said:

\begin{quote} \textit{``That is, I believe there is really no satisfactory quantum electrodynamics, but I'm not sure. [...] I don't think we have a completely satisfactory relativistic quantum-mechanical model, even one that doesn't agree with nature, but, at least, agrees with the logic that the sum of probability of all alternatives has to be 100\%. Therefore, I think that the renormalization theory is simply a way to sweep the difficulties of the divergences of electrodynamics under the rug. I am, of course, not sure of that.''}
  \begin{flushright}
    \footnotesize{R.P. Feynman: "The Development of the Space-Time View of Quantum Electrodynamics",
Nobel Lecture (1965)\footnote{http://nobelprize.org/nobel\_prizes/physics/laureates/1965/feynman-lecture.html}}
    \end{flushright}
\end{quote}

\noindent Comparing these two statements with the assessment of Peskin and Schr\"oder and really with the general spirit of the scientific community of today, one might think that great progress has been made on the foundations of QED ever since. Quite frankly, I don't see where. 
Of course, experimental success has steadily strengthened our trust in the usefulness of the framework of quantum field theory, but I don't think this was really the concern expressed by Feynman and Dirac. Over the years, we might have become desensitized to the problems of QED, but it seems to me that we haven't done a very good job at fixing them. So, what then is wrong with QED?\\

For once, more then half a century after its development, Quantum Electrodynamics still lacks a rigorous mathematical formulation. It is well known that QED (just as all realistic quantum field theories) relies on different ``renormalization'' schemes to render its predictions (more or less) finite. And even after renormalization, the S-matrix expansion is widely believed to have \textit{zero} radius of convergence in the coupling constant $\alpha$. No matter how crafty physicists have gotten at manipulating infinities, this fact remains highly unsatisfying from a mathematical point of view. It is also well known that every attempt to axiomatize (3+1 dimensional) QED, i.e. every attempt to base the theory on a consistent set of formal requirements, has failed -- and is bound to fail for any ``serious'' quantum field theory. Interestingly enough, this realization had not so much shaken confidence in the theories themselves but rather ended the program of axiomatizing fundamental physics.

A fact widely ignored by physicist  is that the mathematical deficiencies of QED (and other field theories) do not begin at the computational level --they are really much more basic. The usual formulations of the theory are intrinsically ill-defined and it's not every clear how to write things down in a way that is mathematically meaningful. 
Most physicists seem either not to know or not to care about these kind of problems. This has created a somewhat tragicomical situation. Nowadays, very little work is done on the rotten foundations of the theory. Physicists don't work on it, because they believe the theory is so good that there's nothing left to do. Mathematicians don't work on it, because they find the theory so bad that they don't even know where to start.\\

But apart from all mathematical problems, QED (and really the entire Standard Model of particle physics) is incomplete in a very different sense: it's lacking an ``ontology'', a meaningful interpretation of the mathematical framework, providing a clear and coherent picture of what the theory is really about. I believe that a fundamental theory of nature has to clearly identify the basic elements of its physical description (the \textit{local beables}, in the sense of John Bell), which we can think of as the basic constituents of physical ``reality''. And it has to tell us, how exactly these physical objects correspond to abstract objects or parameters in the mathematical formulation of the theory; indeed, we should insist that this correspondence be as simple and immediate as possible.Thinking about Quantum Electrodynamics, just try in all honesty to answer the question: what is the theory actually about? What are the fundamental physical objects of its description? Is QED essentially a theory about charged particles? This is pretty much the way we use to think and talk about it; but if you inquire a little further, a particle physicist will have to tell you that this is not ``really'' what's going on. And indeed, taking a look into any textbook on quantum field theory or the Standard Model, we're going to find plenty of ``particles'', neatly listed in various tables or swirling around in funny little diagrams  -- yet, there aren't actually any particles in the theory.\footnote{Indeed, in the Standard Model, particles can have mass, charge, spin, even color, flavor or families, but no location in space-time. And therefore no substance as a physical object.}  So, maybe QED is a theory about ``quantum fields''? The expression ``quantum field theory'' might suggest that kind, but the answer is not too convincing, as the role of the ``fields'' in the theory remains rather obscure. Mostly, they seem to appear as a formal device for setting up the perturbation expansions or for deriving the equations of motion from a principle of least action. I might be wrong about that. Even then, however, explaining what the quantum fields are actually supposed to be and how they constitute the physical world that we live in, seems like a formidable task and not many people, who invoke this answer, like to engage in it. Maybe QED is merely about ``transition amplitudes''. To my understanding, this very pragmatic standpoint was advocated by such distinguished scientists as Werner Heisenberg, and it might very well be logically consistent, although I think it requires some mental gymnastics to avoid questions like \textit{what} transitions from \textit{what} into \textit{what}?\\

One might call such concerns ``metaphysical'', but to me, they are as physical as it gets. And with such considerations in mind, the state of modern physics in general and of Quantum Electrodynamics in particular seems pretty bad. I have to repeat, though, that the vast majority of physicists does not share this kind of pessimism. The reasons for this may be historical or sociological or they might, of course, lie in my incomplete understanding of things. There are various possible reasons for this and a detailed discussion would certainly go beyond the scope of this introduction. However, I like the irony in the idea that the very genius of Richard Feynman might have to take some of the blame. His ingenious method of visualizing the formal expansions of quantum field theories by means of the famous diagrams that carry its name, provided us with most of the intuition we have for the physical processes in particle physics and coined the way we use to think and talk about the theory. We are so used to talking about particles scattering from each other, about photons being emitted and absorbed or pairs of virtual particles ``screening" charges and so on, that we tend to forget that \textit{none of this is actually in the theory}. As my first teacher of quantum field theory used to say: \textit{``it's just a nice, cartoonish way to talk about these things.''}\\

Above, I have suggested that physicists don't work on foundations of QED anymore, because they don't see any problems with the theory. This is just partly true. Actually, many physicists \textit{do} acknowledge that the theory is deeply flawed but have -- de facto -- given up on it. Instead, they have adopted the point of view that QED (or rather the entire Standard Model) is indeed not a fundamental theory of nature, but merely a low energy approximation to a fundamental theory (maybe a ``theory of everything'') still waiting to be discovered. All the problems we're facing today, that is the hope, will vanish, once said theory is found. String Theory is usually considered to be the best candidate for such a fundamental theory of nature. Personally, I am sceptical whether the results of String Theory after 20 years of intensive research justify this kind of optimism, but that's a different debate. Anyways, claiming that QED will have to wait for the next big scientific revolution to resolve its various issues might be a valid standpoint. I just have two objections that I have to mention.

First: it has never been like that in the history of physics. Classical mechanics were perfectly well defined and well understood before Special Relativity and Quantum Mechanics came to extend the picture. The Maxwell-Lorentz theory of electromagnetism is a beautiful theory, both physically and mathematically, except for one little detail: the electron self-interaction. This problem was not solved but inherited by the presumingly ``more fundamental'' theory of Quantum Electrodynamics, where it made quite a prominent career under the name of ``ultraviolet divergence''.\footnote{Above all, to this problem, QED added the ``infrared divergence'', so things really just got worse on that front.}

My second objection is this: even if the final answers \textit{do} lie beyond QED, isn't it still important to understand as well as possible what exactly goes wrong and what can and cannot be done? Isn't it possible, even likely, that insights of this kind will lead the way to a new, better behaved, maybe more fundamental description?
As John Bell put it \footnote{Actually, Bell wrote this about non-relativistic Quantum Mechanics, but his appeal seems even more urgent in the context of modern quantum field theories}: 
\begin{quote}\textit{``Suppose that when formulation beyond FAPP} [for all practical purposes] \textit{is attempted, we find an unmovable finger obstinately pointing outside the subject, to the mind of the observer, to the Hindu scriptures, to God, or even only Gravitation? Would not that be very, very interesting?''}
  \begin{flushright}
  \footnotesize{J. Bell, ``Against Measurement'', in: \cite{Bell}}
  \end{flushright}
  \end{quote} 

It is in this spirit that I wrote this thesis and that the research program started by my teachers and colleagues should be understood. The goal is ambitious, yet humble at the same time. We do not expect to ``fix'' QED, or solve all the problems that have troubled so many greater physicists before us. However, we hope to get a better understanding of the difficulties, approach them in a systematic way and see how far one can get with rigorous mathematics. The work I am presenting here is of rather technical nature and mostly concerned with the task of lifting the unitary time evolution to the fermionic Fock space (``second quantization'') in the exterior field problem of QED. I hope that this will provide some insights into the fundamental difficulties of QED in particular and of relativistic quantum theory in general. To me, at least, it was a humbling realization to learn at what basic level the formalism already fails. Ultimately, though, my personal believe is that if we want to make significant progress towards a meaningful, well-defined, fundamental theory we need to think the formal aspects and the conceptual aspects together and return to a way of doing physics that takes both, rigorous mathematics and true physical understanding seriously.

\chapter{Introduction}
\lhead{\small \rightmark}
	
\section{The Dirac Equation}
We begin our study of the external field problem in Quantum Electrodynamics with the one-particle Hilbert space $\HH := L^2(\IR^3, \IC^4)$ of square-integrable $\IC^4$-valued functions. The fundamental equation of motion is the famous Dirac equation
\begin{equation}\label{Diraceq}\addtolength{\fboxsep}{5pt}\boxed{ (i \partial\!\!\!/ - m)\Psi(t) = (i \gamma^{\mu}\partial_{\mu} - m ) \Psi(t) = 0 }\end{equation}
where $\Psi(t) \in L^2(\IR^3, \IC^4)$ for every fixed $t \in \IR$.
The gamma matrices $\lbrace \gamma^0,\gamma^1, \gamma^2,\gamma^3 \rbrace$ form a \mbox{4-dimensional} complex representation of the Clifford algebra $\mathcal{C}l(1,3)$, i.e. they satisfy\\ 
$\lbrace \gamma^{\mu} , \gamma^{\nu} \rbrace = 2 g^{\mu\nu} \cdot \mathds{1}$ where $g^{\mu\nu}$ is the Minkowski metric tensor.

\noindent Using $(\gamma^0)^2 = \mathds{1}$, we can rewrite the Dirac equation in Hamiltonian form as
\begin{equation}\addtolength{\fboxsep}{5pt}\boxed{  i\, \partial_t\, \Psi = D_0\, \Psi :=  \bigl(-i \underline{\alpha} \cdot \underline{\nabla} + m \beta\bigr) \Psi} \end{equation}
Here, the notations $\beta = \gamma^0$ and $\alpha^{\mu} = \gamma^0\gamma^{\mu}$ are common.\\
In the presence of an electromagnetic field described by a vector potential\\ 
$\sa = (\sa_\mu)_{\mu=0,1,2,3} = (\Phi , - \underline{\sa})$, the partial derivative in the Dirac equation \eqref{Diraceq} is replaced by the covariant derivative $ \partial_{\mu} + ie \sa_{\mu}$,  adding to the Hamiltonian the interaction potential 
\begin{equation}\fingbox{ V(t) = e\, \alpha^{\mu} \sa_{\mu} = - e\, \underline{\alpha} \cdot \underline{\sa}  +  e\, \Phi} \end{equation}

\noindent As a side-note we remark that a mathematically more sophisticated description would start with a space-time manifold $M \times \IR$, where $M$ is a (3-dimensional) compact manifold with spin-structure, and realize $\HH$ as the space of $L^2$-sections in a spinor-bundle over $M$ (cf. \cite{LawMi}).

It is well known that the free Dirac Hamiltonian $D_0$ is unstable. It has the continuous spectrum $(-\infty , -m] \cup [+ m , + \infty)$ which gives rise to a splitting of the one-particle Hilbert space $\HH = L^2(\IR^3, \IC^4)$ into two spectral subspaces $\HH = \HH_+ \oplus \HH_-$.
Physical interpretation of the negative energy free states is difficult and has troubled physicists for many years. In particular, as the Hamiltonian is unbounded from below, it would be possible to extract an arbitrary amount of energy from the system, which is unphysical.
To deal with this problems, P.A.M. Dirac proposed the so-called \textit{Dirac sea-} or \textit{hole theory}:
\begin{quote}
  \textit{``Admettons que dans l'Univers tel que nous le connaissons, les \'etats d'energie n\'egative soient presque tous occup\'es par des \'electrons, et que la distribution ainsi obtenue ne soit pas accessible \`a notre observation \`a cause de son uniformit\'e dans toute l'etendue de l'espace. Dans ces conditions, tout \'etat d'energie n\'egative non occup\'e repr\'esentant une rupture de cette uniformit\'e, doit se r\'ev\'evler \`a observation comme une sorte de lacune. Il es possible d'admettre que ces lacunes constituent les positrons.''} \footnotesize{P.A.M. Dirac: \textit{Th\'eorie du Positron}. \cite{Dir}}\\
\end{quote}

According to Dirac, the negative energy states are almost entirely occupied by an infinite number of electrons -- the Dirac sea -- which due to its homogeneous distribution is hidden from physical observation. The Pauli exclusion principle will then work to prevent transition of positive energy electrons to states of arbitrarily negative energy, so that the system is stable. ``Holes'' in the otherwise homogeneously filled Dirac sea will appear as particles of positive energy but opposite charge and represent what we call positrons. Transition of an electron from the negative energy spectrum to the positive energy spectrum in the presence of an electromagnetic field will look like the simultaneous creation of an electron and a positron to the outside observer. Conversely, when a positive energy electrons drops into an unoccupied state of negative energy, we see the annihilation of an electron/positron pair with energy being emitted in form of radiation.
\vspace*{-2mm}
\begin{figure}[h]
  \begin{center}
    {\includegraphics[scale=0.4]{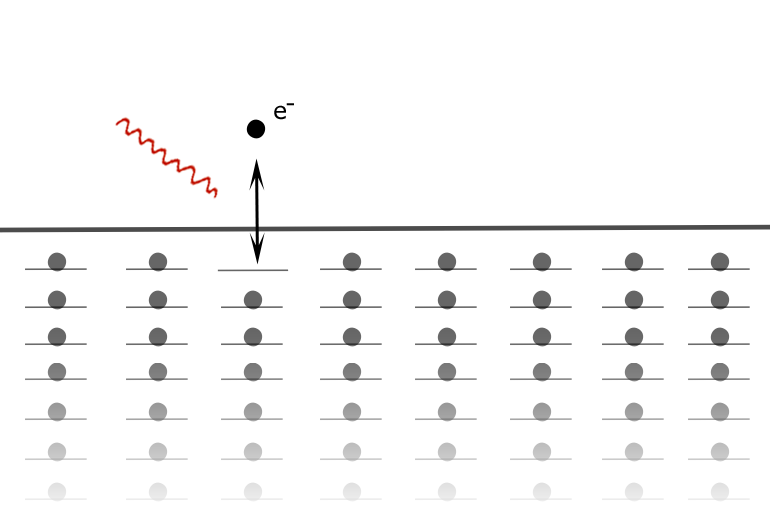}}
    \caption{\label{fig:Dirac Sea}Dirac's hole theory.}
  \end{center}
\end{figure}

Despite the obvious peculiarities of the model -- that is, the existence of an infinite number of particles whose status remains to discuss -- it provides an ingenious picture explaining the most important phenomena of relativistic quantum theory in a clear and elegant way. Admittedly, the Dirac sea is usually absent in ``modern'' presentations of QED, though it seems to me that it hasn't been replaced by an equally compelling physical picture. Anyways, Dirac's theory provides a good intuition for the difficulties of relativistic quantum theory and can be understood as the physical motivation behind most of the mathematically rigorous approaches to the second quantized theory that will be presented in this work.

 \newpage
\singlespace 
\section{An Intuitive Approach}
\noindent From nonrelativistic Quantum Mechanics we are familiar with the fact that the n-fermion Hilbert space is the n-fold exterior product $\bigwedge^n \mathcal{H}$ of the one-particle Hilbert space $\mathcal{H}.$ This space is spanned by decomposable states of the form $v_1 \wedge \dots \wedge v_n$. The Hermitian scalar product is given by: 
\begin{equation}\label{slaterdet} \langle v_1 \wedge \dots \wedge v_n ,
w_1 \wedge \dots \wedge w_n \rangle= \det ( \langle v_i , w_j \rangle)_{i,j} \end{equation}
Under a unitary time evolution $U = U(t_1,t_0 )$, the states evolve in the obvious way:
\begin{equation} v_1 \wedge \dots \wedge v_n \, \xrightarrow{U(t_1,t_0)} \, Uv_1 \wedge \dots \wedge Uv_n\end{equation}

\noindent Projectively (i.e. mod $\IC$), such states are in one-to-one correspondence with n-dimensional subspaces of the Hilbert space $\mathcal{H}$ by 
\begin{equation}v_1 \wedge \dots \wedge v_n \longmapsto \spn(v_1, \dots , v_n) =: V \subset \mathcal{H},\end{equation}
for if we take $w_1, \dots , w_n \in \mathcal{H}$ with $\spn(w_1, \dots , w_n) = \spn(v_1, \dots v_n)$, i.e. a different basis of V, we find
\begin{equation}w_1 \wedge \dots \wedge w_n = \det(R) \, v_1 \wedge \dots \wedge v_n \end{equation}
with $R$, the matrix in $\Gl_n(\mathbb{C})$ transforming the basis $(v_1, \dots , v_n)$ into $(w_1, \dots , w_n)$.\\  
(Note that if $v_1, \dots , v_n$ are not linearly independent, then $v_1 \wedge \dots \wedge v_n = 0$).\\

In the setting of relativistic Quantum Electrodynamics, however, we have a \textit{Dirac sea} containing \textit{infinitely many} particles which makes the situation more complicated. Projective decomposable states are now in correspondence with certain infinite-dimensional subspaces of $\mathcal{H}$, so-called \textit{polarizations} (see Def. \ref{Def:Polarization}). For example, in the unperturbed Dirac sea (the ground state), all negative energy states are occupied and all positive energy states are empty, so this configuration of the Dirac sea corresponds to the subspace $V:= \mathcal{H_-} \subset \mathcal{H}$. Under a unitary transformation $U$, a time evolution $U = U^\sa(t_1,t_0)$, let's say, $V$ evolves into $W := U(V) = UV$.

What does this tell us about the physics? First and foremost we would like to ask: ``How many electrons and how many positrons (holes) were created?''. Thanks to Dirac's ingenious picture, we have a very simple intuition of how to answer this question: we just count! The negative energy states that remain occupied correspond to the subspace $W \cap \mathcal{H_-}$. Consequently, the number of holes is just the codimension of $W \cap \mathcal{H_-}$ in $\mathcal{H_-}$, i.e. the dimension of the factor space $\mathcal{H_-}\slash (W \cap \mathcal{H_-})$.  Similarly, the Dirac sea picture tells us that we can think of the dimensions of $W$ complementary to $W \cap \mathcal{H_-}$ as being populated by electrons that have been lifted from the sea to positive energies. In conclusion: 
\begin{align}\label{partnum}  \# \text{electrons}\; \approx \dim\bigl(W \slash (W \cap \mathcal{H_-})\bigr). \\ \label{holnum} 
\# \text{holes}\; \approx  \dim\bigl(\mathcal{H_-}\slash  (W \cap \mathcal{H_-})\bigr).  \end{align}
The \textit{net-charge} of $W$ is then:
\begin{equation}\label{relcharge} \dim\bigl(W \slash (W \cap \mathcal{H_-})\bigr) - \dim\bigl(\mathcal{H_-}\slash  (W \cap \mathcal{H_-})\bigr). \end{equation}

\noindent In order for all of this to make sense, we have to require that both \eqref{partnum} and \eqref{holnum} are finite. Such polarizations $V$ and $W$ satisfying 
\begin{equation}\label{commensurability} \dim\bigl(W \slash (W \cap V)\bigr) < \infty \;\; \text{and} \; \dim\bigl(V\slash  (W \cap V)\bigr) < \infty \end{equation} are called \textit{commensurable}.
\newpage 
\noindent There is a natural generalization of commensurability (in fact corresponding to a closure in a topological sense):
If $P_V$ and $P_W$ are the orthogonal projections onto $V$ and $W$ respectively, we require that 
\begin{equation*}P_V - P_W\; \text{is a Hilbert-Schmidt operator.}\end{equation*}
In this case, we will say that $V$ and $W$ belong to the same \textit{polarization class}. At first glance, this looks nothing like the condition of commensurability formulated above. But note that $P_V - P_W$ being of Hilbert-Schmidt type just means that $(P_V - P_W)^*(P_V - P_W)$ is in the trace-class, i.e. (since orthogonal projections are self-adjoint) 
\begin{equation}\label{trfinite} \trace(P_V - P_WP_V + P_W -P_VP_W) < \infty. \end{equation}

This is starting to look more like what we're going for. Orthogonal projections are self-adjoint operators with eigenvalues $1$ (on the respective subspace) and $0$ (on the orthogonal complement). Thus, $\trace(P_V) = \dim(V)$ whenever this is finite. And if $P_V$ and $P_W$ commute (which they usually don't), $P_VP_W = P_WP_V = P_{V \cap W}$ and so $\trace(P_V - P_WP_V)$ really counts the dimension of the orthogonal complement of $W \cap V$ in $V$ and so forth. Therefore, it does indeed make sense to regard \eqref{trfinite} as a generalization of \eqref{commensurability}.\\
The precise relationship between \eqref{trfinite} and \eqref{commensurability} is discussed in Appendix A.1, but we hope that at this point, the reader is convinced that polarization classes are an adequate concept.\\

Similarly, there's an abstract generalization of \eqref{relcharge} counting the net-charges of polarizations : 
If $V$ and $W$ are in the same polarization class, then $\ind(P_W\lvert_{V \rightarrow W})$ is well defined and we call this number the \textit{relative charge} of $V$ and $W$. Again, the motivation becomes more clear if we remember that 
\begin{align*}  \ind(P_W\lvert_{V \rightarrow W}) & = \dim \ker(P_W\lvert_{V \rightarrow W})) - \dim\coker(P_W\lvert_{V \rightarrow W}) \\ &= \dim\ker(P_W\lvert_{V \rightarrow W})) - \dim \, (W\slash P_W(V)) \end{align*}
This coincides with \eqref{relcharge}, whenever the latter is well defined (see Appendix A.1).\\

Returning to our initial setup with $V=\mathcal{H_-}$ and $W = UV$ for the unitary transformation $U \in \cu(\HH)$, we denote by $P_-$ the orthogonal projection onto $\HH_-$ and by $P_+$ the orthogonal projection onto $\HH_+$. In particular, $P_- + P_+ = \mathds{1}$. As $W$ results from $\HH_-$ by the unitary transformation $U$, the orthogonal projection is $P_W = UP_-U^*$. Therefore:

\begin{align*}P_V - P_W \;\text{Hilbert-Schmidt} & \iff P_- - UP_-U^* \hspace{4mm} &\text{Hilbert-Schmidt}\\
& \iff P_-U - UP_- \hspace{6mm} &\text{Hilbert-Schmidt}\\
& \iff P_-U(P_- + P_+) - (P_-+P_+)UP_- \\
& \hspace{4 mm} = \hspace{2mm} P_-UP_+ - P_+UP_-\;&\text{Hilbert-Schmidt.}
\end{align*} 
As $P_-UP_+$ and $P_+UP_-$ map from and into complementary subspaces, this is satisfied if and only if both
\begin{equation}\label{SS}
U_{-+}:= P_-UP_+\vert_{\HH_+\rightarrow \HH_-} \;\text{and}\; U_{+-}:= P_+UP_-\vert_{\HH_-\rightarrow \HH_+}
\end{equation}

\noindent are of Hilbert-Schmidt type. This is known as the \textbf{Shale-Stinespring condition}.\\

In the remainder of this work we will invoke more or less fancy mathematics to construct Fock spaces and implement unitary transformations on them, but in the end, whatever path we take, it always comes down to this: the unitary time evolution would have to fulfill the Shale-Stinespring condition. Otherwise, the Dirac sea becomes too stormy; the electromagnetic field creates infinitely many particles and we have no chance to compare initial and final state in a meaningful way. In other words: we cannot accommodate initial and final state in one and the same Fock space. There is no way around this, at least not in the usual mathematical framework.\\

\newpage

\noindent It is obvious that we have stumbled upon a serious difficulty, but it might be less obvious how bad the problem really is, because it is not evident how restrictive the Shale-Stinespring criterion turns out to be. We might hope that in most -- if not all -- physically relevant situations, things turn out to be sufficiently nice and everything is going to work out well. But actually, we shouldn't expect that. We should expect something like that, if the difficulties were of a merely technical kind. But this is not the case. The Shale-Stinespring condition is not a petty mathematical prerequisite of the type ``let $f$ twice continuously differentiable''; much rather, it reflects the problem of infinite particle creation which seems to be deeply inherent to relativistic quantum theory. Finally, all remaining hopes are destroyed by the following theorem:

\begin{Theorem}[Ruijsenaars, 1976]\label{Ruij}
\mbox{}
\item Let $\sa = (\sa_0,- \underline{\sa}) \in C_c^{\infty} (\mathbb{R}^4, \mathbb{R}^4)$ be an external vector potential and  $U^\sa(t, t')$ the corresponding Dirac time evolution on $\HH$. Then $U^\sa(t, t')$ satisfies the Shale-Stinespring condition for all $t,t'$ if and only if $\underline{\sa} = 0$, i.e. if and only if the spatial part of the A-field vanishes identically.\footnote{It might seem suspicious that the condition $\sa \equiv 0$ is blatantly not gauge-invariant. This, however, just points to another deficiency of QED. In contrast to common believe, the theory is not gauge-invariant in a naive sense. Gauge transformations as well do generally not satisfy the Shale-Stinespring condition and therefore cannot be implemented as unitary transformations on the Fock space. We'll come back to this problem in Chapters 6 and 7.}
\end{Theorem}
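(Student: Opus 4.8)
The idea is to reduce the Shale--Stinespring condition to a Hilbert--Schmidt estimate for the off-diagonal blocks of $U^\sa(t,t')$ and then to decide that estimate by counting orders in momentum --- a count in which the spatial dimension $3$ turns out to be decisive. By \eqref{SS}, $U^\sa(t,t')$ satisfies the Shale--Stinespring condition iff $[\epsilon, U^\sa(t,t')]$ is Hilbert--Schmidt, where $\epsilon := P_+ - P_- = D_0\lvert D_0\rvert^{-1}$ is the sign of the free Dirac operator; in Fourier variables $\epsilon$ is multiplication by the matrix $\widehat\epsilon(\xi)=(\underline{\alpha}\cdot\xi+\beta m)(\xi^2+m^2)^{-1/2}$, which is homogeneous of degree $0$ at infinity and satisfies $\widehat\epsilon(\xi)^2=\mathds{1}$. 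Since $U_0(t,t')=e^{-i(t-t')D_0}$ commutes with $\epsilon$, expanding $U^\sa(t,t')$ in its Dyson series and inserting $\mathds{1} = P_+ + P_-$ everywhere shows that each term of $P_+U^\sa(t,t')P_-$ (and of $P_-U^\sa(t,t')P_+$) contains at least one ``transition'' $P_\mp V(s_j)P_\pm$, with $V(s) = e\,\Phi(s,\cdot) - e\,\underline{\alpha}\cdot\underline{\sa}(s,\cdot)$, the remaining factors being free propagators $e^{\mp iT\lvert D_0\rvert}$ and bounded diagonal potentials. The whole argument rests on the size of one such transition, on an algebraic difference between the scalar and the vector part of $V$, and on a gain coming from the mass gap of $D_0$.

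A single transition $P_+V(s)P_-$ has kernel $p_+(\xi)\,\widehat{V(s)}(\xi-\eta)\,p_-(\eta)$ with $p_\pm(\xi)=\tfrac12(\mathds{1}\pm\widehat\epsilon(\xi))$, the hat denoting spatial Fourier transform; since $\widehat{V(s)}$ is Schwartz in $\xi-\eta$, only the region $\xi\approx\eta$ matters. From $\widehat\epsilon(\xi)^2=\mathds{1}$ one gets $\lVert p_+(\xi)p_-(\eta)\rVert\le\tfrac12\lVert\widehat\epsilon(\xi)-\widehat\epsilon(\eta)\rVert\lesssim(1+\lvert\xi\rvert)^{-1}$ there, so the \emph{scalar} part of the transition, carrying the scalar function $\Phi$ which commutes with the matrices $p_\pm$, is of momentum order $-1$, whereas the \emph{vector} part $p_+(\xi)\,(\underline{\alpha}\cdot\widehat{\underline{\sa}(s)}(\xi-\eta))\,p_-(\eta)$ gains nothing --- $\underline{\alpha}\cdot\underline{a}$ does not commute with $\widehat\epsilon(\xi)$ --- and stays of order $0$ (it fails to vanish as $\lvert\xi\rvert\to\infty$ whenever $\widehat{\underline{\sa}}(\xi-\eta)$ has a component transverse to $\xi$). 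Moreover every transition at momenta $(\xi,\eta)$ sits between time integrations $\int ds$ against $e^{\mp is\lvert D_0\rvert}$, so that the accumulated phase is $\propto s(\omega(\xi)+\omega(\eta))$ with $\omega(\xi):=\sqrt{\xi^2+m^2}\ge m>0$. This phase is \emph{never stationary}, so integrating by parts in $s$ --- here one uses that $\sa\in C_c^\infty(\IR^4,\IR^4)$ is smooth in time --- produces an extra factor $(\omega(\xi)+\omega(\eta))^{-1}\sim(1+\lvert\xi\rvert+\lvert\eta\rvert)^{-1}$ per transition, the only obstruction to iterating being the boundary terms at $s=t$ and $s=t'$; for the $(\ge 3)$-transition contributions the same applies to each of the nested time variables, every partial derivative of the total phase being bounded below by $2m$.

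Counting orders now settles both implications. If $\underline{\sa}\equiv 0$: each transition kernel is of order $-1$, one mass-gap gain makes it order $-2$, and since $-2 < -\tfrac32 = -\tfrac12\dim\IR^3$ the one-transition contribution is Hilbert--Schmidt; the contributions with three or more transitions are still more regular, and the Dyson series is summable in Hilbert--Schmidt norm by the same combinatorics that give operator-norm convergence ($V$ bounded, $t,t'$ finite). Hence $[\epsilon,U^\sa(t,t')]$ is Hilbert--Schmidt for all $t,t'$. If $\underline{\sa}\not\equiv 0$: choose $t$ with $\underline{\sa}(t,\cdot)\not\equiv 0$ and $t'<\inf\{s:\sa(s,\cdot)\not\equiv 0\}$, so that $V(t')=0$. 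Then the $s=t'$ boundary term vanishes, while the $s=t$ boundary term of the first integration by parts equals $(\omega(\xi)+\omega(\eta))^{-1}$ times (essentially) the kernel of $P_+V(t)P_-$; its vector part is then of momentum order $-1$, which in three dimensions is \emph{not} square-integrable ($\int\!\!\int(1+\lvert\xi\rvert+\lvert\eta\rvert)^{-2}\,\lVert p_+(\xi)(\underline{\alpha}\cdot\widehat{\underline{\sa}(t)}(\xi-\eta))p_-(\eta)\rVert^2\,d\xi\,d\eta=\infty$). Every remaining piece --- the lower-order remainders of the one-transition term and all $(\ge 3)$-transition contributions --- is of order $\le -2$ and hence Hilbert--Schmidt, so it cannot cancel this divergence; thus $[\epsilon,U^\sa(t,t')]$ is not Hilbert--Schmidt for this pair and the Shale--Stinespring condition fails.

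I expect the ``$\Leftarrow$'' direction ($\underline{\sa}\equiv 0$) to be the real obstacle, because the estimate is \emph{tight}: in dimension $3$ the scalar transition is of order $-2$, just inside the Hilbert--Schmidt threshold $-\tfrac32$, while the vector transition is of order $-1$, just outside it. Naive first-order perturbation theory is therefore hopeless --- even for $\underline{\sa}\equiv 0$ the operator $[\epsilon,V(s)]$ is itself not Hilbert--Schmidt --- and one must genuinely harvest the oscillatory gain from the mass gap and then control the full Dyson series, including the nested multi-time oscillatory integrals from the higher-transition terms, in Hilbert--Schmidt norm. By contrast, excluding a cancellation in the ``$\Rightarrow$'' direction is straightforward once $t'$ is placed before the time support of $\sa$.
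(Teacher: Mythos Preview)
Your argument is sound and is essentially the direct Ruijsenaars-style analysis: Dyson expansion, the algebraic distinction between $\Phi$ (commutes with $\widehat\epsilon$) and $\underline\alpha\cdot\underline\sa$ (does not), and the mass-gap integration by parts yielding one factor $(\omega(\xi)+\omega(\eta))^{-1}$. The order count ($-2$ vs.\ $-1$ against the threshold $-\tfrac32$ in dimension three) is exactly right, and your use of $t'$ before the time support of $\sa$ to kill one boundary term cleanly isolates the divergent piece in the ``$\Rightarrow$'' direction.

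The paper does not prove the theorem this way. It states Ruijsenaars' result in the introduction and later (immediately after Theorem~\ref{Thm:polarizationclasses}) observes that it is the special case $\sa'=0$ of the polarization-class identification $[e^{Q^\sa}\HH_-]=[e^{Q^{\sa'}}\HH_-]\iff\underline\sa=\underline{\sa'}$, combined with $U^\sa(t_1,t_0)\in\cu^0_{\rm res}(\HH;C[\sa(t_0)],C[\sa(t_1)])$. The two routes are not really independent, though: the operator $Q^\sa$ has kernel $(V^\sa_{-+}-V^\sa_{+-})/(E(p)+E(q))$, which is precisely your ``one transition plus one mass-gap gain'', and the identity $i\!\int U^0 V_{\rm odd}U^0=Q^\sa(t_1)U^0-U^0Q^\sa(t_0)-\!\int U^0\dot Q^\sa U^0$ cited there from \cite{DeDueMeScho} \emph{is} your integration by parts, with the $Q^\sa(t_i)$ playing the role of your boundary terms. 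What the polarization-class packaging buys is that both implications fall out of a single structural statement (and one gets the stronger result that the class depends only on $\underline\sa(t)$, instantaneously); what your direct approach buys is that one sees transparently \emph{why} the dimension $3$ and the scalar/vector split are decisive, without the detour through $e^{Q^\sa}$.
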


\noindent This and similar results seem not to receive the attention they deserve. Ruijsenaars theorem shows that the unitary time evolution can typically \textit{not} be implemented on the fermionic Fock space. In other words: there is \textit{no well-defined time evolution in Quantum Electrodynamics.} \\ 

\noindent One might wonder why it's only the spatial part of the vector potential ($\approx$ the magnetic field) which causes the problem. The following considerations might provide some intuition:\\
Recall that the free Dirac Hamiltonian, in momentum-space, is
\begin{equation} D_0(p) = \underline{\alpha} \cdot \underline{p} + \beta m \end{equation}
which reduces to $D_0(0)= \beta m$ for a particle at rest.  In the so-called Dirac representation,
\begin{align*}\beta = \begin{pmatrix} \mathds{1} & 0 \\ 0 & -\mathds{1} \end{pmatrix}.\end{align*}
Obviously, the standard basis vectors represent eigenstates with energy $\pm m$.\\
In the presence of an electromagnetic field with vector potential $\sa = (\sa_{\mu})_{\mu=0,1,2,3}$, there is an additional interaction term 
\begin{equation*} V^\sa = e\sum\limits_{\mu=0}^3 \alpha^{\mu}\hat{A}_{\mu},\end{equation*} where $\alpha^0 = \mathds{1}$ and $\hat{A}_{\mu}$ is the Fourier transform of $\sa_{\mu}$ (acting as convolution operators). The electric potential $\Phi= \sa_0$ is harmless, it just shifts the energy of the particles by a finite amount. But the alpha matrices satisfy $\beta \alpha_j = - \alpha_j \beta$ for  $j=1,2,3$, which means that they are mapping negative energy eigenstates of $D_0$ to positive energy eigenstates (and vice-versa). Intuitively speaking, the magnetic field ``rotates'' the Dirac sea into $\HH_+$.\\


So far, we have only looked at the action of unitary transformations on polarizations which correspond to \textit{projective} states. Given a unitary operator $U$ that does satisfy the Shale-Stinespring condition, we encounter another difficulty when we try to define its action on non-projective states: its lift to the Fock space is well-defined only up to a complex phase. This is a well-known result (for example from representation theory), but our intuitive considerations are good enough to see why this is the case.\\ 
\newpage
\noindent Again we try to generalize the description of an n-fermion system represented in $\bigwedge^n \HH$ to QED-states with an infinite number of particles. We may think of such a state as an infinite wedge product, i.e. a formal expression
\begin{equation*}\Psi\; \widehat{=}\; v_0 \wedge v_1 \wedge v_2 \wedge v_3 \wedge \ldots ,\end{equation*}
where the subspace spanned by the $v_j$ is in the polarization class of $\HH_-$. Now let's consider the simplest case in which the $v_j$ are all eigenstates of the operator $U$ with eigenvalues $e^{i\varphi_j}$.
On the n-fermion state $ v_0\wedge v_1\wedge \ldots \wedge v_n$, the operator $U$ acts like
\begin{equation*}  v_0\wedge v_1\wedge \ldots \wedge v_n \xrightarrow{\;\;U\;}\; e^{i\varphi_0}v_0\wedge e^{i\varphi_1}v_1\wedge\ldots\wedge e^{i\varphi_n}v_n \; = \; e^{i\bigl(\sum\limits_{j=1}^n \varphi_j\bigr)}\; \bigl( v_0\wedge v_1\wedge \ldots \wedge v_n \bigr) \end{equation*}
But again, things become more complicated in the infinite-dimensional case. On $\Psi = v_0 \wedge v_1 \wedge v_2  \wedge \ldots\, $, the unitary transformation $U$ would have to act like
\begin{equation*}  v_0 \wedge v_1 \wedge v_2  \wedge \ldots  \xrightarrow{\;\;U\;} \; Uv_0\wedge Uv_1\wedge Uv_2 \wedge\ldots \;=\; e^{i\varphi_0}v_0\wedge e^{i\varphi_1}v_1\wedge e^{i\varphi_2}v_2 \wedge\ldots\end{equation*}
\hspace{3.5cm}$``=``\hspace{0.6cm} e^{ \displaystyle{i(?)}}\; \bigl(v_0\wedge v_1\wedge v_2 \wedge\ldots \bigr)$.\\

\noindent The product of infinitely many phases does not converge in general and thus the phase of the ``second quantization'' of $U$, acting on the Fock space, is not well-defined.
This $\cu(1)$-freedom is known as the \textit{geometric phase} in QED. ``Geometric'', because we will identify it as the structure group of a principle fibre bundle over the Lie group of implementable unitary operators.\\

\section{Dirac Sea versus Electron-Positron Picture}
Considering the problems described so far, one might observe that all difficulties originate in the fact that we are dealing with an infinite number of particles. Hence one might go on to suggest that the problems might be solved by abandoning the Dirac sea with its infinitely many electrons and switching from the ``electron-hole picture'' to the ``particle-antiparticle'' picture, as is usually done in modern quantum field theory (indeed, most textbooks do suggest, implicitly or explicitly, that this constitutes a significant improvement.) The first statement is, of course, correct. The second one is not. To see why this is so, let's discuss what all this actually means. 

What I'm referring to as the ``particle-antiparticle-picture'' or ``electron-positron-picture'', as opposed to the Dirac sea -  or electron-hole - picture, is a formulation in which the problematic negative spectrum of the Dirac Hamiltonian is fixed in a rather ad hoc way by mapping negative energy solutions of the Dirac equation to positive energy solutions with opposite charge, which are then interpreted as  antiparticles (positrons). The Dirac sea is then omitted altogether as part of the physical description. This reinterpretation is also reflected in the mathematical formalism. In standard physics textbooks, this is usually done in a rather naive way :\\

\noindent The formal quantization of the Dirac field yields the expression
\begin{equation}\label{quantizedfield} \Psi(x) = \int \frac{d^3p}{(2\pi)^3} \frac{1}{\sqrt{2E_p}} \sum\limits_s (a^s_{\underline{p}}  u^s(\underline{p}) e^{-ipx} +  b^{s}_{\underline{p}}\  v^s(\underline{p}) e^{ipx}) \end{equation}    
leading to the second quantized Hamiltonian
\begin{equation} H = \int \frac{d^3p}{(2\pi)^3} \sum\limits_s (E(\underline{p})\; a^{s*}_{\underline{p}}a^s_{\underline{p}} - E(\underline{p})\; b^{s*}_{\underline{p}}b^s_{\underline{p}}). \end{equation}
This Hamiltonian is unstable, i.e. unbounded from below, as every particle of the type created by $b^*$ decreases the total energy by at least its rest-mass $m$.
\newpage 
\noindent Thus, one uses the canonical commutation relations 
\begin{equation} \lbrace b^r_{\underline{p}} , b_{\underline{q}}^{s*} \rbrace = b^r_{\underline{p}} b_{\underline{q}}^{s*} + b^{s*}_{\underline{q}} b_{\underline{p}}^{r} = (2\pi)^3 \delta^3(\underline{p}-\underline{q}) \delta^{r,s} \end{equation}
to write $- b^{s*}_{\underline{q}} b_{\underline{p}}^{s} = + b^s_{\underline{p}} b_{\underline{q}}^{s*} - (2\pi)^3 \; \delta(0)$ and simply interchanges the roles of $b$ and $b^*$.\\
After renaming the operators accordingly, one gets the stable Hamiltonian
\begin{equation}\label{quantizedhamiltonian} H = \int \frac{d^3p}{(2\pi)^3} \sum\limits_s (E(\underline{p})\; a^{s*}_{\underline{p}}a^s_{\underline{p}} + E(\underline{p})\; b^{s*}_{\underline{p}}b^s_{\underline{p}}) \end{equation}
plus an infinite ``vacuum energy'' that physicists boldly get right of by ``shifting the energy'' or, in other words, ignoring it.

\noindent A less playful approach would use the \textit{charge conjugation operator} $\mathcal{C}$, an anti-unitary operator mapping negative energy solutions to positive energy solutions of the Dirac equation with opposite charge. Then construct the Fock space
\begin{equation}\mathcal{F} = \bigwedge \mathcal{H}_+ \otimes  \bigwedge \mathcal{C (H_-)}\end{equation}
and define the \textit{field operator} $\Psi$ which is a complex anti-linear map into the space $\mathcal{B (F)}$ of bounded operators on $\mathcal{F}$. Explicitely,
\begin{equation}\label{CAR1} \Psi: \mathcal{H} \rightarrow \mathcal{B (F)} , \; \Psi (f) = a(P_+ f) + b^*(P_- f) \end{equation}
where $a$ is  the annihilation operator on $\bigwedge \mathcal{H}_+$ and $b^*$ the creation operator on $\bigwedge \mathcal{C (H_-)}$\\
(i.e. for $g \in \mathcal{H_-}$, $b^*(g)$ creates the state $\mathcal{C}g$ in $\bigwedge \mathcal{C (H_-)}$ ).The idea is that $\bigwedge \mathcal{H}_+$ contains the electron-states and  $\bigwedge \mathcal{C (H_-)}$ the positron-states, all of positive energy. The vacuum state $\Omega$ is just $1 = 1 \otimes 1 \in \mathcal{F}$. If the reader excuses some physics jargon, we can say that the field operator $\Psi$ ``creates'' antiparticles and ``annihilates'' particles. Correspondingly, the adjoint operator $\Psi^*$ ``creates'' particles and ``annihilates'' antiparticles.

This construction is carried out in more detail in the next chapter but clearly, as it involves only states of positive energy, it leads to a positive definite second quantized Hamiltonian on the Fock space. Now, we can relate this construction to the Dirac sea description in the following way: 
In the Dirac sea description, the vacuum state corresponds to the ``unperturbed sea'' in which all free, negative energy states are occupied by electrons and all positive energy states are empty. Formally, we may think of this state as an ``infinite-wedge product''. We pick a basis $(e_k)_{k \in \mathbb{Z}}$ of $\mathcal{H}$, s.t. $(e_k)_{k \leq 0}$ is a basis of $\mathcal{H}_-$ and $(e_k)_{k > 0}$ a basis of $\mathcal{H}_+$, and represent the vacuum by the formal expression
\begin{equation*} \Omega = e_0 \wedge e_{-1} \wedge e_{-2}\wedge e_{-3}\wedge \, \dots \end{equation*}
Now, we can define ``field operators'' $\Psi$ and $\Psi^*$ acting on $\Omega$ as follows:
\begin{equation}\begin{split}\label{CAR2} & \mathbf{\Psi}^*(e_k)\Omega = e_0 \wedge e_{-1} \wedge e_{-2}\wedge \dots \wedge e_{k+1} \wedge \cancel{e_k} \wedge e_{k-1} \wedge \, \dots \hspace{5mm}\text{for}\; k<0 \\
& \mathbf{\Psi}(e_{k'})\, \Omega = e_{k'}\wedge e_0 \wedge e_{-1} \wedge e_{-2}\wedge e_{-3}\wedge \, \dots  \hspace{25mm} \text{for}\; k \in \IZ \end{split}\end{equation}
We see that $\Psi^*$ acting on $\Omega$ creates positive energy states and $\Psi$ annihilates negative energy states, i.e creates holes. In particular:
\begin{align*}\mathbf{\Psi}^*(g) \Omega = 0,\; \text{for}\; g \in \mathcal{H_-}\\
\mathbf{\Psi}\,(f)\, \Omega = 0,\; \text{for}\; f \in \mathcal{H_+}\end{align*}
Acting with these operators on $\Omega$ successively, we reach configurations of the Dirac sea i.e. infinite particle states, where finitely many positive energy states and almost all negative energy states are occupied.
Formally, those are (linear combinations of) states of the form 
\begin{equation}\label{infinitewedge}
\Phi = e_{i_0} \wedge e_{i_1} \wedge e_{i_2} \wedge \dots 
\end{equation}
where $(i_0, i_1, i_2,...)$ is a strictly decreasing sequence in $\mathbb{Z}$ with $i_{(k+1)} = i_k -1$ for all sufficiently large indices k.

On the formal level, the transition from the hole theory to what we dubbed electron-positron picture now simply results in mapping (linear combinations of) states of the form \eqref{infinitewedge} into the Fock-space $\mathcal{F} = \bigwedge \mathcal{H}_+ \otimes  \bigwedge \mathcal{C (H_-)}$, such that the holes in the sea ($\sim$ the missing negative indices) are mapped to antiparticle-states and the positive energy states ($\sim$ positive indices) are mapped to the corresponding particle states. The idea is quite simple, although it's somewhat tedious to write things down properly. We will spare the reader the formal details until Section 5.4. Anyways, after introducing the appropriate Fock-space structure on Dirac seas, it's fairly easy to see that these assignments do indeed define an isomorphism and that under this isomorphism, the operator $\Psi$ as defined in \eqref{CAR2} acts just as the field operator in \eqref{CAR1}.\\ 

So we note that the two descriptions are actually mathematically equivalent. If we carry out the whole process of second quantization in the electron-positron picture without any reference to the Dirac sea whatsoever (and we will do that, in the next chapter), we'll arrive at the very same obstacles, in particular at the Shale-Stinespring condition for second quantization of unitary operators. Personally, I would say that the meaning of these results remains more obscure without reference to infinite particle states. Indeed, certain features of the theory seem to indicate that the Dirac sea picture is a more honest description. For instance, in the formal quantization of the Dirac field, the vacuum-charge, just as the vacuum-energy above, appears as an infinite constant even after renaming creators and annihilators. Nevertheless, most physicists nowadays seem to strongly favor the electron-positron description, often calling the idea of a Dirac sea an outdated concept that became obsolete once the theory was properly understood. On the other hand, my personal teachers strongly advocate Dirac's picture. What they ultimately have in mind is that the Dirac sea is just an approximate description of a universe consisting of a very large, yet finite number of charged particles and that electrons or holes can be understood as deviations from an equilibrium state in which the particles are so homogeneously distributed that the net-interaction is zero and therefore the sea ``invisible''. The Dirac sea formalism with infinitely many particles is then best understood as a thermodynamic limit of a fully interacting N-particle theory, whose consistent formulation has not yet been achieved.
Until we have a complete and well-defined fully interacting theory of Quantum Electrodynamics, the choice between the two descriptions ultimately remains a matter of personal taste. However, it should be clear that at this level of description no serious problems can be solved by choosing one over the other.

\chapter{Polarization Classes and the Restricted Transformation Groups}
In this chapter, we give precise definitions for the concepts introduced in Chapter 1 and study the \textit{restricted} unitary- and general linear group of automorphism that do satisfy the Shale-Stinespring condition \eqref{SS} and can be implemented on the fermionic Fock space.\\ 

\noindent By $\HH$ we will always denote an infinite-dimensional, complex, separable Hilbert space.\\

\noindent\textbf{Note:} With the physical context in mind, we would think of $\HH= \HH_+\oplus\HH_-$ as the spectral decomposition w.r.to the free Dirac Hamiltonian and of the subspace $\HH_-$ as the projective ``vacuum state'', corresponding to the Dirac sea. Unfortunately, the mathematical convention doesn't follow the physical motivation and mathematicians for some reason prefer to work with the subspace denoted by $\HH_+$ instead of $\HH_-$. We will follow this convention, in order to match the mathematical literature. Therefore, $\HH_+$ and not $\HH_-$ will play the role of the ``Dirac sea'' in the following and most  of the ``mathematical'' chapters.  

\section{Polarization Classes}

\begin{Definition}[Polarizations]\label{Def:Polarization}
\item A \emph{polarization} of $\HH$ is an infinite-dimensional, closed subspace $V \subset \HH$ with infinite-dimensional orthogonal complement $V^{\perp}$.
\item Accordingly, we will call $\HH$ a \emph{polarized Hilbert space} if we have a distinct splitting\\  
$\HH = V \oplus V^{\perp}$ for a polarization $V$.
\item The set of all polarizations of $\HH$ is denoted by $\pol(\HH)$.
\item By $P_V: \HH \rightarrow \HH$ we denote the orthogonal projection of $\HH$ onto $V$, so $P_V + P_{V^{\perp}} = \mathds{1}_{\HH}$.\\
\end{Definition}

\begin{Definition}[Polarization Classes]
\mbox{}\\
On $\pol(\HH)$, we introduce the equivalence relation
\begin{equation}\label{equivrelationpol} V \approx W \iff  P_V - P_W \in I_2(\HH) \end{equation}
The equivalence classes $C \in \pol(\HH)$ are called \emph{polarization classes}.
\end{Definition}

\begin{Lemma}[Characterization of $\approx$]\label{approx}
\mbox{}\\
For $V,W\in \pol(\HH)$, the following are equivalent:
\begin{enumerate}[i)]
\item $V\approx W$
\item $P_{W^\perp} P_V,\, P_W P_{V^\perp} \in I_2(\HH)$ 
\item$P_W|_{V\to W}$ is a Fredholm operator and $P_{W^\perp}|_{V\to W^\perp}\in I_2(V)$.
\end{enumerate}
\end{Lemma}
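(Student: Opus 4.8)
The plan is to reduce everything to elementary algebra of orthogonal projections, exploiting the single identity
\[
 P_V - P_W \;=\; P_{W^\perp}P_V - P_W P_{V^\perp},
\]
which is immediate from $P_{W^\perp}=\mathds{1}-P_W$ and $P_{V^\perp}=\mathds{1}-P_V$. Since $\im(P_{W^\perp}P_V)\subseteq W^\perp$ and $\im(P_WP_{V^\perp})\subseteq W$, the two summands can be recovered from $P_V-P_W$ by left multiplication with $P_{W^\perp}$ and $P_W$ respectively (using $P_{W^\perp}P_W=0$). Hence $P_V-P_W\in I_2(\HH)$ forces both $P_{W^\perp}P_V\in I_2(\HH)$ and $P_WP_{V^\perp}\in I_2(\HH)$, and conversely these two together give $P_V-P_W\in I_2(\HH)$ as their difference. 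This is i) $\Leftrightarrow$ ii).

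For the equivalence with iii) I would first dispose of its second clause: the operator $P_{W^\perp}P_V$ annihilates $V^\perp$ and maps into $W^\perp$, so computing its Hilbert--Schmidt norm in an orthonormal basis adapted to $\HH=V\oplus V^\perp$ shows that $P_{W^\perp}P_V\in I_2(\HH)$ is literally the same statement as $P_{W^\perp}\vert_{V\to W^\perp}$ being Hilbert--Schmidt. So what remains is to prove that, \emph{under the standing assumption} $P_{W^\perp}P_V\in I_2(\HH)$, the restricted projection $T:=P_W\vert_{V\to W}$ is Fredholm if and only if $P_WP_{V^\perp}\in I_2(\HH)$.

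Here the key observation is that the adjoint of $T$ is $P_V\vert_{W\to V}$, whence
\[
 \mathds{1}_V - T^*T = P_VP_{W^\perp}P_V\vert_V = (P_{W^\perp}P_V)^*(P_{W^\perp}P_V)\vert_V,
 \qquad
 \mathds{1}_W - TT^* = P_WP_{V^\perp}P_W\vert_W = (P_WP_{V^\perp})(P_WP_{V^\perp})^*\vert_W .
\]
The first of these is trace class exactly under the standing assumption, and ``$\mathds{1}_W-TT^*\in I_1(\HH)$'' is exactly ``$P_WP_{V^\perp}\in I_2(\HH)$''. Now $\mathds{1}_V-T^*T$ compact already makes $T$ semi-Fredholm with $\dim\ker T<\infty$ and $\im T$ closed; so $T$ is Fredholm precisely when $\coker T=\ker T^*$ is finite-dimensional, which certainly holds once $\mathds{1}_W-TT^*$ is compact (in particular once it is trace class). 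This gives ``$P_WP_{V^\perp}\in I_2(\HH)\Rightarrow T$ Fredholm''. For the converse, assuming $T$ Fredholm I would use the polar decomposition $T=U\lvert T\rvert$, where $U$ restricts to a unitary from $(\ker T)^\perp$ onto the closed subspace $\im T$; then $TT^*=U(T^*T)U^*$, so $(\mathds{1}_W-TT^*)\vert_{\im T}$ is unitarily equivalent to $(\mathds{1}_V-T^*T)\vert_{(\ker T)^\perp}$ and hence trace class, while on the finite-dimensional space $\ker T^*$ the operator $\mathds{1}_W-TT^*$ is the identity; summing the two pieces gives $\mathds{1}_W-TT^*\in I_1(\HH)$, i.e.\ $P_WP_{V^\perp}\in I_2(\HH)$.

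The routine parts are i) $\Leftrightarrow$ ii) and the reformulation of the second clause of iii); the real content, and the step I expect to be the main obstacle, is the equivalence ii) $\Leftrightarrow$ iii) --- showing that once the off-diagonal block $P_{W^\perp}P_V$ is Hilbert--Schmidt, the Fredholm property of $T$ and the Hilbert--Schmidtness of the other off-diagonal block $P_WP_{V^\perp}$ become interchangeable, which rests on transferring the trace-class/compactness information between $T^*T$ and $TT^*$. A little care is also needed with the meaning of $I_2(V)$ in clause iii) (operators $V\to W^\perp$, with Hilbert--Schmidt norm taken relative to orthonormal bases of the respective subspaces), but this is only notational bookkeeping.
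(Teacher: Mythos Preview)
Your argument is correct. For i)$\Leftrightarrow$ii) you and the paper do essentially the same thing, exploiting $P_V-P_W=P_{W^\perp}P_V-P_WP_{V^\perp}$ (the paper phrases it via $(P_V-P_W)P_V$ and $P_W(P_V-P_W)$, but this is the same identity). For ii)$\Rightarrow$iii) the paper writes the identity on $\HH$ as a $2\times 2$ block operator $V\oplus V^\perp\to W\oplus W^\perp$ and observes that Hilbert--Schmidt off-diagonal blocks make the diagonal a compact perturbation of an invertible operator, hence each diagonal entry is Fredholm; your route via compactness of $\mathds{1}_V-T^*T$ and $\mathds{1}_W-TT^*$ is the same mechanism viewed componentwise.

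The genuine difference is in iii)$\Rightarrow$ii). The paper's printed argument for this implication is garbled: it opens with ``$P_WP_{V^\perp}\in I_2(\HH)$ by assumption'', which is not part of iii) (the hypothesis there is $P_{W^\perp}P_V\in I_2$), and then infers that $P_{W^\perp}P_V$ is finite-rank from the Fredholm property of $P_W|_{V\to W}$, which does not follow. Your polar-decomposition argument---transferring the trace-class property from $\mathds{1}_V-T^*T$ to $\mathds{1}_W-TT^*$ via the partial isometry $U$, and handling the finite-dimensional $\ker T^*$ separately---is a clean and correct way to establish this direction. It is more work than the other implications, but that is because this is the one step with real content, exactly as you anticipated.
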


\begin{proof}
\mbox{}
\begin{list}{}{}
 \item[ i)$\Rightarrow$ii):]  If $V \approx W$, i.e. $P_V - P_W \in I_2(\HH)$ then
\begin{align*}  P_{W^\perp} P_V = & (Id_{\HH} - P_W) P_V = (P_V - P_W)P_V \in I_2(\HH)\\
P_W P_{V^\perp} = & P_W(Id_{\HH} - P_V) = - P_W (P_V - P_W) \in I_2(\HH)
\end{align*}
\item[ ii)$\Rightarrow$iii):]  We write the identity on $\HH$ in matrix form as
    \begin{equation*}
      Id_{\HH}:V\oplus V^\perp\to W\oplus W^\perp = \begin{pmatrix}
        P_W|_{V\to W} & P_W|_{V^\perp\to W}\\
        P_{W^\perp}|_{V\to W^\perp} & P_{W^\perp}|_{V^\perp\to W^\perp}
      \end{pmatrix}
    \end{equation*}
    By ii), the off-diagonal terms are of Hilbert-Schmidt type, so the operator 
    \begin{equation*}
      \begin{pmatrix}
        P_W|_{V\to W} & 0 \\
        0 & P_{W^\perp}|_{V^\perp\to W^\perp}
      \end{pmatrix}
    \end{equation*}
 is a compact perturbation of the identity. Consequently, $P_W|_{V\to W}$ and $P_{W^\perp}|_{V^\perp\to W^\perp}$ are Fredholm-operators.  
 \item[ ii)$\Rightarrow$i):]  If \begin{align*} & P_{W^\perp} P_V = (P_V - P_W)P_V \in I_2(\HH) ,\\
& P_W P_{V^\perp} = - P_W (P_V - P_W) \in I_2(\HH)\end{align*}
 then \begin{equation*}(P_V - P_W) = (P_V - P_W)P_V +  P_W (P_V - P_W)  \in I_2(\HH)\end{equation*}
 \item[ iii)$\Rightarrow$ii):]$P_W P_{V^\perp} \in I_2(\HH)$ by assumption and as $P_W|_{V\to W}$ is a Fredholm operator, the cokernel of $P_WP_V$ in $W$ is finite-dimensional. Hence $P_{W^\perp} P_V$ is a finite-rank operator, in particular Hilbert-Schmidt.
\end{list}\end{proof}
\noindent By this Lemma, the following is well-defined:
\begin{Definition}[Relative Charge]
\mbox{}\\
For $V,W\in \pol(\HH)$ with $V\approx W$, we define the \emph{relative charge} of $V,W$ to be the Fredholm index of $P_{W}|_{V\to W}$. I.e.:
\begin{equation}\label{relativecharge} \begin{split} \charge(V, W) := & \; \ind(P_W\lvert_{V \rightarrow W}) \\
= & \dim\, \ker(P_W\lvert_{V \rightarrow W}) - \dim\, \coker(P_W\lvert_{V \rightarrow W}) \\ = & \dim\, \ker(P_W\lvert_{V \rightarrow W}) - \dim\, \ker\bigl((P_W\lvert_{V \rightarrow W})^*\bigr). \end{split} \end{equation}
\end{Definition}

\begin{Lemma}[Properties of Relative Charge]
\mbox{}\\
The relative charge has the following intuitive properties:
\begin{enumerate}[i)] 
\item $\charge(V , W) = - \charge(W , V)$
\item $\charge(V , W) + \charge(W , X) = \charge(V , X)$
for $V \approx W \approx X \in \pol(\HH)$. 
\end{enumerate}
\end{Lemma}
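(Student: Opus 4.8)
The plan is to derive both properties from three standard facts about the Fredholm index: its invariance under compact perturbations, the logarithmic law $\ind(AB)=\ind(A)+\ind(B)$ for composable Fredholm operators, and $\ind(T^{*})=-\ind(T)$. Note first that $\approx$ is transitive (if $P_V-P_W\in I_2(\HH)$ and $P_W-P_X\in I_2(\HH)$ then $P_V-P_X\in I_2(\HH)$), so every restricted projection occurring below is Fredholm by the Characterization of $\approx$ (Lemma \ref{approx}), and the relative charges in question are all well defined.

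For i), I would first identify the adjoint of the relevant operator. For $v\in V$ and $w\in W$ one has $\langle P_W v,w\rangle=\langle v,w\rangle=\langle v,P_V w\rangle$, whence $\bigl(P_W|_{V\to W}\bigr)^{*}=P_V|_{W\to V}$. Since the adjoint of a Fredholm operator is Fredholm with opposite index,
\[ \charge(W,V)=\ind\bigl(P_V|_{W\to V}\bigr)=\ind\bigl((P_W|_{V\to W})^{*}\bigr)=-\ind\bigl(P_W|_{V\to W}\bigr)=-\charge(V,W). \]
(Alternatively, once ii) is available, i) follows at once from $\charge(V,W)+\charge(W,V)=\charge(V,V)=\ind(Id_{V})=0$.)

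For ii), the key observation is that $P_X|_{V\to X}$ differs from the composition $P_X|_{W\to X}\circ P_W|_{V\to W}\colon V\to X$ only by a Hilbert-Schmidt operator: for $v\in V$,
\[ P_X v-P_X P_W v=P_X\bigl(Id_{\HH}-P_W\bigr)v=P_X P_{W^{\perp}}P_V v, \]
and $P_{W^{\perp}}P_V\in I_2(\HH)$ by Lemma \ref{approx} ii) applied to $V\approx W$, hence so is $P_X P_{W^{\perp}}P_V$. The composition $P_X|_{W\to X}\circ P_W|_{V\to W}$ is Fredholm with index $\charge(W,X)+\charge(V,W)$ by the logarithmic law, and adding the compact term does not change the index, so
\[ \charge(V,X)=\ind\bigl(P_X|_{V\to X}\bigr)=\ind\bigl(P_X|_{W\to X}\circ P_W|_{V\to W}\bigr)=\charge(W,X)+\charge(V,W). \]

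The argument is essentially bookkeeping with Fredholm indices; the only place that needs genuine input is the compact-perturbation step, where one must know that the off-diagonal block $P_X P_{W^{\perp}}|_V$ is actually compact — which is exactly what Lemma \ref{approx} supplies — while the rest of the care goes into keeping the domains and codomains of the restricted projections consistent along the composition $V\to W\to X$.
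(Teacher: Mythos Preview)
Your proof is correct and follows essentially the same route as the paper: for ii) you use the logarithmic law for the composition $P_X|_{W\to X}\circ P_W|_{V\to W}$ together with a Hilbert-Schmidt perturbation (you use $P_X P_{W^\perp}P_V$, the paper uses $P_X(P_W-P_X)$, which is the same mechanism), and for i) the paper takes exactly the alternative you mention, deducing it from ii) via $\charge(V,W)+\charge(W,V)=\charge(V,V)=0$. Your direct adjoint argument for i) is a small bonus but not a different approach.
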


\begin{proof} 
For $V,W,X$ from the same polarization class in $\pol(\HH)\slash \approx$ we find 
\begin{align*}
& \charge(V , W) + \charge(W , X)\\
= \; & \ind(P_W\lvert_{V \rightarrow W}) +  \ind(P_X\lvert_{W \rightarrow X}) =  \ind(P_XP_W\lvert_{V\rightarrow X})\\
=\; & \ind\bigl(P_X P_X + P_X (P_W - P_X)\, \bigl\lvert_{V\rightarrow X}\bigr) =  \ind(P_X\lvert_{V\rightarrow X})\\ 
=\; & \charge(V ,X) \end{align*}
The equality in the third line holds because $P_W - P_X$ is of Hilbert-Schmidt type, in particular a compact perturbation, and therefore doesn't change the index.\\ 
As a special case we get 
\begin{equation*}\charge(V , W) + \charge(W , V) = \charge(V ,V) = 0, \end{equation*} which proves the first identity.\\
\end{proof}
\noindent It follows from the Lemma that we get a finer equivalence relation $"\approx_0"$, by setting
\begin{equation} V \approx_0 W :\iff V \approx W\; \text{and}\; \charge(V , W) = 0.\end{equation}
For our purposes, it is often more convenient to work with these \textit{equal charge classes}. Also, the physical principle of charge conservation tells us that the Dirac time evolution should preserve the relative charge.\\ 

In general, unitary transformations do NOT preserve the polarization class - they will map one polarization class into another. This is the source of all evil when we try to implement the unitary time evolution on Fock spaces. However, a unitary transformation induces at a well-defined map between polarization classes by $U[V] = [UV] \in \pol(\HH) \slash \approx$.
This is true, because $P_{UV} - P_{UW} = UP_VU^* - UP_WU^* = U (P_V - P_W) U^*$ is of Hilbert-Schmidt type if and only $P_V - P_W$ is. The analogous map is also well-defined between equal charge classes as follows, for example, from \eqref{chargetransform}.

\begin{Definition}[Restricted Unitary Operators]
\item For polarization classes $C, C' \in \pol(\HH) \slash \approx$ we define
\begin{align*} \Ur(\HH; C, C') =\, & \lbrace U: \HH \rightarrow \HH \;\text{unitary} \mid \forall \, V \in C:\, UV \in C' \rbrace \\
= \,& \lbrace U: \HH \rightarrow \HH\;\text{unitary} \mid \exists\, V \in C \; \text{with}\; UV \in C' \rbrace
\end{align*}
as the set of unitary operators mapping the polarization class $C$ into $C'$.
If we want to restrict to equal charge classes, we write $\Ur^{0}(\HH; C_0, C'_0)$ for $C_0, C'_0 \in \pol(\HH)\slash \approx_0$ etc.. 

\item The definition can be immediately generalized to unitary maps between different Hilbert spaces $\HH$ and $\HH'$.
\end{Definition}
\noindent Note that $\Ur(\HH; C, C')$ is \emph{not} a group, unless $C = C'$.\\ 
However, they satisfy the composition property
\begin{equation}\label{eq:Urescomposition}\Ur(\HH; C', C'')\Ur(\HH; C, C') = \Ur(\HH; C, C'').\end{equation}

\noindent The group $\cu_{\rm res}(\HH; C, C)$ of unitary operators preserving a fixed polarization class $C$ will be of crucial importance for our further discussion. In particular, as argued in the introductory sections, it will turn out that a unitary transformation can be implemented on the (standard) Fock space if and only if it's compatible with the natural polarization $\HH = \HH_+ \oplus \HH_-$ in the sense that it preserves the polarization class $C = [\HH_+]$.
\newpage
\section{The restricted Unitary and General Linear Group}
We want to study the \textit{restricted unitary group} consisting of those operators which do preserve the polarization class $[\HH_+]$ and will turn out to be implementable on the Fock space.

\begin{Definition}[Restricted Unitary Group]\label{UresDef}
\mbox{}\\
The group
\begin{equation*} \Ur(\HH) := \Ur(\HH; [\HH_+], [\HH_+]) \end{equation*}
is called the \emph{restricted unitary group} on $\HH$.
\end{Definition}

\noindent We introduce the notation $\epsilon = P_+ - P_-$ for the sign of the free Dirac-Hamiltonian.

\begin{Lemma}[Characterization of $\Ur(\HH)$]
\mbox{}\\
For a unitary operator $U \in \cu(\HH)$ the following statements are equivalent:
\begin{enumerate}[i)]
\item $U \in \Ur(\HH)$ 
\item $U_{+-} \in I_2(\HH_{-},\HH_{+})$ and $U_{-+} \in I_2(\HH_{+},\HH_{-})$
\item $ [\epsilon , U] \in I_2(\HH)$ 
\end{enumerate}
\end{Lemma}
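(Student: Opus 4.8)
The plan is to prove the chain of equivalences i) $\Leftrightarrow$ iii) $\Leftrightarrow$ ii), each step being an elementary manipulation inside the Hilbert--Schmidt ideal $I_2(\HH)$; since $I_2(\HH)$ is a two-sided ideal and $U,U^*$ are bounded, one may multiply by $U$ or $U^*$ from either side without affecting membership in $I_2(\HH)$.

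First I would unwind statement i). By the ``for all $=$ exists'' form of the definition of $\Ur(\HH; C, C')$, the operator $U$ maps the class $[\HH_+]$ into itself as soon as it maps the single representative $\HH_+$ into $[\HH_+]$, i.e. if and only if $U\HH_+ \approx \HH_+$. The orthogonal projection onto $U\HH_+$ is $UP_+U^*$, so this says $UP_+U^* - P_+ \in I_2(\HH)$; multiplying on the right by $U$ turns this into the equivalent condition $[U, P_+] = UP_+ - P_+U \in I_2(\HH)$ (multiply by $U^*$ for the reverse implication). Since $\epsilon = P_+ - P_- = 2P_+ - \mathds{1}$, we have $[\epsilon, U] = 2[P_+, U] = -2[U, P_+]$, so the condition $[U, P_+] \in I_2(\HH)$ is equivalent to statement iii); this gives i) $\Leftrightarrow$ iii).

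It then remains to match this with ii). Writing $U$ as a $2\times 2$ operator matrix with respect to $\HH = \HH_+ \oplus \HH_-$ and $P_+ = \mathds{1}_{\HH_+} \oplus 0$, a direct computation gives
\begin{equation*}
[U, P_+] \;=\; \begin{pmatrix} 0 & -U_{+-}\\ U_{-+} & 0 \end{pmatrix},
\end{equation*}
so the nonzero part of the commutator consists exactly of the two corner blocks $U_{+-} = P_+UP_-$ and $U_{-+} = P_-UP_+$. As these act between mutually orthogonal subspaces, $\lVert [U, P_+]\rVert_2^2 = \lVert U_{+-}\rVert_2^2 + \lVert U_{-+}\rVert_2^2$, and hence $[U, P_+] \in I_2(\HH)$ if and only if both $U_{+-} \in I_2(\HH_-, \HH_+)$ and $U_{-+} \in I_2(\HH_+, \HH_-)$ --- which is statement ii). This closes the circle.

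I do not expect a genuine obstacle here: the whole content is that ``$U$ preserves the polarization class $[\HH_+]$'' can be read off from the two off-diagonal corners of $U$, and the argument is one ideal manipulation together with a block computation. The only step meriting a word of care is the first reduction --- that testing $U$ on the single subspace $\HH_+$ suffices, rather than on every $V \in [\HH_+]$ --- which rests on the identity $P_{UV} - P_{UW} = U(P_V - P_W)U^*$ underlying the ``for all $=$ exists'' equivalence in the definition of $\Ur(\HH; C, C')$.
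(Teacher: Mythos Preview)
Your proof is correct and follows essentially the same route as the paper's: both reduce $U\HH_+ \approx \HH_+$ to $UP_+U^* - P_+ \in I_2(\HH)$, multiply by $U$ to get the commutator $[U,P_+]$ (equivalently $[\epsilon,U]$ via $\epsilon = 2P_+ - \mathds{1}$), and then read off the two off-diagonal blocks $U_{+-}$, $U_{-+}$. The only cosmetic difference is the order of the equivalences (you do i) $\Leftrightarrow$ iii) $\Leftrightarrow$ ii), the paper does i) $\Leftrightarrow$ ii) then ii) $\Leftrightarrow$ iii)) and that you use a block-matrix display where the paper inserts $P_+ + P_- = \mathds{1}$; the content is identical.
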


\noindent Note that $ii)$ is the infamous Shale-Stinespring condition,  which appears very naturally in this setting. 

\begin{proof} Let $U \in \cu(\HH)$. Then:
\begin{align*} U \in \Ur(\HH) & \iff  U\HH_+ \approx \HH_+ \in \pol(\HH) \iff  P_+ -UP_ +U^* \in I_2(\HH)\\
& \iff P_+U - UP_+ \in I_2(\HH)\\
& \iff P_+U(P_+ + P_-) - (P_+ + P_-)UP_- = P_+UP_- - P_-UP_+ \in I_2(\HH)\\
& \iff U_{+-} \in I_2(\HH_{-},\HH_{+})\;\text{and}\; U_{-+} \in I_2(\HH_{+},\HH_{-})
\end{align*}
where we have used that $P_{UV} = UP_ VU^*$ for any $U \in \cu(\HH)$ and any subspace $V\subset\HH$.\\
\noindent Furthermore we compute 
\begin{align*}[\epsilon , U] =\; & (P_+ - P_-)U - U(P_+ - P_-)\\
=\; & (P_+ - P_-)U(P_+ + P_-) - (P_+ + P_-)U(P_+ - P_-)\\
=\; & P_+UP_- - P_-UP_+ + P_+UP_- - P_-UP_+\\
=\; & 2\, \bigl(P_+UP_- - P_-UP_+ \bigr).
\end{align*}
\begin{equation}\begin{split}\label{HSnormwithepsilon}\hspace{-1cm}\text{And therefore\footnote{Note that we use the notation $U^*_{+-}$ for $(U^*)_{+-}$ and not for $(U_{+-})^* = (U^*)_{-+}$}:} \hspace{.5cm} \frac{1}{4}\, \lVert [\epsilon , U] \rVert_{2}^2 = & \lVert (P_+UP_- - P_-UP_+)^{*} (P_+UP_- - P_-UP_+ ) \rVert_{1}\\
= & \lVert (P_-U^{*}P_+ - P_+U^{*}P_-)(P_+UP_- - P_-UP_+) \rVert_{1}\\
= & \lVert P_-U^{*}P_+UP_- + P_+U^{*}P_-UP_+\rVert_{1}\\
= & \lVert U^{*}_{-+}U_{+-}\rVert_1 +  \lVert U^{*}_{+-}U_{-+}\rVert_{1}\\
=  & \lVert U_{+-} \rVert^2_{2} + \lVert U_{-+} \rVert^2_{2}\end{split}\end{equation} 
as the traces of the odd parts vanish. This finishes the proof.\\
\end{proof}
\noindent Thus, we can describe the restricted unitary group as 
\begin{equation}\begin{split}\label{UresCharacterization}
\Ur(\HH) =&\; \lbrace U \in \cu(\HH) \mid [\epsilon, U] \in I_2(\HH) \rbrace\\
= & \; \lbrace U \in \cu(\HH) \mid  U_{+-}\;\text{and}\;U_{-+} \;\text{are Hilbert-Schmidt operators} \rbrace.\\
 \end{split} \end{equation}
\noindent This alternative definition extends immediately to arbitrary isomorphisms. The unitary case is the most relevant one for physical applications, but for the development of the mathematical framework it is very natural and convenient to study the general case as well.

\begin{Definition}[Restricted General Linear Group]
\mbox{}\\
In analogy to \eqref{UresCharacterization} we define the group
\begin{equation*}\begin{split}\Gl_{\rm res}(\HH) := &\; \lbrace A \in \Gl(\HH) \mid [\epsilon, A] \in I_2(\HH) \rbrace\\
  = &\; \lbrace A\in \Gl(\HH)  \mid  A_{+-}\;\text{and}\;A_{-+} \;\text{are Hilbert-Schmidt operators}\rbrace. \end{split} \end{equation*} 
$\Gl_{\rm res}(\HH)$ is called the \emph{restricted general linear group} of $\HH$.
\end{Definition}

\noindent With respect to the decomposition $\HH = \HH_+ \oplus \HH_-$ we can write any $A \in \Gl(\HH)$ in matrix form as
\begin{align}
  A = \begin{pmatrix}
    A_{++}\lvert_{\HH_+\to\HH_+} & A_{+-}\lvert_{\HH_-\to\HH_+}\\
    A_{-+}\lvert_{\HH_+\to\HH_-} & A_{--}\lvert_{\HH_-\to\HH_-}
  \end{pmatrix}
  = \begin{pmatrix}
    a & b\\
    c & d
  \end{pmatrix}.
\end{align}
\vspace*{1mm}
\noindent We call $A_{even} = A_{++} + A_{--}$ the even part and $A_{odd} = A_{+-} + A_{-+}$ the odd part of the operator. 
Then, $A$ is in $\Gl_{\rm res}(\HH)$ if and only if $A_{odd} \in I_2(\HH)$, if and only if $b$ and $c$ are Hilbert-Schmidt operators. In this case, as the odd part is just a compact perturbation, it follows immediately that $a$ and $d$ are \textit{Fredholm operators} with $\ind(a) = - \ind(d)$, because
\begin{align}
  0 = \ind(A) = \ind \begin{pmatrix}
    a & b\\
    c & d
  \end{pmatrix} = \ind \begin{pmatrix}
    a & 0\\
    0 & d
  \end{pmatrix} = \ind(a) + \ind(d).
\end{align}
\noindent N.B. that for $A \in \Gl(\HH)$, $A \in \Gl_{\rm res}(\HH)$ is \textit{not equivalent} to $[A\HH_{+}] \approx [\HH_{+}]$ in $\pol(\HH)$, but the first implies the latter. Using Lemma, \ref{approx} iii), we see that for $[A\HH_{+}] \approx [\HH_{+}]$ it suffices that $c$ is Hilbert-Schmidt and $a$ a Fredholm operator which for general isomorphisms is less restrictive than $A \in \Gl_{\rm res}(\HH)$.

\subsection{Lie Group structure of $\Gl_{\rm res}(\HH)$ and $\Ur(\HH)$}
We recall a few basic definitions:

\begin{Definition}(Linear Banach Lie group)
\item Let $(\mathcal{A}, \lVert\cdot\rVert)$ a unital Banach algebra (cf. Def. \ref{DefBanachalgebra}) and 
$\mathcal{A}^{\times}$  its multiplicative group of units. Recall that the exponential map $\exp: \mathcal{A} \to \mathcal{A}^{\times},\;  x \mapsto \sum\limits_{k=0}^{\infty} \frac{x^k}{k!}$ is well-defined and smooth with $D_0\exp x = x, \; \forall x \in \mathcal{A}$. 
\item A \emph{linear Banach Lie group} is a closed subgroup $G$ of $\mathcal{A}^{\times}$ which is locally exponential. This means that if we identify the Lie algebra of $G$ with $\mathrm{L}(G):= \lbrace x \in \mathcal{A} \mid \exp(x) \in G \rbrace$, the exponential map restricted to $L(G)$ is a local diffeomorphism around $0$. 
\item Every (continuous) Lie group homomorphism $\varphi: G_1 \to G_2$ between two linear Banach Lie groups is already smooth. There exists a unique Lie algebra homomorphism $\mathrm{Lie(\varphi)}$ with 
\begin{equation*} \exp_{G_2} \circ\, \mathrm{Lie}(\varphi) = \varphi \circ \exp_{G_1}. \end{equation*}
\end{Definition}
\noindent The last statement is proven in Appendix A.2. 

\begin{Example}(Linear Lie groups)
\begin{enumerate} 
\item For every unital Banach algebra $\mathcal{A}$, $\mathcal{A}^{\times}$ itself is a Lie group. It is always locally exponential due to the Inverse Mapping Theorem for Banach spaces (see for example \cite{Lang}).
\item Let $\HH$ a separable Hilbert-space. The set $\mathcal{B}(\HH)$ of bounded operators, together with the operator norm, is a unital Banach algebra. The group $\cu(\HH)$ of unitary operators is then a real linear Banach Lie group. Its Lie algebra can be easily identified as $\mathrm{L}(\cu(\HH)) = \lbrace iX \mid X\; \text{Hermitian operator on}\;  \HH \rbrace$, which is an algebra over $\IR$. The exponential map is surjective onto $\cu(\HH)$, as follows from the spectral theorem. 
\end{enumerate}
\end{Example}

\noindent For our purposes, we consider the algebra $\mathcal{B}_\epsilon(\HH)$ of all bounded operators $A: \HH \rightarrow \HH$ with $[\epsilon , A] \in I_2(\HH)$. We introduce a norm $\lVert\cdot\rVert_{\epsilon}$, defined by
\begin{equation}\label{EpsilonNorm} \lVert A \rVert_{\epsilon} := \lVert A \rVert + \lVert A_{+-} \rVert_{2} + \lVert A_{-+} \rVert_{2} \end{equation}
(this is equivalent to the norm $\lVert \cdot \rVert + \lVert [\epsilon, \cdot] \rVert_2$, also found in the mathematical literature). The space $\mathcal{B}(\HH)$ of bounded operators is complete in the operator-norm, and $I_2(\HH_+,\HH_-)$ as well as $I_2(\HH_-,\HH_+)$ are complete in the Hilbert-Schmidt norm. Thus, it is easily verified  that $\mathcal{B}_\epsilon(\HH)$ equipped with the norm $\lVert\cdot\rVert_{\epsilon}$ is a unital Banach algebra.
Now, $\Gl_{\rm res}(\HH) = \mathcal{B}^{\times}_\epsilon$ is just the multiplicative group of units in $\mathcal{B}_\epsilon$ and $\Ur(\HH)$ is the closed subgroup of unitary operators in $\mathcal{B}_\epsilon(\HH)$. It follows that $\Gl_{\rm res}(\HH)$ is a complex linear Lie group and $\Ur(\HH)$ a real linear Lie group. The corresponding Lie algebras are
\begin{align}
\mathfrak{gl}_1
= & \lbrace X \;\text{bounded operator on}\; \HH \mid [\epsilon, X] \in I_2(\HH) \rbrace = \mathcal{B}^{\times}_\epsilon\\\notag
\\
\mathfrak{u}_{\rm res} = &  i \cdot \,\lbrace X\; \text{Hermitian operator on}\; \HH \mid [\epsilon, X] \in I_2(\HH) \rbrace.
\end{align}

\noindent Geometrically, $\Gl_{\rm res}(\HH)$ can also be understood as the \textit{complexification} of $\Ur(\HH)$. In particular, it inhibits $\Ur(\HH)$ as a real Lie subgroup and $\mathfrak{gl}_1\cong \mathfrak{u}_{\rm res} \otimes_{\IR} \IC$.\\
\noindent Finally, two more results about the topology of $\Gl_{\rm res}(\HH)$.

\begin{Lemma}[Homotopy type of $\Gl_{\rm res}$]
\mbox{}\\
The map 
\vspace{-2mm} \begin{align*} A = \begin{pmatrix}
    a & b\\
    c & d
  \end{pmatrix} \longmapsto a\end{align*}
  from $\Gl_{\rm res}(\HH)$ to the space $\mathrm{Fred}(\HH_+)$ of Fredholm operators on $\HH_+$ is a homotopy equivalence.
In particular, $\Gl_{\rm res}(\HH)$ has infinitely many connected components 
indexed by $\mathbb{Z}$, corresponding to the index of the $(++)$-component $a$ of $A$. 
\end{Lemma}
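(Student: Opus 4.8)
The plan is to realize the map $q\colon\Gl_{\rm res}(\HH)\to\mathrm{Fred}(\HH_+)$, $q(A)=a=A_{++}$, as a locally trivial fibre bundle with contractible fibre; such a map is automatically a homotopy equivalence. Well-definedness is already settled above (for $A\in\Gl_{\rm res}(\HH)$ the $(++)$-block $a=A_{++}$ is Fredholm, the odd part being only a compact perturbation), and continuity is immediate, $q$ being $1$-Lipschitz from $\lVert a-a'\rVert\le\lVert A-A'\rVert\le\lVert A-A'\rVert_\epsilon$.

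First I would produce continuous local sections of $q$. Given $a_0\in\mathrm{Fred}(\HH_+)$, the spaces $\ker a_0$ and $(\im a_0)^\perp$ are finite-dimensional, so — exploiting that $\HH_-$ is infinite-dimensional — one completes $a_0$ to an invertible $A_0\in\Gl_{\rm res}(\HH)$ with $(A_0)_{++}=a_0$ and with off-diagonal blocks of finite rank. Freezing the odd part and the $(--)$-block of $A_0$ and varying only the $(++)$-corner, the assignment $s\colon a\mapsto A_0+(a-a_0)$ (the perturbation placed in the $(++)$-corner) stays in $\Gl_{\rm res}(\HH)$ for $a$ near $a_0$, since $\lVert s(a)-A_0\rVert_\epsilon=\lVert a-a_0\rVert$ is small and the Hilbert--Schmidt corners are untouched. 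This $s$ is the desired local section, and in particular $q$ is surjective.

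Next I would identify and contract the fibre. For an invertible base point $a_0$ the block factorization
\begin{equation*}
\begin{pmatrix} a_0 & \beta\\ \gamma & \delta\end{pmatrix}
 =\begin{pmatrix} a_0 & 0\\ \gamma & 1\end{pmatrix}
  \begin{pmatrix} 1 & a_0^{-1}\beta\\ 0 & \delta-\gamma a_0^{-1}\beta\end{pmatrix}
\end{equation*}
gives a homeomorphism $q^{-1}(a_0)\cong I_2(\HH_+,\HH_-)\times I_2(\HH_-,\HH_+)\times\Gl(\HH_-)$, which is contractible: the two Schatten classes are vector spaces, and $\Gl(\HH_-)$ is contractible by Kuiper's theorem. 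For a general $a_0\in\mathrm{Fred}(\HH_+)$ one transports this along the local triviality of $q$, trivializing over a neighbourhood of $a_0$ by combining the section $s$ with the standard local normal form of a Fredholm operator near $a_0$. Granting that $q$ is thus a fibration with contractible fibres over the paracompact base $\mathrm{Fred}(\HH_+)$, a standard argument (Dold's theorem on fibre homotopy equivalences) shows $q$ is a homotopy equivalence. Consequently $\pi_0(\Gl_{\rm res}(\HH))\cong\pi_0(\mathrm{Fred}(\HH_+))\cong\IZ$, the bijection being the Fredholm index of $a=A_{++}$ (Atiyah--J\"anich), consistent with the relation $\ind(a)=-\ind(d)$ noted earlier.

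The hard part is precisely the passage from invertible base points to arbitrary ones: proving that $q$ is genuinely locally trivial and that the fibre over a non-invertible Fredholm operator is still contractible, which is where infinite-dimensional input (Kuiper's theorem and the local normal form for Fredholm families) is unavoidable. An alternative that bypasses the non-invertible fibres is to use the principal bundle $\Gl_{\rm res}(\HH)\to\mathrm{Gr}(\HH)$, $A\mapsto A\HH_+$, whose structure group is the block upper-triangular subgroup $\left\{\begin{pmatrix} a & b\\ 0 & d\end{pmatrix}\right\}$, which retracts onto $\Gl(\HH_+)\times\Gl(\HH_-)$ and is hence contractible; this yields $\Gl_{\rm res}(\HH)\simeq\mathrm{Gr}(\HH)$, and one then has to match it with the known homotopy type $\mathrm{Gr}(\HH)\simeq\IZ\times BU\simeq\mathrm{Fred}(\HH_+)$ and check compatibility with $q$ — which is essentially the content of \cite{PreSe}, \S6.2.
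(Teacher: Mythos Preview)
The paper does not actually give its own proof of this lemma; it simply cites \cite{PreSe}~\S6 and \cite{Wurz06}. Your proposal is in line with both of these. Your primary route --- exhibiting $q\colon A\mapsto A_{++}$ as a locally trivial fibration with contractible fibre via a block factorization and Kuiper's theorem --- is essentially the strategy of \cite{Wurz06}, and the alternative you sketch at the end (passing through the homogeneous space $\Gl_{\rm res}(\HH)\to\Gr$ with contractible isotropy group, then invoking $\Gr\simeq\IZ\times BU\simeq\mathrm{Fred}(\HH_+)$) is precisely the Pressley--Segal argument. So you have correctly identified both standard proofs.

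Your honesty about the ``hard part'' is well placed. The one genuinely incomplete step in your main argument is the local triviality of $q$ over a neighbourhood of a \emph{non}-invertible $a_0$: having a continuous local section $s$ and knowing the fibre over $a_0$ is contractible does not by itself give a local trivialization. One needs to produce, for $a$ near $a_0$, a continuous family of homeomorphisms $q^{-1}(a_0)\to q^{-1}(a)$ --- this is where the work in \cite{Wurz06} lies, and where your phrase ``combining the section $s$ with the standard local normal form of a Fredholm operator'' would need to be unpacked. The Pressley--Segal route avoids this particular difficulty but trades it for the (nontrivial) identification $\Gr\simeq\mathrm{Fred}(\HH_+)$ compatible with $q$, which you correctly flag.
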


\noindent The last conclusion is true, because $\ind: \mathrm{Fred}\rightarrow \IZ$ is continuous. We denote the n-th connected component by $\Gl^{(n)}_{\rm res}(\HH)$. In particular, $\Gl^0_{\rm res}(\HH)$ denotes the identity component of $\Gl_{\rm res}(\HH)$. For $\Ur(\HH)$ we use the analogous notations.\\ 

\noindent The Fredholm-index of $a$ has a very important physical interpretation.\\ 
For $V \approx \HH_+ \in \pol(\HH)$ and $A \in \Gl_{\rm res}(\HH)$, we find that 
\begin{align*} \charge(AV, \HH_+) = & \ind(P_+ \lvert_{AV \rightarrow \HH_+}) = \ind(P_+A\lvert_{V \rightarrow \HH_+})\\
= & \ind\bigl(P_+A(P_+ + P_-)\lvert_{V \rightarrow \HH_+}\bigr) = \ind\bigl(( P_+AP_+ + P_+ AP_- )\lvert_{AV \rightarrow \HH_+}\bigr)\\
= & \ind(P_+AP_+ \lvert_{V \rightarrow \HH_+})= \ind(P_+AP_+ \circ P_+ \lvert_{V \rightarrow \HH_+})\\
= & \ind(P_+\lvert_{V \rightarrow \HH_+}) +  \ind(P_+AP_+\lvert \HH_+ \rightarrow \HH_+)
\end{align*}
Thus: \begin{equation}\label{chargetransform}\addtolength{\fboxsep}{5pt} \boxed{\charge(AV, \HH_+) = \charge(V, \HH_+ ) + \ind(a)} \end{equation}

\vspace*{3mm}\noindent So, physically, the index of the (++)-component corresponds to the net-charge that the transformation $A$ ``creates'' from the vacuum ($\sim \HH_+$,  in this convention).

\begin{Lemma}[Homotopy Groups of $\Greso$]
\mbox{}\\
The homotopy groups of the connected Lie group $\Greso(\HH)$ are for $k \geq 0$
\begin{equation*}\pi_{2k+1}(\Greso) = \lbrace 0 \rbrace \; \text{and} \;\; \pi_{2k+2}(\Greso) \cong \IZ \end{equation*}
In particular, $\Greso(\HH)$ and $\cu^{0}_{\rm res}(\HH)$ are simply-connected.\\
\end{Lemma}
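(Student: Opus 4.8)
The plan is to transport the question to a computation in topological $K$-theory by way of the preceding lemma. By the lemma on the homotopy type of $\Gl_{\rm res}(\HH)$, the block map $A = \begin{pmatrix} a & b \\ c & d \end{pmatrix} \mapsto a$ is a homotopy equivalence $\Gl_{\rm res}(\HH) \to \mathrm{Fred}(\HH_+)$. Since a homotopy equivalence restricts to a homotopy equivalence between any path component and the component of its image, and since the identity component $\Greso(\HH)$ is carried into the component of $\mathrm{id}_{\HH_+}$ — which is precisely the set $\mathrm{Fred}^{(0)}(\HH_+)$ of Fredholm operators of index zero, the components of $\mathrm{Fred}(\HH_+)$ being indexed by the index exactly as those of $\Gl_{\rm res}(\HH)$ — one obtains $\Greso(\HH) \simeq \mathrm{Fred}^{(0)}(\HH_+)$. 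It then remains only to compute $\pi_n$ of the latter for $n \geq 1$.

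For that I would invoke the Atiyah--J\"anich theorem: $\mathrm{Fred}(\HH_+)$ (with the norm topology) is a classifying space for $K$-theory, $[X,\mathrm{Fred}(\HH_+)] \cong K^0(X)$ for compact $X$, and the index-zero component represents reduced $K$-theory, so $\mathrm{Fred}^{(0)}(\HH_+)$ has the homotopy type of $B\mathrm{U} = \mathrm{colim}_n B\mathrm{U}(n)$. The genuinely non-formal ingredients are Kuiper's theorem (the group of bounded invertible operators on $\HH$ is norm-contractible), which is what makes the classifying-space statement work, together with Bott periodicity; in a self-contained write-up these would be cited rather than reproved, and this identification $\mathrm{Fred}^{(0)} \simeq B\mathrm{U}$ is the real substance of the lemma. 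Alternatively one could follow Pressley--Segal and argue via the restricted Grassmannian, but the Fredholm route is the most economical.

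Granting $\Greso(\HH) \simeq B\mathrm{U}$, the homotopy groups drop out of the fibration $\mathrm U \to E\mathrm U \to B\mathrm U$ with $E\mathrm U$ contractible, which gives $\pi_n(B\mathrm U) \cong \pi_{n-1}(\mathrm U)$ (equivalently $\pi_n(B\mathrm U) \cong \widetilde{K}^0(S^n)$), combined with Bott periodicity $\pi_j(\mathrm U) \cong \IZ$ for $j$ odd and $\pi_j(\mathrm U) = 0$ for $j$ even, and with $\pi_0(\mathrm U) = 0$ as $\mathrm U$ is connected. Hence $\pi_{2k+1}(\Greso) \cong \pi_{2k}(\mathrm U) = 0$ and $\pi_{2k+2}(\Greso) \cong \pi_{2k+1}(\mathrm U) \cong \IZ$ for all $k \geq 0$; in particular $\pi_1(\Greso(\HH)) = 0$.

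Finally, to pass to $\cu^{0}_{\rm res}(\HH)$ I would use polar decomposition $A = U\lvert A\rvert$ with $\lvert A\rvert = (A^*A)^{1/2}$ and check that it respects the $\epsilon$-structure: if $[\epsilon,A]\in I_2(\HH)$ then $[\epsilon,A^*A]\in I_2(\HH)$ by the ideal property, hence $[\epsilon,\lvert A\rvert]\in I_2(\HH)$ via a Cauchy-integral representation of the square root, so $U = A\lvert A\rvert^{-1}\in\Ur(\HH)$. The path $t\mapsto U\,\lvert A\rvert^{1-t}$, $t\in[0,1]$, is then a deformation retraction (continuous in $\lVert\cdot\rVert_\epsilon$) of $\Gl_{\rm res}(\HH)$ onto $\Ur(\HH)$, which restricts to a deformation retraction of $\Greso(\HH)$ onto $\cu^{0}_{\rm res}(\HH)$; consequently $\cu^{0}_{\rm res}(\HH)$ has the same homotopy groups as $\Greso(\HH)$ and is in particular simply connected. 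The main obstacle throughout is the second step — once $\mathrm{Fred}^{(0)}\simeq B\mathrm{U}$ is in hand, everything else is bookkeeping.
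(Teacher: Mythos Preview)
The paper does not supply its own proof here; it simply refers the reader to Pressley--Segal \S 6 and to Wurzbacher for details. Your argument is correct and is one of the standard routes: reduce to $\mathrm{Fred}^{(0)}(\HH_+)$ via the preceding lemma, invoke Atiyah--J\"anich and Bott periodicity to identify $\mathrm{Fred}^{(0)}\simeq B\mathrm{U}$, and read off $\pi_n$; the polar-decomposition argument for passing from $\Greso(\HH)$ to $\cu^0_{\rm res}(\HH)$ is also standard and correctly sketched (the continuity of $A\mapsto\lvert A\rvert$ in $\lVert\cdot\rVert_\epsilon$ via an integral representation of the square root is the only point that would need a line of justification in a complete write-up).

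For comparison, Pressley--Segal organise the computation differently, working through the restricted Grassmannian: they show $\Gr$ has the homotopy type of $\IZ\times B\mathrm{U}$ and then use the fibration $\Ur(\HH)\to\Gr$ with fibre $\cu(\HH_+)\times\cu(\HH_-)$, which is contractible by Kuiper's theorem, to transfer the result to $\Ur(\HH)$ and its identity component. Both approaches rest on the same deep ingredients---Kuiper's theorem and Bott periodicity---so neither is genuinely more elementary; your Fredholm route is arguably the more direct once the preceding homotopy-equivalence lemma is already available, while the Grassmannian route has the advantage of giving the homotopy type of $\Gr$ as a by-product, which the paper needs elsewhere.
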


\noindent For the proofs of the Lemmatas see \cite{PreSe} \S6 or \cite{Wurz06}, for a more complete version.

\newpage
\section{The restricted Grassmannian}
\label{Sec:Grassmannian}
To the spectral decomposition $\HH = \HH_+ \oplus \HH_-$ corresponds the polarization class \\
$[\HH_+] \in \pol(\HH) \slash_\approx$. This set of subspaces of $\HH$ is known in the literature as the \textit{(restricted) Grassmannian} of the Hilbert space and denoted by $\Gr$ (the notations $\mathrm{Gr}_{\rm res}(\HH)$ or $\mathrm{Gr}_1(\HH)$ are also common). From the mathematical point of view, it's a remarkably nice object. It can actually be given the structure of an infinite-dimensional complex (K\"ahler) manifold.\\

\noindent By Lemma \ref{approx} we can describe $\Gr$ as the set of all closed subspaces $W$ of $\HH$ such that
\begin{enumerate}[i)]
\item the orthogonal projection $P_+: W \rightarrow \HH_+$ is a Fredholm operator
\item the orthogonal projection $P_-: W \rightarrow \HH_-$ is a Hilbert-Schmidt operator.\\
\end{enumerate}

\noindent This is the definition most commonly found in the mathematical literature (e.g. \cite{PreSe}, Def. 7.1.1).
Another way of saying the same thing is the following:\\
A subspace $W$ belongs to $\Gr$ if and only if it's the image of an operator
$w: \HH_+ \rightarrow \HH$ with $P_+ \circ w \in \mathrm{Fred}(\HH_+)$ and $P_- \circ w \in I_2(\HH_+ , \HH_-)$. \\

\noindent We can cover $\Gr$ by the sets $\bigl\lbrace U_W \bigr\rbrace_{W \in\Gr}$\, , where 
\begin{equation}\label{U_W} U_W := \lbrace W' \in\Gr \mid P_W\vert_{W'\rightarrow W}\; \text{is an isomorphism}\,\rbrace.\end{equation} 
The elements of $U_W$ are ``close'' to $W$ in the sense that they differ from $W$ only by a Hilbert-Schmidt operator $T : W \rightarrow W^{\perp}$. We make this more precise:

\begin{Lemma}[Characterization of $U_W$]
\mbox{}\\
The set $U_W$ consists of all graphs of Hilbert-Schmidt operators $W \rightarrow W^{\perp}$. There is a bijection between $U_W$ and $I_2(W , W^{\perp})$.
\end{Lemma}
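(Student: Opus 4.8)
The plan is to make the correspondence $W'\leftrightarrow T$ completely explicit through graphs, and to handle the Schatten-class bookkeeping by exploiting that $\Gr$ is a single polarization class, so that any two of its members are $\approx$-related and Lemma~\ref{approx} applies directly.

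First I would pass from $W'\in U_W$ to an operator. By definition $P_W|_{W'\to W}$ is an isomorphism, so set $\sigma:=(P_W|_{W'\to W})^{-1}\colon W\to W'$, a bounded operator. For $w\in W$ one has $P_W\sigma(w)=w$, hence $\sigma(w)=w+Tw$ with $Tw:=P_{W^{\perp}}\sigma(w)\in W^{\perp}$; this defines a bounded operator $T\colon W\to W^{\perp}$ and exhibits $W'=\sigma(W)$ as the graph $\Gamma_T=\{\,w+Tw:w\in W\,\}$. To see $T\in I_2(W,W^{\perp})$ I would write $T=\bigl(P_{W^{\perp}}|_{W'\to W^{\perp}}\bigr)\circ\sigma=\bigl(P_{W^{\perp}}P_{W'}|_{W'\to W^{\perp}}\bigr)\circ\sigma$; since $W,W'\in\Gr=[\HH_+]$ and $\approx$ is transitive we have $W\approx W'$, so Lemma~\ref{approx}~ii) gives $P_{W^{\perp}}P_{W'}\in I_2(\HH)$, and composing its restriction with the bounded map $\sigma$ keeps it Hilbert-Schmidt, since $I_2(\HH)$ is an ideal. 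This yields a well-defined map $U_W\to I_2(W,W^{\perp})$.

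Conversely, given $T\in I_2(W,W^{\perp})$, set $\Gamma_T:=\{\,w+Tw:w\in W\,\}$. Since $\|w+Tw\|^2=\|w\|^2+\|Tw\|^2\geq\|w\|^2$, the map $w\mapsto w+Tw$ is a topological embedding, so $\Gamma_T$ is a closed subspace. Writing the orthogonal projection onto $\Gamma_T$ in block form with respect to $\HH=W\oplus W^{\perp}$,
\begin{equation*}
P_{\Gamma_T}=\begin{pmatrix} (1+T^*T)^{-1} & (1+T^*T)^{-1}T^* \\ T(1+T^*T)^{-1} & T(1+T^*T)^{-1}T^*\end{pmatrix},
\end{equation*}
and using that $T\in I_2$ forces $T^*T\in I_1\subset I_2$, each entry of $P_{\Gamma_T}-P_W$ lies in $I_2(\HH)$ (the two diagonal corrections are in fact trace class). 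Hence $\Gamma_T\approx W$, so $\Gamma_T\in[\HH_+]=\Gr$; moreover $P_W|_{\Gamma_T\to W}$ is the bounded bijection $w+Tw\mapsto w$ with bounded inverse $w\mapsto w+Tw$, so $\Gamma_T\in U_W$. It then remains only to note that the two assignments are mutually inverse --- the graph reconstructed from $W'$ is $W'$ itself, and the operator extracted from $\Gamma_T$ is $T$ --- which gives the asserted bijection $U_W\cong I_2(W,W^{\perp})$.

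I expect the main obstacle to be the Hilbert-Schmidt verifications at both ends: recognizing in the forward direction that it is precisely the equality of polarization classes of $W$ and $W'$ that activates Lemma~\ref{approx}~ii), and confirming in the backward direction the block formula for $P_{\Gamma_T}$ together with the claim that $P_{\Gamma_T}-P_W$ is Hilbert-Schmidt. The remaining points --- closedness of $\Gamma_T$, invertibility of $P_W|_{\Gamma_T\to W}$, and the inverse-pair check --- should be routine.
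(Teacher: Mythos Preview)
Your proof is correct and follows essentially the same strategy as the paper: in the forward direction you define $T$ via the inverse of $P_W|_{W'\to W}$ and invoke $W\approx W'$ together with Lemma~\ref{approx} to obtain the Hilbert--Schmidt property, exactly as the paper does.

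In the backward direction there is a mild difference worth noting. The paper verifies $\Gamma_T\in\Gr$ by choosing an isomorphism $w\colon\HH_+\to W$ (with $P_+w$ Fredholm, $P_-w$ Hilbert--Schmidt) and observing that $w+Tw\colon\HH_+\to\HH$ still has $P_+(w+Tw)$ Fredholm and $P_-(w+Tw)$ Hilbert--Schmidt, since $Tw$ is a Hilbert--Schmidt perturbation. You instead compute $P_{\Gamma_T}$ explicitly in block form and check $P_{\Gamma_T}-P_W\in I_2(\HH)$ entry by entry. Both arguments are short and correct; the paper's version avoids the projection formula and foreshadows the admissible-basis viewpoint used later, while yours stays entirely within the $\approx$-relation and is perhaps more self-contained at this point of the exposition.
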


\begin{proof} Let $W \in \Gr$ a polarisation and $T: W \rightarrow W^{\perp}$ be a Hilbert-Schmidt operator.
Let $w:\HH_+ \rightarrow W$ an isomorphism with image $W\subset \HH$. Then:
\begin{equation*}\mathrm{Graph(T)} = \lbrace (\mathrm{w} , T\,\mathrm{w}) \mid \mathrm{w} \in W \rbrace = \mathrm{im}(w \oplus T\,w\bigl\lvert_{W \to W \oplus W^{\perp}}) = \mathrm{im}(w + T\,w\bigl\lvert_{W \to \HH}).\end{equation*}
(under the canonical identification $W\oplus W^{\perp} \cong \HH$). Now, since $W \in \Gr$, it follows that $P_+(w + T\circ w) = P_+w \,+\, P_+Tw$ is a Fredholm-operator and $P_-(w + T\circ w) = P_-w + P_-Tw$ is Hilbert-Schmidt, hence $\mathrm{Graph}(T) \in \Gr$.
Furthermore, the orthogonal projection $P_W \bigl\lvert_{\mathrm{Graph(T)} \to W}$ is obviously an isomorphism, whose inverse is $T$, so that $\mathrm{Graph}(T) \in U_W$.

\item Conversely, if $W' \in U_W$, we can set 
\begin{equation*} (P_W\vert_{W'\rightarrow W})^{-1} = (\mathrm{Id}_W + T)\lvert_{W \to W'}\;  \text{with} \;T: W \rightarrow W^{\perp}.\end{equation*} 
Since $W' \approx W$, it follows from Lemma \ref{approx} that $P_{W^\perp}\lvert_{W'\rightarrow W^{\perp}}$ is Hilbert-Schmidt, hence 
\begin{equation*} P_{W^\perp}\lvert_{W'\rightarrow W^{\perp}}\,(\mathrm{Id}_W + T)\lvert_{W \to W'} = T \end{equation*} is Hilbert-Schmidt. Finally, it is easy to see that the assignments $T \longleftrightarrow W' = \mathrm{Graph}(T)$ are inverses of each other, so the statement of the Lemma is proven.\\
\end{proof}

\begin{Proposition}[Manifold Structure of $\Gr$]
\mbox{}\\
$\Gr$ is a complex Hilbert manifold modelled on $I_2(\HH_+ , \HH_-)$.\\
The sets $U_W, W \in \Gr$ form an open covering of $\Gr$.
\end{Proposition}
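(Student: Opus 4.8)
The plan is to turn the bijections of the previous lemma into the charts of a holomorphic atlas. For each $W \in \Gr$ let $\varphi_W \colon U_W \to I_2(W,W^\perp)$ denote that bijection, so that $\varphi_W^{-1}(T) = \mathrm{Graph}(T)$. Since $W$ and $W^\perp$ are both infinite-dimensional separable Hilbert spaces, there are Hilbert space isometries $W \cong \HH_+$ and $W^\perp \cong \HH_-$, inducing an isomorphism of Hilbert spaces $I_2(W,W^\perp) \cong I_2(\HH_+,\HH_-)$ (the latter carrying the Hilbert--Schmidt inner product $\langle S,T\rangle = \trace(S^*T)$); composing $\varphi_W$ with a fixed such isomorphism makes all local models literally equal to $I_2(\HH_+,\HH_-)$, which is the content of the phrase ``modelled on $I_2(\HH_+,\HH_-)$''. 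One topologizes $\Gr$ by declaring a set to be open precisely when its image under every $\varphi_W$ is open. Since $W \in U_W$ always (the identity being an isomorphism), the family $\{U_W\}_{W \in \Gr}$ is automatically an open cover once the charts are shown to be compatible.

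The heart of the argument is the computation of the transition maps. Given $W_1, W_2 \in \Gr$ (both $\approx \HH_+$, hence $W_1 \approx W_2$), write the identity of $\HH$ as a block operator $\mathrm{Id}\colon W_1 \oplus W_1^\perp \to W_2 \oplus W_2^\perp$, $\begin{pmatrix} a & b \\ c & d \end{pmatrix}$, where by Lemma \ref{approx} the blocks $b = P_{W_2}|_{W_1^\perp}$ and $c = P_{W_2^\perp}|_{W_1}$ are Hilbert--Schmidt and $a = P_{W_2}|_{W_1}$, $d = P_{W_2^\perp}|_{W_1^\perp}$ are Fredholm. Writing a generic vector of $\mathrm{Graph}(T_1)$ as $w + T_1 w$ with $w \in W_1$, its $W_2$-component is $(a + bT_1)w$ and its $W_2^\perp$-component is $(c + dT_1)w$; hence $\mathrm{Graph}(T_1) \in U_{W_2}$ exactly when $a + bT_1$ is invertible as an operator $W_1 \to W_2$ --- an open condition on $T_1$, since $bT_1$ is compact so $a + bT_1$ is Fredholm --- and in that case
\begin{equation*} \varphi_{W_2}\bigl(\varphi_{W_1}^{-1}(T_1)\bigr) = (c + dT_1)(a + bT_1)^{-1}. \end{equation*}
This expression indeed lands in $I_2(W_2,W_2^\perp)$: $c + dT_1 \in I_2(W_1, W_2^\perp)$ because $c$ is Hilbert--Schmidt and $dT_1$ is a bounded operator times a Hilbert--Schmidt one, and right-composition with the bounded operator $(a+bT_1)^{-1}\colon W_2 \to W_1$ keeps it Hilbert--Schmidt by the ideal property. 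It is holomorphic in $T_1$, being a composition of affine maps with operator inversion, which is holomorphic on the open set of invertibles. Running the same computation with $W_1$ and $W_2$ exchanged shows the transition is a biholomorphism between open subsets of $I_2(\HH_+,\HH_-)$; in particular each $\varphi_W$ is a homeomorphism onto an open set, so the $U_W$ are open and the atlas is genuine and holomorphic.

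It then remains to check the point-set hypotheses. $\Gr$ is Hausdorff because the map $W' \mapsto P_{W'}$ is injective and continuous from $\Gr$ into $\mathcal{B}(\HH)$ --- on $U_W$, the projection $P_{W'}$ depends continuously on $\varphi_W(W') \in I_2(W,W^\perp)$ --- so distinct subspaces are separated by preimages of disjoint neighbourhoods of their projections (and when $W', W'' \in U_W$ they are already separated in the metric space $I_2(W,W^\perp)$). Second countability follows because each $I_2(\HH_+,\HH_-)$ is separable and $\Gr$ is covered by the countable subfamily $\{U_{W_n}\}$ obtained by letting $W_n$ range over the subspaces spanned by finite subsets of a fixed orthonormal basis of $\HH$ adapted to the splitting $\HH_+ \oplus \HH_-$. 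With these facts in hand, $\{(U_W,\varphi_W)\}_{W \in \Gr}$ is a complex Hilbert-manifold atlas modelled on $I_2(\HH_+,\HH_-)$, and $\{U_W\}_{W \in \Gr}$ is by construction an open cover of $\Gr$.

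I expect the main obstacle to be the transition-map step: deriving the linear-fractional formula cleanly, identifying $\varphi_{W_1}(U_{W_1}\cap U_{W_2})$ with the open set $\{T_1 : a + bT_1 \text{ invertible}\}$ --- the nontrivial inclusion being that invertibility of $a + bT_1$ forces $\mathrm{Graph}(T_1) \in U_{W_2}$ --- and the bookkeeping of which blocks are Fredholm versus Hilbert--Schmidt so that the image provably stays in $I_2(W_2,W_2^\perp)$ and the map is holomorphic. The covering, Hausdorff, and second-countability parts are comparatively routine.
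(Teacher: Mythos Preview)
The paper does not give its own proof; it simply cites \cite{PreSe}, Prop.~7.1.2. Your argument is the standard one and is essentially what one finds there: use the bijections $\varphi_W\colon U_W\to I_2(W,W^\perp)$ from the preceding lemma as charts, write the identity $W_1\oplus W_1^\perp\to W_2\oplus W_2^\perp$ in block form, and read off the linear-fractional transition map $(c+dT_1)(a+bT_1)^{-1}$, which is holomorphic on the open set where $a+bT_1$ is invertible and lands in $I_2(W_2,W_2^\perp)$ by the ideal property of Hilbert--Schmidt operators. So your approach matches the cited source.

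Two small remarks. First, your justification for openness of $\{T_1 : a+bT_1 \text{ invertible}\}$ via ``$bT_1$ is compact so $a+bT_1$ is Fredholm'' is beside the point: Fredholm does not imply invertible. What you actually need (and what you implicitly use) is that $T_1\mapsto a+bT_1$ is continuous from $I_2(W_1,W_1^\perp)$ into $\mathcal{B}(W_1,W_2)$ and that the invertibles are open there. Second, your second-countability argument is flawed as written: subspaces spanned by \emph{finite} subsets of a basis are finite-dimensional and hence not in $\Gr$ at all. You presumably mean the subspaces $H_S$ indexed by sequences $S\in\csq$ (cofinite modifications of $\IN$), but even then the claim that the corresponding $U_{H_S}$ cover $\Gr$ requires an argument. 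Fortunately the proposition as stated does not assert second countability, so this does not affect the result.
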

\begin{proof} \cite{PreSe}, Prop. 7.1.2\\ \end{proof}

\noindent By definition, a unitary transformation maps $\Gr$ into $\Gr$ if and only if it is in $\Ur(\HH)$. Conversely, for any two polarizations $W, W'$ of $\HH$ there exists $U \in \cu(\HH)$ with $W' = UW$. Then, if $W, W'$ belong both to 
$\Gr$, this transformation $U$ is necessarily in $\Ur(\HH)$.\\
 
\noindent In other words: $\Ur(\HH)$, and consequently also $\Gl_{\rm res}(\HH)$, \textit{act transitively} on $\Gr$.  
This leads us to a different description of the Grassmann manifold as a \textit{homogeneous space} under $\Ur(\HH)$ or $\Gl_{\rm res}(\HH)$. 
The corresponding isotropy groups of $\HH_+ \in \Gr$ are
\begin{align*} & P := \Biggl\lbrace A = \begin{pmatrix}
    a & b\\
    c & d
  \end{pmatrix} \in \Gl_{\rm res}(\HH) \; \Biggl\lvert \; c = 0 \Biggr\rbrace\end{align*}
  respectively
  \begin{align*}
  &Q := \Biggl\lbrace U = \begin{pmatrix}
    a & b\\
    c & d
  \end{pmatrix} \in \Ur(\HH)\;  \Biggl\lvert \; b=c = 0 \Biggr\rbrace
\end{align*}
This means: 
\begin{equation}\label{homogspace} \addtolength{\fboxsep}{5pt} \boxed{\Gl_{\rm res}(\HH) \slash P \; \cong \Ur(\HH) \slash Q \; \cong \Gr}\end{equation}

\noindent In particular, it follows that, just as $\Gl_{\rm res}(\HH)$ and $\Ur(\HH)$, the Grassmannian $\Gr$ has $\mathbb{Z}$ connected components corresponding to the relative charges 
\begin{equation*} \charge(W , \HH_+) = \ind(P_+\vert_{W \rightarrow \HH_+}).\end{equation*}
Alternatively, we could have noted that \\
\begin{equation*} \charge(W' , \HH_+) = \charge(W, \HH_+),\; \forall\, W' \in U_W\, \forall\, W \in \Gr \end{equation*}
and hence the sets $\mathrm{Gr}^{(c)}(\HH):=\lbrace W \in \Gr \mid \charge(W, \HH_+) = c \in \mathbb{Z} \rbrace$ are open and disjoint. They are connected, because $\Ur^0(\HH)$ acts continuously and transitively on each of them. Consequently, they correspond to different connected components.\\  

It is a nice feature of the mathematical structure that different charges are separated
topologically. This also reflects the physical intuition that a continuous time evolution preserves the total charge i.e. that particles and anti-particles are always created in pairs.\\

Finally, we can also use the transitive action of $\Ur(\HH)$ on $\Gr$ to define a $\Ur$-invariant Hermitian form on the restricted Grassmannian. Since $\Gr$ is a complex Hilbert-manifold modelled on $I_2(\HH_+, \HH_-)$, the tangent space at $\HH_+ \in \Gr$ is naturally isomorphic to $I_2(\HH_+, \HH_-)$. This space carries an  inner product
\begin{equation}\label{I2innerproduct} (X,Y) \mapsto 2 \trace(X^*Y), \end{equation}  
which is obviously invariant under the action of the isotropy group $\cu(\HH_+)\times \cu(\HH_-)$. Therefore, we can extend $\eqref{I2innerproduct}$ to a $\Ur$-invariant Hermitian form on $\Gr$ (by pull-backs with appropriate unitary transformations). In fact, this Hermitian form defines a K\"ahler structure on $\Gr$, turning it into an infinite-dimensional K\"ahler-manifold (\cite{PreSe} \S 7.8).

\chapter{Projective Representations and Central Extensions}
In this section we introduce the concept of \textit{central extensions} of (Lie-)groups which is essential for the treatment of representations of Lie groups in a quantum mechanical setting. 
Again, we will start with a rather intuitive approach to motivate the concept.

\section{Motivation}
We have argued -- and will state rigorous results -- that a unitary operator can be lifted to the fermionic Fock space $\FF$ if and only it satisfies the Shale-Stinespring condition i.e. if and only if it's in $\Ur(\HH)$. In this case, we will see that the implementation, i.e. the ``second quantization'' of the operator is well-defined only up to a complex phase of modulus one. In this chapter, we are concerned with this phase-freedom. Suppose we are trying to tackle this latter problem. Suppose we choose \textit{any} prescription for fixing the phase and denote by $\Gamma(U)$ the corresponding lift of $U \in \Ur$ to an operator on the Fock space $\FF$. This yields a map
\begin{equation}\Gamma: \Ur(\HH) \rightarrow \cu(\mathcal{F})\end{equation}
from $\Ur(\HH)$ into the group of unitary automorphisms of the Fock space. Such a map is what's usually - and somewhat mysteriously - called a ``second quantization'' of the unitary operators. However, fixing the phases of the lifts in some arbitrary way, we have no reason to expect that these lifts will preserve the group structure: for any $U,V \in \Ur(\HH)$, $\Gamma(U) \Gamma(V)$ and $\Gamma(UV)$  are both implementations of the same unitary operator but will, in general, differ by a complex phase. In other words, $\Gamma$ will fail to be a \textit{representation} of the restricted unitary group $\Ur$ on $\mathcal{F}$. Instead, we merely get a \textit{projective representation} of $\Ur$ on the projective Fock space $\mathbb{P}(\mathcal{F}) = \mathcal{F}\setminus\lbrace0\rbrace \; mod \; \mathbb{C}$.\\

\noindent Now, we would like to ask: is it possible to fix the phases of the lifts in such a way as to preserve the group structure? In other words:
\begin{center}\textit{Does a proper representation of $\Ur(\HH)$ on the Fock space exist?} \end{center}
The answer to this question will require a more sophisticated analysis. First, we observe that we can always write
\begin{equation} \Gamma(U) \Gamma(V) = \chi(U,V)\, \Gamma(UV)\end{equation}
for $U,V \in \Ur$ with $\chi(U,V) \in \cu(1)$. This defines a map $\chi: \Ur \times \Ur \rightarrow \cu(1)$.\\ 
Reasonably, we demand $\Gamma(\mathds{1})=\mathds{1}_{\FF}$, which implies $\chi(\mathds{1},\mathds{1}) = 1$. Such a map is called a \textit{2-cocycle} or a \textit{factor set}. Using this factor set, we can define a group 
\begin{equation*}\cu(1) \times_{\chi} \Ur \end{equation*} as the direct product $\cu(1) \times \Ur$ with the multiplication given by
\begin{equation} (a, U ) \cdot_{\chi} (b, V) := \bigl( \chi(U,V) \, ab , UV \bigr) \end{equation} 
and set $\widehat{\Gamma}\bigl( (a,U) \bigr) := a \, \Gamma(U)$. But the multiplication on $\cu(1) \times_{\chi} \Ur$  was defined precisely in such a way as to compensate the cocycle coming from $\Gamma$ and make $\widehat{\Gamma}$ a homomorphism of groups:
\begin{align*}\widehat{\Gamma}\bigl( (a,U) (b,V) \bigr) & = \widehat{\Gamma} \bigl((\chi(U,V) \, ab , UV)\bigr) = ab \, \chi(U,V) \Gamma\bigl(UV\bigr)\\&= ab\, \Gamma(U)\,\Gamma(V) = \widehat{\Gamma}\bigl( (a,U) \bigr)\; \widehat{\Gamma}\bigl( (b,V) \bigr) \end{align*}
The group $\cu(1) \times_{\chi} \Ur$ is called a \textit{central extension} of $\Ur$ by $\cu(1)$.\\

\noindent This is quite nice. Indeed, there is an intimate connection between central extensions and the problem of lifting projective representations to proper representations that we are going to exploit in the following chapter. Of course, things aren't quite as simple as we have presented them so far and thus a more general approach is required.\\

\section{Central Extensions of Groups}

Throughout this section, let $G$ be an arbitrary group and $A$ an abelian group. The trivial group, consisting of the neutral element only, is denoted by 1. All definitions and results apply directly to topological groups and Lie groups, if the appropriate structures on the groups and continuity, respectively smoothness, of the maps are implied.
  
\begin{Definition}[Central Extension]
\mbox{}\\
A central extension of G by A is a short exact sequence of group homomorphisms
\begin{equation*} 1 \longrightarrow A \xrightarrow{\;\;\imath\;} E \xrightarrow{\;\;\pi\;} G \longrightarrow 1\end{equation*}
such that $\imath(A)$ is in the center of $E$ (i.e commutes with all the elements in $E$). \end{Definition}

Let's dissect this abstract definition: 
The sequence being \textit{exact} means that the kernel of each map equals the image of the previous map.
So $\imath$ must be injective, $\pi$ surjective and $\ker(\pi) = \im(\imath) \cong A$. Thus, $E$ is a covering of $G$ and the preimage of every $g \in G$ is isomorphic to $A$. The requirement that $\imath(A)$ is central in $E$ is crucial in the context of studying (projective) representations of $G$, since then, \textit{Schurr's Lemma} ensures that in any irreducible representation, the images of $A$ in $E$ will be constant multiples of the identity. In a certain sense, $A$ will carry the information about the phases that lead to ambiguities when trying to lift a projective representation of $G$ to a proper one.\\

\begin{Example}[Trivial extension and universal covering group]\label{CEexamples}
\mbox{}
\begin{enumerate}
\item A \textit{trivial extension}  has the form
\begin{equation*}1 \longrightarrow A \xrightarrow{\;\;\imath\;} A \times G \xrightarrow{\;\;pr_2\;} G \longrightarrow 1\end{equation*}
where $E$ is just the direct product of $A$ and $G$, $i(a) = (a, 1), \; \forall a \in A$ and $pr_2$ is the projection onto the second entry.  

\item Let G be a connected topological group, $\hat{G}$ the universal covering group and\\ 
$A = \pi_1(G)  \cong \mathrm{Cov}(\hat{G} , G)$ the fundamental group of G. Then
\begin{equation*}1 \longrightarrow \pi_1(G) \xrightarrow{\;\;\imath\;} \hat{G} \xrightarrow{\;\;\pi\;} G \longrightarrow 1\end{equation*} 
is a central extension of $G$ by $\pi_1(G)$, where $\pi$ is the covering homomorphism and $\imath$ is given by the action of the covering group $\mathrm{Cov}(\hat{G} , G) \cong \pi_1(G)$ on $1 \in \hat{G}$.

\begin{proof} It is easily checked that the sequence is exact. Furthermore, the orbit of $1_{\hat{G}}$ under the action of $\mathrm{Cov}(\hat{G} , G)$ is by definition the preimage of $1_G$ under $\pi$. Thus: $i(\pi_1(G)) = \pi^{-1}(1_G) = \ker(\pi)$. 
The interesting part is to show that $\imath(\pi_1(G))$ is central in $\hat{G}$. To this end, note that $\imath(\pi_1(G)) = \pi^{-1}(1_G)$ is a discrete set by definition of covering spaces. As the kernel of $\pi$, it is also a normal subgroup of $\hat{G}$. Now, for any fixed $a \in \imath(\pi_1(G))$ we can consider the map 
\begin{equation*} \hat{G} \ni g \mapsto g^{-1} a g.\end{equation*}
This is a continuous map from $\hat{G}$ into $\imath(\pi_1(G))$, sending $1$ to $a$. As $\hat{G}$ is connected and $\imath(\pi_1(G))$ is discrete, the map must be constant. Thus, $g^{-1} a g = a \; \forall g \in \hat{G}$, i.e. $a$ belongs to the center of $\hat{G}$.   
\end{proof}

\item As a special case of 2. , we can consider the well known covering
\begin{equation*} 1 \longrightarrow \lbrace \pm 1 \rbrace \rightarrow \mathrm{SU}(2) \rightarrow \mathrm{SO}(3) \longrightarrow 1\end{equation*}

It is a well-known fact (in physics prominently drom the discussion of spin in non-relativistic quantum mechanics) that there is no irreducible, 2-dimensional unitary representation of $\mathrm{SO}(3)$. However, there exists one of its universal covering group $\mathrm{SU}(2)$ (generated by the Pauli matrices).\\ 
\end{enumerate}
\end{Example}

\begin{Definition}[Equivalence of Central Extensions]\label{equivce}
\mbox{}\\
\vspace*{-2mm}

\noindent Two central extension $E$ and $E'$ of a group $G$ by $A$ are \emph{equivalent}, if there exists an isomorphism $\varphi: E\to E'$ which is compatible with the extensions i.e. such that the following diagram commutes:

\begin{equation*}
\begin{xy}
  \xymatrix{
      1 \ar[r] & A \ar[r] \ar[d]_{Id}    &   E \ar[d]^{\varphi} \ar[r]  & G \ar[d]_{Id} \ar[r]          & 1  \\
      1 \ar[r] & A \ar[r]            &   E' \ar[r]          & G \ar[r] & 1 
  }
\end{xy}
 \end{equation*}  
\end{Definition}
\vspace*{2mm}

\begin{Lemma}[Trivial Extensions]\label{trivext}
\item A central extension $1 \longrightarrow A \xrightarrow{\;\;\imath\;} E \xrightarrow{\;\;\pi\;} G \longrightarrow 1$ 
is equivalent to the trivial extension if and only if there is a homomorphism $\sigma : G \rightarrow E$ with $\pi \circ \sigma = Id_G$.
\item In other words: $\sigma$ is a section of $G$ in $E$ which is also a homomorphism of groups.
In this case, $\sigma$ is called a \emph{splitting map} and the extension is said to \emph{split}.
\item If $G$ and $E$ are Lie groups, a section $\sigma$ is called a splitting map if it is also a Lie group homomorphism. 
\end{Lemma}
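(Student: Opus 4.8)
\emph{Strategy.} The plan is to prove the two implications by writing down the relevant maps explicitly and chasing the short exact sequence; the only structural ingredient beyond pure bookkeeping is the centrality of $\imath(A)$ in $E$.

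\emph{Sufficiency.} Suppose a group homomorphism $\sigma\colon G \to E$ with $\pi\circ\sigma = \mathrm{Id}_G$ is given. I would define $\varphi\colon A \times G \to E$ by $\varphi(a,g) := \imath(a)\,\sigma(g)$. The crucial step is that $\varphi$ is a homomorphism: comparing $\varphi\bigl((a,g)(a',g')\bigr) = \imath(a)\imath(a')\sigma(g)\sigma(g')$ with $\varphi(a,g)\varphi(a',g') = \imath(a)\sigma(g)\imath(a')\sigma(g')$, the two coincide \emph{precisely} because $\imath(a')$ commutes with $\sigma(g)$, i.e.\ because $\imath(A)$ lies in the center of $E$ --- this is the heart of the argument. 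Injectivity of $\varphi$ follows by applying $\pi$, which forces $g=1$, hence $\imath(a)=1$, hence $a=1$ since $\imath$ is injective; surjectivity follows because for $e\in E$ with $g:=\pi(e)$ the element $e\,\sigma(g)^{-1}$ lies in $\ker\pi = \im\imath$, so $e = \imath(a)\sigma(g)$ for some $a\in A$. Finally $\varphi(a,1) = \imath(a)$ (using $\sigma(1)=1$) and $\pi\circ\varphi = pr_2$, which is exactly the commutativity required in Definition \ref{equivce}; since $\varphi$ is an isomorphism, $E$ and the trivial extension are equivalent.

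\emph{Necessity.} Conversely, assume an isomorphism $\varphi\colon E \to A\times G$ compatible with the extensions, i.e.\ $\varphi\circ\imath = (a\mapsto(a,1))$ and $pr_2\circ\varphi = \pi$. I would set $\sigma(g) := \varphi^{-1}(1,g)$. Since $\varphi^{-1}$ is an isomorphism and $g\mapsto(1,g)$ is a homomorphism $G\to A\times G$, the composite $\sigma$ is a homomorphism; and $pr_2\circ\varphi = \pi$ gives $\pi(\sigma(g)) = pr_2(1,g) = g$, so $\sigma$ is the desired splitting map.

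\emph{Topological and Lie categories.} In the topological (resp.\ smooth) setting one additionally needs $\varphi$ to be a homeomorphism (resp.\ diffeomorphism) and $\sigma$ to be continuous (resp.\ smooth). For sufficiency, $\varphi(a,g)=\imath(a)\sigma(g)$ is continuous/smooth because $\imath$, $\sigma$ and the multiplication of $E$ are, and its inverse $e\mapsto\bigl(\imath^{-1}\!\bigl(e\,\sigma(\pi(e))^{-1}\bigr),\pi(e)\bigr)$ is continuous/smooth provided $\imath$ is an embedding onto its closed image --- which is built into the notion of a central extension of topological groups. For necessity, $\sigma = \varphi^{-1}(1,\cdot)$ inherits its regularity from $\varphi^{-1}$. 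I expect the only genuine --- and rather minor --- obstacle to be exactly this regularity bookkeeping together with the embedding property of $\imath$; the algebraic content is immediate once centrality is invoked in the right place.
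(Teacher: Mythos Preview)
Your proof is correct and follows essentially the same approach as the paper: the paper also defines $\varphi(a,g) = \imath(a)\,\sigma(g)$ for sufficiency and takes $\sigma(g) = (1,g)$ (i.e.\ your $\varphi^{-1}(1,g)$) for necessity. You spell out in detail what the paper dismisses with ``it is easily checked'' --- in particular the role of centrality in making $\varphi$ a homomorphism and the bijectivity argument --- and you additionally treat the topological/Lie case, which the paper only mentions in the statement without proof.
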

\begin{proof} If the extension is trivial i.e. $E \cong A \times G$, set $\sigma(g) := (1,g) \in A \times G$.\\
Conversely, suppose there exists $\sigma: G \rightarrow E$ as above.
Set $\varphi: A \times G \rightarrow E, \; (a,g) \mapsto \imath(a)\,\sigma(g)$.\\
It is easily checked that this is a homomorphism compatible with the extension in the sense of Def. \ref{equivce}. Furthermore, $\varphi$ is bijective, since for every $g \in G$ and $\xi \in \pi^{-1}(g) \subset E$ there is one and only one $a \in A$ with $\xi = \imath(a) \, \sigma(g)$.\\
\end{proof}

\newpage 
\section{Projective Representations}

\begin{Definition}[Projective Hilbert space]
\mbox{}\\
Let $\HH$ be a complex Hilbert space of finite or infinite dimension.\\
The projective Hilbert space $\mathbb{P}(\HH)$ is the space of rays in $\HH$, i.e.
\begin{equation*} \mathbb{P}(\HH) := (\HH \backslash \lbrace 0 \rbrace) \slash \IC^{\times} \end{equation*} 
where the equivalence relation is given by $\varphi \sim \lambda \, \varphi\,, \text{for}\; \lambda \in \IC \backslash \lbrace 0 \rbrace$.
The topology on $\mathbb{P}(\HH)$ is the quotient topology, induced by the quotient map $\gamma:\,\HH \rightarrow \mathbb{P}(\HH)$.\\ 
We will also write $\hat{\varphi}$ instead of $\gamma(\varphi)$.
\end{Definition}
 
\noindent Obviously, the projective Hilbert space is not a linear space. However, \textit{transition probabilities} are still well-defined. By this, we mean the map $\delta: \mathbb{P}(\HH) \times \mathbb{P}(\HH) \rightarrow [0,1]$ defined by
\begin{equation} \delta(\hat{\varphi} , \hat{\psi}) := \frac{\vert \langle \varphi , \psi \rangle \vert ^2}{\lVert \varphi \rVert^2 \lVert \psi \rVert^2} \end{equation} 
where $\scpro$ is the Hermitian scalar product on $\HH$. 

\begin{Definition}[Projective Automorphisms]
\mbox{}\\
A projective automorphism is a bijective map $T:  \mathbb{P}(\HH) \rightarrow  \mathbb{P}(\HH)$ preserving the transition probability i.e. satisfying
\begin{equation*}  \delta(T \hat{\varphi} ,T \hat{\psi}) = \delta(\hat{\varphi} , \hat{\psi})\,,\, \forall \hat{\varphi}, \hat{\psi} \in \mathbb{P}(\HH). \end{equation*}
We denote the set of all projective automorphisms by $\mathrm{Aut}(\mathbb{P}(\HH))$.
\end{Definition}
\noindent If $U$ is a unitary map on $\HH$, we can define $\hat{U}$ on $\mathbb{P}(\HH)$ by 
\begin{equation} \hat{U} (\gamma({\varphi})) =  \gamma\bigl({U(\varphi)}\bigr). \end{equation}
Clearly, this is a projective automorphism. Thus, we get a group-homomorphism
 \begin{align*}\hat{\gamma}:\,  \cu( & \HH) \rightarrow \mathrm{Aut}(\mathbb{P}(\HH))\,,\\
 & U \mapsto \hat{\gamma}(U) =  \hat{U}. \end{align*}
 As the projective Hilbert space doesn't care for multiplicative constants, the same definition also works for \textit{anti-unitary} maps $U$ on $\HH$, i.e. anti-linear maps satisfying $\langle U x, U y \rangle = \overline{ \langle x,  y \rangle}$ for $x,y \in \HH$. As a matter of fact, every projective automorphism comes from a unitary or anti-unitary map on $\HH$:
 
 \begin{Theorem}[Wigner, 31]
 \mbox{}\\
For every projective automorphism $T \in \mathrm{Aut}(\mathbb{P}(\HH))$ there exists a unitary or an anti-unitary operator U on $\HH$ with T = $\hat{\gamma}(U)$.
 \end{Theorem}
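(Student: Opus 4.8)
The plan is to follow Wigner's original reconstruction strategy: encode $T$ by its action on a fixed orthonormal basis, build a candidate operator $U$ from that data, and then use the fact that $T$ preserves \emph{all} transition probabilities to verify $\hat\gamma(U)=T$ identically.

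First I would fix an orthonormal basis $(e_k)_k$ of $\HH$. Since $\delta(\hat e_j,\hat e_k)=\delta_{jk}$ and $\delta$ is $T$-invariant, the rays $T\hat e_k$ are mutually orthogonal; picking unit representatives $f_k$ gives an orthonormal system, which is complete because the bijection $T^{-1}$ also preserves $\delta$ (a vector orthogonal to every $f_k$ would have $T^{-1}$-preimage orthogonal to every $e_k$, hence zero). Next I would exploit the phase freedom in the choice of each $f_k$, $k\ge 2$: as $\delta(\widehat{e_1+e_k},\hat e_j)$ equals $\tfrac12$ for $j\in\{1,k\}$ and $0$ otherwise, the ray $T\widehat{e_1+e_k}$ has a representative $f_1+\mu_k f_k$ with $|\mu_k|=1$, and after replacing $f_k$ by $\mu_k f_k$ we may assume $T\widehat{e_1+e_k}=\widehat{f_1+f_k}$ for all $k\ge 2$. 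The elementary identity $\delta(\widehat{e_1+ie_k},\widehat{e_1+e_k})=\tfrac12$, transported through $T$, then forces the representative of $T\widehat{e_1+ie_k}$ to be $f_1+\varepsilon_k\, i\, f_k$ for a sign $\varepsilon_k\in\{+1,-1\}$.

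The key step is to show that all the signs $\varepsilon_k$ coincide. I would argue by contradiction, assuming $\varepsilon_j=+1$ and $\varepsilon_k=-1$ for some $j\neq k$, and test $T$ on $\psi=e_1+i e_j+i e_k$: the transition probabilities of $\hat\psi$ with $\hat e_1$, $\widehat{e_1+ie_j}$ and $\widehat{e_1+ie_k}$ (whose $T$-images are already normalized) pin a representative of $T\hat\psi$ up to scale, namely a multiple of $f_1+i f_j-i f_k$, from which one reads off $\delta(T\hat\psi,\widehat{f_j-f_k})=\tfrac12$ but $\delta(T\hat\psi,\widehat{f_j+f_k})=\tfrac12\cdot\tfrac43$; on the other hand, computing $\delta$ against the three-term reference ray $\widehat{e_1+e_j+e_k}$ (whose image is $\widehat{f_1+f_j+f_k}$, by the normalizations of the previous paragraph) forces $T\widehat{e_j+e_k}=\widehat{f_j+f_k}$, and the two determinations of $T\widehat{e_j+e_k}$ clash. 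Hence there is a single sign $\varepsilon$, and I define $U$ by $Ue_k=f_k$, extended $\IC$-linearly if $\varepsilon=+1$ and $\IC$-antilinearly if $\varepsilon=-1$; since $U$ carries an orthonormal basis to an orthonormal basis it is automatically an isometry onto $\HH$, hence unitary, resp.\ anti-unitary.

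It remains to verify $\hat\gamma(U)=T$. For $\psi=\sum_k c_k e_k$ with $c_m\neq 0$ I would normalize a representative $\sum_k d_k f_k$ of $T\hat\psi$ so that $d_m=c_m$; preservation of $\delta(\hat\psi,\hat e_k)$ gives $|d_k|=|c_k|$ for all $k$, and comparing $\delta(\hat\psi,\widehat{e_m+e_k})$ and $\delta(\hat\psi,\widehat{e_m+ie_k})$ with their $T$-images yields $\operatorname{Re}(d_k)=\operatorname{Re}(c_k)$ and $\operatorname{Im}(d_k)=\varepsilon\operatorname{Im}(c_k)$, i.e.\ $d_k=c_k$ for all $k$ when $\varepsilon=+1$ and $d_k=\overline{c_k}$ for all $k$ when $\varepsilon=-1$; in either case $T\hat\psi=\widehat{U\psi}$. (For $m\neq 1$ one first deduces $T\widehat{e_m+e_k}=\widehat{f_m+f_k}$ and $T\widehat{e_m+ie_k}=\widehat{f_m+\varepsilon i f_k}$ exactly as above, via a three-term reference vector through the index $1$, which also covers the rays with $c_1=0$.) I expect the main obstacle to be precisely this sign-coincidence argument together with its bookkeeping — choosing auxiliary two- and three-term combinations of basis vectors so that two independent evaluations of the same transition probability genuinely conflict — whereas the infinite-dimensionality causes no real trouble, since the reconstructed $U$ is well defined and bounded the moment one notes it is an isometry on the total set $\{e_k\}$.
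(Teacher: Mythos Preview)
The paper does not prove this theorem at all: it is stated as Wigner's classical 1931 result and then used as a black box, so there is nothing in the paper to compare your argument against.

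Your argument is the standard Wigner reconstruction and is sound in outline, but two small slips are worth flagging. First, in the sign-coincidence paragraph your quoted transition probabilities are off: with $T\hat\psi=\widehat{f_1+if_j-if_k}$ one actually gets
\[
\delta\bigl(T\hat\psi,\widehat{f_j+f_k}\bigr)=0,\qquad \delta\bigl(T\hat\psi,\widehat{f_j-f_k}\bigr)=\tfrac{2}{3},
\]
not $\tfrac12$ and $\tfrac23$. The contradiction still goes through, since $\delta(\hat\psi,\widehat{e_j+e_k})=\tfrac23$ while $T\widehat{e_j+e_k}=\widehat{f_j+f_k}$ forces $\delta(T\hat\psi,T\widehat{e_j+e_k})=0$; but you should correct the numbers. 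Second, in the final verification, normalizing so that $d_m=c_m$ only gives $d_k=\bar c_k$ literally when $c_m$ is real; otherwise you pick up the harmless global phase $c_m/\bar c_m$. Either note that you may rescale $\psi$ so that $c_m>0$, or simply observe that the conclusion $T\hat\psi=\widehat{U\psi}$ is a statement about rays and is unaffected. With these cosmetic fixes, your proof is complete.
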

 
\noindent The projective automorphisms coming from unitary transformations on $\HH$ form a subgroup of $\mathrm{Aut}(\mathbb{P}(\HH))$, denoted by $\cu(\mathbb{P}(\HH))$. We can express this reasoning in the language of central extensions:
 
 \begin{Lemma}[Unitary Projective Automorphisms]
 \mbox{}\\
 The sequence
 \begin{equation}\label{Uprojective} 1 \longrightarrow \cu(1) \longrightarrow \cu(\HH) \xrightarrow{\;\hat{\gamma}\;} \cu(\mathbb{P}(\HH)) \longrightarrow 1\end{equation}
 defines a central extension of $\cu(\mathbb{P}(\HH))$ by $\cu(1)$ which is non-trivial. 
 \end{Lemma}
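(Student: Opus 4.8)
\emph{Strategy.} A non-trivial central extension requires three things, and I would verify them in turn: exactness of the displayed sequence, centrality of $\imath(\cu(1))$ in $\cu(\HH)$, and failure of the extension to split. The first two are short; the third is the heart of the matter.

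\emph{Exactness and centrality.} The map $\imath$ is $\lambda\mapsto\lambda\,\mathrm{Id}$, which is an injective homomorphism $\cu(1)\to\cu(\HH)$, and $\hat\gamma$ is surjective onto $\cu(\mathbb{P}(\HH))$ by the very definition of the latter group as the set of projective automorphisms induced by unitaries. The one computation needed is $\ker\hat\gamma=\imath(\cu(1))$: if $\hat U=\mathrm{id}_{\mathbb{P}(\HH)}$ then $U\varphi\in\IC^{\times}\varphi$ for every $0\neq\varphi\in\HH$, i.e.\ every nonzero vector is an eigenvector of $U$; comparing the eigenvalue equations for two linearly independent vectors $\varphi,\psi$ and for $\varphi+\psi$ forces all eigenvalues to coincide, so $U=\lambda\,\mathrm{Id}$, and unitarity gives $|\lambda|=1$. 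Conversely every $\lambda\,\mathrm{Id}$ with $|\lambda|=1$ lies in $\ker\hat\gamma$. Centrality of $\imath(\cu(1))$ is immediate since scalar multiples of the identity commute with every bounded operator; hence the sequence is a central extension. (This kernel computation is exactly what I will reuse below: two unitaries inducing the same projective automorphism differ by an element of $\cu(1)$.)

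\emph{Non-triviality.} By Lemma~\ref{trivext} it suffices to show there is no group homomorphism $\sigma\colon\cu(\mathbb{P}(\HH))\to\cu(\HH)$ with $\hat\gamma\circ\sigma=\mathrm{Id}$. The plan is to exhibit two unitaries $S,M\in\cu(\HH)$ whose commutator $SMS^{-1}M^{-1}$ is a scalar different from $\mathrm{Id}$. Then $\hat\gamma(S)$ and $\hat\gamma(M)$ commute in $\cu(\mathbb{P}(\HH))$, so if $\sigma$ existed then $\sigma(\hat\gamma(S))$ and $\sigma(\hat\gamma(M))$ would commute in $\cu(\HH)$; but by the kernel computation $\sigma(\hat\gamma(S))=aS$ and $\sigma(\hat\gamma(M))=bM$ for some $a,b\in\cu(1)$, and since scalars are central their commutator equals $SMS^{-1}M^{-1}\neq\mathrm{Id}$ --- a contradiction. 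Since $\HH$ is separable and infinite-dimensional (the case of interest), identify $\HH\cong\ell^2(\IZ)$ and take $S$ the bilateral shift $e_k\mapsto e_{k+1}$ and $M$ the multiplication operator $e_k\mapsto(-1)^k e_k$; both are unitary, and one checks directly that $SMS^{-1}=-M$, hence $SMS^{-1}M^{-1}=-\mathrm{Id}$.

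\emph{Where the care goes.} The only subtle point is the construction of $S$ and $M$. The tempting finite-dimensional example --- two anticommuting Pauli matrices on a two-dimensional subspace, extended by the identity on its orthogonal complement --- does \emph{not} work: padding by the identity makes the commutator equal to $\mathrm{Id}$ off the two-plane and $-\mathrm{Id}$ on it, so it is not a scalar and the images in $\cu(\mathbb{P}(\HH))$ genuinely fail to commute. One really needs operators ``spread over all of $\HH$'', such as the shift and sign-multiplication above (for a finite $\dim\HH\geq 2$ one would instead use clock-and-shift matrices, whose commutator is a nontrivial root-of-unity multiple of the identity). I would note in passing that in the topological category one could alternatively invoke Kuiper's theorem (norm-contractibility of $\cu(\HH)$) against $\pi_1(\cu(1))=\IZ$, but the commutator argument is elementary, self-contained, and even rules out discontinuous splittings, so I would present it. Combining the three parts proves the Lemma.
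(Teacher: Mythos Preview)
Your proof is correct and takes a genuinely different route from the paper's. The exactness/centrality part is essentially the same (your eigenvector argument for $\ker\hat\gamma=\cu(1)\cdot\mathrm{Id}$ matches the paper's word for word in spirit). For non-triviality, however, the paper proceeds by reduction: it restricts to a two-dimensional subspace $V\subset\HH$, passes to a quotient to obtain the central extension $1\to\cu(1)\to\cu(2)\to\cu(\mathbb{P}(\IC^2))\to 1$, identifies $\cu(\mathbb{P}(\IC^2))\cong\mathrm{SO}(3)$, and then argues that a splitting of the original extension would descend to one of this finite-dimensional extension and thence to a splitting of the double cover $\mathrm{SU}(2)\to\mathrm{SO}(3)$, which is classically known not to exist.

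Your commutator argument with the bilateral shift $S$ and the sign operator $M$ on $\ell^2(\IZ)$ is more direct and self-contained: it needs no external facts about compact Lie groups, and --- as you correctly observe --- it rules out even discontinuous splittings. The paper's route has the pedagogical advantage of tying the result to the familiar spin double cover, but your approach is arguably cleaner. Your diagnostic remark about why the naive ``Pauli matrices padded by the identity'' construction fails (the commutator is then not a scalar) is spot on, and it also illuminates why the paper's two-dimensional reduction has to go through a \emph{quotient} of the stabilizer of $V$ rather than a simple embedding of $\cu(2)$ into $\cu(\HH)$.
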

 \begin{proof} The only part in proving that the sequence is exact and central that might be non-trivial is to identify $\ker(\hat{\gamma})$ with $\cu(1)\cdot Id \subset \cu(\HH)$.  "$\supseteq$" is clear. For "$\subseteq$" pick $U \in \ker(\hat{\gamma})$. Then for any $\varphi \in \HH:\, \hat{\gamma}(U)(\gamma(\varphi)) = \gamma(U \varphi) = \gamma(\varphi)  \Rightarrow  \exists \lambda \in \IC : U \varphi = \lambda \varphi.$\\ 
Since $U$ is unitary, $\vert \lambda \vert = 1$ i.e. $\lambda \in \cu(1)$. If we take any other $\psi \in \HH$ which is not a multiple of $\varphi$ it is by the previous consideration also an eigenvector to some eigenvalue $\lambda' \in \cu(1)$; but so is the sum $\varphi + \psi$ (with eigenvalue $\mu$).
It follows that $U (\varphi + \psi) = \mu (\varphi + \psi) = \lambda \varphi + \lambda' \psi$ and thus, by linear independence, $\lambda' = \lambda = \mu$. Hence, $U = \lambda Id$.\\

To prove that the central extension is not trivial, we embed $\IC^2$ in $\HH$ be fixing any 2-dimensional subspace $V \subset \HH$. Now consider the subgroup of all unitary operators on $\HH$ leaving $V$ invariant, i.e. $\lbrace U \in \cu(\HH) \mid \cu(V) = V \rbrace =: \mathcal{V}$. On $\mathcal{V}$, we introduce an equivalence relation and identify two operators if they agree on $V$. Then $\mathcal{V}\slash\sim \; \cong U(V) \cong \cu(2)$.\\ 
Now all the homomorphisms descend to a central extension
 \begin{equation*} 1 \longrightarrow \cu(1) \longrightarrow \cu(2) \longrightarrow \cu(\mathbb{P}(\IC^2)) \longrightarrow 1. \end{equation*}
Note that  $\mathbb{P}(\IC^2) = \mathrm{CP}^1 \cong \mathrm{S}^1$, so its unitary group is just the isometry group of the sphere, i.e. $\mathrm{SO}(3)$. Also, $\cu(1) \times \mathrm{SU}(2) \cong \cu(2)$ by $(e^{i\phi}, U) \rightarrow e^{i\phi \slash 2} \,  U$, for $\phi \in [0,2\pi)$. Thus, if the central extension \eqref{Uprojective} was trivial with splitting map $\sigma$, this $\sigma$ would descend to a spitting map for
\begin{equation*} 1 \longrightarrow \cu(1) \longrightarrow \cu(1) \times \mathrm{SU}(2) \longrightarrow \mathrm{SO}(3) \longrightarrow 1. \end{equation*}
 But then, the second component of the homomorphism $\sigma: \mathrm{SO}(3) \rightarrow \cu(1) \times \mathrm{SU}(2)$ is a splitting map for the universal covering
  \begin{equation*} 1 \longrightarrow \lbrace \pm 1 \rbrace \longrightarrow \mathrm{SU}(2) \longrightarrow \mathrm{SO}(3) \longrightarrow 1 \end{equation*}
We know, however, that such a splitting map cannot exist, because the covering of $ \mathrm{SO}(3)$ by  $\mathrm{SU}(2)$ is not trivial. Thus, we get a contradiction.\\
 \end{proof}

 \subsubsection{Lifting projective representations}
 \noindent Now we are able to formulate the problem of lifting projective representations in a more precise way:
 \textit{Given a projective representation $\Gamma:\, G \rightarrow \cu(\mathbb{P}(\HH))$, is there a representation $\rho:\, G \rightarrow \cu(\HH)$ such that $\hat{\gamma}\circ \rho  = \Gamma$, i.e. such that the following diagram commutes?}
 
\begin{equation*} 
\begin{xy}
  \xymatrix{
  & & &  G   \ar[d]^ {\Gamma} \ar@{-->}[dl]_{\rho} \\
      1 \ar[r] & \cu(1) \ar[r]            &   \cu(\HH) \ar[r]^ {\hat{\gamma}}          & \cu(\mathbb{P}({\HH})) \ar[r] & 1 
  }
\end{xy}
 \end{equation*}\\
 
\noindent In general, the answer is NO. However, as we have suggested in the introducing remarks,  there always exists a central extension $\widetilde{G}$ of $G$ such that the projective representation of $G$ lifts to a proper representation of $\widetilde{G}$.\\

\begin{Lemma}[Lifting Projective Representations]\label{liftingrep}
\mbox{}\\
Let $G$ be a group and $\Gamma:\, G \rightarrow \cu(\mathbb{P}(\HH))$ a homomorphism. 
There exists a central extension $\widetilde{G}$ of $G$ by $\cu(1)$ and a homomorphism $\widetilde{\Gamma}:\, \widetilde{G} \rightarrow \cu(\HH)$, such that the following diagram commutes:\\
\begin{equation}\label{liftdiagram}
\begin{xy}
  \xymatrix{
      1 \ar[r] & \cu(1) \ar[r]^{\imath} \ar[d]_{Id}    &   \widetilde{G} \ar[d]^{\widetilde{\Gamma}} \ar[r]^ \pi & G   \ar[d]^ {\Gamma} \ar[r]          & 1  \\
      1 \ar[r] & \cu(1) \ar[r]            &   \cu(\HH) \ar[r]^ {\hat{\gamma}}          & \cu(\mathbb{P}({\HH})) \ar[r] & 1 
  }
\end{xy}
 \end{equation}
 
 \vspace*{2mm}
\noindent $\widetilde{\Gamma}$ is a unitary representation of $\widetilde{G}$ on $\HH$. 
The group $\widetilde{G}$ is called the \emph{deprojectivization} of $G$.
\end{Lemma}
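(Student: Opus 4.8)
The plan is to realize $\widetilde{G}$ as the \emph{fibre product} (pullback) of $\Gamma$ and $\hat\gamma$. Concretely, I would set
\begin{equation*}
\widetilde{G} := \bigl\{\, (g, U) \in G \times \cu(\HH) \;\bigm|\; \Gamma(g) = \hat\gamma(U) \,\bigr\},
\end{equation*}
which is a subgroup of the direct product $G \times \cu(\HH)$: it contains $(1_G, Id)$, and since $\Gamma$ and $\hat\gamma$ are homomorphisms the defining relation $\Gamma(g) = \hat\gamma(U)$ is preserved under componentwise multiplication and inversion. I then define $\pi(g,U) := g$, $\;\widetilde\Gamma(g,U) := U$, and $\imath(\lambda) := (1_G, \lambda\cdot Id)$ for $\lambda \in \cu(1)$. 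All three are group homomorphisms by construction, and the two squares in \eqref{liftdiagram} commute essentially by definition: $\hat\gamma\circ\widetilde\Gamma = \Gamma\circ\pi$ holds because every element of $\widetilde{G}$ satisfies the defining relation, and $\widetilde\Gamma\circ\imath$ is the standard inclusion $\cu(1)\hookrightarrow\cu(\HH)$.

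It then remains to check that the top row is a central extension. Injectivity of $\imath$ is clear. For surjectivity of $\pi$, note that for any $g \in G$ the element $\Gamma(g)$ lies in $\cu(\mathbb{P}(\HH))$, which is \emph{by definition} the image of $\hat\gamma$; hence there is some $U \in \cu(\HH)$ with $\hat\gamma(U) = \Gamma(g)$, so $(g,U) \in \widetilde{G}$ and $\pi(g,U) = g$. For exactness in the middle, if $\pi(g,U) = 1_G$ then $g = 1_G$ and $\hat\gamma(U) = \Gamma(1_G) = Id$, so $U \in \ker\hat\gamma$; by the preceding Lemma on unitary projective automorphisms this kernel is exactly $\cu(1)\cdot Id$, whence $(g,U) \in \imath(\cu(1))$, and the reverse inclusion $\imath(\cu(1)) \subseteq \ker\pi$ is immediate. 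Finally $\imath(\cu(1))$ is central in $\widetilde{G}$ because its first component is trivial and $\cu(1)\cdot Id$ is central in $\cu(\HH)$. This settles the purely algebraic statement, and unitarity of $\widetilde\Gamma$ is automatic since it takes values in $\cu(\HH)$.

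For the topological version one equips $\widetilde{G}$ with the subspace topology of $G \times \cu(\HH)$; continuity of $\pi,\imath,\widetilde\Gamma$ is then immediate, and continuity of the assumed $\Gamma$ together with that of $\hat\gamma$ shows that $\widetilde{G}$ is closed in $G\times\cu(\HH)$, hence a topological group, with $\pi$ open. The point I expect to require the most care --- and the main obstacle --- is the \emph{smooth} (Banach--Lie) structure: one needs $\widetilde{G}$ to be a split submanifold of $G \times \cu(\HH)$, which follows once one knows that $\hat\gamma:\cu(\HH)\to\cu(\mathbb{P}(\HH))$ admits local smooth sections (equivalently, that the central extension $\cu(1)\to\cu(\HH)\to\cu(\mathbb{P}(\HH))$ is a locally trivial principal $\cu(1)$-bundle). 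Given such a local section $s$ near $\Gamma(g_0)$, the map $g \mapsto \bigl(g, s(\Gamma(g))\bigr)$ is a local section of $\pi$ and yields a local product chart $\widetilde{G} \cong \cu(1) \times (\text{nbhd of }g_0)$; this is precisely what is needed to promote $\widetilde{G}$ to a Banach--Lie group and to make $\widetilde\Gamma$ smooth. Choosing such sections globally and comparing them recovers the $2$-cocycle $\chi$ and the description $\widetilde{G} = \cu(1)\times_\chi G$ sketched in the Motivation.
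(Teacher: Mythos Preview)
Your proof is correct and follows essentially the same approach as the paper: the paper also defines $\widetilde{G}$ as the fibre product $\{(U,g)\in \cu(\HH)\times G \mid \hat\gamma(U)=\Gamma(g)\}$ (with the factors in the opposite order), takes $\pi$, $\widetilde\Gamma$ as the two projections and $\imath(\lambda)=(\lambda\cdot Id,1)$, and asserts the central extension property without spelling out the verifications you give. Your treatment is in fact more detailed than the paper's, and your discussion of the topological and Lie group structure anticipates what the paper only summarizes in remarks following the proof.
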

\newpage
 \begin{proof} We define 
 \begin{equation*} \widetilde{G} := \lbrace (U , g ) \in \cu(\HH) \times G \mid \hat{\gamma}(U) = \Gamma (g) \rbrace. \end{equation*}
 This is a subgroup of $\cu(\HH) \times G$. The inclusion $\cu(1) \ni \lambda \stackrel{\imath}\mapsto (\lambda \cdot Id , 1)$ and the projection onto the second component $\pi = \mathrm{pr}_2 : \widetilde{G} \rightarrow G$ are homomorphisms that make the upper row of the diagram \eqref{liftingrep} into a central extension. The projection onto the first component defines a representation $\widetilde{\Gamma}:= \mathrm{pr}_1 : \, \widetilde{G} \rightarrow \cu(\HH)$ which, by construction, satisfies $\hat{\gamma} \circ \widetilde{\Gamma} = \Gamma \circ \pi$.\\
 \end{proof}
 
So far, this is a pure algebraic statement. In the more interesting cases, where $G$ is a topological group or even a Lie group, we will have to check if and how the results are compatible with the topological structure, respectively the Lie group structure, of $G$ and its central extension. Here, we summarize the main results. A more complete treatment can be found in \cite{Scho}, for example.
\begin{enumerate}[1)]
\item If $G$ is a topological group, $\widetilde{G}$ can be given the structure of a topological group as a subgroup of $\cu(\HH)\times G$. Then, if $\Gamma$ is continuous, so is $\widetilde{\Gamma}$.
\item  If $G$ is a \textit{finite-dimensional} Lie group, then $\widetilde{G}$ can be given the structure of a Lie group so that the upper sequence in \eqref{liftdiagram} becomes a sequence of differentiable homomorphisms. If $\Gamma$ is smooth (in a strong sense), so is $\widetilde{\Gamma}$.\footnote{This statement seems rather harmless, but it's quite the opposite. Indeed, it requires the solution of one of the famous ``Hilbert problems'': every topological group which is also a finite-dimensional topological manifold is already a Lie group. This theorem was proven by Montgomory and Zippin in 1955.} 
\item If we have to deal with infinite-dimensional Lie groups -- and we do -- the Lie group structure of the central extension is not generally for free. Fortunately, things work out nicely in the cases relevant to our discussion.
\end{enumerate}

\noindent In conclusion, if the central extension $\widetilde{G}$ carries the structure of a Lie group with a representation $\widetilde{\Gamma}$ on $\HH$, any prescription for lifting the projective action of $G$ to the Hilbert-space corresponds to a section $\sigma:\, G \rightarrow \widetilde{G}$, via 
\begin{equation}\rho:= \widetilde{\Gamma} \circ \sigma : G \rightarrow \cu(\HH). \end{equation} 
Conversely, every attempt to lift the action to the Hilbert space, i.e. every (continuous) map $\rho: G \to \cu(\HH)$ defines a section in $\widetilde{G}$, by assigning to $g \in G$ the unique element $\tilde{g}$ in $\pi^{-1}(g)$ with $\widetilde{\Gamma}(\tilde{g}) = \rho(g)$. Now, $\rho:= \widetilde{\Gamma} \circ \sigma$ defines a (continuous/smooth) representation of $G$ on $\HH$ if and only if the section $\sigma$ is a (continuous/smooth) homomorphism of groups i.e. if and only if the central extension is trivial as an extension of groups/topological groups/ Lie groups.
We summarize this insight in the following proposition.\\
 


\begin{Proposition}[Lifting Projective Representations]\label{Corliftingrep}
\item Let $G$ be a topological group. A projective representation $\Gamma:\, G \rightarrow \cu(\mathbb{P}(\HH))$ can be lifted to a (continuous) unitary representation $\rho:\, G \rightarrow \cu(\HH)$ with $\hat{\gamma} \circ \rho = \Gamma$ if and only if the central extension $1 \longrightarrow \cu(1) \longrightarrow \widetilde{G} \longrightarrow G \longrightarrow 1$ splits by a (continuous) section i.e. is trivial.\\ 
\end{Proposition}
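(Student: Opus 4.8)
The plan is to obtain the proposition as an immediate consequence of Lemma \ref{liftingrep} and Lemma \ref{trivext}: essentially nothing new has to be proved, one only needs to recognize that a continuous lift $\rho$ of $\Gamma$ and a continuous homomorphic section $\sigma$ of the deprojectivization are the same datum written two ways. Throughout I keep the notation of Lemma \ref{liftingrep}: $\widetilde{G} = \lbrace (U,g)\in\cu(\HH)\times G \mid \hat\gamma(U)=\Gamma(g)\rbrace$ carries the subspace topology from $\cu(\HH)\times G$, $\pi = \mathrm{pr}_2$ is the projection to $G$, and $\widetilde{\Gamma} = \mathrm{pr}_1$ is the projection to $\cu(\HH)$, which by point 1) following Lemma \ref{liftingrep} is a continuous unitary representation of $\widetilde{G}$.

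For the ``only if'' part I would assume a continuous unitary representation $\rho : G \to \cu(\HH)$ with $\hat\gamma\circ\rho = \Gamma$ and set $\sigma(g) := (\rho(g), g)$. The identity $\hat\gamma(\rho(g)) = \Gamma(g)$ shows $\sigma(g)\in\widetilde{G}$, so $\sigma : G \to \widetilde{G}$; it is a group homomorphism because $\rho$ is one (and the $G$-component is the identity), and it is continuous because $\rho$ is continuous, $g\mapsto g$ is continuous, and $\widetilde{G}$ has the subspace topology. Since $\pi\circ\sigma = \mathrm{pr}_2\circ\sigma = Id_G$, the map $\sigma$ is a continuous splitting map, and Lemma \ref{trivext} (read for topological groups, as stipulated at the start of this section) gives that the central extension $1 \to \cu(1) \to \widetilde{G} \to G \to 1$ is trivial.

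For the ``if'' part I would assume the extension splits by a continuous homomorphism $\sigma : G \to \widetilde{G}$ with $\pi\circ\sigma = Id_G$, and put $\rho := \widetilde{\Gamma}\circ\sigma : G \to \cu(\HH)$. As a composite of continuous homomorphisms (here the continuity of $\widetilde{\Gamma}$ from Lemma \ref{liftingrep} is used) $\rho$ is a continuous unitary representation, and
\begin{equation*} \hat\gamma\circ\rho = \hat\gamma\circ\widetilde{\Gamma}\circ\sigma = \Gamma\circ\pi\circ\sigma = \Gamma, \end{equation*}
where the middle equality is commutativity of the lower square of the diagram \eqref{liftdiagram} in Lemma \ref{liftingrep}. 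Thus $\rho$ is the desired lift.

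The argument is purely formal, so I do not expect a genuine obstacle; the only thing to be careful about is the continuity bookkeeping, i.e.\ consistently reading ``trivial'' in the topological-group sense (continuous section), exactly as Lemma \ref{trivext} is phrased, and remembering to cite the continuity of $\widetilde{\Gamma}$ in the second direction. If there is any subtlety at all it is conceptual rather than technical: in the infinite-dimensional setting one might also ask for triviality \emph{as Lie groups}, but since the present proposition is stated only for topological groups this finer question does not enter.
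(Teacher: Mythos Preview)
Your proof is correct and follows essentially the same route as the paper: construct $\sigma(g) = (\rho(g), g)$ from a lift $\rho$, and conversely set $\rho = \widetilde{\Gamma}\circ\sigma$ from a splitting section $\sigma$, verifying $\hat\gamma\circ\rho = \Gamma$ via the commutative diagram of Lemma~\ref{liftingrep}. Your version is slightly more explicit about the continuity bookkeeping, but the argument is the same.
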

\begin{proof} By Lemma \eqref{trivext}, the central extension is (algebraically) trivial if and only if there is a section $\sigma:\, G \rightarrow \widetilde{G}$ which is also a homomorphism of groups. Then $\rho := \widetilde{\Gamma} \circ \sigma$ is a homomorphism with
\begin{equation*}\hat{\gamma} \circ \rho =  \hat{\gamma} \circ \widetilde{\Gamma} \circ \sigma = \Gamma \circ \pi \circ \sigma = \Gamma.\end{equation*}  
If the section  $\sigma$ is continuous, so is $\rho$.\\ 
Conversely, if $\rho$ is a unitary representation of $G$, then $\sigma(g) := (\rho(g), g) \in \widetilde{G}$ is a section in $\widetilde{G}$ and a homomorphism of groups and thus the desired splitting map.\\
\end{proof}
\newpage

\section{Cocycles and Cohomology Group}

In the previous section, we have seen that lifts of a projective representation of a (Lie) group $G$ to an action on the Hilbert-space $\HH$ (not necessarily a representation) correspond to sections in a central extension of $G$. We want to study this more thoroughly.\\

\noindent Let
\begin{equation*}1 \longrightarrow A \xrightarrow{\;\;\imath\;} E \xrightarrow{\;\;\pi\;} G \longrightarrow 1\end{equation*}
be a central extension of G by A. Let $\tau : G \rightarrow E$ be a map with 
 \begin{equation}\label{section} \pi \circ \tau = Id_G\; \text{and}\; \tau(1) = 1. \end{equation}
The map might be defined in a neighborhood of the identity, only.\\
$\tau$ will in general fail to be a homomorphism. Nevertheless:
\begin{equation*} \pi(\tau(g)\tau(h)) = \pi(\tau(gh)) = gh, \; \forall g, h \in G. \end{equation*} 
Therefore, there exists $\chi(g,h) \in A$ with 
\begin{equation} \tau(g) \tau(h) = \chi(g , h) \tau(gh). \end{equation}
This defines a map $\chi: G \times G \rightarrow A$. Obviously, this $\chi$ satisfies 
\begin{equation}\label{chi1} \chi(1,1) = 1.\end{equation}
\onehalfspace
Furthermore:  $\tau(x)\tau(y)\tau(z) = \chi(x,y) \tau(xyb) \tau(z) = \chi(x,y) \chi(xy, z) \tau(xyz)$ and 
similarly:    $\tau(x)\tau(y)\tau(z) = \tau(x) \chi(y,z) \tau(yz) = \chi(x,yz)\chi(y,z) \tau(xyz)$, from which we deduce: 
\begin{equation}\label{chi2}\chi(x,y) \chi(xy,z) = \chi(x,yz) \chi(y,z), \; \forall \; x,y,z \in G. \end{equation}

\noindent Now It's possible that we've just made a ``bad'' choice for $\tau$ and that there really does exist a (local) section $\tau'$ which is also a homomorphism, i.e. for which the corresponding cocycle vanishes.
Let's fix this by writing
$\tau'(x) = \tau(x) \lambda(x)$ with a suitable function $\lambda: G \rightarrow A$.\\
Then: $\tau'(x) \tau'(y) = \tau(x)\tau(y) \lambda(x)\lambda(y) = \tau(xy)\chi(x,y)\lambda(x)\lambda(y)$.
But also, since $\tau'$ is a\\ homomorphism: $\tau'(x) \tau'(y) = \tau ' (xy) = \tau(xy) \lambda(xy)$. Together, this implies for $\lambda$: \begin{equation} \lambda(xy) = \chi(x,y) \lambda(x) \lambda(y), \; \forall x,y,z \in G. \end{equation}
This motivates the following definition:

\begin{Definition}[Cocycles and Second Cohomology Group]
\mbox{}
\begin{enumerate}[i)]
\item A map $\chi: G \times G \rightarrow A$ satisfying \eqref{chi1} and \eqref{chi2} is called a \emph{factor set} or a \emph{2-cocycle} on $G$ with values in $A$.
\item A 2-cocycle  $\chi: G \times G \rightarrow$ is called \emph{trivial} if there exists a map $\lambda: G \rightarrow A$ such that  $\lambda(xy) = \chi(x,y) \lambda(x) \lambda(y) \; , \forall x,y,z \in G$.
\item Furthermore, we define the \emph{second cohomology group} of G with coefficients in $A$ as the set of all 2-cocycles on $G$ with values in $A$ modulo trivial cocycles, i.e.
\begin{equation}\mathrm{H}^2(G,A) := \lbrace \chi: G \times G \rightarrow A \mid \chi \;  \text{is a cocycle} \rbrace \slash \sim\end{equation}
where $\chi_1 \sim \chi_2  :\iff \chi_1\chi_2^{-1}$ is trivial.
\end{enumerate}
\end{Definition}
\newpage
\noindent We have seen that given a central extension $1 \longrightarrow A \xrightarrow{\;\;\imath\;} E \xrightarrow{\;\;\pi\;} G \longrightarrow 1$, every section $\tau : G \rightarrow E \; \text{with} \;  \pi \circ \tau = Id_G\; \text{and}\; \tau(1) = 1$ defines a cocycle $\chi:G \times G \rightarrow A$. Different choices for $\tau$ lead to equivalent cocycles. Conversely, given a cocycle $\chi$, we can define a group $A \times_{\chi} G$ as the direct product $A \times G$ with the multiplication
\begin{equation}
(a,x)(b,y) := \bigl(\chi(x,y) ab , xy\bigr).
\end{equation}
This is a central extension of $G$ by $A$ with the cocycle $\chi$ coming from the obvious section $\tau(x) := (1, x)$ (c.f. the construction in the introduction of this chapter).

\noindent If the cocycle $\chi$ comes from a (global) section $\tau : G \rightarrow E$ as above, the central extensions
\begin{equation*} 1 \longrightarrow A \xrightarrow{\;\;\imath\;} E \xrightarrow{\;\;\pi\;} G \longrightarrow 1\end{equation*}
\vspace*{-2mm} and 
\begin{equation*} 1 \longrightarrow A \longrightarrow A \times_{\chi} G  \xrightarrow{\;\; pr_2\;} G \longrightarrow 1\end{equation*}
are equivalent by the isomorphism $\varphi: A \times_{\chi} G \rightarrow E, \; \varphi\bigl((a,c)\bigr) := \imath(a)\cdot\tau(x)$. The group-multiplication in $A \times_{\chi} G$ is defined just in such a way as to cancel the cocycle coming from $\tau$ on the right-hand-side. We reserve for the reader the little joy of checking for him- or herself how everything is designed to work out nicely. Now it's an easy exercise to show that for cocycles $\chi_1$ and $\chi_2$, the corresponding groups $A \times_{\chi_1} G$ and $A \times_{\chi_2} G$ are equivalent (isomorphic) as central extensions if and only if $\chi_1$ and $\chi_2$ are equivalent as cocycles. Altogether, this yields the following theorem:

\begin{Theorem}[Central Extensions correspond to Cohomology Classes]
\mbox{}\\
There is a one-to-one correspondence between equivalence classes of central extension of G by A and the second cohomology classes of G with values in A.
\end{Theorem}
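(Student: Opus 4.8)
The plan is to write down explicit maps in both directions between equivalence classes of central extensions of $G$ by $A$ and the cohomology group $\mathrm{H}^2(G,A)$, and then verify they are mutually inverse. Almost all of the needed ingredients have already appeared in the preceding discussion; the proof really amounts to assembling them and checking well-definedness.

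First I would define the assignment $\Phi$ sending an extension to a cohomology class. Given $1\to A\xrightarrow{\imath}E\xrightarrow{\pi}G\to 1$, surjectivity of $\pi$ lets us pick a normalized set-theoretic section $\tau$ with $\pi\circ\tau=\mathrm{Id}_G$ and $\tau(1)=1$; the map $\chi$ determined by $\tau(g)\tau(h)=\imath(\chi(g,h))\,\tau(gh)$ is a $2$-cocycle by the computation already carried out, i.e. it satisfies \eqref{chi1} and \eqref{chi2}. The point requiring care is that the class $[\chi]\in\mathrm{H}^2(G,A)$ is independent of choices: a second normalized section differs from $\tau$ by $\tau'(g)=\imath(\lambda(g))\tau(g)$ for some $\lambda:G\to A$ with $\lambda(1)=1$, and a short manipulation shows the resulting cocycle $\chi'$ and $\chi$ differ by the coboundary built from $\lambda$, hence define the same class; similarly, if $\varphi:E\to E'$ is an isomorphism compatible with the two extensions, then $\varphi\circ\tau$ is a section of $E'$ computing the same class, so $\Phi$ descends to equivalence classes of extensions.

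Next I would treat the map $\Psi$ in the opposite direction, sending a cocycle $\chi$ to the central extension $A\times_\chi G$ with multiplication $(a,x)(b,y)=(\chi(x,y)ab,xy)$. Associativity is exactly \eqref{chi2}, the neutral element is $(1,1)$ using \eqref{chi1}, inverses are written down explicitly, $\imath(a)=(a,1)$ is central, and $\mathrm{pr}_2$ realizes $G$ as the quotient, so this is a bona fide central extension; moreover, if $\chi_1$ and $\chi_2$ differ by a coboundary coming from $\mu:G\to A$, then $(a,x)\mapsto(\mu(x)a,x)$ is an isomorphism $A\times_{\chi_1}G\to A\times_{\chi_2}G$ compatible with the extension maps, so $\Psi$ too descends to the quotient.

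It remains to check $\Phi$ and $\Psi$ are mutually inverse. That $\Phi\circ\Psi$ is the identity is immediate: in $A\times_\chi G$ the obvious section $\tau(x)=(1,x)$ reproduces $\chi$ on the nose. That $\Psi\circ\Phi$ is the identity is the content of the equivalence $\varphi:A\times_\chi G\to E$, $\varphi(a,x)=\imath(a)\tau(x)$, already noted before the statement: it is a bijection because each fibre $\pi^{-1}(x)$ is uniquely of the form $\imath(a)\tau(x)$, and it is a homomorphism precisely because the twisted multiplication on $A\times_\chi G$ was designed to absorb the failure of $\tau$ to be multiplicative. I expect no conceptual obstacle here; the work is bookkeeping, chiefly keeping the normalization $\tau(1)=1$ and the direction of the coboundary consistent throughout. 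The one genuinely delicate point — if one wanted the statement for topological or Lie groups rather than the purely algebraic version — is that the sections would then have to be chosen continuous (resp. smooth) at least near the identity, and one would need to know that $A\times_\chi G$ inherits the corresponding structure; this is where extra hypotheses enter in the infinite-dimensional setting, but for the algebraic statement as phrased that issue does not arise.
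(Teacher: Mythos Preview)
Your proposal is correct and follows essentially the same approach as the paper: the paper assembles the preceding discussion (section $\tau$ gives a cocycle $\chi$, different sections give equivalent cocycles, a cocycle $\chi$ yields the extension $A\times_\chi G$, and the map $\varphi(a,x)=\imath(a)\tau(x)$ provides the equivalence back), leaving as an ``easy exercise'' precisely the check that equivalent cocycles give equivalent extensions, which you spell out via your map $\Psi$. Your organization into explicit mutually inverse maps $\Phi$ and $\Psi$ is a cleaner packaging of the same argument.
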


In particular, a central extension is trivial, if and only if it allows a (global) section whose corresponding cocycle is trivial. In our context, this means that a 2-cocycle or, more generally, the corresponding cohomology group represents the \textit{algebraic} or, if questions of continuity are involved, \textit{topological obstructions} to lifting projective representations to proper representations on the Hilbert space.

Note that the theorem is so far a pure algebraic statement! If we are dealing with Lie groups and have to take topological aspects into account, things aren't quite as easy and the arguments above will, in general, work only locally. In fact, there might not be any \textit{continuous} section $\tau : G \rightarrow E \; \text{with} \;  \pi \circ \tau = Id_G\; \text{and}\; \tau(1) = 1$. Therefore, it is usually more convenient to discuss cocycles of the corresponding \textit{Lie algebras}, which may be thought of as the infinitesimal version of the Lie group cocylces and are already determined by a \textit{local} section $\tau$ defined in some arbitrarily small neighborhood of the identity of the Lie group $G$. We will do this in the next chapter. However, the following is true:

\begin{Proposition}[Lie Group Extensions and local Cocycles]
\mbox{}\\ 
Let $1\rightarrow A \rightarrow E \rightarrow G \rightarrow 1$ be a central extension of a connected (finite- oder infinite-dimensional) Lie group $G$ by the abelian Lie group $A$. Then, $E$ carries the structure of a Lie group such that the central extension is smooth if and only if the central extension can be described by a cocycle $\chi: G \times G \rightarrow A$ which is smooth in a neighborhood of $(e,e)\in G \times G$.\\
For Banach Lie groups, the statement applies also to non-connected $G$. 
\end{Proposition}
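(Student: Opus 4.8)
The plan is to prove both implications; the content is almost entirely in the converse direction, where one must reconstruct a Lie group structure on $E$ out of a cocycle that is assumed smooth \emph{only} near $(e,e)$. For the easy direction, suppose $E$ is already a Lie group and the extension is smooth. Then $\pi\colon E\to G$ is a smooth surjective homomorphism which is moreover a submersion — automatic in finite dimensions, and built into the notion of a smooth (resp.\ Banach) extension in general, since $\imath(A)$ is a split Lie subgroup — so $\pi$ admits a local smooth section $s$ near $e$. After translating by a fixed element of $\imath(A)$ we may assume $s(e)=e$, and then $\chi(g,h):=\imath^{-1}\!\bigl(s(g)s(h)s(gh)^{-1}\bigr)$ is a cocycle representing the extension which is smooth wherever $s(g),s(h),s(gh)$ are all defined, in particular on a neighbourhood of $(e,e)$.

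For the converse I would realise $E$ as $A\times_{\chi}G$ with multiplication $(a,x)(b,y)=(\chi(x,y)ab,xy)$, choose a symmetric open $V\ni e$ in $G$ with $VV\times V$ contained in the neighbourhood $W$ on which $\chi$ is smooth, and build a smooth atlas by transporting the single chart $\phi_{e}\colon A\times V\to E,\ (a,x)\mapsto(a,x)$ around the identity by left translations, $\phi_{g}:=L_{g}\circ\phi_{e}$ — the standard construction of a Lie group from a left-invariant local chart. The transition $\phi_{g_{2}}^{-1}\circ\phi_{g_{1}}$ works out to $(b,y)\mapsto(\chi(x_{1},y)a_{1}b,\ x_{1}y)$ with $(a_{1},x_{1})=g_{2}^{-1}g_{1}$, and the overlap of the two chart domains is non-empty only when $x_{1}\in VV$, which together with $y\in V$ and $x_{1}y\in V$ forces $(x_{1},y)\in VV\times V\subseteq W$; hence the transitions are smooth, $E$ becomes a Hausdorff smooth manifold, and every left translation is tautologically a diffeomorphism. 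In the chart $\phi_{e}$ inversion is $(a,x)\mapsto(\chi(x,x^{-1})^{-1}a^{-1},x^{-1})$ and multiplication is $(a,x)(b,y)=(\chi(x,y)ab,xy)$, both smooth near $e$; so $A\times V$ is a local Lie group.

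The hard part will be upgrading this to \emph{global} smoothness of the group operations: a cocycle smooth merely near $(e,e)$ has to be shown to make multiplication, inversion \emph{and conjugation} smooth on all of $E$. I would do this by homogeneity together with connectedness of $G$. Writing $c_{g}$ for conjugation by $g$, one computes that for $\pi(g)=z$ the conjugate of a small $(a,x)$ is $c_{g}(a,x)=\bigl(\chi(z^{-1}x,z)\chi(z^{-1},x)\chi(z,z^{-1})^{-1}a,\ z^{-1}xz\bigr)$, and when $z\in V$ and $x$ is small all three arguments of $\chi$ lie in $VV\times V\subseteq W$; hence conjugation by elements of the generating set $A\times V$ is smooth near $e$. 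Since $G$ is connected, $G=\langle V\rangle$ and $\ker\pi=\imath(A)\subseteq A\times V$, so $E=\langle A\times V\rangle$, and composing shows $c_{g}$ is smooth near $e$ for \emph{every} $g\in E$; therefore every right translation (which in the translated charts $\phi_{g_{0}},\phi_{g_{0}g}$ is exactly a conjugation) is a diffeomorphism, and the identities $m(g_{1}p,g_{2}q)=L_{g_{1}g_{2}}\bigl(c_{g_{2}}(p)\,q\bigr)$ and $(gp)^{-1}=L_{g^{-1}}\bigl(c_{g^{-1}}(p^{-1})\bigr)$ exhibit multiplication and inversion as composites of smooth maps. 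Thus $E$ is a Lie group; $\imath$ and $\pi$ are smooth and $\pi$ is a submersion because the $\phi_{g}$ trivialise it, so the extension is smooth. That the two constructions are mutually inverse up to equivalence of extensions — which is the precise meaning of "can be described by such a cocycle" — is then a bookkeeping check in the style of Lemma \ref{trivext}.

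Finally, for the non-connected Banach case the atlas-by-translates still makes $E$ a Banach manifold with smooth left translations; what is lost is only the identity $E=\langle A\times V\rangle$. I would therefore first apply the connected case to the identity components $E^{0}\to G^{0}$, and then extend the Lie group structure over the (discrete) set of components of $E$, each of which is a coset of $E^{0}$; the sole extra point is smoothness near $e$ of conjugation by a system of coset representatives, which in the Banach category is verified directly (cf.\ \cite{Scho}). This is exactly where, and why, connectedness is needed in the general infinite-dimensional statement, and it is the step I expect to require the most care to state cleanly.
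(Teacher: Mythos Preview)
The paper does not prove this proposition at all: it simply states ``For the proof we refer to \cite{Ne} Prop.~4.2 and \cite{TW87} Prop.~3.11.'' So there is no in-paper argument to compare against; your sketch is in fact more than the paper provides.

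Your outline is essentially the standard construction found in those references (particularly Neeb): build a single chart $A\times V$ near the identity from the locally smooth cocycle, transport it by left translations, verify smoothness of transitions, then bootstrap global smoothness of the group operations from local smoothness plus connectedness via the conjugation argument $E=\langle A\times V\rangle$. The logic is sound. Two small points: your neighbourhood condition $VV\times V\subseteq W$ is a bit too tight for all the cocycle evaluations that actually occur (in the conjugation and inversion formulas you need arguments like $(z,x)$, $(zx,z^{-1})$, $(z,z^{-1})$ simultaneously), so in practice one takes something like $V^{4}\times V^{4}\subseteq W$; and your handling of the non-connected Banach case is the one place where the argument is genuinely incomplete --- ``verified directly'' hides the real content, which is that in the Banach category one has automatic-smoothness results (continuous homomorphisms between Banach Lie groups are smooth, cf.\ the paper's Appendix~A.2) that let one conclude smoothness of conjugation by arbitrary elements without the generating-set argument. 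You correctly flag this as the delicate step.
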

\noindent For the proof we refer to \cite{Ne} Prop. 4.2 and \cite{TW87} Prop. 3.11 .
\newpage

\section{Central Extensions of Lie Algebras}

In the following, we assume that  $G$ is a \textit{locally exponential} Lie group (finite- oder infinite-dimensional) and that the assignment  $G \to \mathrm{Lie}(G)$ of a Lie group to its Lie algebra is \textit{functorial} in the following sense:
If $G_1, G_2$ are Lie groups and $\varphi: G_1 \to G_2$ a Lie group homomorphism, then their exists a unique Lie algebra homomorphism $\mathrm{Lie}(\varphi) = \dot{\varphi}$ such that the following diagram commutes:

\begin{equation}
\begin{xy}
  \xymatrix{
G_1  \ar[r]^ \varphi & G_2    \\
   \mathrm{Lie}(G_1)\ar[u]^{\exp} \ar[r]^ {\mathrm{Lie}(\varphi)}         &  \mathrm{Lie}(G_2) \ar[u]_{\exp} 
  }
\end{xy}
 \end{equation}

\noindent Usually, we'd like to define a Lie group homomorphism as a continuous homomorphism between Lie groups. However, for some ``exotic'' examples, it can be necessary to explicitly demand differentiability of the homomorphism at the identity, in order to get the functorial property defined above.


\begin{Definition}[Central Extension of Lie Algebras]\label{DefinitionAlgebraextension}
\mbox{}\\
Let $\mathfrak{a}$ be an abelian Lie algebra and $\mathfrak{g}$ a  Lie algebra over $\mathbb{R}$ or $\mathbb{C}$ (the dimensions may be infinite). A \emph{central extension} of $\mathfrak{g}$
by $\mathfrak{a}$ is an exact sequence of  Lie algebra homomorphisms 
\[
0\longrightarrow \mathfrak{a}\stackrel
{\mathfrak{i} } \longrightarrow \mathfrak{h}\stackrel
{\mathfrak{\pi} }{\longrightarrow }\mathfrak{g}\longrightarrow 0
\]
s.t. $\mathfrak{i(a)} \subset  \mathfrak{h}$ is central in $\mathfrak{h}$, i.e. if $\left[
\mathfrak{i}(X),Y \right] = 0 $ for all $X \in \mathfrak{a}$ and $Y \in
\mathfrak{h}$.
\end{Definition}

\begin{Proposition}[From Lie Groups to Lie Algebras]\label{LAfromLG}
\mbox{}\\
Let $A$, $E$ and $G$ finite-dimensional Lie groups and
\[
 1 \longrightarrow   A  \stackrel i \longrightarrow   E  \stackrel{\pi}
{\longrightarrow }  G  \longrightarrow   1 
\] 
a central extension of Lie groups. Then 
\begin{equation}\label{LAfromLG2}
0 \longrightarrow  \mathrm{Lie}(A)  \stackrel {\dot \imath} \longrightarrow   
\mathrm{Lie}(E)  \stackrel{\dot \pi}{\longrightarrow }  \mathrm{Lie}(G) 
\longrightarrow   0
\end{equation}
is a central extension of the corresponding Lie algebras.
\end{Proposition}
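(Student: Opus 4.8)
The plan is to apply the Lie functor of the previous section termwise to the sequence $1\to A\xrightarrow{i}E\xrightarrow{\pi}G\to 1$ and then verify the three requirements in Definition \ref{DefinitionAlgebraextension}: that \eqref{LAfromLG2} is exact, that $\dot\imath(\mathrm{Lie}(A))$ is central in $\mathrm{Lie}(E)$, and (trivially) that $\mathrm{Lie}(A)$ is abelian. Since $A$, $E$, $G$ are finite dimensional I may freely use the closed subgroup theorem, the first isomorphism theorem for Lie groups, and the fact that a Lie group homomorphism which is injective with closed image (resp.\ surjective) is an immersion (resp.\ submersion) at the identity, together with the functoriality $\mathrm{Lie}(\varphi\circ\psi)=\mathrm{Lie}(\varphi)\circ\mathrm{Lie}(\psi)$ encoded in the exponential diagram. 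Setting $\dot\imath:=\mathrm{Lie}(i)$ and $\dot\pi:=\mathrm{Lie}(\pi)$, the fact that $\pi\circ i$ is the constant homomorphism $a\mapsto 1_G$ gives $\dot\pi\circ\dot\imath=\mathrm{Lie}(\pi\circ i)=0$, so \eqref{LAfromLG2} is at least a complex; and $\mathrm{Lie}(A)$ is abelian because $A$ is.

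For exactness I would argue as follows. The map $i$ is a closed embedding, since $i(A)=\ker\pi$ is closed in $E$ and hence an embedded Lie subgroup with $i\colon A\to i(A)$ an isomorphism of Lie groups; therefore $\dot\imath$ is injective and $\im\dot\imath=\mathrm{Lie}(i(A))=\mathrm{Lie}(\ker\pi)$ as a subalgebra of $\mathrm{Lie}(E)$. The map $\pi$ is surjective, so by the first isomorphism theorem $E/\ker\pi\cong G$ as Lie groups and $\pi$ factors through the quotient submersion $E\to E/\ker\pi$; hence $\dot\pi$ is surjective and $\ker\dot\pi=\mathrm{Lie}(\ker\pi)$. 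Combining these, $\ker\dot\pi=\mathrm{Lie}(\ker\pi)=\im\dot\imath$, which is exactness at $\mathrm{Lie}(E)$; together with injectivity of $\dot\imath$ and surjectivity of $\dot\pi$ this is the exactness of \eqref{LAfromLG2}. (Alternatively one can bypass the identity $\ker\dot\pi=\mathrm{Lie}(\ker\pi)$ by a dimension count: $\pi\colon E\to G$ is a fibre bundle with fibre $A$, so $\dim\mathrm{Lie}(E)=\dim\mathrm{Lie}(A)+\dim\mathrm{Lie}(G)$, and a complex of finite-dimensional vector spaces with injective left map and surjective right map obeying this dimension relation is automatically exact.)

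For centrality, fix $X\in\mathrm{Lie}(A)$. By the exponential diagram, $\exp_E\!\big(t\,\dot\imath(X)\big)=i\big(\exp_A(tX)\big)\in i(A)$ for all $t\in\IR$, and $i(A)$ lies in the center of $E$. Hence for every $g\in E$ we have $g\,\exp_E(t\,\dot\imath(X))\,g^{-1}=\exp_E(t\,\dot\imath(X))$, and differentiating at $t=0$ yields $\mathrm{Ad}(g)\,\dot\imath(X)=\dot\imath(X)$ for all $g\in E$. Substituting $g=\exp_E(sY)$ for an arbitrary $Y\in\mathrm{Lie}(E)$ and differentiating at $s=0$ gives $\mathrm{ad}(Y)\,\dot\imath(X)=[Y,\dot\imath(X)]=0$. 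Thus $\dot\imath(\mathrm{Lie}(A))$ is central in $\mathrm{Lie}(E)$, which completes the verification that \eqref{LAfromLG2} is a central extension.

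I expect the only genuinely non-formal step to be the exactness in the middle, i.e.\ the coincidence $\ker(\mathrm{Lie}\,\pi)=\mathrm{Lie}(\ker\pi)$ (equivalently the dimension count); everything else is bookkeeping with the functor $\mathrm{Lie}$ and with $\exp$. It is precisely this step that breaks down for the infinite-dimensional groups of real interest in this work, where kernels of homomorphisms need not be well-behaved Lie subgroups and surjective homomorphisms need not be submersions, and this is why the proposition is stated only in the finite-dimensional case.
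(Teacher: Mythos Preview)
Your argument is correct and complete. The paper, however, states this proposition without proof and immediately moves on to discuss sections $\beta:\mathfrak{g}\to\mathfrak{h}$ of a Lie algebra central extension; the result is treated as a standard fact from Lie theory, so there is nothing to compare against. Your proof supplies exactly the details one would expect: functoriality of $\mathrm{Lie}$ gives the complex, the closed subgroup and first isomorphism theorems give exactness (with the dimension count as a clean alternative), and differentiating the centrality condition through $\exp$ yields centrality at the Lie algebra level. Your closing remark about why the middle exactness step is the delicate one in infinite dimensions is also apt and matches the spirit of the paper's later discussion.
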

\noindent Now, for a central extension of a Lie algebras \[
0\longrightarrow \mathfrak{a}\stackrel
{\mathfrak{i} } \longrightarrow \mathfrak{h}\stackrel
{\pi }{\longrightarrow }\mathfrak{g}\longrightarrow 0
\]
we can always define a linear map $\beta:\; \mathfrak{g} \rightarrow \mathfrak{h}$ satisfying $\pi \circ \beta = Id$.\\
Analogously to the central extensions of groups, the central extension of Lie algebras is equivalent to the trivial extension $\mathfrak{h} = \mathfrak{g} \oplus \mathfrak{a}$ if and only if $\beta$ can be chosen to be a Lie algebra homomorphism, i.e. such that $[\beta(X), \beta(Y)] = \beta([X,Y]), \; \forall X, Y \in \mathfrak{g}$. In general, how $\beta$ \textit{fails} to be a Lie algebra homomorphism is expressed by the skew-symmetric map \begin{equation}\label{Theta} c(X,Y) := [\beta(X), \beta(Y)] - \beta([X,Y]). \end{equation}
$c: \, \mathfrak{g} \times \mathfrak{g} \rightarrow \mathfrak{a}$ has the following properties: 
\begin{equation} \begin{split} \label{liealgebracocycle}  & i)\;\; c \; \text{is bilinear and skew-symmetric} \\ &ii)\;  c \left( X,\left[ Y,Z\right] \right) + c \left( Y,\left[ Z,X\right] \right) + c \left( Z,\left[ X,Y\right] \right) =0. \end{split} \end{equation} 

\noindent In particular, if the central extension of Lie algebras comes from a central extension of Lie groups, any differentiable local section $\tau : \; G \rightarrow E$ as in \eqref{section} defines a corresponding Lie algebra section by $\beta := D_e\tau$ (under identification of the Lie algebras with the tangent space of the corresponding Lie groups at the identity $e$). But as $\tau$ fails to be a Lie group homomorphism, $D{\tau}$ will fail to be a homomorphism of the Lie algebras.

\begin{Definition}[Lie Algebra Cocycles]
\mbox{}\\
Let $\mathfrak{a}, \mathfrak{g}$ be two Lie algebras, $\mathfrak{a}$ abelian.  
A map $c:\; \mathfrak{g} \times \mathfrak{g} \rightarrow \mathfrak{a}$ satisfying the conditions \eqref{liealgebracocycle} is called a \emph{Lie algebra 2-cocycle} or simply a cocycle. 
\end{Definition}

\begin{Proposition}[Computation of Lie Algebra Cocycles]
\mbox{}\\
If $\chi$ is a (local) Lie group cocycle coming from the (local) section $\tau$, then the corresponding Lie algebra cocycle coming from $\dot{\tau}$ can be computed from $\chi$ as
\begin{equation}\label{cocycleformula} c(X , Y)\, = \, \frac{\partial}{\partial t} \frac{\partial}{\partial s} \Bigl\vert_{t=s=0}\, \chi(e^{sX} , e^{tY})\, - \, \frac{\partial}{\partial t} \frac{\partial}{\partial s} \Bigl\vert_{t=s=0}\, \chi(e^{tY} , e^{sX}) \end{equation}
\end{Proposition}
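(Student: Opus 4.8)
The plan is to compute the group commutator of the curves $s\mapsto\tau(e^{sX})$ and $t\mapsto\tau(e^{tY})$ inside $E$, rewrite it purely in terms of $\chi$ and $\tau$ via the cocycle relation $\tau(a)\tau(b)=\chi(a,b)\tau(ab)$, and then read off $c(X,Y)=[\beta(X),\beta(Y)]-\beta([X,Y])$ from \eqref{Theta} by differentiating twice at the origin, where $\beta=\dot\tau=D_e\tau$. Two standard facts about (possibly infinite-dimensional, locally exponential) Lie groups will do the analytic work: first, for smooth curves $\gamma,\delta$ through $e$ one has $\partial_s\partial_t\big|_0\bigl(\gamma(s)\delta(t)\gamma(s)^{-1}\delta(t)^{-1}\bigr)=[\gamma'(0),\delta'(0)]$; second, any smooth map $F(s,t)$ into a Lie group with $F(s,0)\equiv F(0,t)\equiv e$ has a coordinate-free mixed second derivative $\partial_s\partial_t\big|_0F$ lying in the Lie algebra, and for a product of several such maps this mixed derivative is the sum of the individual ones, since the Leibniz cross terms are products of first derivatives along the axes, which vanish. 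Throughout I identify $\mathfrak a=\mathrm{Lie}(A)$ with $\dot\imath(\mathfrak a)\subset\mathfrak h=\mathrm{Lie}(E)$. One preliminary remark is used repeatedly: putting $y=z=e$ and then $x=y=e$ in \eqref{chi2}, together with $\chi(1,1)=1$, forces $\chi(g,e)=\chi(e,g)=1$ for all $g$, so $\chi(e^{sX},\cdot)$ and $\chi(\cdot,e^{tY})$ vanish on the coordinate axes, and $\tau(e)=e$ by \eqref{section}.

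Introduce $g(s,t):=e^{sX}e^{tY}$, $h(s,t):=e^{tY}e^{sX}$ and $k(s,t):=g(s,t)h(s,t)^{-1}=e^{sX}e^{tY}e^{-sX}e^{-tY}$, the group commutator in $G$; note $k$ vanishes on the axes and $\partial_s\partial_t\big|_0 k=[X,Y]$. Using $\tau(a)\tau(b)=\chi(a,b)\tau(ab)$ and the centrality of $A$,
\[
\tau(e^{sX})\tau(e^{tY})\tau(e^{sX})^{-1}\tau(e^{tY})^{-1}
=\chi(e^{sX},e^{tY})\,\chi(e^{tY},e^{sX})^{-1}\,\tau(g)\,\tau(h)^{-1}.
\]
Since $g=kh$, the same relation gives $\tau(g)\tau(h)^{-1}=\chi(k,h)^{-1}\tau(k)$, so the left-hand side equals the product $K(s,t)\cdot\chi(k,h)^{-1}\cdot\tau(k)$, where $K(s,t):=\chi(e^{sX},e^{tY})\chi(e^{tY},e^{sX})^{-1}$, and each of the three factors vanishes on the coordinate axes (the first by the preliminary remark, the second and third because $k$ vanishes there and $\chi(e,\cdot)\equiv1$, $\tau(e)=e$).

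Now take $\partial_s\partial_t\big|_0$ of both sides. The left-hand side gives $[\beta(X),\beta(Y)]$ by the bracket characterization. On the right, the Leibniz rule splits the derivative into three contributions. The term $\partial_s\partial_t\big|_0\,\chi(k(s,t),h(s,t))^{-1}$ vanishes: since $\chi(e,\cdot)\equiv\chi(\cdot,e)\equiv1$, the lowest-order non-constant part of $\chi$ near $(e,e)$ is bilinear in its two arguments, and feeding in $k(s,t)=O(st)$ makes $\chi(k,h)-1$ of order $\ge 3$ in $(s,t)$. The term $\partial_s\partial_t\big|_0\,\tau(k)$ equals $D_e\tau\bigl(\partial_s\partial_t\big|_0 k\bigr)=\beta([X,Y])$ by the chain rule, the second-order term $D_e^2\tau(\partial_s k|_0,\partial_t k|_0)$ dropping out since $\partial_s k$ and $\partial_t k$ vanish at the origin. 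Finally, because inversion in the abelian group $A$ has differential $-\mathrm{Id}$ at $e$ and both factors of $K$ vanish on the axes,
\[
\partial_s\partial_t\Big|_0 K=\partial_s\partial_t\Big|_0\chi(e^{sX},e^{tY})-\partial_s\partial_t\Big|_0\chi(e^{tY},e^{sX}),
\]
which is precisely the right-hand side of \eqref{cocycleformula}. Collecting the three pieces, $[\beta(X),\beta(Y)]=\bigl(\text{RHS of }\eqref{cocycleformula}\bigr)+\beta([X,Y])$, and by \eqref{Theta} this is the assertion.

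The main obstacle is not conceptual but a matter of making the differential calculus rigorous in the (possibly infinite-dimensional) setting: that mixed second derivatives of maps vanishing on the axes are well-defined elements of the Lie algebra, that the Leibniz rule, the chain rule, and the group-commutator formula for the bracket hold there, and — the one genuinely computational point — the order-of-vanishing estimate $\chi(k(s,t),h(s,t))-1=O(|s|^2|t|+|s||t|^2)$, which relies on $\chi(e,\cdot)\equiv\chi(\cdot,e)\equiv1$ (so that $\chi$ carries no pure first- or pure second-argument terms) together with $k(s,t)=O(st)$. All of these rest only on the smoothness of multiplication and inversion assumed for $E$ and $G$.
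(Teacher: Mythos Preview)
Your argument is correct, and it takes a genuinely different route from the paper's. The paper simply substitutes $\chi(a,b)=\tau(a)\tau(b)\tau(ab)^{-1}$ into the right-hand side of \eqref{cocycleformula} and differentiates each factor directly, writing the mixed derivative of $\tau(e^{sX}e^{tY})^{-1}$ as ``$-\dot\tau(XY)$'' and then using linearity of $\dot\tau$ to get $\dot\tau(XY)-\dot\tau(YX)=\dot\tau([X,Y])$; this tacitly relies on the ambient Banach-algebra embedding (so that the associative product $XY$ makes sense) and on the symmetric $D^2_e\tau$-contributions cancelling in the antisymmetric combination, neither of which is spelled out. Your group-commutator approach avoids the associative product altogether: by rewriting $\tau(e^{sX})\tau(e^{tY})\tau(e^{sX})^{-1}\tau(e^{tY})^{-1}$ as $K(s,t)\,\chi(k,h)^{-1}\,\tau(k)$ you reduce everything to mixed derivatives of maps that are trivial on the coordinate axes, so the Leibniz cross terms drop out cleanly and the order-of-vanishing estimate for $\chi(k,h)$ is transparent. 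The paper's computation is shorter and perfectly adequate in the linear Lie group setting it works in; your argument is more careful about the analytic bookkeeping and goes through unchanged for abstract locally exponential Lie groups where no ambient product is available.
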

\begin{proof}: Here, we write $\dot{\tau}$ for the Lie algebra map, corresponding to $D_e\tau$. We compute:
\begin{align*} & \frac{\partial}{\partial t} \frac{\partial}{\partial s} \Bigl\vert_{t=s=0}\, \chi(e^{sX} , e^{tY})\, - \, \frac{\partial}{\partial t} \frac{\partial}{\partial s} \Bigl\vert_{t=s=0}\, \chi(e^{tY} , e^{sX})\\[1.3ex]
= \; & \frac{\partial}{\partial t} \frac{\partial}{\partial s} \Bigl\vert_{t=s=0}\, \tau(e^{sX})\tau(e^{tY})\tau(e^{sX}e^{tY})^{-1}\, - \,\Bigl\vert_{t=s=0}\, \tau(e^{tY})\tau(e^{sX})\tau(e^{tY}e^{sX})^{-1}\\[1.3ex]
= \;& \dot{\tau}(X)\dot{\tau}(Y)\,-\, \dot{\tau}(XY)\, - \, \dot{\tau}(Y)\dot{\tau}(X)\,+\,\dot{\tau}(YX)\\[1.3ex]
= \; & \dot{\tau}(X)\dot{\tau}(Y) - \dot{\tau}(Y)\dot{\tau}(X) - (\,\dot{\tau}(XY) - \dot{\tau}(YX)\, )\\[1.3ex]
= \; & [\dot{\tau}(X) , \dot{\tau}(Y)] - \dot{\tau}([X , Y])\, = c(X, Y). \end{align*}
\end{proof}
\noindent Just as for groups, there is a correspondence between central extensions of Lie algebras and cocycles which is 1-to-1 modulo trivial extensions/cocycles.
 Given bilinear form b $c:\; \mathfrak{g} \times \mathfrak{g} \rightarrow \mathfrak{a}$, consider the vector space $\mathfrak{h} := \mathfrak{g} \oplus \mathfrak{a}$ and define a bilinear map $[\cdot , \cdot]_{c}$ on $\mathfrak{h}$ by
\begin{equation}\label{Lieklammer} [X_1 \oplus Y_1 , X_2 \oplus Y_2]_{c} := [X_1 , X_2]_{\mathfrak{g}} + c(X_1,X_2) \end{equation}
for  $X_1, X_2 \in \mathfrak{g}, Y_1,Y_2 \in  \mathfrak{a}$.
It is straight forward to check that $[\cdot , \cdot]_{c}$ is a Lie bracket on $\mathfrak{h}$, if and only if $c$ is a cocycle.
In this case, $\mathfrak{h}$ becomes a Lie algebra and projection onto the first component makes 
 \[
0\longrightarrow \mathfrak{a}
 \longrightarrow \mathfrak{h}\stackrel
{pr_1 }{\longrightarrow }\mathfrak{g}\longrightarrow 0
\]
a central extension of $\mathfrak{g}$ by $\mathfrak{a}$. Conversely, if $c$ comes from a central extension and a linear map $\beta:\; \mathfrak{g} \rightarrow \mathfrak{h}$, the algebra $\mathfrak{h}$ is isomorphic to $\mathfrak{g\oplus a}$ as a linear space, by the isomorphism
\[
\mathfrak{F}\; : \mathfrak{g} \times \mathfrak{a} \to \mathfrak{h}, \quad
X \oplus Y = (X,Y) \mapsto \beta(X) + Y\,.
\]
And equipped with the Lie bracket \eqref{Lieklammer}, $\mathfrak{g\oplus a}$ becomes a Lie algebra and $\mathfrak{F}$ an isomorphism of Lie algebras:
\begin{align*} [\mathfrak{F}(X_1,Y_1), \mathfrak{F}(X_2,Y_2)] &= [\beta(X_1) + Y_1, \beta(X_2) + Y_2] = [\beta(X_1), \beta(X_2)] \\ &= \beta([X_1, X_2]) + c(X_1, X_2) = \mathfrak{F} ([X_1, X_2]) + c(X_1, X_2)) \\ & = \mathfrak{F}\; ([(X_1,Y_1),(X_2,Y_2)]_{c}). \end{align*}
We summarize the results in the following lemma:
\vspace*{-2mm}
\begin{Lemma}[Central Extensions and Lie Algebra Cocycles]
\mbox{}\\
Every central extension of Lie algebras comes from a cocycle.\\
Conversely, every cocycle $c:\; \mathfrak{g} \times \mathfrak{g} \rightarrow \mathfrak{a}$ induces a central extension of $\mathfrak{g}$ by $\mathfrak{a}$ as above.
\end{Lemma}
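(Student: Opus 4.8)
The plan is to assemble the two constructions carried out in the paragraphs preceding the lemma into a single self-contained argument, one direction for each half of the statement. For the first assertion, I would begin with an arbitrary central extension $0 \to \mathfrak{a} \xrightarrow{\mathfrak{i}} \mathfrak{h} \xrightarrow{\pi} \mathfrak{g} \to 0$ and use the fact that, at the level of vector spaces, the surjection $\pi$ always admits a linear right inverse $\beta : \mathfrak{g} \to \mathfrak{h}$ with $\pi \circ \beta = \mathrm{Id}_{\mathfrak{g}}$; this is the point at which the Lie algebra situation is genuinely simpler than the group situation, since no continuity or smoothness obstruction can arise. I would then define $c(X,Y) := [\beta(X),\beta(Y)] - \beta([X,Y])$ as in \eqref{Theta} and check, in order: that $c$ takes values in $\mathfrak{i}(\mathfrak{a}) = \ker\pi$, because $\pi$ is a Lie algebra homomorphism and $\pi(\beta X) = X$, whence $\pi(c(X,Y)) = [X,Y] - [X,Y] = 0$; that $c$ is bilinear and skew-symmetric, which is immediate from bilinearity and antisymmetry of the two brackets together with linearity of $\beta$; and that $c$ satisfies property $ii)$ of \eqref{liealgebracocycle}, obtained by expanding the cyclic sum $[[\beta X,\beta Y],\beta Z] + \text{cyclic}$, which vanishes by the Jacobi identity in $\mathfrak{h}$, after substituting $[\beta X,\beta Y] = \beta([X,Y]) + c(X,Y)$ and discarding the terms $[c(X,Y),\beta Z]$, which are zero because $\mathfrak{i}(\mathfrak{a})$ is central. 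Thus every central extension gives rise to a cocycle.

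For the converse I would take an arbitrary cocycle $c : \mathfrak{g}\times\mathfrak{g}\to\mathfrak{a}$, set $\mathfrak{h} := \mathfrak{g}\oplus\mathfrak{a}$ as a vector space, and equip it with the bracket $[\cdot,\cdot]_c$ of \eqref{Lieklammer}. The verification that $[\cdot,\cdot]_c$ is a Lie bracket is routine: bilinearity and skew-symmetry follow at once from property $i)$ of a cocycle, and the Jacobi identity for $[\cdot,\cdot]_c$ reduces, after cancelling the $\mathfrak{g}$-component (which satisfies Jacobi inside $\mathfrak{g}$), precisely to property $ii)$. Once $\mathfrak{h}$ is a Lie algebra, one observes that $\{0\}\oplus\mathfrak{a}$ is an ideal on which the bracket vanishes identically, hence central, and that $\mathrm{pr}_1 : \mathfrak{h}\to\mathfrak{g}$ is a surjective Lie algebra homomorphism with kernel $\{0\}\oplus\mathfrak{a}$; so $0\to\mathfrak{a}\to\mathfrak{h}\xrightarrow{\mathrm{pr}_1}\mathfrak{g}\to 0$ is a central extension, and the cocycle it produces through the obvious linear section $X\mapsto X\oplus 0$ is $c$ itself.

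Although the lemma as stated only asserts the existence of these two passages, I would close by noting the refinement that makes the correspondence well behaved: replacing $\beta$ by another linear section $\beta' = \beta + \gamma$ with $\gamma : \mathfrak{g}\to\mathfrak{i}(\mathfrak{a})$ linear changes $c$ only by $c'(X,Y) - c(X,Y) = -\gamma([X,Y])$ — the cross terms $[\beta X,\gamma Y]$, $[\gamma X,\beta Y]$, $[\gamma X,\gamma Y]$ all vanishing by centrality — so different sections yield cohomologous cocycles, and conversely the isomorphism $\mathfrak{F}$ displayed just above the lemma identifies the extension built from $c$ with the original one. In truth there is no hard obstacle here; the only non-formal point is bookkeeping where the centrality of $\mathfrak{a}$ is used (the translation of the Jacobi identity into the cocycle identity, in both directions) and keeping in mind that the section is required to be linear only, not a Lie algebra homomorphism.
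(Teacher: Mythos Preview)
Your proposal is correct and follows exactly the approach of the paper: the lemma is stated there as a summary of the constructions carried out in the preceding paragraphs, and you have faithfully assembled those pieces --- the linear section $\beta$, the verification that $c$ lands in $\ker\pi$ and satisfies \eqref{liealgebracocycle}, and the converse construction via \eqref{Lieklammer} --- filling in the routine checks that the paper leaves implicit. Your closing remark on cohomologous cocycles even anticipates the discussion that follows the lemma.
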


\noindent Note that in all of these construction a particular choice of $\beta$ was involved. 
What if we choose a different linear map $\beta ':\; \mathfrak{g} \rightarrow \mathfrak{h}$ with $\dot{\pi} \circ \beta '= Id$? Then, since $\pi \circ (\beta' - \beta) = 0$,  the difference of $\beta$ and $\beta'$ is a linear map with values in $\mathfrak{a}$ (identified with the corresponding subalgebra in $\mathfrak{h}$), i.e.
\begin{equation*} \beta' - \beta = \mu :\mathfrak{g} \rightarrow \mathfrak{a} \cong \mathfrak{i} (\mathfrak{a}) \subset \mathfrak{h} \end{equation*}
For the corresponding cocycles, this means
\begin{align*} c'(X,Y)  :&= [\beta'(X), \beta'(Y)] - \beta'([X,Y])\\ 
& = [\beta(X) + \mu(Y), \beta(Y) + \mu(Y)] - \beta([X,Y]) - \mu([X,Y])\\
& = [\beta(X), \beta(Y)] - \beta([X,Y])  - \mu([X,Y])\\
& = c(X,Y) - \mu([X,Y])
\end{align*} 
since the image of $\mu$ is central in $\mathfrak{h}$. We deduce:

\begin{Theorem}[Triviality of Lie Algebra Extensions]\label{CEalgebraTriviality}
\mbox{}\\
Let $c$ be a Lie algebra cocycle for the central extension $\mathfrak{h}$ of $\mathfrak{g}$.
The following are equivalent:
\begin{enumerate}[i)]
\item There exists a section $\beta: \mathfrak{g} \rightarrow \mathfrak{h} $ that is also a Lie algebra homomorphism, i.e. a splitting-map for the central extension.
\item The cocycle $c$ defined by $\beta$ vanishes. 
\item There exists a linear map $\mu :\mathfrak{g} \rightarrow \mathfrak{a}$ with $c(X,Y) \equiv \mu([X,Y])$.
\item The central extension is trivial, i.e. $\mathfrak{h} \cong \mathfrak{g} \oplus \mathfrak{a}$ as Lie algebras.
\end{enumerate}
In particular, if $\tau: G \rightarrow H$ is a splitting map for a central extension of Lie groups, $\beta := \dot{\tau}$ is a splitting map for the corresponding central extension of Lie algebras.
\end{Theorem}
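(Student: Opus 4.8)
The plan is to close the cyclic chain i)$\Rightarrow$ii)$\Rightarrow$iii)$\Rightarrow$i) and then add the equivalence i)$\Leftrightarrow$iv); almost every computation needed is already available from the discussion preceding the theorem. For i)$\Rightarrow$ii): if the section $\beta$ is at the same time a Lie algebra homomorphism, then by the very definition \eqref{Theta} of the cocycle attached to $\beta$ we get $c(X,Y)=[\beta(X),\beta(Y)]-\beta([X,Y])=0$ for all $X,Y\in\mathfrak{g}$. For ii)$\Rightarrow$iii) it suffices to take $\mu\equiv 0$, which vacuously satisfies $c(X,Y)=\mu([X,Y])$.

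The substantive step is iii)$\Rightarrow$i). Given a linear $\mu:\mathfrak{g}\to\mathfrak{a}$ with $c(X,Y)=\mu([X,Y])$, I set $\beta':=\beta+\mathfrak{i}\circ\mu:\mathfrak{g}\to\mathfrak{h}$. Since $\pi\circ\mathfrak{i}=0$ by exactness, $\pi\circ\beta'=\pi\circ\beta=\mathrm{Id}_{\mathfrak{g}}$, so $\beta'$ is again a linear section; and by the identity $c'(X,Y)=c(X,Y)-\mu([X,Y])$ computed just above the theorem, the cocycle $c'$ attached to $\beta'$ vanishes identically. Applying \eqref{Theta} to $\beta'$, this says precisely that $[\beta'(X),\beta'(Y)]=\beta'([X,Y])$, i.e. $\beta'$ is a splitting map, which is i). This closes the first cycle, so i), ii), iii) are equivalent.

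It remains to fold in iv). For i)$\Rightarrow$iv): with $c\equiv 0$ (legitimate since i)$\Leftrightarrow$ii)), the linear isomorphism $\mathfrak{F}:\mathfrak{g}\oplus\mathfrak{a}\to\mathfrak{h}$, $(X,Y)\mapsto\beta(X)+\mathfrak{i}(Y)$, intertwines the direct-sum bracket with the bracket of $\mathfrak{h}$ — this is exactly the bracket computation that established $\mathfrak{F}$ above, now with the cocycle term absent — and it is compatible with the extension maps; hence $\mathfrak{h}$ is equivalent, as a central extension, to $\mathfrak{g}\oplus\mathfrak{a}$. Conversely, reading ``trivial'' in iv) as ``equivalent, as a central extension, to $\mathfrak{g}\oplus\mathfrak{a}$'', i.e. that there is a Lie algebra isomorphism $\varphi:\mathfrak{g}\oplus\mathfrak{a}\to\mathfrak{h}$ commuting with the structure maps, the composite $\beta:=\varphi\circ\iota_{\mathfrak{g}}$ with $\iota_{\mathfrak{g}}(X)=(X,0)$ is a Lie algebra homomorphism satisfying $\pi\circ\beta=\mathrm{pr}_1\circ\iota_{\mathfrak{g}}=\mathrm{Id}_{\mathfrak{g}}$, hence a splitting map: iv)$\Rightarrow$i). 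All four conditions are thus equivalent.

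For the ``in particular'' clause, let $\tau:G\to H$ be a Lie group splitting map for the extension $1\to A\to H\to G\to 1$, i.e. a homomorphism with $\pi\circ\tau=\mathrm{Id}_G$. By functoriality of $\mathrm{Lie}$ as fixed at the start of this section (see also Proposition \ref{LAfromLG}), $\beta:=\dot\tau=\mathrm{Lie}(\tau)$ is a Lie algebra homomorphism, and $\dot\pi\circ\beta=\mathrm{Lie}(\pi\circ\tau)=\mathrm{Lie}(\mathrm{Id}_G)=\mathrm{Id}_{\mathfrak{g}}$, so $\beta$ is a splitting map for the Lie algebra extension \eqref{LAfromLG2}, which is condition i). I do not anticipate a real obstacle here — the proof is bookkeeping with relations already in hand — but two points deserve care: ``trivial'' in iv) must be read as equivalence of central extensions (a bare Lie algebra isomorphism would not suffice for iv)$\Rightarrow$i)), and one must keep the identification $\mathfrak{a}\cong\mathfrak{i}(\mathfrak{a})\subset\mathfrak{h}$ straight when forming $\beta+\mathfrak{i}\circ\mu$ and $\mathfrak{F}$.
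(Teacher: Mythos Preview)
Your proof is correct and matches the paper's approach: the paper does not give a separate proof but states the theorem as a direct deduction (``We deduce:'') from the preceding computations --- the definition \eqref{Theta} of the cocycle, the change-of-section formula $c'(X,Y)=c(X,Y)-\mu([X,Y])$, and the isomorphism $\mathfrak{F}$ --- and you have simply organized that material into the explicit cycle i)$\Rightarrow$ii)$\Rightarrow$iii)$\Rightarrow$i) together with i)$\Leftrightarrow$iv). Your closing caveat that iv) must be read as equivalence \emph{of central extensions} (not merely of abstract Lie algebras) is a genuine and useful clarification of the statement.
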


\noindent These considerations lead us straight to the mathematical concept of cohomology groups.

\begin{Definition}[Second Cohomology Group]
\mbox{}\\
Let $\mathfrak{g}$ be a Lie algebra and $\mathfrak{a}$ an abelian Lie algebra.
\item Let $Z^2(\mathfrak{g}, \mathfrak{a})$ be the set of all 2-cocycles on $\mathfrak{g}$ with values in $\mathfrak{a}$. 
\item Let $B^2(\mathfrak{g}, \mathfrak{a}) = \lbrace c \in Z^2(\mathfrak{g}, \mathfrak{a})  \mid \exists \mu \in \mathrm{Hom}(\mathfrak{g}, \mathfrak{a}): c(X,Y) = \mu([X,Y]) \rbrace  $
\item Let $\mathrm{H}^2(\mathfrak{g}, \mathfrak{a}) := Z^2(\mathfrak{g}, \mathfrak{a}) \slash B^2(\mathfrak{g}, \mathfrak{a})$.
\item$\mathrm{H}^2(\mathfrak{g}, \mathfrak{a}) $ is the \emph{second cohomology group} of $\mathfrak{g}$ with values in $ \mathfrak{a}$. 
\end{Definition}

\begin{Theorem}[Central Extensions correspond to Cohomology Classes]\label{CohomologyExt}
\mbox{}\\
The cohomology group $\mathrm{H}^2(\mathfrak{g}, \mathfrak{a}) $ is in one-to-one correspondence with the set of equivalence classes of central extensions of $\mathfrak{g}$ by $\mathfrak{a}$.
\end{Theorem}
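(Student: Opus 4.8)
The plan is to set up explicit maps in both directions between $\mathrm{H}^2(\mathfrak{g},\mathfrak{a})$ and the set $\mathrm{Ext}(\mathfrak{g},\mathfrak{a})$ of equivalence classes of central extensions of $\mathfrak{g}$ by $\mathfrak{a}$, and then to verify that they are mutually inverse. To a cocycle $c\in Z^2(\mathfrak{g},\mathfrak{a})$ I associate the central extension $\mathfrak{h}_c := \mathfrak{g}\oplus\mathfrak{a}$ equipped with the bracket $[\cdot,\cdot]_c$ of \eqref{Lieklammer} and the projection $pr_1$ onto $\mathfrak{g}$; we already observed that $[\cdot,\cdot]_c$ is a Lie bracket precisely because $c$ satisfies \eqref{liealgebracocycle}. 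The first point to check is that this descends to cohomology classes: if $c' - c = \mu\circ[\cdot,\cdot]_{\mathfrak{g}}$ for some $\mu\in\mathrm{Hom}(\mathfrak{g},\mathfrak{a})$, i.e. $c$ and $c'$ differ by an element of $B^2(\mathfrak{g},\mathfrak{a})$, then the linear map $(X,Y)\mapsto (X, Y+\mu(X))$ is readily seen to be a Lie-algebra isomorphism $\mathfrak{h}_c\to\mathfrak{h}_{c'}$ which fixes the copy of $\mathfrak{a}$ and commutes with the projections onto $\mathfrak{g}$, hence an equivalence of central extensions. This gives a well-defined map $\Phi:\mathrm{H}^2(\mathfrak{g},\mathfrak{a})\to\mathrm{Ext}(\mathfrak{g},\mathfrak{a})$.

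In the other direction, given a central extension $0\to\mathfrak{a}\xrightarrow{\mathfrak{i}}\mathfrak{h}\xrightarrow{\pi}\mathfrak{g}\to 0$, I choose a linear section $\beta$ with $\pi\circ\beta=\mathrm{Id}$ and form the cocycle $c_\beta(X,Y)=[\beta(X),\beta(Y)]-\beta([X,Y])$ of \eqref{Theta}, which indeed takes values in $\mathfrak{i}(\mathfrak{a})$ because its image under $\pi$ vanishes. Two independence statements must be checked: first, that $[c_\beta]$ is independent of the choice of $\beta$ — this is exactly the computation preceding Theorem \ref{CEalgebraTriviality}, where replacing $\beta$ by $\beta'=\beta+\mu$ changes $c_\beta$ by the coboundary $-\mu\circ[\cdot,\cdot]_{\mathfrak{g}}$; and second, that equivalent extensions yield the same class — if $\varphi:\mathfrak{h}\to\mathfrak{h}'$ is an isomorphism of central extensions, then $\varphi\circ\beta$ is a linear section of $\mathfrak{h}'$, and since $\varphi$ is a Lie-algebra homomorphism restricting to the identity on $\mathfrak{a}$ one gets $c_{\varphi\circ\beta}=\varphi\circ c_\beta=c_\beta$. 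Together these produce a well-defined map $\Psi:\mathrm{Ext}(\mathfrak{g},\mathfrak{a})\to\mathrm{H}^2(\mathfrak{g},\mathfrak{a})$.

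It then remains to show $\Phi$ and $\Psi$ are inverse to each other. For $\Psi\circ\Phi=\mathrm{Id}$: on $\mathfrak{h}_c$ the obvious section $\beta_0(X)=(X,0)$ satisfies $c_{\beta_0}(X,Y)=[\beta_0(X),\beta_0(Y)]_c-\beta_0([X,Y])=(0,c(X,Y))$ by \eqref{Lieklammer}, which under $\mathfrak{i}$ is $c(X,Y)$, so $\Psi(\Phi([c]))=[c]$. For $\Phi\circ\Psi=\mathrm{Id}$: given an extension $\mathfrak{h}$ with chosen section $\beta$, the linear map $\mathfrak{F}:\mathfrak{g}\oplus\mathfrak{a}\to\mathfrak{h}$, $(X,Y)\mapsto\beta(X)+\mathfrak{i}(Y)$, is — by the computation displayed in the excerpt just before the Lemma on central extensions and Lie algebra cocycles — an isomorphism of Lie algebras carrying $[\cdot,\cdot]_{c_\beta}$ to the bracket of $\mathfrak{h}$, and it evidently intertwines the inclusions of $\mathfrak{a}$ and the projections onto $\mathfrak{g}$; hence $\mathfrak{h}_{c_\beta}$ is equivalent to $\mathfrak{h}$, i.e. $\Phi(\Psi([\mathfrak{h}]))=[\mathfrak{h}]$.

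All of this is purely algebraic, and each individual step is short; the main obstacle is organizational rather than conceptual. The delicate bookkeeping is keeping track of \emph{which} maps are required to intertwine the structure maps $\mathfrak{i}$ and $\pi$ in order to count as an equivalence of extensions, and making sure that \emph{both} independence statements in the construction of $\Psi$ are verified, since it is tempting to prove the independence of $\beta$ and silently assume the independence of the representative cocycle (or vice versa). Unlike the Lie-group version of this correspondence, no question of smoothness or continuity of sections enters here, so the proof does not meet any genuinely analytic difficulty.
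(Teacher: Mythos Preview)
Your proof is correct and follows the same approach as the paper: the paper does not give a separate proof of this theorem but treats it as the synthesis of the preceding discussion (the construction of $\mathfrak{h}_c$ from a cocycle, the cocycle $c_\beta$ from a section, the isomorphism $\mathfrak{F}$, and the computation that different $\beta$'s change $c$ by a coboundary). You have organized these ingredients into explicit maps $\Phi$ and $\Psi$ and supplied the verifications the paper leaves implicit---in particular, that cohomologous cocycles yield equivalent extensions via $(X,Y)\mapsto(X,Y+\mu(X))$, and that equivalent extensions yield the same cohomology class---so your write-up is in fact more complete than what the paper offers.
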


Let's summarize the insights we've got so far. We started with a Lie group $G$ and a projective unitary representation $\Gamma$ of $G$ on a projective Hilbert space $\mathbb{P}(\HH)$. In general, we cannot expect to be able to lift the projective representation to a proper representation on $\HH$. However, there exists a central extension $\widetilde{G}$ of $G$ by $\cu(1)$ that does have a unitary representation $\widetilde{\Gamma}$ on the Hilbert space, with $\hat{\gamma} \circ \widetilde{\Gamma} = \Gamma$ \ref{liftingrep}. Conversely, any prescription for lifting the projective representation $\Gamma$ to $\cu(\HH)$, i.e. any choice of phases for the lift, corresponds to a section $\tau$ of $G$ in $\widetilde{G}$. In particular, for any such section, $\widetilde{\Gamma} \circ \tau$ defines a lift as desired. This will \textit{not} be a representation, though, unless $\tau$ is a continuous homomorphism. The way in which this fails to be a representation, or equivalently, in which the section fails to be a homomorphism, is encoded in the corresponding \textit{cocycle}. We have deduced that the following statements are equivalent:

\begin{enumerate}
\item There exists unitary representation $\rho:\, G \rightarrow \cu(\HH)$ with $\hat{\gamma} \circ \rho = \Gamma$.
\item There exists a continuous section $\sigma:\, G \rightarrow \widetilde{G}$ which is also a homomorphism [Cor.\ref{Corliftingrep}].
\item The central extension $1 \longrightarrow \cu(1) \rightarrow \widetilde{G} \rightarrow G \rightarrow 1$ is trivial  i.e. splits by a continuous section [Lem. \ref{trivext} ].
\item Any Lie group cocycle corresponding to the central extension above is trivial i.e. corresponds to $0$ in $\mathrm{H}^2(G,\cu(1))$ [Thm. \ref{CohomologyExt} ].
\end{enumerate}

\noindent The ``infinitesimal version'' of this, so to speak, is the corresponding central extension of Lie algebras \eqref{LAfromLG} and its Lie algebra cocycle \eqref{Theta}. We have explained why it is more convenient to work with Lie algebras in general, but we still have to clarify how exactly this infinitesimal version is related to our original problem.
So far it is clear that if $\sigma: G \rightarrow E$ is a section of $G$ in $E$ which is also a Lie group homomorphism (i.e. a splitting map for the central extension of Lie groups), then $ \dot{\sigma}= \mathrm{Lie}(\sigma)$ is a continuous Lie algebra homomorphism with $\dot{\pi} \circ \beta = Id$ and thus a splitting map, defining a trivialization for the corresponding central extension of Lie algebras. Therefore, triviality of the Lie algebra extensions is a necessary condition for the triviality of the Lie group extension. The converse is not true in general, but at least under rather mild assumptions:
\begin{Proposition}[Lie Group Homomorphisms from Lie Algebra Homomorphisms]
\mbox{}\\
Let $G_1 , G_2$ two Lie groups with Lie algebra $\mathfrak{g}_1$ and  $\mathfrak{g}_2$, respectively. Assume that $G_1$ is connected and simply connected, and $G_2$ is regular. Then, for every continuous Lie algebra homomorphism $\mathfrak{F}:\, \mathfrak{g}_1 \rightarrow \mathfrak{g}_2$ there exists a unique Lie algebra homomorphism $f:\, G_1 \rightarrow G_2$ with $\mathfrak{F} = \dot{f}$.
\end{Proposition}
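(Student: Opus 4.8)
This is the integration step of Lie theory ("Lie's second theorem") for regular Lie groups, and the three hypotheses divide the labour in the textbook way: regularity of $G_2$ lets us solve the relevant differential equation in $\mathfrak{g}_2$, simple connectedness of $G_1$ kills the integrability obstruction, and connectedness together with local exponentiality (assumed throughout this section) upgrades a local construction to a global homomorphism. Recall that $G_2$ being regular means that every smooth curve $\xi\colon[0,1]\to\mathfrak{g}_2$ admits a unique smooth left product integral, i.e. a smooth $\gamma\colon[0,1]\to G_2$ with $\gamma(0)=e$ and left logarithmic derivative $\delta^l(\gamma)(t):=\gamma(t)^{-1}\dot\gamma(t)=\xi(t)$, and moreover $\gamma$ depends smoothly on $\xi$, hence on auxiliary parameters. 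Equivalently, one may phrase the whole argument as the construction of the integral subgroup of $G_1\times G_2$ tangent to the graph $\{(X,\mathfrak{F}(X)):X\in\mathfrak{g}_1\}$ of $\mathfrak{F}$, which is a Lie subalgebra precisely because $\mathfrak{F}$ respects brackets; the projection to $G_1$ would then be the covering we want to invert.

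\textbf{Uniqueness.} If $f,f'\colon G_1\to G_2$ are Lie group homomorphisms with $\dot f=\dot{f'}=\mathfrak{F}$ and $c\colon[0,1]\to G_1$ is a smooth path with $c(0)=e$, then $f\circ c$ and $f'\circ c$ both start at $e$ and have the same left logarithmic derivative $t\mapsto\mathfrak{F}(\delta^l(c)(t))$, by left-invariance of $\delta^l$ and $\dot f=\dot{f'}=\mathfrak{F}$; by uniqueness of product integrals in $G_2$ they coincide. Since $G_1$ is connected, every $g\in G_1$ is the endpoint of such a $c$, so $f=f'$.

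\textbf{Existence.} Given $g\in G_1$, choose a smooth path $c\colon[0,1]\to G_1$ from $e$ to $g$, form the smooth curve $\xi_c(t):=\mathfrak{F}(\delta^l(c)(t))$ in $\mathfrak{g}_2$ (smooth since $\mathfrak{F}$ is continuous and linear), let $\gamma_c$ be its left product integral, and set $f(g):=\gamma_c(1)$. The crucial point is independence of $c$. Given two paths $c_0,c_1$ from $e$ to $g$, use simple connectedness to pick a smooth homotopy $H\colon[0,1]^2\to G_1$ with $H(\cdot,0)\equiv e$, $H(\cdot,1)\equiv g$, $H(j,\cdot)=c_j$; writing $\gamma_s$ for the product integral attached to the path $H(s,\cdot)$, one differentiates $s\mapsto\gamma_s(1)$. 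The obstruction to its vanishing is the $H$-pullback of the curvature of the $\mathfrak{g}_2$-valued one-form $\mathfrak{F}\circ\theta_{G_1}$, where $\theta_{G_1}$ is the Maurer--Cartan form of $G_1$; since $\theta_{G_1}$ satisfies $d\theta_{G_1}+\tfrac12[\theta_{G_1},\theta_{G_1}]=0$ and $\mathfrak{F}$ intertwines brackets, $\mathfrak{F}\circ\theta_{G_1}$ satisfies the same structure equation, the curvature vanishes, and $s\mapsto\gamma_s(1)$ is constant. Smoothness of this $s$-dependence is again furnished by regularity of $G_2$. As $G_1$ is connected, $f$ is thereby defined on all of $G_1$.

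\textbf{Homomorphism, smoothness, differential; the hard part.} For $g,h\in G_1$, concatenating a path from $e$ to $g$ with the left $g$-translate of a path from $e$ to $h$ gives a path from $e$ to $gh$; left-invariance of $\delta^l$ turns the product integral of the concatenation into the product of the two endpoint values, so $f(gh)=f(g)f(h)$. For smoothness near $e$: by local exponentiality of $G_1$ we write nearby elements as $\exp_{G_1}(X)$, and using the ray $t\mapsto\exp_{G_1}(tX)$ as path we get $\xi_c(t)\equiv\mathfrak{F}(X)$, hence $\gamma_c(t)=\exp_{G_2}(t\,\mathfrak{F}(X))$ and $f(\exp_{G_1}X)=\exp_{G_2}(\mathfrak{F}(X))$, smooth in $X$; translating via the homomorphism property makes $f$ smooth everywhere. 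Differentiating $f\circ\exp_{G_1}=\exp_{G_2}\circ\,\mathfrak{F}$ at $0$ and using $D_0\exp=\mathrm{id}$ gives $\dot f=\mathfrak{F}$. The main obstacle is the path-independence step: it is the infinite-dimensional counterpart of the fact that a flat connection on a bundle over a simply connected base has trivial holonomy, and making the heuristic rigorous requires both the smooth dependence of product integrals on parameters (the genuine content of the regularity hypothesis) and careful bookkeeping showing that the Maurer--Cartan structure equation is transported by $\mathfrak{F}$ exactly because $\mathfrak{F}$ is a Lie algebra homomorphism; everything else is formal manipulation with logarithmic derivatives.
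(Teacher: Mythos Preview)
Your argument is correct and is precisely the standard proof of Lie's second theorem for regular Lie groups. The paper itself does not give a proof of this proposition; it simply cites Theorem~8.1 in Milnor's \textit{Remarks on Infinite-Dimensional Lie Groups}, whose proof is exactly the product-integral construction you outline: solve $\delta^l(\gamma)=\mathfrak{F}\circ\delta^l(c)$ using regularity of $G_2$, kill the path-dependence via the Maurer--Cartan equation (which $\mathfrak{F}$ transports because it preserves brackets) together with simple connectedness of $G_1$, and then verify the homomorphism property by concatenation of paths. So you have not taken a different route; you have reproduced the content of the cited reference.
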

\begin{proof} This is theorem 8.1 in \cite{Miln} \end{proof}

\noindent From this, we conclude:

\begin{Proposition}[Triviality of Extensions]\label{PropTrivialityofExtensions}
\mbox{}\\
Let $1 \longrightarrow A \xrightarrow{\;\;\imath\;} E \xrightarrow{\;\;\pi\;} G \longrightarrow 1$ be a central extension of $G$ by $A$. Let  $G$ and $E$ satisfy the requirements of the previous proposition. Then, the central extension splits by a continuous section if and only if the corresponding central extension of Lie algebras \eqref{LAfromLG2} splits by a continuous Lie algebra homomorphism.
\end{Proposition}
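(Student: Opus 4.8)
The plan is to prove the two implications separately, using functoriality of $G \mapsto \mathrm{Lie}(G)$ for the necessity and the integration statement of the preceding proposition for the sufficiency.

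For the \emph{only if} direction, suppose the group extension splits by a continuous splitting map, i.e.\ a Lie group homomorphism $\sigma : G \rightarrow E$ with $\pi \circ \sigma = \mathrm{Id}_G$. Applying $\mathrm{Lie}(\cdot)$ and using the assumed functoriality, $\dot\sigma := \mathrm{Lie}(\sigma) : \mathrm{Lie}(G) \rightarrow \mathrm{Lie}(E)$ is a continuous Lie algebra homomorphism, and from $\mathrm{Lie}(\pi)\circ\mathrm{Lie}(\sigma) = \mathrm{Lie}(\pi\circ\sigma) = \mathrm{Lie}(\mathrm{Id}_G) = \mathrm{Id}_{\mathrm{Lie}(G)}$ we see that $\dot\sigma$ is a section of the Lie-algebra extension \eqref{LAfromLG2} which is also a homomorphism, hence a splitting map. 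This is exactly the closing assertion of Theorem \ref{CEalgebraTriviality}.

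For the \emph{if} direction, suppose the Lie-algebra extension \eqref{LAfromLG2} splits by a continuous Lie algebra homomorphism $\beta : \mathrm{Lie}(G) \rightarrow \mathrm{Lie}(E)$ with $\dot\pi\circ\beta = \mathrm{Id}$. Since $G$ is connected and simply connected and $E$ is regular, the preceding proposition provides a unique Lie group homomorphism $\sigma : G \rightarrow E$ with $\dot\sigma = \beta$, which is continuous since homomorphisms between the Lie groups under consideration are automatically smooth. It remains to check that $\sigma$ is a section, i.e.\ $\pi\circ\sigma = \mathrm{Id}_G$. Both $\pi\circ\sigma$ and $\mathrm{Id}_G$ are Lie group homomorphisms $G \rightarrow G$, and they have the same derivative at the identity: $\mathrm{Lie}(\pi\circ\sigma) = \dot\pi\circ\beta = \mathrm{Id}_{\mathrm{Lie}(G)} = \mathrm{Lie}(\mathrm{Id}_G)$. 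Because $G$ is connected and locally exponential, a Lie group homomorphism on $G$ is determined by its derivative at $e$: two such homomorphisms with equal derivative agree on $\exp$ of a neighbourhood of $0$, hence on the subgroup generated by that neighbourhood, which is all of $G$. Therefore $\pi\circ\sigma = \mathrm{Id}_G$, so $\sigma$ is a continuous splitting map and the group extension splits.

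The main obstacle I anticipate is precisely the last bookkeeping step of the second direction: ensuring that the integrated homomorphism $\sigma$ is genuinely a section of $\pi$ rather than merely some homomorphism into $E$. The content there is just the uniqueness half of the ``integrate a Lie algebra homomorphism'' principle, applied to endomorphisms of $G$, but one should phrase it using only connectedness of $G$ together with local exponentiality, since $G$ need not be simply connected for that particular comparison. Everything else is a formal consequence of the functoriality of $\mathrm{Lie}$ and of the structure results already established in Theorem \ref{CEalgebraTriviality} and Proposition \ref{LAfromLG}.
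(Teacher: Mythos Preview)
Your proof is correct and follows essentially the same approach as the paper: differentiate a group splitting to get an algebra splitting, and integrate an algebra splitting via the preceding proposition to obtain a group homomorphism, then verify it is a section by comparing $\pi\circ\sigma$ with $\mathrm{Id}_G$ at the Lie algebra level. The paper simply invokes ``the uniqueness part of the proposition'' for this last step, whereas you spell it out via connectedness and local exponentiality; this is a minor expository difference, not a difference in strategy.
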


\begin{proof} If the central extension of Lie groups splits by a section $\sigma: G \rightarrow E$, then $\beta = \dot{\sigma}$ is a Lie algebra homomorphism with $\dot{\pi} \circ \beta = Id$ and thus a splitting map for the LIe algebra extension which trivializes the central extension. Conversely, let $\beta: \mathrm{Lie}(G) \rightarrow  \mathrm{Lie}(E)$ be a Lie algebra homomorphism with $\dot{\pi} \circ \beta = Id$. Using the previous proposition, we can deduce that there exists a Lie group homomorphism $\sigma: G \rightarrow E$ with $\beta = \dot{\sigma}$. Since $\dot{(\pi \circ \sigma)} = \dot{\pi}\, \beta = Id_{\mathfrak{g}} = \dot{Id_G}$, the uniqueness part of the proposition yields $\pi \circ \sigma = Id_G$ and thus $\sigma$ is a splitting map for the central extension of Lie groups.
\end{proof}

\noindent For completeness, we state Bargmanns theorem which is probably the main result of the theory of central extensions, and give a brief sketch of the proof. We will not make any further use of this result, however,  it's certainly of great interest on its own. For the complete, original proof we refer the reader to \cite{Bar54}.

\begin{Theorem}[Bargmann, 1954]
\mbox{}\\
Let G be a connected and simply connected finite-dimensional Lie group with
\begin{equation*}\mathrm{H}^2(\mathrm{Lie}(G), \mathbb{R}) = 0\end{equation*}
Then, every projective unitary representation of $G$ on $\mathbb{P}(\mathcal{H})$ can be lifted to a unitary representation  on $\mathcal{H}$.
\end{Theorem}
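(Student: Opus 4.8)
The plan is to feed the projective representation into the central-extension machinery built above and to show that the resulting extension is trivial precisely because $\mathrm{H}^2(\mathrm{Lie}(G),\IR)$ vanishes. First I would apply Lemma \ref{liftingrep} to the homomorphism $\Gamma:\,G\to\cu(\mathbb{P}(\HH))$: this produces the deprojectivization $\widetilde{G}$, a central extension
\[
1\longrightarrow\cu(1)\longrightarrow\widetilde{G}\stackrel{\pi}{\longrightarrow}G\longrightarrow1,
\]
together with a genuine unitary representation $\widetilde{\Gamma}:\,\widetilde{G}\to\cu(\HH)$ satisfying $\hat{\gamma}\circ\widetilde{\Gamma}=\Gamma\circ\pi$. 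By Proposition \ref{Corliftingrep} a unitary lift $\rho$ of $\Gamma$ exists as soon as this extension splits by a continuous homomorphism, so the whole problem reduces to exhibiting such a splitting.

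Next I would pass to the infinitesimal picture. Since $G$ is a finite-dimensional Lie group, the remarks following Lemma \ref{liftingrep} (relying on the Montgomery--Zippin theorem) endow $\widetilde{G}$ with the structure of a finite-dimensional Lie group for which $\pi$ and the inclusion of $\cu(1)$ are smooth. Proposition \ref{LAfromLG} then yields a central extension of Lie algebras
\[
0\longrightarrow\IR\longrightarrow\mathrm{Lie}(\widetilde{G})\stackrel{\dot\pi}{\longrightarrow}\mathrm{Lie}(G)\longrightarrow0,
\]
using that $\mathrm{Lie}(\cu(1))\cong\IR$ as an abelian Lie algebra. By Theorem \ref{CohomologyExt} this extension is classified by a class in $\mathrm{H}^2(\mathrm{Lie}(G),\IR)$, which is $0$ by hypothesis; hence Theorem \ref{CEalgebraTriviality} shows the extension is trivial and admits a splitting Lie algebra homomorphism $\beta:\,\mathrm{Lie}(G)\to\mathrm{Lie}(\widetilde{G})$ with $\dot\pi\circ\beta=\mathrm{Id}$.

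Finally I would integrate $\beta$ back up to the group level. Because $G$ is connected and simply connected, and $\widetilde{G}$, being finite-dimensional, is regular, the Proposition on obtaining Lie group homomorphisms from Lie algebra homomorphisms --- equivalently Proposition \ref{PropTrivialityofExtensions} --- produces a unique smooth homomorphism $\sigma:\,G\to\widetilde{G}$ with $\dot\sigma=\beta$. Since $\dot{(\pi\circ\sigma)}=\dot\pi\circ\beta=\mathrm{Id}=\dot{(\mathrm{Id}_G)}$, the uniqueness clause forces $\pi\circ\sigma=\mathrm{Id}_G$, so $\sigma$ is a splitting map and the central extension is trivial (Lemma \ref{trivext}). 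Setting $\rho:=\widetilde{\Gamma}\circ\sigma:\,G\to\cu(\HH)$ gives a unitary representation with $\hat{\gamma}\circ\rho=\hat{\gamma}\circ\widetilde{\Gamma}\circ\sigma=\Gamma\circ\pi\circ\sigma=\Gamma$, which is exactly the asserted lift.

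The main obstacle I expect is not the cohomological bookkeeping but the two ``soft'' analytic inputs. The first is ensuring that $\widetilde{G}$ really is a Lie group: a priori it is only a topological group that happens to be a finite-dimensional manifold inside $\cu(\HH)\times G$, and upgrading this to a genuine Lie structure is exactly Hilbert's fifth problem (Montgomery--Zippin); one also needs to know that a continuous projective representation admits local Borel lifts so that $\widetilde{G}$ sits in $\cu(\HH)\times G$ as a closed subgroup carrying the right topology. The second is the integration step, which genuinely uses that $G$ --- not merely $\mathrm{Lie}(G)$ --- is simply connected; without it one would only obtain a representation of the universal cover. I would therefore spend most of the effort making precise the regularity of $\Gamma$ and the Lie-theoretic status of $\widetilde{G}$, the remainder being a direct chaining of Theorems \ref{CohomologyExt} and \ref{CEalgebraTriviality} with Proposition \ref{PropTrivialityofExtensions}.
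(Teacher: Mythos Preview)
Your proposal is correct and follows essentially the same route as the paper's sketch: build the deprojectivization $\widetilde{G}$ via Lemma~\ref{liftingrep}, equip it with a Lie group structure, pass to the Lie algebra extension and trivialize it using $\mathrm{H}^2(\mathrm{Lie}(G),\IR)=0$, then integrate the splitting back up using simple-connectedness of $G$ (Proposition~\ref{PropTrivialityofExtensions}) and set $\rho=\widetilde{\Gamma}\circ\sigma$. You have in fact been more explicit than the paper about the two genuinely delicate steps---the Lie structure on $\widetilde{G}$ and the integration of $\beta$---which the paper also flags but does not carry out in detail, referring instead to \cite{Bar54}.
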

\begin{proof}By Lemma\eqref{liftingrep} there exists a central extension $\widetilde{G}$ of $G$ by $\cu(1)$, such that the following diagram commutes:
\begin{equation*}
\begin{xy}
  \xymatrix{
      1 \ar[r] & \cu(1) \ar[r]^i \ar[d]_{Id}    &   \widetilde{G} \ar[d]^{\widetilde{\Gamma}} \ar[r]^ \pi & G  \ar@{-->}@/_0.6cm/[l]_\sigma \ar[d]^ {\Gamma} \ar[r]          & 1  \\
      1 \ar[r] & \cu(1) \ar[r]            &   \cu(\HH) \ar[r]^ {\hat{\gamma}}          & \cu(\mathbb{P}({\HH})) \ar[r] & 1 
  }
\end{xy}
 \end{equation*}
Now, one has to show that $\widetilde{G}$ can be given the structure of a (dim($G$) + 1)-dimensional Lie group. Thus, to this central extension of Lie groups corresponds a central extension of Lie algebras. Since $\mathrm{H}^2(\mathrm{Lie}(G), \mathbb{R}) = 0$, this central extension splits. By  the previous Lemma, this implies that the central extension of Lie groups split, i.e. there is a differentiable homomorphism $\sigma:\, G \rightarrow \widetilde{G}$ with $\pi \circ \sigma = Id_G$. Then, $\widetilde{\Gamma} \circ \sigma$ is the postulated lift.\\
\end{proof}

\subsubsection{A final remark} 
Our discussion shows why in quantum theory it makes sense to study representations not of the generic symmetry groups but of their corresponding \textit{universal covering group}, which is (the unique) simply connected Lie group $\widehat{G} \xrightarrow{\pi} G$ covering $G$. Of course, every representation $\rho$ of the original Lie group G can be lifted to a representation of the universal covering, simply by setting $\hat{\rho} := \rho \circ \pi$. The converse is not true in general, but every representation of the universal covering group defines a \textit{projective representation} of the original symmetry group (cf. the example in 3.2.2.).
For example, one might study the group $\mathrm{SL}(2,\IC)$, which is the universal covering group of the (proper orthochronous) Lorentz-group $L_{\uparrow}^+ = \mathrm{SO}(1,3)$, or, more generally, the \textit{spin groups} $\mathrm{Spin}_n \xrightarrow{\pi} \mathrm{SO}(n)$ which can be realized as subgroups of the units in a \textit{Clifford algebra} and lead to the so-called \textit{spin representations.} 

\newpage 
\thispagestyle{empty}
\quad 
\newpage

\chapter{The Central Extensions $\Gres(\HH)$ \& $\Ures(\HH)$}

We construct a central extension $\Gres(\HH)$ of the restricted general linear group $\Gl_{\rm res}(\HH)$ by $\IC^ {\times} = \IC \backslash \lbrace 0 \rbrace$ which restricts to a central extension of $\Ur(\HH)$ by $\cu(1)$. These groups act naturally on the fermionic Fock space which will be constructed in the next chapter in the spirit of the Dirac sea theory. This action lifts the projective representation of $\Ur$ (respectively, $\Gl_{\rm res}$) on $\Gr$, given by
\begin{equation}\begin{split}\label{rhoGr} &\Ur(\HH) \times  \Gr \to \Gr,\\ &\bigl(U,W\bigr) \longmapsto \rho(U)\,W = UW \end{split}\end{equation}
to a proper representation on the Fock space. In this sense, the central extension can be understood as the ``second quantization'' of the groups $\Gres(\HH)$ and $\Ures(\HH)$. In more mathematical terms, they represent an explicit construction of the ``deprojectivization'', discussed in a more abstract context in \S 3.3.1. (c.f.  \cite{Wurz}, Prop. II.27). One may think of them as ``bigger'' groups containing the information about the transformation itself, plus all possible choices for the phase. A little more to the point, we can think of it in the following way: $\Ur(\HH)$ and $\Gl_{\rm res}(\HH)$ act on Dirac seas only projectively.  $\Gres(\HH)$ and $\Ures(\HH)$ contain the additional information about how to rotate the basis of the Hilbert space appropriately, in order to turn this into a proper action on infinite-wedge-vectors (cf. the discussion of the left- and right-action in Section 5.2). This provides a generalization of the product-wise lift
\begin{equation*} \varphi_1 \wedge \varphi_2 \wedge \dots \wedge \varphi_n \xrightarrow{U\; }  U\varphi_1 \wedge U\varphi_2 \wedge \dots \wedge U\varphi_n \end{equation*}
to states of infinitely many fermions.

\section{The Central Extension of $\Gl_{\rm res}$}

\noindent We start with the construction of the central extension of the identity component $\Gl^0_{\rm res}(\HH)$ of  $\Gl_{\rm res}(\HH)$. The construction is based on the standard polarization $\HH = \HH_{+} \oplus \HH_{-}$, but application to a different (fixed) polarization is immediate.
With respect to this splitting, we can write every linear operator as
\begin{align*}
  A = \begin{pmatrix}
    A_{++} & A_{+-}\\
    A_{-+} & A_{--}
  \end{pmatrix}
  = \begin{pmatrix}
    a & b\\
    c & d
  \end{pmatrix}.
\end{align*}
\noindent We define
\begin{equation*} 
\mathcal{E} = \bigl\lbrace (A , q) \in \Gl^{0}_{\rm res}(\HH) \times \Gl(\HH_+) \mid a - q \in I_1({\HH_+}) \bigr\rbrace .
\end{equation*}
\begin{Lemma}[Group structure of $\mathcal{E}$]
\item $\mathcal{E}$ is a subgroup of $\Gl^{0}_{\rm res}(\HH) \times \Gl(\HH_+)$.
\end{Lemma}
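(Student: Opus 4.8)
The plan is to verify the subgroup axioms for $\mathcal{E} \subset \Gl^0_{\rm res}(\HH) \times \Gl(\HH_+)$ directly. Since $\Gl^0_{\rm res}(\HH) \times \Gl(\HH_+)$ is already a group, it suffices to check that $\mathcal{E}$ contains the identity, is closed under multiplication, and is closed under taking inverses. The identity element $(\mathds{1}_\HH, \mathds{1}_{\HH_+})$ clearly lies in $\mathcal{E}$, since here $a = \mathds{1}_{\HH_+} = q$ and $a - q = 0 \in I_1(\HH_+)$. The content of the lemma is entirely in the interaction between the $(++)$-block of the product in $\Gl^0_{\rm res}(\HH)$ and the product in $\Gl(\HH_+)$, together with the trace-class condition $a - q \in I_1(\HH_+)$.

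First I would handle closure under multiplication. Take $(A, q), (A', q') \in \mathcal{E}$ and write $A = \begin{pmatrix} a & b \\ c & d \end{pmatrix}$, $A' = \begin{pmatrix} a' & b' \\ c' & d' \end{pmatrix}$ with respect to $\HH = \HH_+ \oplus \HH_-$. The $(++)$-block of $AA'$ is $(AA')_{++} = a a' + b c'$. Now $b$ and $c'$ are Hilbert-Schmidt (as $A, A' \in \Gl_{\rm res}$), so $b c' \in I_1(\HH_+)$ by the ideal property of the trace class. Therefore
\begin{equation*}
(AA')_{++} - q q' = a a' + b c' - q q' = (a - q) a' + q (a' - q') + b c'.
\end{equation*}
Each term is trace-class: $(a-q) a' \in I_1(\HH_+)$ because $a - q \in I_1(\HH_+)$ and $a'$ is bounded; $q(a' - q') \in I_1(\HH_+)$ because $a' - q' \in I_1(\HH_+)$ and $q$ is bounded; and $b c' \in I_1(\HH_+)$ as noted. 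Hence $(AA')_{++} - q q' \in I_1(\HH_+)$, so $(AA', qq') = (A,q)(A',q') \in \mathcal{E}$.

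Next, closure under inverses. For $(A, q) \in \mathcal{E}$, its inverse in the ambient product group is $(A^{-1}, q^{-1})$, and $A^{-1} \in \Gl^0_{\rm res}(\HH)$ since $\Gl^0_{\rm res}(\HH)$ is a group. It remains to show $(A^{-1})_{++} - q^{-1} \in I_1(\HH_+)$. The cleanest route is to use the multiplicative result just proven: we know $(A^{-1})_{++} = a^{-1} - a^{-1} b (A^{-1})_{-+}$ type expansions are messy, so instead observe that if we knew $(A^{-1}, \tilde q) \in \mathcal{E}$ for \emph{some} $\tilde q$, then since $\mathcal{E}$ is closed under multiplication $(A,q)(A^{-1}, \tilde q) = (\mathds{1}, q\tilde q) \in \mathcal{E}$ forces $q \tilde q - \mathds{1} \in I_1(\HH_+)$, i.e. $\tilde q - q^{-1} \in I_1(\HH_+)$, whence $(A^{-1})_{++} - q^{-1} = \big((A^{-1})_{++} - \tilde q\big) + (\tilde q - q^{-1}) \in I_1(\HH_+)$. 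So the real task is to exhibit \emph{one} admissible second coordinate for $A^{-1}$: since $A^{-1} \in \Gl_{\rm res}$, its $(++)$-block $(A^{-1})_{++}$ is a Fredholm operator, and more precisely $(A^{-1})_{++} = a^{-1} + (\text{trace-class})$ can be seen from the block-inversion formula, because $a$ is invertible modulo the Hilbert-Schmidt off-diagonal corrections and the Schur complement differs from $a^{-1}$ by a product of two Hilbert-Schmidt operators. Taking $\tilde q = a^{-1}$ (which lies in $\Gl(\HH_+)$ once one checks $a$ is invertible — this may require passing through the identity component or using that $q$ is invertible and $a - q$ is trace-class, so $a = q(\mathds{1} + q^{-1}(a-q))$ has a well-defined Fredholm determinant and index zero, hence is invertible) completes the argument.

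The main obstacle I expect is precisely the inverse step: showing that $(A^{-1})_{++}$ differs from a genuine \emph{invertible} element of $\Gl(\HH_+)$ by a trace-class operator. The subtlety is that $(A^{-1})_{++}$ is only Fredholm in general, and one must use the trace-class hypothesis $a - q \in I_1(\HH_+)$ together with invertibility of $q$ to pin down that $a$ itself is invertible and that the block-inverse formula $(A^{-1})_{++} = (a - b d^{-1} c)^{-1}$ (valid on the identity component where $d$ is invertible) only perturbs $a^{-1}$ by trace-class terms, since $b d^{-1} c$ is a product of two Hilbert-Schmidt operators, hence trace-class, and $x \mapsto (a - x)^{-1} - a^{-1} = a^{-1} x (a-x)^{-1}$ maps trace-class $x$ to trace-class operators. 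Once this is in place, the bootstrapping via closure under multiplication gives the inverse coordinate cleanly without further computation.
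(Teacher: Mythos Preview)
Your multiplication step is correct and matches the paper's argument exactly: the decomposition $(AA')_{++} - qq' = (a-q)a' + q(a'-q') + bc'$ with each summand trace-class.

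The inverse step, however, has a genuine gap. You propose to take $\tilde q = a^{-1}$, but $a = A_{++}$ need not be invertible for $A \in \Gl^0_{\rm res}(\HH)$; it is only Fredholm of index zero. Your justification---that $a = q(\mathds{1} + q^{-1}(a-q))$ has a Fredholm determinant and index zero, ``hence is invertible''---does not hold: index zero does not imply invertibility, and the Fredholm determinant of $\mathds{1} + q^{-1}(a-q)$ can perfectly well vanish. The same objection applies to your Schur-complement route, which assumes $d$ invertible; again $d$ is only Fredholm of index zero in general.

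The paper avoids this entirely by a direct computation. Writing $A^{-1} = \begin{pmatrix} \alpha & \beta \\ \gamma & \delta \end{pmatrix}$, the relation $AA^{-1} = \mathds{1}$ gives $a\alpha + b\gamma = \mathds{1}_{\HH_+}$, so $a\alpha \in \mathds{1} + I_1(\HH_+)$ since $b\gamma$ is a product of two Hilbert-Schmidt operators. Also $aq^{-1} = (a-q)q^{-1} + \mathds{1} \in \mathds{1} + I_1(\HH_+)$. Therefore $a(\alpha - q^{-1}) = a\alpha - aq^{-1} \in I_1(\HH_+)$. Now one uses only that $a$ is Fredholm: it admits a pseudoinverse $a^\dagger$ with $a^\dagger a = \mathds{1} - P$ for a finite-rank projection $P$, whence $\alpha - q^{-1} = a^\dagger a(\alpha - q^{-1}) + P(\alpha - q^{-1}) \in I_1(\HH_+)$. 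No invertibility of $a$ or $d$ is needed.

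Your bootstrap idea (find \emph{some} $\tilde q$ and then use closure under multiplication) is salvageable in principle: since $\alpha$ is Fredholm of index zero, you could take $\tilde q = \alpha + F$ for a suitable finite-rank $F$ making it invertible. But as written, the specific choices you propose do not work.
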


\begin{proof}: Clearly, $(Id_{\HH} , Id_{\HH_+} ) \in \mathcal{E}$.
Furthermore:
\begin{enumerate}[i)]
\item  If $(A_1 , q_1) , (A_2, q_2) \in \mathcal{E}$ with 
 $A_1 = \begin{pmatrix}
    a_1 & b_1\\
    c_1 & d_1
  \end{pmatrix}
  , \; A_2 = \begin{pmatrix}
    a_2 & b_2\\
    c_2 & d_2
  \end{pmatrix}$, we have\\
   $(A_1A_2)_{++} = a_1a_2 + b_1c_2$ and
  \begin{equation*}(A_1A_2)_{++} - \,q_1q_2 =  a_1a_2 + b_1c_2 - q_1q_2 = \underbrace{(a_1-q_1)} \limits_{\in \TC}a_2 \,+ \, q_1\underbrace{(a_2-q_2)}\limits_{\in \TC}+\underbrace{b_1}\limits_{\in \HS} \underbrace{c_2}\limits_{\in \HS} \in \TC. \end{equation*}
Thus: $(A_1 A_2 , q_1 q_2) \in \mathcal{E}$.
 
\item  Let $(A , q) \in \mathcal{E}$.

\begin{equation*} A = \begin{pmatrix}
    a & b \\
    c & d
  \end{pmatrix} \in \Gl^{0}_{\rm res}(\HH) \; \Rightarrow \; A^{-1} = \begin{pmatrix}
    \alpha & \beta \\
    \gamma & \delta
  \end{pmatrix} \in \Gl^{0}_{\rm res}(\HH). \end{equation*}
  We deduce: $a \alpha + b \beta = Id_{\HH_+} \Rightarrow a \alpha \in Id + \TC$ and thus: \begin{equation*}  a\, (\alpha - q^{-1}) = \underbrace{a \alpha}\limits_{\in Id + \TC} - \underbrace{a q^{-1}} \limits_{\in Id + \TC} \in \TC \end{equation*} 
 Since $a$ is a Fredholm operator with index $0$, it has a pseudoinverse. It follows that $(\alpha - q^{-1})\in\TC$ and so $(A^{-1} , q^{-1} ) \in \mathcal{E}$.
 \end{enumerate} 
 \end{proof}

\begin{Lemma}[Some properties of $\mathcal{E}$]
\mbox{}
\begin{enumerate}[i)]
\item $a - q \in  I_1({\HH_+})$ is equivalent to $q^{-1}a $ having a determinant. 
\item For all $A \in \Gl_{\rm res} \; \exists\;q \in \Gl(\HH_+)$ such that $(A, q) \in \mathcal{E}$.
\item If $aq^{-1}$ has a determinant then $aq'^{-1}$ has a determinant if and only if $q^{-1}q'$ does.
\end{enumerate}
\end{Lemma}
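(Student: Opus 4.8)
The plan is to note that all three claims follow at once from two facts recalled in the preliminaries: $I_1(\HH_+)$ is a two-sided ideal in $\mathcal{B}(\HH_+)$ — in particular a linear subspace stable under left and right multiplication by bounded operators — and an operator ``has a determinant'' exactly when it lies in $Id_{\HH_+} + I_1(\HH_+)$. For ii) I additionally use that the $(++)$-block $a$ of any $A\in\Gl^{0}_{\rm res}(\HH)$ is Fredholm of index $0$ (this is the defining property of the identity component).

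\textbf{Part i).} Since $q^{-1}a - Id_{\HH_+} = q^{-1}(a-q)$ and $q^{-1}$ is bounded, $a-q\in I_1(\HH_+)$ forces $q^{-1}a - Id_{\HH_+}\in I_1(\HH_+)$, i.e.\ $q^{-1}a$ has a determinant; conversely $a-q = q\,(q^{-1}a - Id_{\HH_+})$ gives the reverse implication. Running the same computation with $a q^{-1} - Id_{\HH_+} = (a-q)q^{-1}$ shows that $a-q\in I_1(\HH_+)$ is equally well equivalent to $aq^{-1}$ having a determinant, a remark I will reuse in iii).

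\textbf{Part iii).} By the last remark, ``$aq^{-1}$ has a determinant'' is equivalent to $a-q\in I_1(\HH_+)$, and likewise ``$aq'^{-1}$ has a determinant'' to $a-q'\in I_1(\HH_+)$; moreover, from $q^{-1}q' - Id_{\HH_+} = q^{-1}(q'-q)$ together with the ideal property, ``$q^{-1}q'$ has a determinant'' is equivalent to $q'-q\in I_1(\HH_+)$ (note that $q^{-1}q'$ is automatically invertible as a product of invertibles). Now $a-q' = (a-q) + (q-q')$, so when $a-q\in I_1(\HH_+)$ one has $a-q'\in I_1(\HH_+)$ if and only if $q-q'\in I_1(\HH_+)$, since $I_1(\HH_+)$ is a linear subspace; this is exactly the assertion. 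One could also argue multiplicatively from $aq'^{-1} = (aq^{-1})(qq'^{-1})$ using that $\Gl(\HH_+)\cap(Id_{\HH_+}+I_1(\HH_+))$ is a group, but the additive version is shorter.

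\textbf{Part ii).} Since $(A,q)\in\mathcal{E}$ requires $A\in\Gl^{0}_{\rm res}(\HH)$ in the first place, I read the claim for such $A$; then $a$ is Fredholm with $\ind(a)=0$, so $\dim\ker a = \dim\coker a =: n < \infty$. Choosing a finite-rank operator $F\colon\HH_+\to\HH_+$ that restricts to a linear isomorphism from $\ker a$ onto $(\im a)^{\perp}$ and vanishes on $(\ker a)^{\perp}$, one checks that $q := a+F$ is a bounded bijection, hence $q\in\Gl(\HH_+)$, while $a-q = -F$ is of finite rank and therefore lies in $I_1(\HH_+)$; thus $(A,q)\in\mathcal{E}$. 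I expect this to be the only part requiring a genuine argument, and the one delicate point is that $q$ must come out \emph{invertible}, not merely Fredholm — this is precisely where $\ind(a)=0$ (equivalently, $A$ in the identity component) is used, and the statement would fail without it, since any trace-class perturbation of an invertible operator has index $0$.
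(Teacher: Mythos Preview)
Your proof is correct. Parts i) and ii) match the paper's argument (you give the explicit finite-rank perturbation for ii) where the paper merely asserts its existence, but the idea is identical). For iii), the paper argues multiplicatively, writing $q^{-1}q' = (q^{-1}a)(q'^{-1}a)^{-1}$ and $q'^{-1}a = (q^{-1}q')^{-1}(q^{-1}a)$; your additive route via $a-q' = (a-q) + (q-q')$ is slightly cleaner because it never forms $(q'^{-1}a)^{-1}$ or $a^{-1}$, which the paper's computation tacitly does even though $a$ need not be invertible. The multiplicative variant you mention at the end, based on $aq'^{-1} = (aq^{-1})(qq'^{-1})$, shares this advantage since only the invertible factor $qq'^{-1}$ is inverted.
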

\noindent It follows that we can think of $\mathcal{E}$ as an extension (not a central one!) of $\Gl_{\rm res}$ by $\GL$:
\begin{equation*}
1 \longrightarrow \GL \longrightarrow \mathcal{E} \xrightarrow{\; pr_2 \;} \Gl^{0}_{\rm res}(\HH) \longrightarrow 1
\end{equation*}
$ii)$ implies that the projection $pr_2$ is surjective onto $\Gl^{0}_{\rm res}(\HH)$ and $iii)$ implies that two preimages differ (multiplicatively) by an element in $\GL$.
\begin{proof}
i) $a - q \in  I_1({\HH_+}) \iff q^{-1}(a-q)  = q^{-1}a - Id \in  I_1({\HH_+}) \iff q^{-1}a  \in  Id + I_1(\HH_+)$.\\
ii) $A \in \Greso \Rightarrow U_{++}\lvert_{\HH_+ \to \HH_+} =: a$ is a Fredholm operator of index 0. Thus, there exists a finite rank operator $t :\HH_+ \rightarrow \HH_+$ such that $q:= a - t$ is invertible and $a - q = t \in  I_1({\HH_+})$.\\
iii) If $q^{-1}a$ and $q'^{-1}a$ both have determinants, so does $ (q^{-1}a)(q'^{-1}a)^{-1} = q^{-1}q'$.\\ Conversely, if $q^{-1}a , q^{-1}q' $ have determinants, so does $ (q^{-1}q' )^{-1} (q^{-1}a) = q'^{-1}a $.\\
\end{proof}

\begin{Definition}[Central Extension of $\Gl_{\rm res}$]
\mbox{}\\
The group $\mathrm{SL}(\HH_+) = \lbrace q \in \mathrm{GL}^1(\HH_+) \mid \det(q)=1 \rbrace$ is embedded into $\mathcal{E}$ as 
\begin{equation*} \mathrm{SL}(\HH_+) \hookrightarrow \mathcal{E}, \; q \mapsto (\mathds{1} , q).\end{equation*} 
\noindent We define 
\begin{equation*}\GresO(\HH) \; := \; \mathcal{E} /\, \mathrm{SL}(\HH_+) \end{equation*}
and will show that this is a central extension of $\Greso$ by $\IC^{\times} := \IC\setminus\lbrace 0 \rbrace$.
\end{Definition} 
\noindent More explicitely, $\GresO(\HH)$ is the quotient of $\mathcal{E}$, modulo the equivalence relations
\begin{equation}\label{equivrelationE} (A_1 , q_1) \sim (A_2 , q_2) : \iff \, A_1 = A_2 \; \text{and}\; \det(q_2^{-1}\, q_1) = 1. \end{equation}

\begin{Lemma}[Central Extension of $\Gl_{\rm res}$]
\mbox{}\\
The extension 
\begin{equation*}
1 \longrightarrow \GL \longrightarrow \mathcal{E} \xrightarrow{\; pr_2 \;} \Gl^{0}_{\rm res}(\HH) \longrightarrow 1
\end{equation*}
descends to a central extension 
\begin{equation}\label{GLresExtension}
\fingbox{1 \longrightarrow \IC^{\times} \xrightarrow{\; \imath \;} \GresO(\HH) \xrightarrow{\; \pi \;} \Gl^{0}_{\rm res}(\HH) \longrightarrow 1}
\end{equation}
with $\pi: [(A , q)] \mapsto A$ and $\imath: c \mapsto [(Id , t)]$ for any $t \in \GL$ with $\det(t) = c$.
\end{Lemma}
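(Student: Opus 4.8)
The proof rests on identifying $\mathrm{SL}(\HH_+)$ as the kernel of the Fredholm determinant $\det\colon\GL\to\IC^{\times}$ and on two invariance properties of this determinant: multiplicativity (property iii above) and invariance under conjugation by an arbitrary bounded invertible operator (which holds because conjugation does not alter the eigenvalue sequence). The plan is first to check that $\mathrm{SL}(\HH_+)$, embedded as $\{(\mathds{1},q)\mid\det q=1\}$, is a \emph{normal} subgroup of $\mathcal{E}$: conjugating $(\mathds{1},q)$ by $(A,p)\in\mathcal{E}$ produces $(\mathds{1},pqp^{-1})$, where $pqp^{-1}-\mathds{1}=p(q-\mathds{1})p^{-1}\in\TC$ by the ideal property, so $pqp^{-1}\in\GL$, while $\det(pqp^{-1})=\det q=1$ by conjugation-invariance. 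Hence $\GresO(\HH)=\mathcal{E}/\mathrm{SL}(\HH_+)$ is a group and the quotient map $\mathcal{E}\to\GresO(\HH)$ a surjective homomorphism.

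Next I would establish exactness. Since $\mathrm{SL}(\HH_+)\subset\GL=\ker(pr_2)$, the projection $pr_2\colon\mathcal{E}\to\Greso(\HH)$ descends to a well-defined surjective homomorphism $\pi\colon\GresO(\HH)\to\Greso(\HH)$, $[(A,q)]\mapsto A$, with $\ker\pi=\GL/\mathrm{SL}(\HH_+)$. By properties ii) and iii) the Fredholm determinant restricts to a group homomorphism $\det\colon\GL\to\IC^{\times}$; it is surjective, since for $c\in\IC^{\times}$ one may take $t=\mathds{1}+(c-1)P$ with $P$ a rank-one orthogonal projection, so $t\in\GL$ and $\det t=c$; and its kernel is exactly $\mathrm{SL}(\HH_+)$. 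Thus $\det$ induces an isomorphism $\GL/\mathrm{SL}(\HH_+)\xrightarrow{\ \sim\ }\IC^{\times}$, and composing its inverse with the inclusion $\GL/\mathrm{SL}(\HH_+)=\ker\pi\hookrightarrow\GresO(\HH)$ yields precisely the claimed injective homomorphism $\imath\colon c\mapsto[(\mathds{1},t)]$ (independent of the choice of $t$, since two admissible choices differ by an element of $\mathrm{SL}(\HH_+)$), with $\im\imath=\ker\pi$. This gives the short exact sequence of groups; continuity of $\pi$ and $\imath$ is inherited from that of $pr_2$ and $\det$.

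The substantive step is \emph{centrality} of $\imath(\IC^{\times})$. Fix $c\in\IC^{\times}$, pick $t\in\GL$ with $\det t=c$, and let $[(A,p)]\in\GresO(\HH)$ be arbitrary. Then $[(A,p)]\,\imath(c)=[(A,pt)]$ and $\imath(c)\,[(A,p)]=[(A,tp)]$, so by the equivalence relation \eqref{equivrelationE} it suffices to show $\det((tp)^{-1}(pt))=\det(p^{-1}t^{-1}pt)=1$. The operator $r:=p^{-1}t^{-1}p$ lies in $\GL$ because $r-\mathds{1}=p^{-1}(t^{-1}-\mathds{1})p\in\TC$, and $p^{-1}t^{-1}pt=rt\in\mathds{1}+\TC$; hence, by multiplicativity and then conjugation-invariance of the Fredholm determinant, $\det(p^{-1}t^{-1}pt)=\det(r)\det(t)=\det(t^{-1})\det(t)=1$. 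Therefore $(A,pt)\sim(A,tp)$, i.e.\ $[(A,pt)]=[(A,tp)]$, so $\imath(c)$ commutes with every element of $\GresO(\HH)$, completing the proof.

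The main obstacle is the careful bookkeeping with the Fredholm determinant: in the normality and centrality arguments one conjugates a determinant-class operator by an operator that is \emph{not} itself in $\mathds{1}+\TC$, so property iii) does not apply directly and one must instead invoke conjugation-invariance (equality of eigenvalue sequences), always checking via the ideal property of $\TC$ that the operators produced still lie in $\GL$. Once these facts are isolated, exactness is purely formal and centrality collapses to the one-line determinant computation above.
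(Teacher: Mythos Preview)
Your proof is correct and follows essentially the same approach as the paper's. The only organizational difference is that you first verify $\mathrm{SL}(\HH_+)$ is normal in $\mathcal{E}$ (the standard group-theoretic route to a well-defined quotient), whereas the paper instead checks directly that multiplication on equivalence classes is well-defined; these are equivalent. Your centrality computation $\det(p^{-1}t^{-1}pt)=\det(p^{-1}t^{-1}p)\det(t)=\det(t^{-1})\det(t)=1$ is exactly the paper's, and you are slightly more explicit than the paper in flagging that the step $\det(p^{-1}t^{-1}p)=\det(t^{-1})$ relies on conjugation-invariance of the Fredholm determinant (since $p\in\Gl(\HH_+)$ need not lie in $\mathds{1}+\TC$, so multiplicativity alone does not apply).
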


\begin{proof}:
\begin{itemize}
\item Multiplication is well-defined on $\mathcal{E} \slash \sim$:\\
If $(A_1, q_1) \sim (A_1, r_1)$ and $(A_2, q_2) \sim (A_2, r_2)$ i.e. $\det(q^{-1}_1 r_1) = 1 = \det(q^{-1}_2 r_2)$ then $\det(q_2^{-1} q_1^{-1}r_1r_2 ) = \det(q_1^{-1}r_1r_2q_2^{-1}) = \det(q^{-1}_1 r_1)\, \det(r_2 q^{-1}_2) = 1$,\\ i.e. $(A_1A_2 , q_1q_2) \sim (A_1A_2, r_1 r_2)$.

\item Now it's easy to see that $\pi$ and $\mathfrak{i}$ are well-defined homomorphisms of groups and that  $\pi$ surjective and $\mathfrak{i}$ is injective.

\item We show $\ker(\pi) = \im(\mathfrak{i})$: Note that $\ker(\pi) = \lbrace (\mathds{1},q) \rbrace \subset \mathcal{E}\,$, with the $q$'s satisfying $\mathds{1}_{++} - q = \mathds{1}_{\HH_+} - q \in  \TC$ and thus $q \in \Gl(\HH_+)\cap (Id + \TC) = \GL$.
Hence, $\ker(\pi) = \im(\mathfrak{i})$.

\item The extension is central:  For all $(A , q ) \in \mathcal{E}$ and $t \in \GL$, we find
$(A , q t) \sim (A , t q)$, since $\det((qt)^{-1}tq \, ) = \det(t^{-1}\,q^{-1}\,t \, q  ) = \det(t)^{-1} \det( q^{-1}\,  t q) = \det(t)^{-1}\,\det(t) = 1$.\\
Thus,  $[(A , q)] \cdot [(\mathds{1} , t)] = [(\mathds{1} , t)] \cdot [(A , q)]\, , \; \forall\, [(A , q)] \in \GresO \; \text{and} \; [(\mathds{1} , t)] \in \im(\mathfrak{i})$. 
\end{itemize}
\end{proof}
\newpage
\subsection{The complete $\widetilde{\Gl}_{\rm res}$}
So far we have constructed the identity component $\GresO$ of the central extension, corresponding to transformations preserving the net-charge. Generalization to other connected components indexed by the relative charge $q = \charge(W, AW)$ for $A \in \Gl_{\rm res}, W \in \mathrm{Gr}(\HH)$, is obtained by concatenating charge-preserving transformations with a shift of a suitably chosen Hilbert-basis.\\

\noindent Let $(e_k)_{k \in \IZ}$ be a basis of $\HH$ such that $(e_k)_{k \geq 0}$ is a basis of $\mathcal{H}_+$ and $(e_k)_{k < 0}$ a basis of $\mathcal{H}_-$. We need a $\sigma \in \Gl_{\rm res}$ with $\ind(\sigma_{++}) = \pm 1$. Conveniently, we choose $\sigma$ defined by $e_k \longmapsto e_{k+1}$. This $\sigma$ is unitary with $\ind(\sigma_{++}) = -1$. It acts on $\Gl^{0}_{\rm res}$ by $A \mapsto \sigma A \sigma^{-1}$. Now, we can define a semi-direct product on $\IZ \ltimes \Gl^{0}_{\rm res}$ by 
\begin{equation} (n, A ) \cdot_{\ltimes} (m, A') = (n+m , A \sigma^n A' \sigma ^{-n}). \end{equation}
With this group-structure, the obvious map
\begin{equation} \IZ \ltimes \Gl^{0}_{\rm res} \rightarrow \Gl_{\rm res},\;(A , n) \mapsto A\,\sigma^n \end{equation}
becomes an isomorphism of groups: 
\begin{equation*}(n , A)\cdot (m , A') = (n + m , A \sigma^n A' \sigma ^{-n} ) \longmapsto A \sigma^n A' \sigma ^{-n} \sigma^{n+m} = A \sigma^n A' \sigma ^m. \end{equation*}  
Thus, we can describe the restricted general group as a semi-direct product of the identity component $\Gl^{0}_{\rm res}$ with $\IZ$. The $\IZ$-component corresponds to $(-1) \times$ the index of the $++$ component, i.e. to the relative charge ``created'' by the transformation.\\

\noindent The action of $\sigma$ on $\Greso$ (by conjugation) is covered by an endomorphism
 $\tilde{\sigma}: \mathcal{E} \rightarrow \mathcal{E}$, $\tilde{\sigma}( (A , q) ) = (\sigma A \sigma^{-1} , q_{\sigma} )$, where
\[
q_{\sigma} \; = \; \left\{
\begin{array}{ll}
\sigma q \sigma^{-1}  
& \mbox{; on } \sigma(\HH_+) = \HH_{\geq1} \\ \\
\mathrm{Id}
& \mbox{; on } \sigma(\HH_+)^{\perp} = \spn(e_0)
\end{array}
\right. 
\]
This is not an automorphism on $\mathcal{E}$, but it descends to an automorphism on $\GresO$. With a little abuse of notation, we write $\tilde{\sigma}$ for this map as well.\\
 
\begin{Definition}[The complete $\Gres$]
\mbox{}\\
We define
\begin{equation*}\Gres(\HH) := \IZ \ltimes \GresO(\HH) \end{equation*}
with the action of $\IZ$ on $\GresO(\HH)$ generated by  $\tilde\sigma$.
Then 
 \begin{equation*}1 \longrightarrow \IC^{\times} \longrightarrow \Gres(\HH) \xrightarrow{\; \pi \;} \Gl_{\rm res}(\HH) \longrightarrow 1 \end{equation*}
 is a central extension of $\Gl_{\rm res}$ by $\IC^{\times}$ and restricts to \eqref{GLresExtension} over the identity components of the Lie groups.
 \end{Definition}
\newpage

\subsection{Lie Group- and Bundle- structure}

We will show that $\GresO$ carries the structure of an infinite-dimensional Banach Lie group. To this end, we first introduce a Lie group structure on $\mathcal{E}$ by giving it not the subgroup topology, but the topology induced by the embedding 
 \begin{equation*} \iota:\; \mathcal{E} \to \mathcal{B}_{\epsilon} \times \TC; \; (A , q) \mapsto (A , a - q).\end{equation*}
 
 \begin{Lemma}[Lie Group structure of $\mathcal{E}$]
 \mbox{}\\
 $\mathcal{E}$ with the topology defined by the embedding $\iota$ is a Banach Lie group.
 \end{Lemma}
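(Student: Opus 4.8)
The plan is to transport the Banach Lie group structure along the embedding $\iota$ and verify the three ingredients one needs: (a) $\iota$ is injective with image a subset of the Banach space $\mathcal{B}_\epsilon \times I_1(\HH_+)$ onto which we can put a manifold structure; (b) the group operations are smooth in the induced charts; (c) the group is locally exponential, so that it is a \emph{linear} Banach Lie group in the sense of the definition recalled in \S2.2.1. The heart of the matter is that $\mathcal{E}$ sits inside the product of the Banach algebra $\mathcal{B}_\epsilon(\HH)$ (complete in $\lVert\cdot\rVert_\epsilon$) and the Banach algebra $\IC\cdot \mathrm{Id}_{\HH_+} \oplus I_1(\HH_+)$ (complete in the trace norm); the map $\iota$ is chosen precisely so that the ``defect'' $a-q$, which by definition of $\mathcal{E}$ lies in $I_1(\HH_+)$, becomes an independent Banach-space coordinate rather than something measured only in operator norm.

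First I would make the target precise. Consider the unital Banach algebra $\mathcal{A} := \mathcal{B}_\epsilon(\HH) \times \bigl(\IC\,\mathrm{Id}_{\HH_+} + I_1(\HH_+)\bigr)$, with the product norm; here the second factor is a closed subalgebra of $\mathcal{B}(\HH_+)$ since $\TC$ is a closed ideal and adding scalars keeps it closed, and it is a Banach algebra in the norm $\lvert\lambda\rvert + \lVert T\rVert_1$ for $\lambda\,\mathrm{Id}+T$. Then define
\begin{equation*}
 \widehat{\iota}: \mathcal{E}\to \mathcal{A},\qquad (A,q)\mapsto \bigl(A,\; a - q\bigr),
\end{equation*}
and observe that this does \emph{not} quite land in a subgroup of $\mathcal{A}^\times$ because the second component $a-q$ is not itself invertible; so instead I would use the ``honest'' embedding
\begin{equation*}
 \iota: \mathcal{E}\to \mathcal{B}_\epsilon(\HH)\times \Gl(\HH_+),\qquad (A,q)\mapsto (A,q),
\end{equation*}
\emph{but} topologize the second factor not by the operator norm of $q$ but through the bijection $q \leftrightarrow a-q$ once $A$ (hence $a$) is fixed. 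Concretely: the map $\iota$ above is injective (it is the identity), $\mathcal{E}$ is by the first Lemma a subgroup of $\Gl_{\rm res}^0(\HH)\times\Gl(\HH_+)$, and I give $\mathcal{E}$ the initial topology making $(A,q)\mapsto (A, a-q)\in \mathcal{B}_\epsilon(\HH)\times I_1(\HH_+)$ continuous. A chart near the identity is $(A,q)\mapsto (\log A,\ a-q)$ on the open set where $\lVert A-\mathrm{Id}\rVert_\epsilon<1$; openness of this set in the new topology, and the fact that $\log$ is a local diffeomorphism on $\mathcal{B}_\epsilon^\times$ by the inverse mapping theorem for Banach spaces (cf. Example 2.2.2.1), then give a Banach manifold structure modelled on $\mathfrak{gl}_1\times I_1(\HH_+)$.

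Next I would check smoothness of multiplication and inversion in these coordinates. The first-Lemma computation already displays $(A_1A_2)_{++}-q_1q_2 = (a_1-q_1)a_2 + q_1(a_2-q_2) + b_1c_2$; reading off the new coordinates, if $x_i := a_i-q_i \in I_1$, then the product's defect is $x_1 a_2 + q_1 x_2 + b_1 c_2$. Each term is a product of (bounded operator)$\times$(trace class), or (Hilbert--Schmidt)$\times$(Hilbert--Schmidt), so it lies in $I_1$ and depends bilinearly and boundedly on the data in the relevant Banach norms; since $A_1A_2$ depends smoothly on $(A_1,A_2)$ in $\lVert\cdot\rVert_\epsilon$ (as $\mathcal{B}_\epsilon$ is a Banach algebra), multiplication is smooth. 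For inversion, the second part of the first Lemma's proof shows $(\alpha - q^{-1}) = $ (pseudoinverse stuff) and, more usefully, $a(\alpha-q^{-1}) = a\alpha - aq^{-1}\in I_1$ with $a\alpha\in \mathrm{Id}+I_1$; one solves for $\alpha - q^{-1}$ using that $a$ is Fredholm of index $0$ hence invertible modulo finite rank, and checks the dependence on the data is smooth (holomorphic, even) using the Neumann series and continuity of the trace norm under the relevant multiplications. Local exponentiality: for $X$ in the model space $\mathfrak{gl}_1\times I_1(\HH_+)$, the curve $\exp(tX)$ computed componentwise stays in $\mathcal{E}$ (the defect condition is an open/ideal condition stable under the exponential series), and $D_0\exp = \mathrm{id}$, so $\exp$ is a local diffeomorphism. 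Hence $\mathcal{E}$ is a linear Banach Lie group with Lie algebra $\mathfrak{gl}_1 \oplus I_1(\HH_+)$.

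The step I expect to be the main obstacle is (c), specifically verifying that the set $\{(A,q): \lVert A - \mathrm{Id}\rVert_\epsilon < 1,\ q\in\Gl(\HH_+)\}$ is genuinely open in the $\iota$-topology and that the candidate chart is a homeomorphism onto an open subset of the model Banach space — i.e., that the topology is not strictly finer than a manifold topology. The subtlety is that fixing $A$ only pins down $a$, and the fibre over $a$ inside $\mathcal{E}$ is $\{q : a-q\in I_1,\ q\ \text{invertible}\} \cong (a + I_1(\HH_+))\cap\Gl(\HH_+)$, an open subset of the affine space $a+I_1(\HH_+)$ (open because invertibility is open in the operator norm and $\lVert\cdot\rVert_1\geq\lVert\cdot\rVert$); one must confirm these fibre charts glue compatibly with the $\mathcal{B}_\epsilon$-charts in the $A$-direction, which amounts to the smoothness of $(A,x)\mapsto$ (the $a$-component of $A$) $+\,x$ — routine, but it is where one actually uses that the $I_1$-coordinate is ``centred'' at $a$ rather than at a fixed operator. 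Once this is in place, the quotient $\GresO = \mathcal{E}/\mathrm{SL}(\HH_+)$ inherits a Banach Lie group structure because $\mathrm{SL}(\HH_+)$ is a closed (codimension-one, cut out by $\det$) normal subgroup, and the subsequent bundle structure over $\Gl_{\rm res}^0(\HH)$ follows from the local section $A\mapsto [(A, a-t)]$ built from a smooth local choice of finite-rank $t$ making $a-t$ invertible.
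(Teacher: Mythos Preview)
Your approach is workable in spirit but takes a considerably more laborious route than the paper, and the places you flag as delicate (the chart gluing, inversion smoothness, local exponentiality) are exactly the ones the paper's argument avoids entirely.

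The paper's trick is this: rather than building charts by hand on $\mathcal{E}$, one puts a \emph{new} multiplication directly on the target Banach space $\mathcal{A} := \mathcal{B}_\epsilon(\HH) \times I_1(\HH_+)$, namely
\[
(A,t)\star(A',t') := \bigl(AA',\; at' + ta' + bc' - tt'\bigr),
\]
chosen precisely so that $\iota:(A,q)\mapsto(A,a-q)$ becomes an algebra homomorphism. (Your own computation of the product defect $x_1a_2 + q_1x_2 + b_1c_2$ is, after substituting $q_1 = a_1 - x_1$, exactly this formula.) One then checks that $(\mathcal{A},+,\star,\lVert\cdot\rVert_\epsilon + \lVert\cdot\rVert_1)$ is a unital Banach algebra with unit $(\mathds{1},0)$, and that $\iota(\mathcal{E})$ is \emph{precisely} the group of units $\mathcal{A}^\times$: an element $(A,t)$ is $\star$-invertible iff $A\in\Gl_{\rm res}$ and there exists $t'$ with $(a-t)(a'-t')=\mathds{1}_{\HH_+}$, i.e.\ iff $a-t=q$ is invertible. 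Since the group of units in any unital Banach algebra is automatically a linear Banach Lie group (it is open, and multiplication/inversion are smooth by the Neumann series), the conclusion follows in one stroke.

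What this buys over your argument: all of your step (c), the inversion smoothness you gloss as ``pseudoinverse stuff'', and the fibre-chart compatibility you rightly identify as the main obstacle, become automatic. Your approach would ultimately need to reconstruct these facts by hand, whereas recognizing the hidden Banach-algebra structure gets them for free.
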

 
 \begin{proof} Let $\mathcal{A}:=\mathcal{B}_{\epsilon} \times \TC$ be equipped with the product-norm
 \begin{equation*} \lVert(A, t)\rVert_{\mathcal{A}} := \lVert A \rVert _{\epsilon} + \lVert t \rVert_1. \end{equation*}
 We define a multiplication $\star$ on $\mathcal{A}$ in such a way, as to make $\iota$ an algebra-homomorphism:
 \begin{equation*}Ê(A,t) \star (A',t') := (AA', at' + ta' + bc' -tt'). \end{equation*}  
It is easily checked that $\bigl((\mathcal{A}, + , \star), \lVert \cdot \rVert_{\mathcal{A}} \bigr)$ is a Banach algebra with unit element $(\mathds{1}, 0 )$.   
\noindent We claim that $\iota(\mathcal{\epsilon})$ is precisely the group  $\mathcal{A}^{\times}$ of unit elements in $\mathcal{B}_{\epsilon} \times \TC$ and therefore a linear Banach Lie group.
 Note that $(A,t) \in \mathcal{A}$ is invertible if and only if $A \in \GL_{\rm res}(\HH)$ and there exists $t' \in \TC$ s.t.
\vspace*{-4mm}\begin{align*}  
(A,t) \star (A^{-1},t') :=& (\mathds{1}, at' + ta' + bc' -tt' ) = (\mathds{1},0)
\end{align*}
for $aa' + bc' = \mathds{1}_{\HH_+}$. And this is true, if and only if 
\begin{align*}  &(a-t) t' + ta' + bc' = (a-t) t' + ta' + ( \mathds{1}_{\HH_+} - aa') = 0\\[2ex]
& \iff (t-a) t' - (t-a) a' = (t - a) (t' - a') = \mathds{1}_{\HH_+}\\
& \iff  (A, t) = \iota\bigl( (A, t-a) \bigr) ; \; \text{with} \; A \in  \GL_{\rm res}(\HH), \, (t - a) \in \GL.
\end{align*} 
This finishes the proof.
\end{proof} 

\noindent Now, we can understand $\GresO(\HH)$ as homogeneous space for the Lie subgroup $\mathrm{SL}(\HH_+).$ It follows that there exist a unique Lie group structure on $\GresO(\HH) = \mathcal{E} / \, \mathrm{SL}(\HH_+)$ such that the canonical projection becomes a submersion (\cite{Bourbaki}, \S 1.6, Prop. 11). We can conclude that $\GresO(\HH)$ is an infinite-dimensional Banach Lie group.

\subsubsection{Bundle structure}
\noindent In fact, $\pi:\,\GresO \rightarrow \Gl^{0}_{\rm res}$ is not only a central extension, it also carries the structure of a \textit{principle fibre bundle}.
On principle-bundles, local trivializations are given by \textit{local sections}. As we're dealing we Lie groups, the bundle structure is already defined by a smooth section in a neighborhood of the identity (because any such section can be translated to arbitrary points on the manifold by the multiplicative action of group on itself).  For $\GresO$, there exists a very natural local section about the identity in $\Greso(\HH)$ that will be of great importance for the further discussion.\\

\noindent Consider 
\begin{equation*} W := \lbrace A \in \Gl_{\rm res} \mid  a = A_{++} \in \Gl({\HH_+}) \rbrace  \subset \Greso \end{equation*}

\noindent\textbf{Claim:} W is a dense, open subset of $\Greso(\HH)$.
\vspace*{-2mm}
\begin{proof} Recall that the topology in $\Gres$ is given by the norm $\lVert\cdot\rVert_{\epsilon} = \lVert\cdot\rVert_{\infty} + \lVert [\epsilon, \cdot] \rVert_{2}$.\\
Clearly, $A \mapsto a = P_+ A P_+$ is continuous w.r.to this norm, therefore $W$ is open in $\Gres$ because $\Gl(\HH_+)$ is open in the space of bounded operators on $\HH_+$: For $a \in \Gl(\HH_+)$ and $\lVert k \rVert$ small enough, $a - k$ is also invertible with $(a - k)^{-1}=\bigl[ \sum\limits_{\nu = o}^{\infty} (a^{-1}k)^{\nu} \bigr] a^{-1}$.\\
Furthermore, $W$ is dense in $\Gl_{\rm res}$ because for any Fredholm operator $a$ with index 0 -- such as $a = A_{++}$ for $A \in \Gl_{\rm res}$ -- there exists a compact operator $k$ of arbitrary small norm so that $a+k$ is invertible.\\
\end{proof}

\noindent On $W$ we can define a section
\begin{equation}\label{tau}\addtolength{\fboxsep}{5pt} \boxed{\tau:\; W \longrightarrow \Gres(\HH);\hspace{0.2cm}A \longmapsto [(A , A_{++})].}\end{equation}
Obviously, $\tau$ is smooth because $\iota\circ\tau(A) = (A, 0) \in \Gl_{\rm res} \times \TC$. Also, $\tau(\mathds{1}) = \mathds{1}$. This section induces the local trivialization
\begin{equation}\label{localtriv}\addtolength{\fboxsep}{5pt} \boxed{\begin{split} \phi: \; & \Gres(\HH)  \supseteq \,  W \longrightarrow \Gl_{\rm res}(\HH) \times \IC^{\times}; \\
& [(A , q)] \longmapsto \bigl( A \, , \det(a^{-1}q) \bigr)\end{split}} \end{equation} 
\noindent by $\phi^{-1}(A, \lambda) := \tau(A) \cdot \lambda \in \Gres(\HH)$.

\begin{Proposition}[Lie Group Cocycle]
\mbox{}\\
On $W$ we compute the continuous 2-cocycle $ \tau(A) \tau(B) = \chi(A, B)\, \tau(AB) $
to be
\begin{equation}\label{Liegroupcocycle} \chi(A , B) = \det[A_{++} B_{++} (AB_{++})^{-1}]. \end{equation}
\end{Proposition}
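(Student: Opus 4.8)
The plan is to unwind the definition of the section $\tau$ together with the componentwise group law on $\mathcal{E}$ and then read the cocycle off directly. For $A,B$ in the domain of $\tau$ with $AB$ again in it, the product in $\mathcal{E}$ gives
\[
\tau(A)\,\tau(B) = [(A,A_{++})]\cdot[(B,B_{++})] = [(AB,\,A_{++}B_{++})],
\qquad
\tau(AB) = [(AB,\,(AB)_{++})].
\]
Thus the $\IC^{\times}$-valued cocycle $\chi$ determined by $\tau(A)\tau(B)=\chi(A,B)\,\tau(AB)$ is the unique scalar with $[(AB,\,A_{++}B_{++})]=\imath\bigl(\chi(A,B)\bigr)\cdot[(AB,\,(AB)_{++})]$, where $\imath(c)=[(\mathds{1},t)]$ for any $t\in\GL$ with $\det t=c$. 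So it suffices to produce a single convenient such $t$.

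The one substantive observation is that $A_{++}B_{++}$ and $(AB)_{++}$ differ by a trace-class operator. Expanding the block product, $(AB)_{++}=A_{++}B_{++}+A_{+-}B_{-+}$, and since $A,B\in\Gl_{\rm res}(\HH)$ the off-diagonal blocks $A_{+-}$ and $B_{-+}$ are Hilbert-Schmidt, so $A_{+-}B_{-+}\in I_1(\HH_+)$. Hence
\[
t \;:=\; A_{++}B_{++}\,(AB)_{++}^{-1} \;=\; \mathds{1}_{\HH_+}-A_{+-}B_{-+}\,(AB)_{++}^{-1}\;\in\;\mathds{1}+I_1(\HH_+),
\]
and $t$ is invertible, being a product of the blocks $A_{++}$, $B_{++}$, $(AB)_{++}^{-1}$, which are invertible precisely because $A$, $B$ and $AB$ lie in $W$. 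Therefore $t\in\GL$ and $\det t$ is well-defined as a Fredholm determinant.

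With this $t$ the verification is immediate: $\imath(\det t)=[(\mathds{1},t)]$ by definition of $\imath$, and
\[
[(\mathds{1},t)]\cdot[(AB,\,(AB)_{++})] = [(AB,\,t\,(AB)_{++})] = [(AB,\,A_{++}B_{++})] = \tau(A)\,\tau(B),
\]
so $\chi(A,B)=\det t=\det\bigl[A_{++}B_{++}(AB)_{++}^{-1}\bigr]$, which is the asserted formula ($AB_{++}$ being read as $(AB)_{++}$). Continuity of $\chi$ on $\{(A,B):A,B,AB\in W\}$ follows from continuity of $A\mapsto A_{++}$ in $\lVert\cdot\rVert_{\epsilon}$, of inversion in $\Gl(\HH_+)$, of multiplication in $\mathds{1}+I_1(\HH_+)$ in the trace norm, and of the Fredholm determinant; alternatively it is inherited from smoothness of $\tau$ and the fact that $\imath$ is a homeomorphism onto its image. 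The only step needing genuine care is the trace-class claim $A_{+-}B_{-+}\in I_1(\HH_+)$ and the resulting well-definedness of $\det t$; one should also note that $\{(A,B):A,B,AB\in W\}$ is a neighbourhood of $(\mathds{1},\mathds{1})$ (as $W$ is open, contains $\mathds{1}$, and multiplication is continuous), so that $\chi$ is a bona fide local cocycle in the sense used earlier. Everything else is bookkeeping with the equivalence relation \eqref{equivrelationE}.
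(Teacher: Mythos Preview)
Your proof is correct and follows essentially the same route as the paper's: compute $\tau(A)\tau(B)=[(AB,A_{++}B_{++})]$, compare with $\tau(AB)=[(AB,(AB)_{++})]$, and read off $\chi(A,B)=\det\bigl[A_{++}B_{++}(AB)_{++}^{-1}\bigr]$ via the identification $\imath(\det t)=[(\mathds{1},t)]$. You are more explicit than the paper in verifying $t\in\GL$ (which the paper leaves implicit, relying on the earlier lemma that $\mathcal{E}$ is closed under multiplication) and in justifying continuity, but the argument is the same.
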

\begin{proof} Let $A, B \in W$ s.t. $AB \in W$. Then $\tau(AB)=  [(AB , (AB)_{\,++})]$. And thus:
\begin{align*} \tau(A)\cdot \tau(B) = & [(A , A_{++})]\,[(B , B_{++})] = [(AB , A_{++} B_{++})] \\ = & [(\mathds{1}, A_{++} B_{++} (AB_{++})^{-1})] \,\tau(AB)\\
= & \det[A_{++} B_{++} (AB_{++})^{-1}]\cdot \tau(AB)\end{align*}
From which we can immediately read off the cocycle.\\
\end{proof}
\noindent The central extension of Lie groups
\begin{equation}
1 \longrightarrow \IC^{\times} \xrightarrow{\; \imath \;} \GresO(\HH) \xrightarrow{\; \pi \;} \Gl^{0}_{\rm res}(\HH) \longrightarrow 1
\end{equation}
induces a central extension 
\begin{equation}\label{g1Extension}
0 \longrightarrow \IC \xrightarrow{\;\dot{\imath}\;} \tilde{\mathfrak{g}}_1\xrightarrow{\; \dot{\pi} \;} \mathfrak{g}_1 \longrightarrow 0
\end{equation}
of the Lie algebra. We compute the corresponding Lie algebra cocycle.\\
\begin{Proposition}[Lie Algebra Cocycle]
\label{Prop:cocycleisSchwinger}
\mbox{}\\
The Lie algebra cocycle for to the central extension \eqref{g1Extension} is
\begin{equation}\label{Schwingercocycle}\addtolength{\fboxsep}{5pt} \boxed{\begin{split}
c(X, Y) = \,& \trace(X_{-+} Y_{+-} - Y_{-+}X_{+-})\\
=\,& \frac{1}{4}\, \trace(\epsilon \, [\epsilon , X] [\epsilon , Y]) \end{split}}\end{equation}
\end{Proposition}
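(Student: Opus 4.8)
The plan is to apply formula \eqref{cocycleformula} to the local section $\tau$ of \eqref{tau} and its associated Lie group $2$-cocycle $\chi$ of \eqref{Liegroupcocycle}. For $X,Y\in\mathfrak{g}_1$ and $s,t$ in a small neighbourhood of $0$, the operators $e^{sX}$, $e^{tY}$ and $e^{sX}e^{tY}$ all lie in $W$, since their $(++)$-blocks have the form $\mathds{1}_{\HH_+}+O(s,t)$ and are therefore invertible; hence $\tau$ is defined on all of them and $\tau(e^{sX})\tau(e^{tY})=\chi(e^{sX},e^{tY})\,\tau(e^{sX}e^{tY})$. It then remains to expand
\begin{equation*}
\chi(e^{sX},e^{tY})=\det\!\bigl[(e^{sX})_{++}(e^{tY})_{++}\bigl((e^{sX}e^{tY})_{++}\bigr)^{-1}\bigr]
\end{equation*}
to second order in $(s,t)$, extract the coefficient of $st$, and antisymmetrise as in \eqref{cocycleformula}.

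The computation is made short by the block identity $(AB)_{++}=A_{++}B_{++}+A_{+-}B_{-+}$, which yields
\begin{equation*}
A_{++}B_{++}\bigl((AB)_{++}\bigr)^{-1}=\mathds{1}_{\HH_+}-A_{+-}B_{-+}\bigl((AB)_{++}\bigr)^{-1}.
\end{equation*}
For $A=e^{sX}$, $B=e^{tY}$ we have $A_{+-}=sX_{+-}+O(s^2)$ and $B_{-+}=tY_{-+}+O(t^2)$ (the identity has no odd part), so $A_{+-}B_{-+}=st\,X_{+-}Y_{-+}+O(s^2t)+O(st^2)$; this is a product of two Hilbert--Schmidt operators — here the defining condition $[\epsilon,\cdot]\in\HS$ of $\mathfrak{g}_1$ enters — hence trace class, while $\bigl((AB)_{++}\bigr)^{-1}=\mathds{1}_{\HH_+}+O(s,t)$. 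Writing $\chi(e^{sX},e^{tY})=\det(\mathds{1}_{\HH_+}+N)$ with $N=-A_{+-}B_{-+}\bigl((AB)_{++}\bigr)^{-1}=-st\,X_{+-}Y_{-+}+(\text{terms of order}\ \ge 3)$ and using $\det(\mathds{1}_{\HH_+}+N)=1+\trace(N)+O(\lVert N\rVert_1^2)$, the coefficient of $st$ is exactly $-\trace(X_{+-}Y_{-+})$, since $N$ is already of order $st$ so the quadratic remainder is $O(s^2t^2)$ and contributes nothing. Thus the first term in \eqref{cocycleformula} equals $-\trace(X_{+-}Y_{-+})$, and the second, obtained by interchanging the slots, equals $-\trace(Y_{+-}X_{-+})$, so that by cyclicity of the trace
\begin{equation*}
c(X,Y)=\trace(Y_{+-}X_{-+})-\trace(X_{+-}Y_{-+})=\trace(X_{-+}Y_{+-}-Y_{-+}X_{+-}),
\end{equation*}
which is the first expression in \eqref{Schwingercocycle}.

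For the second identity I would substitute $[\epsilon,X]=2(X_{+-}-X_{-+})$ (as computed in the proof of the characterisation of $\Ur(\HH)$, preceding \eqref{HSnormwithepsilon}). The ``like'' products $X_{+-}Y_{+-}$ and $X_{-+}Y_{-+}$ vanish, since the range of one factor does not meet the domain of the other across $\HH=\HH_+\oplus\HH_-$, leaving $[\epsilon,X][\epsilon,Y]=-4\bigl(X_{+-}Y_{-+}+X_{-+}Y_{+-}\bigr)$; as $X_{+-}Y_{-+}$ has range in $\HH_+$ and $X_{-+}Y_{+-}$ has range in $\HH_-$, multiplying by $\epsilon=P_+-P_-$ gives $\epsilon[\epsilon,X][\epsilon,Y]=-4\bigl(X_{+-}Y_{-+}-X_{-+}Y_{+-}\bigr)$, and $\tfrac14\trace$ of this reproduces $c(X,Y)$ after one further use of cyclicity. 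I expect the only delicate point to be the bookkeeping in the determinant expansion — checking that no $st$-contribution is hidden in the quadratic part of $\det(\mathds{1}_{\HH_+}+N)$ and that every operator whose trace is taken is legitimately trace class; both follow from the fact that $N$ is itself of order $st$ together with the Hilbert--Schmidt hypothesis built into $\mathfrak{g}_1$. Recognising the result as the Schwinger term $\tfrac14\trace(\epsilon[\epsilon,X][\epsilon,Y])$ is then a purely algebraic verification.
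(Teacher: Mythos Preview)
Your proof is correct and follows essentially the same route as the paper: apply formula \eqref{cocycleformula} to the Lie group cocycle $\chi$ of \eqref{Liegroupcocycle}, extract the $st$-coefficient, and then verify the $\epsilon$-form algebraically. Your preliminary rewriting $A_{++}B_{++}\bigl((AB)_{++}\bigr)^{-1}=\mathds{1}-A_{+-}B_{-+}\bigl((AB)_{++}\bigr)^{-1}$ is a slightly cleaner organisation than the paper's direct differentiation, since it keeps every intermediate trace manifestly trace class, but the substance is the same.
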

\noindent This is known as the \emph{Schwinger cocycle} or \emph{Schwinger term}.
\begin{proof}
Using formula \eqref{cocycleformula} we find:
\begin{align*} c(X , Y) \, = \, & \frac{\partial}{\partial t} \frac{\partial}{\partial s} \Bigl\vert_{t=s=0}\det \bigl[ e^{sX}_{++} \, e^{tY}_{++} ((e^{sX}e^{tY})_{++})^{-1} \bigr] 
-  \frac{\partial}{\partial t} \frac{\partial}{\partial s} \Bigl\vert_{t=s=0} \det \bigl[ e^{tY}_{++} \, e^{sX}_{++} ((e^{tY}e^{sX} )_{++})^{-1} \bigr]\\[1.3ex]
= & \trace\bigl(X_{++} Y_{++} \, -  (XY)_{++}\bigr) - \trace\bigl( Y_{++} X_{++} \, -  (YX)_{++}\bigr)\\[1.3ex]
= & \trace\bigl(X_{++} Y_{++}\, - X_{++}Y_{++} - X_{+-}Y_{-+}\bigr)- \trace\bigl( Y_{++} X_{++} \, -  Y_{++}X_{++} - Y_{+-}X_{+-}\bigr)\\[1.3ex]
= & \trace\bigl(- X_{+-}Y_{-+}\bigr) + \trace\bigl( Y_{+-}X_{-+}\bigr)
\end{align*}
The traces converge individually and put together we get
 \begin{equation*} c(X , Y) \, =  \trace\bigl(X_{-+}Y_{+-} - Y_{-+}X_{+-}\bigr). \end{equation*}
 A straight forward computation shows that this can also be expressed as 
\begin{equation*} c(X , Y) \, = \frac{1}{4}\, \trace(\epsilon \, [\epsilon , X] [\epsilon , Y]). \end{equation*}
\end{proof}

\subsection{The Central Extension of $\Ur$ and its Local Trivialization}
Having defined the central extension $\Gres(\HH)$ of $\Gl_{\rm res}(\HH)$, we set
\begin{equation} \Ures(\HH) := \bigl\lbrace [(U, r)] \in \Gres\, \bigr\rvert (U,r) \in \cu(\HH) \times \cu(\HH_+)\bigr\rbrace \end{equation}
This is a central extension of $\Ur(\HH)$ by $\cu(1)$ and its complexification is $\Gres(\HH)$.\\

\noindent $\Ures(\HH)$ is a principle $\cu(1)$-bundle. But the section $\tau$ defined above does not restrict to a section in $\Ures(\HH)$ because for unitary $U$, $U_{++}$ need not be unitary, even if it's invertible. We can however use a polar decomposition to write 
\begin{equation} U_{++} = V_U \lvert U_{++} \rvert\;\; \text{with}\;\; V_U \in \cu(\HH_+)\; \text{unitary},\;  \lvert U_{++} \rvert = \sqrt{U^*_{++}U_{++} }.  \end{equation} 

\begin{Lemma}[Local Section of $\Ures$]\label{Uressection}
\mbox{}\\
On $W\cap \Ur(\HH)$, the map 
\begin{equation}\label{sigma} \sigma: U \longmapsto  [(U , V_U)] \end{equation}
defines a local section in $\Ures(\HH)$.
The local trivialization $\phi_{\Ur}$ induced by this section equals $\phi$ up to normalization. 
\end{Lemma}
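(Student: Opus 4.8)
The plan is to verify directly that $\sigma(U) = [(U, V_U)]$ lands in $\Ures(\HH)$, that it is a local section, and then to compare the induced trivialization with $\phi$ from \eqref{localtriv}. First I would check that the pair $(U, V_U)$ actually belongs to $\mathcal{E}$, i.e. that $U_{++} - V_U \in I_1(\HH_+)$. From the polar decomposition $U_{++} = V_U |U_{++}|$ we get $U_{++} - V_U = V_U(|U_{++}| - \mathds{1})$, so it suffices that $|U_{++}| - \mathds{1} \in I_1(\HH_+)$. Since $U \in \Ur(\HH)$, we have $U_{-+} \in I_2(\HH_+, \HH_-)$, and unitarity of $U$ gives $U_{++}^* U_{++} = \mathds{1}_{\HH_+} - U_{-+}^* U_{-+}$, hence $U_{++}^* U_{++} - \mathds{1} = -U_{-+}^* U_{-+} \in I_1(\HH_+)$ (product of two Hilbert-Schmidt operators). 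Because $U \in W$, $U_{++}$ is invertible and $|U_{++}| = \sqrt{U_{++}^* U_{++}}$ is a positive invertible operator that is a trace-class perturbation of $\mathds{1}$ (apply the spectral calculus / the fact that $\sqrt{\cdot}$ is Lipschitz near a positive operator and maps $\mathds{1} + I_1$ into $\mathds{1} + I_1$), so $|U_{++}| - \mathds{1} \in I_1(\HH_+)$, and therefore $(U, V_U) \in \mathcal{E}$ with $V_U$ unitary, giving $[(U,V_U)] \in \Ures(\HH)$.

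Next I would confirm the section property: $\pi([(U,V_U)]) = U$ is immediate from the definition of $\pi$, and smoothness follows because the polar-decomposition map $U_{++} \mapsto (V_U, |U_{++}|)$ is smooth on the set of invertible operators (the relevant square root is smooth on positive invertibles), and $\iota \circ \sigma(U) = (U, U_{++} - V_U)$ depends smoothly on $U$ in the $\|\cdot\|_\epsilon \times \|\cdot\|_1$ topology. One should also note $\sigma(\mathds{1}) = [(\mathds{1}, \mathds{1})] = \mathds{1}$, so it is a section through the identity, which is all that is needed to generate the bundle structure by left translation.

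Finally, for the comparison with $\phi$: the trivialization $\phi_{\Ur}$ induced by $\sigma$ is characterized by $\phi_{\Ur}^{-1}(U, \lambda) = \sigma(U)\cdot\lambda$. To relate this to $\phi$, I would compute $\phi(\sigma(U))$ using \eqref{localtriv}: $\phi([(U, V_U)]) = (U, \det(U_{++}^{-1} V_U))$. Now $U_{++}^{-1} V_U = (V_U |U_{++}|)^{-1} V_U = |U_{++}|^{-1} V_U^{-1} V_U = |U_{++}|^{-1}$, which is a positive operator in $\mathds{1} + I_1$, so $\det(|U_{++}|^{-1}) = \det(|U_{++}|)^{-1}$ is a positive real number. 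Hence $\phi \circ \sigma$ and $\phi_{\Ur} \circ \sigma$ differ only by a positive normalization factor $\det|U_{++}|^{-1} \in \IR_{>0}$, i.e. $\phi_{\Ur}$ equals $\phi$ up to multiplication in the fibre by a positive scalar; the underlying $\cu(1)$-valued part (the phase) is identical. This is precisely the claim that $\phi_{\Ur}$ equals $\phi$ ``up to normalization.''

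The main obstacle I anticipate is the trace-class argument for the square root: showing rigorously that if $T = T^* > 0$ with $T - \mathds{1} \in I_1(\HH_+)$ then $\sqrt{T} - \mathds{1} \in I_1(\HH_+)$, with the bound depending only on $\|T - \mathds{1}\|_1$ and the spectral bounds of $T$. This follows, e.g., from an integral representation $\sqrt{T} - \mathds{1} = \frac{1}{\pi}\int_0^\infty \lambda^{-1/2}\big[(\lambda + \mathds{1})^{-1} - (\lambda + T)^{-1}\big]\,d\lambda$ together with the resolvent identity $(\lambda+\mathds{1})^{-1} - (\lambda+T)^{-1} = (\lambda+\mathds{1})^{-1}(T - \mathds{1})(\lambda+T)^{-1}$, which puts the integrand manifestly in $I_1$ with an integrable norm bound; alternatively one invokes that holomorphic functional calculus preserves $\mathds{1} + I_1(\HH_+)$. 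Everything else is bookkeeping with the definitions of $\mathcal{E}$, $\pi$, and $\phi$.
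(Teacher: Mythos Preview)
Your proof is correct and follows the same route as the paper on the main points: both show $(U,V_U)\in\mathcal{E}$ via the unitarity identity $U_{++}^*U_{++}=\mathds{1}-U_{-+}^*U_{-+}\in\mathds{1}+I_1(\HH_+)$, deduce $|U_{++}|-\mathds{1}\in I_1(\HH_+)$, and hence $U_{++}-V_U=V_U(|U_{++}|-\mathds{1})\in I_1(\HH_+)$; and both compare the trivializations by observing that the $\IC^\times$-component of $\phi$ over $\sigma(U)$ is the positive real number $\det|U_{++}|^{-1}$.

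The only noteworthy difference is in the smoothness argument. You argue directly that $U\mapsto(U,U_{++}-V_U)$ is smooth into $\mathcal{B}_\epsilon\times I_1$, which forces you to prove that the operator square root is smooth on positive invertibles in $\mathds{1}+I_1$ (your ``main obstacle''). The paper instead bootstraps from the already-known smooth trivialization $\phi$: it suffices that $\phi\circ\sigma(U)=(U,\det|U_{++}|^{-1})$ is smooth, and since $\det|U_{++}|^2=\det(U_{++}^*U_{++})=\det(\mathds{1}-U_{-+}^*U_{-+})$ with $U\mapsto U_{-+}^*U_{-+}$ smooth into $I_1$, this reduces to smoothness of the scalar-valued Fredholm determinant. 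That sidesteps the operator square root entirely. Your integral-representation argument is fine, but the paper's detour through $\phi$ is shorter.
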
 
\begin{proof} For $U \in \Ur$, unitarity implies $U^*_{++}U_{++} + U^*_{+-}U_{+-} = \mathds{1}_{\HH_+}$.
As the odd-parts are Hilbert-Schmidt operators, $U^*_{+-}U_{+-}$ is trace-class. Therefore,
$U^*_{++}U_{++} \in Id_{\HH_+}+I_1(\HH_+)$ has a determinant. It follows that $\sqrt{(U^*_{++}U_{++})} = \lvert U_{++} \rvert $ is also in $Id_{\HH_+} + I_1(\HH_+)$ and we conclude 
\begin{equation*}U_{++} - V_U = V_U \bigl(\lvert U_{++} \rvert - \mathds{1}_{\HH_+} \bigr) \in I_1(\HH_+).\end{equation*}
Hence, $[(U , V_U)] \in \Ures(\HH)$ for all $U \in \Ur$. Furthermore, we find $\phi\bigl([(U, r)]\bigr) = (U, \lambda)$ with 
\begin{align*} \lambda & = \det( U^{-1}_{++}\, r ) = \det(\lvert U_{++} \rvert^{-1}) \det(V^*_U\, r) = \det\Bigl(\sqrt{(U^{-1}_{++}\, r )( U^{-1}_{++}\, r )^*}\Bigr)  \det(V^*_U\,  r)\\[1.5ex]
& =  \sqrt{\det( U^{-1}_{++}\, r )\,   \overline{\det( U^{-1}_{++}\, r )}}\,  \det(V^*_U\, r) = \lvert \lambda \rvert \det(V^*_U\, r).  \end{align*}
But  $\det(V^*_U\, r)$ is just the $\cu(1)$-component w.r.to the local trivialization $\phi_{\Ur}$ defined by $\sigma$. 
Hence we read off:
\begin{equation}
\begin{xy}
  \xymatrix{
       & \mathfrak{U} \in \Ures(\HH)\ar[dl]_{\phi}\ar[dr]^{\phi_{\Ur}}  &    \\
  \Ur\times \IC^{\times}\ni(U, \lambda)       &     & \bigl(U , \frac{\lambda}{\lvert\lambda\rvert}\bigr) \in \Ur \times \cu(1)
  }
\end{xy}
 \end{equation} 
In particular, for $r = V_U$ we find $\phi \circ \sigma (U) = \bigl(U , \frac{1}{\lvert \det(U_{++})\rvert} \bigr) \in \Ur \times \IC^{\times}$, showing that the local section $\sigma$ is smooth in $\Gres(\HH)$ (see Prop. \ref{Prop:dethol} for smoothness of the determinant). This finishes the proof.
 \end{proof}
\noindent In the local trivialization of $\Ures(\HH)$, the $\cu(1)$-component corresponds to the information about the phase of the lift of the unitary transformation. In this sense, we can think of the sections as ``gauging'' the phases of the lifts by picking out a reference lift.\\

\noindent It can be readily computed that the cocycle for the section $\sigma$ is, as one would expect, just the normalized version of the $\Gl_{\rm res}$-cocycle $\chi$ in \eqref{Liegroupcocycle}, i.e.
\begin{equation}\label{UresLiegroupcocycle} \chi_{\cu}(U,V) := \det[U_{++} V_{++} (UV_{++})^{-1}] \slash \,\lvert\, \det[U_{++} V_{++} (UV_{++})^{-1}]\, \rvert. \end{equation} 
For the corresponding central extension of Lie algebras, $ 0 \longrightarrow \IR \longrightarrow \widetilde{\mathfrak{u}}_{\rm res} \longrightarrow \mathfrak{u}_{\rm res} \longrightarrow 0 $,
we compute the cocycle
\begin{align*}  \;&\frac{\partial}{\partial t} \frac{\partial}{\partial s} \Bigl\vert_{t=s=0} \chi(e^{sX}, e^{tY}) \, \lvert \chi(e^{sX}, e^{tY}) \rvert^{-1} - \frac{\partial}{\partial t} \frac{\partial}{\partial s} \Bigl\vert_{t=s=0} \chi(e^{tY}, e^{sX}) \, \lvert \chi(e^{tY}, e^{sX}) \rvert^{-1}\\
\;=\; & \frac{\partial}{\partial t} \frac{\partial}{\partial s} \Bigl\vert_{t=s=0}\;\Bigl[ \chi(e^{sX}, e^{tY})  \bigl(\chi(e^{sX}, e^{tY}) \overline{\chi(e^{sX}, e^{tY})}\: \bigr)^{-\frac{1}{2}}
-  \chi(e^{tY}, e^{sX}) \bigl(\chi(e^{tY}, e^{sX}) \overline{\chi(e^{tY}, e^{sX})}\, \bigr)^{-\frac{1}{2}}\Bigr] 
\end{align*}\vspace{2mm} 
$=\; c(X ,Y) - \frac{1}{2} \bigl( c(X ,Y) + \overline{c(X,Y)} \bigl) = i\,\mathfrak{Im} \bigl( c(X,Y) \bigr), \;\; \text{for}\; X,Y \in \mathfrak{u}_{\rm res}$.\\
But as all $X,Y \in \mathfrak{u}_{\rm res}$ are (anti-)Hermitian, we find
\begin{equation*}\overline{c(X,Y)} = c(Y^* , X^*) = c(Y , X) = - c(X, Y),  \end{equation*}
which means $i\, \mathfrak{Im} \bigl( c(X,Y) \bigr) =  c(X,Y)$ on $\mathfrak{u}_{\rm res}$.
Thus, the Lie algebra cocycle on $\mathfrak{u}_{\rm res}$ defined by $\sigma$ is just the Schwinger cocycle restricted to the subalgebra $\mathfrak{u}_{\rm res} \subset \mathfrak{g}_1$. 

\section{Non-Triviality of the Central Extensions}

\begin{Theorem}[Non-triviality of the Central Extension]
\mbox{}\\
The following equivalent statements hold true:
\begin{enumerate}[i)]
\item The central extension $ \Gres(\HH) \xrightarrow{\; \pi \;} \Gl_{\rm res}(\HH)$ and $\Ures(\HH)\xrightarrow{\; \pi \;} \cu_{\rm res}(\HH)$ are not trivial.
\item The Lie algebra extensions $\tilde{\mathfrak{g}}_1\rightarrow \mathfrak{g}_1$ and $\widetilde{\mathfrak{u}}_{\rm res} \rightarrow \mathfrak{u}_{\rm res}$ are not trivial.
\item The Schwinger cocycle \eqref{Schwingercocycle} is not trivial in $\mathrm{H}^2(\mathfrak{g}_1, \IC)$, respectively in $\mathrm{H}^2(\mathfrak{u}_{\rm res}, i\IR).$
\item There exists no continuous (local) section $\Gamma: \Ur(\HH) \rightarrow \widetilde{\cu}_{\rm res}(\HH)$ which is also a homomorphism of groups. The same is true for  $\Gl_{\rm res}(\HH)$. 
\item There exist no unitary representation of $\cu_{\rm res}(\HH)$ on the fermionic Fock space, lifting the projective representation \eqref{rhoGr}. The analogous statement is true for $\Gl_{\rm res}(\HH)$.
\end{enumerate}
\end{Theorem}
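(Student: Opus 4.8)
The plan is to prove the chain of equivalences i) $\Leftrightarrow$ ii) $\Leftrightarrow$ iii) $\Leftrightarrow$ iv) $\Leftrightarrow$ v) and then to establish non-triviality at the most convenient link in the chain, namely iii), by exhibiting two elements $X,Y \in \mathfrak{u}_{\rm res}$ with $c(X,Y) \neq 0$ in a way that cannot be absorbed by a linear map $\mu([X,Y])$. The bulk of the work is the single computation in iii); the equivalences are essentially assembled from results already proved in \S3 and the present chapter.

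\textbf{The equivalences.} For i) $\Leftrightarrow$ iv): this is exactly Lemma \ref{trivext} applied to the central extensions $\Gres \to \Gl_{\rm res}$ and $\Ures \to \cu_{\rm res}$ --- a continuous splitting homomorphism exists iff the extension is trivial. For iv) $\Leftrightarrow$ v): this is Proposition \ref{Corliftingrep}, noting that, as remarked at the start of this chapter, the extensions $\Gres(\HH)$ and $\Ures(\HH)$ are concrete realizations of the deprojectivization of the projective representation \eqref{rhoGr}, so that a lift of \eqref{rhoGr} to a genuine unitary representation on the Fock space is the same datum as a splitting homomorphism of the central extension. For i) $\Leftrightarrow$ ii): the forward direction is the last assertion of Theorem \ref{CEalgebraTriviality} (a splitting map $\tau$ of the Lie group extension differentiates to a splitting map $\dot\tau$ of the Lie algebra extension); the converse uses Proposition \ref{PropTrivialityofExtensions}, which applies because $\Greso(\HH)$ and $\cu^0_{\rm res}(\HH)$ are simply connected (Lemma on homotopy groups of $\Greso$) and $\Gl(\HH_+)$, $\cu(\HH_+)$ are regular --- one only has to note that non-triviality of the extension over the identity component already implies non-triviality of the full extension, which is immediate since a global splitting restricts to a splitting over $\Greso$. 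Finally ii) $\Leftrightarrow$ iii) is Theorem \ref{CohomologyExt} together with the identification of the relevant Lie algebra cocycle as the Schwinger cocycle \eqref{Schwingercocycle}, carried out in Proposition \ref{Prop:cocycleisSchwinger}.

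\textbf{The core computation (iii)).} It remains to show the Schwinger cocycle $c(X,Y) = \trace(X_{-+}Y_{+-} - Y_{-+}X_{+-})$ is not a coboundary on $\mathfrak{u}_{\rm res}$, i.e. that there is no linear $\mu: \mathfrak{u}_{\rm res} \to i\IR$ with $c(X,Y) = \mu([X,Y])$ for all $X,Y$. The strategy is to pass to a two-dimensional model: pick unit vectors $e \in \HH_+$, $f \in \HH_-$ and work inside the copy of $\mathfrak{u}(2)$ acting on $\spn(e,f)$ (extended by zero), which sits inside $\mathfrak{u}_{\rm res}$ since finite-rank operators are trivially Hilbert--Schmidt. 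Inside this $\mathfrak{u}(2)$, take $X = i\sigma_1$, $Y = i\sigma_2$ (off-diagonal, hence purely ``odd''), so that $X_{-+}, X_{+-}, Y_{-+}, Y_{+-}$ are the obvious $1\times 1$ blocks and $c(X,Y)$ evaluates to a nonzero purely imaginary multiple of $\trace$ over the one-dimensional spaces --- concretely $c(X,Y) = \pm 2i \neq 0$. Now $[X,Y] = i\sigma_3 \cdot(\text{const})$ lies in $\mathfrak{su}(2) \cap \mathfrak{u}_{\rm res}$; since $\mathfrak{su}(2)$ is a simple Lie algebra it equals its own commutator subalgebra, so any linear functional of the form $\mu([\cdot,\cdot])$ restricted to $\mathfrak{su}(2)$ would have to be $\mathrm{ad}$-invariant, and an $\mathrm{ad}$-invariant linear functional on $\mathfrak{su}(2)$ vanishes identically (as $\mathfrak{su}(2) = [\mathfrak{su}(2),\mathfrak{su}(2)]$ forces $\mu$ to kill all brackets within $\mathfrak{su}(2)$). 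Hence $\mu([X,Y]) = 0 \neq c(X,Y)$, a contradiction. This exhibits $c$ as non-trivial in $\mathrm{H}^2(\mathfrak{u}_{\rm res}, i\IR)$, and by complexification in $\mathrm{H}^2(\mathfrak{g}_1,\IC)$ as well.

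\textbf{Main obstacle.} The genuinely substantive point is the computation in iii): one must be careful that the coboundary condition is an \emph{equality of functions on all of $\mathfrak{u}_{\rm res}$}, so it is not enough to show $c \neq 0$; one must rule out absorption by \emph{some} $\mu$, and the clean way is the simplicity/$\mathrm{ad}$-invariance argument on an embedded $\mathfrak{su}(2)$ as above (alternatively, one could exhibit a concrete $2$-cycle, e.g. three pairwise-bracketing elements, on which the cocycle integrates to something nonzero). Assembling the equivalences is then routine bookkeeping with the cited results, the only mild subtlety being the passage between the identity component and the full group in i) $\Leftrightarrow$ ii), handled by the restriction argument noted above.
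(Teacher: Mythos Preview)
Your assembly of the equivalences i)--v) is fine and matches the paper's own bookkeeping. The core computation, however, has a genuine gap.

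Your claim that ``any linear functional of the form $\mu([\cdot,\cdot])$ restricted to $\mathfrak{su}(2)$ would have to be ad-invariant'' is false, and without it your argument collapses. In fact, the Schwinger cocycle restricted to your embedded $\mathfrak{su}(2)$ \emph{is} a coboundary --- as it must be, since $\mathrm{H}^2(\mathfrak{su}(2),\IR)=0$ by Whitehead's lemma. Concretely, with $X_k = i\sigma_k$ one has $c(X_1,X_2)=2i$ and $c(X_3,\cdot)=0$ (since $\sigma_3$ is diagonal), while $[X_1,X_2]=-2X_3$, $[X_2,X_3]=-2X_1$, $[X_3,X_1]=-2X_2$. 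The linear functional $\mu(X_1)=\mu(X_2)=0$, $\mu(X_3)=-i$ satisfies $\mu([A,B])=c(A,B)$ for all $A,B$ in this $\mathfrak{su}(2)$. So no contradiction arises on this subalgebra, and your argument does not exclude the existence of $\mu$ globally.

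The paper takes a different and sharper route: it finds \emph{commuting} elements $X,Y\in\mathfrak{u}_{\rm res}$ with $c(X,Y)\neq 0$. Then $\mu([X,Y])=\mu(0)=0\neq c(X,Y)$ immediately rules out any $\mu$. To produce such a pair one cannot stay inside a semisimple finite-dimensional subalgebra; instead the paper passes to the model $\mathcal{K}=L^2(S^1,\IC)$ with the Fourier polarization and takes multiplication operators $M_g,M_h$ for smooth real $g,h$. These commute automatically, lie in $\mathfrak{u}_{\rm res}$, and one computes
\[
c(M_h,M_g)=\tfrac{1}{2\pi i}\int_0^1 h(t)\,\dot g(t)\,\mathrm{d}t,
\]
which is nonzero for e.g.\ $g=\cos(2\pi t)$, $h=\sin(2\pi t)$. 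That is the substantive step you are missing.
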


\begin{proof}[Proof of the Theorem]\footnote{The proof follows \cite{Wurz}, where it is presented in a very nice and complete way. However, we believe that the abstract mathematical language might conceal the ultimately simple nature of the proof, at least from a physicist's point of view. We have therefore tried to rephrase it in more elementary terms, without any reference to cohomology or the like.} For the equivalence of the statements, we have to recall the discussion  in Chapter 3. Note that any continuous Lie group homomorphism $\varphi$ between linear Banach Lie groups induces a unique continuous homomorphism $\mathrm{Lie}(\varphi) = \dot{\varphi}$ between their Lie algebras with $\exp \circ \mathrm{Lie}(\varphi) = \varphi \circ \exp$ (see Appendix  A.2). Thus, a continuous splitting map for the central extension of Lie groups would induce a splitting map for the central extension of Lie algebras, showing $ii) \Rightarrow i) \iff iv)$. $i) \Rightarrow ii)$ is Prop. \ref{PropTrivialityofExtensions} [Triviality of extensions]. $ii) \iff iii)$ is Thm. \ref{CEalgebraTriviality} [Triviality of Lie algebra extensions]. Finally, $i) \iff v)$ follows analogously to Prop. \ref{Corliftingrep} [Lifting projective representations] from the representation of $\Gl_{\rm res}(\HH)$ on the fermionic Fock space that will be constructed in the next chapter. Def. \ref{Def:Pfluckerembedding} will give an embedding of $\Gr$ into the projective Fock space $\mathbb{P}(\FF)$, known as the Pfl\"ucker embedding.\\

\noindent Now, we are actually going to prove iii), i.e. show that the Schwinger cocycle $c$ is not trivial. This means that there is no linear map $\mu: \mathfrak{u}_{\rm res} \rightarrow \mathbb{R}$ with $\mu([X,Y]) = c(X,Y),\; \forall X,Y \in \mathfrak{u}_{\rm res}$. This is proven, for example, if we can find $X,Y \in \mathfrak{u}_{\rm res}$ with $[X,Y] = 0$ but $c(X,Y) \neq 0$.\\

\noindent For completeness, we recall the argument why this proves that there exists no continuous section in $\Ur(\HH)$ which is also a homomorphism. 
In Lemma \ref{Uressection}, we have constructed a smooth section $\sigma: \Ur \rightarrow \Ures$. However, this section fails to be a homomorphism of groups, i.e. we get $\sigma(U)\sigma(V) = \kappa(U,V) \sigma(UV)$ with a Lie group cocycle $\kappa: \Ur \times \Ur \rightarrow \cu(1)$. 
Consequently, its derivative (at the identity) $\dot{\sigma}: \mathfrak{u}_{\rm res} \rightarrow \tilde{\mathfrak{u}}_{\rm res}$ fails to be a Lie algebra homomorphism. This is expressed by the \emph{Schwinger cocycle} 
\begin{equation*} c(X,Y) := [\dot{\sigma}(X),\dot{\sigma}(Y)] - \dot{\sigma}([X,Y]) \; \text{for}\; X,Y \in \mathfrak{u}_{\rm res}.\end{equation*}
Now suppose the section $\sigma$ is just a bad choice and there was in fact a different section $\Gamma: \Ur \rightarrow \widetilde{U}_{\rm res}$ which is also a Lie group homomorphism. Since two elements in the same fibre in $\widetilde{U}_{\rm res}$ differ only by a complex phase, the ``good'' section differs (multiplicatively) from $\sigma$ by a map $\lambda:\Ur \rightarrow \cu(1)$. As $\Gamma$ is a Lie group homomorphism, the corresponding Lie algebra map $\dot{\Gamma}= \dot{\lambda} + \dot{\sigma}$ is a continuous Lie algebra homomorphism.\\
This means:
\begin{align*}
0=[\dot{\lambda}(X) + \dot{\sigma}(X), \dot{\lambda}(Y) + \dot{\sigma}(Y)] &= [\dot{\sigma}(X),\dot{\sigma}(Y)] = \dot{\lambda}([X,Y]) +\dot{\sigma}([X,Y])\\[1.5ex]
&\Rightarrow \dot{\lambda}([X,Y]) = c(X,Y), \; \forall X,Y \in \mathfrak{u}_{\rm res}.
\end{align*}

For the first equality we have used that $\dot{\lambda}:\mathfrak{u}_{\rm res} \rightarrow \mathrm{Lie}(\cu(1)) =  \mathbb{R}$ maps into the center of $\widetilde{\mathfrak{u}}_{\rm res}$ and so all the commutators with $\dot{\lambda}$ vanish.
We observe: if a lift $\Gamma: \Ur \rightarrow \widetilde{\cu}_{\rm res}$ preserving the group structure exists, then there exists a linear map
$\mu\, (= \dot{\lambda}):\mathfrak{u}_{\rm res} \rightarrow \mathbb{R}$ with $\mu([X,Y]) = c(X,Y),\; \forall X,Y \in \mathfrak{u}_{\rm res}$.
Therefore, to prove that such a lift does \emph{not} exist, it suffices to find $X,Y \in \mathfrak{u}_{\rm res}$ with \begin{equation*} [X,Y] = 0\; \hspace{5mm} \text{but} \hspace{5mm}  c(X,Y) \neq 0. \end{equation*}
This would prove the statement for both the unitary and the general linear case.

\noindent The Hilbert space $\HH= \HH_+ \oplus \HH_-$ with the polarization given by the sign of the free Dirac Hamiltonian is somewhat difficult to handle. Fortunately, by unitary equivalence we can just as well consider any other polarized (separable, infinite-dimensional, complex) Hilbert space. For our purpose it is nice to work with the Hilbert space
$\mathcal{K} := L^2(S^1, \IC)$ with the polarization given by separation into Fourier components with positive and negative frequencies. The natural Hilbert-basis on $\mathcal{K}$ is the Fourier-Basis $(e_k)_{k \in \IZ}$ where $e_k(t) := e^{i2\pi k t} \in L^2(S^1, \IC)$.

\noindent Writing $\mathcal{K} \ni f = \sum\limits_{k \in \IZ} f_k e_k$ we have:
\begin{equation}P_+ f :=  \sum\limits_{k \geq 0} f_k e_k\,, \; \; P_- f := \sum\limits_{k <0} f_k e_k \end{equation}
On $\mathcal{K}$ we consider the class of operators given by multiplication with smooth functions. For $g \in C^{\infty}(S^1,\IC)$ and $f \in L^2(S^1, \IC)$ we write $M_g (f) = g \cdot f$.\\ In Fourier space, multiplication corresponds to convolution.
So, if $g = \sum\limits_{k \in \IZ} g_k\, e_k$ then 
\begin{equation} M_g (e_l) = \sum\limits_{k \in \IZ} g_{k-l} \, e_k \end{equation}
Be careful not to make the easy mistake to confuse the Fourier components of $g$ with those of the multiplication operator $M_g: \mathcal{K} \rightarrow \mathcal{K}$.\\
For $g,h \in C^{\infty}(S^1,\IC)$ we compute:
\begin{align}\label{comp1}
\notag \trace(M_{h-+} M_{g+-}) &= \sum\limits_{l <0} \langle e_l , M_h \sum\limits_{k \geq 0} g_{k-l} e_k \rangle =  \sum\limits_{l <0} \langle e_l ,\sum\limits_{k \geq 0} \sum\limits_{m < 0} h_{m-k}\,g_{k-l}\, e_m \rangle\\ 
& = \sum\limits_{l <0} \sum\limits_{k \geq 0} h_{l-k}\,g_{k-l} =- \sum_{l<0} l \, h_l\,g_{-l}.
\end{align}
Similarly:
\begin{equation}\label{comp2} \trace(M_{g-+} M_{h+-}) = \sum_{l>0} l \, h_l\,g_{-l}.\end{equation}
In particular, we conclude that
\begin{align*} \lVert [\epsilon , M_g] \rVert_2^2 =& 4 \cdot \trace\Bigl( (M_{g+-})(M_{g+-})^* + (M_{g-+})(M_{g-+})^*\Bigr)\\ =\; & 4 \cdot \Bigl(\sum\limits_{l\geq0} l \vert g_l \vert^2 +  \sum\limits_{l<0} (-l) \vert g_l \vert^2 \Bigr)\\
=\; & 4 \cdot \Bigl( \sum_{l \in \IZ} \lvert l \rvert   \lvert g_l \rvert^2 \Bigr) < \infty \; \text{for smooth g}.
\end{align*}
So all operators of this type have off-diagonal components in the Hilbert-Schmidt class, meaning that indeed
\begin{align*} \Bigl\lbrace M_g \;\bigl\lvert\; \lvert g(t) \rvert^2 \equiv 1 \Bigr\rbrace \subset \Ur(\mathcal{K})\;; \hspace{1cm} \Bigl\lbrace M_g\; \bigl\lvert \; g(t) \in \IR \Bigr\rbrace \subset \mathfrak{u}_{\rm res}(\mathcal{K}). \end{align*}

\noindent From \eqref{comp1} and \eqref{comp2} we also read off that the Schwinger cocycle is well-defined on operators of this type with
\begin{equation*}c(M_h, M_g) = \trace(M_{h-+} M_{g+-} - M_{g-+} M_{h+-}) = - \sum\limits_{l \in \IZ} l h_l g_{-l}
\end{equation*}
\begin{equation}\label{cocycleintegralformula} \;\addtolength{\fboxsep}{3pt} \boxed{= \frac{1}{2\pi i} \int\limits_0^1 h(t) \dot{g}(t)\, \mathrm{d}t}  \end{equation}
The rest is easy. Just take any two functions $g,h \in C^{\infty}(S^1, \IR)$ with $\int\limits_0^1 h(t) \dot{g}(t) \, \mathrm{d}t \neq 0$.\\ For instance, consider $g(t) = \cos(2\pi t)$ and $h(t) = \sin(2\pi t)$. Then we got
\item $M_g, M_h \in \mathfrak{u}_{\rm res}(h)$ with $[M_h , M_g] = 0$ but
\begin{equation*} c(M_h , M_g) = \frac{1}{2\pi i} \int\limits_0^1 h(t) \dot{g}(t)\, \mathrm{d}t = (-1)\int\limits_0^1 \sin^2(t)\, \mathrm{d}t \neq 0 \end{equation*} 
This shows that the Schwinger cocycle $c$ is non-trivial and completes our proof.\\
\end{proof}

\noindent In fact, we have proven a much stronger result that doesn't require continuity at all.

\begin{Theorem}[Non-Triviality, algebraic version]
\mbox{}\\
There exist no local homomorphism $\tau: \Ur(\HH) \to \Ures(\HH)$ with $\pi \circ \tau = Id$.\\
Consequently, there exists no homomorphism $\rho\tilde\;:\;  \Ur(\HH) \to \cu(\FF)$ that lifts the projective representation \eqref{rhoGr}, not even locally. The same is true for $\Gl_{\rm res}(\HH)$. 
\end{Theorem}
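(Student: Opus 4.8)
The statement to prove is the ``algebraic version'' of non-triviality: there is no local homomorphism $\tau\colon\Ur(\HH)\to\Ures(\HH)$ with $\pi\circ\tau=\mathrm{Id}$, and consequently no (even local) lift $\tilde\rho\colon\Ur(\HH)\to\cu(\FF)$ of the projective representation \eqref{rhoGr}. The point is that the previous theorem already did the analytic work — it produced explicit $X,Y\in\mathfrak{u}_{\rm res}(\mathcal K)$ (namely $M_h$, $M_g$ with $g=\cos 2\pi t$, $h=\sin 2\pi t$) satisfying $[X,Y]=0$ but $c(X,Y)\neq 0$ — and my job is to squeeze the stronger, continuity-free conclusion out of the \emph{same} example. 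So the plan is: reduce the (potential) local homomorphism to a group-theoretic identity on a single abelian two-parameter subgroup, evaluate the cocycle $\chi_{\cu}$ on that subgroup directly, and derive a contradiction from a \emph{countable} winding/index argument rather than from differentiation.

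First I would pass to $\mathcal K=L^2(S^1,\IC)$ by unitary equivalence, and consider the one-parameter unitary groups $U(s):=M_{\exp(is\, h)}$ and $V(t):=M_{\exp(it\, g)}$ with $g,h$ as above. Since $h,g$ commute as multiplication operators, $U(s)$ and $V(t)$ generate an abelian subgroup $T\cong\IR^2$ (or a quotient thereof) of $\Ur(\mathcal K)$. Suppose, for contradiction, that a local homomorphic section $\tau$ exists on a neighborhood of $\mathds 1$. Restricting to $T$ near the identity and using the local trivialization $\sigma$ of Lemma \ref{Uressection}, I can write $\tau(R)=\lambda(R)\,\sigma(R)$ for a $\cu(1)$-valued function $\lambda$ defined near $\mathds 1\in T$, and the homomorphism property of $\tau$ combined with $\sigma(R)\sigma(R')=\chi_{\cu}(R,R')\,\sigma(RR')$ forces
\begin{equation*}
\lambda(RR')=\chi_{\cu}(R,R')\,\lambda(R)\,\lambda(R'),\qquad R,R'\in T \text{ near }\mathds 1 .
\end{equation*}
In other words the restricted group cocycle $\chi_{\cu}|_{T\times T}$ must be a coboundary \emph{as a set-theoretic cocycle}, with no continuity assumed on $\lambda$.

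Next I would compute $\chi_{\cu}$ on $T$ explicitly. From \eqref{UresLiegroupcocycle}, $\chi_{\cu}(U,V)=\det[U_{++}V_{++}(UV)_{++}^{-1}]$ up to modulus; for multiplication operators the $(++)$-block and the Fredholm determinant can be read off from the Fourier expansions exactly as in \eqref{comp1}--\eqref{comp2}, so on $T$ one gets a closed, genuinely bilinear-in-the-exponent expression whose ``phase'' is governed by $\frac{1}{2\pi i}\int_0^1 h\,\dot g\,dt\neq 0$ — i.e. $\chi_{\cu}$ restricted to the two commuting one-parameter subgroups $\{U(s)\}$ and $\{V(t)\}$ is, up to coboundary, the \emph{antisymmetric} pairing $(s,t)\mapsto \exp\!\big(c\,st\big)$ with $c=\frac{1}{2\pi i}\int_0^1 h\,\dot g\,dt\ne 0$. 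The crucial algebraic fact is then: a bi-additive antisymmetric $\IR^2$-cocycle of this form cannot be a coboundary even set-theoretically, because comparing $\lambda(U(s)V(t))$ computed in the two orders $U(s)V(t)=V(t)U(s)$ yields $\exp(c\,st)=\exp(-c\,st)$, hence $\exp(2c\,st)=1$ for all small $s,t$, which is impossible since $c\neq 0$. (If one prefers to sidestep any smoothness of the exponentials altogether, replace $U(s),V(t)$ by the integer-indexed unitaries $U_n:=U(1/n)$, $V_n:=V(1/n)$ and run the same contradiction with $s=t=1/n$, $n\to\infty$ — a purely countable argument, which is why continuity is never needed.)

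The main obstacle — and the only genuinely new content beyond the preceding theorem — is organizing this so that \emph{no} differentiability of $\lambda$ or of the one-parameter groups is secretly used: one must get the contradiction from the cocycle identity evaluated on finitely many (or countably many) group elements. The commutation trick above does exactly that: it uses only $U(s)V(t)=V(t)U(s)$ and the cocycle relation, turning the nonzero Schwinger integral into the impossible identity $\exp(2cst)=1$. Once this is done, the final sentence of the theorem follows formally: by the construction of the representation of $\Gl_{\rm res}(\HH)$ on the Fock space in the next chapter together with the Pl\"ucker embedding $\Gr\hookrightarrow\mathbb P(\FF)$ (Def.~\ref{Def:Pfluckerembedding}), any local lift $\tilde\rho\colon\Ur(\HH)\to\cu(\FF)$ of \eqref{rhoGr} would assign to each $U$ a unique element of the fibre $\pi^{-1}(U)\subset\Ures(\HH)$, defining a local homomorphic section $\tau=$ (this assignment) $\colon\Ur(\HH)\to\Ures(\HH)$ with $\pi\circ\tau=\mathrm{Id}$ — contradicting what we just proved. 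The identical argument applies verbatim to $\Gl_{\rm res}(\HH)$, since $\mathfrak u_{\rm res}\subset\mathfrak g_1$ and the example lies in $\mathfrak u_{\rm res}$.
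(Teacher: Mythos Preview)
Your strategy coincides with the paper's: both reduce the problem to showing that for the commuting pair $U(s)=e^{sX}$, $V(t)=e^{tY}$ (with $[X,Y]=0$, $c(X,Y)\neq 0$ from the previous theorem) one has $\chi_{\cu}(U(s),V(t))\neq\chi_{\cu}(V(t),U(s))$ for arbitrarily small $s,t$, which is incompatible with the coboundary identity $\lambda(UV)=\chi_{\cu}(U,V)\lambda(U)\lambda(V)$ forced by any algebraic local section.

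The difference lies in how the asymmetry of $\chi_{\cu}$ is established. The paper simply applies the identity \eqref{cocycleformula},
\[
c(X,Y)=\partial_t\partial_s\big|_{t=s=0}\bigl[\chi_{\cu}(e^{sX},e^{tY})-\chi_{\cu}(e^{tY},e^{sX})\bigr],
\]
and observes that $c(X,Y)\neq 0$ immediately forces $\chi_{\cu}(e^{sX},e^{tY})\neq\chi_{\cu}(e^{tY},e^{sX})$ for some small $s,t$. The point you seem worried about---that differentiation sneaks continuity back in---is addressed head-on: smoothness is used only for the \emph{known} section $\sigma$ (hence for $\chi_{\cu}$), never for the hypothetical $\tau$ or $\lambda$. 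This makes your proposed explicit computation of $\chi_{\cu}$ on the abelian subgroup $T$, and the countable $s=t=1/n$ device, unnecessary. Your route would work in principle, but you do not actually carry out the determinant computation (the formulas \eqref{comp1}--\eqref{comp2} give traces, not the Fredholm determinant $\det[U_{++}V_{++}(UV)_{++}^{-1}]$), and filling that gap is more work than simply quoting \eqref{cocycleformula}.
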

\begin{proof} Again, we present the proof for the unitary case. The general linear case follows analogously. Recall that for the central extension \begin{equation*}1 \longrightarrow \cu(1) \xrightarrow{\; \imath \;} \widetilde{\cu}^0_{\rm res}(\HH)\xrightarrow{\; \pi \;} \cu^{0}_{\rm res}(\HH) \longrightarrow 1\end{equation*} we have the 2-cocycle \eqref{UresLiegroupcocycle} 
\begin{equation*}\chi_{\cu}(U,V) := \det[U_{++} V_{++} (UV_{++})^{-1}] \slash \lvert \det[U_{++} V_{++} (UV_{++})^{-1}] \rvert \end{equation*} coming from the smooth local section \eqref{sigma}. Suppose there was an open neighborhood $\mathcal{O}$ of the identity in $\Ur(\HH)$ and a homomorphism $\tau: \mathcal{O} \to \Ures(\HH)$ with $\pi \circ \tau = Id$. Then, $\chi$ was trivial as a (local) algebraic group 2-cocycle, i.e. $\tau$ would give rise to a map $\lambda: \mathcal{O} \to \cu(1)$ with \begin{equation*}\lambda(UV) = \chi(U,V) \lambda(U) \lambda(V), \; \forall \, U,V \in \mathcal{O} \end{equation*}
(see the discussion at the beginning of \S. 3.4). Now, we can proof that such a map \textit{cannot} exist anywhere near $1 \in \Ur(\HH)$, by showing that every open neighborhood contains elements $U,V$ with $UV = VU$, but $\chi(U,V) \neq \chi(V,U)$. 

But in fact, this was already shown in the proof of the previous theorem, where we found  $X,Y \in \mathfrak{u}_{\rm res} = \mathrm{Lie}(\Ur(\HH))$ with $[X,Y] = 0$ and $c(X,Y) \neq 0$, where $c: \mathfrak{u}_{\rm res} \to i\IR$ was the Schwinger term, i.e. the Lie algebra cocycle corresponding to $\chi$. Because given such $X,Y \in \mathfrak{u}_{\rm res}$, $\;[X,Y] =0$ implies $[\exp(sX), \exp(tY)] = 0$ in $\Ur(\HH)$ for all $s,t \in \IR$ sufficiently small. And using the identity \eqref{cocycleformula} i.e.
\begin{equation*}c(X,Y) \, = \; \frac{\partial}{\partial t} \frac{\partial}{\partial s} \Bigl\vert_{t=s=0}\, \chi_{\cu}(e^{sX} , e^{tY})\, - \, \frac{\partial}{\partial t} \frac{\partial}{\partial s} \Bigl\vert_{t=s=0}\, \chi_{\cu}(e^{tY} , e^{sX}) \end{equation*}
we see that $\forall \epsilon >0 \; \exists s,t \in (0, \epsilon): \; \chi(e^{sX} , e^{tY}) \neq \chi(e^{tY} , e^{sX})$, since the contrary would imply $c(X,Y)=0$.
Note that we use continuity or differentiability only of the section $\sigma$, which allows us to apply \eqref{cocycleformula}. The final result, however, is purely algebraical.\\  
\end{proof}

\begin{Remark}(Embedding of Loop Groups)\\
In the related mathematical literature, the multiplication operators and the cocycle \eqref{cocycleintegralformula} considered in the proof of the previous theorem arise in the abstract context of embeddings of \emph{loop groups} into $\Gl_{\rm res}(\HH)$. If $K$ is a d-dimensional (compact) Lie group, then $C^{\infty}(S^1 , K)$ is an infinite-dimensional Lie group, called a \emph{loop group} and usually denoted by $LK$ or $Map(S^1 , K)$. Any representation $\rho: K \rightarrow \Gl(\IC^d)$ then provides an action of the loop group on the Hilbert space $\mathcal{K} = L^2(S^1 , \IC)$ via
\begin{equation*}C^{\infty}(S^1 , K) \times L^2(S^1 , \IC) =  LK \times \mathcal{K} \rightarrow \mathcal{K} \end{equation*}
\vspace{-7mm}
\begin{equation*}(\varphi , f) \longmapsto \rho(\varphi(t) ) \cdot f(t) \end{equation*}

\noindent Just as we did above (for d=1), this provides an embedding of the loop group $LK$ into $\Gl_{\rm res}(\mathcal{K}) \cong \Gl_{\rm res}(\HH)$. The central extension $\tilde{\Gl}_{\rm res}$ of $\Gl_{\rm res}$ induces a central of $LK$ and the corresponding Lie algebra cocycle takes a form analogous to \eqref{cocycleintegralformula}. Multiplication in $\IC$ then is just replaced by matrix multiplication and taking the trace in $Mat(d \times d , \IC)$. In our proof we have explicitly avoided any reference to loop groups or cohomology theory as usually found in the mathematical literature, since we feel that they unnecessarily obscure the otherwise simple nature of the proof.
\end{Remark}


\chapter{Three Routes to the Fock space}

\section{CAR Algebras and Representations}
In this section we carry out the quantization of the Dirac field in the spirit of  the ``electron-positron picture''. We assume that the reader is somewhat familiar with creation/annihilation operators and the standard construction of the Fock space and review them just briefly. The rather hands-on construction is followed by a brief discussion of abstract $C^*$ - and CAR algebras. Fock spaces will then arise as representation spaces of irreducible representations of the CAR-algebra. This is quite an abstract mathematical machinery, but I think it's rewarding for different reasons:
\begin{enumerate}
\item It is probably the most common and most developed mathematical description of Fock spaces and ``second quantization''.
\item It constitutes, at least rudimentary, a rigorous mathematical formalization of what physicists are usually trying to say.
\item It provides a language in which the mathematical problems can be formulated in a very precise way and reveals an abstract perspective that can be fruitful from \mbox{time to time}.
\end{enumerate}
What do we mean by a ``fruitful perspective''? For instance, as a physicist with some training in (non-relativistic) quantum theory one is often used to think of ``the'' Fock space as a fundamental object (``the space of all physical states''). This makes it hard to grasp some of the issues we're facing in relativistic quantum field theory, for example the fact that the time evolution is ``leaving'' the Fock space (where else would it go?). Thus it can be helpful to think of the CAR-algebra as the more fundamental object and of different Fock spaces corresponding to different representations. The danger, however, is that this easily becomes a vain exercise in abstract mathematics, detached from the physical problems.

\subsection{The Field Operator}
On the one-particle Hilbert-space $\HH = \HH_+ \oplus \HH_-$ we have a \textit{charge conjugation} operator $\mathcal{C}$ mapping negative energy solutions of the free Dirac equation to positive energy solutions with opposite charge. The charge-conjugation is anti-unitary, in particular \textit{anti-linear}. The exact form of the operator depends on the representation of the Dirac algebra. In the so-called ``Majorana representation'', charge conjugation just corresponds to complex conjugation. In Appendix A.2 we give an intuitive, yet general derivation of the charge-conjugation operator.
\begin{Definition}[Fermionic Fock space]
\mbox{}\\
Let $\FF_+ := \HH_+$ and $\FF_- := \mathcal{C}\HH_-$. We define the \emph{fermionic Fock space} as \footnote{By the direct sums we implicitly understand the completion w.r.to the induced scalar product.} 
\begin{equation} \FF := \bigoplus\limits_{n,m = 0}^{\infty} \FF^{(n,m)} ; \;\; \FF^{(n,m)} := \sideset{}{^n}\bigwedge\FF_+ \otimes \sideset{}{^m}\bigwedge \FF_- \end{equation}
We can  split the Fock space into the different \textit{charge sectors}
\begin{equation} \FF := \bigoplus\limits_{c = 0}^{\infty} \FF^{(c)} ; \;\; \FF^{(c)} := \bigoplus\limits_{n - m = c} \FF^{(n,m)}. \end{equation}
The state 
\begin{equation*} \Omega := 1\otimes 1 \in \FF^{(0,0)} = \IC \otimes \IC\end{equation*} 
is called the \emph{vacuum state}.
\end{Definition}
\noindent Recall the definitions of the \textbf{creation and annihilation operators}.\\
On the ``particle-sector'' $\bigwedge \FF_+$:
\begin{align}
&  a(f): \FF^{(n+1,m)} \longrightarrow \FF^{(n,m)},\notag \\
& a(f) f_0\wedge \dots \wedge f_n := \sum\limits_{k=0}^{n}(-1)^k \langle f , f_k \rangle f_0\wedge\dots \wedge \widehat{f_k}\wedge\dots \wedge f_n\\
& a^*(f): \FF^{(n-1,m)} \longrightarrow \FF^{(n,m)},\notag \\
& a^*(f) f_1\wedge \dots \wedge f_{n-1} = f\wedge f_1 \wedge \dots \wedge f_{n-1} 
\end{align}
On the ``anti-particle sector''  $\bigwedge \FF_-$:
\begin{align}
&  b(g): \FF^{(n,m+1)} \longrightarrow \FF^{(n,m)},\notag \\
& b(g)\, \cc g_0\wedge \dots \wedge \cc g_n := (-1)^n \sum\limits_{k=0}^{n}(-1)^k \langle \cc g , \cc g_k \rangle \cc g_0\wedge\dots \wedge \widehat{\cc g_k}\wedge\dots \wedge \cc g_n\\
& b^*(g): \FF^{(n,m-1)} \longrightarrow \FF^{(n,m)},\notag \\
& b^*(g)\, \cc g_1\wedge \dots \wedge \cc g_{n-1} = \cc g\wedge \cc g_1 \wedge \dots \wedge \cc g_{n-1} 
\end{align}
The reader is probably familiar with the fact that $a$ and $a^*$, as well as $b$ and $b^*$, are formal adjoints of each other and satisfy the \textit{canonical anti-commutation relations}
\begin{equation}\begin{split} &\lbrace a(f_1) , a^*(f_2) \rbrace = a(f_1)a^*(f_2) + a^*(f_2)a^(f_1) = \langle f_1 , f_2 \rangle_{\HH} \cdot \mathds{1}, \;\;\;\forall f_1,f_2 \in \HH_+\\
 &\lbrace b(g_1) , b^*(g_2) \rbrace =  \langle \mathcal{C}g_1 ,\mathcal{C} g_2 \rangle_{\HH} \cdot \mathds{1} = \overline{\langle g_1 , g_2 \rangle} \cdot \mathds{1} = \langle g_2 , g_1 \rangle_{\HH} \cdot \mathds{1}, \;\forall g_1,g_2 \in \HH_-
 \end{split}\end{equation}
 and all other possible combination \textit{anti-commute}.\\

\noindent If $(f_j)_{j\in \IN}$ and $(g_k)_{k\in \IN}$ are ONB's of $\HH_+$ and $\HH_-$ respectively, the elements of the form
\begin{equation}\label{FockONB} a^{*}(f_{j_1})a^{*}(f_{j_2})\dots a^{*}(f_{j_n})b^{*}(g_{k_1})b^{*}(g_{k_2})\dots b^{*}(g_{k_m}) \Omega \in \FF^{(n,m)} \subset \FF \end{equation}
for $j_1 < \dots < j_n; k_1 < \dots < k_m$ and $n,m = 0,1,2,3, \dots$
form an ONB of the Fock space $\FF$.\\

\noindent So, the creation operators acting on the vacuum generate the dense subspace
\begin{equation*} \mathcal{D} := \Bigl\lbrace \text{finite linear combinations of vectors of the form}\; \eqref{FockONB} \Bigr\rbrace \end{equation*}
which is the convenient domain for the second quantization of bounded operators.\\

\begin{Definition}[Field Operator]\label{Def:fieldoperator}
\item For any $f \in \HH$ we define the \emph{field operator} $\Psi(f)$ on $\FF$ by
\begin{equation}\begin{split}\label{fieldoperator}
\Psi(f) := a(P_+ f) + b^*(P_- f)\\
\Psi^*(f) = a^*(P_+f) + b(P_- f)
\end{split}\end{equation}
\item We can view the field operator as an \emph{anti-linear} map $\Psi: \HH \longrightarrow \mathcal{B(F)}$. 
\item It satisfies the \textit{canonical anti-commutation relations} (CAR)
\begin{equation}\label{CAR}\fingbox{\begin{split}
& \bigl\lbrace \Psi(f) , \Psi^*(g) \bigr\rbrace = \langle f , g \rangle \cdot \mathds{1} \\
& \bigl\lbrace \Psi(f) , \Psi(g) \bigr\rbrace = \bigl\lbrace \Psi(f)^* , \Psi^*(g) \bigr\rbrace = 0
\end{split}}\end{equation} 
\end{Definition}

\subsubsection{Second Quantization of Unitary Operators}

\noindent Let $U: \HH \rightarrow \HH$ be a unitary transformation on the one-particle Hilbert space.
We want to lift it to a unitary transformation $\Gamma(U)$ on the Fock space $\FF$. 
In non-relativistic quantum mechanics, we would lift an operator $U$ to the n-particle anti-symmetric Hilbert-space $\sideset{}{^n}\bigwedge \HH$ ``product-wise'', i.e. to
\begin{equation}U \otimes U \otimes \dots \otimes U. \end{equation}
But the naive generalization 
\begin{equation*} f_1\wedge \dots \wedge f_n \otimes \cc g_1 \wedge \dots \wedge \cc g_m \, \longmapsto \, Uf_1\wedge \dots \wedge Uf_n \otimes \cc U g_1 \wedge \dots \wedge \cc U g_m \end{equation*}
make sense only if $U$ preserves the splitting $\HH = \HH_+ \oplus \HH_-$ i.e. only if  $U_{+-}=U_{-+} = 0$.
In general this is not the case and $U$ will mix positive and negative energy states. Physically, this leads to the phenomenon we call pair creation. Mathematically, this leads to trouble.\\

\noindent While it is difficult to say how the unitary transformation is supposed to act on the Fock space, there is a very natural action on the field operator $\Psi$, given by 
\begin{equation}\label{Bogo}\begin{split} & \Psi \longmapsto \beta_u(\Psi) := \Psi \circ U,\\
i.e.\; \;  \beta_u(\Psi)(f) = & \Psi(Uf) = a(P_+ Uf) + b^*(P_- Uf), \;  \forall f \in \HH .\end{split}\end{equation}
$\beta_U$ is called a \textit{Bogoljubov transformation}. It is easy to see that $\tilde{\Psi}:= \beta_U(\Psi)$ is still an antilinear map $\HH \rightarrow \mathcal{B(F)}$ satisfying the CAR \eqref{CAR}. This means that $\tilde\Psi$ is also a field operator, inducing another \textit{representation of the CAR-algebra} with the new annihilation operators \textit{defined} as
\begin{equation}\label{newannihilation}\begin{split}
c(f) := \tilde{\Psi}(f), \; \text{for}\; f \in \HH_+\\
d(f) := \tilde{\Psi}^*(g), \; \text{for}\; g \in \HH_-.
\end{split}\end{equation}
\begin{Definition}[Implementability of Unitary Transformations]\label{Def:Implementability}
\mbox{}\\ 
The unitary transformation $U \in \cu(\HH)$ is \emph{implementable} on the Fock space $\FF$ if there exists a unitary map $\Gamma(U): \FF \rightarrow \FF$ with
\begin{equation}\label{eq:implementation} \Gamma(U)\Psi(f) \Gamma(U)^* = \beta_U(\Psi)(f) = \Psi(Uf), \; \forall f \in \HH. \end{equation}
If $U$ is implementable, the implementation is unique up to a phase.
\end{Definition}

\noindent That the implementation can be determined  only up to a phase is obvious, because if $\Gamma(U)$ is an implementation of $U \in \cu(\HH)$, so is $e^{i\varphi}\,\Gamma(U)$ for any $e^{i\varphi} \in \cu(1)$.\\
\noindent Supposed that $\Gamma(U)$ is an implementation of $U$, we note that by \eqref{eq:implementation} it acts on a basis vector of the form 
\begin{align*}a^{*}(f_{1})a^{*}(f_{2})\dots a^{*}(f_{n})b^{*}(g_{1})b^{*}(g_{2})\dots b^{*}(g_{m})\, \Omega\\
= \;  \Psi^*(f_1)\Psi^*(f_2)\dots\Psi^*(f_n)\Psi(g_1)\Psi(g_2)\dots\Psi(g_m) \,\Omega \end{align*} 
in the following way
\begin{align*}
& \Gamma(U) \, \Bigl( \Psi^*(f_1)\Psi^*(f_2)\dots\Psi^*(f_n)\Psi(g_1)\dots\Psi(g_m) \,\Omega \Bigr)\\
=\, & \Gamma(U) \Psi^*(f_1)\Gamma(U)^*\Gamma(U)\Psi^*(f_2)\Gamma(U)^* \dots \Gamma(U)\Psi(g_{m-1})\Gamma(U)^*\Gamma(U)\Psi(g_m)\Gamma(U)^* \Gamma(U)\, \Omega\\
=\, & \Psi^*(U f_1)\Psi^*(U f_2)\dots\Psi^*(U f_n)\Psi(U g_1)\dots\Psi(U g_m)\;\bigl[\Gamma(U)\, \Omega \bigr] \\
=\, &c^{*}(f_{1})c^{*}(f_{2})\dots c^{*}(f_{n})d^{*}(g_{1})d^{*}(g_{2})\dots d^{*}(g_{m}) \, \widetilde{\Omega}
\end{align*}
with $\widetilde{\Omega} = \Gamma(U) \Omega$. This simple consideration yields an important result:

\begin{Proposition}[New Vacuum]
\mbox{}\\
A unitary transformation $U \in \cu(\HH)$ is implementable on $\FF$, if and only if there exists a vacuum for the new annihilation operators \eqref{newannihilation} defined by $\beta_U(\Psi)$ in the same Fock space.
That is, if there exists a normalized state $\widetilde{\Omega}\in \FF$ with
\begin{align}
c(f) \widetilde{\Omega} = \beta_U(\Psi)(f)\, \widetilde{\Omega} = 0, \;\; \forall f \in \HH_+\\
d(g) \widetilde{\Omega} = \beta_U(\Psi)^*(g)\, \widetilde{\Omega} = 0, \;\; \forall g \in \HH_-.
\end{align}
\end{Proposition}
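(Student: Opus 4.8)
The plan is to prove both implications of the "New Vacuum" proposition directly from the definition of implementability (Def.\ \ref{Def:Implementability}). The forward direction is essentially the computation already carried out just above the proposition's statement, so I would make that rigorous; the reverse direction is the substantive one and requires building $\Gamma(U)$ out of a putative new vacuum $\widetilde\Omega$.

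\textbf{Forward direction.}
Suppose $U$ is implementable with implementation $\Gamma(U)$. Set $\widetilde\Omega := \Gamma(U)\Omega$, which is normalized since $\Gamma(U)$ is unitary. Using the intertwining relation $\Gamma(U)\Psi(f)\Gamma(U)^* = \Psi(Uf)$ and its adjoint $\Gamma(U)\Psi^*(f)\Gamma(U)^* = \Psi^*(Uf)$, together with $c(f) = \beta_U(\Psi)(f) = \Psi(Uf)$ for $f\in\HH_+$, I compute
\begin{equation*}
c(f)\,\widetilde\Omega = \Psi(Uf)\,\Gamma(U)\Omega = \Gamma(U)\,\Psi(f)\,\Gamma(U)^*\Gamma(U)\,\Omega = \Gamma(U)\,\Psi(f)\,\Omega.
\end{equation*}
For $f\in\HH_+$ we have $P_-f = 0$, so $\Psi(f) = a(P_+f) = a(f)$, and $a(f)\Omega = 0$ because the vacuum is annihilated by all $a(\cdot)$. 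Hence $c(f)\widetilde\Omega = 0$. Likewise for $g\in\HH_-$, $d(g) = \beta_U(\Psi)^*(g) = \Psi^*(Ug)$, and $d(g)\widetilde\Omega = \Gamma(U)\Psi^*(g)\Omega = \Gamma(U)\,b(P_-g)\,\Omega = \Gamma(U)\,b(g)\,\Omega = 0$, using $P_+g = 0$ and that $\Omega$ is annihilated by all $b(\cdot)$. This gives the vacuum conditions.

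\textbf{Reverse direction.}
Conversely, assume there is a normalized $\widetilde\Omega\in\FF$ annihilated by all $c(f)$, $f\in\HH_+$, and all $d(g)$, $g\in\HH_-$. The idea is to \emph{define} $\Gamma(U)$ on the dense domain $\mathcal D$ by the formula the computation above forces:
\begin{equation*}
\Gamma(U)\bigl(\Psi^*(f_1)\cdots\Psi^*(f_n)\Psi(g_1)\cdots\Psi(g_m)\Omega\bigr) := \Psi^*(Uf_1)\cdots\Psi^*(Uf_n)\Psi(Ug_1)\cdots\Psi(Ug_m)\,\widetilde\Omega,
\end{equation*}
equivalently $\Gamma(U)\bigl(c^*(f_1)\cdots c^*(f_n)d^*(g_1)\cdots d^*(g_m)\Omega\bigr) = c^*(f_1)\cdots c^*(f_n)d^*(g_1)\cdots d^*(g_m)\widetilde\Omega$ once one restricts $f_i\in\HH_+$, $g_j\in\HH_-$. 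I would first check this is \emph{well-defined}: the CAR \eqref{CAR} satisfied by $\widetilde\Psi = \beta_U(\Psi)$ (noted already in the text) means the new operators $c,c^*,d,d^*$ obey exactly the same anti-commutation relations as $a,a^*,b,b^*$; therefore any linear relation among the generating vectors on the left is mapped to the same relation on the right, so $\Gamma(U)$ is a well-defined linear map on $\mathcal D$. Next, isometry: $\langle\Gamma(U)\Phi,\Gamma(U)\Phi'\rangle = \langle\Phi,\Phi'\rangle$ for $\Phi,\Phi'\in\mathcal D$, because both inner products are computed by moving annihilation operators to the right through the CAR until they hit the respective vacuum — the combinatorics is identical, $\widetilde\Omega$ is normalized, and $c(f)\widetilde\Omega = d(g)\widetilde\Omega = 0$ plays exactly the role that $a(f)\Omega = b(g)\Omega = 0$ plays on the other side. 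Hence $\Gamma(U)$ extends to an isometry of $\FF$. For surjectivity, I would run the same construction with $U^{-1}$ in place of $U$ (its own vacuum condition being supplied by applying $\Gamma(U)^*$, or more cleanly by noting the roles of the two representations are symmetric) to produce a two-sided inverse, so $\Gamma(U)$ is unitary. Finally, the intertwining relation \eqref{eq:implementation} holds by construction: applying $\Psi(f)$ or $\Psi^*(f)$ before $\Gamma(U)$ corresponds to inserting $c(\cdot)$ or $c^*(\cdot)$ (resp.\ $d^*,d$) after it, which is precisely $\Gamma(U)\Psi^{(*)}(f)\Gamma(U)^{-1} = \widetilde\Psi^{(*)}(f) = \beta_U(\Psi)^{(*)}(f)$; one checks it on the generating vectors of $\mathcal D$ and extends by density and boundedness.

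\textbf{Main obstacle.}
The one genuinely delicate point is the \emph{well-definedness and isometry} of $\Gamma(U)$ on $\mathcal D$ — i.e.\ verifying that the prescription descends from "words in the generators" to actual Fock vectors. This is where one must use, and only use, that $\widetilde\Psi = \beta_U(\Psi)$ satisfies the \emph{same} CAR as $\Psi$ and that $\widetilde\Omega$ is a normalized joint vacuum for the $c,d$; everything else (boundedness, unitarity via the inverse, the intertwining property) is then routine. It is worth emphasizing that at no point does one need the Shale--Stinespring condition here: the proposition is a purely algebraic equivalence, and Shale--Stinespring enters only later as the criterion guaranteeing that such a $\widetilde\Omega$ actually exists in $\FF$.
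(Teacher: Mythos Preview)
Your proof is correct and follows essentially the same route as the paper: set $\widetilde\Omega = \Gamma(U)\Omega$ for the forward direction, and for the converse define $\Gamma(U)$ by sending the basis vectors $a^*(f_1)\cdots a^*(f_n)b^*(g_1)\cdots b^*(g_m)\Omega$ to $c^*(f_1)\cdots c^*(f_n)d^*(g_1)\cdots d^*(g_m)\widetilde\Omega$. You are in fact more careful than the paper, which simply asserts that this map is unitary and an implementation, whereas you spell out why well-definedness and isometry follow from the CAR and the vacuum property of $\widetilde\Omega$.
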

\begin{proof} If $\Gamma(U)$ is an implementer of $U$ set $\widetilde{\Omega} = \Gamma(U) \Omega$. Then, 
\begin{align}
\beta_U(\Psi)(f) \widetilde{\Omega} = \Gamma(U) \Psi(f) \Gamma(U)^*\Gamma(U) \Omega = \Gamma(U) \Psi(f) \Omega = 0, \; \; \forall f \in \HH_+\\
\beta_U(\Psi)^*(g) \widetilde{\Omega} =\Gamma(U) \Psi(g) \Gamma(U)^*\Gamma(U) \Omega = \Gamma(U) \Psi(g) \Omega = 0, \;\; \forall g \in \HH_-
\end{align}
hence $\widetilde{\Omega}$ is a vacuum for $\beta_U(\Psi)$. 
Conversely, if there exists a vacuum $\widetilde{\Omega}$ for $\beta_U(\Psi)$, we can define a unitary transformation on $\FF$ by
\begin{align*}&a^{*}(f_{1})a^{*}(f_{2})\dots a^{*}(f_{n})b^{*}(g_{1})b^{*}(g_{2})\dots b^{*}(g_{m})\, \Omega\\
\longmapsto 
\,e^{i\phi} & c^{*}(f_{1})c^{*}(f_{2})\dots c^{*}(f_{n})d^{*}(g_{1})d^{*}(g_{2})\dots d^{*}(g_{m}) \, \widetilde{\Omega}
\end{align*}
with an arbitrary phase $e^{i\phi}$ and the computation above shows that this is an implementation of $U$ on $\FF$.\\  
\end{proof}

\noindent In the simplest case, when $U$ leaves the positive and negative energy subspace invariant, we note that
$c(f) = a(Uf), \; \forall f \in \HH_+$ and $d(g) = b(Ug),\; \forall g \in \HH_-$, so that $\widetilde{\Omega} = e^{i\varphi} \Omega$. Of course in this case, the phase is conveniently chosen to be 1. Consequently, the implementation is nothing else than
\begin{equation}\begin{split} & f_1\wedge \dots \wedge f_n \otimes \cc g_1 \wedge \dots \wedge \cc g_m = a^{*}(f_{1})\dots a^{*}(f_{n})b^{*}(g_{1})\dots b^{*}(g_{m})\, \Omega\\ 
\stackrel{\Gamma(U)}\longmapsto \; & c^{*}(f_{1})\dots c^{*}(f_{n})d^{*}(g_{1})\dots d^{*}(g_{m}) \, \Omega = Uf_1\wedge \dots \wedge Uf_n \otimes \cc U g_1 \wedge \dots \wedge \cc U g_m. \end{split}\end{equation}
We see that the Bogoljubov transformation of the field operator and its implementation (if it exists) are indeed the natural generalization of the product-wise lift of $U$ to the Fock-space.\\
\noindent Almost sneakily, we have established a duality in our thinking about the implementation problem that is quite exciting. Either we can formulate the problem as one about lifting a unitary transformation $U$ on $\HH$ to the Fock space $\FF$, as was our initial motivation. Or we can ask the equivalent question, whether the two representations of the CAR-algebra induced by $\Psi$ and $\beta_U(\Psi)$, respectively, are equivalent. So far, we have but hinted at this fact. A more detailed discussion will follow in the next section.\\

\noindent We can finally state our first version of the Shale-Stinespring theorem.

\begin{Theorem}[Shale-Stinespring, explicit version]\footnote{see \cite{Tha}, Thms. 10.6 \& 10.7}
 \mbox{}\\
A unitary operator $U \in \cu(\HH)$ is unitarily implementable on $\FF$ if and only if the operators $U_{+-}$ and $U_{-+}$ are Hilbert-Schmidt. In this case, an implementation $\Gamma(U)$ acts on the vacuum as

\begin{equation}\label{transformedvacuum}\addtolength{\fboxsep}{5pt} \boxed{\Gamma(U)\, \Omega := N e^{i\phi} \prod\limits_{l=1}^{L} a^*(f_l)\prod\limits_{m=1}^{M} b^*(g_m)\, e^{Aa^*b^*} \, \Omega }\end{equation}
where
\begin{itemize}
\item $e^{\phi} \in \cu(1)$ is an arbitrary phase 
\item $N = \sqrt{\det(1-U_{+-}U^*_{-+})} = \sqrt{\det(1-U_{-+}U^*_{+-})}$ is a normalization constant\footnote{Note that, in contrast to \cite{Tha}, we use the notation $U^*_{-+} = (U^*)_{-+}$ and not $(U_{-+})^* = (U^*)_{+-}$} 
\item $A:= (U_{+-})(U_{--})^{-1}$ is Hilbert-Schmidt
\item $\lbrace f_1, \ldots , f_L \rbrace$ is an ONB of $\ker U^*_{++}$ and
$\lbrace g_1, \ldots, g_M \rbrace$ an ONB of $\ker U^*_{--}$.
\end{itemize}
\end{Theorem}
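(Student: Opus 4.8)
The plan is to prove the Shale--Stinespring theorem in two halves: first the equivalence ``$U$ implementable $\iff U_{+-},U_{-+}\in I_2$'', then the explicit formula \eqref{transformedvacuum} for the action on the vacuum. For the forward implication of the equivalence, I would argue by contraposition via the ``New Vacuum'' proposition already proved: if $U_{-+}$ fails to be Hilbert--Schmidt, I would show the candidate vacuum $\widetilde\Omega$ would have to be an infinite-norm (non-normalizable) object. Concretely, decompose $U$ in the $2\times 2$ block form $\left(\begin{smallmatrix} a & b\\ c& d\end{smallmatrix}\right)$ with respect to $\HH=\HH_+\oplus\HH_-$; the new annihilation operators $c(f)=\Psi(Uf)$, $d(g)=\Psi^*(Ug)$ are linear combinations of old creators and annihilators (a Bogoliubov transformation), and the equations $c(f)\widetilde\Omega=0$, $d(g)\widetilde\Omega=0$ force $\widetilde\Omega$ to be (proportional to) $\exp(A a^*b^*)\Omega$ in the charge-zero case, where $A=U_{+-}U_{--}^{-1}$ intertwines the kernels. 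The formal Gaussian state $\exp(Aa^*b^*)\Omega$ lies in $\FF$ if and only if $A$ is Hilbert--Schmidt, and one computes $\|\exp(Aa^*b^*)\Omega\|^2=\det(1+A^*A)$ (via the expansion $\sum_k \mathrm{tr}(\bigwedge^k A^*A)$, i.e.\ the Fredholm-type determinant over the antisymmetric powers), which is finite exactly when $A\in I_2$. So the plan: reduce implementability to existence of a normalizable $\widetilde\Omega$; reduce that to $A\in I_2$; and observe $A\in I_2 \iff U_{+-}\in I_2 \iff [\epsilon,U]\in I_2$, the last steps using that $d=U_{--}$ is Fredholm (invertible after a finite-rank correction, handled by the charge decomposition) and the characterization of $\Ur(\HH)$ from the excerpt.

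For the converse (Shale--Stinespring condition implies implementability), I would construct $\Gamma(U)$ explicitly by the formula \eqref{transformedvacuum} on the vacuum and then extend by $\Gamma(U)\Psi(f)\Gamma(U)^*:=\Psi(Uf)$ to the dense domain $\mathcal D$ of finite products of field operators applied to $\Omega$. The key checks are: (i) the right-hand side of \eqref{transformedvacuum} is a well-defined vector in $\FF$ --- this is where $A\in I_2$ is used, so that $e^{Aa^*b^*}\Omega\in\FF$, and the finite products of creators $a^*(f_l)$, $b^*(g_m)$ over ONBs of $\ker U_{++}^*$, $\ker U_{--}^*$ take care of the nonzero-charge part (these kernels are finite-dimensional since $a,d$ are Fredholm); (ii) $\Gamma(U)\widetilde\Omega$-built state is annihilated by the new annihilators $c(f),d(g)$, which is a direct CAR computation using $\{c(f),d(g)^*\}$-type relations; (iii) $\Gamma(U)$ so defined is isometric on $\mathcal D$, hence extends to a unitary --- this is exactly where the normalization constant $N=\sqrt{\det(1-U_{+-}U_{-+}^*)}$ enters, chosen so that $\|\Gamma(U)\Omega\|=1$, and one verifies $\det(1-U_{+-}U_{-+}^*)=\det(1-U_{-+}U_{+-}^*)$ using the polar/SVD symmetry of the off-diagonal blocks together with the unitarity relations $a^*a+c^*c=1$, $b^*b+d^*d=1$ (here $(U^*)_{-+}=U_{-+}^*$ in the paper's convention, so $1-U_{+-}U^*_{-+}=1-U_{+-}(U^*)_{+-}{}^{\!*}$... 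I would state the identity cleanly as $N^2=\det(U_{++}^*U_{++})=\det(U_{--}^*U_{--})$ on the charge-zero component). Uniqueness up to a phase is immediate from the ``New Vacuum'' proposition: two implementers differ by a unitary commuting with all $\Psi(f)$, hence with all creators and annihilators, hence (by irreducibility of the Fock representation) a scalar of modulus one.

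The main obstacle I expect is the bookkeeping in part (iii): showing that the map defined on $\mathcal D$ by $a^*(f_1)\cdots\Omega\mapsto c^*(f_1)\cdots\widetilde\Omega$ is genuinely isometric, which amounts to showing that the inner products $\langle\,\widetilde\Omega,\ c(h_1)\cdots c(h_k)\ \widetilde\Omega\,\rangle$ reproduce the right Slater determinants --- equivalently, that the new creators/annihilators, though defined as Bogoliubov combinations of the old ones, satisfy the CAR and have $\widetilde\Omega$ as a \emph{cyclic} vacuum. This requires knowing that $\widetilde\Omega$, built as $\prod a^*(f_l)\prod b^*(g_m)\,e^{Aa^*b^*}\Omega$, is not only annihilated by all $c(f),d(g)$ but is the \emph{unique} such vector (up to scalar) in $\FF$; I would get this by showing the commutant of $\{c(f),c^*(f),d(g),d^*(g)\}$ is trivial, i.e.\ the new CAR representation is irreducible, which follows from the original representation being irreducible and $\Gamma(U)$ being a well-defined bijection on a dense set. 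All the determinant identities ($\|e^{Aa^*b^*}\Omega\|^2=\det(1+A^*A)$, $N^2\det(1+A^*A)=1$) are then routine once the Hilbert--Schmidt hypothesis is in place, so I would not grind through them; I would cite the excerpt's properties of the Fredholm determinant and the structure of $\bigwedge\HH_\pm$. For the precise statement I would follow Thaller (\cite{Tha}, Thms.~10.6--10.7) for the exact form of the constants.
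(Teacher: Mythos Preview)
The paper does not actually prove this theorem: it is stated with a citation to Thaller \cite{Tha}, Thms.~10.6 \& 10.7, followed only by a few clarifying remarks (that $\ker U^*_{++}$ and $\ker U^*_{--}$ are finite-dimensional because $U_{++},U_{--}$ are Fredholm, and how $(U_{--})^{-1}$ is to be understood as a pseudoinverse extended by zero on $\ker U^*_{--}$). Your outline is essentially the standard Bogoliubov-transformation argument one finds in Thaller, so in that sense you are aligned with what the paper defers to.

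One point to tighten in your forward implication: you invoke $A=U_{+-}U_{--}^{-1}$ and the Fredholm property of $d=U_{--}$ \emph{before} having established that the off-diagonal blocks are Hilbert--Schmidt, which is what guarantees Fredholmness of $a,d$ in the first place. As written this is circular. The clean fix is either to run the necessity argument via Powers--St{\o}rmer (implementability $\Rightarrow$ equivalence of the GNS representations for $P_-$ and $UP_-U^*$ $\Rightarrow$ $P_- - UP_-U^*\in I_2$ $\Rightarrow$ $U_{\pm\mp}\in I_2$), which the paper in fact records later as Theorem~\ref{Thm:PoSto}, or to argue directly from the annihilation conditions on the Fock-basis coefficients of $\widetilde\Omega$ without first assuming $U_{--}^{-1}$ exists; the recursion itself then forces the relevant square-summability. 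Apart from this ordering issue, your plan for the sufficiency direction and the explicit vacuum formula is sound and matches the cited source.
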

\noindent Recall that $U_{++}$ and $U_{--}$ are Fredholm-operators, therefore $\ker U^*_{++}$ and $\ker U^*_{--}$ are indeed finite-dimensional subspaces. Also, $(U_{--})^{-1}$ is well-defined and bounded on $\im U_{--} = (\ker U^*_{--})^{\perp}$ and we extend it to 0 on $\ker U^*_{--}$. In this way, the operator $A:= (U_{+-})(U_{--})^{-1}$ is well-defined.\\  

\noindent In the explicit form of the transformed vacuum our result \eqref{chargetransform} about the net-charge "created" by $U$ becomes manifest. From \eqref{transformedvacuum} we can directly read of that $\Gamma(U) \Omega$ contains $M = \dim \ker(U^*_{--})$ particles of negative charge and $L = \dim \ker (U^*_{++})$ particles of positive charge. Hence the net charge is 
\begin{equation}\label{L-M}\begin{split} L -M  & = \dim \ker(U^*_{++}) - \dim \ker (U^*_{--}) \\
& = \dim \ker (U_{--}) - \dim \ker(U^*_{--})\\
& = \ind(U_{--}) =  - \ind(U_{++}).\footnotemark\end{split}\end{equation}
For the last equality we need a Lemma, which is proven in Appendix A.3.

\noindent The different sign in comparison to \eqref{chargetransform} comes from the unfortunate fact that the roles of $\HH_+$ and $\HH_-$ are interchanged in the two chapters.

\subsubsection{Second Quantization of Hermitian Operators}

\noindent Now we're looking at the second quantization of an arbitrary bounded operator $A: \HH \rightarrow \HH$, which is usually denoted by $\mathrm{d}\Gamma(A)$. Self-adjoint operators, as generators of the unitary group, are of particular interest. To $\sideset{}{^n}\bigwedge \HH$, we would lift an operator $A$ additively as: 
\begin{equation}\label{additivelift}A \otimes \mathds{1}\otimes \dots \otimes \mathds{1}\, + \,\mathds{1}\otimes A \otimes \mathds{1} \dots \mathds{1} \, +\, \dots +\mathds{1} \otimes \dots \otimes \mathds{1}\otimes A \end{equation}  
But again, the immediate generalization to $\FF = \bigwedge \mathcal{H}_+ \otimes  \bigwedge \mathcal{C (H_-)}$ makes sense only  if $A$ respects the polarization.
So again, we should look at the field operator to formulate a generalization of the additive lifting prescription.
We will demand the following conditions for the ``second quantization'': 
\begin{equation}\label{dgamma}\begin{split} & [\mathrm{d}\Gamma(A) , \Psi^*(f)] = \Psi^*(Af)\\ 
& [\mathrm{d}\Gamma(A) , \Psi(f)] = - \Psi(A^*f)
\end{split}\end{equation}
The minus-sign in the second line might look ad-hoc, but its relevance will become apparent soon.
Note that  if $\mathrm{d}\Gamma(A)$ satisfies \eqref{dgamma} so does $\mathrm{d}\Gamma(A) + c\,\mathds{1}_\FF$ for every constant $c \in \IC$.\\ 
Therefore, the implementation of a Hermitian operator, if it exists, is well defined only up to a constant. 
An operator $\mathrm{d}\Gamma(A)$ satisfying \eqref{dgamma} then acts on a basis state \eqref{FockONB} as
\begin{align*}
&\mathrm{d}\Gamma(U) \, \Bigl( \Psi^*(f_1)\Psi^*(f_2)\dots\Psi^*(f_n)\Psi(g_1)\dots\Psi(g_m) \Omega \Bigr)\\
= & \sum\limits_{j=1}^{n}  \Psi^*(f_1)\dots \Psi^*(f_{j-1})\Psi^*(Af_j)\Psi^*(f_{j+1})\dots\Psi^*(f_n)\,\Psi(g_1)\dots\Psi(g_m)\,\Omega\\
- &\sum\limits_{k=1}^{m} \; \Psi^*(f_1)\Psi^*(f_2)\dots\Psi^*(f_n)\,\Psi(g_1)\dots\Psi(g_{k-1})\Psi(A^*g_k) \Psi(g_{k+1})\dots\Psi(g_m)\,\Omega\\
+ & \; \Psi^*(f_1)\Psi^*(f_2)\ldots\Psi^*(f_n)\,\Psi(g_1)\Psi(g_2)\ldots\Psi(g_m) \bigl( \mathrm{d}\Gamma(A)\Omega \bigr)
\end{align*}

\noindent We see that $\mathrm{d}\Gamma(A)$ is well-defined on the dense domain $\mathcal{D}$ if and only if the last summand $\mathrm{d}\Gamma(A) \Omega$ is well-defined. In this case, $\mathrm{d}\Gamma(A)$ can be regarded as a generalization of the additive-lift \eqref{additivelift}. But how do we find such an operator $\mathrm{d}\Gamma(A)$ on $\FF$, satisfying \ref{dgamma}?

\noindent Let $(e_k)_{k \in \mathbb{Z}}$ be a basis of $\mathcal{H}$, s.t. $(e_k)_{k \leq 0}$ is a basis of $\mathcal{H}_-$ and $(e_k)_{k > 0}$ a basis of $\mathcal{H}_+$.\\ 
A first educated guess for the second quantization of a bounded operator $A$ would be \\ 
$\mathrm{d}\Gamma'(A) := A\Psi^*\Psi$, where $A\Psi^*\Psi$ is short-hand notation for\footnote{Consult e.g. \cite{Tha} \S 10 for more details.}
\begin{equation*}\sum\limits_{i,j \in \IZ} \langle e_i , A e_j \rangle \Psi^*(e_j)\Psi(e_i).\end{equation*}
This expression is easily shown to be independent of the choice of basis. We can decompose it into creation and annihilation operators and write $A\Psi^*\Psi = Aa^*a + Aa^*b^* + Aba + Abb^*$. The vacuum expectation value of this operator can then be computed as 
\begin{align*}  \langle \Omega, \mathrm{d}\Gamma'(A) \Omega \rangle & =  \langle \Omega, A\Psi^*\Psi \Omega \rangle = \langle \Omega, Abb^* \Omega\rangle\\
& = \sum\limits_{i,j <0} \langle e_i , A e_j \rangle \langle \Omega, b(e_i)b^*(e_j) \Omega \rangle\\
& = \sum\limits_{i,j <0} \langle e_i , A e_j \rangle\, \delta_{ij} = \trace(A_{--}).
\end{align*}
So unless $A_{--}$ is trace-class, not even the vacuum state is in the domain of $A\Psi^*\Psi$ and thus, 
A more promising attempt is therefore
\begin{equation}\label{normalordering}\mathrm{d}\Gamma(A) =\, :A\Psi^*\Psi: \,= Aa^*a + Aa^*b^* + Aba - Ab^*b 
\end{equation}
\noindent which also satisfies \eqref{dgamma}. Physicists call this \textit{normal ordering} and denote it by two colons embracing expressions like $A\Psi^*\Psi$.
With this prescription, $:A\Psi^*\Psi:\Omega = Aa^*b^* \Omega$ and therefore
\begin{equation}\langle \Omega, \mathrm{d}\Gamma(A)\, \Omega \rangle =  \langle \Omega, :A\Psi^*\Psi: \Omega \rangle = 0.\end{equation}
This implies immediately \begin{equation}\label{dgamma-trace}:A\Psi^*\Psi: \;= A\Psi^*\Psi - \trace(A_{--})\cdot \mathds{1} \end{equation} whenever $\lvert\trace(A_{--})\rvert < \infty$,  so that in this case, $A\Psi^*\Psi$ and $:A\Psi^*\Psi:$ are both second quantizations of $A$ differing by a complex constant (or a real constant, if $A$ is self-adjoint).

\noindent Assuming that $A_{+-}: \HH_- \rightarrow \HH_+$ is compact, there exists an ONB $(u_j)_j$ of $\HH_+$ and $(v_j)_j$ of $\HH_-$ and singular values $\lambda_j \geq 0$  such that $A_{+-} = \sum\limits_{j=0}^{\infty} \lambda_j \langle v_j , \cdot \rangle u_j$.\footnote{$\lbrace\lambda_j^2\rbrace$ are the eigenvalues of the positive operator $(A_{+-})^*(A_{+-})$.}\\ 
Hence, $Aa^*b^* = \sum\limits_{j}\lambda_j a^*(u_j)b^*(v_j)$ and we compute
\begin{align*}\lVert Aa^*b^* \Omega \rVert ^2 = & \sum\limits_{k,j} \lambda_k\lambda_j \langle \Omega, b(v_k)a(u_k)a^*(u_j)b^*(v_j)\Omega \rangle\\
= & \sum\limits_{k,j} \lambda_k\lambda_j \langle \Omega, b(v_k)\bigl(\lbrace a(u_k), a^*(u_j) \rbrace - a^*(u_j)a(u_k) \bigr) b^*(v_j)\Omega \rangle\\
= & \sum\limits_{k,j} \lambda_k\lambda_j \Bigl(\delta_{jk}\langle \Omega, \bigl(\lbrace b(v_k), b^*(v_j) \rbrace - b^*(v_j)b(v_k)\bigr) \Omega \rangle - \langle \Omega, b(v_k)a^*(u_j)b^*(v_j)a(u_k)  \Omega \rangle \Bigr)\\
= & \sum\limits_{k,j} \lambda_k\lambda_j \delta_{jk} = \sum\limits_k \lambda_k^2 = \lVert A_{+-} \rVert_2^2
\end{align*}
This shows that $:A\Psi^*\Psi:$ is well-defined on the dense domain $\mathcal{D}$ if and only if $A_{+-}$ is of Hilbert-Schmidt type. For a Hermitian operator, the same must be true for the adjoint so we require $A_{-+}$ to be Hilbert-Schmidt as well.

 \noindent How does this relate to the discussion in the previous section? The relationship between unitary and self-adjont operators is claryfied by \textit{Stone's Theorem}, which says that every strongly continuous, one-parameter group of unitary operators is generated by a unique self-adjoint operator. We are in particular interested in the free time evolution $U_t = e^{-itD_0}$.\\
Let $\Psi_t := \beta_{U_t}(\Psi)$ be the time-evolving field operator. Then, for any fixed $f$ in the domain of $D_0$, $\Psi_t(f)$ is even norm-differentiable in t because 
\begin{align*} & \lim\limits_{h \rightarrow 0}\;  \bigl\lVert \frac{1}{h}[ \Psi_{t+h}(f) - \Psi_t(f)] - \Psi_t(-i D_0 \,f)\bigr \rVert =\lim\limits_{h \rightarrow 0}\; \bigl\lVert  \Psi \bigl(e^{iD_0t}[\frac{1}{h}(e^{iD_0h} - \mathds{1}) + iD_0] f\bigr) \bigr\rVert\\
 = & \lim\limits_{h \rightarrow 0}\; \bigl\lVert \bigl[\frac{1}{h}(e^{iD_0h} - \mathds{1}) + iD_0\bigr] f \bigr\rVert = 0 \end{align*} and thus 
\begin{equation} \frac{d}{dt}\, \Psi_t(f) = \Psi_t(-i D_0\,f) =  i \Psi_t(D_0\,f). \end{equation} 
The same is true for $\Psi^*_t$, but note that $\Psi_t^*$ is now linear and not anti-linear in f.\\
On a basis state of the form \eqref{FockONB} we therefore compute
\begin{align*} & i \frac{d}{dt}\Bigl\lvert_{t=0}\; \Bigl[ \Gamma(U_t)  \Bigl( \Psi^*(f_1)\Psi^*(f_2)\dots\Psi^*(f_n)\Psi(g_1)\dots\Psi(g_m)\, \Omega \Bigr)\Bigr]\\
= \; & i \frac{d}{dt}\Bigl\lvert_{t=0}\;  \Bigl[ \Psi_t^*(f_1)\Psi_t^*(f_2)\dots\Psi_t^*(f_n)\Psi_t(g_1)\dots\Psi_t(g_m) \,\Omega \Bigr]\\
= \; & \Psi^*(D_0 f_1)\Psi^*(f_2)\dots \Psi^*(f_n)\Psi(g_1)\dots\Psi(g_m) \Omega
+ \Psi^*(f_1)\Psi^*(D_0f_2)\dots\Psi^*(f_n)\Psi(g_1)\dots\Psi(g_m) \Omega + \cdots\\
+\;  & \Psi^*(f_1)\Psi^*(f_2)\dots\Psi^*(D_0 f_n)\Psi(g_1)\dots\Psi(g_m) \Omega
-  \Psi^*(f_1)\Psi^*(f_2)\dots\Psi^*(f_n)\Psi(D_0 g_1)\dots\Psi(g_m) \Omega - \cdots\\
-\;  & \Psi^*(f_1)\Psi^*(f_2)\dots\Psi^*(f_n)\Psi(g_1)\dots\Psi(D_0 g_m) \Omega\\
=\;& \mathrm{d}\Gamma(D_0)  \Bigl( \Psi^*(f_1)\Psi^*(f_2)\dots\Psi^*(f_n)\Psi(g_1)\dots\Psi(g_m)\, \Omega \Bigr)
\end{align*}  
This means:
\begin{equation} i \frac{d}{dt} \Gamma(e^{-iD_0t})\,=\, \mathrm{d}\Gamma(D_0)\,=\; :D_0\Psi^*\Psi: \end{equation}
on the dense domain where both sides are well-defined.
The operator $\mathrm{d}\Gamma(D_0)$ is called the \textbf{second quantization of the free Dirac Hamiltonian.}\\

\noindent If $(f_k)_{k \in \mathbb{N}}$ is a basis of $\HH_+$ and  $(g_k)_{k \in \mathbb{N}}$ a basis of $\HH_-$ then
\begin{equation}\label{D0semibdd} \fingbox{\begin{split}\mathrm{d}\Gamma(D_0)\,&=\; :D_0\Psi^*\Psi:\\
 = & \sum\limits_{i=1}^{\infty} \sum\limits_{j=1}^{\infty} \Bigl( \langle f_i, D_0 f_j \rangle a^*(f_i)a(f_j) - \langle g_j, D_0 g_i \rangle b^*(g_i)b(g_j) \Bigl)\end{split}}\end{equation}
which is obviously semi-bounded from below because $-D_0$ is positive definite on $\HH_-$.\\
Note that the anti-linearity of the charge-conjugation and thus of the Fock-space sectors $\bigwedge \FF_-$ makes all the difference. It is also the origin of the minus-sign in \eqref{dgamma}. Similarly, the global gauge transformations $e^{-it\mathds{1}}$ are generated by the \textit{charge-operator}
\begin{equation} Q := \mathrm{d}\Gamma(\mathds{1}) = \sum\limits_{i=1}^{\infty} \sum\limits_{j=1}^{\infty} \Bigl( a^*(f_i)a(f_j) -  b^*(g_i)b(g_j) \Bigl) \end{equation}
which is just the difference of the number operators on $\bigwedge\FF_+$ and $\bigwedge\FF_-$.\\
Of course, the generators of the free time evolution and of the global gauge transformations are easy to compute because both operators are diagonal w.r.to the polarization\\
$\HH = \HH_+ \oplus \HH_-$. In particular, we can set $\Gamma(U_t) \Omega = \Omega, \forall t \in \IR$.
However, the suggested relationship between $\Gamma$ and $\mathrm{d}\Gamma$ is found to be true in the general setting:

\begin{Theorem}[Generators of Unitary Groups]
\mbox{}\\
Let $A$ be a self-adjoint operator on $\HH$ with $[\epsilon, A]$ Hilbert-Schmidt.  Then $\mathrm{d}\Gamma(A) =\, :A \Psi^*\Psi:$ as defined in \eqref{normalordering} is essentially self-adjoint on $\FF$ and for its (unique) self-adjoint extension on the Fock space $\FF$ it is true that $e^{i\,:A \Psi^*\Psi:}$ is an implementation of $e^{iA}$.\\
In particular, we can choose phases for the implementations such that 
\begin{equation}\label{expdgamma} \Gamma(e^{iA}) = e^{i \,\mathrm{d}\Gamma(A)} = e^{i\, :A \Psi^*\Psi:} \end{equation}
\end{Theorem}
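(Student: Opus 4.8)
The plan is to establish this in two stages: first the self-adjointness, then the implementation property. For the essential self-adjointness of $\mathrm{d}\Gamma(A) = \, :A\Psi^*\Psi:$ on the dense domain $\mathcal{D}$, I would decompose the operator according to the polarization as $\mathrm{d}\Gamma(A) = Aa^*a - Ab^*b + Aa^*b^* + Aba$ (from \eqref{normalordering}) and treat the even part and the odd part separately. The even part $Aa^*a - Ab^*b$ is the standard "diagonal" second quantization of the bounded operator $A_{even}$; it is essentially self-adjoint on $\mathcal{D}$ by the usual Nelson analytic-vector argument, since $\mathcal{D}$ consists of finite-particle states and $A_{even}$ is bounded. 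The odd part $Aa^*b^* + Aba$ is symmetric (the two summands are formal adjoints of each other because $(A_{+-})^* = (A^*)_{-+} = A_{-+}$ when $A$ is self-adjoint) and is, by the Hilbert-Schmidt computation carried out in the excerpt just before the theorem (the calculation showing $\lVert Aa^*b^*\Omega\rVert^2 = \lVert A_{+-}\rVert_2^2$), a well-defined operator whose action increases or decreases particle number by exactly two. One then shows it is infinitesimally bounded — indeed relatively bounded with relative bound zero — with respect to the number operator $N$, which is essentially self-adjoint on $\mathcal{D}$; combined with the fact that $N$ commutes appropriately with the even part, a Kato--Rellich / Glimm--Jaffe-type perturbation argument gives essential self-adjointness of the full $\mathrm{d}\Gamma(A)$ on $\mathcal{D}$. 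The key input is precisely $[\epsilon, A] \in I_2(\HH)$, i.e. $A_{+-}, A_{-+}$ Hilbert-Schmidt, which is what makes the odd part $N$-bounded.

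For the implementation property, I would argue via Stone's theorem and the defining relation \eqref{dgamma}. Write $\Gamma_t := e^{it\,\mathrm{d}\Gamma(A)}$, a strongly continuous unitary one-parameter group by the self-adjointness just established. The goal is to show $\Gamma_t \Psi(f)\Gamma_t^* = \Psi(e^{itA}f)$ for all $f \in \HH$, which is \eqref{eq:implementation} for $U = e^{itA}$ at $t=1$. To get this, I would differentiate: on the domain $\mathcal{D}$, the commutator relations \eqref{dgamma} give $\frac{d}{dt}\bigl(\Gamma_t \Psi^*(f)\Gamma_t^*\bigr) = i\,\Gamma_t[\mathrm{d}\Gamma(A),\Psi^*(f)]\Gamma_t^* = i\,\Gamma_t\Psi^*(Af)\Gamma_t^*$, so both $t\mapsto \Gamma_t\Psi^*(f)\Gamma_t^*$ and $t\mapsto \Psi^*(e^{itA}f)$ satisfy the same linear differential equation with the same initial value $\Psi^*(f)$ at $t=0$; by uniqueness (applied on a suitable core, e.g. vectors in $\mathcal{D}$ with $f$ in the domain of $A$, and then extended by density and boundedness of the field operators, $\lVert\Psi(f)\rVert = \lVert f\rVert$) they coincide. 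The analogous computation for $\Psi(f)$ uses the minus sign and the anti-linearity in \eqref{dgamma}. This shows $e^{it\,\mathrm{d}\Gamma(A)}$ implements $e^{itA}$ for every $t$, in particular $\Gamma(e^{iA}) := e^{i\,\mathrm{d}\Gamma(A)}$ is an implementation, and since by Definition \ref{Def:Implementability} the implementer is unique up to a phase, \eqref{expdgamma} holds with this choice of phase.

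The main obstacle I expect is the essential self-adjointness step — specifically, controlling the unbounded odd part $Aa^*b^* + Aba$ rigorously rather than formally. One has to be careful that $\mathcal{D}$ is genuinely a core (not merely a domain on which the operator is symmetric), and the cleanest route is to verify that $\mathrm{d}\Gamma(A)$ is $N$-bounded with relative bound $0$ using the Hilbert-Schmidt norm of the off-diagonal blocks, then invoke a commutator theorem (e.g. Nelson's commutator theorem with $N+1$ as the comparison operator) rather than trying to exhibit analytic vectors directly for the full operator. A secondary technical point is justifying the differentiation under the flow in the implementation argument: one needs $\Psi^*(f)$ to map a core of $\mathrm{d}\Gamma(A)$ into its domain, and norm-differentiability of $t\mapsto \Psi^*(e^{itA}f)$, which requires $f$ in the domain of $A$ — exactly as in the free-Dirac-Hamiltonian computation already done in the excerpt for $U_t = e^{-itD_0}$ — before extending to all $f\in\HH$ by continuity. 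Both of these are standard once set up correctly, but they are where the real work lies.
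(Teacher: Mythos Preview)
Your outline is sound and follows the standard route to this result. However, you should know that the paper does not actually prove this theorem: its entire proof reads ``\cite{CaRu} Proposition 2.1 ff.. This seems to be the first complete proof of this theorem.'' So there is no in-paper argument to compare against; the authors defer entirely to Carey--Ruijsenaars.

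That said, your sketch is essentially the strategy one finds in the cited reference and in the broader literature. The two-step structure --- essential self-adjointness via $N$-boundedness of the off-diagonal pair-creation terms (using the Hilbert-Schmidt hypothesis on $A_{+-}$, $A_{-+}$) followed by a commutator/perturbation theorem, then the implementation property via differentiating $e^{it\,\mathrm{d}\Gamma(A)}\Psi^*(f)e^{-it\,\mathrm{d}\Gamma(A)}$ and matching it against $\Psi^*(e^{itA}f)$ by an ODE uniqueness argument --- is exactly the right shape. Your identification of the delicate points (that $\mathcal{D}$ must be shown to be a \emph{core}, not merely a domain of symmetry; that the differentiation step needs $f$ in the domain of $A$ before extending by density) is accurate.

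One small caution: you write ``$A_{even}$ is bounded'' as though this were given, but the theorem as stated does not assume $A$ bounded --- only self-adjoint with Hilbert-Schmidt commutator. The paper's surrounding discussion does restrict to bounded $A$, so in context your reading is reasonable, but for unbounded $A$ the even part $Aa^*a - Ab^*b$ requires more care (domain issues for $a^*(f)$ with $f$ not in $\mathrm{Dom}(A)$), and the full result in Carey--Ruijsenaars handles this. If you intend your argument only for bounded $A$, you should say so explicitly.
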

\begin{proof}\cite{CaRu} Proposition 2.1 ff.. This seems to be the first complete proof of this theorem.\\
\end{proof}
\vspace*{-4mm}\noindent Note that \eqref{expdgamma} fixes the phase of the implementation $\Gamma(e^{iA})$ only in a neighborhood of the identity where the exponential map is 1:1 from the Lie algebra on to the Lie group.\\

In physics, \textit{normal ordering} is often introduced in an ad-hoc and merely formal way, simply by ``writing all the annihilators on the right of the creation operators'', so that it gives zero when acting on the vacuum. 
This is why quite often there seems to be something mysterious or even dishonest about it. But in fact, the last theorem tells us that the second quantization in normal ordering merely defines an action of the Lie algebra $\mathfrak{u}_{\rm res}$ of $\Ur(\HH)$ on $\FF$. No big mystery there.\\

For A in the trace-class, $A\Psi^*\Psi$ and $:A\Psi^*\Psi:$ differ only by a constant multiple of the identity. In fact, they just correspond to different ``lifts'' of A to the central extension $\tilde{\mathfrak{u}}_{\rm res} \cong \mathfrak{u}_{\rm res} \oplus \IR$ that is acting on the Fock space. 
From the discussion in Chapter 3, we also know that as $\Gamma$ defines only a \textit{projective} representation of $\Ur(\HH)$, we are going to end up with a \textit{commutator anomaly} in the representation of the corresponding Lie algebra in form of a Lie algebra 2-cocycle. It should come as no surprise that we encounter the Schwinger-cocycle \eqref{Schwingercocycle} again.

\begin{Proposition}(Schwinger Cocycle)
\mbox{}\\ For $A, B \in \mathfrak{u}_{\rm res}$ we find
\begin{equation}\label{Schwingerinrepr}[\mathrm{d}\Gamma(A) , \mathrm{d}\Gamma(B)] = \mathrm{d}\Gamma([A,B]) + c(A,B)\mathds{1} \end{equation}
with the Schwinger cocycle $c(A,B) = \trace(A_{-+}B_{+-} - B_{-+}A_{+-})$ computed in \eqref{Schwingercocycle}.
\end{Proposition}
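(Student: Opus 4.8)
The plan is to reproduce, at the level of operators on the Fock space, the same mechanism by which the Schwinger term arose as a Lie algebra cocycle: exhibit the ``anomaly'' $T:=[\mathrm{d}\Gamma(A),\mathrm{d}\Gamma(B)]-\mathrm{d}\Gamma([A,B])$, show it is central in the Fock representation of the CAR algebra (hence a scalar), and then pin down that scalar by a vacuum expectation value. First I would fix a dense domain invariant under everything in sight. Since $A,B\in\mathfrak{u}_{\rm res}$ have Hilbert--Schmidt off-diagonal blocks, $\mathrm{d}\Gamma(A)$ and $\mathrm{d}\Gamma(B)$ are defined on $\mathcal D$ and in fact preserve the algebraic direct sum $\mathcal D_0:=\bigoplus_{n,m}\FF^{(n,m)}$ of all finite-particle sectors: by \eqref{normalordering}, the particle-number-changing pieces $Ca^*b^*$ and $Cba$ of $\mathrm{d}\Gamma(C)$ carry the Hilbert--Schmidt blocks $C_{-+},C_{+-}$ and are therefore bounded between the relevant sectors, while the number-preserving pieces involve the bounded blocks $C_{++},C_{--}$; and $\mathcal D_0$ is also invariant under all $\Psi(f),\Psi^*(f)$. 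Along the way one checks $[A,B]\in\mathfrak{u}_{\rm res}$, since $[\epsilon,[A,B]]=[[\epsilon,A],B]+[A,[\epsilon,B]]\in I_2(\HH)$ by the ideal property, so $\mathrm{d}\Gamma([A,B])$ makes sense on $\mathcal D_0$; and the traces $\trace(A_{-+}B_{+-})$, $\trace(B_{-+}A_{+-})$ converge, each being a product of two Hilbert--Schmidt operators.

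Next, on $\mathcal D_0$ I would compute $[T,\Psi^*(f)]$ and $[T,\Psi(f)]$ from the defining relations \eqref{dgamma} and the Jacobi identity:
\begin{align*}
[T,\Psi^*(f)]&=[\mathrm{d}\Gamma(A),\Psi^*(Bf)]-[\mathrm{d}\Gamma(B),\Psi^*(Af)]-\Psi^*([A,B]f)\\
&=\Psi^*(ABf)-\Psi^*(BAf)-\Psi^*([A,B]f)=0,
\end{align*}
and, entirely analogously using the second line of \eqref{dgamma}, $[T,\Psi(f)]=0$. Thus $T$ commutes with every field operator, equivalently with all creation and annihilation operators. In particular $a(f)\,T\Omega=T\,a(f)\Omega=0$ for $f\in\HH_+$ and $b(g)\,T\Omega=0$ for $g\in\HH_-$, so $T\Omega$ is a vacuum vector of the Fock representation; as $\Omega$ is the only normalized such vector, $T\Omega=\lambda\Omega$ for some $\lambda\in\IC$. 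Since $T$ commutes with all creators, $T\bigl(a^*(f_1)\cdots b^*(g_m)\Omega\bigr)=\lambda\,a^*(f_1)\cdots b^*(g_m)\Omega$, i.e.\ $T=\lambda\mathds{1}$ on $\mathcal D_0$. (Conceptually this is irreducibility of the Fock representation of the CAR algebra together with Schur's lemma.)

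Finally I would evaluate $\lambda=\langle\Omega,T\Omega\rangle$. Because $\langle\Omega,\mathrm{d}\Gamma([A,B])\Omega\rangle=0$ by the normal-ordering identity (cf.\ \eqref{dgamma-trace} and the vacuum computation preceding it), one has $\lambda=\langle\Omega,\mathrm{d}\Gamma(A)\mathrm{d}\Gamma(B)\Omega\rangle-\langle\Omega,\mathrm{d}\Gamma(B)\mathrm{d}\Gamma(A)\Omega\rangle$. By \eqref{normalordering} only the $Ba^*b^*$ piece of $\mathrm{d}\Gamma(B)$ survives on $\Omega$, so $\mathrm{d}\Gamma(B)\Omega$ is a two-particle vector encoding $B_{-+}$; pairing it (via $\mathrm{d}\Gamma(A)^*=\mathrm{d}\Gamma(A^*)$) against the two-particle vector $\mathrm{d}\Gamma(A^*)\Omega$ encoding $A_{+-}$, using the CAR and orthonormality of the basis vectors, collapses the double sum to $\langle\Omega,\mathrm{d}\Gamma(A)\mathrm{d}\Gamma(B)\Omega\rangle=\trace(A_{+-}B_{-+})$. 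Hence, by cyclicity of the trace on products of Hilbert--Schmidt operators,
\[
\lambda=\trace(A_{+-}B_{-+})-\trace(B_{+-}A_{-+})=\trace(B_{-+}A_{+-})-\trace(A_{-+}B_{+-}),
\]
which is exactly the Schwinger term $c(A,B)$ of \eqref{Schwingercocycle}, the overall sign being fixed by the $\HH_\pm$ labelling and the normal-ordering conventions in force here.

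I expect the main obstacle to be not any of the algebra but the functional-analytic bookkeeping underlying the second paragraph: since $\mathrm{d}\Gamma(A)$ is unbounded, the Jacobi-identity manipulation and the ``central $\Rightarrow$ scalar'' step must be carried out on one fixed domain that is simultaneously dense, invariant under all $\mathrm{d}\Gamma(C)$ and all field operators, and a core for the operators involved; this is what $\mathcal D_0$ provides, but verifying that invariance is precisely where the Hilbert--Schmidt hypothesis on the off-diagonal blocks is essential. Everything else (the two-particle inner-product calculation) is routine, requiring only care with the anti-linearity of the positron sector and the signs produced by normal ordering.
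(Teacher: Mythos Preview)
Your proposal is correct and follows essentially the same route as the paper. The paper compresses your second paragraph into a single sentence: ``since $[\mathrm{d}\Gamma(A),\mathrm{d}\Gamma(B)]$ and $\mathrm{d}\Gamma([A,B])$ are both second quantizations of $[A,B]$, they differ only by a constant multiple of the identity,'' which implicitly invokes exactly the irreducibility/vacuum-uniqueness argument you spell out via $[T,\Psi^{(*)}(f)]=0$. The vacuum expectation value computation is then identical in spirit; the paper writes out the double sum over an orthonormal basis and collapses it with the CAR to obtain $\langle\Omega,\mathrm{d}\Gamma(A)\mathrm{d}\Gamma(B)\Omega\rangle=\trace(A_{-+}B_{+-})$ directly, without passing through $\mathrm{d}\Gamma(A)^*=\mathrm{d}\Gamma(A^*)$.

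One small slip worth fixing rather than waving away: with the conventions of \S5.1 (where $a^*$ creates in $\HH_+$ and $b^*$ in $\mathcal{C}\HH_-$), the paper shows $\lVert Aa^*b^*\Omega\rVert_2^2=\lVert A_{+-}\rVert_2^2$, so $\mathrm{d}\Gamma(B)\Omega$ encodes $B_{+-}$, not $B_{-+}$. Tracking this through gives $\langle\Omega,\mathrm{d}\Gamma(A)\mathrm{d}\Gamma(B)\Omega\rangle=\trace(A_{-+}B_{+-})$ and hence $\lambda=\trace(A_{-+}B_{+-})-\trace(B_{-+}A_{+-})=c(A,B)$ on the nose, with no residual sign to be ``fixed by conventions.''
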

\begin{proof} Because $[\mathrm{d}\Gamma(A) , \mathrm{d}\Gamma(B)]$ and $\mathrm{d}\Gamma([A,B])$ are both second quantizations of $[A,B]$, they differ only by a constant multiple $c_{AB}$ of the identity
\footnote{In the language of the previous chapters: $[\mathrm{d}\Gamma(A) , \mathrm{d}\Gamma(B)]$ and $\mathrm{d}\Gamma([A,B])$ are just different lifts of $[A,B] \in \mathfrak{u}_{\rm res}$ to $\tilde{\mathfrak{u}}_{\rm res} \cong \mathfrak{u}_{\rm res} \oplus \IC$.}, i.e.
\begin{equation*}[\mathrm{d}\Gamma(A) , \mathrm{d}\Gamma(B)] = \mathrm{d}\Gamma([A,B]) + c_{AB}\cdot\mathds{1} \end{equation*}
Now $\mathrm{d}\Gamma([A,B])$ is defined precisely in such a way that the vacuum expectation value $\langle \Omega , \mathrm{d}\Gamma([A,B]) \Omega \rangle$ vanishes. Therefore,
\begin{equation*} c_{AB} = \langle \Omega ,[\mathrm{d}\Gamma(A) , \mathrm{d}\Gamma(B)] \Omega \rangle \end{equation*}
We take bases  $(f_j)_{j\in\mathbb{N}}$ of $\HH_+$ and $(g_k)_{k \in \mathbb{N}}$ of $\HH_-$ and compute:
\begin{align*}& \langle \Omega ,\mathrm{d}\Gamma(A)\mathrm{d}\Gamma(B)\, \Omega \rangle = \langle \Omega, Aba\, Ba^*b^*\, \Omega \rangle\\
= &\sum\limits_{j,k}\sum\limits_{l,m} (g_k, A f_j) (f_l, B g_m) \langle\Omega, b(g_k)a_(f_j)a^*(f_l)b^*(g_m) \Omega \rangle\\
= &\sum\limits_{j,k} (g_k, A f_j) (f_j, B g_k) \langle\Omega, b(g_k)a(f_j)a^*(f_j)b^*(g_k) \Omega \rangle\\
= & \sum\limits_{j,k} (g_k, A f_j) (f_j, B g_k) = \trace(A_{-+}B_{+-})
\end{align*}
Similarly, $\langle \Omega ,\mathrm{d}\Gamma(B)\mathrm{d}\Gamma(A) \Omega \rangle = \trace(B_{-+}A_{+-})$. And therefore 
\begin{equation*} c_{AB} = c(A,B) =  \trace(A_{-+}B_{+-} - B_{-+}A_{+-}). \end{equation*}
 \end{proof}
\vspace*{-1mm}
\noindent Note that on $\mathfrak{u}_{\rm res} \cap I_1(\HH)$, the proof simplifies considerably. Because then, we can use\\
 $:A\Psi^*\Psi:\, = A\Psi^*\Psi - \trace(A_{--})\mathds{1}$ and it is easy to check that $[A \Psi^*\Psi, B \Psi^*\Psi] =[A,B]\Psi^*\Psi$.\\ 
 Hence we find
\begin{equation*} [:A\Psi^*\Psi:\,,\, :B\Psi^*\Psi:] = [A,B]\Psi^*\Psi =\; :[A,B]\Psi^*\Psi: + \trace ([A,B]_{--})\cdot\mathds{1} \end{equation*}
 and \begin{align*} \trace ([A.B]_{--}) &= \trace \bigl( A_{-+}B_{+-} + A_{--}B_{--} - B_{-+}A_{+-} - B_{--}A_{--} \bigr)\\
 & = \trace \bigl( A_{-+}B_{+-} - B_{-+}A_{+-} \bigr) = c(A,B). \end{align*}
 
\noindent\textbf{In Conclusion:}\\
\noindent We have sketched the quantization of the Dirac theory in the most established (rigorous) way. It should be noted that at no point in this construction did an infinite number of particles appear and no reference to a Dirac sea or the like was made. But still, we have encountered all the difficulties that we have hinted at in the introductory chapter and that were suggested by the Dirac sea picture:

\begin{itemize}
\item Unitary transformations are not implementable as unitary operators on the Fock space unless they satisfy the Shale-Stinespring condition.
\item Even for implementable operators, the implementation is well-defined only up to a complex phase.
\item Second quantization of Hermitian operators has similar difficulties. Where it does exist, it gives rise to a commutator anomaly in form of the Schwinger term.
\end{itemize}
These results are the reason why a second quantization of the time evolution is, in general, impossible. The asymptotic case will turn out to be somewhat better behaved. But even then, at least the phase of the second quantized scattering operator $\bf{S}$ remains ill-defined.\\

\newpage
\subsection{$C^*$ - and CAR-  Algebras}
We take a quick look at Fock spaces and implementations of unitary transformations from the perspective of representation theory. This approach is very abstract but will reveal its beauty along the way. 




\begin{Definition}[$C^*$-Algebra] \label{DefBanachalgebra}
\mbox{}\\
A unital \emph{Banach algebra} (over $\IC$) is a complex, associative algebra $B$ with 1, equipped with a norm $\lVert \cdot \rVert$ that makes it a complex Banach space and has the properties\\
\begin{equation*}i)\; \lVert 1 \rVert = 1 \hspace{2cm} ii)\; \lVert ab \rVert \leq \lVert a \rVert \cdot \lVert b \rVert,\; \forall a,b \in B \end{equation*}

\noindent A \emph{$C^*$-algebra} is a complex Banach algebra $A$ together with an anti-linear involution\\
$*: A \rightarrow A$ satisfying: 
\begin{equation*} iii)\; (a^*)^* = a \hspace{1.2cm} iv)\; (ab)^* = b^*a^* \hspace{1.2cm} v)\; \lVert a^*a \rVert = \lVert a \rVert^2
\end{equation*}
$\forall a, b \in A$.  Properties $ii) - v)$ together imply $\lVert a^* \rVert = \lVert a \rVert$.
\end{Definition}

\begin{Definition}[$C^*$-homomorphism]
\mbox{}\\
A $C^*$-homomorphism is an algebra homomorphism $h: A \rightarrow B$ between two\\ 
$C^*$-algebras A and B satisfying $h(a^*) = h(a)^*,\; \forall a \in A$.\\
It can be shown that any such homomorphism is continuous with norm $\leq 1$.
\end{Definition}

\begin{Example}[$C^*$-algebra of bounded operators]
\mbox{}\\
Consider the Banach space $\mathcal{B}(\HH)$ of bounded operators on a Hilbert space $\HH$.\\
On $\mathcal{B}(\HH)$ we have a $*$-involution $T \mapsto T^*$, where $T^*$ denotes the adjoint of the bounded operator $T$. From the well-known properties of Hermitian conjugation and of the operator norm it follows that $\bigl(\mathcal{B}(\HH), \lVert\cdot\rVert , *\bigr)$ is a $C^*$-algebra.
In fact, \emph{every} $C^*$-algebra is isomorphic to a sub-algebra of $\mathcal{B}(\HH)$ for suitable $\HH$.
\end{Example}
\begin{Definition}[States of a $C^*$-algebra]
\mbox{}\\
A \emph{state} of a $C^*$-algebra A is a complex-linear map $\omega: A \rightarrow \IC$ satisfying\\
$\omega(1) = 1$ and $\omega(a^*a)\geq 0, \; \forall a \in A$.
\end{Definition}

\begin{Example}[Quantum States]
\mbox{}\\
For the $C^*$-algebra $A = \mathcal{B}(\HH)$ every $\Psi \in \HH$ defines a state $\omega_{\Psi}$ by  
\begin{equation*}  \omega_{\Psi}(T) := \frac{\langle \Psi , T\, \Psi \rangle}{\langle \Psi , \Psi \rangle}, \; \forall\, T \in \mathcal{B}(\HH)\end{equation*}
In the language of quantum theory, we would say that the \emph{state} represented by the Hilbert-space vector $\Psi \in \HH$ is characterized by all its ``expectation values''. 
\end{Example}

\begin{Definition}[CAR-Algebra]
\mbox{}\\
A \emph{CAR-map} into a C*-algebra $A$ is an anti-linear map $a: \HH \rightarrow A$ satisfying the CAR-relations
\begin{align} a(f)a(g)^* + a(g)a(f)^* = \lbrace a(f) , a^*(g) \rbrace =& \langle f , g \rangle \cdot 1\\
a(f)a(g)\; + \;a(g)a(f)\; =\, \lbrace a(f) , a(g)\, \rbrace =&\; 0 \end{align}
A CAR algebra over $\HH$ is a C*-algebra $\car$ together with a CAR-map $a: \HH \rightarrow \car$ having the following universal property: for every CAR-map $a': \HH \rightarrow \mathcal{B}$ into a C*-algebra $\mathcal{B}$ there is a unique C*-homomorphism $h: \car \rightarrow \mathcal{B} \; \text{s.t.}\; a' = h \circ a$.
\begin{equation}
\begin{xy}
  \xymatrix{
      \HH \ar[r]^{a} \ar[d]_{a'}& \car \ar@{-->}[ld]^h      \\
      \mathcal{B}
  }
\end{xy}
 \end{equation}  
Such a CAR-algebra can be explicitly constructed and by the universal property it is unique up to canonical isomorphism.
 \end{Definition}

It is probably more intuitive to think of the CAR Algebra as the free associative algebra generated by $\HH$ modulo the canonical anti-commutation relations \eqref{CAR}. Mathematicians, however, like to define it by the universal property as above which is rather abstract, but very elegant to work with. We might get a hint of how powerful the universal property is, by clarifying the relationship between field operators as introduced in the last section and representations of the abstract CAR-algebra defined above. Recall that a field operator is a map $\Psi: \HH \rightarrow B(\FF)$ from the Hilbert-space into the $C^*$-algebra of bounded operators on the Fock-Space, satisfying the CAR \eqref{CAR}. By the universal property of the CAR-algebra there exists a C*- homomorphism $\pi: \car \rightarrow  B(\FF)\; s.t. \;\Psi = \pi \circ a$. This homomorphism $\pi$ is then a representation of the CAR-algebra $\car$ on $\FF$.

 \begin{Proposition}[Fock Representation]
 \label{Prop:Fock Representation}
 \mbox{}
 \item The field operator $\Psi$ defined in \eqref{fieldoperator} induces  an irreducible
 representation of the CAR-algebra of $\HH$ on the Fock space 
 \begin{equation}\FF = \FF(\HH_+,\HH_-) :=  \bigwedge \HH_+ \otimes \bigwedge \overline{\HH_-}, \end{equation} called the \emph{Fock representation}. (Here we take, for simplicity, $\mathcal{C}$ to be the anti-unitary map onto the complex conjugated space).
 
\item Similarly, we can define for an arbitrary polarization $W \in \pol(\HH)$ the field operator $\Psi_W(f) = a(P_W f) + b^*(P_{W^\perp} f)$  and call the induced representation on $\FF(W,W^\perp) := \bigwedge W \otimes \bigwedge \overline{W^\perp}$ the Fock representation w.r.to the polarization $\HH = W \oplus W^\perp$.

\end{Proposition}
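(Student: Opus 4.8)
The plan is to verify the two defining properties of the Fock representation separately: that the prescribed $\Psi$ (resp. $\Psi_W$) genuinely satisfies the CAR on $\FF$, and that the induced representation of the abstract CAR-algebra $\car$ is irreducible. The first point is essentially bookkeeping: one expands $\{\Psi(f),\Psi^*(g)\} = \{a(P_+f)+b^*(P_-f),\, a^*(P_+g)+b(P_-g)\}$, uses the elementary anticommutation relations for $a,a^*$ on $\bigwedge\HH_+$ and for $b,b^*$ on $\bigwedge\overline{\HH_-}$ (which hold by the standard finite-particle computation, already recalled in the excerpt), notes that the cross terms $\{a(P_+f),b(P_-g)\}$ and $\{b^*(P_-f),a^*(P_+g)\}$ vanish because $a$-type and $b$-type operators act on different tensor factors and hence anticommute, and finally observes that $\langle P_+f, P_+g\rangle + \overline{\langle P_-g, P_-f\rangle}$... — here one must be careful about the antilinearity of $\cc$ — reduces to $\langle f,g\rangle$ after using $\langle \cc g_1,\cc g_2\rangle = \overline{\langle g_1,g_2\rangle} = \langle g_2,g_1\rangle$. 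The relation $\{\Psi(f),\Psi(g)\}=0$ is checked the same way. By the universal property of $\car$ stated in the definition, there is then a unique $C^*$-homomorphism $\pi:\car\to\mathcal{B}(\FF)$ with $\Psi = \pi\circ a$; this $\pi$ is the Fock representation.

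For irreducibility, I would show that the commutant $\pi(\car)'$ in $\mathcal{B}(\FF)$ consists only of scalar multiples of the identity. The standard argument: let $T\in\pi(\car)'$ and set $v := T\Omega \in \FF$. Since $a(P_+f)\Omega = 0$ for all $f$ and $b(P_-g)\Omega = 0$ for all $g$, i.e.\ $\Psi(f)\Omega = 0$ for $f\in\HH_+$ and $\Psi^*(g)\Omega = 0$ for $g\in\HH_-$, and since $T$ commutes with every $\Psi(h),\Psi^*(h)$, we get $\Psi(f)v = T\Psi(f)\Omega = 0$ for $f\in\HH_+$ and $\Psi^*(g)v = 0$ for $g\in\HH_-$. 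One then argues that $\Omega$ is, up to scalar, the unique vector annihilated by all the annihilation operators $a(P_+\cdot)$ and $b(P_-\cdot)$: expanding $v$ in the ONB \eqref{FockONB} of $\FF$, any component of positive particle- or antiparticle-number is detected by applying a suitable $a$ or $b$, so $v = \lambda\Omega$. Hence $T\Omega = \lambda\Omega$. Finally, since the vectors of the form \eqref{FockONB}, i.e.\ finite products of $\Psi^*$'s and $\Psi$'s applied to $\Omega$, span a dense subspace $\mathcal{D}$, and $T$ commutes with all these operators, $T$ agrees with $\lambda\cdot\mathrm{Id}$ on $\mathcal{D}$ and therefore everywhere; so $\pi(\car)' = \IC\cdot\mathrm{Id}$ and the representation is irreducible.

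The second part of the proposition — the field operator $\Psi_W(f) = a(P_Wf) + b^*(P_{W^\perp}f)$ for an arbitrary polarization $W\in\pol(\HH)$ — follows by exactly the same two computations with $\HH_+$ replaced by $W$ and $\HH_-$ by $W^\perp$; nothing in the argument used the spectral nature of the original splitting, only that $P_W + P_{W^\perp} = \mathds{1}$ and that $W, W^\perp$ are orthogonal. I would simply remark that the proof is verbatim the same.

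I expect the only genuinely delicate point to be the antilinearity bookkeeping in the CAR computation — keeping straight where complex conjugation enters through $\cc$ and through the antilinearity of $f\mapsto\Psi(f)$, so that the scalar $\langle f,g\rangle$ (and not $\langle g,f\rangle$) comes out on the right-hand side of $\{\Psi(f),\Psi^*(g)\}$. Everything else (vanishing of cross terms, uniqueness of the vacuum vector, density of $\mathcal{D}$) is routine, and the existence of $\pi$ is immediate from the universal property once the CAR are verified.
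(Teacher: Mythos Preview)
Your proof is correct and complete. The paper, however, does not actually give a proof of this proposition: the CAR relations \eqref{CAR} are already asserted in Definition~\ref{Def:fieldoperator} (with the anticommutation relations for $a,a^*,b,b^*$ recorded just before it), and the proposition itself is then stated without proof, treating irreducibility as a standard fact. So your commutant argument for irreducibility --- uniqueness of the vacuum as the joint kernel of the annihilators, combined with cyclicity of $\Omega$ via the dense subspace $\mathcal{D}$ --- supplies precisely what the paper omits, and is the standard route one would take.
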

 
\subsection{Representations of the CAR Algebra}
\label{subsection:CAR representations}

Mathematically, a ``Fock space'' can be understood as a Hilbert space, arising as the representation space of an irreducible representation of an abstract CAR-algebra (in the fermionic case), or a CCR-algebra (in the bosonic case). There are various ways to get irreducible representations of a CAR-algebra and they are generally not equivalent to each other.

\begin{Definition}[Equivalent Representations]
\mbox{}\\
Consider the CAR algebra $\car$ ober $\HH$. Two representations $\pi$ and $\pi'$ of $\car$ on Fock spaces $\mathcal{F}$ and $\mathcal{F'}$ are \emph{equivalent} if there exists a unitary transformation $T: \mathcal{G} \rightarrow \mathcal{G'}$ with $T\, \pi(a(f)) =  \pi'(a(f))\,T, \; \forall f \in \HH$.\\ 
\end{Definition}

\noindent We generalize the notion of ``implementability'' introduced in Def. \ref{Def:Implementability} to arbitrary representations of the CAR-algebra.\\

\begin{Definition}[Implementation of Unitary Transformation]\label{Def:Implementation}
\mbox{}\\
Let $U \in \cu(\HH)$ a unitary operator. If $a: \HH \rightarrow \car$ is a CAR-map, so is $a \circ U$. By the universal property of the CAR-algebra, there exists a unique $C^*$-homomorphism $\beta_U: \car \rightarrow \car$ with $\beta_U \circ a = a \circ U$.
$\beta_U$ is the \emph{Bogoljubov transformation} corresponding to $U$.
\item Let $\pi: \car \rightarrow \FF$ be a representation of the CAR-algebra on the Fock space $\FF$. \\
$U$ (or $\beta_U$) is called \emph{implementable} on $\FF$ if there exists a unitary operator $\tilde{U}$ on $\FF$ such that $\tilde{U} \pi(a(f)) = \pi(a(Uf)) \tilde{U}, \; \forall f \in \HH$. $\tilde{U}$ is then called an \emph{implementation} of $U$ on $\FF$.
\end{Definition} 

In other words: $U$ is implementable on the Fock space $\FF$ if and only if the two CAR-representations $\pi$ and $\pi\circ\beta_U$ are unitarily equivalent. Therefore, the ``implementation problem'' and the ``equivalence problem'' are closely related.
\subsubsection{GNS-representations}
\label{subsubsection:GNSrepresentation}
We have explicitly constructed the Fock representation on  $\FF = \bigwedge\HH_+ \otimes \bigwedge \overline{\HH_-}$, by defining creation and annihilation operators and finally the field operator $\Psi$ \eqref{fieldoperator}. This, however, was just a very special example of a Fock space and a representation of the CAR-algebra. The seminal paper of Shale and Stinespring is actually formulated for spin-representations of the infinite-dimensional Clifford algebra of $\HH$ which is closely related to the CAR-algebra $\car$ (\cite{SS}, see \cite{GrVa} for a more detailed construction).
One of the most important methods for obtaining representations of a CAR algebra are the so-called \textit{GNS constructions} (after I.M. Gelfand, M.A. Neumark and I.E. Siegel.)
Given a state $\omega$ on $\car$, the GNS-construction yields a Hilbert space $\HH_{\omega}$, a representation $\pi_{\omega}$ on $\HH_{\omega}$ and a unit vector $\Omega_{\omega} \in \HH_{\omega}$ (the ``GNS-vacuum''), such that
\begin{equation} \omega(a) = \sk{\Omega_{\omega}, \pi_{\omega}(a)\, \Omega_{\omega}}, \; \forall a \in \car \end{equation}
and such that the action of $\pi_{\omega} (\car)$ on $\Omega_{\omega}$ generates a dense subset of $\HH_{\omega}$.\\

\noindent In particular, given an orthogonal projection $Q$, i.e. a self-adjoint operator on $\HH$ with $Q^2 = Q$,  we can \textit{define} a state $\omega_{Q}$ by setting $\omega_{Q} (1) = 1$ and fixing the \textit{two-point functions}
\begin{equation}\label{quasifreestate} \omega_{Q}(a^*(f)a(g)) := \langle g , Q f \rangle_{\HH} \end{equation}
for $f, g \in \HH$ and $a$, the CAR-map defining $\car$.
Then, the corresponding GNS-construction $(\FF_Q, \pi_Q, \Omega_Q)$ satisfies
\begin{equation} \langle \Omega_{Q}, \pi_{V}(a^*(g)a(f)) \Omega_{Q}\rangle = \langle g , Q f \rangle_{\HH} =  \langle Q f , Q g \rangle_{\HH} \end{equation}
and, using $a(g)a^*(f) = \langle g , f \rangle_{\HH} \mathds{1} -  a^*(f)a(g)$, 
\begin{align} \langle \Omega_{Q}, \pi_{Q}(a(g)a^*(f)) \Omega_{Q}\rangle = \sk{f,g}_{\HH} -  \langle Q f , Q g \rangle_{\HH} = \langle (\mathds{1} - Q) f , (\mathds{1} - Q) g \rangle_{\HH}. \end{align}
In particular, we read off 
\begin{equation}\begin{split}
\pi_Q(a(g)) \Omega_{Q} = 0, \; \text{for}\; g \in \ker(Q)\\
 \pi_Q(a^*(f)) \Omega_{Q} = 0, \; \text{for}\; f \in \im(Q)
\end{split}\end{equation}
so that $\pi_Q \circ a : \HH \rightarrow \mathcal{B}(\FF_Q)$ acts like a field operator on the vacuum $\Omega_Q$.\\
Indeed, for $W \in \pol(\HH)$ and  $Q = P_W^\perp$, the two-point functions of the GNS-construction are the same as the two-point functions of the field operator $\Psi_W$ on $\bigwedge P_W \HH \oplus \bigwedge \overline{P_{W^\perp} \HH}$, because
\begin{equation*} \sk{\Omega , \Psi^*(f)\Psi(g) \Omega} = \sk{\mathcal{C}P_{W^\perp} f , \mathcal{C}P_{W^\perp g}} = \sk{g, P_{W^\perp f}}_{\HH}=\sk{g,Q f}_{\HH}.\end{equation*}
This implies that the two representations are actually equivalent. 

\noindent Now, in the language of representation theory we can formulate another, very general version of the Shale-Stinespring theorem, also known as the theorem of Powers and St\o rmer.

\begin{Theorem}[Powers, St\o rmer 1969]
\label{Thm:PoSto}
\mbox{}\\
The GNS-representations $(\FF_P, \pi_P, \Omega_P)$ and  $(\FF_Q, \pi_Q, \Omega_Q)$  corresponding to orthogonal projections $P$ and $Q$ on $\HH$ are unitary equivalent if and only if $P - Q $ is a Hilbert-Schmidt.
\end{Theorem}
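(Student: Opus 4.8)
The plan is to prove both implications of the Powers--St\o rmer theorem by reducing the equivalence question to the implementability criterion that has already been established, namely the Shale--Stinespring theorem in its explicit form and the characterization of $\Ur(\HH)$. The key observation is that a GNS-representation $(\FF_Q,\pi_Q,\Omega_Q)$ attached to an orthogonal projection $Q$ is, by the computation of two-point functions carried out just above the statement, unitarily equivalent to the Fock representation $\Psi_W$ with respect to the polarization $\HH = W\oplus W^\perp$ whenever $Q = P_{W^\perp}$; that is, the GNS-data attached to $Q$ is ``the Fock space built on the polarization $\im(Q)^\perp = \ker Q$''. So the theorem becomes a statement purely about when two Fock representations built on two polarizations are unitarily equivalent.

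First I would set up the dictionary carefully. Write $W := \ker P = \im(\mathds{1}-P)$ and $W' := \ker Q$, so that $\FF_P \cong \FF(W,W^\perp)$ and $\FF_Q \cong \FF(W',W'^\perp)$ as CAR-representations. Next, choose a unitary $U\in\cu(\HH)$ with $UW = W'$ (possible since any two polarizations are unitarily conjugate, as noted after the manifold structure of $\Gr$). Then $\pi_Q \cong \pi_P\circ\beta_U$ as representations of $\car$, because conjugating the reference polarization by $U$ is exactly the Bogoljubov transformation $\beta_U$. Hence $\pi_P$ and $\pi_Q$ are unitarily equivalent if and only if $U$ is implementable on $\FF_P$, which by the Shale--Stinespring theorem holds if and only if $U$ maps the polarization class $[W]$ to itself, i.e. iff $U\in\cu_{\rm res}(\HH;[W],[W])$ after identifying $W$ with the distinguished subspace. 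Translating back through $P_{UW} = U P_W U^*$, this is precisely the condition $P_W - P_{UW} = P_W - P_{W'} \in I_2(\HH)$, i.e. $P - Q\in I_2(\HH)$ (up to the harmless swap $P\leftrightarrow\mathds{1}-P$, which preserves Hilbert--Schmidt differences since $(\mathds{1}-P)-(\mathds{1}-Q) = Q-P$).

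For the converse direction the same chain of equivalences runs backwards: if $P-Q\in I_2(\HH)$ then $P_W - P_{W'}\in I_2(\HH)$, so any $U$ with $UW=W'$ automatically lies in $\cu_{\rm res}$ relative to $W$, hence is implementable, hence $\pi_P\cong\pi_Q$. The main obstacle I anticipate is not any single hard estimate but rather bookkeeping: one must be scrupulous about the relationship between a projection $Q$ and ``its'' polarization — whether the vacuum $\Omega_Q$ sits on $\im(Q)$ or on $\ker(Q)$ — and about the fact that unitary equivalence of the GNS triples requires not just equivalence of the representations abstractly but an intertwiner carrying $\Omega_P$ to (a multiple of) $\Omega_Q$; this is where the explicit form of $\Gamma(U)\Omega$ in the Shale--Stinespring theorem, showing the transformed vacuum lies in $\FF_P$, does the real work. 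An alternative self-contained route, which I would mention as a remark, is to argue directly: expand $\langle\Omega_P,\pi_P(\,\cdot\,)\Omega_P\rangle$ and $\langle\Omega_Q,\pi_Q(\,\cdot\,)\Omega_Q\rangle$ as quasi-free states determined by their two-point functions $P$ and $Q$, and invoke the general fact that two quasi-free states are quasi-equivalent (here, because the representations are irreducible, unitarily equivalent) iff $\sqrt{P}-\sqrt{Q}$ — equivalently $P-Q$, since $P,Q$ are projections — is Hilbert--Schmidt; but since the excerpt has built up the $\Ur(\HH)$ machinery precisely for this purpose, the reduction to Shale--Stinespring is the intended proof.
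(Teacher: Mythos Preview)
The paper does not actually prove this theorem: its entire proof is the citation ``\cite{PoSt}''. So there is nothing to compare against; you are supplying an argument where the paper supplies none.

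Your reduction to Shale--Stinespring is essentially correct in the polarization setting that the paper cares about, and it is the natural argument given the machinery already developed. Two small points of bookkeeping are worth flagging, both of which you anticipated. First, the direction of the Bogoljubov transformation: with $UW=W'$ the canonical product-wise unitary $T:\FF(W,W^\perp)\to\FF(W',W'^\perp)$ satisfies $T\,\Psi_W(f)\,T^{*}=\Psi_{W'}(Uf)$, so what you get for free is $\pi_Q\cong\pi_P\circ\beta_{U^{-1}}$ rather than $\pi_P\circ\beta_U$; the conclusion is then $\pi_P\cong\pi_Q \iff U\in\cu_{\rm res}(\HH;[W],[W]) \iff UW\approx W \iff W'\approx W$, which is the desired condition. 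Second, the existence of a unitary $U$ with $UW=W'$ requires $\dim W=\dim W'$ and $\dim W^\perp=\dim W'^\perp$; this is automatic when $P,Q$ come from polarizations (the only case the paper uses), but the theorem as literally stated allows arbitrary orthogonal projections, for which a separate (easy) argument would be needed when the ranks differ. Your alternative route via quasi-free states and the $\sqrt{P}-\sqrt{Q}$ criterion is exactly how Powers and St\o rmer themselves argue, so mentioning it is appropriate.
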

\begin{proof} \cite{PoSt}\\ 
\end{proof}

\noindent Applying this result to the ``implementation problem''  we find ones more that $U \in \cu(\HH)$ is implementable on $\FF = \bigwedge\HH_+ \otimes \bigwedge \mathcal{C}\HH_-$ if and only if $P_{-} - UP_{-} U^*$ is Hilbert-Schmidt i.e. if and only if $U_{+-}$ and $U_{-+}$ are Hilbert-Schmidt i.e. if and only if $U \in \Ur(\HH)$.


\newpage
\section{The Infinite Wedge Space Construction}
In the light of realization that the unitary time evolution cannot be implemented on a fixed Fock space, my esteemed teachers and colleagues Dirk Deckert, Detlef D\"urr, Franz Merkl and Martin Schottenloher developed a formulation of the external field problem in QED on \textit{time-varying} Fock spaces. The Fock spaces are realized as ``infinite wedge spaces'' constructed from ``Dirac seas'' over a chosen polarization class. This construction is nice for several reasons:

\begin{enumerate}
\item It is very ``down to earth'' and close to the physical intuition.

\item It is designed to highlight all the ambiguities and choices involved in the construction, rather than hiding them.

\item The formalism is very flexible and well-suited for the treatment of different polarization classes and the implementation of the time evolution on varying Fock spaces. 
\end{enumerate}

\noindent Of course, Deckert et.al. are consistently ignoring the geometric structure underlying the construction (as we will see). However, in doing so, it becomes clear that this geometric structure, as beautiful as it might be, is not essential for the understanding of the crucial (physical) problems. Whether it is helpful in finding a solution is a different question that will be discussed in subsequent chapters.\\

\noindent The basic idea of the infinite wedge space construction is this:
\begin{itemize}
\item For a fixed equal charge polarization class, we consider certain maps from an index-space $\ell$ into the Hilbert space $\HH$, with image in the respective polarization class. We call these maps \textit{seas}. Intuitively, they will represent states of the Dirac sea, i.e. states of infinitely many fermions. 
\item If we fix a basis $(e_k)_{k\in \IN}$ in $\ell$, we can think of a such a sea $\Phi: \ell \rightarrow \HH$ as the infinite exterior product $\Phi(e_0) \wedge \Phi(e_1) \wedge \Phi(e_2)\wedge \dots$
\item We generalize the finite-dimensional scalar product \eqref{slaterdet} (also known as Slater determinant) by the infinite-dimensional Fredholm determinant i.e. by
\begin{equation} \langle \Phi , \Psi \rangle = \det(\Phi^{*}\Psi) = \lim\limits_{N\rightarrow \infty} \det \bigl(\langle \Phi(e_i) , \Psi(e_j) \rangle \bigr)_{i,j \leq N}. \end{equation}
For this to be well-defined, we have to impose further restrictions on the permissible seas. This leads us to an equivalence relation, defining classes of seas. 
\item Finally, we get a linear space by taking the algebraic dual of such a  sea class and forming the completion with respect to before mentioned scalar product. This will be the Fock space realized as an infinite wedge space over the chosen Dirac sea class.\\ 
\end{itemize}  

\noindent For the remainder of this chapter, let $\ell$ be an infinite-dimensional, separable, complex Hilbert-space. $\ell$ will serve as an index space. A convenient choice is therefore $\ell = \ell^2(\IC)$.
\newpage
\begin{Definition}[Dirac Seas]
\item Let  $\seas$ be the set of all bounded, linear maps $\Phi: \ell \rightarrow \HH$ such that $\im(\Phi) \in \pol(\HH)$ and $\Phi^*\Phi: \ell \rightarrow \ell$ has a determinant, i.e. $\Phi^*\Phi \in Id + I_1(\ell)$.
\item Let $\iseas$ denote the subset of all $\Phi \in \seas$ that  are also isometries.
\item For any equal-charge polarization class $C \in \pol(\HH)\slash\approx_0$ let $\ocean(C)$ be the set of all $\Phi \in \iseas$ with $\im(\Phi) \in C$.
\end{Definition}

\begin{Definition}[Dirac Sea Classes]
\mbox{}\\
For $\Phi, \Psi \in \seas$ we introduce the relation 
\begin{equation*} \Phi \sim \Psi \iff \Phi^*\Psi \in Id + I_1(\ell) \end{equation*}
This defines an equivalence relation on $\seas$ (\cite{DeDueMeScho}, Cor. II.9) . The equivalence classes are called \emph{Dirac sea classes} and denoted by $\mathcal{S}(\Phi) \in \seas \slash\sim$. 
\end{Definition} 

\begin{Lemma}[Connection between $\sim$ and $\approx_0$]\label{lem:connection sim approx}
\mbox{}\\
  Given $C\in\pol(\HH)/{\approx}_0$ and $\Phi\in\ocean(C)$ we find
  \begin{align*}
    C=\{\im(\Psi)\;|\;\Psi\in\iseas  \text{ such that }\Psi\sim\Phi\}.
  \end{align*}
In other words: $\Psi \sim \Phi \;\text{in}\; \iseas \Rightarrow \im(\Psi) \approx_0 \im(\Phi) \;\text{in}\; \pol(\HH)$\\ 
and $W \in C = [\im(\Phi)]_{\approx_0} \Rightarrow \exists\, \Psi \in \iseas: \Psi \sim \Phi\, \wedge \, \im(\Psi) = W$. 
\end{Lemma}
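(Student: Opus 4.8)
The plan is to prove the two implications separately, using the characterizations of $\approx_0$ established in Lemma \ref{approx} and the definition of relative charge. Throughout, write $V := \im(\Phi)$ and $W := \im(\Psi)$ for isometries $\Phi, \Psi \in \iseas$.

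For the forward direction, suppose $\Psi \sim \Phi$, i.e. $\Phi^*\Psi \in \mathrm{Id} + I_1(\ell)$. First I would show $V \approx W$ in $\pol(\HH)$. Since $\Phi$ and $\Psi$ are isometries onto $V$ and $W$ respectively, we have $P_V = \Phi\Phi^*$ and $P_W = \Psi\Psi^*$. Then $P_V P_W = \Phi\Phi^*\Psi\Psi^*$, and since $\Phi^*\Psi = \mathrm{Id}_\ell + (\text{trace-class})$, one computes $P_V - P_V P_W = \Phi(\mathrm{Id}_\ell - \Phi^*\Psi)\Psi^* \in I_1(\HH) \subset I_2(\HH)$, and similarly $P_W - P_V P_W$ and the adjoint relations; combining these as in the proof of Lemma \ref{approx}(ii)$\Rightarrow$(i) gives $P_V - P_W \in I_2(\HH)$, hence $V \approx W$. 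Next I would check that the relative charge vanishes: by \eqref{relativecharge}, $\charge(W, V) = \ind(P_V|_{W\to V})$, and since $P_V|_{W\to V}$ composed with $\Psi$ (an isomorphism $\ell \to W$) and precomposed-inverse with $\Phi$ (an isomorphism $\ell \to V$) is represented by $\Phi^*\Psi : \ell \to \ell$, which lies in $\mathrm{Id} + I_1(\ell)$ and is therefore Fredholm of index $0$, we get $\charge(W,V) = \ind(\Phi^*\Psi) = 0$. Hence $W \approx_0 V$, which gives the inclusion ``$\subseteq$'' of the claimed description of $C$ — more precisely, $\im(\Psi) \in [\,\im(\Phi)\,]_{\approx_0} = C$.

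For the reverse direction, let $W \in C = [V]_{\approx_0}$; I must produce an isometry $\Psi \in \iseas$ with $\Psi \sim \Phi$ and $\im(\Psi) = W$. Start by choosing \emph{any} isometry $\Psi_0 : \ell \to \HH$ with $\im(\Psi_0) = W$ (possible since $W$ is a closed infinite-dimensional subspace and $\ell$ is separable infinite-dimensional). Then $\Psi_0^*\Phi : \ell \to \ell$ is a bounded operator whose associated map $W \to V$ is, up to the fixed isometric identifications, $P_V|_{W \to V}$, which by $W \approx_0 V$ and Lemma \ref{approx}(iii) is Fredholm of index $\charge(V,W) = -\charge(W,V) = 0$. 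Moreover, since $P_V - P_W \in I_2(\HH)$, the off-diagonal correction makes $\Psi_0^*\Phi$ differ from a unitary on $\ell$ by a Hilbert-Schmidt operator — but that is not quite enough for trace-class, so the fix is to adjust $\Psi_0$ by a unitary $u$ on $\ell$: set $\Psi := \Psi_0 u$. One wants $u$ chosen so that $\Phi^*\Psi = \Phi^*\Psi_0\, u \in \mathrm{Id} + I_1(\ell)$. Since $\Phi^*\Psi_0$ is Fredholm of index $0$, write its polar decomposition $\Phi^*\Psi_0 = v\,|\Phi^*\Psi_0|$ with $v \in \cu(\ell)$ (using index $0$ to get $v$ unitary rather than merely a partial isometry, after a finite-rank correction); then take $u := v^{-1}$, so that $\Phi^*\Psi = |\Phi^*\Psi_0|$, and it remains to verify $|\Phi^*\Psi_0| \in \mathrm{Id} + I_1(\ell)$, equivalently $(\Phi^*\Psi_0)^*(\Phi^*\Psi_0) \in \mathrm{Id} + I_1(\ell)$. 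This last point is where the real work is: using $\Psi_0\Psi_0^* = P_W$ one gets $(\Phi^*\Psi_0)(\Psi_0^*\Phi) = \Phi^* P_W \Phi = \mathrm{Id}_\ell - \Phi^* P_{W^\perp}\Phi$, and $\Phi^* P_{W^\perp}\Phi$ relates to $(P_V - P_W)$ restricted appropriately; one must squeeze trace-class (not just Hilbert-Schmidt) out of this, which typically uses that $\Phi^* P_{W^\perp}\Phi = (P_{W^\perp}\Phi)^*(P_{W^\perp}\Phi)$ is a product of two Hilbert-Schmidt operators, hence trace-class.

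\textbf{Main obstacle.} The delicate step is the reverse direction: upgrading ``$P_V - P_W$ Hilbert-Schmidt plus index zero'' to the existence of a genuine isometry $\Psi$ with $\Phi^*\Psi$ having a \emph{determinant} (i.e. $\Phi^*\Psi - \mathrm{Id} \in I_1(\ell)$, not merely $I_2$). The resolution is the factorization $\Phi^* P_{W^\perp}\Phi = (P_{W^\perp}\Phi)^*(P_{W^\perp}\Phi)$ together with the fact that a product of two Hilbert-Schmidt operators is trace-class (stated in the Preliminaries), combined with the polar-decomposition trick to absorb the non-self-adjoint part into a unitary reparametrization of $\ell$ — and one has to be slightly careful that the index-zero hypothesis is exactly what makes that unitary (rather than partially isometric) adjustment available, possibly after a finite-rank correction which is harmless for the trace-class condition. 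Once $\Psi \in \iseas$ with $\Psi \sim \Phi$ and $\im(\Psi) = W$ is constructed, both inclusions combine to give the stated equality of sets, and the "in other words" reformulation is immediate.
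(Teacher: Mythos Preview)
Your approach is correct and is essentially the direct argument that the paper only gestures at: the paper's proof merely points to Lemma~\ref{Lemma:admbasis} (Existence of admissible bases) and a polar decomposition, and your construction unpacks exactly that --- the key trace-class step $\Phi^* P_{W^\perp}\Phi = (P_{W^\perp}\Phi)^*(P_{W^\perp}\Phi) \in I_1(\ell)$ as a product of two Hilbert--Schmidt operators, the index-zero condition allowing a finite-rank fix, and the polar decomposition to obtain an isometry. The only difference is cosmetic: the paper phrases the reverse direction via the Stiefel manifold (start from $\tilde w = P_W|_{\HH_+}$ and correct), whereas you start from an arbitrary isometry $\Psi_0$ and correct by a unitary on $\ell$; both routes rest on the same estimate.

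Two small algebraic slips to fix. First, in the forward direction the displayed identity $P_V - P_V P_W = \Phi(\mathrm{Id}_\ell - \Phi^*\Psi)\Psi^*$ is wrong: the right-hand side equals $\Phi\Psi^* - P_V P_W$, not $P_V - P_V P_W$. The clean replacement is to compute directly
\[
\mathrm{Id}_\ell - \Phi^* P_W \Phi \;=\; \mathrm{Id}_\ell - (\Phi^*\Psi)(\Phi^*\Psi)^* \in I_1(\ell),
\]
which gives $P_{W^\perp}\Phi \in I_2$ and hence $P_{W^\perp}P_V \in I_2$; then invoke Lemma~\ref{approx}(ii)$\Rightarrow$(i) as you intended. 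Second, in the reverse direction there is a left/right mismatch: with $\Psi = \Psi_0 u$ you get $\Phi^*\Psi = (\Phi^*\Psi_0)u$, so you need the \emph{right} polar decomposition $\Phi^*\Psi_0 = |(\Phi^*\Psi_0)^*|\,v$ and $u = v^{-1}$, which then yields $\Phi^*\Psi = |(\Phi^*\Psi_0)^*|$ and matches your subsequent (correct) computation of $(\Phi^*\Psi_0)(\Phi^*\Psi_0)^* = \Phi^* P_W \Phi$ rather than $(\Phi^*\Psi_0)^*(\Phi^*\Psi_0)$. Neither slip affects the substance.
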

\begin{proof} See Lemma II.12 in \cite{DeDueMeScho} or note that the statement follows as a reformulation of our Lemma \ref{Lemma:admbasis} (``Existence of admissible basis'') and the remark following Definition \ref{Def:admbasis}(``Admissible Basis''). A polar decomposition can be used to make the Dirac sea/admissible basis isometric. 
\end{proof}

\begin{Construction}[Formal Linear Combinations] \label{const:proto}
\mbox{}
\begin{enumerate}
\item
For any set $\mathcal{S}$, let $\IC^{(\cs)}$ denote the set of all maps $\alpha:\cs\to \IC$ for which the support $\{\Phi\in \cs\;|\;\alpha(\Phi)\neq 0\}$ is finite. For $\Phi\in \cs$, let $[\Phi]\in \IC^{(\cs)}$ denote the algebraic dual, i.e. the map satisfying $[\Phi](\Phi)=1$ and $[\Phi](\Psi)=0$ for $\Phi\neq\Psi\in \cs$.
Thus, $\IC^{(\cs)}$ consists of all finite formal linear
combinations $\alpha=\sum_{\Psi\in \cs}\alpha(\Psi)[\Psi]$ of elements of $\cs$ with complex coefficients.
\item
Now, let $\cs\in\seas /{\sim}$ be a Dirac sea class. \\
We define the map
$\sk{\cdot,\cdot}:\cs\times \cs\to \IC$, $(\Phi,\Psi)\mapsto\sk{\Phi,\Psi }:=\det(\Phi^*\Psi )$.\\
The determinant exists and is finite by definition of $\sim$. 

\item
For $\cs\in\seas /{\sim}$,
let $\sk{\cdot,\cdot}:\IC^{(\cs)}\times \IC^{(\cs)}\to \IC$
denote the sesquilinear extension of $\sk{\cdot,\cdot}:\cs\times \cs \to \IC$ to the linear space $\IC^{(\cs)}$ defined as follows:\\
For $\alpha,\beta\in \IC^{(\cs)}$,
\begin{equation}
\sk{\alpha,\beta}:=\sum_{\Phi\in \cs}\sum_{\Psi \in \cs} \overline{\alpha(\Phi)}\beta(\Psi )
\det(\Phi^*\Psi ).
\end{equation}
The bar denotes the complex conjugate. Note that the sums consist of at most finitely many nonzero summands. In particular, we have $\sk{[\Phi],[\Psi ]}=\sk{\Phi,\Psi }$ for $\Phi,\Psi \in \cs$.

The sesquilinear form $\sk{\cdot,\cdot}$ on $\IC^{(\cs)}$ is Hermitian and positive-semidefinite\footnote{see\cite{DeDueMeScho}, Lemma II.14 or our Proposition \ref{Hermitianform} ``Hermitian Form''.}.\\ 
Therefore, it defines a semi-norm on $\IC^{(\cs)}$ by $\lVert \alpha \rVert := \sqrt{\langle \alpha , \alpha \rangle}$.
\end{enumerate}
\end{Construction}
\begin{Definition}[Infinite Wedge Space]
\mbox{}\\
Let $\FF_{\cs}$ be the completion of $\IC^{(\cs)}$ with respect to the semi-norm $\lVert \cdot \rVert$.\\
$\FF_{\cs}$ is an infinite-dimensional, separable, complex Hilbert space.\\
We will refer to it as the \emph{infinite wedge space} over the Dirac sea class $\cs$.\\  

\noindent By $\bwdge: \cs \rightarrow \FF_{\cs}$ we denote the canonical map $\bwdge\Phi := [\Phi]$ coming from the inclusions \begin{equation*}\cs \hookrightarrow \IC^{\cs} \hookrightarrow \FF_{\cs}.\end{equation*}   
\end{Definition}
\noindent Note that the null-space $N_{\cs}:= \lbrace \alpha \in \IC^{(\cs)} \mid \lVert \alpha \rVert = 0 \rbrace$ is factored out in the process of completion. 

\begin{Construction}[The Left Operation]
\item $\cu(\HH)$ acts on $\seas$ from the left by $(U, \Phi) \rightarrow U\Phi$.\\
This extends to a well-defined map 
\begin{equation*} \cs \xrightarrow{U} U\cs:= \lbrace U \Phi \mid \Phi \in \cs \rbrace \end{equation*}
between Dirac sea classes.\\
\item For $U \in \cu(\HH)$, the induced left operation $\lop U: \IC^{(\cs)} \rightarrow \IC^{(U\cs)}$, given by
\begin{align*}
\lop U\left(\sum_{\Phi\in \cs}\alpha(\Phi)[\Phi]\right)
=\sum_{\Phi\in \cs}\alpha(\Phi)[U\Phi],
\end{align*}
is an isometry with respect to the Hermitian forms $\scpro$ on $\IC^{(\cs)}$ and $\IC^{(U\cs)}$. \\
Consequently, it extends to a unitary map $\lop U:\FF_{\cs} \rightarrow \FF_{U\cs}$ between infinite wedge spaces, characterized by $\lop U(\bwdge \Phi) = \bwdge(U\Phi)$.
\item This extends immediately to unitary maps between different Hilbert spaces $\HH$ and $\HH'$.
\end{Construction}

\begin{Construction}[The Right Operation]
\item Let $\mathrm{Gl}_{-}(\ell) := \lbrace R \in \mathrm{Gl}(\ell) \mid R^*R \in Id + I_1(\ell)\rbrace$.\\ 
$\mathrm{Gl}_{-}(\ell)$ acts on $\seas$ from the right by $(\Phi , R) \rightarrow \Phi R$.
This extends to a well-defined map 
\begin{equation*} \cs \xrightarrow{R} \cs R:= \lbrace \Phi R \mid \Phi \in \cs \rbrace \end{equation*}
between Dirac sea classes. For $\cs\in\seas/{\sim}$ and $R\in \Gl_{-}(\ell)$
we have an induced operation from the right, namely
$\rop R:\IC^{(\cs)}\to \IC^{(\cs R)}$
given by
\begin{align*}
\rop R\left(\sum_{\Phi\in \cs}\alpha(\Phi)[\Phi]\right)
=\sum_{\Phi\in \cs}\alpha(\Phi)[\Phi R].
\end{align*}
This map is angle-preserving w.r.to the Hermitian forms, i.e. an isometry up to scaling, since
\begin{equation*}\det\bigl( (\Phi R)^*(\Psi R) \bigr) = \det(R^*\Phi\Psi R) = \det(R^*R)\det(\Phi^*\Psi) \end{equation*}
$\forall R \in Gl_{-}(\ell)$ and $\Phi, \Psi \in \cs$ and therefore, for all $\alpha,\beta\in \IC^{(\cs)}$,
\begin{align*}
\sk{\rop R\alpha,\rop R\beta}
= \det(R^*R) \sk{\alpha,\beta}.
\end{align*}
In particular, we have $\rop R[N_\cs]\subseteq N_{\cs R}$. It follows that for every $R\in \mathrm{Gl}_{-}(\ell)$,
the operation $\rop R:\IC^{(\cs)}\to \IC^{(\cs R)}$
induces a bounded linear
map $\rop R:\FF_\cs\to \FF_{\cs R}$ between the infinite wedge spaces, characterized by
$\rop R(\mathsf{\Lambda} \Phi)=\mathsf{\Lambda}(\Phi R)$ for $\Phi\in \cs$. This map is unitary, up to scaling.\\ 
The definition extends immediately to maps between different index spaces $\ell$ and $\ell'$.
\end{Construction}

We can think of the right action as basis transformations on the Hilbert space or, with a somewhat greater leap of imagination, as ``rotations'' of the Dirac seas. The seas will stay in the same Dirac sea class if and only if the transformations are ``small'' in the sense that they vary from the identity only by a trace-class operator or, in other words, have a determinant. Pictorially speaking, we ``rotate'' a Dirac sea just ``a little'', if we don't stir too much deep down in the sea. This intuition is made precise in the following Lemma: 

\begin{Lemma}[Uniqueness up to a Phase]\label{Lemma:Uniquenessuptoaphase}
\mbox{}\\
Let $R \in \mathrm{Gl}_{-}(\ell)$ and $\cs \in \seas /{\sim}$ a Dirac sea class.
Then, $\cs R = \cs$ if and only if $R$ has a determinant, i.e. iff $R \in \mathrm{GL}^1(\ell)$.
In this case, the right operation $\rop R : \FF_{\cs} \rightarrow \FF_{\cs}$ corresponds to multiplication with $\det(R)$ on $\FF_{\cs}$. In particular, if $R \in \cu(\ell)\, \cap\, \Gl^1(\ell)$, the right operation $\rop R$ corresponds to multiplication by a complex phase $\in \cu(1)$.
\item Since the right-action is a well-defined map between Dirac sea classes, it follows immediately for another operator $Q \in \Gl_{-}(\ell)$ that $\cs R = \cs Q$ if and only if  $Q^{-1}R$ has a determinant. In this case, 
$\rop R = \det(Q^{-1}R) \rop Q$ on $\FF_{\cs}$. 
\end{Lemma}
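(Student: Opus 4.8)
The plan is to establish the two directions of the "if and only if" separately, then read off the multiplicative description of $\rop R$, and finally deduce the corollary about $\cs R = \cs Q$ by a reduction to the first part.

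First I would handle the easy direction: if $R \in \mathrm{GL}^1(\ell)$, i.e. $R$ has a determinant, then I claim $\cs R = \cs$ and $\rop R$ acts as multiplication by $\det(R)$. Pick any $\Phi \in \cs$. Since $R^*R \in \mathrm{Id} + I_1(\ell)$ and $R \in \mathrm{GL}^1(\ell) \subset \mathrm{Id} + I_1(\ell)$, one checks $\Phi^*(\Phi R) = (\Phi^*\Phi) R$, which lies in $\mathrm{Id} + I_1(\ell)$ because $\Phi^*\Phi \in \mathrm{Id} + I_1(\ell)$ and $I_1(\ell)$ is a two-sided ideal; hence $\Phi R \sim \Phi$, so $\Phi R \in \cs$ and therefore $\cs R \subseteq \cs$. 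Applying the same to $R^{-1}$ (which also has a determinant, by property iii) of the Fredholm determinant) gives $\cs R^{-1} \subseteq \cs$, hence $\cs = (\cs R^{-1}) R \subseteq \cs R$, so $\cs R = \cs$. For the action on $\FF_\cs$: for $\Phi, \Psi \in \cs$ we have, as already recorded in the Construction of the right operation,
\begin{equation*}
\sk{\bwdge(\Phi R), \bwdge \Psi} = \det\bigl((\Phi R)^*\Psi\bigr) = \det\bigl(R^* (\Phi^*\Psi)\bigr) = \overline{\det(R)}\,\det(\Phi^*\Psi) = \overline{\det(R)}\,\sk{\bwdge\Phi,\bwdge\Psi},
\end{equation*}
using multiplicativity of the Fredholm determinant (both $R^*$ and $\Phi^*\Psi$ are in $\mathrm{Id}+I_1$) and $\det(R^*) = \overline{\det(R)}$. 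Since the $\bwdge\Psi$, $\Psi\in\cs$, span a dense subspace of $\FF_\cs$, this shows $\rop R(\bwdge\Phi) = \det(R)\,\bwdge\Phi$, and by linearity and continuity $\rop R = \det(R)\cdot\mathrm{Id}$ on all of $\FF_\cs$. If moreover $R$ is unitary then $|\det(R)|=1$, so the phase lies in $\cu(1)$.

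The substantive direction is the converse: if $\cs R = \cs$ then $R$ has a determinant. Here I would argue as follows. Fix $\Phi \in \cs$. Since $\cs R = \cs$, the sea $\Phi R$ lies in $\cs$, hence $\Phi R \sim \Phi$, which by definition means $\Phi^*(\Phi R) = (\Phi^*\Phi) R \in \mathrm{Id} + I_1(\ell)$. Now I want to conclude $R \in \mathrm{Id}+I_1(\ell)$; the point is that $\Phi^*\Phi = \mathrm{Id} + T$ with $T\in I_1(\ell)$, and $\Phi^*\Phi$ is invertible (it is a positive operator which is a trace-class perturbation of the identity and, since $\Phi$ has closed range with trivial kernel — $\im\Phi$ is a polarization and $\Phi^*\Phi$ has a nonzero determinant — it is bounded below, hence invertible with bounded inverse $(\Phi^*\Phi)^{-1} \in \mathrm{Id} + I_1(\ell)$ by the same Neumann-series / ideal argument used repeatedly in the excerpt). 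Then $R = (\Phi^*\Phi)^{-1}\bigl((\Phi^*\Phi)R\bigr)$ is a product of an element of $\mathrm{Id}+I_1$ with an element of $\mathrm{Id}+I_1$, hence $R \in \mathrm{Id} + I_1(\ell)$; since $R$ is already assumed in $\Gl_{-}(\ell) \subset \Gl(\ell)$, it follows that $R \in \mathrm{GL}^1(\ell)$, i.e. $R$ has a determinant. (If one prefers to avoid invoking invertibility of $\Phi^*\Phi$ directly, an alternative is to use that for an isometric representative $\Phi$ of $\cs$ — which exists by Lemma \ref{lem:connection sim approx} and the remarks there — one has $\Phi^*\Phi = \mathrm{Id}$ outright, so $(\Phi^*\Phi)R = R$ and the conclusion is immediate; invariance of the sea class under isometric rescaling means we lose nothing by this reduction.)

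Finally, for the corollary: given $Q \in \Gl_{-}(\ell)$, note $\cs R = \cs Q$ iff $(\cs R)Q^{-1} = \cs$ iff $\cs (R Q^{-1}) = \cs$, where I use that the right action descends to a well-defined action of $\Gl_{-}(\ell)$ on $\seas/{\sim}$ (already established in the Construction of the right operation) together with $\rop{Q^{-1}}\circ\rop Q = \rop{\mathrm{Id}}$. Wait — more care is needed on ordering; what one actually has is $\cs R = \cs Q \iff \cs(RQ^{-1}) = \cs$ after right-multiplying both sides by $Q^{-1}$. By the main statement this holds iff $RQ^{-1}$ has a determinant, equivalently iff $Q^{-1}R$ has a determinant (conjugate operators, or: $\det(RQ^{-1})$ exists iff $\det(Q^{-1}R)$ exists since both equal trace-class perturbations of the identity related by conjugation by $Q$). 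In that case, on $\FF_\cs$ we compute $\rop R = \rop{Q}\circ\rop{Q^{-1}R}$ — using functoriality of the right operation, $\rop{R_1 R_2} = \rop{R_1}\circ\rop{R_2}$, which follows directly from the defining formula $\rop R(\bwdge\Phi) = \bwdge(\Phi R)$ — and by the first part $\rop{Q^{-1}R} = \det(Q^{-1}R)\cdot\mathrm{Id}$, whence $\rop R = \det(Q^{-1}R)\,\rop Q$ on $\FF_\cs$.

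The main obstacle I anticipate is the converse direction, specifically pinning down cleanly that $\Phi^*(\Phi R)\in\mathrm{Id}+I_1$ forces $R\in\mathrm{Id}+I_1$ — this hinges on $(\Phi^*\Phi)^{-1}$ existing and lying in $\mathrm{Id}+I_1(\ell)$, which in turn rests on $\Phi^*\Phi$ being bounded below. Choosing an isometric representative of the sea class (legitimate, since isometric rescaling does not change $\cs$) sidesteps this entirely and is the route I would actually take in the writeup.
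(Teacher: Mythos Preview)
Your proof is correct and follows essentially the same route as the paper; the paper shows $\rop R = \det(R)\cdot\mathrm{Id}$ by directly computing $\lVert[\Phi R] - (\det R)[\Phi]\rVert^2 = 0$ rather than testing inner products against all $\bwdge\Psi$. One nitpick: the right action is contravariant, so the correct functoriality reads $\rop{R_1 R_2} = \rop{R_2}\circ\rop{R_1}$ --- your ordering slip is harmless here since one factor is a scalar.
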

\begin{proof} Let $\cs \in \seas /{\sim}$. For any $\Phi \in \cs,\; \Phi^*(\Phi R)$ has a determinant if and only if $R$ has a determinant, since $\Phi^*\Phi \in Id + I_1(\ell)$ by definition. Thus: 
\begin{equation*}\Phi \sim \Phi R \iff \cs = \cs R \iff R \in \mathrm{Gl}^1(\ell)\end{equation*}
Now, on $\IC^{(\cs)}$ with the semi-norm defined by $\scpro$:
\begin{align}
\|[\Phi R]-(\det R)[\Phi]\|^2
=\;&\det((\Phi R)^*(\Phi R))-(\det R) \det((\Phi R)^* \Phi)\nonumber\\
- &\,\overline{\det R}
\det(\Phi^*\Phi R)+\lvert\det R\rvert^2\det(\Phi^*\Phi)
\nonumber\\
=\;& 2\lvert\det R\rvert^2\det(\Phi^*\Phi)-2\lvert\det R\rvert^2\det(\Phi^*\Phi)
=0.
\end{align}
Therefore, $\rop R = \det(R)\cdot \mathrm{Id}$ on $\FF_{\cs}$.\\
\end{proof}
Recall that the construction of the Fock space as an infinite wedge space involves two consecutive choices. The first one (usually determined by the physics) is the choice of a polarization class $C \in \pol(\HH)$. Afterwards, we have a more or less arbitrary choice of a polarization class $\cs \in \ocean(C)/{\sim}$. As this is an equivalence class, it is uniquely determined by a ``reference polarization'' $\Phi \in \ocean(C)$.
This duality is reflected in the duality of left- and right- operations. The operations from the left are transformations between polarization classes. The unitary transformation induced by $U \in \cu(\HH)$ stays in the same polarization class if and only if it satisfies the Shale-Stinespring condition \eqref{SS}.
It preserves the charge, if and only if the $(++)$-component w.r.t the corresponding polarization has index 0.
The operations from the right, on the other hand, are transformations between different Dirac sea classes within the same ocean i.e. between seas with image in the same polarization class. We see how the mathematical structure at hand gives us a very natural way to handle unitary transformations between Fock spaces.

\noindent The left-action alone would correspond to the product-wise lift
\begin{equation*} \Phi(e_0)\wedge\Phi(e_1)\wedge\Phi(e_2)\wedge \cdots \; \xrightarrow{\;\;\lop U\;\;} \; U\Phi(e_0)\wedge U\Phi(e_1)\wedge U\Phi(e_2)\wedge \cdots \end{equation*}
But this alone is not very helpful, in general. In addition, we need a suitable right-operation $\rop R$ to rotate the seas back into the desired Dirac sea class. If $U$ preserves an equal-charge polarization class $C \in \pol(\HH)\slash \approx_0$, i.e.  $U \in \cu^0_{\rm res}(\HH, C ; \HH , C)$ a right-operation can be chosen in such a way as to make $ \rop R \lop U$ a unitary transformation on the Fock space $\FF_{\cs}$ for a fixed $ \cs \in \ocean(C)/{\sim}$. If, however, $U$ maps one polarization class $C$ into a different one, i.e. $U \in \cu^0_{\rm res}(\HH, C; \HH, C')$ with $C' \neq C,$ the best we can do is to implement $U$ as a unitary transformation between \textit{different} Fock spaces $\FF_{\cs}$ and $\FF_{\cs'}$ for $\cs \in \ocean(C)/{\sim}$ and $\cs' \in \ocean(C')/{\sim}$. The right-operation then has to be such that $U \cs R = \cs'$.\\ 

The previous Lemma tells us that the implementations are determined only up to a complex phase. This means that the group that acts on the Fock space contains an additional $\cu(1)$-freedom besides the information contained in $\cu^0_{\rm res}(\HH, C; \HH, C) = \Ures(\HH)$. Indeed, it corresponds to the central extension $\Ures(\HH)$ of $\Ur(\HH)$ by $\cu(1)$ studied in Chapter 4 (cf. [Cor.\ref{Ures on wedge spaces}]). These considerations lie at the heart of the following crucial theorem.
\begin{Theorem}[Lift of Unitary Transformations, \cite{DeDueMeScho} Thm. II.26] \label{unser abstrakter Shale Stinespring}
\mbox{}\\
For given polarization classes $C, C' \in\pol(\HH)/{\approx}_0$ 
let $\cs\in \ocean(C)/{\sim}$ and $\cs'\in \ocean(C')/{\sim}$. Then, for any unitary map $U:\HH \rightarrow \HH$, the following are equivalent:
\begin{enumerate}
\item[i)] $U\in \cu_{\rm res}^0(\HH,C; \HH, C')$.

\item[ii)]
There is $R\in \cu(\ell)$ such that $U\cs R=\cs'$, and hence
$\rop R\lop U$ maps $\FF_\cs$ to $\FF_{\cs'}$.

In this case, if  $R' \in \cu(\ell)$ is another map with $U \cs R' = \cs'$, then \\$\rop {R'} \lop U = \det(R'R^*)\, \rop R \lop U$ with $\det(R'R^*) \in \cu(1)$. 
\end{enumerate}
\end{Theorem}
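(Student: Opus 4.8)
The plan is to prove $i)\Rightarrow ii)$ and $ii)\Rightarrow i)$ separately and then the uniqueness clause, relying on three facts already available: the connection between $\sim$ and $\approx_0$ (Lemma \ref{lem:connection sim approx}), the ``uniqueness up to a phase'' statement (Lemma \ref{Lemma:Uniquenessuptoaphase}), and the characterization — recorded at the definition of the restricted operator classes — that a unitary lies in $\cu_{\rm res}^0(\HH,C;\HH,C')$ as soon as it maps \emph{some} single representative of $C$ into $C'$ (this in turn rests on unitary conjugation preserving both the relation $\approx$ and the relative charge).

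For $i)\Rightarrow ii)$ I would fix representatives $\Phi\in\cs\subseteq\ocean(C)$ and $\Phi'\in\cs'\subseteq\ocean(C')$. Since $U$ is unitary, $U\Phi$ is again an isometry with $\im(U\Phi)=U\im(\Phi)\in\pol(\HH)$ and $(U\Phi)^*(U\Phi)=\Phi^*\Phi\in\mathrm{Id}+I_1(\ell)$, so $U\Phi\in\iseas$; and $U\in\cu_{\rm res}^0(\HH,C;\HH,C')$ forces $\im(U\Phi)\in C'$, i.e.\ $U\Phi\in\ocean(C')$. Setting $W:=\im(U\Phi)$ and applying Lemma \ref{lem:connection sim approx} to $C'=[\im(\Phi')]_{\approx_0}$ with reference sea $\Phi'$ produces $\Psi\in\iseas$ with $\Psi\sim\Phi'$ and $\im(\Psi)=W$. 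Now $U\Phi$ and $\Psi$ are isometries $\ell\to\HH$ with the same image $W$, so
\[
R:=(U\Phi)^*\Psi
\]
is unitary on $\ell$: indeed $R^*R=\Psi^*(U\Phi)(U\Phi)^*\Psi=\Psi^*P_W\Psi=\Psi^*\Psi=\mathrm{Id}_\ell$, likewise $RR^*=(U\Phi)^*P_W(U\Phi)=(U\Phi)^*(U\Phi)=\mathrm{Id}_\ell$, and $U\Phi R=(U\Phi)(U\Phi)^*\Psi=P_W\Psi=\Psi$. Hence the sea $U\Phi R=\Psi$ lies in the class of $\Phi'$, that is $U\cs R=\cs'$; and since $R\in\cu(\ell)$, the left operation $\lop{U}\colon\FF_\cs\to\FF_{U\cs}$ and the right operation $\rop{R}\colon\FF_{U\cs}\to\FF_{U\cs R}=\FF_{\cs'}$ are both unitary, so $\rop R\lop U$ is a unitary $\FF_\cs\to\FF_{\cs'}$.

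For $ii)\Rightarrow i)$ I would take $R\in\cu(\ell)$ with $U\cs R=\cs'$ and a representative $\Phi\in\cs$; then $U\Phi R$ is an isometry (the isometry $U\Phi$ followed by the unitary $R$) lying in $\cs'$, so $U\Phi R\sim\Phi'$ for $\Phi'\in\cs'$, and the forward direction of Lemma \ref{lem:connection sim approx} gives $\im(U\Phi R)\approx_0\im(\Phi')$. Since $R$ is invertible $\im(U\Phi R)=U\im(\Phi)$, and $\im(\Phi')\in C'$, so $U\im(\Phi)\in C'$ with $\im(\Phi)\in C$; thus $U$ carries a representative of $C$ into $C'$ and lies in $\cu_{\rm res}^0(\HH,C;\HH,C')$. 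For the uniqueness clause, given $R,R'\in\cu(\ell)$ with $U\cs R=U\cs R'=\cs'$, I would apply Lemma \ref{Lemma:Uniquenessuptoaphase} over the single class $U\cs$ (with $Q=R'$): it gives $R'^{-1}R\in\mathrm{GL}^1(\ell)$ and $\rop R=\det(R'^{-1}R)\,\rop{R'}$ on $\FF_{U\cs}$, hence $\rop{R'}=\det(R^{-1}R')\,\rop R$ there. Since $R$ is unitary $R^{-1}R'=R^*R'$, and $R'R^*=R'(R^*R')R'^{-1}$ is a conjugate of $R^*R'$ by the unitary $R'$, so $\det(R^*R')=\det(R'R^*)$; composing with $\lop U$ yields $\rop{R'}\lop U=\det(R'R^*)\,\rop R\lop U$, with $\det(R'R^*)\in\cu(1)$ by the closing remark of Lemma \ref{Lemma:Uniquenessuptoaphase}, since $R'R^*\in\cu(\ell)\cap\mathrm{GL}^1(\ell)$.

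I expect the work to be essentially bookkeeping: the care needed is to check at each step that the seas built ($U\Phi$, $U\Phi R$, $\Psi$) actually lie in $\seas$ resp.\ $\iseas$, so that Lemma \ref{lem:connection sim approx} and the left/right-operation machinery apply, and to keep track of which Dirac sea class each operation is taken over. The only input that is not purely formal is the conjugation invariance of the Fredholm determinant used in the uniqueness clause, namely $\det(g\,T\,g^{-1})=\det(T)$ for $T\in\mathrm{Id}+I_1(\ell)$ and $g$ bounded invertible with bounded inverse, which holds because conjugation leaves the eigenvalue list of $T-\mathrm{Id}$ unchanged; I foresee no serious obstacle.
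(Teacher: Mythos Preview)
The paper does not supply its own proof of this theorem; it is stated with a citation to \cite{DeDueMeScho}, Thm.~II.26, and followed immediately by its corollaries. So there is no in-paper argument to compare against.

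Your proof is correct and is the natural one given the infrastructure already in place (Lemma~\ref{lem:connection sim approx} and Lemma~\ref{Lemma:Uniquenessuptoaphase}). The construction of $R=(U\Phi)^*\Psi$ from two isometries with common image is the standard trick, and your verification that $R$ is unitary and that $U\Phi R=\Psi$ is clean. The uniqueness clause is handled exactly as it should be, via Lemma~\ref{Lemma:Uniquenessuptoaphase} applied over the class $U\cs$; the small determinant identity $\det(R^*R')=\det(R'R^*)$ by conjugation invariance is the right way to match the form stated in the theorem. One cosmetic remark: in the $i)\Rightarrow ii)$ step you write $(U\Phi)^*(U\Phi)=\Phi^*\Phi\in\mathrm{Id}+I_1(\ell)$, but since $\Phi\in\ocean(C)\subset\iseas$ you actually have $\Phi^*\Phi=\mathrm{Id}_\ell$ exactly, which is what you then use when computing $RR^*$.
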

\begin{Corollary}[$\cu(\ell)$ acts transitively on oceans]
\label{Cor:U(l) acts transitively}
\mbox{}\\
Setting $U = \mathds{1}_{\HH}$ in the last theorem it follows immediately that 
  \begin{align*}
    \ocean(C)/{\sim}=\{\cs R\;|\; R\in \cu(\ell)\}.
  \end{align*}
for any given $C\in\pol(\HH)/{\approx_0}$ and $\cs\in\ocean(C)/{\sim}$.
\item In other words: $\cu(\ell)$ acts transitively on $\ocean(C)/{\sim}$ from the right.  
\end{Corollary}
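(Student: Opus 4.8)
The plan is to obtain this statement as an immediate consequence of Theorem~\ref{unser abstrakter Shale Stinespring}, specialized to $U=\mathds{1}_{\HH}$ and $C'=C$. First I would record that $\mathds{1}_{\HH}$ lies in $\cu_{\rm res}^0(\HH,C;\HH,C)$: it preserves every polarization class, and the $(++)$-component of the identity with respect to any admissible polarization is again the identity, which is Fredholm of index $0$, so the charge is preserved as well. Thus hypothesis (i) of the theorem holds with $C'=C$, and the theorem then grants, for every $\cs'\in\ocean(C)/{\sim}$, some $R\in\cu(\ell)$ with $\mathds{1}_{\HH}\,\cs R=\cs R=\cs'$. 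This is precisely the inclusion $\ocean(C)/{\sim}\subseteq\{\cs R\mid R\in\cu(\ell)\}$, i.e.\ that $\cu(\ell)$ acts transitively from the right.

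For the reverse inclusion $\{\cs R\mid R\in\cu(\ell)\}\subseteq\ocean(C)/{\sim}$, I would check that right multiplication by a unitary $R\in\cu(\ell)$ keeps a sea inside the ocean over $C$. Choosing a representative $\Phi\in\cs$ with $\Phi\in\ocean(C)$ --- which exists by definition of $\cs\in\ocean(C)/{\sim}$ --- the composition $\Phi R$ is again an isometry with $\im(\Phi R)=\im(\Phi)\in C$, and since $R$ is unitary and $I_1(\ell)$ is a two-sided ideal,
\begin{equation*}
(\Phi R)^*(\Phi R)=R^*\Phi^*\Phi R=\mathrm{Id}_\ell+R^*\bigl(\Phi^*\Phi-\mathrm{Id}_\ell\bigr)R\in\mathrm{Id}_\ell+I_1(\ell).
\end{equation*}
Hence $\Phi R\in\ocean(C)$, so $\cs R\in\ocean(C)/{\sim}$; that $\cs\mapsto\cs R$ is well defined on $\sim$-classes was already established in the Construction of the right operation, and Lemma~\ref{lem:connection sim approx} guarantees that all the $\sim$-classes in question really sit inside the single ocean $\ocean(C)$. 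Combining the two inclusions gives the claimed equality, i.e.\ transitivity of the right $\cu(\ell)$-action on $\ocean(C)/{\sim}$.

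I do not expect a genuine obstacle here: the entire weight of the argument rests on Theorem~\ref{unser abstrakter Shale Stinespring}. The only steps requiring a moment's care are (a) verifying that $\mathds{1}_{\HH}$ lands in the \emph{charge-preserving} restricted group $\cu_{\rm res}^0$ and not merely in $\Ur(\HH)$, and (b) the bookkeeping showing that the right $\cu(\ell)$-action does not leave the ocean --- both immediate from the definitions. If one wished to be independent of the full lift theorem, the same conclusion can be reached directly from Lemma~\ref{lem:connection sim approx} by passing between two isometric representatives of a fixed equal-charge polarization class via a polar decomposition, but routing through Theorem~\ref{unser abstrakter Shale Stinespring} is the most economical.
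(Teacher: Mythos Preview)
Your proposal is correct and follows exactly the route indicated by the paper: the corollary is stated as an immediate consequence of Theorem~\ref{unser abstrakter Shale Stinespring} with $U=\mathds{1}_{\HH}$ and $C'=C$, and the paper gives no further argument beyond that sentence. Your write-up simply spells out the two inclusions and the verification that $\mathds{1}_{\HH}\in\cu_{\rm res}^0(\HH,C;\HH,C)$ in more detail than the paper bothers to, which is fine.
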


\begin{Corollary}[Equivalence of Infinite Wedge Spaces]
\label{Cor:equivalence of infinite wedge spaces}
\mbox{}\\
Let $\cs, \cs' \in \ocean(C)/{\sim}$ two Dirac sea classes over the same polarization class $C \in\pol(\HH)/{\approx}_0$.  
There exists $R \in \cu(\ell)$ such that $\rop R: \FF_\cs \to \FF_{\cs'}$ is an isomorphism.\\
If $R'$ is another map providing such an isomorphism, then $R^*R' \in \cu^1(\ell)$ has a determinant and $\rop {R'} =  \det(R'R^* )\, \rop R$. In particular, $\FF_\cs = \FF_{\cs'}$ if and only if $R \in \cu^1(\ell)$.
\end{Corollary}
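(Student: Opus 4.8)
The plan is to get the whole statement from Corollary~\ref{Cor:U(l) acts transitively} (which is the $U=\mathds{1}_{\HH}$ instance of Theorem~\ref{unser abstrakter Shale Stinespring}) together with Lemma~\ref{Lemma:Uniquenessuptoaphase}; no new machinery is needed.

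\textbf{Existence.} First I would argue: since $\cs$ and $\cs'$ lie over the same polarization class $C$, Corollary~\ref{Cor:U(l) acts transitively} provides $R\in\cu(\ell)$ with $\cs R=\cs'$, so the right operation is defined as a map $\rop R:\FF_{\cs}\to\FF_{\cs R}=\FF_{\cs'}$. Because $\det(R^*R)=\det(\mathds{1}_{\ell})=1$, the construction of the right operation shows $\rop R$ is an isometry, and $\rop{R^{-1}}=\rop{R^*}$ is a two-sided inverse for it (using $\rop{R_2}\rop{R_1}=\rop{R_1R_2}$), so $\rop R:\FF_{\cs}\to\FF_{\cs'}$ is a unitary isomorphism.

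\textbf{Uniqueness up to a phase.} Next I would take a second $R'\in\cu(\ell)$ with $\cs R'=\cs'$ (this is exactly the condition under which $\rop{R'}$ defines such an isomorphism). Then $\cs R=\cs'=\cs R'$, and the second part of Lemma~\ref{Lemma:Uniquenessuptoaphase} (applied with $Q:=R'$) forces $(R')^{-1}R$ to have a determinant; as $R,R'$ are unitary we have $(R')^{-1}=(R')^*$, hence $(R')^*R\in\cu^1(\ell)$, and taking adjoints — which preserves $\cu^1(\ell)$ since $\det(A^*)=\overline{\det A}$ — also $R^*R'\in\cu^1(\ell)$. The same Lemma gives $\rop R=\det\bigl((R')^{-1}R\bigr)\rop{R'}$ on $\FF_{\cs}$, whence $\rop{R'}=\det(R^{-1}R')\rop R=\det(R^*R')\rop R$, and $\det(R^*R')=\det(R'R^*)$ by cyclicity of the Fredholm determinant ($R^*R'$ and $R'R^*$ are conjugate and both lie in $\mathrm{Id}+I_1(\ell)$), which is the asserted normalization.

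\textbf{The last clause.} Finally, from $\cs'=\cs R$ and the first part of Lemma~\ref{Lemma:Uniquenessuptoaphase}, $\cs=\cs'$ holds iff $R$ has a determinant, i.e.\ $R\in\mathrm{GL}^1(\ell)$, which for unitary $R$ means $R\in\cu^1(\ell)$; since $\FF_{\cs}$ and $\FF_{\cs'}$ are the completions of $\IC^{(\cs)}$ and $\IC^{(\cs')}$, they coincide precisely when $\cs=\cs'$, and combining these yields $\FF_{\cs}=\FF_{\cs'}\iff R\in\cu^1(\ell)$. I do not expect a real obstacle here — Theorem~\ref{unser abstrakter Shale Stinespring}/Corollary~\ref{Cor:U(l) acts transitively} carries all the weight. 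The only steps needing genuine care are upgrading ``isometry'' to ``isomorphism'' via the explicit inverse $\rop{R^{-1}}$, getting the phase in exactly the form $\det(R'R^*)$ (cyclicity of the Fredholm determinant plus the identities $R^{-1}=R^*$ and $\det(A^*)=\overline{\det A}$ for unitaries), and pinning down the precise meaning of ``$\FF_{\cs}=\FF_{\cs'}$'' in the final clause so that it indeed forces $\cs=\cs'$.
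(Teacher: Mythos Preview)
Your proof is correct and is exactly the intended argument: the paper states this as a corollary without separate proof, leaving it to follow immediately from Corollary~\ref{Cor:U(l) acts transitively} (the $U=\mathds{1}$ case of Theorem~\ref{unser abstrakter Shale Stinespring}) together with Lemma~\ref{Lemma:Uniquenessuptoaphase}, which is precisely what you spell out. Your care with the phase normalization (conjugacy of $R^*R'$ and $R'R^*$ via $R'$ to get $\det(R^*R')=\det(R'R^*)$) and with the meaning of $\FF_{\cs}=\FF_{\cs'}$ is appropriate and fills in the details the paper leaves implicit.
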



\vspace*{2mm}
Theorem \ref{unser abstrakter Shale Stinespring} is a generalization of the Shale-Stinespring theorem in the language of the infinite wedge spaces.
Setting $C = C' = [\HH_+]$ it reproduces the well-known result that a unitary operator $U \in \cu(\HH)$ is implementable on a fixed Fock space $\FF$ over the polarization class $[\HH_+]$ if and only if $U \in \Ur(\HH)$.
Together with Lemma \ref{Lemma:Uniquenessuptoaphase}, the theorem implies that the implementation of a unitary operator (on a fixed Fock space or as a map between different Fock spaces) is unique up to a complex phase. So again, we have encountered the infamous geometric phase of QED. 

\section{The Geometric Construction}
In this section, we are going to follow the construction of the fermionic Fock space as presented, for example, in \cite{PreSe} or \cite{Mi}. The Fock space will be constructed from holomorphic sections in the dual of the determinant line-bundle over the complex (restricted) Grassmannian manifold. We will refer to this as the \textit{geometric construction} of the Fock space. It as quite an elegant approach for several reasons:
\begin{enumerate}
\item It reveals and exploits a remarkably elegant geometric structure that appears very naturally in our mathematical framework.
\item Polarizations, seas and their basis transformations are embraced by the geometric description, were they appear as base-manifold, fibre bundle and structure group. This provides an elegant and concise formalism, illustrating the relations between the objects involved. 
\item The central extensions $\Gres(\HH)$ of $\Gl_{\rm res}(\HH)$ and $\Ures(\HH)$ of $\Ur(\HH)$ constructed in Chapter 4 act most naturally on this Fock space.\\
\end{enumerate}

\noindent We will follow the pertinent literature in constructing the Fock space for the standard polarization $\HH = \HH_+ \oplus \HH_-$. This will serve as an exemplary illustration of the general case. The construction can be easily applied to arbitrary polarizations (and polarization classes), matching the generality of the infinite wedge space construction. This is outlined at the end of the section. We are also conforming to the mathematical literature in using $\HH_+$ instead of $\HH_-$ for the polarization that -- intuitively -- plays the role of the ``Dirac sea'' in the geometric construction of the Fock space. So the reader should note that the convention used in this section is in conflict with the physical convention used in \S 5.1, as well as in chapters 7 and 8.\\ 

We have already argued that polarizations, which are points on the infinite-dimensional Grassmanian manifold $\mathrm{Gr}(\HH)$, can be thought of as ``projective states'' of infinitely many fermions. Of course, we are not satisfied with a projective description, but aiming for a full-blown Hilbert space structure. In particular, we want to generalize the scalar product 
\begin{equation*}\langle v_1 \wedge \dots \wedge v_n , w_1 \wedge \dots \wedge w_n \rangle= \det ( \langle v_i , w_j \rangle)_{i,j} \end{equation*} 
from finite exterior products to infinite particle states. However, as we have seen before, while this expression is always well-defined in the finite-dimensional case, the infinite-dimensional limes exists only if we impose further compatibility-conditions on the choices of Hilbert-bases for different polarizations. In the infinite wedge space construction, this was done by identifying ``Dirac sea classes'', on which the Hermitian form is well-defined. In the geometric construction, the corresponding concept is that of \textit{admissible bases}. \\

\begin{Definition}[Admissible Basis]\label{Def:admbasis}
\mbox{}\\
\noindent Let $W \in \mathrm{Gr}^{0}(\HH)$. An \emph{admissible basis} for W is a bounded isomorphism $w: \HH_+\rightarrow W$ with the property that $P_+ \circ w$ has a determinant.
\end{Definition}
\newpage
\noindent It might be helpful to note:
\begin{itemize}
\item $W = \im(w) \in \mathrm{Gr}(\HH)$ iff $P_+w$ is Fredholm and $P_- w$ Hilbert-Schmidt.
\item $W = \im(w) \in \mathrm{Gr}^{0}(\HH)$ iff in addition $\ind(P_+ w) = 0$. 
\item $w$ is admissible basis of $W  \in \mathrm{Gr}^{0}(\HH)$ iff in addition $P_+ w \in Id + I_1(\HH_+)$.\\
\end{itemize}

\begin{Lemma}[Existence of Admissible Bases]\label{Lemma:admbasis}
\item Every $W \in \mathrm{Gr}^{0}(\mathcal{H})$ has an admissible basis.
\end{Lemma}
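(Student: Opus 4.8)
The plan is to build the required isomorphism $w:\HH_+\to W$ by hand out of the Fredholm structure of $P_+$ restricted to $W$. Since $W\in\mathrm{Gr}^0(\HH)$, the operator $p:=P_+|_{W\to\HH_+}$ is Fredholm of index $0$. Set $K:=\ker p\subset W$ and $L:=(\im p)^{\perp}=\ker(p^{*})\subset\HH_+$; both are finite-dimensional and, because $\ind p=0$, of the same dimension $k:=\dim K=\dim L$. As $p$ has closed range, $\im p=L^{\perp}$, and since $K$ is closed in $W$ we may split $W=K\oplus(W\ominus K)$. The restriction $p_0:=p|_{(W\ominus K)\to L^{\perp}}$ is then a bounded linear bijection, so by the bounded inverse theorem it has a bounded inverse $s:=p_0^{-1}:L^{\perp}\to W\ominus K\subset W$.

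Next I would fix a linear isomorphism $j:L\to K$ (which exists precisely because $\dim L=\dim K$) and define
\[
w:\HH_+\longrightarrow W,\qquad w(x):=s\bigl(P_{L^{\perp}}x\bigr)+j\bigl(P_{L}x\bigr),
\]
where $P_{L},P_{L^{\perp}}$ are the orthogonal projections in $\HH_+$. I claim $w$ is an admissible basis. It is bounded as a sum of bounded maps; it is injective because it sends $L^{\perp}$ isomorphically into $W\ominus K$ and $L$ isomorphically onto $K$, and these are orthogonal in $W$, so a vanishing image forces both components to vanish; and it is onto $W$ since every $y\in W$ decomposes as $y=y_1+y_2$ with $y_1\in W\ominus K$, $y_2\in K$, giving $y=w\bigl(p_0(y_1)+j^{-1}(y_2)\bigr)$. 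Finally, $P_+\circ w=p\circ w$, and $p\circ s=\mathrm{Id}_{L^{\perp}}$ while $p\circ j=0$ because $j$ has image in $K=\ker p$; hence $P_+w=P_{L^{\perp}}=\mathrm{Id}_{\HH_+}-P_L$. Thus $P_+w-\mathrm{Id}_{\HH_+}=-P_L$ is finite-rank, in particular trace class, so $P_+w\in\mathrm{Id}+I_1(\HH_+)$ has a determinant, and $w$ is admissible.

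I do not expect a genuine obstacle here: the argument is a routine assembly of standard facts about index-zero Fredholm operators, and the only analytic input is the bounded inverse theorem for $p_0$. The one place where the hypothesis $W\in\mathrm{Gr}^0(\HH)$ (as opposed to merely $W\in\mathrm{Gr}(\HH)$) is essential is the equality $\dim K=\dim L$, i.e. $\ind p=0$, which is exactly what lets $s$ be extended to an isomorphism defined on all of $\HH_+$ rather than only on a subspace of finite codimension. An alternative route would start from an arbitrary bounded isomorphism $w_0:\HH_+\to W$, observe that $P_+w_0$ is then Fredholm of index $0$, and correct $w_0$ by a finite-rank modification; but the direct definition above makes the trace-class property of $P_+w-\mathrm{Id}$ completely transparent and avoids tracking on which side the correction acts.
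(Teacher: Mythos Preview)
Your proof is correct and follows essentially the same strategy as the paper: exploit the index-zero Fredholm structure of the projection between $W$ and $\HH_+$ and patch on finite-dimensional pieces. The paper proceeds in the opposite direction, starting with $\tilde w:=P_W|_{\HH_+}:\HH_+\to W$ (the adjoint of your $p$), showing $P_+\tilde w\in\mathrm{Id}+I_1(\HH_+)$ via the Hilbert--Schmidt property of $P_{W^\perp}P_+$ from Lemma~\ref{approx}, and then modifying $\tilde w$ on its finite-dimensional kernel to make it an isomorphism onto $W$. Your route---inverting $p$ on $(\ker p)^\perp$ and extending by a finite-rank map on $L$---has the minor advantage that $P_+w-\mathrm{Id}_{\HH_+}=-P_L$ is manifestly finite-rank by construction, so the trace-class condition is immediate and no separate appeal to the polarization-class structure is needed.
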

\begin{proof} Let $W \in \mathrm{Gr}^{0}(\mathcal{H})$. Set $\tilde{w}:= P_W\lvert_{\HH_+}: \, \HH_+ \rightarrow W$.
$P_+ \tilde{w}$ has a determinant, because $\mathds{1}_{\HH_+} - P_+ \tilde{w}  = \bigl(P_+ -P_+P_WP_+\bigr)\bigl\lvert_{\HH_+}$ and
$P_+ - P_+P_WP_+ = P_+P_{W^{\perp}}P_+ = (P_{W^{\perp}}P_+)^*(P_{W^{\perp}}P_+) \in I_1(\HH)$ (Lemma \ref{approx}). We also know that $\tilde{w}$ is a Fredholm operator of index 0\\ 
(bc. $\charge(W, \HH_+) = \ind( P_W\lvert_{\HH_+}) = 0$ ), which means that the kernel of $\tilde{w}$ and its cokernel in $W$ are of equal and finite dimension. Therefore, we can define $w: \HH_+ \rightarrow W$ by setting $w = \tilde{w}$ on $\ker(\tilde{w})^{\perp}$ and extending it to the whole $\HH_+$ in such a way that it maps $\ker(\tilde{w})$ to a complement of $\im(\tilde{w})$ in $W$. Thus, by construction, $\im(w)=W$ and $P_+\,w$ has a determinant as it differs from $P_+ \tilde{w}$ by a finite-rank operator only. Using a polar decomposition, we can also make $w$ unitary.\\ 
\end{proof}
 
\begin{Lemma}[Relationship between Admissible Bases]
\mbox{}\\
Let $w: \HH_+ \to W$ be an admissible basis for $W \in \Gr(\HH)$. Then, $w': \HH_+ \to W$ is another admissible basis for $W$, if and only if $w' = w \circ L$ for some $L \in \Gl^1(\HH_+)$.
\end{Lemma}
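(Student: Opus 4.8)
The plan is to prove the two implications separately; both reduce to the stability of $\mathrm{Id}+I_1(\HH_+)$ under composition, together with the two-sided ideal property of $I_1(\HH_+)$ in $\mathcal{B}(\HH_+)$.

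First I would dispose of the easy direction. Suppose $w'=w\circ L$ for some $L\in\Gl^1(\HH_+)$. Since $w\colon\HH_+\to W$ and $L\colon\HH_+\to\HH_+$ are bounded isomorphisms, so is $w'\colon\HH_+\to W$, and $\im(w')=w(L(\HH_+))=w(\HH_+)=W\in\mathrm{Gr}^0(\HH)$, so the notion of admissible basis for $W$ applies to $w'$. Writing $a:=P_+\circ w$, admissibility of $w$ gives $a\in\mathrm{Id}+I_1(\HH_+)$; since also $L\in\mathrm{Id}+I_1(\HH_+)$ and $I_1(\HH_+)$ is an ideal, one has $(\mathrm{Id}+s)(\mathrm{Id}+t)=\mathrm{Id}+(s+t+st)\in\mathrm{Id}+I_1(\HH_+)$, hence $P_+\circ w'=a\circ L\in\mathrm{Id}+I_1(\HH_+)$ has a determinant. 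Thus $w'$ is an admissible basis for $W$.

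For the converse, given admissible bases $w,w'\colon\HH_+\to W$, I would set $L:=w^{-1}\circ w'\colon\HH_+\to\HH_+$. As a composition of isomorphisms $L\in\Gl(\HH_+)$, and $w'=w\circ L$ by construction. Applying $P_+$ yields $a'=aL$, where $a:=P_+\circ w$ and $a':=P_+\circ w'$; by admissibility write $a=\mathrm{Id}+s$ and $a'=\mathrm{Id}+s'$ with $s,s'\in I_1(\HH_+)$. Then $\mathrm{Id}+s'=(\mathrm{Id}+s)L=L+sL$, that is $L=\mathrm{Id}+s'-sL$, so $L-\mathrm{Id}=s'-sL\in I_1(\HH_+)$ because $s\in I_1(\HH_+)$ and $L$ is bounded. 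Hence $L\in\Gl(\HH_+)\cap(\mathrm{Id}+I_1(\HH_+))=\Gl^1(\HH_+)$, which completes the argument.

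The only spot where one might anticipate trouble is that $a=P_+\circ w$ need not be invertible: its kernel is $w^{-1}(W\cap\HH_-)$, which can be a nonzero finite-dimensional subspace even for $W\in\mathrm{Gr}^0(\HH)$, so one cannot simply define $L$ as $a^{-1}a'$. The rearrangement $L-\mathrm{Id}=s'-sL$ never inverts $a$, so this apparent obstacle evaporates; the whole proof rests on nothing more than $I_1(\HH_+)$ being a two-sided ideal.
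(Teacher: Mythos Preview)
Your proof is correct and follows essentially the same approach as the paper: both directions define $L$ in the same way and use that $P_+w'=(P_+w)L$ together with $P_+w,\,P_+w'\in \mathrm{Id}+I_1(\HH_+)$. You spell out explicitly the step the paper leaves as ``which implies $L\in\mathrm{Id}+I_1(\HH_+)$'' via the rearrangement $L-\mathrm{Id}=s'-sL$, and your remark that one cannot simply invert $a=P_+w$ is a useful clarification of why this rearrangement is the right move.
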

\begin{proof} If $w' = w \circ L$, then $w'$ is an isomorphism from $H_+$ to $W$ and $P_+ w' = (P_+ w)\,L$ has a determinant because $P_+ w$ and $L$ do. Therefore, $w'$ is an admissible basis for $W$.\\
Conversely, if $w'$, $w$ are two admissible bases for $W$, we set $L := w^{-1} \circ w' : \HH_+ \rightarrow \HH_+$.\\ Clearly, $L$ is invertible. Furthermore, $P_+w\, L = P_+ w' \in Id + I_1(\HH_+)$ and also\\
$P_+w \in Id + I_1(\HH_+)$, which implies  $L \in Id + I_1(\HH_+)$.\\ \end{proof}

\subsubsection{Arbitrary charges}

\noindent Recall that the Grassmanian $\mathrm{Gr}(\HH)$ has $\IZ$ connected components, corresponding to the relative charges of the polarizations (w.r.to $\HH_+$). Usually we are only interested in the connected component $\mathrm{Gr}^{0}(\HH)$ of the initial vacuum $\approx \HH_+$. The geometric construction over $\mathrm{Gr}^{0}(\HH)$ will yield the zero-charge sector of the fermionic Fock space. Note that the infinite wedge space construction also yields Fock spaces of ``constant charge''. We can easily extend the construction to a ``full'' Fock space by taking the direct (orthogonal) sum of infinite wedge spaces over polarization classes with different relative charges. However, such a construction is not canonical, as it involves a choice of a Dirac sea class for every such charge sector. Thus it should come as no surprise that the same is true for the geometric construction as well. It inherits arbitrary charges very naturally, however, the construction is not canonical. It involves a choice of a (polarized) basis.\\

\noindent For the remainder of this section we fix an orthonormal basis $\lbrace (e_k)_{k\in\mathbb{Z}} \rbrace \; \text{of} \; \mathcal{H}$  such that $(e_k)_{k \le 0}$ is an ONB of $\mathcal{H}_-$ and $(e_k)_{k \geq 0}$ is an ONB of $\mathcal{H}_+$.
For any $n \in \IZ$ we write
\begin{equation*} \Hplusn := \spn(\lbrace e_k \,\rvert \, k \geq -n \rbrace ).\end{equation*}
By $P^{(n)}_+$ we denote the orthogonal projection $P_{\Hplusn}$ onto $\Hplusn$\\
From the construction of the complete $\Gres(\HH)$ in Section 4.1.2, we recall the shift-operator $\sigma$ defined by $\sigma(e_k) := e_{k+1}$. For any $d, n \in \IZ$, $\sigma^d$ maps $\Hplusn$ to $\HH_{\lbrace \geq -n + d\rbrace}$.\footnote{If we seek more generality, we can choose for each $n \in \IZ$ an arbitrary subspace $W_n \in \Gr$ with $\charge(W_n, \HH_+) = n$, and  a one-parameter group of unitary operators translating between them.}

\begin{Definition}[Admissible Basis for arbitrary Charges]\label{DefSt}
\mbox{}\\
Let $W \in \mathrm{Gr}(\HH)\; \text{with}\; \charge(W, \HH_+) = n$, i.e. $W$ lies in the connected component $\mathrm{Gr}^{n}(\HH)$ of  $\mathrm{Gr}(\HH)$.
An \emph{admissible basis} for $W \in \mathrm{Gr}^{n}(\HH)$ is a bounded linear map 
$w: \Hplusn \rightarrow H$ with $\im(w)=W$ s.t. $w: \Hplusn \rightarrow W$ is an isomorphism and $P^{(n)}_+ \circ w$ has a determinant.
\end{Definition}
\noindent We denote the set of all admissible bases by $\St$ and the subset of admissible bases for $\mathrm{Gr}^{n}(\HH), n \in \IZ$ by $\mathrm{St}^{(n)}(\HH)$.
On $\mathrm{St}^{(n)}(\HH)$, we let $\Gl^1(\HH_+)$ act from the right by
\begin{equation*} w \stackrel{L}\longmapsto w \circ \sigma^{-n}\circ L^{-1}\circ \sigma^{n} \end{equation*}
With a little abuse of notation, we can just write $w \circ L^{-1}$ for the right action and $P_+$ instead of $P^{(n)}_+$ for the orthogonal projection whenever the specific index is clear or irrelevant.
With this sneaky notation, we can forget about the different charges most of the time. The construction will look the same over any connected component.

\begin{Definition}[Stiefel Manifold]
\mbox{}\\
The set $\St$ of all admissible bases for the polarizations in $\mathrm{Gr}(\HH)$ carries the structure of an infinite-dimensional manifold, called the (restricted) \emph{Stiefel manifold} $\St$. The topology is given by the metric 
\begin{equation}\mathrm{d}(w,w') = \lVert P_+(w-w')\rVert_1 + \lVert P_-(w-w')\rVert_2. \end{equation}
The Stiefel manifold has $\IZ$-connected components, corresponding to the connected components of $\mathrm{Gr}(\HH)$. 
Actually, it follows from the previous discussion that $\pi: \St \to \mathrm{Gr}(\HH)$ with $\pi(w) := \im(W)$ is naturally a principle $\Gl^1(\HH_+)$-bundle over $\mathrm{Gr}(\HH)$. The fibre $\pi^{-1}(W)$ over a basepoint $W \in \mathrm{Gr}(\mathcal{H})$ is just the set of admissible bases for $W$ and $\Gl^1(\HH_+)$ acts on these fibers from the right as described above.
\end{Definition}

\begin{Proposition}[$\det$ is holomorphic]
\label{Prop:dethol}
\mbox{}\\
The (Fredholm-) determinant $\det: \Gl^1(\HH_+) \to \IC\setminus\lbrace0\rbrace$ defines a one-dimensional, holomorphic representation of the Lie group $\Gl^1(\HH_+)$.
\end{Proposition}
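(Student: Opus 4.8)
The statement has three parts: (i) $\det$ is well-defined on $\Gl^1(\HH_+)$ and takes values in $\IC\setminus\{0\}$; (ii) it is a group homomorphism; (iii) it is holomorphic. Parts (i) and (ii) are already essentially recorded in the ``Notation and Mathematical Preliminaries'': for $T = Id + A$ with $A \in I_1(\HH_+)$, the Fredholm determinant $\det(T) = \sum_{k\ge 0}\trace(\bigwedge^k A)$ is absolutely convergent with $|\det(T)| \le \exp(\|A\|_1)$, it is nonzero precisely when $T$ is invertible (hence nonzero on all of $\Gl^1(\HH_+)$), and it is multiplicative: $\det(ST) = \det(S)\det(T)$. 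So the plan for (i)--(ii) is simply to invoke properties i)--iii) of the Fredholm determinant from the preliminaries; nothing new is needed. A ``one-dimensional holomorphic representation'' then just means a holomorphic homomorphism into $\IC^\times = \Gl_1(\IC)$, so the content left to prove is holomorphy.

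For holomorphy, I would argue as follows. The group $\Gl^1(\HH_+)$ is an open subset of the complex affine space $Id + I_1(\HH_+)$, which is a complex Banach space under the trace norm; thus it inherits the structure of a complex Banach manifold, and it suffices to show that the map $A \mapsto \det(Id + A)$ is holomorphic (Fréchet-complex-differentiable) as a function on the open set $\{A \in I_1(\HH_+) : Id + A \text{ invertible}\}$. The cleanest route is to exhibit $\det$ as a locally uniformly convergent series of polynomials. Each term $A \mapsto \trace(\bigwedge^k A)$ is a continuous homogeneous polynomial of degree $k$ on $I_1(\HH_+)$: it is the restriction to the diagonal of the bounded symmetric $k$-linear form $(A_1,\dots,A_k)\mapsto \trace(\bigwedge^k\!(A_1,\dots,A_k))$, and boundedness in trace norm follows from $|\trace(\bigwedge^k A)| \le \tfrac{1}{k!}\|A\|_1^k$ (equivalently, from $\sum_{i_1<\dots<i_k}|\lambda_{i_1}\cdots\lambda_{i_k}| \le \tfrac{1}{k!}(\sum|\lambda_i|)^k$ for the eigenvalues). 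Hence each summand is holomorphic (continuous polynomials on a complex Banach space are entire). The bound $|\det(Id+A)| \le \exp(\|A\|_1)$, applied termwise, shows $\sum_k \trace(\bigwedge^k A)$ converges uniformly on every norm-bounded subset of $I_1(\HH_+)$; a uniform-on-bounded-sets limit of holomorphic functions between complex Banach spaces is holomorphic (the Cauchy-integral / Weierstrass argument works verbatim in the Banach setting, using one-dimensional complex slices $z \mapsto \det(Id + A + zB)$ and Morera/Hartogs). Therefore $\det$ is holomorphic on $Id + I_1(\HH_+)$ and in particular on the open subgroup $\Gl^1(\HH_+)$.

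It remains to note that this holomorphic homomorphism is genuinely ``one-dimensional representation'' data in the sense used elsewhere in the text: $\Gl^1(\HH_+)$ is a complex Banach Lie group (it is $\mathcal{A}^\times$ for the unital Banach algebra $\mathcal{A} = \IC\cdot Id \oplus I_1(\HH_+)$, by the Example on linear Lie groups), $\IC^\times$ is a complex Lie group, and $\det$ is a continuous — indeed holomorphic — homomorphism between them, so by the functoriality recalled in the excerpt it induces a Lie algebra homomorphism, namely $\mathrm{Lie}(\det) = \trace : I_1(\HH_+) \to \IC$, consistent with $\det(\exp X) = \exp(\trace X)$. I would state this last identity as a corollary, proved by inserting $\exp(tX)$ into the series and differentiating at $t=0$, or simply by the spectral mapping for trace-class $X$.

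\textbf{Main obstacle.} The only nontrivial point is the passage from ``each $\trace(\bigwedge^k A)$ is holomorphic'' to ``the sum is holomorphic'', i.e.\ justifying that locally uniform limits of Banach-space-valued (here scalar-valued, but on an infinite-dimensional domain) holomorphic maps are holomorphic, and that continuous homogeneous polynomials are holomorphic. Both are standard in infinite-dimensional complex analysis but are the steps one must actually cite or sketch; everything else (convergence bounds, multiplicativity, nonvanishing) is already handed to us by the properties of the Fredholm determinant listed in the preliminaries. I would handle it by reducing to the finite-dimensional/one-variable case along complex lines $z\mapsto A+zB$ and invoking the standard Banach-valued Cauchy estimates, so that the infinite-dimensionality never really bites.
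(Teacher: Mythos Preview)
Your proposal is correct and uses the same key estimate as the paper, namely $\lvert\trace(\bigwedge^k A)\rvert \le \tfrac{1}{k!}\lVert A\rVert_1^k$, but the logical packaging differs. The paper does not invoke any general ``uniform-limit-of-holomorphic-maps'' theorem for Banach spaces; instead it first uses multiplicativity to reduce the question to complex differentiability at the single point $\mathds{1}$, and then reads off from the series that $\lvert \det(\mathds{1}+A) - 1 - \trace(A)\rvert \le \sum_{k\ge 2}\tfrac{1}{k!}\lVert A\rVert_1^k$, which immediately gives Fr\'echet differentiability at $\mathds{1}$ with derivative $\trace$. Your route treats each summand as a holomorphic polynomial and passes to the limit via a Weierstrass/Morera argument on complex lines, which is more structural but requires you to justify (or cite) the infinite-dimensional holomorphy-preservation facts you flag as the ``main obstacle.'' The paper's route sidesteps that machinery entirely at the cost of the translation-to-identity step, and gets $\det'(\mathds{1}) = \trace$ as a direct byproduct rather than a corollary.
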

\begin{proof}Recall that the Lie group structure is given by the embedding 
\begin{align*} \Gl^1(\HH_+) \hookrightarrow \TC; \; 1 + A \mapsto A \end{align*}
of $\Gl^1(\HH_+)$ into the trace-class. Now note that it suffices to show complex differentiability in the identity, because for $(\mathds{1}+A), (\mathds{1}+B) \in \Gl^1(\HH_+)$, we have by the multiplication-rule
\begin{equation*}\det(\mathds{1}+A) - \det(\mathds{1}+B) = \det(\mathds{1}+B) \bigl(\det((\mathds{1}+A)(\mathds{1}+B)^{-1}) - \det(\mathds{1}) \bigr)\end{equation*}
with $(\mathds{1}+A)(\mathds{1}+B)^{-1} = (\mathds{1}+C) \in \Gl^1(\HH_+)$. And since $(\mathds{1}+C)(\mathds{1}+B) = (\mathds{1}+A)  \iff \mathds{1} +C + B + CB = \mathds{1} + A \iff C(\mathds{1}+B) = A - B$, we find $C \to 0 \iff B \to A$ in $\TC$.\\
Now, by definition:
\begin{equation*} \det(\mathds{1}+A) := \sum\limits_{k=0}^\infty \trace\bigl(\sideset{}{^k}\bigwedge (A)\bigr). \end{equation*}
If $(\lambda_n)_n$ are the eigenvalues of $A$ ($A$ is a compact operator), $\trace\bigl(\sideset{}{^k}\bigwedge (A)\bigr) = \sum\limits_{i_1 < \dots < i_k} \lambda_{i_1}\cdots\lambda_{i_k}.$ Using a polar decomposition, we can write $A = U \lvert A \rvert$, with a partial isometry $U$, and $\bigwedge^k(A) = \bigwedge^k(U) \bigwedge^k(\lvert A \rvert)$. The eigenvalues $(\mu_n)_n$ of the positive operator $\lvert A \rvert$ are the singular values of $A$. We derive:
\begin{align*}
\bigl\lVert \sideset{}{^k}\bigwedge (A) \bigr\rVert_1 & = \trace\bigl(\sideset{}{^k}\bigwedge(\lvert A \rvert)\bigr) = \sum\limits_{i_1 < \dots < i_k} \mu_{i_1}\cdots\mu_{i_k}\\
&\leq \frac{1}{k!} \sum\limits_{i_1,  \dots ,i_k} \mu_{i_1}\cdots\mu_{i_k} = 
\frac{1}{k!} \lVert A \rVert_1^k. \end{align*}
Therefore, we conclude:
\begin{equation*} \lvert \det(\mathds{1}+ A) - 1 - \trace(A) \rvert = \bigl\lvert \sum\limits_{k=2}^\infty \trace\bigl(\sideset{}{k}\bigwedge (A)\bigr) \bigr\rvert \leq \sum\limits_{k=2}^\infty \frac{\lVert A \rVert_1^k}{k!}  \leq \exp(\lVert A \rVert_1). \end{equation*} 
This means that $\det$ is continuously differentiable in $\mathds{1}$ with $\det'(\mathds{1})X = \trace(X)$ and hence holomorphic on the complex Banach Lie group $\Gl^1(\HH_+)$. 
\end{proof}

\begin{Definition}[Determinant Line Bundle]
\mbox{}\\
We have a principle-$\Gl^1(\HH_+)$-bundle $\St$ over $\Gr$ and a one-dimensional, holomorphic representation of $\Gl^1(\HH_+)$, given by $\det: \Gl^1(\HH_+) \to \IC$. We define the \emph{determinant bundle} as the corresponding \emph{associated line bundle} 
\begin{equation*} DET = DET\bigl(\St, \Gr, \det \bigr). \end{equation*}
As $\St$ has infinitely-many connected components, indexed by $\IZ$, so does $DET$. The connected component over $\mathrm{St}^{(n)}(\HH)$ will be denoted by $DET_n$.
\end{Definition}

\noindent In less technical terms, ``associated line-bundle'' means the following: Consider the set $\mathrm{St}(\HH) \times \mathbb{C} $ written as
\begin{equation*}\lbrace (W , w, \lambda) \mid \lambda \in \mathbb{C}, W \in \mathrm{Gr}(\HH),\; \text{w admissible basis for W}\, \rbrace. \end{equation*}
We introduce the equivalence relation
\begin{equation*}
[W, w, \lambda] \sim [W', w', \lambda] :\iff W' = W,\; w' = w \circ L\; \text{and}\; \lambda' = \det(L)^{-1} \lambda\end{equation*}
\vspace{-3mm}i.e. \vspace{-1mm}\begin{equation}[W, w, \det(L) \lambda] \sim [W, w\circ L , \lambda],\; \text{for}\; L \in \Gl^1(\HH_+).\end{equation}

\noindent The set of equivalence classes $ [W, w, \lambda]$, together with the projection\\
$\pi:[W, w, \lambda] \to W = \im(w)$ (induced by that in $\St$) defines a holomorphic line bundle
 \begin{equation*}
\bigl( \mathrm{St} \times \mathbb{C}\bigr) \slash \, \Gl^1 =: DET \xrightarrow{\;\pi\; } \mathrm{Gr}(\mathcal{H}).
\end{equation*}

\noindent Further on, we will usually drop the first entry and write $[w, \lambda]$ for $[\im(w) , w, \lambda]$.

We might get a better intuition for this construction by thinking about the fibre over $W \in \mathrm{Gr}(\mathcal{H})$ in $DET$ as the one-dimensional complex space spanned by a formal expression 
\begin{equation}\label{adbasiswedge} \pi^{-1}(W) \ni [W, w, 1] = [w,1]\; \simeq \; w_0 \wedge w_1 \wedge w_2 \wedge w_3 \wedge\dots \end{equation} where $\lbrace w_j \rbrace_{j \geq 0}$ is a basis of $W$.
So, intuitively, the determinant bundle over $\mathrm{Gr}(\HH)$ contains precisely the information that we expect to be encoded in the fermionic Fock space.\\ 
However, so far we don't even have a linear structure on the ``state-space'', except of course for the $\IC^1$-structure on the individual fibres. But there's no meaningful way of ``adding'' points from different fibers of the bundle.
Therefore, the idea is to consider \textit{sections} of the determinant line bundle which do naturally form an (infinite-dimensional) complex vector space $\Gamma(DET)$. Which section would then correspond to a state of the form \eqref{adbasiswedge} though? In analogy to the infinite wedge space construction, where we got a linear space by taking the algebraic dual of a Dirac sea class, we might think of the section that picks out the point $[w,1]$ over $W \in \mathrm{Gr}(\mathcal{H})$ and is zero everywhere else. But such a section is not even continuous and wouldn't do justice to the beautiful geometric structure we have so far. 

We can however take sections in the \textit{dual bundle} $DET^*$ and identify $w \in DET$ with the section $\xi_w \in \Gamma(DET^*)$ defined by $\xi_w ([z, \lambda]) := \lambda \, \det(w^* z)$.
Such a section is not only continuous, it is even \textit{holomorphic} with respect to the holomorphic structure induced by $\Gr$ (see also \cite{PreSe}, \S 10).\\ 

\begin{Remark}[Holomorphic sections]
\mbox{}\\
We denote by $DET^*$ the \textit{dual bundle} of the determinant-bundle $DET$ and by $\Gamma(DET^*)$ the space of holomorphic sections in $DET^*$.
A holomorphic section $\Psi$ of $DET^*$ is a holomorphic map $DET \rightarrow \IC$ which is linear in every fibre. This corresponds to a holomorphic map \begin{equation}
\psi: \mathrm{St} \rightarrow \IC\; \text{with} \; \psi(z \circ L) =\det(L)\cdot \psi(z), \; \forall L \in \Gl^1(\mathcal{H}_+),\end{equation} 
since then and only then is  $\Psi([z,\lambda]) := \lambda \cdot \psi(z)$ a well defined map $DET \rightarrow \IC$, as we see from $\Psi([z \circ L,\lambda / \det(L)]) = \frac{\lambda}{\det(L)}\,\psi(z \circ L) = \lambda\, \psi(z) = \Psi([z,\lambda])$.\\

\noindent Similarly, a holomorphic section $\Phi$ of $DET$ would correspond to a holomorphic map 
\begin{equation}\label{holsec2} \phi: \mathrm{St} \rightarrow \IC \; \text{with} \;  \phi(z \circ L) =  \det(L)^{-1} \, \phi(z)\, , \; \forall L \in \Gl^1(\mathcal{H}_+).\end{equation}
Due to the factor of $\det(L)^{-1}$ on the right-hand-side, points arbitrarily close in $\St$ can be mapped to points arbitrarily far in $\IC$, which contradicts the existence of a bounded derivative. Hence $DET$ doesn't allow \emph{any} holomorphic sections, except for the zero section.\\
\end{Remark}

\begin{Construction}[Embedding of $DET$ in $\Gamma(DET^*)$]
\mbox{}\\
Consider the map $\Phi: \mathrm{St} \times \mathrm{St} \rightarrow \IC$,\\
\[
\Phi(z,w) \; = \; \left\{
\begin{array}{ll}
\det(z^* w)
& \mbox{; for } \ind(P_+z) = \ind(P_+w) \\ \\
0
& \mbox{; else. }
\end{array}
\right. 
\]
This is well defined, since for $\ind(P_+z) = \ind(P_+w) = n$,
\begin{equation*}z^* w = z^*P^{(n)}_+ w + z^* P^{(n)}_{-} w =\end{equation*}
\vspace{-6.1mm}
\begin{align*}
\underbrace{(P^{(n)}_+z)^*}\limits_{\in Id + I_1(\HH_{\geq-n})}\;\underbrace{(P^{(n)}_+w)}\limits_{\in Id + I_1(\HH_{\geq-n})} +  \underbrace{(P^{(n)}_-z)^*}\limits_{\in I_2(\HH_{<-n}, \HH_{\geq-n})}\;\underbrace{(P^{(n)}_-w)}\limits_{\in I_2(\HH_{\geq-n}, \HH_{<-n})} \in Id + I_1(\Hplusn).
\end{align*}
Now, for any fixed $z \in \mathrm{St}(\HH)$, the map
\begin{equation*}w \longmapsto \Phi(z,w) =: \xi_z(w)\end{equation*}
is holomorphic with $\Phi(z, w \circ L) = \det(z^* w L) = \det(z^* w) \det(L)$ for $L \in \Gl^1(\mathcal{H}_+)$ 
and hence descends to a holomorphic section of $DET^*$, that will be denoted by the same symbol 
$\xi_z \in \Gamma(DET^*).$
This defines an \emph{anti-linear} map $\xi: DET \rightarrow \Gamma(DET^*)$ by
\begin{equation}\xi([z,\lambda]) := \bar{\lambda} \xi_z = \bar{\lambda} \Phi(z, \cdot). \end{equation}
Note that \begin{equation}\label{xicc} \xi_{z\circ L} = \overline{\det(L)} \, \xi_{z}, \; \forall L \in \Gl^1(\mathcal{H}_+) \end{equation}
thus $\xi$ has to be anti-linear, because  $[z \circ L, \lambda] = [z, \det(L) \, \lambda]$ in $DET$.
\item Deleting the zero-section from $DET$, we get an injection $\xi: DET^{\times} \rightarrow \Gamma(DET^*)$.
 \end{Construction}

\noindent This construction contains almost everything we need. We will construct the Fock space from the space of holomorphic sections of $DET^*$, with the Hermitian scalar product given by the determinant. But first, we need good coordinates to handle this.\\

\begin{Definition}[Pfl\"{u}cker Coordinates]\label{DefPflucker}
\mbox{}\\
Let $\lbrace (e_j)_{j\in\mathbb{Z}} \rbrace$ be the ONB of $\mathcal{H}$ chosen above.
\begin{enumerate}[i)]
\item We denote by $\csq$ the set of increasing sequences $S=(i_0, i_1, i_2. i_3, \dots) \in \IZ^{\IN_0}$\\
s.t. $i_{k+1} = i_k + 1$ for large enough k, i.e. the sequences $S \in \csq$ contain only finitely many negative indices and all but finitely many positive indices.
\item  For $S \in \csq$, we define the \emph{charge} c(S) to be the unique number $c \in \mathbb{Z}$ with $i_k = k-c$ for all k large enough.
\item  For $S=(i_0, i_1, i_2. i_3, \dots) \in \csq$ with $c(S) = n$ we define
\begin{equation*}H_S := \spn\bigl(\lbrace e_{i_k}\mid i_k \in S\rbrace\bigr) = \spn(e_{i_0}, e_{i_1}, e_{i_2}, \dots)\end{equation*} and
\begin{equation*}w_S: \Hplusn \rightarrow \mathcal{H}\,,\;  w_S(e_k) := e_{i_{(k+n)}}.\end{equation*}
This is an admissible basis for $H_S$ since by construction, it differs from the identity on $\Hplusn$ by a finite rank matrix only.
\item We denote by $\Psi_S$ the section $\xi_{w_S} \in \Gamma(DET^*)$, for $S \in \csq$.
\end{enumerate}
\end{Definition}
\newpage
\begin{Example}[Pfl\"{u}cker coordinates]
\mbox{}\\
\vspace{-5mm}\begin{itemize}
\item $S_0 = \mathbb{N} = (0, 1, 2, 3, \dots) \Rightarrow c(S_0) = 0$.
We call the corresponding state 
 \begin{equation*}\xi_{w_{S_0}} := \Psi_0 \approx e_0 \wedge e_1 \wedge e_2 \wedge e_3 \dots\end{equation*}
 the \emph{free vacuum state}.
\item   $S_1 = (-2, -1, 0, 1, 2, \dots) \Rightarrow c(S_1) = 2$.\\ 
This corresponds to the state $e_{-2} \wedge e_{-1} \wedge e_0 \wedge e_1 \wedge \dots$ where 2 negative energy states ($e_{-2}$ and $e_{-1}$) are occupied and all positive energy states $e_{j\geq0}$ are occupied.
\item  $S_2 = (-5, -1, 1, 3, 4, 5, \dots) \Rightarrow c(S_2)=0$.\\
This corresponds to the state $e_{-5} \wedge e_{-1} \wedge e_1 \wedge e_3 \wedge e_4 \dots$ with two negative energy states occupied ($e_{-5}$ and $e_{-1}$) and two ``holes'' in the positive spectrum ($e_0$ and $e_2$). Therefore, the net-charge of the state is zero.
\end{itemize}
\end{Example}

\noindent The role of the basis $\lbrace (e_j)_{j\in\mathbb{Z}} \rbrace$ can be understood as a choice of a complete set of one-particle states, used to characterize the particle content of the Fock space states ($\approx$ Dirac seas). Ideally, this choice is determined, or at least motivated, by physical properties e.g. as the spectral decomposition of a (or several commuting) self-adjoint operator(s) on $\mathcal{H}$.
Anyways, the sections $(\Psi_S)_{S\in \csq}$ will make a good basis for the Fock space.

\begin{Proposition}[Hermitian Form]\label{Hermitianform}
\mbox{}\\
Let $V \subseteq \Gamma(DET^*)$ be the complex vector space spanned by the sections\\
 $\lbrace \xi_z \in \Gamma(DET^*) \mid z = [(z,1)]\in DET\rbrace$. Then
\begin{equation} \langle \xi_z, \xi_w \rangle := \xi_z (w) := \Phi(z,w) \end{equation}
defines a Hermitian form on V, antilinear in the \emph{second} component.
This form is positive semi-definite.
The sections $\bigl\lbrace\Psi_S\bigr\rbrace_{S \in \csq}$ form a complete orthonormal set in $V$ w.r.to $\scpro$.
\end{Proposition}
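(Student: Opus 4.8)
The plan is to prove the three assertions of the Proposition in the order: Hermitian property, positive semidefiniteness, and completeness/orthonormality of the Pl\"ucker sections. For the Hermitian property, the bilinearity structure of $\xi_z(w) = \Phi(z,w) = \det(z^*w)$ (when the indices match, $0$ otherwise) is essentially built in: one must check $\Phi(z,w) = \overline{\Phi(w,z)}$, which follows from $\det(z^*w) = \overline{\det(w^*z)}$ since $(z^*w)^* = w^*z$ and $\det(A^*) = \overline{\det A}$ for $A \in \mathrm{Id} + I_1$. The only subtlety is well-definedness on $V$: the map $\xi : DET \to \Gamma(DET^*)$ is antilinear and $\xi_{z\circ L} = \overline{\det L}\,\xi_z$, so one must verify that the form descends consistently to the span of the $\xi_z$; this was already arranged by the factor $\det(L)$ appearing symmetrically in $\Phi(z\circ L, w\circ M) = \overline{\det L}\det M\,\Phi(z,w)$, so I would just record this computation. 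Antilinearity in the second component is immediate from $\Phi(z, w\circ L) = \det(L)\,\Phi(z,w)$ together with the identification $w \leftrightarrow [w,1]$ under which scaling $\lambda$ corresponds to $w \circ (\lambda^{1/?})$... more carefully: on $V$, $\xi_z$ is linear as a section, and the form is antilinear in $w$ precisely because $\xi$ itself is antilinear; I would phrase it that way to avoid choosing roots of determinants.

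For positive semidefiniteness I would reduce to a finite-dimensional Gram matrix computation. Given $\alpha = \sum_{k} c_k\,\xi_{z_k}$ a finite linear combination, $\langle \alpha,\alpha\rangle = \sum_{j,k} \overline{c_j}c_k\,\det(z_j^* z_k)$ (note: with the antilinearity conventions one must track which index carries the bar; I will match it to the statement). The key structural fact is that for admissible bases $z_j, z_k$ of (possibly different) subspaces in the same charge sector, $\det(z_j^* z_k)$ is a limit of finite Gram determinants: by property iv) of the Fredholm determinant in the preliminaries, $\det(z_j^* z_k) = \lim_{N\to\infty}\det(\langle z_j e_i, z_k e_l\rangle)_{i,l\le N}$. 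Hence the infinite Gram matrix $(\det(z_j^*z_k))_{j,k}$ is a limit of the finite matrices $G^{(N)} = \big(\det(\langle z_j e_i, z_k e_l\rangle)_{i,l\le N}\big)_{j,k}$, and each $G^{(N)}$ is the Gram matrix of the vectors $z_j e_0 \wedge \cdots \wedge z_j e_{N-1}$ in the finite exterior power $\bigwedge^N \HH$, which is positive semidefinite by \eqref{slaterdet}. Positive semidefiniteness is preserved under taking limits of matrix entries, so $\langle\alpha,\alpha\rangle = \lim_N (\text{c}^* G^{(N)} \text{c}) \ge 0$. (For sections $z_j$ in distinct charge sectors the cross terms vanish by definition of $\Phi$, so one handles each charge sector separately and sums; the block-diagonal sum of PSD forms is PSD.) I expect the bookkeeping of the charge-sector decomposition and the convergence of the truncated determinants to be the main technical point, but it is routine given property iv).

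For the final assertion --- that $\{\Psi_S\}_{S\in\csq}$ is a complete orthonormal set in $V$ --- orthonormality is a direct computation: $\langle \Psi_S, \Psi_T\rangle = \det(w_S^* w_T)$, and since $w_S$ sends $e_k \mapsto e_{i_{k+n}}$ (basis vectors of $\HH$) and $w_T$ sends $e_k \mapsto e_{j_{k+m}}$, the matrix $w_S^* w_T$ has entries $\langle e_{i_p}, e_{j_q}\rangle = \delta_{i_p, j_q}$; when $c(S) = c(T)$ this is, up to reordering by a permutation that is eventually the identity, a diagonal $0$/$1$ matrix, giving $\det = 1$ if $S = T$ (as sets, equivalently as sequences since both are increasing) and $\det = 0$ if $S \ne T$ (a zero row appears). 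When $c(S) \ne c(T)$, $\Phi(w_S, w_T) = 0$ by definition. For completeness, I would show that every spanning section $\xi_z$, $z\in\St$, lies in the closure of $\spn\{\Psi_S\}$: expand $z$ in the fixed ONB $(e_j)$, and use multilinearity/continuity of the determinant $\det(w^* z)$ in $z$ (holomorphicity was established, in particular continuity in the Stiefel metric) to write $\xi_z = \sum_{S\in\csq, c(S)=\charge(\im z,\HH_+)} \langle \xi_z, \Psi_S\rangle^{-}\,\Psi_S$ as a norm-convergent Pl\"ucker expansion --- the Pl\"ucker coordinates of $z$ --- with $\sum_S |\langle \Psi_S, \xi_z\rangle|^2 = \langle \xi_z,\xi_z\rangle < \infty$. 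Concretely I would note that for $z$ an admissible basis, the coordinate $\langle \Psi_S, \xi_z\rangle$ is the $S$-th minor $\det\big((z_{i_p, q})_{p,q}\big)$ of the matrix of $z$ with respect to $(e_j)$, the Cauchy--Binet / Fredholm-minor identity gives $\sum_S |\det(\text{minor}_S)|^2 = \det(z^*z)$, and density of finite linear combinations then follows. The main obstacle here is making the infinite Cauchy--Binet expansion rigorous --- i.e., that $\det(z^* z) = \sum_{S} |\det(\text{minor}_S(z))|^2$ with absolute convergence for $z \in \St$ --- but this is exactly the classical expansion of a Fredholm determinant into its principal minors applied to $z^*z = \mathrm{Id} + I_1$, combined with Cauchy--Binet in each truncation, so I would cite \cite{PreSe} \S10 for the details rather than reprove it.
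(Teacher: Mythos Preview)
Your proposal is correct, and the Hermitian property, orthonormality, and completeness arguments are essentially the paper's own (the paper also invokes the finite-dimensional Cauchy--Binet identity $\det(AB)=\sum_{(i)}\det(AP_{(i)})\det(P_{(i)}B)$ and passes to the limit to obtain the Parseval-type expansion $\langle\xi_z,\xi_w\rangle=\sum_{S}\langle\xi_z,\Psi_S\rangle\langle\Psi_S,\xi_w\rangle$).

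The one genuine difference is in how you handle positive semidefiniteness. You prove it \emph{before} completeness, by truncating $\det(z_j^*z_k)$ to $N\times N$ Gram determinants and recognizing the resulting matrix $G^{(N)}$ as the Gram matrix of the vectors $z_j e_0\wedge\cdots\wedge z_j e_{N-1}$ in $\bigwedge^N\HH$; PSD then survives the limit. This is a valid and rather elementary route. The paper instead reverses the order: it first establishes the full Cauchy--Binet/Parseval identity (completeness), and then reads off PSD immediately from
\[
\langle\Psi,\Psi\rangle=\sum_{S\in\csq}\Bigl|\sum_n \alpha_n\langle\xi_{z_n},\Psi_S\rangle\Bigr|^2\ge 0.
\]
The paper's ordering is more economical --- once you have the Cauchy--Binet expansion (which you need anyway for completeness), PSD drops out for free, so your separate truncation argument is redundant. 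On the other hand, your PSD argument is self-contained and does not rely on the infinite Cauchy--Binet, which you yourself flag as ``the main obstacle''; so if one wanted PSD without going through completeness, your route would be the natural one.
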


\begin{proof} First we note that $\langle \xi_z, \xi_w \rangle = \Phi(z,w) = \det(z^*w)=\overline{\det(w^*z)} = \overline{\langle\xi_w,\xi_z\rangle}$.\\ 
Then we recall that $\Phi(\cdot,\cdot)$ is $\IC$-linear in the second entry and anti-linear in the first entry, but the mapping $w \mapsto \xi_w$ is $\IC$-anti-linear. In conclusion, $\langle\cdot,\cdot\rangle$ defines a sesquilinear form anti-linear in the second entry. It is easy to see that for $S,S' \in \csq$
\begin{equation*} \langle \Psi_S , \Psi_S' \rangle = \Psi_S(w_S') = \det(w^*_Sw_S')= \delta_{S S'}\end{equation*}
because $w^*_Sw_S = \mathds{1}_{\Hplusn}$, whereas $w^*_Sw_S'$ has non-trivial kernel if $S\neq S'$.
\item Now, in finite dimensions, if $A$ is a $n\times m$ matrix and $B$ a $m \times n$ matrix with $n \leq m$ then
\begin{equation*} \det(AB) = \sum\limits_{(i) = (1 \leq i_1\leq \ldots \leq i_n \leq m)} \det(AP_{(i)})\det(P_{(i)}B) \end{equation*}  
where $P_{(i)}$ denotes the projection onto $\spn(e_{i_1}, \ldots e_{i_n})$. As the Fredholm determinant of an operator is the limes of the determinants of its restriction to $n$-dimensional subspaces as $n \rightarrow \infty$, the above formula extends to the infinite-dimensional case yielding
\begin{align*}\langle \xi_z, \xi_w \rangle & = \Phi(z , w) = \det(z^*w)= \sum\limits_{S \in \csq,\, c(S) = d} \det((P_{\HH_S}z)^*)\det(P_{\HH_S}w)\\
& = \sum\limits_{S} \det((w^*_S z)^*)\det(w^*_S w) = \sum\limits_{S} \overline{\det((w^*_S z))}\det(w^*_S w)\\
&  = \sum\limits_{S\in\csq} \overline{\Psi_S(z)}\Psi_S(w) = \sum\limits_{S \in \csq} \langle \xi_z, \Psi_S \rangle\langle\Psi_S, \xi_w\rangle
\end{align*}
for $\ind(P_+z) = \ind(P_+w) = d$ and $0$ else.
This shows that $\lbrace \Psi \rbrace_{S \in \cs}$ is indeed a complete, orthonormal system in V.
Finally, for general $\Psi = \sum\limits_{\text{finite}} \alpha_n \xi_{z_n}\,\in V$ we compute
\begin{align*} \langle \Psi, \Psi \rangle & = \sum\limits_{n,m}\alpha_n\,\overline{\alpha_m}\,\langle \xi_{z_n}, \xi_{z_m} \rangle = \sum\limits_{n,m} \sum\limits_{S \in \cs}\alpha_n\,\overline{\alpha_m}\,\langle \xi_{z_n}, \Psi_S \rangle \langle \Psi_S, \xi_{z_m} \rangle\\
& = \sum\limits_{S \in \cs}\;\Bigr(\sum\limits_{n} \alpha_n \langle \xi_{z_n}, \Psi_S \rangle \Bigr)\; \Bigl(\sum\limits_{m} \overline{\alpha_m} \langle \Psi_S, \xi_{z_m} \rangle \Bigr)\\
& = \sum\limits_{S \in \cs}\;\Bigr\lvert\sum\limits_{n} \alpha_n \langle \xi_{z_n}, \Psi_S \rangle \Bigr\rvert ^2\, \geq 0
\end{align*}
Thus, the Hermitian form is positive semi-definite. This finishes the proof.\\
\end{proof}

\begin{Definition}[Fermionic Fock Space]\label{DefFgeom}
\mbox{}\\
Let $V_0:= \lbrace v \in V \mid \langle v,v \rangle = 0 \rbrace$ be the null-space of $(V, \scpro)$.
\item We define the \emph{fermionic Fock space} $\FF= \FF_{\rm geom}$ to be the completion of $V \slash V_0$ w.r.to $\scpro$.\\
$\FF_{\rm geom}$ is an infinite-dimensional, complex, separable Hilbert space.
\item  It can be written as the direct sum $\FF = \bigoplus\limits_{c \in \IZ} \FF^{(c)}$ of $\IZ$ ``charge-sectors'' built from holomorphic sections with support in the connected component of $DET^*$ over $\mathrm{Gr}^{c}(\HH)$.   
\item The sections $\lbrace \Psi_S \rbrace_ {S \in \csq}$ are a Fock basis of $\FF$.
They define the \emph{Pfl\"{u}cker coordinates}
\begin{equation}\begin{split} &\FF \longrightarrow \ell^2,\\&  \xi_z \longmapsto (\xi_z(w_S))_{S \in \csq} = (\langle \xi_z, \Psi_S \rangle)_{S \in \csq}.
\end{split}\end{equation}
\end{Definition}
\noindent In general, we write $\FF$ for ``the'' fermionic Fock space and $\FF_{\rm geom}$, if we want to emphasize that it is the Fock space from Def. \ref{DefFgeom}, obtained from the geometric construction, as opposed to different constructions presented before. 
  

\begin{Definition}[Pfl\"{u}cker Embedding]\label{Def:Pfluckerembedding}
\mbox{}\\
We have a natural embedding of $\mathrm{Gr}(\mathcal{H})$ into the projective Fock space $\mathbb{P}(\FF_{\rm geom})$, given by
\begin{equation}\begin{split}&\Gr \longrightarrow \mathbb{P}(\FF_{\rm geom}),\\
& W \longmapsto \IC\cdot \xi_w = \IC \cdot \xi( [w,1]) \end{split}\end{equation}
where w is an admissible basis for W, called the \emph{Pfl\"{u}cker embedding.}\\
This gives precise meaning to our often employed intuition that polarizations correspond to projective, decomposable states of infinitely many fermions.   
\end{Definition}

\begin{Proposition}[Action of $\GresO$ on $DET_0$]\label{GresDET}
\mbox{}\\
$\GresO$ has a natural action on $DET$ defined by
\begin{equation}
\begin{split}
\mu_0:\; &\GresO \times DET_0 \longrightarrow DET_0, \\
&([U , R] , [W, w, \lambda] ) \longmapsto [UW ,U w R^{-1}, \lambda].
\end{split}
\end{equation}
\end{Proposition}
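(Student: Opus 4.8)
The plan is to verify that the formula for $\mu_0$ is well-defined (independent of all representatives) and that it respects the group action axioms. There are two equivalence relations in play: the one defining $\GresO = \mathcal{E}/\mathrm{SL}(\HH_+)$, namely $(U,R) \sim (U, R')$ iff $\det(R'^{-1}R) = 1$, and the one defining $DET_0$, namely $[W, w, \lambda] \sim [W, w\circ L, \det(L)^{-1}\lambda]$ for $L \in \Gl^1(\HH_+)$. First I would check that $[U,R]$ acting on a point really lands in $DET_0$: if $(U,R) \in \mathcal{E}$ with $U \in \Greso$, and $w$ is an admissible basis for $W \in \mathrm{Gr}^0(\HH)$, then $UwR^{-1}$ must be an admissible basis for $UW$. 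Since $U$ is unitary up to a bounded isomorphism and $W \in \mathrm{Gr}^0$, we have $UW \in \mathrm{Gr}^0$ (because $U \in \Ur$ preserves the polarization class and the index-zero condition is exactly what $\Greso$ preserves by \eqref{chargetransform} with $\ind(a)=0$). That $P_+(UwR^{-1})$ has a determinant follows from decomposing $P_+ U w R^{-1} = (P_+ U P_+ + P_+ U P_-) w R^{-1}$, using $a - q \in I_1$ for $(U,q) \in \mathcal{E}$ with $q = R$ up to $\mathrm{SL}$, the fact that $P_+ w$ has a determinant, $R^{-1}$ has a determinant, and the odd parts contribute only trace-class (even finite-rank on the relevant pieces) corrections — essentially the same computation as in the Lemma ``Some properties of $\mathcal{E}$''.

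Next I would check well-definedness in each argument separately. For the $DET_0$ argument: replacing $(W, w, \lambda)$ by $(W, w\circ L, \det(L)^{-1}\lambda)$ sends the output to $[UW, UwLR^{-1}, \det(L)^{-1}\lambda]$; I need this to equal $[UW, UwR^{-1}, \lambda]$. Writing $UwLR^{-1} = (UwR^{-1})\circ(RLR^{-1})$ and noting $RLR^{-1} \in \Gl^1(\HH_+)$ with $\det(RLR^{-1}) = \det(L)$, the defining relation of $DET_0$ gives $[UW, UwLR^{-1}, \det(L)^{-1}\lambda] = [UW, UwR^{-1}, \det(L)\cdot\det(L)^{-1}\lambda] = [UW, UwR^{-1},\lambda]$, as required. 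For the $\GresO$ argument: replacing $(U,R)$ by $(U, R')$ with $\det(R'^{-1}R) = 1$, i.e. $R' = RS$ with $S \in \mathrm{SL}(\HH_+)$ (strictly: $R(R^{-1}R')$ with $\det(R^{-1}R')=1$), the output becomes $[UW, Uw R'^{-1}, \lambda] = [UW, UwR^{-1}R(R'^{-1}), \lambda]$; here $R R'^{-1} \in \Gl^1(\HH_+)$ has determinant $\det(R)\det(R'^{-1}) = \det(R'^{-1}R)^{-1} \cdot (\text{unit}) $ — more carefully, $\det(RR'^{-1}) = 1$ since $\det(R'^{-1}R)=1$ — so again by the $DET_0$ relation the point is unchanged.

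Then I would verify the action axioms: $\mu_0([\mathds{1},\mathds{1}_{\HH_+}], [W,w,\lambda]) = [W, w, \lambda]$ is immediate, and for composition, using the group law in $\GresO$ (which descends from $(U_1,R_1)(U_2,R_2) = (U_1U_2, R_1R_2)$ on $\mathcal{E}$), one computes
\begin{align*}
\mu_0\bigl([U_1,R_1], \mu_0([U_2,R_2],[W,w,\lambda])\bigr) &= \mu_0\bigl([U_1,R_1], [U_2W, U_2wR_2^{-1}, \lambda]\bigr) \\
&= [U_1U_2W, U_1U_2wR_2^{-1}R_1^{-1}, \lambda] \\
&= [U_1U_2W, U_1U_2 w (R_1R_2)^{-1}, \lambda] \\
&= \mu_0\bigl([U_1U_2, R_1R_2], [W,w,\lambda]\bigr),
\end{align*}
which is the claim. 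I expect the only genuinely nontrivial step to be the first one — confirming $UwR^{-1}$ is admissible, i.e. that $P_+(UwR^{-1})$ genuinely has a Fredholm determinant — since this is where the interplay between the Shale–Stinespring (Hilbert–Schmidt off-diagonal) structure of $U$, the trace-class condition defining $\mathcal{E}$, and the $\Gl^1$-structure of admissible bases all has to fit together; the well-definedness and functoriality checks, while needing care about which representative is which, are routine manipulations with the multiplicativity of $\det$.
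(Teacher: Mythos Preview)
Your approach matches the paper's exactly: the three well-definedness checks (that $UwR^{-1}$ is admissible, independence of the $DET_0$ representative via $L' = RLR^{-1}$ with $\det L' = \det L$, and independence of the $\GresO$ representative via $L = RR'^{-1}$ with $\det L = 1$) are precisely what the paper does. You also verify the action axioms, which the paper omits.

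There is one slip you should fix. In the first step you invoke ``$R^{-1}$ has a determinant'', but this is false in general: for $(U,R) \in \mathcal{E}$ one only has $R \in \Gl(\HH_+)$, not $R \in \Gl^1(\HH_+)$. The correct computation (as the paper does) is to write
\[
P_+ U w R^{-1} = \underbrace{(P_+UP_+)R^{-1}}_{\in\, Id + I_1} \cdot \underbrace{R(P_+ w)R^{-1}}_{\in\, Id + I_1} \;+\; \underbrace{(P_+UP_-)}_{\in\, I_2}\,\underbrace{(P_- w)}_{\in\, I_2}\,R^{-1},
\]
where the first factor is in $Id + I_1$ because $a - R \in I_1$, and the second because conjugation by the bounded operator $R$ preserves $I_1$ (no determinant of $R$ needed). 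Your claim that the odd pieces are ``even finite-rank'' is also inaccurate --- they are Hilbert--Schmidt, and what matters is that the product of two Hilbert--Schmidt operators is trace-class. With these corrections the argument goes through.
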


\begin{proof} We have to show that the action $\mu_0$ is well-defined.
\begin{enumerate}[i)]
\item $DET_0$ \textit{is closed under the action of $\GresO$}:\\ We know that $W \in \mathrm{Gr}(\mathcal{H}) \Rightarrow UW \in \mathrm{Gr}(\mathcal{H})$ for $U \in \Gl_{\rm res}$. We still have to show:\\
w admissible basis for $W \Rightarrow UwR^{-1}$ admissible basis for $UW$.\\
Obviously, $\im(UwR^{-1} ) = UW$. Furthermore:
\begin{align*}P_+ U w R^{-1} &= P_+ U P_+ w R^{-1} + P_+ U P_- w R^{-1}\\
&= P_+ U P_+ P_+ w R^{-1} + P_+ U P_- P_- w R^{-1}\\
&= \underbrace{P_+ U P_+R^{-1}}\limits_{\in Id + I_1(\mathcal{H}_+)} \;  \underbrace{R P_+ w R^{-1} }\limits_{\in Id + I_1(\mathcal{H}_+)} + \underbrace{P_+ U P_-}\limits_{\in I_2(\mathcal{H}_+)} \; \underbrace{P_- w R^{-1}}\limits_{\in I_2(\mathcal{H}_+)}\\
&\in Id + I_1(\mathcal{H}_+)
\end{align*}
since the product of two Hilbert-Schmidt operators is trace-class.\\ 
Thus, $UwR^{-1}$ is an admissible basis for $UW$.

\item \textit{The definition is independent of the representative of $[W,w,\lambda]$}:\\
Let $(W,w,\lambda) \sim (W,w',\lambda')=(W, w\circ L, \lambda/ \det(L) )$.\\ 
We need $(UW, UwR^{-1}, \lambda) \sim (UW, UwLR^{-1}, \lambda/\det(L))$, for $U$ and $R$ as above.\\
This is true because $UwLR^{-1} = UwR^{-1} (RLR^{-1}) = UwR^{-1} L'$, with \\
$L'= RLR^{-1} \in \Gl^1(\mathcal{H_+}) \; \text{and}\; \det(L') = \det(RLR^{-1}) = \det(L)$.

\item  \textit{The definition is independent of the representative of $[(U,R)] \in \GresO$}:\\
Let $[(U,R)] = [(U, R')] \in \GresO$. This means $\det(RR'^{-1}) = 1.$ Setting $L:= RR'^{-1} \in \GL$, we find
$UwR'^{-1} = UwR^{-1}RR'^{-1} = UwR^{-1}L$, with $\det(L) = 1$ and thus $(UW, UwR^{-1}, \lambda) \sim (UW, UwR'^{-1}, \lambda)$. 
\end{enumerate}
\end{proof}

\begin{Construction}[Arbitrary Charges and the complete $\Gres$]
\mbox{}\\
So far, we have defined the action of $\GresO$ on $DET_0$ and in fact, this is all we need. Extending the action to arbitrary ``charges" is somehow tedious, yet pretty much straightforward. Again, we use the ``shift''-operator $\sigma$ and the structure of $\Gres$ as a semi-direct product $\IZ \ltimes \GresO$ with $\IZ$ generated by the action of $\tilde\sigma$ (see \S 4.1.2).
\item We define a $\IZ-$action on $DET$ by \begin{equation}\vartheta(n)([W , w, \lambda]) := [\sigma^n(W) , \sigma^n w \sigma^{-n} , \lambda]. \end{equation}
Note that $\vartheta(n)$ maps $DET_d$ into $DET_{d-n}$ for any $d \in \IZ$..
\item Now we extend the action $\mu_0$ defined above to an action
$\mu: \IZ \ltimes \GresO \, \times DET \rightarrow DET$ by
\begin{equation}\mu\lbrace (n , [A , R]) \rbrace\bigl\lvert_{DET_d \rightarrow DET_{d-n}} := \; \vartheta(n-d) \circ \mu_0( \tilde{\sigma}^{d-n}([A , R])) \circ \vartheta(d). 
\end{equation}
In particular, $\GresO$ acts on $DET_d$ by 
\begin{equation} \mu\lbrace[A , R]\rbrace := \vartheta(-d) \circ \mu_0(\tilde\sigma^d [A,R]) \circ \vartheta(d). \end{equation}
This is almost the same as the action of $\GresO$ on $\mathrm{St}^{(d)}(\HH)$ descending to $DET$, only with $A$ acting from the left and $(R_{\sigma^d})^{-1}$ (instead of $R^{-1}$) acting from the right.
\item Admittedly, this is not very pretty. Therefore, we will be so daring and drop the $\IZ-$indices for the remainder of this chapter and use the simple notation for $\GresO$ acting on the ``zero-charge sector'', while still stating the results for the whole transformation group and the whole line bundle. If needed, the full expressions for arbitrary charges can be worked out in detail using the scheme we've just outlined.
\end{Construction}

\noindent Finally, we can tell how the central extension $\Gres(\HH)$ and $\Ures(\HH)$ defined in Chapter 4 act on the fermionic Fock space and everything fits together nicely.
\begin{Proposition}[Action of $\Gres$ on $\FF_{\rm geom}$]
\mbox{}\\
The action of $\Gres(\HH)$ on $\FF$ is given by the adjoint action of $\Gres$ on $DET$ as defined in Prop. \ref{GresDET}. This action restricts to a unitary representation of $\Ures(\HH)$ on the Fock space.\\
Explicitely, the action of $\Gres(\HH)$ on $\FF$ is defined by
\begin{equation}\begin{split} \mu^*:\;& \Gres \times \FF \longrightarrow \FF, \\
&([U,R] , \xi_z) \longmapsto \mu^*_{[U,R]}\xi_z = \xi_z \circ \mu^{-1}_{[U,R]}
\end{split}\end{equation}
i.e. for $w \in DET$:
\begin{align*} \mu^*_{[U,R]}\xi_z(w) & =  \xi_z( \mu^{-1}_{[U,R]} (w)) = \xi_z (U^{-1}wR)\\
&=\det(z^* U^{-1}wR) = \det(R z^* U^{-1}wR R^{-1})\\
&=\det((U^{-1*}zR^*)^*w) =: \xi_{z'}(w)
\end{align*}
with  $z' = (U^*)^{-1}zR^*$. In the unitary case, where $[(U,R)] \in \Ures(\HH) \subset \Gres(\HH)$, i.e.
$U \in \Ures(\HH)$ and $R \in \cu(\mathcal{H}_+)$, this simplifies to
\begin{equation} \xi_z \mapsto \mu^*_{[U,R]}(\xi_z)= \xi_{z'},\;\text{with} \; z' = U z R^*.\end{equation}
Clearly, this defines a representation of $\Ures(\HH)$ which is indeed unitary, because
\begin{equation*}\bigl\langle \mu^*_{[U,R]} \xi_z, \mu^*_{[U,R]} \xi_w \bigr\rangle = \det\bigl( (UzR^*)^*(UwR^*)\bigr) = \det(Rz^*wR^{-1}) = \det(z^*w) = \bigl\langle \xi_z , \xi_w \bigr\rangle. \end{equation*}
\end{Proposition}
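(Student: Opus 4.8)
The plan is to obtain the action on $\FF$ as the contragradient (pull\-back) action induced by the action $\mu$ of $\Gres(\HH)$ on the determinant bundle $DET$ constructed in Proposition \ref{GresDET} and extended to all of $DET$ right afterwards. For $g=[U,R]\in\Gres(\HH)$ the map $\mu_g\colon DET\to DET$ is a biholomorphic isomorphism of the line bundle $DET$ covering the biholomorphism $W\mapsto UW$ of $\Gr$, so precomposition with $\mu_g^{-1}$ sends holomorphic sections of $DET^*$ to holomorphic sections; setting $\mu^*_g\xi:=\xi\circ\mu_g^{-1}$ and using the action property $\mu_{g_1g_2}=\mu_{g_1}\mu_{g_2}$ one gets a left action by linear maps, since $\mu^*_{g_1g_2}\xi=\xi\circ\mu_{g_2^{-1}}\circ\mu_{g_1^{-1}}=\mu^*_{g_1}\mu^*_{g_2}\xi$. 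It then remains to show that $\mu^*$ preserves the subspace $V\subseteq\Gamma(DET^*)$ of Proposition \ref{Hermitianform}, descends to $V/V_0$, extends to $\FF_{\rm geom}$, and is unitary on $\Ures(\HH)$.

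First I would evaluate $\mu^*_g\xi_z$ on an admissible basis $w$. From $\mu_g^{-1}(w)=U^{-1}wR$ one obtains
\begin{equation*}
(\mu^*_g\xi_z)(w)=\xi_z\bigl(U^{-1}wR\bigr)=\det\bigl(z^*U^{-1}wR\bigr).
\end{equation*}
Writing $z^*U^{-1}wR=R^{-1}\bigl(Rz^*U^{-1}w\bigr)R$ and invoking the conjugation invariance $\det(STS^{-1})=\det(T)$ of the Fredholm determinant for $T\in Id+I_1(\HH_+)$ and $S$ bounded invertible --- valid even when $\det S$ is itself undefined, by cyclicity of the trace on the exterior powers $\bigwedge^k(\cdot)$ --- this equals $\det\bigl((z')^*w\bigr)=\xi_{z'}(w)$ with $z':=(U^*)^{-1}zR^*$, so $\mu^*_g\xi_z=\xi_{z'}$. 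To see $\xi_{z'}\in V$ one checks $z'\in\St$: since $\mathcal{E}$ is closed under inversion and (by an immediate check) under adjunction, $\bigl((U^*)^{-1},(R^*)^{-1}\bigr)\in\mathcal{E}$, and the Hilbert--Schmidt $\times$ Hilbert--Schmidt $\subseteq$ trace-class bookkeeping mirroring the proof of Proposition \ref{GresDET} applied to this pair gives $P_+(U^*)^{-1}zR^*\in Id+I_1(\HH_+)$. Passing from $\GresO$ on $DET_0$ to the full $\Gres(\HH)$ on all of $DET$ is carried out with the shift $\sigma$ and the maps $\tilde\sigma,\vartheta$ exactly as in the construction preceding the statement, changing the formulas only by conjugation with powers of $\sigma$; I would record this once and then suppress the $\IZ$-grading.

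Next I would note that $\mu^*_g$ preserves the null-space $V_0$: by positive semidefiniteness and Cauchy--Schwarz, $v\in V_0$ iff $\sk{v,\xi_w}=v([w,1])=0$ for every admissible basis $w$, i.e. iff $v$ is the zero section of $DET^*$, and the zero section is obviously preserved by precomposition with $\mu_g^{-1}$. Hence $\mu^*$ descends to a linear action of $\Gres(\HH)$ on the pre-Hilbert space $V/V_0$, which by Definition \ref{DefFgeom} is dense in $\FF_{\rm geom}$. For $g=[U,R]\in\Ures(\HH)$, i.e. $U$ unitary and $R\in\cu(\HH_+)$, one has $(U^*)^{-1}=U$ and $R^*=R^{-1}$, so $z'=UzR^*$, and
\begin{equation*}
\sk{\mu^*_g\xi_z,\mu^*_g\xi_w}=\det\bigl((UzR^*)^*(UwR^*)\bigr)=\det\bigl(Rz^*wR^{-1}\bigr)=\det(z^*w)=\sk{\xi_z,\xi_w},
\end{equation*}
where both sides vanish unless $\ind(P_+z)=\ind(P_+w)$, in which case $z^*w\in Id+I_1(\HH_+)$; thus $\mu^*_g$ is an isometry of $(V/V_0,\sk{\cdot,\cdot})$. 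Being invertible with inverse $\mu^*_{g^{-1}}$, it extends uniquely to a unitary operator on the completion $\FF_{\rm geom}$, and $g\mapsto\mu^*_g$ is a unitary representation of $\Ures(\HH)$.

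The only genuinely delicate part is the well-definedness bookkeeping: verifying that every Fredholm determinant appearing is taken of an operator of the form $Id+$(trace class) on the correct space, that the cyclic and conjugation manipulations are legitimate for such operators, and that the whole construction carries faithfully through the $\IZ$-grading by charge sectors. I expect all of this to be handled by the closure properties of $\mathcal{E}$ under inversion and adjunction and by the Hilbert--Schmidt estimates already used in the proof of Proposition \ref{GresDET}, with no new idea required.
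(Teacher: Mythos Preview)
Your proposal is correct and follows the same approach as the paper: the paper's argument is essentially the inline computation contained in the statement itself (pullback by $\mu^{-1}$, the conjugation trick $\det(z^*U^{-1}wR)=\det(Rz^*U^{-1}wR R^{-1})$, and the unitarity check), and you reproduce exactly this. You additionally supply the bookkeeping the paper leaves implicit --- that $z'\in\St$, that $\mu^*$ preserves $V_0$ and hence descends to $V/V_0$, and that the isometry extends to the completion --- which is useful but introduces no new idea.
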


\subsubsection{Generalization to arbitrary polarizations}
\label{subsubsection:generalization to arbitrary polarizations}
The geometric construction can be immediately generalized to arbitrary polarization, the only challenge being that it complicates our notation.

\noindent Let $\pol(\HH)$ be the set of all polarizations of $\HH$, i.e. of all closed, infinite-dimensional subspaces with infinite-dimensional orthogonal complements. In \S 2.1 we introduced the equivalence relation
\begin{equation*} V \approx W \iff  P_V - P_W \in I_2(\HH), \end{equation*}
defining a partition of $\pol(\HH)$ into disjoint polarization classes $C \in \pol(\HH)\slash_\approx$.\\

\noindent Given a fixed polarization class $C \in \pol(\HH)\slash_{\approx}$, we choose a subspace $W \in C$ corresponding to the splitting $\HH= W \oplus W^{\perp}$. This polarization distinguishes $I_2(W;W^{\perp})$ as a model-space for the manifold structure on $C=[W]_\approx$, analogous to that of $\Gr$ for $\HH_+$. The resulting Grassmann manifold $\Grass^0(\HH, W)$ is a homogeneous space for
\begin{align*} \Ur(\HH; [W], [W]) =\, & \lbrace U \in \cu(\HH) \mid UW \approx W \rbrace \\
= \,& \lbrace U \in \cu(\HH) \mid [P_W - P_{W^{\perp}} , U ] \in I_2(\HH) \rbrace.
\end{align*}  
This group inherits a Lie group structure from its representation on $W \oplus W^{\perp}$, with the topology induced by the norm $\lVert U \rVert_{\epsilon_W} = \lVert U \rVert + \lVert \,[P_W - P_{W^{\perp}} , U ] \,\rVert_2$.\\
Furthermore, we define in analogy to Def. \ref{DefSt} an admissible basis for $V \in \Grass^0(\HH, W)$ as a bounded isomorphism $w: W \to V$ with $P_W\,w \in Id + I_1(W)$ (and generalize to arbitrary charges). That is, the admissible bases of subspaces in $[W]_\approx$ are those compatible with the  identity map on $W$. The set of all such admissible bases is the Stiefel manifold $\mathrm{St}(\HH, W)$. In complete analogy to the construction above, we can now define the determinant bundle for $W$ as the associated line bundle 
\begin{equation*}DET_W = DET_W\bigl(\mathrm{St}(\HH, W),\Grass(\HH, W), \det: \GL^1(W) \to \IC \bigr) \end{equation*} 
and the Fock space $\FF_{\rm geom}^W$ as the space of holomorphic sections in $\mathrm{DET^*_W}$, equipped with the scalar product  $\langle \xi_z, \xi_w \rangle = \det(z^*w)$.

\newpage
\section{Equivalence of Fock Space Constructions}
The infinite wedge space construction of Deckert, D\"{u}rr, Merkl and Schottenloher is closely related to the geometric construction of the Fock space as the space of holomorphic section in the dual of the determinant bundle. The infinite wedge spaces, however, remain rather unimpressed by the underlying geometry and highlight the algebraic relations that seem more essential to the physical problems.

The reader might have noticed that the admissible bases are for the geometric construction, what the seas are for the infinite wedge spaces. However, it is important to note one essential difference. For the construction of the infinite wedge space, we start with a polarization class and have the freedom to choose a suitable Dirac sea class. In the geometric construction, on the other hand, we have to fix a polarization within the polarization class (e.g. $\HH_+ \in \Gr$); this polarization then comes with a preferred choice for the class of admissible bases, induced by the identity map on that particular subspace.\footnote{Actually, in both cases, the choice determines only the charge-0 sector of the Fock space. Construction of the ``full'' Fock space requires a choice for every charge-sector, i.e. for every connected component of the polarization class.} By this means, the geometric Fock space genuinely contains a distinguished state, whereas all states in the infinite wedge spaces come with equal rights. In other words, the geometric construction, as opposed to the infinite wedge space construction, naturally yields a Fock space \textit{with vacuum}. (This is just the usual nomenclature, the preferred state may or may not actually play the role of a vacuum state in the physical description).
In particular, there are as many (different, but equivalent) geometric Fock spaces over a fixed polarization class, as there are polarizations, i.e. vacua. Two infinite wedge spaces $\FF$ and $\FF'$ over the same polarization class $C$ differ by right-action with an operator $R \in \cu(\ell)$, with $\FF = \FF' \iff R \in \cu^1(\ell)$ (see Cor. \ref{Cor:equivalence of infinite wedge spaces}).

So much for the abstract discussion. Let's now formulate the relationship between Dirac sea classes and admissible bases in a precise way.

\begin{Lemma}[Dirac Seas vs. Admissible Bases]
\item Any isometric Dirac sea $\Phi_0 \in \iseas$ with $\im(\Phi_0) = \Hplusn$ gives rise to an\\ 
isomorphism
\begin{equation*}\bigl(\iseas\slash\sim\bigr)\,\ni S(\Phi_0) \xrightarrow{\; \cong\; } \mathrm{St}^{(n)}(\HH)\end{equation*}
between its Dirac sea class and the n-th connected component of the Stiefel manifold. 
\item More generally, let $\Phi_0 \in \iseas$ with $\im(\Phi_0) = W$. Then $\Phi_0$ induces an isomorphism between its Dirac sea class and the Stiefel manifold
 \begin{equation}\mathrm{St}^0(\HH,W) = \lbrace w: W \to \HH \mid P_W w \in Id + I_1(W)\rbrace \end{equation} corresponding to the polarization $W \in \pol(\HH)$.  
\end{Lemma}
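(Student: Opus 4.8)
The plan is to build the claimed isomorphism explicitly by ``dividing out'' the isometry $\Phi_0$. Since $\Phi_0\in\iseas$ is an isometry with $\im(\Phi_0)=W$ (where $W=\Hplusn$ in the first part), the corestriction $\Phi_0\colon\ell\to W$ is unitary, so $\Phi_0^*\Phi_0=\mathrm{Id}_\ell$, $\Phi_0\Phi_0^*=P_W$, and $\Phi_0^{-1}=\Phi_0^*|_W\colon W\to\ell$. I would then define
\begin{equation*}
\Theta_{\Phi_0}\colon\cs(\Phi_0)\longrightarrow \mathrm{St}^{(n)}(\HH)\ \bigl(\text{resp. }\mathrm{St}^0(\HH,W)\bigr),\qquad \Phi\longmapsto \Phi\circ\Phi_0^{-1},
\end{equation*}
with candidate inverse $\Lambda_{\Phi_0}\colon w\mapsto w\circ\Phi_0$. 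The whole point of conjugating by $\Phi_0$ is that the two sides live over genuinely different index objects: a sea $\Phi$ is defined on $\ell$, whereas an admissible basis is defined on $W$ (on $\Hplusn$ in part one), and $\Phi_0$ is the dictionary between them. One minor point to keep straight at the outset: $\cs(\Phi_0)$ here has to be read as the full Dirac sea class in $\seas/{\sim}$ (equivalently, the statement is understood modulo the polar‑decomposition ambiguity distinguishing $\cu^1(\ell)$ from $\Gl^1(\ell)$), because a generic admissible basis is not an isometry.

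\textbf{Well‑definedness, both ways.} For $\Phi\in\cs(\Phi_0)$ put $w:=\Phi\Phi_0^{-1}$. Then $w$ is bounded, $\im(w)=\im(\Phi)\in\pol(\HH)$, and since $\Phi\sim\Phi_0$ with $\im(\Phi_0)=W$, the preceding Lemma on the connection between $\sim$ and $\approx_0$ gives $\im(w)\approx_0 W$; in the first part, using $\charge(\Hplusn,\HH_+)=n$, this says $\im(w)\in\mathrm{Gr}^{(n)}(\HH)$. Writing $\Phi_0^*\Phi=\mathrm{Id}_\ell+A$ with $A\in I_1(\ell)$ (exactly the meaning of $\Phi\sim\Phi_0$), we get
\begin{equation*}
P_W w=\Phi_0\Phi_0^*\,\Phi\,\Phi_0^{-1}=\Phi_0(\mathrm{Id}_\ell+A)\Phi_0^{-1}=\mathrm{Id}_W+\Phi_0 A\Phi_0^{-1}\in\mathrm{Id}_W+I_1(W)
\end{equation*}
since $I_1$ is a two‑sided ideal; together with the fact that $w$ is an isomorphism onto its range, this shows $w$ is an admissible basis. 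Conversely, for $w\in\mathrm{St}^{(n)}(\HH)$ decompose $w=a\oplus b$ along $\HH=W\oplus W^\perp$, with $a=P_W w\in\mathrm{Id}_W+I_1(W)$ and $b=P_{W^\perp}w\in I_2(W,W^\perp)$ (the latter because $\im(w)\in\Gr$, cf. Lemma~\ref{approx}); then $w^*w=a^*a+b^*b\in\mathrm{Id}_W+I_1(W)$ as $b^*b\in I_1(W)$ is a product of Hilbert–Schmidt operators, whence
\begin{equation*}
(w\Phi_0)^*(w\Phi_0)=\Phi_0^*(w^*w)\Phi_0\in\mathrm{Id}_\ell+I_1(\ell),
\end{equation*}
so $w\Phi_0\in\seas$, and $\Phi_0^*(w\Phi_0)=\Phi_0^*(P_Ww)\Phi_0=\mathrm{Id}_\ell+\Phi_0^*(a-\mathrm{Id}_W)\Phi_0\in\mathrm{Id}_\ell+I_1(\ell)$ shows $w\Phi_0\sim\Phi_0$, i.e. $\Lambda_{\Phi_0}(w)\in\cs(\Phi_0)$. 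Finally $\Lambda_{\Phi_0}\Theta_{\Phi_0}(\Phi)=\Phi\Phi_0^{-1}\Phi_0=\Phi$ and $\Theta_{\Phi_0}\Lambda_{\Phi_0}(w)=w\Phi_0\Phi_0^{-1}=w$ using $\Phi_0^{-1}\Phi_0=\mathrm{Id}_\ell$, $\Phi_0\Phi_0^{-1}=\mathrm{Id}_W$, so $\Theta_{\Phi_0}$ is a bijection.

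\textbf{Upgrading to an isomorphism, and the main obstacle.} To promote ``bijection'' to ``isomorphism of the relevant structures'' I would additionally note that $\Theta_{\Phi_0}$ is fibre‑preserving over $\Gr$ (it does not change images), intertwines the right actions — conjugation by $\Phi_0$ turns the right $\Gl^1(\ell)$‑action $\Phi\mapsto\Phi R$ on seas into the $\Gl^1(W)$‑action $w\mapsto w\circ(\Phi_0R\Phi_0^{-1})$ on admissible bases used in the Lemma on the relationship between admissible bases — and is bicontinuous for the respective metrics (both $\Theta_{\Phi_0}$ and $\Lambda_{\Phi_0}$ are composition with the fixed bounded operators $\Phi_0,\Phi_0^{-1}$, which preserve the $I_1$‑ and $I_2$‑parts boundedly). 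There is no deep obstruction here; the real work is bookkeeping — keeping the copies $\ell$, $W$, $\Hplusn$ apart and checking that the trace‑class/Hilbert–Schmidt conditions transport correctly under conjugation by $\Phi_0$ — plus the one genuine subtlety already flagged: the identity with \emph{all} of $\mathrm{St}^{(n)}(\HH)$ forces us to take $\cs(\Phi_0)$ as the $\seas/{\sim}$–class rather than its isometric part. If one insists on isometric seas on the left, $\Theta_{\Phi_0}$ identifies them with the \emph{unitary} admissible bases only, and one then invokes polar decomposition (Lemma~\ref{Lemma:admbasis} and the remark after Definition~\ref{Def:admbasis}) to see that every admissible basis is $\Gl^1$‑related to a unitary one, recovering the full Stiefel manifold.
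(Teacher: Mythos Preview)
Your proof is correct and follows essentially the same route as the paper: both define the mutually inverse maps $w\mapsto w\circ\Phi_0$ and $\Phi\mapsto\Phi\circ\Phi_0^*\lvert_W$, and verify well-definedness via the identity $P_W w=\Phi_0(\Phi_0^*\,w\,\Phi_0)\Phi_0^{-1}$, which shows that $P_Ww$ has a determinant iff $\Phi_0^*(w\Phi_0)$ does. Your additional remarks on the $\iseas$ versus $\seas$ subtlety, the intertwining of right actions, and bicontinuity go a bit beyond what the paper spells out, but the core argument is the same.
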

\begin{proof} Given a $\Phi_0: \ell \rightarrow \Hplusn \subset \mathcal{H}$ as above, consider the map
\begin{equation}\label{StSea}\mathrm{St}^{(n)}(\HH) \rightarrow S(\Phi_0); \; w \mapsto w \circ \Phi_0,\end{equation}

\noindent This is a well-defined map into $S(\Phi_0)$: If $w$ is an admissible basis, $w \circ \Phi_0$ is a map from $\ell$ to $\mathcal{H}$ and indeed $w \circ \Phi_0 \sim \Phi_0$ in $\iseas$, because
\vspace{-4mm}
\begin{align*}P_+w = \Phi_0 \Phi_0^* w&\; \text{has a determinant} \\
\Rightarrow\, \Phi_0^* &\,w \, \Phi_0 : \ell \rightarrow \ell \; \text{has a determinant}\\
&\Rightarrow\,  w \circ \Phi_0 \sim \Phi_0\; \text{in} \; \iseas. \end{align*} 
It remains to show that \eqref{StSea} is bijective: Let $\Phi \sim \Phi_0 \in \seas$. Define \\
$w:= \Phi \circ \Phi^*_0\, \bigl\vert_{\Hplusn}$. $w$ is an isomorphism $\Hplusn \rightarrow \HH$ with $\im(w) =  \im(\Phi) \in \mathrm{Gr}(\mathcal{H})$ and clearly 
$w \circ \Phi_0 = \Phi$. Finally, $w$ is indeed an admissible basis i.e. in $\mathrm{St}^{(n)}(\HH)$, because
$P_+w = \Phi_0\Phi_0^* \,\Phi\Phi^*_0$ has a determinant$\iff \Phi_0^*\Phi$ has a determinant$\iff \Phi \sim \Phi_0$.\\
The general case works analogously. 
\end{proof}

\begin{Theorem}[Anti-Isomorphism of Fock spaces]
\mbox{}\\
Let $\Phi_0 \in \iseas$ with $\im(\Phi_0) = \Hplusn$. The infinite wedge space $\FF_{S(\Phi_0)}$ is \emph{anti-unitary} equivalent to $\FF^{(n)}_{\rm geom}$, the charge-n-sector of the fermionic Fock space constructed from $\Gamma(DET^*_n)$.
\end{Theorem}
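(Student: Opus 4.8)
The plan is to transplant the generating vectors $\bwdge\Phi$ of the infinite wedge space $\FF_{S(\Phi_0)}$ onto the generating sections $\xi_w$ of $\FF^{(n)}_{\rm geom}$ via the bijection between the Dirac sea class $S(\Phi_0)$ and the $n$-th component $\mathrm{St}^{(n)}(\HH)$ of the Stiefel manifold supplied by the preceding Lemma, and then to check that the natural such map is an \emph{anti}-linear isometry. The anti-linearity is unavoidable: the Hermitian form on $\FF_{S(\Phi_0)}$ is conjugate-linear in its \emph{first} slot (Construction [Formal Linear Combinations]), whereas the form $\langle\xi_z,\xi_w\rangle=\det(z^*w)$ on $\FF_{\rm geom}$ is conjugate-linear in its \emph{second} slot (Proposition [Hermitian Form]). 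Concretely, for $\Phi\in S(\Phi_0)$ I would set $w_\Phi:=\Phi\circ\Phi_0^*|_{\Hplusn}\in\mathrm{St}^{(n)}(\HH)$; by the Lemma [Dirac Seas vs. Admissible Bases] the assignment $\Phi\mapsto w_\Phi$ is a bijection onto $\mathrm{St}^{(n)}(\HH)$ with inverse $w\mapsto w\circ\Phi_0$. Then define $T$ on the dense subspace $\IC^{(S(\Phi_0))}$ as the anti-linear extension of $[\Phi]\mapsto\xi_{w_\Phi}$, i.e.\ $T\bigl(\sum_\Phi\alpha(\Phi)[\Phi]\bigr)=\sum_\Phi\overline{\alpha(\Phi)}\,\xi_{w_\Phi}$, which lands in the span $V^{(n)}$ of the $\xi_z$ with $z\in\mathrm{St}^{(n)}(\HH)$.

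The heart of the argument is the determinant identity $\det(w_\Phi^*w_\Psi)=\det(\Phi^*\Psi)$ for $\Phi,\Psi\in S(\Phi_0)$. Since $\Phi_0$ is an isometry onto $\Hplusn$ one has $w_\Phi^*=\Phi_0\Phi^*$, hence $w_\Phi^*w_\Psi=\Phi_0\,(\Phi^*\Psi)\,\Phi_0^*|_{\Hplusn}$; because $\Phi\sim\Psi$, the operator $\Phi^*\Psi$ lies in $\mathrm{Id}_\ell+I_1(\ell)$, and $w_\Phi^*w_\Psi$ is its image under conjugation by the unitary $\Phi_0:\ell\to\Hplusn$, under which the Fredholm determinant is invariant (the nonzero eigenvalues, and hence the product $\prod(1+\lambda_i)$, are unchanged; alternatively use the truncation characterization from the Notation section).

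Granting this, a routine sesquilinear computation using the conjugate-linearity of $\scpro$ in the second slot on $V^{(n)}$ yields, for all $\alpha,\beta\in\IC^{(S(\Phi_0))}$, $\langle T\alpha,T\beta\rangle_{\FF_{\rm geom}}=\sum_{\Phi,\Psi}\overline{\alpha(\Phi)}\,\beta(\Psi)\,\det(\Phi^*\Psi)=\langle\alpha,\beta\rangle_{\FF_{S(\Phi_0)}}$. In particular $T$ preserves the semi-norms, so it maps the null space $N_{S(\Phi_0)}$ onto $V^{(n)}_0$ and descends to an anti-linear isometry $\IC^{(S(\Phi_0))}/N_{S(\Phi_0)}\to V^{(n)}/V^{(n)}_0$. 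Moreover every $z\in\mathrm{St}^{(n)}(\HH)$ equals $w_\Phi$ for $\Phi=z\circ\Phi_0\in S(\Phi_0)$, so $T$ already maps onto $V^{(n)}$, which is dense in $\FF^{(n)}_{\rm geom}$; together with completeness, $T$ extends to a bijective, anti-linear, norm-preserving map $\FF_{S(\Phi_0)}\to\FF^{(n)}_{\rm geom}$, i.e.\ an anti-unitary equivalence. I would also record at the start that $\Hplusn\in\mathrm{Gr}^{n}(\HH)$, so that $S(\Phi_0)$ really matches the charge-$n$ sector, and that distinct charge sectors of $\FF_{\rm geom}$ are mutually orthogonal (the pairing $\Phi(z,w)$ vanishes off the diagonal in the index), which is why it suffices to work in a single connected component.

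The main obstacle I anticipate is not conceptual but one of bookkeeping: keeping the two opposite conjugation conventions straight so that it is precisely the anti-linear map which comes out isometric, and giving an honest proof of the cross-space invariance $\det(\Phi_0 A\Phi_0^*)=\det(A)$ for $A\in\mathrm{Id}_\ell+I_1(\ell)$ and $\Phi_0:\ell\to\Hplusn$ unitary, rather than merely asserting it; everything else reduces to the Lemma and to the already-established properties of the Fredholm determinant and of $\scpro$.
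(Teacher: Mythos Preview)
Your proposal is correct and follows essentially the same strategy as the paper: use the bijection $S(\Phi_0)\leftrightarrow\mathrm{St}^{(n)}(\HH)$ from the preceding Lemma together with the determinant identity $\det(\Phi_0 A\Phi_0^*)=\det(A)$ for $A\in\mathrm{Id}_\ell+I_1(\ell)$, and observe that the opposite conjugation conventions force the map to be anti-linear. The only difference is the direction: the paper defines $\mathfrak{f}:\FF^{(n)}_{\rm geom}\to\FF_{S(\Phi_0)}$ by $\xi_w\mapsto\bwdge(w\circ\Phi_0)$ and must therefore verify explicitly that this is well-defined under $\xi_{w\circ L}=\overline{\det(L)}\,\xi_w$, whereas your map $T$ starts from the free formal linear combinations $\IC^{(S(\Phi_0))}$, so well-definedness is automatic and only the descent through the null space (which you obtain from the isometry) needs checking---a minor but pleasant simplification.
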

\begin{proof} Consider the map $\mathfrak{f}:\; \FF^{(n)}_{\rm geom} \longrightarrow \FF_{S(\Phi_0)}$,  defined by \begin{equation} \xi_w \longmapsto {\textstyle\bigwedge} \bigl( w \circ \Phi_0 \bigr), \; \text{ for} \; w \in \mathrm{St}^{(n)} \end{equation}
and \textit{anti-linear} extension.
This is well defined, since for $L \in \GL$, we have\\ 
$\xi_{z\circ L} = \overline{\det(L)} \, \xi_{z}$ by \eqref{xicc}, and 
\begin{equation*}\bwdge\bigl(w \circ L \circ \Phi_0\bigr) = \bwdge\bigl(w \circ \Phi_0 \Phi_0^* L \circ \Phi_0\bigr) = \det(\Phi_0^* L \circ \Phi_0)\, \bwdge\bigl(w \circ \Phi_0\bigr) = \det(L) \, \bwdge\bigl(w \circ \Phi_0\bigr).\end{equation*}
It is also an (anti-) isometry, because
\begin{equation*}
\bigl\langle \bwdge \bigl(z \circ \Phi_0\bigr)  , \bwdge\bigl(w \circ \Phi_0\bigr) \bigr \rangle_{\FF_{S(\Phi_0)}} =\, \det(\Phi_0^* z^*w \Phi_0) = \det(z^* w) = \bigl\langle \xi_z , \xi_w \bigr\rangle_{\FF_{\rm geom}}. \end{equation*}
By construction of the infinite wedge space and by the previous Lemma, Dirac seas of the form $w \circ \Phi_0,\;  w \in \mathrm{St}^{(n)}(\HH)$ span the entire Fock space $\FF_{S(\Phi_0)}$.
We conclude that the two Fock spaces are anti-unitary equivalent.
\end{proof}

\begin{Corollary}[Action of $\widetilde{\cu}^0_{\rm res}$ on Infinite Wedge Spaces]\label{Ures on wedge spaces}
\item Let $\Phi_0$ and $\FF_{S(\Phi_0)}$ as above. We have a natural representation of $\Ures^0(\HH)$ on $\FF_{S(\Phi_0)}$ given by
\begin{equation}\label{Uresiso} \wedgeu: \Ures^0(\HH) \ni [U, R] \longmapsto \mathcal{L}_U \mathcal{R}_{R_{\Phi_0}^*} \end{equation}
with $R_{\Phi_0} := \Phi_0^*\,R\,\Phi_0 \in \cu(\ell)$,
i.e. $\bwdge \Phi \xrightarrow{[U,R]} \bwdge\bigl(U\, \Phi \, \Phi_0^*R^*\Phi_0\bigr)$.

\item Of course, this is the representation induced by the that on $\FF$ via the anti-automorphism $\mathfrak{f}$:

\begin{equation}
\begin{xy}
  \xymatrix{
     \FF^{(n)}_{\rm geom} \ar[d]^{\mathfrak{f}} \ar[r]^{\mu^*_{[U,R]}} \ & \FF^{(n)}_{\rm geom} \ar[d]^ {\mathfrak{f}} \\
     \FF_{S(\Phi_0)} \ar[r]^ {\mathcal{L}_U \mathcal{R}_{R_{\Phi_0}^*}}         & \FF_{S(\Phi_0)}
  }
\end{xy}
 \end{equation}  
\end{Corollary}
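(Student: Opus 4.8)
\textbf{Proof plan for Corollary \ref{Ures on wedge spaces}.}
The plan is to read off the formula \eqref{Uresiso} directly from the anti-unitary equivalence $\mathfrak{f}:\FF^{(n)}_{\rm geom}\to\FF_{S(\Phi_0)}$ of the preceding theorem, transporting the already-established unitary representation $\mu^*$ of $\Ures(\HH)$ on $\FF_{\rm geom}$ through $\mathfrak{f}$. Concretely, I would first observe that since $\mathfrak{f}$ is an anti-unitary bijection, the assignment $[U,R]\mapsto \mathfrak{f}\circ \mu^*_{[U,R]}\circ \mathfrak{f}^{-1}$ automatically defines a unitary representation of $\Ures^0(\HH)$ on $\FF_{S(\Phi_0)}$: conjugation by an anti-unitary turns an anti-linear-compatible unitary into a unitary, and the homomorphism property is inherited from $\mu^*$. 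This disposes of all the abstract content; what remains is to identify this conjugated representation explicitly with $\mathcal L_U\mathcal R_{R^*_{\Phi_0}}$.

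For the explicit identification I would evaluate the composite on a generating vector. Recall $\mathfrak{f}(\xi_w)=\bwdge(w\circ\Phi_0)$ for an admissible basis $w\in\mathrm{St}^{(n)}(\HH)$, and $\mathfrak{f}^{-1}(\bwdge\Phi)=\xi_w$ where $w=\Phi\circ\Phi_0^*|_{\Hplusn}$ is the admissible basis with $w\circ\Phi_0=\Phi$, as in the proof of the previous Lemma. Given $\bwdge\Phi\in\FF_{S(\Phi_0)}$, write $\Phi=w\circ\Phi_0$; apply $\mathfrak{f}^{-1}$ to get $\xi_w$; apply $\mu^*_{[U,R]}$, which by the Proposition on the action of $\Gres$ on $\FF_{\rm geom}$ (in the unitary case $U\in\Ur(\HH)$, $R\in\cu(\HH_+)$) sends $\xi_w\mapsto \xi_{UwR^*}$; finally apply $\mathfrak{f}$ to obtain $\bwdge\bigl(UwR^*\circ\Phi_0\bigr)$. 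It then remains to rewrite $UwR^*\Phi_0$ in terms of $\Phi=w\Phi_0$ and a left/right operation. Using that $R^*$ commutes appropriately and that $w=\Phi\Phi_0^*$ on $\im(\Phi_0)$, one gets $UwR^*\Phi_0 = U\Phi\Phi_0^* R^*\Phi_0 = U\,\Phi\,(\Phi_0^*R^*\Phi_0) = U\,\Phi\,R_{\Phi_0}^*$ with $R_{\Phi_0}=\Phi_0^*R\Phi_0\in\cu(\ell)$, so the composite is $\bwdge\Phi\mapsto \bwdge(U\Phi R_{\Phi_0}^*) = \mathcal L_U\mathcal R_{R_{\Phi_0}^*}(\bwdge\Phi)$, which is exactly \eqref{Uresiso}.

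Two small points need checking along the way. First, one must verify $R_{\Phi_0}\in\cu(\ell)$: since $\Phi_0$ is an isometry with $\im(\Phi_0)=\Hplusn$ and $R\in\cu(\HH_+)$ restricts to a unitary on $\Hplusn$ (as $R$ commutes with $P^{(n)}_+$ up to trace-class — actually here $R\in\cu(\HH_+)$ and we only need $R_{\Phi_0}\in\cu(\ell)$, which follows from $\Phi_0^*\Phi_0=\mathrm{Id}_\ell$ and $\Phi_0\Phi_0^*=P_{\Hplusn}$ together with unitarity of $R$ on $\Hplusn$), the composition $\Phi_0^*R\Phi_0$ is unitary on $\ell$; this legitimizes $\mathcal R_{R_{\Phi_0}^*}$ as a genuine (non-rescaling) unitary right operation. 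Second, one should confirm well-definedness on Dirac sea classes, i.e.\ independence of the representative $w$ of $\Phi$ and of the representative $[U,R]$ of the class in $\Ures^0$ — but both are immediate consequences of the corresponding well-definedness statements already proven for $\mathfrak{f}$ and for $\mu^*$, so no new work is required. The commuting square in the statement is then just a restatement of the definition $\wedgeu[U,R]:=\mathfrak{f}\circ\mu^*_{[U,R]}\circ\mathfrak{f}^{-1}$.

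I do not expect a genuine obstacle here; the only mildly delicate step is bookkeeping the conjugation-by-anti-unitary and making sure the formula comes out with $R_{\Phi_0}^*$ rather than $R_{\Phi_0}$, which is forced by the anti-linearity of $\mathfrak{f}$ interacting with $R\mapsto R^*$ in $\mu^*$ and with the scaling identity \eqref{xicc}. The extension to arbitrary charges $n$ would, if desired, proceed exactly as in the corresponding construction for $\mu$ via the shift $\sigma$, but the statement as given is charge-$n$-sector by charge-$n$-sector and needs nothing beyond the above.
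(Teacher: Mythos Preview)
Your approach is correct and essentially identical to the paper's: the paper's entire proof is the one-line chain
\[
\mathfrak{f}\circ\mu^*_{[U,R]}(\xi_w)=\mathfrak{f}(\xi_{UwR^*})=\bwdge(UwR^*\Phi_0)=\bwdge(Uw\Phi_0\,\Phi_0^*R^*\Phi_0)=\mathcal{L}_U\mathcal{R}_{R_{\Phi_0}^*}\circ\mathfrak{f}(\xi_w),
\]
which is exactly your computation on generators, the key step being $R^*\Phi_0=\Phi_0\Phi_0^*R^*\Phi_0$ since $\Phi_0\Phi_0^*=P_{\Hplusn}$ and $R^*$ preserves $\Hplusn$. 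One minor remark: the appearance of $R_{\Phi_0}^*$ rather than $R_{\Phi_0}$ is not due to the anti-linearity of $\mathfrak{f}$ as you suggest, but simply comes from the formula $\mu^*_{[U,R]}\xi_w=\xi_{UwR^*}$ in the unitary case; the anti-linearity plays no role in this particular identification since $\mathfrak{f}$ is applied to a single basis-type vector $\xi_{UwR^*}$ without any scalar to conjugate.
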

\begin{proof} It suffices to note that 
\begin{align*}\mathfrak{f} \circ \mu^*_{[U,R]} \, (\xi_w) =  \, \mathfrak{f}\,(\xi_{(U w R^*)}) = \bwdge\bigl(U w R^* \circ \Phi_0\bigr)
= \bwdge\bigl(U w \, \Phi_0\, \Phi_0^*\, R^* \Phi_0\bigr) = \mathcal{L}_U \mathcal{R}_{R_{\Phi_0}^*} \circ \mathfrak{f} \, (\xi_w).  
\end{align*}
\end{proof}

\section{CAR Representation on $\FF_{\rm geom}$}
In this section, we discuss the relationship between the geometric Fock space, or equivalently, the infinite wedge spaces, and the Fock spaces studied in Section 5.1 as representation-spaces of the abstract CAR-algebra, generated by creation and annihilation operators. We will provide a more rigorous formulation of our initial argument that a description in terms of ``particles'' and ``antiparticles'' is equivalent to formulations involving a ``Dirac sea'' of infinitely many particles.\\ 

\noindent Unfortunately, at this point we'll have to pay the price for bowing to different, contradictory conventions, when we used $\HH_+$ for the unperturbed Dirac sea, e.g. in \S 5.2, but also for the positive energy electron states, e.g. in \S 5.1. We will circumvent this difficulty, simply by setting $\FF_+ := \HH_-$ and $\FF_- := \mathcal{C} \HH_+$ in the construction of the Fock representation, i.e. in:
 \begin{equation*} \FF := \bigoplus\limits_{c}^{\infty} \FF^{(c)} ; \;\;\; \FF^{(c)} :=\bigoplus\limits_{n-m=c} \sideset{}{^n}\bigwedge\FF_+ \otimes \sideset{}{^m}\bigwedge \FF_- .\end{equation*}
Thus, compared to Section 5.1, the roles of $\HH_+$ and $\HH_-$ are interchanged. I hope the reader will excuse this little blemish.\\


\begin{Construction}[Fock Space Isomorphism and Field Operator on Dirac Seas]
\item We fix a basis $\lbrace (e_k)_{k\in\mathbb{Z}} \rbrace \; \text{of} \; \mathcal{H}$ such that $(e_k)_{k \le 0}$ is an ONB of $\mathcal{H}_-$ and an $(e_k)_{k \geq 0}$ ONB of $\mathcal{H}_+$.
Let $\csq$ be the set of sequences as in Def. \ref{DefPflucker}. 
It consists of increasing sequences $S=(i_0, i_1, i_2, \ldots)$, containing only finitely many negative integers and all but finitely many positive integers. Recall that by Prop. \ref{Hermitianform}, the holomorphic sections $\lbrace\Psi_S\rbrace_{s \in \csq}$ defined in Def. \ref{DefPflucker}, iii), form an orthonormal basis of the Fock space $\FF_{\rm geom}$.\\

\noindent We define an isomorphism between $\FF_{\rm geom}$ and $\FF = \FF(\HH_+,\HH_-)$ by
\begin{align}\label{FgeomtoFock} \Psi_S \longmapsto \bigl(e_{i_0}\wedge \ldots \wedge e_{i_{n-1}}\bigr) \otimes\bigl(\mathcal{C}e_{j_0}\wedge\ldots\wedge\mathcal{C}e_{j_{m-1}}\bigr) \in \sideset{}{^n}\bigwedge\FF_+ \otimes \sideset{}{^m}\bigwedge \FF_-
\end{align}
for $S \cap \IZ^- = \lbrace i_0 < i_1 < \ldots < i_{n-1}\rbrace$ and $ \IN \setminus S = \lbrace j_0 < j_1<\ldots<j_{m-1}\rbrace$.\\

\noindent In particular, for $S=(0,1,2, \ldots), \Psi_S = \Psi_0$ is mapped to the vacuum state in $\FF$.\\ 

In other words, the states indexed by the negative integers in $S$ are mapped to ``electron states'' in $\FF_+$ and the states indexed by the positive integers \textit{missing} in $S$ (the ``holes'') are charge-conjugated and mapped to ``positron states'' in $\FF_-$.\\
\noindent Obviously, the assignment is 1--to--1 and isometric. It also preserves \textit{charge-preserving}, in the sense that states $\Psi_S$ with $c(S) = c$, spanning the charge-c-sector of $\FF_{\rm geom}$, are mapped into $\FF^{(c)}$, the charge-c-sector of $\FF$.\\

\noindent We can also introduce the equivalent of creation- and annihilation operators on $\FF_{\rm geom}$.\\ 
For this, it is convenient to use the intuitive notation
\begin{equation*} \Psi_S = e_{i_0} \wedge e_{i_1}\wedge e_{i_2} \wedge e_{i_3}\wedge \ldots \, .\end{equation*}
for $S = (i_0, i_1, i_2,i_3 \ldots) \in \mathbb{S}$.
\newpage
\noindent Then, we can define the \textit{field operator} (or rather its hermitian conjugate) $\mathbf{\Psi}^*$ on $\FF_{\rm geom}$ by 

\begin{equation} \mathbf{\Psi}^*(e_k):\, e_{i_0} \wedge e_{i_1}\wedge e_{i_2} \wedge e_{i_3}\wedge \ldots \longmapsto   e_k\wedge e_{i_0} \wedge e_{i_1}\wedge e_{i_2} \wedge e_{i_3}\wedge \ldots \end{equation}

\vspace*{2mm}
\noindent That is, $ \mathbf{\Psi}^*(e_k)$ maps $\Psi_S$ with $S = (i_0, i_1, i_2,i_3 \ldots)$ to zero, if $k\in \IZ$ is contained in $S$, otherwise to $(-1)^{j}\, \Psi_{S'}$ with $S'= (i_0, i_1, \ldots i_{j-1}, k, i_{j}, \ldots) \in \csq$.
By linear extension in the argument of $\mathbf{\Psi}^*(\cdot)$, we get a linear map $\mathbf{\Psi}^*: \HH \rightarrow \mathcal{B}(\FF_{\rm geom})$.\\

\noindent It is easy to see that under the isomorphism defined above, this field operator acts just  as the usual field operator \eqref{fieldoperator}, defined in terms of creation- and annihilation operators.\\
The same is true for the formal adjoint $\mathbf{\Psi}: \HH \rightarrow \mathcal{B}(\FF_{\rm geom})$, acting as

\[
 \mathbf{\Psi}(e_k) \; e_{i_0} \wedge e_{i_1}\wedge e_{i_2} \wedge e_{i_3}\wedge \ldots \; = 
 \; \left\{
\begin{array}{ll}
 (-1)^j\,e_{i_0} \wedge e_{i_1} \ldots \wedge e_{i_{j_-1}} \wedge \cancel{e_{i_j}}\wedge e_{i_{j+1}}\wedge \ldots
& \mbox{; if } k = i_j \\ \\
0
& \mbox{; if } k \notin S
\end{array}
\right. 
\]

\noindent In particular, $ \mathbf{\Psi}$ satisfies the canonical anti-commutation relations \begin{equation*}\lbrace  \mathbf{\Psi}(f),  \mathbf{\Psi}^*(g) \rbrace = \langle f, g \rangle_{\HH}\cdot \mathrm{Id}, \; \forall f,g \in \HH \end{equation*} and therefore defines a representation of the CAR-algebra on the Fock space $\FF_{\rm geom}$ which is equivalent  to the Fock representation on $\FF$.
\end{Construction}

\chapter{Time-Varying Fock spaces}
We have seen that a unitary operator on $\HH$ can be implemented on the fermionic Fock space if and only if it satisfies the Shale-Stinespring condition $[\epsilon, U] \in I_2(\HH)$ i.e. if and only if it belongs to $\Ur(\HH)$. In the previous chapter, this result appeared in different disguises depending on the construction of the Fock space, but there was no way around the fact itself. The dramatic conclusion seems to be that there is no well-defined time evolution in QED. By Ruijsenaar's theorem \ref{Ruij}, the Dirac-time evolution for a field $\sa$ will not be in $\Ur(\HH)$, unless the spatial component $\underline\sa$ of the vector potential vanishes. Physically, the reason is \textit{infinite particle creation}. Mathematically, this is reflected in the fact that a unitary transformation that doesn't satisfy the Shale-Stinespring condition, will leave the polarization class $[\HH_-] = \Gr$, over which the standard Fock space is constructed (here, we use again $\HH = \HH_+ \oplus \HH_-$ as the spectral decomposition and $\HH_-$ plays the role of the Dirac sea).\\ 

In the light of these results, Deckert et.al. concluded that the best we can do is to realize the time evolution as unitary transformations between \textit{different} Fock spaces over varying polarization classes. The first analysis along those lines was probably done by Fierz and Scharf in \cite{FS}, who studied time-dependent Fock space representations for QED in external fields. A similar construction, realized in more geometrical notions, is also known as the Fock space bundle in the mathematical literature (see for example  \cite{CaMiMu}).\\

Note that in this work we are not focusing on finding the weakest regularity condition on the external vector potentials. For simplicity, we always assume smooth fields with compact support in space and time, but in fact most of the results can be generalized a a bigger class of interactions, where suitable fall-off properties for the fields are assumed. The reader may consult the cited publications for notes on those possible generalizations. 

\newpage
\section{Identification of Polarization Classes}

Given an external field in a form of  4-vector potential 
\begin{equation*}\sa = (\sa_{\mu})_{\mu=0,1,2,3} = (\sa_0,- \underline{\sa}) \in C_c^{\infty} (\mathbb{R}^4, \mathbb{R}^4), \end{equation*} let $U^\sa = U^\sa(t,t'), \; t,t' \in \mathbb{R}$ be the unitary Dirac time evolution determined by the Hamiltonian \begin{equation}\label{HamiltonianA} H^\sa = D_0 + e\sum\limits_{\mu=0}^3\alpha^\mu \sa_\mu.\end{equation}

\noindent The question is the following: How can we determine polarization classes $C(t)$ such that
\begin{equation} U^\sa(t_1,t_0) \in \Ur\bigl(\HH, C(t_0) ; \HH, C(t_1)\bigr)\end{equation}
for all $t_1, t_0 \in \IR$ ?
The answer is a generalization of Ruijsenaars theorem and was provided in a very neat, systematic way by Deckert, D\"urr, Merkl and Schottenloher in \cite{DeDueMeScho}. We give a brief summary of their results.\\

\noindent The unitary time evolution $U^\sa(t_1,t_0), \; t_0,t_1 \in \IR$ satisfies the equation
\begin{equation}\label{EOMforU} \frac{\partial}{\partial t} U^\sa(t,t_0) = -i H^\sa U^\sa(t,t_0) = -i (D_0 + V^{\sa}(t) \bigr)\,  U^\sa(t,t_0) \end{equation}

\noindent with the interaction-term
\begin{align}
V^\sa& = e\sum_{\mu=0}^3\alpha^\mu A_\mu.
\end{align}
\noindent In momentum representation, i.e. after taking the Fourier transform of the equation, $D_0$ acts as a multiplication operator by the energy $E(p) = \sqrt{\lvert\underline{p}\rvert^2 + m^2}$ and the interaction potential takes the form
\begin{align}
V^A(p,q)&= e\sum_{\mu=0}^3\alpha^\mu \widehat{A}_\mu,
\end{align}
where the $\widehat A_\mu, \, \mu=0,1,2,3,$ now act as convolution operators, i.e.
\begin{equation}\label{eqn:fourier transform A}
(\widehat A_\mu\psi)(p)=
\int_{\IR^3}\widehat A_\mu(p-q)\psi(q)\,dq,\quad p\in\IR^3,
\end{equation}
for $\psi\in\HH$ and $\widehat A_\mu$ the Fourier transform of $A_\mu$.\\

\noindent From \eqref{EOMforU} we can deduce:

\begin{equation*} \frac{\partial}{\partial t} \bigl[U^0(t_1,t) U^\sa(t,t_0)\Bigr] = -i U^0(t_1,t)\bigl[H^{\sa(t)} - H^0\bigr]  U^\sa(t,t_0) = -i U^0(t_1,t) V^{\sa}(t) U^\sa(t,t_0). \end{equation*}
\mbox{}\\
\noindent This is equivalent to the integral equation:
\newpage

\begin{equation}\label{DiracIntegralForm} 
U^\sa(t_1,t_0) = U^0(t_1,t_0) -i  \int\limits_{t_0}^{t_1} U^0(t_1,t) V^{\sa}(t) U^\sa(t,t_0) \mathrm{d}t.
\end{equation}

\noindent We can read this as a fixed-point equation for $U^\sa(t_1,t_0)$, leading to the iteration 
\begin{align} U^\sa(t_1,t_0) &= U^0(t_1,t_0) -i \int\limits_{t_0}^{t_1} U^0(t_1,t) V^{\sa}(t) U^0(t,t_0)\, \mathrm{d}t \\ 
&- \int\limits_{t_0}^{t_1} \int\limits_{t_0}^{t}U^0(t_1,t) V^{\sa}(t) U^0(t_1,t') V^{\sa}(t') U^0(t',t_0) \, \mathrm{d}t' \mathrm{d}t\;  + ... \end{align}
known as the \textit{Born series}. We need to control the \textit{odd} part  $\,U^\sa_{odd} = U^\sa_{+-} + \, U^\sa_{-+}$ of the time evolution that is responsible for ``pair creation'' and for the time evolution leaving the polarization class. The free Dirac time evolution $U^0(t_1,t_0)= e^{-iD_0(t_1-t_0)}$ is diagonal w.r.to the polarization $\HH=\HH_+ \oplus \HH_-$, so that the relevant contributions in the perturbation expansion arise from

 \begin{equation}\label{Zodd}\int U^0 [V^{\sa}_{+-} + V^{\sa}_{-+}] U^0. \end{equation} 
 
\noindent Now, for a time-independent electromagnetic potential $A \in C_c^\infty(\IR^3,\IR^4)$, Deckert et. al. define a bounded operator $Q^\sa$ with integral kernel (in momentum-space)
 
\begin{equation}\label{eqn:operator_Q}
 Q^A(p,q):=\frac{V^A_{-+}(p,q)-V^A_{+-}(p,q)}{(E(p)+E(q))}.
\end{equation}
It can be readily computed that \eqref{Zodd} can be written as
\begin{equation}\begin{split} & i \int\limits_{t_0}^{t_1} U^0(t_1,t) \bigl[V^{\sa}(t)_{+-} + V^{\sa}(t)_{-+}\bigr] U^0(t,t_0) \mathrm{d}t\\
=& \; \;  Q^\sa(t_1) U^0(t_1,t_0) -  U^0(t_1,t_0)Q^\sa(t_0) - \int\limits_{t_0}^{t_1} U^0(t_1,t) \dot{Q}^\sa(t) U^0(t,t_0) \mathrm{d}t \end{split}\end{equation}
with $Q^\sa(t) = Q^{\sa(t)}$ (\cite{DeDueMeScho}, Lemma III.6). Now the idea is to use the $Q$-operators to ``renormalize'' the time evolution in such a way that the Hilbert-Schmidt norm of the odd-part remains finite. Equivalently (in a sense that will be discussed later), we can use these operators to identify the polarization classes into which the (unaltered) Dirac time evolution is mapping. Indeed, since $Q^{\sa(t)}$ is skew-adjoint, $e^{Q^{\sa(t)}}$ is unitary and it can be shown that
\begin{equation} e^{-Q^\sa(t_1)} \, U^\sa(t_1,t_0)\, e^{Q^\sa(t_0)} \in \Ur(\HH) \end{equation}
for all $\sa \in C^\infty_c(\IR^4, \IR^4)$ and all $t_0,t_1 \in \IR$, i.e. the odd parts are Hilbert-Schmidt operators.

\noindent More generally, Deckert et. al. proof the following crucial theorem: 
\newpage

\begin{Theorem}[Identification of Polarization Classes]\label{Thm:polarizationclasses}
\mbox{}\\
Let $\sa\in C^\infty_c(\IR^4,\IR^4)$. The operators $e^{Q^{\sa(t)}}$ have the following properties:
\begin{enumerate}
\item[i)] Setting $C[\sa(t)]:= [e^{Q^{\sa(t)}} \HH_-]$ it is true that 
\begin{equation} U^\sa(t_1,t_0) \in \cu^0_{\rm res}\bigl(\HH; C[\sa(t_0)], C[\sa(t_1)]\bigr)\end{equation}
for all $t_0, t_1 \in \IR$. 

\item[ii)] For two \emph{static} potentials $\sa = (\sa_0,- \underline{\sa})$ and  $\sa' = (\sa'_0,- \underline{\sa'})  \in C_c^{\infty} (\mathbb{R}^3, \mathbb{R}^4)$ we find
\begin{equation}\label{poldeterminedbyA} 
[e^{Q^\sa} \HH_-]_{\approx_0} = [e^{Q^{\sa'}} \HH_-]_{\approx_0} \iff \; \underline{\sa} = \underline{\sa'}. 
\end{equation}
\end{enumerate}
It follows that the polarization class is completely determined by the spatial component of the vector potential $A$ at any fixed time. 
\end{Theorem}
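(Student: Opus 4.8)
The plan is to prove the two assertions separately, leaning on the integral-equation machinery sketched just before the theorem and on the characterisation of $\Ur(\HH)$ from Chapter 2.

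For part i), the key point is to show that the ``renormalised'' evolution
\[
\widetilde{U}^\sa(t_1,t_0) := e^{-Q^\sa(t_1)}\, U^\sa(t_1,t_0)\, e^{Q^\sa(t_0)}
\]
lies in $\cu^0_{\rm res}(\HH)$, i.e.\ that its odd part is Hilbert--Schmidt and its $(++)$-component has index $0$. First I would record the elementary facts that $Q^{\sa(t)}$ is bounded and skew-adjoint (immediate from \eqref{eqn:operator_Q}, since $V^\sa$ is Hermitian and the denominator $E(p)+E(q)\geq 2m$ is bounded below), so that $e^{Q^{\sa(t)}}$ is unitary and depends smoothly on $t$ (using smoothness and compact support of $\sa$). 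Next I would use the identity displayed before the theorem,
\[
i\!\int_{t_0}^{t_1}\! U^0(t_1,t)\bigl[V^\sa(t)_{+-}+V^\sa(t)_{-+}\bigr]U^0(t,t_0)\,\mathrm{d}t
= Q^\sa(t_1)U^0(t_1,t_0) - U^0(t_1,t_0)Q^\sa(t_0) - \!\int_{t_0}^{t_1}\! U^0(t_1,t)\dot Q^\sa(t)U^0(t,t_0)\,\mathrm{d}t,
\]
together with the Born series for $U^\sa$, to extract the leading obstruction to $U^\sa(t_1,t_0)$ being restricted. Conjugating by $e^{\pm Q}$ is designed precisely to cancel the first two boundary terms on the right: expanding $e^{-Q^\sa(t_1)}=\mathds{1}-Q^\sa(t_1)+\dots$ and $e^{Q^\sa(t_0)}=\mathds{1}+Q^\sa(t_0)+\dots$, the first-order odd parts of $-Q^\sa(t_1)U^0 + U^0 Q^\sa(t_0)$ cancel the boundary terms, leaving only (a) the integral of $\dot Q^\sa$, whose kernel inherits compact support and smoothness in $p,q$ from $\dot{\sa}$ and is therefore Hilbert--Schmidt after the energy-denominator estimate, and (b) higher-order Born terms, each of which carries at least two factors of $V^\sa_{\rm odd}$ or one factor of $V^\sa_{\rm odd}$ already made Hilbert--Schmidt by an energy denominator — one then uses the ideal property $\lVert ST\rVert_1\leq\lVert S\rVert_2\lVert T\rVert_2$ and $\lVert AT\rVert_2\leq\lVert A\rVert\,\lVert T\rVert_2$ to conclude convergence in $\lVert\cdot\rVert_\epsilon$. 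This is the step I expect to be the main obstacle: controlling the full perturbation series in the $\epsilon$-norm requires the careful kernel estimates of \cite{DeDueMeScho} (their Lemma III.6 and the surrounding analysis), and one must check that the smallness needed for convergence is supplied by the compact time-support of $\sa$ rather than by a smallness assumption on $\sa$ itself. Once $\widetilde U^\sa(t_1,t_0)\in\Ur(\HH)$ is established, membership in $\cu^0_{\rm res}$ (index $0$) follows by a continuity/homotopy argument: $t\mapsto \widetilde U^\sa(t,t_0)$ is norm-continuous into $\Ur(\HH)$ and equals $\mathds{1}$ at $t=t_0$, so it stays in the identity component; hence the $(++)$-component has index $0$, which by \eqref{chargetransform} gives $\charge$-preservation and thus $U^\sa(t_1,t_0)\in\cu^0_{\rm res}(\HH;C[\sa(t_0)],C[\sa(t_1)])$ with $C[\sa(t)]=[e^{Q^{\sa(t)}}\HH_-]$.

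For part ii), let $\sa,\sa'$ be static. Then $Q^\sa,Q^{\sa'}$ are time-independent and, by Lemma \ref{approx},
\[
[e^{Q^\sa}\HH_-]_{\approx_0}=[e^{Q^{\sa'}}\HH_-]_{\approx_0}
\iff P_{e^{Q^\sa}\HH_-}-P_{e^{Q^{\sa'}}\HH_-}\in I_2(\HH)\ \text{and the relative charge vanishes}.
\]
The direction ``$\Leftarrow$'' of \eqref{poldeterminedbyA} is trivial: if $\underline\sa=\underline\sa'$ then from \eqref{eqn:operator_Q} the operators $Q^\sa$ and $Q^{\sa'}$ differ only through $\sa_0,\sa'_0$, which enter $V^\sa$ via the term $e\,\alpha^0\widehat{A_0}=e\,\widehat{A_0}$ — the \emph{even} part — and hence do not contribute to $V^\sa_{\pm\mp}$ at all; so $Q^\sa=Q^{\sa'}$ and the polarization classes coincide. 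For ``$\Rightarrow$'' I would argue contrapositively: if $\underline\sa\neq\underline\sa'$, one computes $Q^\sa-Q^{\sa'}$ and shows its odd part is \emph{not} Hilbert--Schmidt, so that $e^{Q^\sa}\HH_-$ and $e^{Q^{\sa'}}\HH_-$ cannot lie in the same polarization class. Concretely, the kernel of $Q^\sa-Q^{\sa'}$ is $(V^{\sa}_{-+}-V^{\sa'}_{-+}-V^\sa_{+-}+V^{\sa'}_{+-})/(E(p)+E(q))$, governed by $\widehat{\underline{\sa}}-\widehat{\underline{\sa'}}$ together with the projections onto positive/negative energy subspaces; a stationary-phase / large-$|p|$ analysis of this kernel (exactly as in the proof of Ruijsenaars' Theorem \ref{Ruij}) shows the Hilbert--Schmidt norm diverges unless $\widehat{\underline{\sa}}=\widehat{\underline{\sa'}}$, i.e.\ $\underline\sa=\underline\sa'$. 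I would cite \cite{DeDueMeScho} Theorem/Lemma III for the precise divergence estimate rather than reproduce it. The final sentence of the theorem — that the polarization class is determined exactly by $\underline\sa(t)$ at each fixed time — is then a direct restatement: $C[\sa(t)]$ depends on $\sa(t)$ only through $Q^{\sa(t)}$, which by the above depends only on $\underline\sa(t)$, and ii) shows this dependence is injective on static potentials, hence injective pointwise in $t$.
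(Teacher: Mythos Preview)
The paper does not itself prove this theorem; it states it as a result of Deckert, D\"urr, Merkl and Schottenloher and refers to \cite{DeDueMeScho} for the proof, after sketching the ingredients (the Born series, the operators $Q^{\sa(t)}$, and the integral identity you quote). Your strategy for part i) follows that sketch faithfully and is sound in outline, including the honest caveat about controlling the full perturbation series in the $\lVert\cdot\rVert_\epsilon$-norm.

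Your argument for the ``$\Leftarrow$'' direction of part ii), however, contains a genuine error. You claim that if $\underline\sa=\underline\sa'$ then $Q^\sa=Q^{\sa'}$, on the grounds that the electric potential $e\,\sa_0$ enters $V^\sa$ only through the $\alpha^0=\mathds{1}$ term, which is ``even'' and hence ``does not contribute to $V^\sa_{\pm\mp}$ at all.'' This is false: the projections $P_\pm$ are the \emph{spectral} projections of $D_0$, not projections in the $\IC^4$-spinor index. In momentum space $\hat\sa_0$ acts by convolution and mixes distinct momenta $p\neq q$; since $P_+(p)$ and $P_-(q)$ project onto different subspaces of $\IC^4$ when $p\neq q$, the kernel $P_+(p)\,\hat\sa_0(p-q)\,P_-(q)$ is generically non-zero. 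Indeed, Corollary~\ref{Cor:badnews} later in the paper confirms that even a pure electric potential has $V^\sa\notin\mathfrak{u}_{\rm res}$ unless $\underline\nabla\Phi\equiv 0$. So $Q^\sa$ does depend on $\sa_0$, and your shortcut fails. What must actually be shown is that the $\sa_0$-contribution to $Q^\sa$, while non-zero, is itself Hilbert--Schmidt, so that changing $\sa_0$ moves $e^{Q^\sa}\HH_-$ only \emph{within} its polarization class. This requires the kernel estimate that $P_+(p)\,\mathds{1}\,P_-(q)$ carries an extra factor of order $|p-q|/E(p)$ for large $|p|$ (coming from $P_+(p)-P_+(q)$), while $P_+(p)\,\alpha_k\,P_-(q)$ does not --- precisely the analysis carried out in \cite{DeDueMeScho}.
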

\noindent In more physical terms, the theorem says that the polarization classes $C[\sa(t)]$
\begin{itemize}
\item depend only on the magnetic part of the interaction.
\item depend on $\sa$ instantaneously in time and \textbf{not} on the history of the system.
\end{itemize}
\noindent This justifies the notation $C[\underline{\sa}(t)]$ for the respective polarization classes, or even $C[\underline{\sa}]$ for a \textit{static} field $\sa$. (Note that in our convention, $C[\underline{\sa}] = C[\sa = (0, -\underline{\sa})]$.)\\
\noindent We observe that Ruijsenaars Theorem \ref{Ruij} follows immediately as a special case of \eqref{poldeterminedbyA}.\\ Furthermore, the following important result now follows as a simple corollary:
 
\begin{Theorem}[Implementability of the S-matrix]\label{Liftbarkeit}
\mbox{}\\
Let $\sa\in {\mathcal C}^\infty_c(\IR^4,\IR^4)$ and  $U^\sa_I$ the corresponding unitary time evolution in the interaction picture (cf. \S 8.1.2). Then, the scattering matrix
\begin{equation} S := \lim_{\substack{t\rightarrow \infty }} e^{itD_0}U^\sa(t , -t)e^{itD_0} = \lim_{\substack{t\rightarrow \infty }} U^\sa_I(t , -t) \end{equation} 
is in $\Ur(\HH)$ and can be implemented as a unitary operator on the (standard) Fock space. 
 \end{Theorem}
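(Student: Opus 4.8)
The plan is to deduce the statement almost entirely from Theorem \ref{Thm:polarizationclasses} [Identification of Polarization Classes] together with the Shale--Stinespring theorem, using only that $\sa$ has compact support in \emph{time}. Pick $T>0$ with $\sa(t,\cdot)\equiv 0$ for all $|t|\geq T$. For such $t$ the Hamiltonian $H^{\sa(t)}$ coincides with $D_0$, so the interacting propagator agrees with the free one outside $[-T,T]$: $U^\sa(t,T)=U^0(t,T)=e^{-iD_0(t-T)}$ and $U^\sa(-T,-t)=U^0(-T,-t)=e^{-iD_0(t-T)}$ for every $t\geq T$. I would first record this, and then feed it into the cocycle identity $U^\sa(t,-t)=U^\sa(t,T)\,U^\sa(T,-T)\,U^\sa(-T,-t)$.

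This gives, for all $t\geq T$,
\begin{equation*} U^\sa(t,-t) = e^{-iD_0(t-T)}\,U^\sa(T,-T)\,e^{-iD_0(t-T)}. \end{equation*}
Conjugating by the free evolution and using $e^{itD_0}e^{-iD_0(t-T)} = e^{iD_0T} = e^{-iD_0(t-T)}e^{itD_0}$, all the $t$-dependence cancels:
\begin{equation*} e^{itD_0}\,U^\sa(t,-t)\,e^{itD_0} = e^{iD_0T}\,U^\sa(T,-T)\,e^{iD_0T} \qquad\text{for all } t\geq T. \end{equation*}
Hence the net in the definition of $S$ is eventually constant, so the limit trivially exists (in operator norm, and a fortiori in any weaker topology) and $S = e^{iD_0T}\,U^\sa(T,-T)\,e^{iD_0T}$.

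It remains to check $S\in\Ur(\HH)$, after which the Shale--Stinespring theorem (the explicit version, equivalently Def. \ref{Def:Implementability} together with the characterization \eqref{UresCharacterization} of $\Ur(\HH)$ via condition \eqref{SS}) yields a unitary implementation on the fermionic Fock space. By Theorem \ref{Thm:polarizationclasses} i), $U^\sa(T,-T)\in\cu^0_{\rm res}\bigl(\HH;\,C[\sa(-T)],\,C[\sa(T)]\bigr)$; but $\sa(\pm T,\cdot)=0$ forces $V^{\sa(\pm T)}=0$, hence $Q^{\sa(\pm T)}=0$ by \eqref{eqn:operator_Q}, and therefore $C[\sa(\pm T)]=[e^{0}\HH_-]=[\HH_-]$, the standard polarization class in the convention of this chapter. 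Thus $U^\sa(T,-T)\in\cu^0_{\rm res}(\HH;[\HH_-],[\HH_-])\subseteq\Ur(\HH)$. The free evolution $e^{iD_0T}$ commutes with $\epsilon=P_+-P_-$, so $[\epsilon,e^{iD_0T}]=0\in I_2(\HH)$ and $e^{iD_0T}\in\Ur(\HH)$ as well. Since $\Ur(\HH)$ is a group (Def. \ref{UresDef}), the product $S=e^{iD_0T}U^\sa(T,-T)e^{iD_0T}$ lies in $\Ur(\HH)$, which completes the argument.

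I do not expect any real obstacle in the above: once Theorem \ref{Thm:polarizationclasses} is granted, the present theorem is a short corollary, the only subtlety being the harmless bookkeeping with the free evolutions and the sign conventions for the Dirac sea. The genuine difficulty has already been absorbed into the construction of the operators $Q^{\sa(t)}$ and the proof that $e^{-Q^\sa(t_1)}U^\sa(t_1,t_0)e^{Q^\sa(t_0)}\in\Ur(\HH)$; a self-contained proof would have to reprove (a special case of) that, i.e. bound the Hilbert--Schmidt norm of the odd part $U^\sa_{+-}(t,-t)+U^\sa_{-+}(t,-t)$ through the Born series \eqref{DiracIntegralForm}, exploiting that $U^0$ is diagonal and that the energy-denominator combination in \eqref{eqn:operator_Q} controls the ultraviolet behaviour of $V^\sa_{\pm\mp}$. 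That estimate, rather than anything in the statement being proved here, is where the work lies.
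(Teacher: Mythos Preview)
Your argument is correct and follows essentially the same route as the paper: use compact time-support so that $\sa(\pm T)=0$, hence $Q^{\sa(\pm T)}=0$ and $C[\sa(\pm T)]=[\HH_-]$, then apply Theorem \ref{Thm:polarizationclasses} to get $U^\sa(T,-T)\in\cu^0_{\rm res}(\HH)$ and conjugate by the (diagonal) free evolution. Your treatment is in fact slightly more explicit than the paper's, which simply writes $U_I^\sa(T,-T)=U^0(0,T)U^\sa(T,-T)U^0(-T,0)\in\cu^0_{\rm res}(\HH)$ without spelling out the stabilization of the limit.
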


 \begin{proof}Since the interaction potential has compact support and thus, in particular, compact support in time, we have $\sa(t) = 0$ and hence $e^{Q^{\sa(t)}} = \mathds{1}$ for all $\lvert t \rvert$ large enough. In particular, $[e^{{Q^\sa(t)}} \HH_-]_{\approx_0} = [\HH_-]_{\approx_0}$ for all $\lvert t \rvert$ large enough.
By the previous theorem it follows that 
\begin{equation*}U_I^\sa(T , -T) = U^0(0,T)U^\sa(T,-T)U^0(-T,0) \in \Ur^0(\HH; [\HH_-], [\HH_-]) = \cu^0_{\rm res}(\HH) \end{equation*} for all T large enough and hence $S \in \cu^0_{\rm res}(\HH) $.\\
 \end{proof}

\newpage 
\section{Second Quantization on Time-varying Fock Spaces}
We present the recipe for second quantization of the time evolution on time-varying Fock spaces according to Deckert et. al., as described in \cite{DeDueMeScho}. The main ingredient is Theorem \ref{unser abstrakter Shale Stinespring}, the abstract version of Shale-Stinespring.

\noindent Let $\sa = (\sa_{\mu})_{\mu=0,1,2,3} = (\sa_0,- \underline{\sa}) \in C_c^{\infty} (\mathbb{R}^4, \mathbb{R}^4)$
be an external field and $U^\sa(t,t')$ the corresponding Dirac time evolution.\\

\begin{itemize}
\item Let $C(t) =  C[\sa(t)] \in \pol(\HH)\slash\approx_0$ be the polarization classes identified above. By \ref{unser abstrakter Shale Stinespring}, $C(t)$ is uniquely determined by the spatial component of the vector potential at that time and
\begin{equation*}
U^\sa(t,t') \in \cu^0_{\rm res}\bigl(\HH; C(t'), C(t)\bigr), \; \forall t,t' \in \IR.\end{equation*}
\item For every $t \in \IR$, choose a Dirac sea class $\cs(t) \in \ocean(C(t))\slash \sim$.\\
For the geometric construction, we choose polarizations $W(t) \in C(t)$, instead.\\
We should demand that $\cs(t)$, respectively $W(t)$ depends only on $\sa(t)$ or even just on $\underline{\sa}(t)$. In particular, it is reasonable to demand that we stay in the initial (``standard'') Fock space, whenever the interaction is turned off.

\item We construct a family of Fock spaces $(\FF_{\cs(t)})_t$ as infinite wedge spaces over the Dirac sea classes $\cs(t)$. Equivalently, we can use the geometric construction where admissible bases are those compatible with the identity map on $W(t)$. 

\item By Theorem \ref{unser abstrakter Shale Stinespring}, we can implement $U^\sa(t_1,t_0)$ as a unitary map between the Fock spaces $\FF_{\cs(t_0)}$ and $\FF_{\cs(t_1)}$. There exists $R \in \cu(\ell)$ with $U^\sa(t_1,t_0) \cs(t_0) R = \cs(t_1)$ and therefore \begin{equation}\lop{U^\sa(t_1,t_0)} \rop R: \FF_{\cs(t_0)}\rightarrow \FF_{\cs(t_1)}\end{equation}
is a unitary map between the Fock spaces $\FF_{\cs(t_0)}$ and $\FF_{\cs(t_1)}$.

\item By Lemma \ref{Lemma:Uniquenessuptoaphase}, two right operations implementing $U^\sa(t_1,t_0)$ differ by an operator in $\cu(\ell) \cap Id + I_1(\ell)$. The induced transformation between $\FF_{\cs(t_0)}$ and $\FF_{\cs(t_1)}$ can differ by the determinant of such an operator, i.e. by a complex phase $\in \cu(1)$.\\ 
However, \textit{transition probabilities are well-defined}: if we have an ``in-state'' $\Psi^{in} \in \FF_{\cs(t_0)}$ and an ``out-state'' $\Psi^{out} \in \FF_{\cs(t_1)}$, the transition probability \begin{equation}
\lvert \langle \Psi^{out} , \lop{U^\sa(t_1,t_0)} \rop R \,\Psi^{in} \rangle \rvert^2 \end{equation}
is independent of the choice of $R \in \cu(\ell)$.\\
\end{itemize}

\noindent This is the scheme for implementation of the unitary time evolution on time-varying Fock spaces in its most general form. The formal clarity of the results (that we might or might not have been able to convey), could seduce us into too much optimism. In fact, the construction, as it stands so far, is neither practicable nor physically meaningful. 
In practice, we would rather make a ``global'' choice of Fock spaces, preferably one that depends only on the external potential, locally in time. For example, starting with any sea $\Phi \in \ocean(\HH_-)$ we can set 
\begin{equation}\cs(t) := [e^{Q^\sa(t)}\Phi] \in \ocean\bigl(C[\underline{\sa}(t)]\bigr)/\sim \end{equation}
which depends only on $\sa(t)$ at time $t$. By Thm. \ref{Thm:polarizationclasses}, $\bigl(\FF_{\cs(\sa(t))}\bigr)_t$ defines a suitable family of Fock spaces for any external field $\sa$. 
Note, however, that this choice is merely a convenient one and not motivated by physical insight.

Also note that the implementations we obtain will not readily yield a time evolution. That is, arbitrary choices of the right-operations by $R = R(t_1,t_0)$ will in general not be ``compatible'' with each other, so that for $t_2 \geq t_1 \geq t_0$ we are going to find that 
\begin{equation*}   \lop{U^\sa(t_2,t_1)} \rop{R(t_2,t_1)} \lop{U^\sa(t_1,t_0)} \rop{R(t_1,t_0)} \neq \lop{U^\sa(t_2,t_0)} \rop{R(t_2,t_0)}. \end{equation*}

\noindent This problem is discussed in Section 8.2 under the more general theme of \textit{causality}. There, it is proven that compatible choices, preserving the semi-group structure of the time evolution, are indeed possible.\\


\subsection{Particle Interpretation} 
\label{subsec:Particle Interpretation}

The most important question, however, concerns the physical content of the presented solution. It is yet unclear what \textit{physical} quantities are meaningful in the context of time-varying Fock spaces and how physical states represented by vectors in different Fock spaces can be compared. Considering our discussion up to this point, it seems natural to ask about the particle content of a given state:

\begin{quote}\textit{``How many particles and anti-particles are present at time t} ?''. \end{quote}

\noindent It is clear, though, that the sea classes $\cs(t)$ and the resulting infinite wedge spaces alone do not provide enough structure to define how the Dirac sea is filled i.e. how many electrons and positrons a given state contains. Thus, to get a well-defined answer, we have to distinguish instantaneous ``vacua'' to compare our states to. Recall that the construction of the geometric Fock space already contains, or rather requires such a distinguished state by choice of a polarization within the correct polarization class, defining the Stiefel manifold of admissible bases. In the infinite wedge spaces, there is no such preferred state and a vacuum always constitutes an additional choice. It is however \textit{convenient} to fix the various Fock spaces by choosing suitable representatives of the Dirac sea classes, i.e. seas with images in the right polarization class. These ``reference states'' then may or may not represent a physical vacuum.\\

It is usually convenient to specify the (instantaneous) vacuum states only projectively, i.e. by identifying the Dirac sea as a polarization $W(t) \in C[\underline{\sa}(t)], \; \forall t \in \IR$.
Such a polarization $W \in \pol(\HH)$ specifies a unique geometric Fock space $\FF_{\rm geom}^W$ (see Section \ref{subsubsection:generalization to arbitrary polarizations}). It also defines a unique GNS-representation of the CAR-algebra with a GNS-vacuum (see Section \ref{subsection:CAR representations}) or, more down to earth, a unique Fock-representation via the field operator $\Psi_W(f) = a(P_W f) + b^*(P_{W^\perp} f)$ (see Prop. \ref{Prop:Fock Representation}). For the infinite wedge space construction, it is not sufficient to specify the vacuum state only projectively. A fixed polarization within a polarization class does not specify an infinite wedge state or even a Fock space in a canonical way. Two seas $\Phi$ and $\Psi$ with $\im(\Phi) = \im(\Psi) = W$ can differ by a transformation $R \in \cu(\ell)$ (acting from the right) with $\Phi \sim \Psi=\Phi R$ if and only if $R \in \cu^1(\ell)$ (cf. Lemma \ref{Lemma:Uniquenessuptoaphase}). The resulting Fock spaces are however isomorphic by
$\rop R: \FF_{\cs(\Phi)} \to \FF_{\cs(\Psi)}$.\\

Without any additional insight into the fully-interacting theory and the microscopic dynamics of the Dirac sea, the only suggested choice for instantaneous vacua is given by spectral decomposition with respect to the Hamilton including the external interaction potential. That is, given an external field $\sa \in C_c^{\infty} (\mathbb{R}^4, \mathbb{R}^4)$, we denote by $P^{\sa(t)}_+$ and $P^{\sa(t)}_-$ the orthogonal projections onto the positive and negative spectral subspaces of the (static) Hamiltonian
\begin{equation*}H^\sa(t) = D_0 + e\sum\limits_{\mu=0}^3\alpha^\mu \sa_\mu(t) \Bigl\lvert_{t=const.}. \end{equation*}
We assume that $0$ is not in the spectrum of $H^\sa(t)$ or at most an isolated eigenvalue of finite-multiplicity and include the corresponding eigenspace into $P^{\sa(t)}_-(\HH)$, let's say, to get an unambiguous prescription.
Then, the following is true:

\begin{Theorem}[Fierz, Scharf 1979]
\label{Thm:Scharf Fierz}
\mbox{}\\
Under the conditions formulated above, it is true that 
\begin{equation*} P^{\sa(t)}_-\,(\HH) \in C[\underline{\sa}(t)],\; \forall t \in \IR \end{equation*}
or, equivalently, 
\begin{equation*}P^{\sa(t)}_- - P_{U^\sa(t, -\infty) \HH_-} \in I_2(\HH), \end{equation*}
where by $t'=-\infty$ we mean a large negative $t'$ outside the support of  $t \to \sa(t)$.
\end{Theorem}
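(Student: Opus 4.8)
The plan is to reduce the statement to the already-established Theorem \ref{Thm:polarizationclasses} (Identification of Polarization Classes) by identifying the spectral projection $P^{\sa(t)}_-$ with $e^{Q^{\sa(t)}}\HH_-$ up to a Hilbert-Schmidt perturbation. Fix $t$ and write $H := H^{\sa(t)} = D_0 + V$ with $V = e\sum_\mu \alpha^\mu\sa_\mu(t)$ the static interaction potential, which is bounded with smooth compactly supported kernel. The first step is to observe that since $\sa$ has compact support in time, for large negative $t'$ we have $\sa(t')=0$, so $U^\sa(t,-\infty)\HH_- = U^\sa(t,t')\HH_-$ with $\sa(t')=0$; by Theorem \ref{Thm:polarizationclasses} i) this lies in $C[\sa(t)] = [e^{Q^{\sa(t)}}\HH_-]$. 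Hence the two displayed formulations in the theorem are equivalent, and it suffices to prove $P^{\sa(t)}_-(\HH)\in C[\underline{\sa}(t)]$, i.e. $P^{\sa(t)}_- - P_{e^{Q^{\sa(t)}}\HH_-}\in I_2(\HH)$.

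Next I would analyze $Q := Q^{\sa(t)}$, the skew-adjoint operator with momentum-space kernel $Q(p,q) = \bigl(V_{-+}(p,q)-V_{+-}(p,q)\bigr)/(E(p)+E(q))$, and recall that $e^{Q}$ is unitary and $e^{Q}\HH_-$ is, to first order in $V$, the rotated negative-energy subspace. The heart of the matter is a perturbative/commutator estimate: the spectral projection $P^{\sa(t)}_-$ and the free projection $P_-$ satisfy $P^{\sa(t)}_- - P_- = -[Q, P_-] + (\text{H.S. remainder})$, because $Q$ is precisely the generator chosen so that $[\,Q,P_-\,]$ cancels the odd (off-diagonal) part of $V$ at first order — this is exactly the role $Q$ plays in \eqref{eqn:operator_Q} and in the renormalization identity of Deckert et al. One establishes this via the Riesz projection formula $P^{\sa(t)}_- = \frac{1}{2\pi i}\oint_\Gamma (z - H)^{-1}\,dz$ for a contour $\Gamma$ separating the negative from the positive spectrum (using the spectral gap assumption at $0$), expanding the resolvent $(z-H)^{-1} = (z-D_0)^{-1} + (z-D_0)^{-1}V(z-D_0)^{-1} + \cdots$, and checking that each term beyond the leading order contributes a Hilbert-Schmidt operator — which follows because $V_{+-}$ and $V_{-+}$ are Hilbert-Schmidt (smooth compactly supported kernels decaying in momentum, cf. the computation $\lVert[\epsilon,M_g]\rVert_2^2 < \infty$ in the proof of the Non-Triviality Theorem) and the energy denominators $(E(p)+E(q))^{-1}$ only improve decay. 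Simultaneously, $e^{Q}\,P_-\,e^{-Q} = P_- + [Q,P_-] + \tfrac12[Q,[Q,P_-]] + \cdots$, where again all higher commutators are Hilbert-Schmidt since $[Q,P_-]$ is Hilbert-Schmidt and $Q$ is bounded.

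Combining these two expansions gives $P^{\sa(t)}_- - e^{Q}P_- e^{-Q} = \bigl(P^{\sa(t)}_- - P_- - (-[Q,P_-])\bigr) - \bigl(e^Q P_- e^{-Q} - P_- - [Q,P_-]\bigr) \in I_2(\HH)$, since the first-order terms agree; note $P_{e^{Q}\HH_-} = e^Q P_- e^{-Q}$ as $e^Q$ is unitary. This is precisely $P^{\sa(t)}_- \approx e^{Q^{\sa(t)}}\HH_-$ in the polarization-class sense, and invoking Theorem \ref{Thm:polarizationclasses} completes the proof for each fixed $t$. The main obstacle I anticipate is the Hilbert-Schmidt bookkeeping in the resolvent expansion: one must verify uniformly on the contour $\Gamma$ that the off-diagonal blocks of products like $(z-D_0)^{-1}V(z-D_0)^{-1}V(z-D_0)^{-1}$ are trace-class or Hilbert-Schmidt and that the contour integral of their Hilbert-Schmidt norms is finite, which requires care about the behaviour of $(z-D_0)^{-1}$ near the edges $\pm m$ of the spectrum and uses that the even part of $V$ only shifts energies by a bounded amount while the odd part is genuinely Hilbert-Schmidt. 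A cleaner alternative, which I would pursue if the direct contour estimate becomes unwieldy, is to cite the Fierz–Scharf argument \cite{FS} directly for the resolvent/commutator estimate and present only the reduction to Theorem \ref{Thm:polarizationclasses} in detail, since that reduction is the genuinely new ingredient in this exposition.
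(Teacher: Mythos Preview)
The paper does not prove this theorem; it is stated as a result of Fierz and Scharf and attributed to \cite{FS}, with no proof supplied. So there is nothing in the paper to compare your argument against beyond that citation.

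Your outline contains a genuine gap. You assert that $V_{+-}$ and $V_{-+}$ are Hilbert--Schmidt (``smooth compactly supported kernels decaying in momentum, cf.\ the computation $\lVert[\epsilon,M_g]\rVert_2^2 < \infty$''), and from this that $[Q,P_-]$ is Hilbert--Schmidt and all higher commutators in the BCH expansion are as well. But this is precisely what fails in $3+1$ dimensions: Corollary~\ref{Cor:badnews} states that $V^{\sa}(t)\notin\mathfrak{u}_{\rm res}$ unless $\underline{\sa}\equiv0$, i.e.\ $[\epsilon,V]$ is \emph{not} Hilbert--Schmidt. The $L^2(S^1)$ computation you cite is one-dimensional; in three spatial dimensions the analogous integral $\int(E(p)+E(q))^{-2}\,dp$ diverges, which is exactly why gauge transformations are not implementable (Theorem~\ref{Thm:Gaugetransformations} and the Remark following it). Consequently $Q$ is not Hilbert--Schmidt either, and neither $P^{\sa(t)}_- - P_-$ nor $e^Q P_- e^{-Q} - P_-$ lies in $I_2(\HH)$; indeed if they did, $P^{\sa(t)}_-(\HH)$ would be in $[\HH_-]$, contradicting Theorem~\ref{Thm:polarizationclasses}~ii).

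What is actually true --- and what your strategy can be salvaged to show --- is that the \emph{first-order} (non-H.S.) terms in the two expansions coincide by design of $Q$, while the \emph{second-order} remainders are Hilbert--Schmidt thanks to the extra energy denominator that $Q$ carries. So the cancellation you want happens one order later than you claim: you need to prove that $[Q,[Q,P_-]]$ and the second resolvent term $\oint (z-D_0)^{-1}V(z-D_0)^{-1}V(z-D_0)^{-1}\,dz$ are in $I_2(\HH)$, not the first-order terms. This is the estimate that Fierz--Scharf (and the corresponding part of \cite{DeDueMeScho}) actually carry out, and your ``cleaner alternative'' of citing \cite{FS} directly for it is the honest route here.
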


Since we know from Thm. \ref{Thm:polarizationclasses} [Identification of Polarization classes] that the polarization class depends on the spatial part of the field only, we may also take the spectral decomposition for the Hamiltonian with $\sa^0$ set to zero. Then, we have (using $\beta^2 = \mathds{1}$ and $\lbrace \alpha_k , \beta \rbrace = 0$):
\begin{align*} (H^{\underline{\sa}})^2 = & \bigl[ -i \underline{\alpha}\cdot (\underline{\nabla} - ie \underline{A}) + \beta m \bigr]^2\\
= &   \bigl[ -i \underline{\alpha}\cdot (\underline{\nabla} - ie \underline{A})\bigr]^2 + m^2 \, \mathds{1}\\ 
\geq&  m^2 \, \mathds{1} \end{align*}
so that $\mathrm{Spec}(H^{\underline{\sa}}) \subseteq (-\infty , -m] \cup [+ m , + \infty)$, i.e. we have a nice mass-gap between the Dirac sea and the excited electron states.

The problem with both alternatives is that the vacuum is obviously \textit{not Lorentz-invariant}. A Lorentz transformation will change the external field and therefore the energy spectrum of the Hamiltonian. This would mean that a vacuum state, defined with respect to a distinct reference frame, would look like a multi-particle state for an observer in a different reference frame. It is on this basis that Scharf and Fierz finally reject the particle interpretation for a time-dependent interaction and conclude that ``the notion of particles has only asymptotic meaning" (\cite{FS}, p. 453). We will discuss this conclusion in the final chapter.

\subsection{Gauge Transformations}
Given a unitary time evolution $U^\sa(t,t')$, the polarization classes $C(t) = C[\underline{\sa}(t)]$ are determined by the spatial components of the A-field, defining the interaction potential. But this quantity is blatantly not gauge-invariant. Therefore, neither the Shale-Stinespring condition $U^\sa(t,t') \in \Ur(\HH)$, nor the procedure of second quantization on time-varying Fock spaces is gauge-invariant. This might seem rather suspicious at first, since QED, if anything, is famous for being a gauge-theory. But the right conclusion to draw is that it's unclear, how the presumed gauge invariance of the second quantized theory has to be understood. It turns out that the problem with gauge transformations is the same as with the time evolution. Smooth gauge transformations
\begin{equation}\label{eq:gaugetransformation}\mathcal{G}\ni g: \Psi(x) \rightarrow e^{i\, \Lambda_g(x)} \Psi(x),\;  \Lambda_g \in C^{\infty}_c(\IR^3, \IR)\end{equation} 
do \textit{not} satisfy the Shale-Stinespring condition $[\epsilon, g] \in I_2(\HH)$ unless they are constant, and are therefore not implementable as automorphisms on a fixed Fock space \cite{MiRa}.\\ 

\noindent More precisely, the following is true:

\begin{Theorem}[Gauge Transformations, \cite{DeDueMeScho}Thm. III.11]\label{Thm:Gaugetransformations}
\item Let $\mathcal{G}$ denote the space of smooth gauge transformations as in \eqref{eq:gaugetransformation}
\item By [Thm. \ref{Thm:polarizationclasses}, ii)] it is justified to write $[e^{Q^\sa} \HH_-] =: C[\underline{\sa}]$ etc. for the polarization classes, as they are completely determined by the spatial part of the A-field. 
\\Then, for all $g \in \mathcal{G}$ it is true that
\begin{equation}\fingbox{g = e^{i\,\Lambda_g} \in \cu^0_{\rm res}\bigl(\HH; C[\underline{\sa}], C[\underline{\sa} + \nabla \Lambda_g] \bigr)}\end{equation}
for any $\underline{\sa} \in C^\infty_c(\IR^3,\IR^3)$. 
\end{Theorem}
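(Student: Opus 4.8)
The plan is to reduce the statement to the already-established Theorem \ref{Thm:polarizationclasses} on the identification of polarization classes. The key observation is that a gauge transformation $g = e^{i\Lambda_g}$ is \emph{almost} a time evolution: conjugating the Dirac Hamiltonian $H^{\sa}$ by $g$ produces the Hamiltonian $H^{\sa + \nabla\Lambda_g}$ (up to the harmless electric-potential term), since the gauge transformation shifts the covariant derivative $\partial_\mu + ie\sa_\mu$ to $\partial_\mu + ie(\sa_\mu + \partial_\mu\Lambda_g)$. So first I would spell out this conjugation identity at the level of the one-particle Hamiltonians and hence at the level of the unitary time evolutions: for a static field $\underline{\sa}$ one has $g\, U^{\sa}(t,t')\, g^{-1} = U^{\sa'}(t,t')$ with $\underline{\sa'} = \underline{\sa} + \nabla\Lambda_g$ (strictly, one should use a time-dependent $\sa$ that is switched on and off, but by Theorem \ref{Thm:polarizationclasses} ii) only the spatial part at a fixed time matters, so working with the static picture is legitimate).

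Next I would use part i) of Theorem \ref{Thm:polarizationclasses}: for the field $\underline{\sa}$, the time evolution satisfies $U^{\sa}(t,-\infty) \in \cu^0_{\rm res}(\HH; [\HH_-], C[\underline{\sa}])$, so $C[\underline{\sa}] = [U^{\sa}(t,-\infty)\HH_-]_{\approx_0}$ for $t$ inside the support; and likewise $C[\underline{\sa}+\nabla\Lambda_g] = [U^{\sa'}(t,-\infty)\HH_-]_{\approx_0}$. Combining with the conjugation identity, $U^{\sa'}(t,-\infty)\HH_- = g\, U^{\sa}(t,-\infty)\, g^{-1}\HH_-$. Here I must be slightly careful: in the static/switching picture, $g^{-1}\HH_-$ is not literally $\HH_-$, but $g$ commutes with $\HH_\pm$ only up to a Hilbert-Schmidt correction. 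Rather than fuss with that, the cleaner route is: $g$ maps an admissible basis (or sea) of $W \in C[\underline{\sa}]$ to one whose image lies in $C[\underline{\sa}+\nabla\Lambda_g]$. Concretely, pick $W \in C[\underline{\sa}]$; I want to show $gW \in C[\underline{\sa}+\nabla\Lambda_g]$, i.e. $P_{gW} - P_{W'} \in I_2(\HH)$ for some (hence any) $W' \in C[\underline{\sa}+\nabla\Lambda_g]$, and that the relative charge vanishes. Using $P_{gW} = g P_W g^{-1}$ and the conjugation identity rewritten in terms of the $Q$-operators, $g\, e^{Q^{\sa}}\HH_- $ should differ from $e^{Q^{\sa+\nabla\Lambda_g}}\HH_-$ by a Hilbert-Schmidt perturbation of projections; this is exactly the kind of estimate packaged into Theorem \ref{Thm:polarizationclasses}. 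Finally, that $g \in \cu^0_{\rm res}$ (as opposed to some nonzero charge sector) follows because $g = e^{i\Lambda_g}$ is homotopic to the identity through the path $s \mapsto e^{is\Lambda_g}$, all of whose members lie in $\cu_{\rm res}(\HH; C[\underline{\sa}], C[\underline{\sa}+s\nabla\Lambda_g])$-type classes, so the index of the $(++)$-component cannot jump; equivalently, one invokes \eqref{chargetransform} together with continuity of the Fredholm index.

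\textbf{Main obstacle.} The delicate point is not the algebra of the gauge conjugation — that is a routine computation with covariant derivatives — but the bookkeeping of \emph{which} polarization classes the relevant operators connect, because $g$ itself does not preserve $[\HH_-]$. The honest way to handle this is to never compare $gW$ with $\HH_-$ directly, but always with a representative of $C[\underline{\sa}+\nabla\Lambda_g]$, and to let Theorem \ref{Thm:polarizationclasses} do the heavy lifting of showing that the $Q$-operator construction is compatible with gauge conjugation. I expect the crux to be verifying that $e^{-Q^{\sa+\nabla\Lambda_g}}\, g\, e^{Q^{\sa}}$ has Hilbert-Schmidt off-diagonal part — this is the gauge-covariant analogue of the estimate $e^{-Q^{\sa}(t_1)} U^{\sa}(t_1,t_0) e^{Q^{\sa}(t_0)} \in \Ur(\HH)$, and it should follow by the same Born-series analysis applied to the conjugated Hamiltonian, since gauge conjugation changes only the spatial part of the interaction and the $Q$-operator was built precisely to absorb that part. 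Once that estimate is in hand, the statement $g \in \cu^0_{\rm res}(\HH; C[\underline{\sa}], C[\underline{\sa}+\nabla\Lambda_g])$ is immediate from the definition of the restricted unitary group between polarization classes.
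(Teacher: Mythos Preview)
Your high-level strategy is right --- reduce to Theorem~\ref{Thm:polarizationclasses} via the gauge-conjugation identity for time evolutions --- but you are missing the one trick that makes the argument clean, and without it your execution stalls at exactly the point you flag (``$g^{-1}\HH_-$ is not literally $\HH_-$'').

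The paper's move is to make the gauge transformation \emph{itself} time-dependent: pick a smooth cutoff $f:\IR\to[0,1]$ with $f(0)=0$, $f(1)=1$, and set $g(t)=e^{if(t)\Lambda_g}$, so that $g(0)=\mathds{1}$ and $g(1)=g$. The conjugation identity then reads $g(1)\,U^{\sa}(1,0)\,g(0)^{*}=U^{\tilde\sa}(1,0)$ with $\tilde\sa_\mu=\sa_\mu-\partial_\mu(f(t)\Lambda_g)$, which rearranges to
\[
g \;=\; U^{\tilde\sa}(1,0)\,U^{\sa}(0,1).
\]
Both factors on the right are honest Dirac time evolutions for compactly-supported potentials, so Theorem~\ref{Thm:polarizationclasses}~i) applies to each directly, and the composition rule \eqref{eq:Urescomposition} finishes the proof in one line. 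No new estimate is required; in particular you never have to compare $g\HH_-$ with $\HH_-$.

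Your fallback route --- showing directly that $e^{-Q^{\sa+\nabla\Lambda_g}}\,g\,e^{Q^{\sa}}$ has Hilbert--Schmidt off-diagonals --- would ultimately succeed, but it amounts to re-proving the estimate inside Theorem~\ref{Thm:polarizationclasses} rather than invoking it. Also, calling this a ``Born-series analysis'' is not quite apt: $g$ is a multiplication operator, not a propagator, so there is no Dyson expansion for it; you would instead need a direct commutator estimate on $[\epsilon, e^{-Q^{\sa'}}g\,e^{Q^{\sa}}]$. The cutoff trick sidesteps all of this by turning $g$ into a composition of objects to which the existing theorem already applies.
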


\noindent Of course, this is what \textit{must} be true in order to be consistent with our previous results.\\
\begin{proof} Let $g \in \mathcal{G}$ a smooth gauge transformation, corresponding to the multiplication operator $e^{i\,\Lambda_g}$ on $L^2(\IR^3, \IC^4)$. Let $f: \IR \to [0,1]$ be a smooth cut-off function with $f(0) = 0, f(1) = 1$ and compact support in, let's say, $[0,2]$. To apply the machinery developed so far, we make $g$ time-dependent by setting $g(t) := e^{i\,f(t) \Lambda_g}$. Let $\sa(t,\underline{x}) = (0, -\underline{\sa}(\underline{x}))$ a static vector-potential. Under the gauge transformation $g$, the vector potential transforms as $\sa_\mu \to \sa_\mu - \partial_\mu (f(t)\Lambda_g)$, so that $\sa$ becomes
\begin{equation*} \tilde{\sa}(t, \underline{x}) = \bigl(-f'(t)\Lambda_g(\underline{x}) , -\underline{\sa}(\underline{x}) - f(t)\underline\nabla\Lambda_g(\underline{x})\bigr). \end{equation*}
The time evolution transforms as 
\begin{equation*} g(t_1) U^\sa(t_1,t_0) g^*(t_0) = U^{\tilde{\sa}}(t_1,t_0), \; t_0 \leq t_1 \in [0,1]. \end{equation*}
Note that the gauge transformation is fully ``turned on'' at $t=1$ and ``turned off'' at $t=0$. 
By Thm. \ref{Thm:polarizationclasses} [Identification of Polarization classes], we have
\begin{align*}& U^\sa(1,0) \in  \cu^0_{\rm res}\bigl(\HH; C[\underline{\sa}], C[\underline{\sa}]\bigr)\\
&U^{\tilde{\sa}}(1,0) \in  \cu^0_{\rm res}\bigl(\HH; C[\underline{\sa}], C[\underline{\sa} + \underline\nabla\Lambda_g]\bigr)\end{align*}
and therefore, by \eqref{eq:Urescomposition}:
\begin{equation*}  e^{i\Lambda_g} = g(1) = U^{\tilde{\sa}}(1,0)U^\sa(0,1) \in \cu^0_{\rm res}\bigl(\HH; C[\underline{\sa}], C[\underline{\sa} + \underline\nabla\Lambda_g)\bigr].\end{equation*}
\end{proof}

\noindent In the spirit of this chapter, we can therefore realize gauge transformation not as unitary operations on a Fock space, but as a transformation of the Fock spaces themselves. In this sense, the theory might become ``gauge-covariant'' rather than ``gauge-invariant'' in the naive sense. We will come back to the problem of gauge invariance in Section \ref{Sec:Gauge Invariance}.\\

\begin{Remark}[Renormalized Determinants]
\mbox{}\\
Mathematically, the gauge transformation group $\mathcal{G}$ and its representations have been studied very thoroughly. We know that smooth gauge transformations $g \in \mathcal{G}$ cannot be in $\Gl_{\rm res}(\HH)$, i.e. their odd-part w.r.to the splitting $\HH = \HH_- \oplus \HH_-$ is not in the Hilbert-Schmidt class (unless they are constant). However, one can show that they are contained in a higher Schatten class; actually, the Schatten index depends on the dimension of the space-time manifold. In \cite{MiRa} it is proven that on $d+1$-dimensional space-time, for odd $d$,
\begin{equation} \addtolength{\fboxsep}{5pt}\boxed{[\epsilon, g] \in I_{2p}(\HH)\; \forall\,g \in \mathcal{G}\, \Rightarrow p \geq (d+1)/2.}\end{equation} 
Hence, $\mathcal{G}$ embeds in a smaller group \begin{equation*}\Gl_p(\HH):= \lbrace A \in \Gl(\HH)\mid [\epsilon, A] \in I_{2p}(\HH)\rbrace \subset \Gl(\HH).\end{equation*}
$\Gl_1(\HH) = \Gl_{\rm res}(\HH)$ is good enough in 1+1-dimensions. For higher dimensions, in particular 3+1 dimensional QED, it is possible in a certain sense to generalize our construction to arbitrary Schatten index $p$. In particular,  it is possible to extend the definition of the Fredholm determinant to operators in $Id + I_p(\HH)$ by including a suitable ``renormalization'' term. Then, one can construct central extensions $\widetilde{\Gl}_p(\HH)$ of $\Gl_p(\HH)$ that reduce to $\Gres(\HH) \rightarrow \Gl_{\rm res}$ for $p=1$ and act on smooth\footnote{The action is not any more holomorphic for $p>1$, see \cite{Mi}.} sections in the dual of a generalized determinant bundle $DET_p$. See \cite{Mi} for details on these constructions. But apart from being rather tedious, this solution is unsatisfying for a different reason: The action of $\mathcal{G} \subset \widetilde{\Gl}_p(\HH)$ is \textit{not unitary}. Actually, it is known that $\widetilde{\Gl}_p(\HH)$ does not have any faithful unitary representations for $p > 1$ (see \cite{Pick}).
\end{Remark}



\newpage 
\thispagestyle{empty}
\quad 
\newpage

\chapter{The Parallel Transport of \\ Langmann \& Mickelsson}     
We have seen that the central extensions  $\Gres(\HH)\rightarrow\Gl_{\rm res}(\HH)$ and $\Ures(\HH)\rightarrow\Ur(\HH)$ carry the structure of principle fibre bundles. This allows us to equip them with a \textit{bundle-connection} as an additional geometric structure. Our aim is to use parallel transport with respect to this connection to lift the unitary time evolution to the Fock space, or at least to fix the phase of the second quantized S-matrix in a well-defined manner.
We will follow \cite{LaMi} and define a suitable connection not on $\Ures(\HH)$ but on its complexification $\Gres(\HH)$.
However, the construction restricts easily to the unitary case.\\

Recall that a connection $\Gamma$ on a principle bundle is a distribution in its tangent bundle, distinguishing tangent vectors that are called \textit{horizontal} (to the base-manifold). This distribution is complementary to the space of \textit{vertical} vector fields, defined as the kernel of $D\pi$, where $\pi$ is the projection onto the base-manifold and $D\pi$ its differential map. That is, vertical vectors are vectors along the fibres of the bundle.
A connection is a geometric structure that allows us to lift paths from the base-manifold (here: $\Ur$ or $\Gl_{\rm res}$) to the bundle in a unique way, by demanding that the tangent vectors to the lifted path are always horizontal.
These \textit{horizontal lifts} define parallel transport in the principle bundle.\footnote{If $\pi: P \rightarrow M$ is a principle bundle over $M$ with a connection, the parallel transport of $p \in P$ along a curve $\gamma$ in $M$ starting in $\pi(p)$ is the end-point of the horizontal lift of $\gamma$ with starting point $p \in P$.}  Furthermore, recall that connections on a principle bundle are in one-to-one correspondence with \textit{connection one-forms} with values in the Lie algebra of the structure group. For every connection $\Gamma$ there exists one -- and only one -- connection form with $\Gamma = \ker(\Theta)$. A nice treatment of connections and parallel transport on principle bundles can be found, for instance, in \cite{KoNo}.\\

Throughout this chapter, a little care is required with the fact that we're working on infinite-dimensional manifolds. But Banach manifolds, as the ones we're dealing with, are generally pretty well-behaved. One essential point is that on Banach manifolds (as opposed to manifolds modeled on infinite-dimensional Fr\'echet spaces, for example) the Inverse Mapping Theorem holds and thus the local calculus works almost exactly as in the finite-dimensional case.\\ 
Note, however, that we avoid all expansions in local coordinate frames, as are often used for the analogous proofs in the finite-dimensional case. Fundamentals of Differential Geometry on Banach manifolds are presented in \cite{Lang}.

\section{The Langmann-Mickelsson Connection}
\subsection{The Maurer-Cartan Forms}
Let $G$ be a Lie group. The group operation on $G$ defines smooth actions of $G$ on itself by multiplication from the left or from the right.
For $g \in G$ we denote by $L_g$ the action from the left, i.e. 
\begin{equation}L_g: G \rightarrow G, \;   L_g(h) := g\cdot h\end{equation} and by $R_g$ the action from the right 
\begin{equation}R_g: G \rightarrow G, \;  R_g(h) := h \cdot g. \end{equation}
The differentials of these maps (the push forwards) as maps between tangent-spaces are then isomorphisms:
\begin{equation}\begin{split} (L_g)_{*}\, &: T_hG \rightarrow T_{gh}G \\
(R_g)_{*}\,&: T_hG \rightarrow T_{hg}G.\end{split}\end{equation}
The corresponding pull-back maps on differential forms will be denoted by $(L_g)^{*}$ and $(R_g)^{*}$.\\
Recall that a vector field $X \in \mathfrak{X}(G)$ is called \textit{left-invariant} if $(L_g)_{*} X_h = X_{gh}, \; \forall g,h \in G$ and \textit{right-invariant} if $(R_g)_{*} X_h = X_{hg}, \; \forall g,h \in G$.\\

\noindent $G$ can also act on itself by conjugation:
\begin{equation}c_g : G  \rightarrow G, \;   c_g(h) := L_gR_{g^{-1}} h = g h g^{-1}. \end{equation}
The map \begin{equation} Ad:G \rightarrow \mathrm{GL}(\mathrm{Lie}(G)), \; g \mapsto \dot{c_g} \end{equation}
yields a natural representation of G on it's own Lie algebra called the adjoint representation. The kernel of $Ad$ equals the center of G.\footnote{In a Matrix representation, $G \hookrightarrow \Gl(V), \mathrm{Lie}(G) \hookrightarrow End(V)$ and $Ad_g(X) = g X g^{-1}$. Thus, one can think of the adjoint representation in the following way: If the elements of a vector space V (or a vector bundle with fibre V) transform under a symmetry $g \in G$ via an action of G on V, the endomorphisms of V transform by the corresponding adjoint action.}\\

\noindent The Lie algebra $\mathfrak{g}:= \mathrm{Lie}(G)$ of a Lie group $G$ is usually defined as the space of \textit{left-invariant} vector fields with the Lie bracket given by the commutator of vector fields. This space is canonically isomorphic to the tangent space above the identity, i.e. $\mathfrak{g} \cong T_eG$, for $e$, the neutral element in $G$. In the following, we make use of this identification.

\begin{Definition}[The Maurer-Cartan Forms]
\mbox{}\\
The left Maurer-Cartan form, also called the canonical left-invariant one-form, is a one-form on $G$ with values in the Lie algebra. It's defined by:
\begin{equation}\begin{split} \omega_L(g) \, :\, T_g&G \longrightarrow T_eG \cong \mathfrak{g}\,;\\
& V \longmapsto (L_{g^{-1}})_{*}\, V  \end{split}\end{equation}
Analogously, we define the right Maurer-Cartan form by:
\begin{equation}\begin{split} \omega_R(g) \, :\, T_g&G \longrightarrow T_eG \cong \mathfrak{g}\,;\\
& V \longmapsto (R_{g^{-1}})_{*}\, V  \end{split}\end{equation}
\end{Definition} 
\noindent  Obviously, the Maurer-Cartan forms are constant on loft-invariant, respectively, on right-invariant vector fields. Furthermore, $\omega_L (e) = \omega_R(e) = Id_{T_eG}$.\\

\noindent In a matrix representation, they can be written as
\begin{equation} \omega_L = g^{-1} \mathrm{d}g \;\;\text{and}\;\; \omega_R = \mathrm{d}g\, g^{-1},  \end{equation}
respectively.
We state a few less obvious properties of the Maurer-Cartan forms. 

\begin{Lemma}[Properties of the left Maurer-Cartan form]
\mbox{}\\
The left Maurer-Cartan form $\omega_L$ has the following properties:
\begin{enumerate}[i)]
\item $\omega_L$ is left-invariant, i.e. $(L_g)^{*} \omega_L = \omega_L$
\item $(R_g)^{*}\omega_L = Ad_{g^{-1}}\,  \omega_L, \,\; \forall g \in G$
\item $ d\omega_L + \frac{1}{2}[\omega_L \wedge \omega_L] = 0$
\end{enumerate}
\vspace{2mm}
\noindent The right Maurer-Cartan form $\omega_R$ has the following properties:
\begin{enumerate}[I)]
\item $\omega_R$ is right-invariant, i.e. $(R_g)^{*} \omega_R = \omega_R$
\item $(L_g)^{*}\omega_R = Ad_{g}\,  \omega_R, \,\; \forall g \in G$
\item $ d\omega_R - \frac{1}{2}[\omega_R \wedge \omega_R] = 0$
\end{enumerate}
\end{Lemma}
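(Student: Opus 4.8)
The plan is to prove the six properties of the Maurer--Cartan forms in three pairs, using the fact that each right-hand statement follows from the corresponding left-hand one by the symmetry $L \leftrightarrow R$, $g \leftrightarrow g^{-1}$. So really I only need to establish i), ii), iii), and then I) -- III) come for free. Throughout I will freely use the identification $\mathfrak{g} \cong T_eG$ and the chain rule for differentials, $(F \circ H)_* = F_* \circ H_*$, plus the elementary identities $L_g \circ L_h = L_{gh}$, $R_g \circ R_h = R_{hg}$, and $L_g \circ R_h = R_h \circ L_g$.

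For i): I would fix $g, h \in G$ and a tangent vector $V \in T_hG$, and compute directly. The pullback is $((L_g)^* \omega_L)(h)(V) = \omega_L(gh)((L_g)_* V)$. By definition of $\omega_L$ this equals $(L_{(gh)^{-1}})_* (L_g)_* V = (L_{h^{-1}g^{-1}} \circ L_g)_* V = (L_{h^{-1}})_* V = \omega_L(h)(V)$, using $L_{h^{-1}g^{-1}} \circ L_g = L_{h^{-1}g^{-1}g} = L_{h^{-1}}$. That is exactly left-invariance. For ii): similarly $((R_g)^* \omega_L)(h)(V) = \omega_L(hg)((R_g)_* V) = (L_{(hg)^{-1}})_* (R_g)_* V = (L_{g^{-1}h^{-1}} \circ R_g)_* V$. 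Now I rewrite $L_{g^{-1}h^{-1}} \circ R_g = L_{g^{-1}} \circ L_{h^{-1}} \circ R_g = L_{g^{-1}} \circ R_g \circ L_{h^{-1}}$ (left and right translations commute), and $L_{g^{-1}} \circ R_g = c_{g^{-1}}$ by the definition $c_a = L_a R_{a^{-1}}$ with $a = g^{-1}$. Hence $((R_g)^* \omega_L)(h)(V) = (c_{g^{-1}})_* (L_{h^{-1}})_* V = \mathrm{Ad}_{g^{-1}}(\omega_L(h)(V))$, which is the claim.

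For iii), the Maurer--Cartan structure equation, I would give the invariant (coordinate-free) argument rather than expanding in a chart, since the paper explicitly wants to avoid local frames on the Banach manifold. It suffices to evaluate both sides on an arbitrary pair of left-invariant vector fields $X, Y$ (these span each tangent space, and both sides are tensorial two-forms, so agreement on such fields gives equality as forms). For a left-invariant field $X$, $\omega_L(X)$ is the constant function $X_e \in \mathfrak{g}$. The invariant formula for the exterior derivative gives $d\omega_L(X,Y) = X(\omega_L(Y)) - Y(\omega_L(X)) - \omega_L([X,Y]) = 0 - 0 - [X_e, Y_e] = -[\omega_L(X), \omega_L(Y)]$, where I use that $\omega_L(Y)$, $\omega_L(X)$ are constant and that the Lie bracket of left-invariant fields is again left-invariant with value $[X_e,Y_e]$ (the very definition of the Lie-algebra bracket). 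On the other hand $\tfrac12 [\omega_L \wedge \omega_L](X,Y) = [\omega_L(X), \omega_L(Y)]$ by the standard convention for the bracket of Lie-algebra-valued one-forms evaluated on a pair of vectors. Adding these yields $d\omega_L + \tfrac12[\omega_L \wedge \omega_L] = 0$.

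Finally, for I), II), III) I would simply remark that the map $g \mapsto g^{-1}$ is a diffeomorphism of $G$ intertwining $L_g$ with $R_{g^{-1}}$, hence carrying $\omega_L$ to $\omega_R$ (up to the inversion differential at $e$, which is $-\mathrm{Id}$ and only contributes an overall sign that cancels in all three statements), so each property of $\omega_R$ is the image of the corresponding property of $\omega_L$; the sign in III) versus iii) is exactly this inversion sign. Alternatively one repeats the three computations verbatim with $L$ and $R$ swapped. The only mild subtlety -- and the step I would be most careful about -- is iii) in the infinite-dimensional setting: I should make sure the invariant exterior-derivative formula $d\eta(X,Y) = X\eta(Y) - Y\eta(X) - \eta([X,Y])$ and the convention $[\eta\wedge\eta](X,Y) = 2[\eta(X),\eta(Y)]$ are the ones in force, and that left-invariant vector fields indeed span each tangent space of the Banach Lie group $G$ (they do, by definition of the Lie group structure via $\exp$), so that testing against them determines the forms. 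No genuine analytic obstacle arises beyond bookkeeping, since everything reduces to the algebra of translations and the chain rule.
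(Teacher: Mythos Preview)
Your proof is correct and follows essentially the same route as the paper: the direct chain-rule computations for i) and ii), the invariant exterior-derivative formula tested on left-invariant fields together with tensoriality for iii), and the remark that the right-invariant case is analogous. The only minor difference is that the paper attributes the sign flip in III) to the fact that the commutator of right-invariant vector fields equals \emph{minus} the Lie bracket (defined via left-invariant fields), rather than to the differential of inversion; but you also offer the ``repeat verbatim'' alternative, which is exactly what the paper does.
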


\begin{proof} Let $\omega$ denote the left Maurer-Cartan form.
\begin{enumerate}[i)]
\item For $ V \in T_hG$ and $g \in G$ we have \begin{equation*}(L_g^{*} \, \omega_{gh}) (V) =  \omega_{gh}((L_g)_{*} V) = L_{(gh)^{-1}*} L_{g*} V = L_{h^{-1}*}L_{g^{-1}*}L_{g*} V = L_{h^{-1}*}(V) = \omega_h(V)\end{equation*}
\item $(R_g^{*}\, \omega_{hg}) (V) = \omega_h (R_{g*} V) = L_{(hg)^{-1}*}  R_{g*} V = L_{g^{-1}*} L_{h*} R_{g*} V = Ad_{g^{-1}}\,  \omega_h (V) $
\item $d\omega(X,Y) = X(\omega(Y)) - Y(\omega(X)) - \omega[X,Y],$ for $X, Y \in \mathfrak{X}(G)$. \\So, on left-invariant vector fields the identity is true, because \\
$d\omega(X,Y) = -\omega[X,Y] = -[\omega(X), \omega(Y)] = - \frac{1}{2}[\omega(X) \wedge \omega(Y)]$. But
the expression is bilinear in smooth functions (or in other words: tensorial) and thus depends on the vectors only pointwise. Therefore, the identity holds for all vector fields.
\end{enumerate}
\noindent The proofs for the right-invariant form work analogously. The different sign in $III)$ as compared to $iii)$ comes from the fact that the commutator on right-invariant vector fields equals \textit{minus} the commutator on the corresponding left-invariant vector fields which -- by convention --  defines the Lie bracket.\\
\end{proof}
\noindent As the Lie algebra of a Lie group $G$ is, by convention, usually defined as the vector space of left-invariant vector fields on $G$, the left Maurer-Cartan form is the more canonical object. However, it will turn out that the right-invariant form is better suited to our purposes.

\subsection{Connection Forms}
\noindent Now, we have all ingredients for defining a connection on the principle $\IC^{\times}$-bundle\\ 
$\Gres \xrightarrow{\; \pi \;} \Gl_{\rm res}$.
Recall that the central extension of Lie groups \eqref{GLresExtension}
\begin{equation*} 1 \longrightarrow \IC^{\times} \xrightarrow{\; \imath \;} \Gres(\HH) \xrightarrow{\; \pi \;} \Gl_{\rm res}(\HH) \longrightarrow 1\end{equation*}
induces a central extension of the corresponding Lie algebras
\begin{equation}0 \longrightarrow \IC \xrightarrow{\dot{\imath}} \tilde{\mathfrak{g}}_1 \xrightarrow{\dot{\pi}} \mathfrak{g}_1 \longrightarrow 0. \end{equation}
Furthermore, every (local) trivialization of $\Gres$ about the identity defines an isomorphism of vector spaces $ \tilde{\mathfrak{g}}_1 \cong \mathfrak{g}_1 \oplus \IC$. In particular, we have the local trivialization $\phi$ defined in \eqref{localtriv} as, arguably, the most natural choice and we use it to identify  
\begin{equation*} \dot{\phi}:\, \tilde{\mathfrak{g}}_1 \xrightarrow{\;\cong\;} \mathfrak{g}_1 \oplus \IC. \end{equation*}

\noindent With all this in our hands, a quite natural connection on $\Gres \xrightarrow{\; \pi \;} \Gl_{\rm res}$ can be readily defined.
The (left) Maurer-Cartan form $\omega_L$ on $\Gres$ is a one-form on $\Gres$ with values in the Lie algebra $\tilde{\mathfrak{g}}_1 = \mathrm{Lie}(\Gres)$. A connection, however, is defined by a one-form with values in the Lie algebra of the structure group which in our case is Lie($\IC^{\times}$) = $\IC$. Thus, we simply set $\Theta_{LM} := \prc \circ \omega_L$, where $\prc$ is the projection onto the $\IC$-component with respect to the splitting $\widetilde{\mathfrak{g}} \cong \mathfrak{g} \oplus \IC$. This is indeed a connection one-form.

\begin{Definition}[Langman-Mickelsson Connection]
\mbox{}\\
Let $\omega_L$ be the left Maurer-Cartan form on the Lie group $\Gres$.
The one-form
\begin{equation}\label{LMconnection}
\Theta_{LM} := \prc \circ \omega_L \in \Omega^1(\,\Gres; \IC)
\end{equation}
is a connection one-form on $\pi: \Gres(\HH)\rightarrow \Gl_{\rm res}(\HH).$\\ We will call it the \emph{Langmann-Mickelsson connection}.
\end{Definition}

In \cite{LaMi},  Langmann and Mickelsson use this connection for a second quantization of the S-matrix by parallel transport in the $\Gres$-bundle. We propose a slightly modified version, using the right-invariant Maurer-Cartan form, instead of the left-invariant form. The advantage will become clear when we carry out the lift of the time evolution. We prove that the one-forms defined in this way are indeed connection forms. The proof for the left-invariant form is almost identical.

\begin{Proposition}[Connection on $\Gres$]
\mbox{}\\
Let $\omega_R$ be the right Maurer-Cartan form on the Lie group $\Gres$.
Then,
\begin{equation}\label{connection}
\Theta := \prc \circ \omega_R \in \Omega^1(\,\Gres; \IC)
\end{equation}
defines a connection one-form for the principle $\IC^{\times}$ bundle $\pi: \Gres(\HH)\rightarrow \Gl_{\rm res}(\HH)$.
\end{Proposition}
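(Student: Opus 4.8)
The plan is to verify directly the two defining axioms of a principal connection one-form for the bundle $\pi:\Gres(\HH)\to\Gl_{\rm res}(\HH)$ with structure group $\IC^{\times}$: namely that (i) $\Theta$ reproduces the fundamental vector fields of the $\IC^{\times}$-action, and (ii) $\Theta$ is equivariant, which for the abelian group $\IC^{\times}$ amounts to invariance under the fibrewise action. That $\Theta$ is a genuine smooth $\IC$-valued one-form is immediate: $\omega_R$ is a smooth $\tilde{\mathfrak{g}}_1$-valued one-form on the Banach Lie group $\Gres(\HH)$, and $\prc:\tilde{\mathfrak{g}}_1\cong\mathfrak{g}_1\oplus\IC\to\IC$ is a bounded linear projection, so $\Theta=\prc\circ\omega_R$ is a smooth one-form with values in $\IC=\mathrm{Lie}(\IC^{\times})$. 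The structure group acts on the total space by $\tilde g\mapsto\tilde g\cdot\imath(\lambda)$; since $\imath(\IC^{\times})$ is central in $\Gres(\HH)$ this coincides with left multiplication by $\imath(\lambda)$, so left/right distinctions are harmless throughout.

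For axiom (i), fix $\zeta\in\IC=\mathrm{Lie}(\IC^{\times})$ and let $\zeta^{\#}$ be the induced fundamental vector field. Its value at $\tilde g$ is $\zeta^{\#}_{\tilde g}=\frac{d}{dt}\big\vert_{0}\,\tilde g\cdot\imath(e^{t\zeta})=(L_{\tilde g})_{*}\,\dot{\imath}(\zeta)$, where $\dot{\imath}(\zeta)\in T_e\Gres(\HH)=\tilde{\mathfrak{g}}_1$. Applying the right Maurer–Cartan form and using that $R_{\tilde g^{-1}}\circ L_{\tilde g}=c_{\tilde g}$ fixes $e$, we get
\begin{equation*}
\omega_R(\tilde g)\bigl(\zeta^{\#}_{\tilde g}\bigr)=(R_{\tilde g^{-1}})_{*}(L_{\tilde g})_{*}\,\dot{\imath}(\zeta)=\mathrm{Ad}_{\tilde g}\bigl(\dot{\imath}(\zeta)\bigr)=\dot{\imath}(\zeta),
\end{equation*}
the last equality because $\imath(\IC^{\times})$ is central, hence $\dot{\imath}(\IC)$ is central in $\tilde{\mathfrak{g}}_1$ and fixed by every $\mathrm{Ad}_{\tilde g}$. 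It then remains to identify $\prc\circ\dot{\imath}$. From the explicit trivialization \eqref{localtriv} and the description of $\imath$ in \eqref{GLresExtension} one computes $\phi\bigl(\imath(c)\bigr)=\phi\bigl([(\mathds{1},t)]\bigr)=\bigl(\mathds{1},\det(\mathds{1}^{-1}t)\bigr)=(\mathds{1},c)$ for any $t\in\GL$ with $\det t=c$; differentiating, $\dot{\phi}\circ\dot{\imath}$ is the inclusion $\zeta\mapsto(0,\zeta)$ of $\IC$ into $\mathfrak{g}_1\oplus\IC$, so $\prc\circ\dot{\imath}=\mathrm{id}_{\IC}$. Hence $\Theta(\zeta^{\#})=\prc\bigl(\dot{\imath}(\zeta)\bigr)=\zeta$, which is axiom (i).

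For axiom (ii), let $a=\imath(\lambda)$ with $\lambda\in\IC^{\times}$. By the Lemma on the Maurer–Cartan forms, $\omega_R$ is right-invariant: $(R_g)^{*}\omega_R=\omega_R$ for every $g\in\Gres(\HH)$, in particular for $g=a$. Since $\prc$ is a fixed linear map applied pointwise, it commutes with pullback, so
\begin{equation*}
(R_a)^{*}\Theta=(R_a)^{*}(\prc\circ\omega_R)=\prc\circ(R_a)^{*}\omega_R=\prc\circ\omega_R=\Theta .
\end{equation*}
Because $\IC^{\times}$ is abelian, $\mathrm{Ad}_{a^{-1}}=\mathrm{id}$ on $\IC$, so $(R_a)^{*}\Theta=\mathrm{Ad}_{a^{-1}}\Theta$ is exactly the required equivariance. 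Together with (i) this proves that $\Theta$ is a connection one-form, with horizontal distribution $\ker\Theta$. The only steps that are not pure formalism are the centrality of $\dot{\imath}(\IC)$ in $\tilde{\mathfrak{g}}_1$ (which kills the $\mathrm{Ad}$ in the reproducing computation) and the compatibility of the projection $\prc$, defined via $\dot{\phi}$, with $\dot{\imath}$; both are read off from the construction of $\Gres(\HH)$ in Chapter 4, and everything else is a direct application of the properties of $\omega_R$ from the preceding Lemma. The analogous claim for $\omega_L$, and hence for the Langmann–Mickelsson connection $\Theta_{LM}$ of \eqref{LMconnection}, follows in the same way: the reproducing property uses $(L_{\tilde g^{-1}})_{*}(L_{\tilde g})_{*}=\mathrm{id}$, and equivariance uses $(R_g)^{*}\omega_L=\mathrm{Ad}_{g^{-1}}\omega_L$ together with the centrality of $a$.
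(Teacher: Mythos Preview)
Your proof is correct and follows essentially the same approach as the paper: you verify the two standard axioms for a connection one-form, using right-invariance of $\omega_R$ for equivariance and centrality of $\imath(\IC^{\times})$ (so that $\mathrm{Ad}_{\tilde g}$ fixes $\dot\imath(\zeta)$) together with $\prc\circ\dot\imath=\mathrm{id}_{\IC}$ for the reproducing property. The only difference is cosmetic --- you treat axiom (i) before (ii), and you spell out the verification of $\phi\circ\imath(c)=(\mathds{1},c)$ a bit more explicitly than the paper does.
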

\begin{proof} 
For an element $x$ in the Lie algebra $\IC$ of  $\IC^{\times}$, we denote by $x^*$ the \textit{fundamental vector field}, generated by $x$ via its exponential action on the Lie group, i.e. by the flow \\
$\Theta_t(p) = p \cdot \exp(tx)$.  We need to show (cf. \cite{KoNo}, Prop. 1.1): 
\newpage

\begin{enumerate}[i)]
\item $\Theta(x^*) = x$ for all $x$ in $\mathrm{Lie}(\IC^{\times}) \cong \IC$.
\item $R_{\lambda}^* \Theta = Ad_{\lambda^{-1}} \Theta = \Theta$ for all $\lambda$ in the structure group $\IC^{\times}$.
\end{enumerate}
In our case, the structure group happens to be abelian, so that $Ad_{\lambda^{-1}}\equiv Id$.\\
ii) is trivial, because $\omega_R$ is invariant even under the right-action of the entire $\Gres$ on itself, in particular under the right-action of $\imath(\IC^{\times}) \subset \Gres$.  Thus: 
\begin{equation*}R_{\lambda}^*\, \Theta = \prc \circ R_{\lambda}^*\,\omega_R = \prc \circ  \omega_R = \Theta = Ad_{\lambda^{-1}} \Theta, \; \forall \lambda \in \IC^{\times}.\end{equation*}
For the first property, let $x \in \IC$. Then:
\begin{align*} &x^*(p) =\; \frac{d}{dt}\Bigl\vert_{t=0}\, p \cdot \imath(\exp(tx) ) = (L_p)_* \imath_*(x)\\[1.3ex]
\Rightarrow\; & \omega_R(x^*)\vert_p = (R_{p^{-1}})_* (L_p)_*\, \imath_*\, x  = \imath_*\, x\\[1.3ex]
\Rightarrow\; & \Theta(x^*)\vert_p = \prc \circ \imath_* \, x = x, \; \forall\, p \in \Gres(\HH).
\end{align*}
where we have used that $\imath$ maps into the center of $\Gres(\HH)$ and also that \\
$\Theta \circ \imath (c) = (\mathds{1} , c) \in \Gl_{\rm res} \times \IC^{\times}$, which implies $\prc \circ (\imath_*)\vert_e \, = Id_{\mathfrak{g}_1}$.\\
\end{proof}
\noindent Thus, $\Theta$ is a connection one-form and defines the connection 
\begin{equation}\Gamma_{\Theta}:= \ker{\Theta} \subset \mathrm{T}\Gres\end{equation} 
on the principle bundle $\Gres(\HH)$.\\

We can define an analogous connection on $\Ures(\HH)\rightarrow\Ur(\HH)$. However, we follow \cite{LaMi} in discussing the connection on the complexification $\Gres(\HH) \rightarrow \Gl_{\rm res}(\HH)$. Apart from being more general, this has the advantage that the the local section in $\Gres(\HH)$ is ``nicer'' which simplifies computations in local coordinates. Anyways, our construction restricts immediately to the unitary case. 


\begin{Proposition}[Curvature of the Connection]\label{Prop:curvatureSchwinger}
\mbox{}\\
The curvature $\Omega$ of the connection $\Theta$ corresponds to the Schwinger cocycle \eqref{Schwingercocycle} in the following sense: 
If $\widetilde{X}, \widetilde{Y}$ are right-invariant vector fields on $\Gres$, identified with elements of the Lie algebra $\tilde{\mathfrak{g}}_1$ of $\Gres$ then
\begin{equation} \Omega({\widetilde{X}, \widetilde{Y}}) =  c(X, Y) \end{equation}
for $X:= \mathrm{D}\pi (\widetilde{X}), Y:=\mathrm{D}\pi ( \widetilde{Y} ) \in \mathfrak{g}_1$.\footnote{The analogous computation for $\Theta_{LM}$ would yield \textit{minus} the Schwinger cocycle. In \cite{LaMi} there might be a sign error or a different convention for the cocycle might be used.}
Since $\Omega$ is a two-form and  $\mathfrak{X}(\Gres)$ is spanned (as a $C^\infty$-module) by the right-invariant vector fields, the curvature of the bundle is completely determined by this identity.
\end{Proposition}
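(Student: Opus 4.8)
The plan is to compute the curvature two-form $\Omega = d\Theta + \tfrac{1}{2}[\Theta \wedge \Theta]$ of the Langmann-Mickelsson connection and evaluate it on a pair of right-invariant vector fields. Since the structure group $\IC^\times$ is abelian, the bracket term $[\Theta \wedge \Theta]$ vanishes identically (the Lie bracket on $\mathrm{Lie}(\IC^\times) = \IC$ is zero), so the task reduces to evaluating $d\Theta(\widetilde{X},\widetilde{Y}) = d(\prc \circ \omega_R)(\widetilde{X},\widetilde{Y})$. Here I would use the invariant formula for the exterior derivative of a one-form,
\begin{equation*}
d\Theta(\widetilde{X},\widetilde{Y}) = \widetilde{X}\bigl(\Theta(\widetilde{Y})\bigr) - \widetilde{Y}\bigl(\Theta(\widetilde{X})\bigr) - \Theta\bigl([\widetilde{X},\widetilde{Y}]\bigr),
\end{equation*}
together with the fact that, on right-invariant vector fields, $\omega_R(\widetilde{X})$ is the constant function equal to $\widetilde{X}_e \in \tilde{\mathfrak{g}}_1$. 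Consequently $\Theta(\widetilde{Y}) = \prc(\widetilde{Y}_e)$ is constant, so the first two terms annihilate, leaving
\begin{equation*}
\Omega(\widetilde{X},\widetilde{Y}) = -\Theta\bigl([\widetilde{X},\widetilde{Y}]\bigr) = -\prc\bigl([\widetilde{X},\widetilde{Y}]_{\tilde{\mathfrak{g}}_1}\bigr),
\end{equation*}
where the bracket is (minus the bracket of the corresponding left-invariant fields, i.e.) the Lie bracket in $\tilde{\mathfrak{g}}_1$ with the sign convention already fixed in the excerpt's proof of the properties of $\omega_R$.

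\textbf{Key steps.} First I would make the identification $\dot\phi : \tilde{\mathfrak{g}}_1 \xrightarrow{\cong} \mathfrak{g}_1 \oplus \IC$ via the local trivialization $\phi$ of \eqref{localtriv} explicit, so that $\prc$ is literally the projection onto the second summand, and recall from Proposition \ref{Prop:cocycleisSchwinger} that under this splitting the bracket on $\tilde{\mathfrak{g}}_1$ reads $[X \oplus a, Y \oplus b] = [X,Y]_{\mathfrak{g}_1} \oplus c(X,Y)$ with $c$ the Schwinger cocycle, precisely because $\beta := \dot\tau$ (the derivative of the section $\tau$ of \eqref{tau}) is the linear splitting whose failure to be a Lie algebra homomorphism defines $c$ via \eqref{Theta}. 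Then $\prc([\widetilde{X},\widetilde{Y}]) = c(X,Y)$ up to the sign coming from the right-invariant convention. Tracking that sign carefully — noting that the bracket of right-invariant fields is \emph{minus} that of left-invariant fields (as stated in the excerpt), and that the curvature formula contributes its own minus sign in $-\Theta([\widetilde X,\widetilde Y])$ — the two sign reversals cancel and one obtains $\Omega(\widetilde{X},\widetilde{Y}) = c(X,Y)$, exactly as claimed; this also explains the footnoted remark that $\Theta_{LM}$ built from $\omega_L$ would instead give $-c(X,Y)$. Finally, since $\Omega$ is tensorial (bilinear over $C^\infty(\Gres)$) and the right-invariant vector fields span $\mathfrak{X}(\Gres)$ as a $C^\infty$-module, the values on right-invariant fields determine $\Omega$ everywhere, completing the proof.

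\textbf{Main obstacle.} The one genuinely delicate point is the bookkeeping of signs and conventions: the interplay between (i) the left-versus-right Maurer-Cartan convention, (ii) the convention that $\mathrm{Lie}(G)$ is left-invariant fields so that the bracket of right-invariant fields acquires a minus, (iii) the sign in the structure equation $d\omega_R - \tfrac12[\omega_R \wedge \omega_R] = 0$ versus the curvature formula $\Omega = d\Theta + \tfrac12[\Theta\wedge\Theta]$, and (iv) the chosen sign of the Schwinger cocycle $c$ in \eqref{Schwingercocycle}. Because the abelian structure group kills the quadratic terms, no hard analysis is needed — the content is entirely in verifying that all these signs conspire to produce $+c(X,Y)$ rather than $-c(X,Y)$, and in justifying that the formula for $d\Theta$ on (possibly only locally defined, infinite-dimensional) right-invariant fields is legitimate on the Banach Lie group $\Gres$, which it is since $\Gres$ is a Banach manifold where the usual Cartan calculus applies.
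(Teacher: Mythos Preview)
Your proposal is correct and follows essentially the same route as the paper: reduce $\Omega$ to $d\Theta$ by abelianness of $\IC^\times$, then compute $d\Theta$ on right-invariant fields and identify the $\IC$-component of the Lie bracket in $\tilde{\mathfrak{g}}_1$ with the Schwinger cocycle via the splitting $\dot\phi$. The only cosmetic difference is that the paper invokes the Maurer--Cartan structure equation $d\omega_R = \tfrac{1}{2}[\omega_R\wedge\omega_R]$ directly to get $\prc(d\omega_R(\widetilde X,\widetilde Y)) = \prc([\widetilde X,\widetilde Y])$, whereas you unfold this via the invariant formula for $d$ and track the two compensating minus signs explicitly --- which is in fact cleaner, since the paper's proof glosses over the sign bookkeeping (and even contains the typo ``left-invariant'' in its opening line).
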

\begin{proof}
Let $\widetilde{X} , \widetilde{Y}$ be left-invariant vectorfields corresponding to $(X , \lambda_1) , (Y , \lambda_2) \in \mathfrak{g}_1 \oplus \IC \cong \tilde{\mathfrak{g}}_1$, respectively. 
The curvature 2-form can be computed as $\Omega = \mathrm{d}\Theta + \frac{1}{2}[\Theta \wedge \Theta]$.
Since $\Theta$ takes values in the abelian Lie algebra $\IC$, the second summand is zero and $\Omega$ equals $\mathrm{d}\Theta$.  
\newpage
\noindent We recall from Section 3.5 and Prop. \ref{Prop:cocycleisSchwinger} that under the isomorphism $\tilde{\mathfrak{g}_1} \cong \mathfrak{g}_1 \oplus \IC$ induced by the local trivialization $\phi$, the Lie bracket on $\tilde{\mathfrak{g}}_1$ is the Lie bracket on $\mathfrak{g}_1$ plus the Schwinger-cocycle. Thus, we get:
\begin{align*} \Omega(\widetilde{X} , \widetilde{Y}) = \,& \mathrm{d}\Theta(\widetilde{X} , \widetilde{Y}) = \prc(\mathrm{d}\omega(\widetilde{X} , \widetilde{Y})) =  \prc([\omega(\widetilde{X}) , \omega(\widetilde{Y}) ] )\\[1.5ex]
= \,&  \prc( [\widetilde{X} , \widetilde{Y}] ) =   \prc \bigl( ( [X , Y] , c(X,Y) \bigr) =  c(X , Y). \end{align*}
\end{proof}

\subsection{Local Formula}
\noindent Let's compute an explicit formula for the connection w.r.to the coordinates defined by the local section \eqref{tau}.
Let $\gamma: t \mapsto [(g(t) , q(t))],\; t \in [-\epsilon, \epsilon]$ be a $C^1$-curve in $\Gl_{\rm res}(\HH)$. 
We write $[(g , q)]$ for $\gamma(0)= [(g(0), q(0)]$ and $\phi$ for the local trivialization \eqref{localtriv} on $\pi^{-1}(U)$.
Note that the connection form $\Theta$ can be expressed as:
\begin{equation*}\Theta_g = \prc(\mathrm{d}g\, g^{-1}) = \prc \circ DR_g^{-1} \circ Id_{T_g\Gres}.\end{equation*}
Thus, we compute:
\begin{align*} 
\Theta(\dot{\gamma}(0)) = & \frac{d}{dt}\Bigl\vert_{t=0} \prc \circ \phi \bigl( [g(t)g^{-1} , q(t)q^{-1}] \bigr)\\[1.5ex]
= & \frac{d}{dt}\Bigl\lvert_{t=0} \det \bigl[ (g(t)g^{-1})^{-1}_{++} (q(t)q^{-1}) \bigr]\\[1.5ex]
= & - \trace \bigl[(\dot{g}(0)g^{-1})_{++} - \dot{q}(0)q^{-1} \bigr].
\end{align*}
Writing \begin{align*}
  g (t) =
   \begin{pmatrix}
    a(t) & b(t)\\
    c(t) & d(t)
  \end{pmatrix} \; \text{and}\;   g^{-1}(t) = \begin{pmatrix}
    \alpha(t) & \beta(t) \\
    \gamma(t) & \delta(t) \end{pmatrix}\\
\end{align*}
w.r.t. the splitting $\HH = \HH_+ \oplus \HH_-$, the formula becomes 
  \begin{equation}\label{connectionformula}\addtolength{\fboxsep}{5pt} \boxed{\Theta\,=\,- \trace\bigl[ \mathrm{d}a\,\alpha + \mathrm{d}b\,\gamma - \mathrm{d}q\,q^{-1} \bigr].}\end{equation} 
The corresponding expression for the left-invariant connection can be computed analogously and equals

\begin{equation}\fingbox{ \Theta_{LM}\,=\, - \trace\bigl[ \alpha\, \mathrm{d}a + \beta\,\mathrm{d}c\ - q^{-1}\,\mathrm{d}q \bigr].}\end{equation}

\newpage
\section{Parallel Transport in the $\Gres(\HH)$-bundle}
We are actually interested in \textit{parallel transport} in the $\Gres-$bundle. The bundle-econnection defines a ``horizontal'' distribution in the tangent bundle of $\Gres(\HH)$ which gives us the notion of \textit{horizontal lifts} of paths from $\Gl_{\rm res}(\HH)$ to $\Gres(\HH)$. A horizontal lift is a path in $\Gres(\HH)$ that projects to the original path in $\Gl_{\rm res(\HH)}$ and whose tangent vector is always ``horizontal'' to the base manifold. So, if we think of the unitary time evolution as a differentiable path in $\Ur(\HH) \subset \Gl_{\rm res}(\HH)$, the horizontal lift of that path will correspond to a continuous (even differentiable) choice of implementations on the fermionic Fock space.

We can use the local expression \eqref{connectionformula} for the connection form to compute an explicit formula for parallel transport inside the domain $W$ of the local section \eqref{tau}:

\noindent Let $g(t)$ be a path in $U \subset \Gl_{\rm res}(\HH)$, $-T \leq t \leq T$, with $g(-T) = \mathds{1}$.
The lift $\tilde{g}(t) = [(g(t) , q(t))]$ in $\Gres(\HH)$ is horizontal if and only if
\begin{equation*}\trace\bigl[ \dot{a}(t)\,\alpha(t) + \dot{b}(t)\,\gamma(t) - \dot{q}(t)\,q^{-1}(t) \bigr] \equiv 0. \end{equation*}
Formally, this implies \begin{equation*}\trace\bigl( \dot{q}(t)\,q^{-1}(t)\bigr) = \trace\bigl(\dot{a}(t)\,\alpha(t) + \dot{b}(t)\,\gamma(t)\bigr).\end{equation*}
Identifying the LHS as the logarithmic derivative of $\det(q(t))$ we can write
\begin{equation*}\det(q(T)) = \exp \bigl[ \int\limits_{-T}^{T}  \trace(\dot{a}(t)\,\alpha(t) + \dot{b}(t)\,\gamma(t))\, \mathrm{d}t \bigr].\end{equation*}
Also, formally: \hspace{1.1cm}$\det(a(T)) = \exp \bigl[ \int\limits_{-T}^{T}  \trace(\dot{a}(t)\,a^{-1}(t))\, \mathrm{d}t \bigr]$.\\
Individually, the traces do not converge but put together the trace converges and gives:
\begin{equation}\label{ptformula}\addtolength{\fboxsep}{3pt}\boxed{ \det\bigl[ a^{-1}(T) q(T) \bigr] =  \exp \Bigl[ \int\limits_{-T}^{T}  \trace\bigl[ \dot{a}(t)\,( \alpha(t) - a^{-1}(t)) + \dot{b}(t)\,\gamma(t)\bigr]\, \mathrm{d}t \Bigr]. }\end{equation}
In the local trivialization $\phi$ on $W \subset \Gl_{\rm res}^0(\HH)$, this is precisely the $\IC$-component of the horizontal lift $\tilde{g}(T)$ in \eqref{localtriv}. 
In other words, parallel transport in the $\Gres(\HH)$-bundle corresponds to multiplication by the right-hand-side of \eqref{ptformula} in local coordinates. Note that the expression \eqref{connectionformula} is valid everywhere, while \eqref{ptformula} makes sense only in the neighborhood $W$ of the identity, where $a$ is invertible.

\noindent The corresponding expression for the Mickelsson-Langmann connection is 
\begin{equation}\label{ptformulaLM} \det\bigl[ a^{-1}(T)\, q(T) \bigr] = \exp\Bigl[\int\limits_{-T}^{T} \trace\bigl[ (\alpha(t) - a^{-1}(t) )\dot{a}(t) + \beta(t)\dot{c}(t)\bigr] \mathrm{d}t \Bigr]. \end{equation}

\noindent For a unitary path, these factor corresponds to the phase of the lift of $g(T)$ up to normalization (cf. Lemma \ref{Uressection}). We have said that the structure defined on $\Gres$ restricts directly to the unitary case. Still, in case someone suspects hand-waving here, we show by explicit computation that the lift of a unitary path doesn't leave
$\Ures(\HH) = \Gres(\HH) \cap \bigl(\cu(\HH) \times \cu(\HH_+)\bigr)$.\\

\begin{Lemma}[Parallel Transport stays in $\Ures$]
\mbox{}\\
Let $u(t)\, , t \in I$ be a (piecewise $C^1$) path in $\Ur \subset \Gl_{\rm res}$ and $\widetilde{u}(t)$ a horizontal lift of $u(t)$ with initial conditions $\widetilde{u}(0) = [U , r] \in \Ures(\HH) \subset \Gres(\HH)$.\\ Then $\widetilde{u}(t) \in \Ures(\HH) \; \forall t \in I$, i.e. the horizontal lift remains unitary. 
\end{Lemma}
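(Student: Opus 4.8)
The plan is to show that the horizontal lift $\widetilde u(t)$ of a unitary path $u(t)$ remains in $\Ures(\HH)$ by exploiting the uniqueness of horizontal lifts. First I would observe that $\Ures(\HH)$ is itself a principal $\cu(1)$-subbundle of $\Gres(\HH)$ sitting over $\Ur(\HH) \subset \Gl_{\rm res}(\HH)$, and that the connection form $\Theta = \prc \circ \omega_R$ restricts to a connection form on this subbundle: indeed $\omega_R$ restricts to the right Maurer--Cartan form of $\Ures(\HH)$, and the verification that $\Theta|_{\Ures}$ satisfies the two defining properties of a connection one-form (reproduction of fundamental vector fields, $\cu(1)$-equivariance) is literally the same computation as in Proposition [Connection on $\Gres$], now carried out inside $\Ures(\HH)$. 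Hence $\Theta$ restricted to $\Ures(\HH)$ determines a horizontal distribution, and this distribution is precisely $\Gamma_\Theta \cap \mathrm{T}\Ures(\HH)$, i.e. the horizontal subspaces of the sub-bundle are the intersections of the ambient horizontal subspaces with the tangent spaces of $\Ures(\HH)$.

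Given that, the argument is short. Let $\widetilde u^{\,\mathrm{unit}}(t)$ be the horizontal lift of $u(t)$ \emph{computed inside the sub-bundle} $\Ures(\HH)$ with the restricted connection, starting at $\widetilde u^{\,\mathrm{unit}}(0) = [U,r] \in \Ures(\HH)$; this exists and is unique by the standard existence/uniqueness theorem for horizontal lifts on (Banach) principal bundles. Its tangent vector at each time lies in $\Gamma_\Theta \cap \mathrm{T}\Ures(\HH) \subset \Gamma_\Theta$, so $\widetilde u^{\,\mathrm{unit}}(t)$ is \emph{also} a horizontal curve in the ambient bundle $\Gres(\HH)$ projecting to $u(t)$ and with the same initial value $[U,r]$. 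By uniqueness of horizontal lifts in $\Gres(\HH)$, it must coincide with $\widetilde u(t)$. Therefore $\widetilde u(t) = \widetilde u^{\,\mathrm{unit}}(t) \in \Ures(\HH)$ for all $t \in I$.

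Alternatively, and this is presumably the ``explicit computation'' the lemma's preamble alludes to, one can avoid invoking sub-bundle connections and argue directly in the local trivialization. Write $\widetilde u(t) = [(u(t), q(t))]$ with $q(t)$ determined by the horizontality condition, which by formula \eqref{ptformula} gives $\det[a(t)^{-1}q(t)]$ as the exponential of $\int_0^t \trace[\,\dot a(\alpha - a^{-1}) + \dot b\,\gamma\,]\,\mathrm{d}s$. One then checks that for a unitary path $u(t)$ this integrand is purely imaginary: using $u(t)^* = u(t)^{-1}$ one has $\alpha = a^*$, $\gamma = b^*$ (in block notation), and $\dot a a^* + \dot b b^* = \tfrac{d}{dt}(a a^* + b b^*) - a \dot a^* - b \dot b^* = - (a\dot a^* + b\dot b^*)$ since $aa^* + bb^* = \mathds{1}_{\HH_+}$; together with $\trace(\dot a a^{-1})$ contributing so as to make the combined trace convergent, one finds $\trace[\dot a(\alpha - a^{-1}) + \dot b\gamma]$ equals an expression whose real part vanishes (it is of the form $Z - \overline Z$ up to a convergent correction). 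Hence $|\det[a(t)^{-1}q(t)]| = |\det[a(0)^{-1}r]|$ is constant, so by Lemma \ref{Uressection} the lift stays on the $\cu(1)$-orbit through $[U,r]$, i.e.\ in $\Ures(\HH)$.

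The main obstacle is the bookkeeping with non-convergent traces: individually $\trace(\dot a\,\alpha)$, $\trace(\dot a\, a^{-1})$ etc.\ need not converge, only the combination $\trace[\dot a(\alpha - a^{-1}) + \dot b\,\gamma]$ does (this is exactly the point already flagged around \eqref{ptformula}), so the manipulation $aa^* + bb^* = \mathds{1}$ and the ``$Z - \overline Z$'' cancellation must be performed on the combined, trace-class expression rather than term by term. The conceptual route via the restricted connection sidesteps this entirely and is the cleaner argument; I would present that as the proof and relegate the explicit computation to a remark.
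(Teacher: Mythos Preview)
Your conceptual argument via the restricted connection is correct and is a genuinely different route from the paper's. One small point you gloss over: saying ``literally the same computation'' for the two axioms of a connection form is not quite enough --- you also need that $\Theta|_{T\Ures}$ takes values in $i\IR = \mathrm{Lie}(\cu(1))$, not merely in $\IC$. This is the reducibility condition and it does require a (short) check: at the identity, the horizontal space is $\dot\tau(\mathfrak{g}_1)$, and for $X \in \mathfrak{u}_{\rm res}$ anti-Hermitian one has $X_{++}$ anti-Hermitian, so $\dot\tau(X) = [(X,X_{++})] \in \tilde{\mathfrak{u}}_{\rm res}$; right-invariance then propagates this to all of $\Ures$. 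With that one line added, your argument is complete.

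The paper instead stays at the level of representatives and uses uniqueness in a more hands-on way. It writes $\widetilde u(t) = [u(t),q(t)]$, defines a second candidate $\hat u(t) := [u(t),p(t)]$ with $p(t) := (q(t)^*)^{-1}$, and shows $\hat u$ is also horizontal: one simply complex-conjugates the horizontality condition $\trace[(\dot u u^{-1})_{++} - \dot q\,q^{-1}] = 0$ and uses $u^{-1} = u^*$ together with $p\dot{p}^{-1} = -\dot p p^{-1}$ to turn it into $\trace[(\dot u u^{-1})_{++} - \dot p\,p^{-1}] = 0$. Since $\hat u(0) = \widetilde u(0)$ (the initial $r$ is unitary), uniqueness forces $\hat u \equiv \widetilde u$, i.e.\ $q(t) = (q(t)^*)^{-1}$, so $q(t)$ is unitary.

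Both proofs ultimately invoke uniqueness of horizontal lifts; yours does so at the abstract bundle level, the paper's at the level of the second component $q$. Your approach is cleaner conceptually and explains \emph{why} the lemma holds (the connection is reducible to the real form). The paper's trick has the advantage that it avoids any discussion of sub-bundle connections and works entirely with the single local formula \eqref{connectionformula} --- and, as you correctly anticipated, it sidesteps the trace-bookkeeping problem of your second approach by never splitting the convergent trace into its non-convergent summands.
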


\begin{proof} If $\; \widetilde{u}(t) = [u(t) , q(t)]$ is the horizontal lift, consider the path 
\begin{equation*}\hat{u}(t) := [u(t) , p(t)], \; \text{with} \; p(t) := (q^*(t))^{-1} \end{equation*} 
Clearly, $\pi \circ \hat{u}(t) = u(t)$ and $\hat{u}(0) = \widetilde{u}(0) =  [U , r] \in \Ures$. 
Furthermore, by \eqref{connectionformula}, using the fact that $u(t)$ is unitary:
\begin{align*} 0 =\;& \trace\bigl[(\dot{u}(t)u^{-1}(t) )_{++} -  \dot{q}(t)q^{-1}(t) \bigr]\\ 
=\;& \trace\bigl[(u(t)\dot{u^*}(t) )_{++} - q^{*-1}(t)\dot{q^*}(t) \bigr]^{c.c.}\\
=\;& \trace\bigl[(u(t)\dot{u^*}(t) )_{++} - p(t)\dot{p^{-1}}(t) \bigr]^{c.c.}
\end{align*}
And using 
\vspace*{-3mm}\begin{align*} & 0 = \frac{d}{dt}\bigl(u(t)u^*(t)\bigr) = \dot{u}(t)u^*(t) + u(t)\dot{u^*}(t) \\
&0 = \frac{d}{dt}(p(t)p^{-1}(t)) = \dot{p}(t)p^{-1}(t) + p(t)\dot{p^{-1}}(t), \end{align*}
we derive
\begin{equation*} \trace\bigl[(\dot{u}(t)u^{-1}(t) )_{++} -  \dot{p}(t)p^{-1}(t)  \bigr] = 0. \end{equation*}
Thus, $\hat{u}(t)$ is also a horizontal lift of $u(t)$ satisfying the same initial condition.
By uniqueness of the horizontal lift it follows that $\hat{u}(t) \equiv  \widetilde{u}(t)$ and thus
 $\widetilde{u}(t) \in \Ures(\HH) \subset \Gres(\HH), \; \forall t \in I.$\\
\end{proof}

\section{Classification of Connections}
A connection is always an additional geometric structure on the bundle and as such constitutes a particular choice. We have already restricted this choice to \textit{right-inavariant} connections. Note that every connection is by definition invariant under the fibre-preserving right-action of the structure group on the principle bundle, denoted by $r_c$ for $c \in \IC^{\times}$. But our condition is much stronger: we demand that it is invariant under the right-action of $\Gres$ on itself as a Lie group (denoted by a capital $R$). That is, if $\tilde{\gamma}(t)$ is a horizontal path in $\Gres$, so is $R_{\tilde{g}}\, \tilde{\gamma}(t)=  \tilde{\gamma}(t) \cdot \tilde{g}$ for any $\tilde{g} \in \Gres$.
We will see that this assures that the horizontal lift of a unitary time evolution preserves the semi-group structure, i.e. is indeed a time evolution on the fermionic Fock space. However, the condition of right-invarience does not specify a unique connection. This is evident already from our explicit construction of the Langmann-Mickelsson connection one-form. It involved a projection onto $\IC$ that was given with respect to an isomorphism $\tilde{\mathfrak{g}}_1 \rightarrow \mathfrak{g}_1 \oplus \IC$. This isomorphism however is not canonical, but induced from the local trivialization $\phi$ or, equivalently, from the local section $\sigma$. A different section will define a different isomorphism and consequently a different right-invariant connection. 

Generally, every local section $\upsilon: \Gl_{\rm res} \rightarrow \Gres$ around the identity with $\upsilon(e) = e$
 ($e$ being the neutral element in the group, i.e. the identity) induces a local trivialization of $\GresO$ by 
 \begin{equation}Ê\phi_{\upsilon}^{-1}: \Gl^0_{\rm res}(\HH) \times \IC^{\times} \to \GresO,\, (g, \lambda) \mapsto \upsilon(g) \cdot \lambda. \end{equation}
 The differential $D_e{\phi_\upsilon}: \T_e\Gres \to \T_e\Gl_{\rm res} \oplus \IC$ induces a linear isomorphism \begin{equation}\dot{\phi_\upsilon}: \tilde{\mathfrak{g}}_1 \xrightarrow{\;\cong\;} \mathfrak{g}_1 \oplus \IC,\end{equation} 
which becomes a Lie algebra isomorphism, if the Lie bracket on $\mathfrak{g}_1 \oplus \IC$ is defined as in \eqref{Lieklammer} to include the algebra 2-cocycle induced by $\dot{\upsilon}$. Then, the canonical right-invarient Maurer-Cartan form, followed by the projection defined by this isomorphism, defines a right-invariant connection one-form. Conversely, it is true that every right-invariant connection on $\Gres(\HH)$ comes from a one-form of this kind.\\ 
 
\noindent We can understand all this better from a more abstract, geometrical perspective. Remember the geometric interpretation of a connection as a distribution in the tangent-bundle $\mathrm{T}\Gres$, distinguishing \textit{horizontal} vectors complementary to $\ker(\mathrm{D}\pi)$, which is spanned by vector fields with flow-lines along the fibres of the principle bundle.\\ 
Formally, a connection is a smooth subbundle $\Gamma \subset \mathrm{T}\Gres$ satisfying $\forall\,p \in \Gres(\HH)$:

\begin{equation}\begin{split}\label{rightinvariantconnection} &i) \;\;\; \mathrm{T}_p \Gres = \ker(D_p\pi) \oplus \Gamma_p \\
& ii) \; \; (r_c)_* \Gamma_p = \Gamma_{pc}, \; \forall c \in \IC^{\times}
\end{split} \end{equation}
A right-invariant connection has to satisfy in addition\footnote{This property can actually be regarded as a strenghtened version of ii), although, strictly speaking, the actions involved are not really the same.}

\begin{equation}
iii)\;\;  (R_g)_* \Gamma_p = \Gamma_{pg}, \; \forall g \in \Gres(\HH).
\end{equation}

What this tells us is that a right-invariant connection is uniquely determined by the choice of a subspace $\Gamma_e$ complementary to $\ker D_e\pi$ in $\mathrm{T}_e{\Gres}$. Sine then, by iii), we have 
\begin{equation*} \Gamma_p = (R_p)_*\, \Gamma_e, \; \forall p \in \Gres(\HH).\end{equation*} 

\noindent We may think of such a subspace $\Gamma_e$ as an imbedding of $\mathrm{T}_e\Gl_{\rm res} \cong {\mathfrak{g}}$ into $ \mathrm{T}_e{\Gres} \cong \tilde{\mathfrak{g}}$. And such an imbedding can always be realized by the differential $D_e\upsilon: \mathrm{T}_e\Gl_{\rm res} \rightarrow \mathrm{T}_e{\Gres}$ of a local section $\upsilon: \Gl_{\rm res} \to \Gres$. This brings us back to the relationship between local sections and bundle connections discussed above for the connection one-forms.

\begin{Lemma}[Right-invariant Connections and local Sections]
\mbox{}\\ 
Let $\Gamma$ be a connection on $\Gres$, invariant under the right-action of the Lie group on itself.
Let $\upsilon: \Gl_{\rm res} \rightarrow \Gres$ be a local section around the identity with $\upsilon(e) = e$ and 
$D_e\upsilon (\mathrm{T}_e\Gl_{\rm res}) = \Gamma_e$. Then $\Gamma$ corresponds to the connection one-form $\Theta_{\upsilon} := \prc^{\upsilon} \circ \omega_R$. 
\end{Lemma}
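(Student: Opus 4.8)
The plan is to show that the connection $\Gamma$ and the connection one-form $\Theta_\upsilon = \prc^\upsilon \circ \omega_R$ define the same horizontal distribution, i.e.\ $\ker(\Theta_\upsilon)_p = \Gamma_p$ for every $p \in \Gres(\HH)$. Since both sides are right-invariant under the $\Gres$-action, it suffices to verify this at the identity $e$, and then transport by $(R_p)_*$; the right-invariance of $\omega_R$ (Lemma, property I) guarantees $\ker(\Theta_\upsilon)_p = (R_p)_* \ker(\Theta_\upsilon)_e$, and the hypothesis on $\Gamma$ gives $\Gamma_p = (R_p)_* \Gamma_e$.

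First I would identify what $\Theta_\upsilon$ does at $e$. By definition $\omega_R(e) = \mathrm{Id}_{\T_e\Gres}$, so $(\Theta_\upsilon)_e = \prc^\upsilon$, the projection onto the $\IC$-summand with respect to the splitting $\tilde{\mathfrak{g}}_1 \xrightarrow{\dot\phi_\upsilon} \mathfrak{g}_1 \oplus \IC$ induced by the local section $\upsilon$. Hence $\ker (\Theta_\upsilon)_e = \dot\phi_\upsilon^{-1}(\mathfrak{g}_1 \oplus 0) = D_e\upsilon(\T_e\Gl_{\rm res})$, using that $\phi_\upsilon^{-1}(g,\lambda) = \upsilon(g)\cdot\lambda$ so that the $\mathfrak{g}_1$-directions at the identity are exactly the image of $D_e\upsilon$. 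But by hypothesis $D_e\upsilon(\T_e\Gl_{\rm res}) = \Gamma_e$, so $\ker(\Theta_\upsilon)_e = \Gamma_e$, as desired.

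To finish I would check that $\Theta_\upsilon$ is genuinely a connection one-form, so that its kernel is a legitimate connection and the identification above is meaningful: this is the same verification as in Proposition ``Connection on $\Gres$'', using that $\omega_R$ is invariant under the full right-action (hence in particular under $r_c$ for $c \in \imath(\IC^\times)$), and that $\prc^\upsilon \circ (\imath_*)|_e = \mathrm{Id}$ because $\upsilon(e)=e$ forces $\dot\phi_\upsilon$ to restrict to the identity on the central $\IC$. Then the argument of the first paragraph applies: both $\ker(\Theta_\upsilon)$ and $\Gamma$ are right-$\Gres$-invariant subbundles of $\T\Gres$ agreeing at $e$, hence agreeing everywhere, so $\Gamma = \Gamma_{\Theta_\upsilon}$.

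The main obstacle is purely bookkeeping: making precise how the local trivialization $\phi_\upsilon$ (defined at the group level by $\phi_\upsilon^{-1}(g,\lambda) = \upsilon(g)\lambda$) induces the Lie-algebra isomorphism $\dot\phi_\upsilon$ and hence the projection $\prc^\upsilon$, and confirming that under this identification the ``horizontal'' summand $\mathfrak{g}_1$ is exactly $\operatorname{im}(D_e\upsilon)$ rather than something twisted by the cocycle. Once the diagram $\upsilon \rightsquigarrow \phi_\upsilon \rightsquigarrow \dot\phi_\upsilon \rightsquigarrow \prc^\upsilon$ is set up carefully — essentially the content already sketched in \S4.1.2 and \S3.5 for the particular section $\sigma$ — the rest is immediate.
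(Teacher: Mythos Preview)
Your proposal is correct and follows essentially the same approach as the paper: identify $\ker(\Theta_\upsilon)_e = \im(D_e\upsilon) = \Gamma_e$ via the definition of $\prc^\upsilon$, then propagate by right-invariance. The paper's proof is a terse version of exactly this, omitting the verification that $\Theta_\upsilon$ is a connection one-form (which you rightly note is the same computation as in the earlier Proposition).
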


\begin{equation*}
\begin{xy}
  \xymatrix{
\upsilon: \Gl_{\rm res} \to \Gres \ar[d]_{\mathrm{D}\upsilon} \ar@{<->}[r] & \phi_{\upsilon}: \Gres \to \Gl_{\rm res} \times \IC^{\times} \ar[d]^{\dot\Phi_\upsilon}   \\
 \mathrm{T}_p \Gres \cong \ker(D_p\pi) \oplus D_e\upsilon (\mathrm{T}_e\Gl_{\rm res})\ar[d]\ar@{<-->}[r]     & \tilde{\mathfrak{g}} \cong \mathfrak{g} \oplus \IC\ar[d] \\
 \Gamma\bigl\lvert_p = (R_p)_* D_e\upsilon (\mathrm{T}_e\Gl_{\rm res})  \ar@{<->}[r] & \Theta^{\upsilon} = \mathrm{pr}^{\upsilon}_{\IC} \circ \omega_R
  }
\end{xy}
 \end{equation*}

\begin{proof} 
Note that $\im(D_e\upsilon) = \ker(\prc^{\upsilon})$. By assumption, $X \in \mathrm{T}_e\Gres$ is horizontal, i.e. $X \in \Gamma_e$, if and only if $X \in \im(D_e\upsilon)$. And this holds,  if and only if $X  \in \ker\bigl(\prc^\upsilon\bigl) =  \ker\bigl(\prc^\upsilon \circ \omega_R(e)\bigl)$. Since both, the distribution $\Gamma$ and the kernel of $\prc^{\upsilon} \circ \omega_R$ are invariant under the right-action of $\Gres$ on itself, the identity holds everywhere.\\
\end{proof}

\begin{Theorem}[Uniqueness of the Connection]\label{Thm:UniquenessofConnection}
\mbox{}\\ 
The connection $\Gamma_\Theta$ defined in \eqref{connection} is the unique connection on $\Gres(\HH)$ which is invariant under the right-action of the Lie group on itself and whose curvature equals the Schwinger cocycle $c$ in the sense of Prop. \ref{Prop:curvatureSchwinger}.
Horizontal lift from $\mathfrak{g}_1 \cong \mathrm{T}_e\Gl_{\rm res}$ to $\tilde{\mathfrak{g}}_1 \cong \mathrm{T}_e\Gres$ with respect to this connection, corresponds to the second quantization $\mathrm{d}\Gamma$, defined by normal ordering, in the Fock representation on $\FF = \bigwedge \mathcal{H}_+ \otimes  \bigwedge \mathcal{C (H_-)}$.
\end{Theorem}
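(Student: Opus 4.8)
The plan is to prove the two assertions of the theorem separately, since they are of a rather different nature. The uniqueness statement is a purely differential-geometric fact about right-invariant connections on the principal $\IC^{\times}$-bundle $\Gres(\HH)\to\Gl_{\rm res}(\HH)$, while the identification with $\mathrm{d}\Gamma$ is a computation comparing the horizontal-lift prescription to the normal-ordered second quantization in the Fock representation.

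For the \emph{uniqueness} part I would argue as follows. By the Lemma ``Right-invariant Connections and local Sections'' just proved, every right-invariant connection on $\Gres$ arises as $\Theta_\upsilon=\prc^{\upsilon}\circ\omega_R$ for some local section $\upsilon$ with $\upsilon(e)=e$, equivalently from a choice of complement $\Gamma_e\subset\mathrm{T}_e\Gres=\tilde{\mathfrak{g}}_1$ to $\ker(D_e\pi)=\dot{\imath}(\IC)$; such a complement is the same as a linear splitting $\beta:\mathfrak{g}_1\to\tilde{\mathfrak{g}}_1$ of $\dot{\pi}$. Two such splittings $\beta,\beta'$ differ by a linear map $\mu:\mathfrak{g}_1\to\IC$, and by the computation preceding Theorem \ref{CEalgebraTriviality} the associated Lie-algebra cocycles satisfy $c'(X,Y)=c(X,Y)-\mu([X,Y])$. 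Now the curvature of $\Theta_\upsilon$ is, by the proof of Prop. \ref{Prop:curvatureSchwinger}, exactly the cocycle $c'$ determined by $\beta'=D_e\upsilon$. Hence demanding that the curvature equal the Schwinger cocycle $c$ of \eqref{Schwingercocycle} forces $\mu([X,Y])=0$ for all $X,Y$, i.e. $\mu$ vanishes on $[\mathfrak{g}_1,\mathfrak{g}_1]$. Since $\mathfrak{g}_1$ is perfect (the commutator subalgebra is dense, and $\mu$ is continuous — this needs the explicit bracket structure of $\mathfrak{g}_1$, which one can see from the fact that the off-diagonal blocks $I_2(\HH_+,\HH_-)$, $I_2(\HH_-,\HH_+)$ are generated by commutators, just as the $\mathfrak{u}_{\rm res}$-computation in the non-triviality proof shows), $\mu\equiv0$, so $\beta'=\beta$ up to the central part and the connections coincide. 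I would spell out the perfectness argument carefully since that is where the real content sits.

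For the \emph{identification with $\mathrm{d}\Gamma$} I would proceed concretely. Horizontal lift from $\mathfrak{g}_1\cong\mathrm{T}_e\Gl_{\rm res}$ to $\tilde{\mathfrak{g}}_1\cong\mathrm{T}_e\Gres$ with respect to $\Gamma_\Theta$ is by definition the splitting $\beta$ whose image is $\ker(\Theta_e)$; in the local coordinates $\dot\phi:\tilde{\mathfrak{g}}_1\xrightarrow{\cong}\mathfrak{g}_1\oplus\IC$ of \eqref{localtriv} this is $X\mapsto(X,0)$, i.e. the lift picking out the ``zero $\IC$-component''. On the other hand, $\Gres(\HH)$ acts on $DET$ (Prop. \ref{GresDET}) and hence on $\FF_{\rm geom}\cong\FF=\bigwedge\HH_+\otimes\bigwedge\mathcal{C}(\HH_-)$, and by the Theorem ``Generators of Unitary Groups'' together with the relation \eqref{expdgamma} the Lie-algebra action of $\tilde{\mathfrak{u}}_{\rm res}$ (resp. $\tilde{\mathfrak{g}}_1$) on $\FF$ sends $(X,0)$, in exactly this trivialization, to $\mathrm{d}\Gamma(X)=\,:X\Psi^*\Psi:$, because $:X\Psi^*\Psi:$ is the unique second quantization of $X$ with vanishing vacuum expectation value, and ``vanishing $\IC$-component in the local trivialization $\phi$'' is precisely the condition $\langle\Omega,\cdot\,\Omega\rangle=0$ (this last point is what ties the $\det(a^{-1}q)$-coordinate to the vacuum normalization; compare the normalization constant $N=\sqrt{\det(1-U_{+-}U^*_{-+})}$ in the explicit Shale-Stinespring formula \eqref{transformedvacuum}). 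So the horizontal lift and $\mathrm{d}\Gamma$ agree as maps $\mathfrak{g}_1\to\tilde{\mathfrak{g}}_1$ acting on $\FF$. I would make the dictionary between ``$\IC$-component of $\dot\phi$'' and ``vacuum expectation value of the second-quantized operator'' fully explicit, since that is the crux of the matching.

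The main obstacle is this last dictionary: one must check that the particular isomorphism $\dot\phi:\tilde{\mathfrak g}_1\to\mathfrak g_1\oplus\IC$ coming from the section $\tau:A\mapsto[(A,A_{++})]$ is the one under which the central $\IC$-direction corresponds, in the Fock representation, to the scalar operators $c\cdot\mathds{1}_\FF$ and the splitting $X\mapsto(X,0)$ corresponds to the normal-ordered lift. Concretely this amounts to differentiating the action $\mu^*$ of Prop. ``Action of $\Gres$ on $\FF_{\rm geom}$'' at the identity along $\tau(e^{tX})$ and recognizing the result as $:X\Psi^*\Psi:$ rather than $X\Psi^*\Psi-\trace(X_{--})\mathds{1}$ or any other constant shift; equivalently, checking $\langle\Omega,\tfrac{d}{dt}|_0\mu^*_{\tau(e^{tX})}\Omega\rangle=0$. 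Once that vacuum-normalization bookkeeping is pinned down — using that $\tau(\mathds{1})=\mathds{1}$ and that $\tau$ is the section whose cocycle is the Schwinger cocycle (Prop. \ref{Prop:cocycleisSchwinger}) — the rest is assembling the pieces already in hand.
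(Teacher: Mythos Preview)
Your uniqueness argument is exactly the paper's: right-invariant connections correspond to splittings $\beta:\mathfrak{g}_1\to\tilde{\mathfrak{g}}_1$, two such differ by a linear $\mu:\mathfrak{g}_1\to\IC$, and the curvature changes by $\mu([\cdot,\cdot])$. You are right that one then needs $\mu\!\restriction_{[\mathfrak{g}_1,\mathfrak{g}_1]}=0\Rightarrow\mu=0$; the paper simply stops at ``the curvature differs by $\mu([\cdot,\cdot])$'' and does not spell this out. Your instinct to invoke perfectness of $\mathfrak{g}_1$ is correct, but your pointer to the non-triviality proof is misplaced: that argument produces $X,Y$ with $[X,Y]=0$ and $c(X,Y)\neq0$, which says nothing about the span of commutators. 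Perfectness actually follows easily from the block structure: any off-diagonal Hilbert--Schmidt matrix is $[P_+,A]$ for suitable $A\in\mathfrak{g}_1$, and the diagonal blocks lie in $\mathcal{B}(\HH_\pm)$, which is known to equal its own commutator span in infinite dimensions.

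For the identification with $\mathrm{d}\Gamma$ the paper takes a shorter route than you do. Rather than setting up the dictionary between ``zero $\IC$-component in the trivialization $\dot\phi$'' and ``zero vacuum expectation value'' via the explicit action $\mu^*$ on $\FF_{\rm geom}$, the paper simply observes that $\mathrm{d}\Gamma$ is \emph{also} a splitting $\mathfrak{g}_1\to\tilde{\mathfrak{g}}_1$ (via the Fock representation of $\Ures$) whose associated cocycle is the Schwinger term, by Proposition~\ref{Schwingerinrepr}. The uniqueness just proven then forces $\mathrm{d}\Gamma$ to coincide with the horizontal lift. This bypasses precisely the vacuum-normalization bookkeeping you flag as the main obstacle; your direct computation would work too, but it is more labor and reproves what the uniqueness already gives for free.
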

\begin{proof}
Recall that the connection $\Gamma_{\Theta}$ comes from the local trivialization defined by the section $\tau$ \eqref{tau}.  If we take a connection $\Gamma'$, different from $\Gamma_{\Theta}$ but also invariant under the right-action of $\Gres$, the previous Lemma tells us that it comes from a local section $\upsilon$ with $D_e\upsilon \neq D_e\tau$. From the discussion of central extensions of Lie algebras in \S 3.5. we know that this means that the Lie algebra cocycle $c$ corresponding to $\upsilon$ differs from the Schwinger cocycle, corresponding to $\tau$, by a homomorphism (a ``coboundary'' in the sense of cohomology) 
\begin{equation*} \mu  = D_e\upsilon  - D_e\tau = \dot{\upsilon} - \dot{\tau}: \mathfrak{g}_1 \rightarrow \IC\end{equation*}  i.e. $c(X,Y) = c(X,Y) - \mu([X,Y])$ for $X,Y \in \mathfrak{g}_1$. Therefore, in the sense of Prop. \ref{Prop:curvatureSchwinger}, the curvature of the connection $\Gamma'$ differs from the Schwinger cocycle by $\mu([\cdot,\cdot])$.\\

Finally, comparison with \eqref{Schwingerinrepr} tells us that under the isomorphism \eqref{FgeomtoFock}, horizontal lift w.r.to the connection $\Gamma_{\Theta}$ corresponds to the second quantization prescription $\mathrm{d}\Gamma$ on\\ $\FF = \bigwedge \mathcal{H}_+ \otimes  \bigwedge \mathcal{C (H_-)}$, since in both cases, the resulting cocycle is the Schwinger-term.\\
\end{proof}

\noindent In this sense, the connection we have defined is ``unique''. Of course, it's in no way necessary to demand that the curvature of the connection equals the Schwinger cocycle. We can just agree that it's nice if it does. 
\noindent Also note that the arguments used in the proof of Thm. \ref{Thm:UniquenessofConnection}, together with the non-triviality of the Schwinger cocycle, prove that there exists no flat right-invariant connection on $\Gl_{\rm res}(\HH)$.

\chapter{Geometric Second Quantization}

Parallel Transport in the $\Gres(\HH)$-bundle was suggested by E.Langmann and J.Mickelsson as a method to fix the phase of the second quantized scattering matrix in external-field QED. We will call this method \textit{Geometric Second Quantization}. The bundle connection allows us to lift -- in a unique way -- paths from the base manifold $\Ur(\HH)$ to the central extension $\Ures(\HH) \subset \Gres(\HH)$ that carries the information about the geometric phase. It is important that these horizontal lifts are determined by the connection, i.e. the geometric structure of the bundle only. We will prove that this fact ensures that the geometric second quantization is \textit{causal}, i.e. preserves the causal structure of the one-particle Dirac theory.\\

We present the method of geometric second quantization in a more general setting, as a method of second quantization of the whole unitary time evolution. However, the time evolution will always require \textit{renormalization} and the physical significance of the renormalized time evolution is unclear. We will discuss this problem in the final chapter of this work.\\

One might think that the phase of the S-matrix is of little relevance in a quantum theory, or simply a consequence of the gauge-freedom in QED. But this is not the case. In mathematically rigorous formulations of the theory, the phase of the second quantized scattering matrix does appear as a relevant quantity and it seems to be the ill-definedness of this quantity that leads to (additional) divergences in perturbation theory.
It is well known that the \textit{vacuum polarization} in QED is ill-defined and requires various renormalizations to be made finite (see \cite{Dys} for a very ``honest'' computation). One reason for this is that the current density, usually defined as
\begin{equation}\label{currentdensitiyOLD} j^\mu(x) \, = e \overline{\Psi}(x) \gamma^\mu \Psi(x), \end{equation}
is not a well-defined object in the second quantized theory.
The better definition can be given in terms of the second quantized scattering operator $\mathbf{S}$ by 
\begin{equation}\label{currentdesity}\fingbox{ j^\mu(x) := i \mathbf{S}^* \frac{\delta}{\delta \sa_\mu(x)} \mathbf{S}[\sa]}\end{equation}
which equals $e : \overline{\Psi}(x) \gamma^\mu \Psi(x):$ in first order perturbation theory (see \cite{Scha}, \S 2.10). Herby, $\mathbf{S}[\sa]$ denotes the map sending $\sa \in C^\infty_c(\IR^4, \IR^4)$ to the second quantized scattering operator in $\Ures(\HH)$, corresponding to the interaction defined by $\sa$. The current-density itself has to be understood as an \textit{operator-valued distribution}. The \textbf{vacuum-polarization} is then well-defined as the vacuum expectation value
\begin{equation*} \addtolength{\fboxsep}{5pt}\boxed{\bigl\langle \Omega, j^\mu(x) \Omega \bigr\rangle = i \, \bigl\langle \mathbf{S}\, \Omega, \frac{\delta\mathbf{S}}{\delta A_\mu(x)}\, \Omega \bigr\rangle} \end{equation*}
or, more precisely, as a distribution evaluated at the test-function $A_1$,
\begin{align*}  \bigl\langle \Omega, j[A](A_1) \Omega \bigr\rangle = & i\, \frac{\partial}{\partial \epsilon}\bigl\lvert_{\epsilon=0}  \bigl\langle \Omega, S^{-1}(A) S(A + \epsilon A_1)\Omega \bigr\rangle\\
= & i\, \frac{\partial}{\partial \epsilon}\bigl\lvert_{\epsilon=0}  \log \bigl\langle \Omega, S^{-1}(A) S(A + \epsilon A_1) \Omega \bigr\rangle.
\end{align*}
Here, the phase of the S-operator enters explicitely. If we separate the phase-freedom, we find \begin{equation}
\mathbf{S}[\sa] = \tilde{\mathbf{S}}[\sa] e^{i\varphi[\sa]} \Rightarrow  \frac{\delta\mathbf{S}}{\delta A_\mu(x)} = i  \frac{\delta \varphi}{\delta A_\mu(x)} \mathbf{S} + e^{i\varphi}  \frac{\delta\tilde{\mathbf{S}}}{\delta A_\mu(x)} \end{equation}
and for the vacuum-polarization:
\begin{equation}\label{eq:Strom von Phase}\bigl\langle \Omega, j^\mu(x) \Omega \bigr\rangle = i \Bigl[\bigl\langle \Omega, i  \frac{\delta\, \varphi[\sa]}{\delta A_\mu(x)} \,\Omega \bigr\rangle + \bigl\langle \tilde{\mathbf{S}}\, \Omega, \frac{\delta \tilde{\mathbf{S}}}{\delta A_\mu(x)}\, \Omega \bigr\rangle \Bigr] \end{equation}
The second term on the right-hand side is well-defined, but the first term obviously requires a well-defined prescription for the phase of the S-matrix. We will try to provide this now.

\section{Renormalization of the Time Evolution}
Our motivation for studying the parallel transport is the second quantization of the Dirac time evolution. Given an external field $\sa= (\sa_0,- \underline{\sa}) \in C_c^{\infty} (\mathbb{R}^4, \mathbb{R}^4)$,
we have the corresponding unitary time evolution $U^\sa(t, t')$. It is usually more convenient to study the time evolution in the \textit{interaction-picture} which is related to $U^\sa(t, t')$ by
\begin{equation} U^\sa_I(t,t') = U^0(0,t)\,U^\sa(t, t')\,U^0(t',0); \;  t,t' \in \IR.\end{equation}
A \textit{second quantization} of the time evolution between $t_1$ and $t_2 $ corresponds to a \textit{lift} of the path 
\begin{equation}\label{Upath} s \mapsto U^\sa_I(t_1+s, t_1), \; s \in [0 , t_2-t_1]\end{equation}
to the group $\Ures(\HH)$ that acts on the Fock space. This would provide a well-defined prescription for the implementation of the time evolution on the Fock space, including phase. In particular, since the interaction has time-support contained in some compact interval $[-T, T]$ for $T$ large enough, 
\begin{equation}\label{Spath}
t \mapsto U^\sa_I(t , -T), \; t \in [-T, T] \end{equation} 
is a path from the identity $\mathds{1}_{\HH}$ to the S-Matrix $S= U^\sa_I(T, -T) = U^\sa_I(\infty , -\infty)$.\\

\noindent Now a bundle connection on $\Ur(\HH)$ or $\Gres(\HH)$, respectively, as introduced in the previous chapter,  does precisely that: it defines unique lifts of (smooth) paths from $\cu_{\rm res}(\HH)$ to the principle bundle $\Ures(\HH)$. However, as the theorem of Ruijsenaars (see Thm. \ref{Ruij} and especially Thm. \ref{Thm:polarizationclasses}) tells us, the paths \eqref{Upath} are typically NOT in $\Ur(\HH)$; in fact they will leave $\Ur(\HH)$ as soon as the spatial component $\underline{\sa}$ of the interaction potential becomes non-zero. Therefore, in order to be able to apply the method of parallel transport, Langmann and Mickelsson introduced a \textit{renormalization} of the time evolution, such that the transformed time evolution $U^\sa_{ren}(t , t')$ stays in $\Ur(\HH)$, for all $t,t' \in \IR$ and such that $U^\sa_{ren}(T , -T) = U^\sa_I(T , T) = S$.\\

\noindent Concretely, they prove the following:

\begin{Theorem}[Langmann,Mickelsson 1996]
\mbox{}\\
Let $\sa\in {\mathcal C}^\infty_c(\IR^4,\IR^4)$ a 4-vector potential and $U^\sa(t,t')$ the corresponding Dirac time evolution. There is a family of unitary time evolutions $\T_t(\sa), t \in \mathbb{R}$ such that the modified time evolution 
\begin{equation*}\T_t^{-1}(\sa)\, U^\sa (t,t')\, \T_{t'}\;  \end{equation*}
belongs to $\Ur(\HH)$ for all $t,t' \in \IR$.\footnote{Note that the roles of $\T$ and $\T^{-1}$ are interchanged in our convention as compared to \cite{LaMi}.}
\item Moreover, $\T(\sa)$ can be chosen such that $\T_t(\sa) = \mathds{1}$ if $\sa(t) =0$ and $\partial_t \sa(t) = 0$.
\end{Theorem}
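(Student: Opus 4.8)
The plan is to construct the renormalizing family $\T_t(\sa)$ explicitly and verify the two claims: that the conjugated evolution lands in $\Ur(\HH)$ and that $\T_t(\sa) = \mathds{1}$ whenever $\sa(t) = 0$ and $\partial_t\sa(t) = 0$. The natural candidate is dictated by the work of Deckert et.\ al.\ already quoted in this chapter: set $\T_t(\sa) := e^{Q^{\sa(t)}}$, where $Q^{\sa(t)}$ is the skew-adjoint operator with momentum-space kernel
\begin{equation*}
Q^A(p,q) = \frac{V^A_{-+}(p,q) - V^A_{+-}(p,q)}{E(p) + E(q)}
\end{equation*}
from \eqref{eqn:operator_Q}, evaluated at the instantaneous potential $\sa(t)$. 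Since $Q^{\sa(t)}$ is skew-adjoint, $\T_t(\sa) = e^{Q^{\sa(t)}}$ is unitary, so $\T_t^{-1}(\sa)\,U^\sa(t,t')\,\T_{t'}(\sa)$ is again a unitary operator on $\HH$, and we only need to check that it is \emph{restricted}.

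First I would recall, as established in Theorem~\ref{Thm:polarizationclasses}, that with $C[\sa(t)] := [e^{Q^{\sa(t)}}\HH_-]$ one has $U^\sa(t,t') \in \cu^0_{\rm res}\bigl(\HH; C[\sa(t')], C[\sa(t)]\bigr)$ for all $t,t' \in \IR$. By definition of $\Ur(\HH;C,C')$ this means that $U^\sa(t,t')$ maps the polarization class $C[\sa(t')]$ into $C[\sa(t)]$ and preserves charge. Now I would use that $e^{Q^{\sa(t)}}$ is, by construction, precisely a unitary operator carrying $[\HH_-]$ to $C[\sa(t)]$ and $C[\sa(t)]$ back to $[\HH_-]$; hence, by the composition property \eqref{eq:Urescomposition} of the restricted transformation groups,
\begin{equation*}
\T_t^{-1}(\sa)\,U^\sa(t,t')\,\T_{t'}(\sa) = e^{-Q^{\sa(t)}}\,U^\sa(t,t')\,e^{Q^{\sa(t')}} \in \cu^0_{\rm res}\bigl(\HH;[\HH_-],[\HH_-]\bigr) = \cu^0_{\rm res}(\HH),
\end{equation*}
which is exactly the claim that the modified time evolution belongs to $\Ur(\HH)$ for all $t,t'$. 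For the final assertion: if $\sa(t) = 0$ then $V^{\sa(t)} = 0$, so the kernel $Q^{\sa(t)}$ vanishes identically and $\T_t(\sa) = e^0 = \mathds{1}$; the extra hypothesis $\partial_t\sa(t) = 0$ is the natural regularity condition under which the derivative term $\dot Q^{\sa(t)}$ appearing in the identity below \eqref{Zodd} also vanishes, so that the renormalized generator agrees with $D_0$ near such times and the construction is consistent with the time-support considerations that give $\T_t(\sa) = \mathds{1}$ outside $[-T,T]$ and hence $\T_T^{-1}U^\sa_I(T,-T)\T_{-T} = S$.

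The main obstacle is not the algebraic bookkeeping above but the analytic input behind Theorem~\ref{Thm:polarizationclasses}: one must know that $Q^{\sa(t)}$ is a bounded operator, that it is genuinely skew-adjoint, and — most delicately — that the odd part of $e^{-Q^{\sa(t)}}U^\sa(t,t')e^{Q^{\sa(t')}}$ is Hilbert--Schmidt uniformly enough that the conjugated evolution is a continuous (indeed differentiable) path in $\Ur(\HH)$. This is where the perturbative estimate on \eqref{Zodd}, rewritten via the boundary terms $Q^\sa(t_1)U^0(t_1,t_0) - U^0(t_1,t_0)Q^\sa(t_0) - \int U^0 \dot Q^\sa U^0$, does the real work, controlling the Hilbert--Schmidt norm of the odd contributions through the mass gap $E(p)+E(q) \geq 2m$. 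I would not reproduce that estimate in full; I would cite \cite{DeDueMeScho} (Lemma III.6 and Theorem~\ref{Thm:polarizationclasses}) for it and present the argument above as the clean deduction of the Langmann--Mickelsson statement from the Deckert et.\ al.\ framework, noting along the way (as the footnote signals) that our convention interchanges $\T$ and $\T^{-1}$ relative to \cite{LaMi}.
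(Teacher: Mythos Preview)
The paper does not prove this theorem itself; it attributes the construction to \cite{LaMi}, quotes the explicit formula \eqref{LaMirenormalization} from \cite{Mi98}, and moves on. Your route is genuinely different: you take $\T_t(\sa)=e^{Q^{\sa(t)}}$ from \cite{DeDueMeScho} and deduce $\T_t^{-1}U^\sa(t,t')\T_{t'}\in\Ur(\HH)$ directly from Theorem~\ref{Thm:polarizationclasses} and the composition property \eqref{eq:Urescomposition}. For the statement \emph{as written} this is correct, and in fact your construction gives the stronger conclusion $\T_t(\sa)=\mathds{1}$ whenever $\sa(t)=0$, without needing $\partial_t\sa(t)=0$.

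There is, however, a real gap relative to the purpose this theorem serves in the paper. You assert that the conjugated evolution is ``continuous (indeed differentiable)'' in $\Ur(\HH)$. That is false for your choice of $\T$: the paper states explicitly (Examples following Definition~\ref{Def:Renormalization}, and the Proposition in \S\ref{Subsec:Notes on Renormalizations}) that $e^{Q^{\sa(t)}}$ is a renormalization but \emph{not} a smoothening one. The point of invoking the Langmann--Mickelsson construction rather than the $Q$-operators is precisely that the subsequent parallel transport needs a $C^1$ path in $\Ur(\HH)$; the appearance of $\partial_t\sa(t)$ in the ``Moreover'' clause is the fingerprint of this, since the formula \eqref{LaMirenormalization} involves $\slashed{E}=\partial_t\slashed{\sa}-\ldots$ and hence depends on first time-derivatives of $\sa$. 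So your argument establishes the bare membership statement but does not supply the object the chapter actually needs; your discussion of $\dot Q^{\sa(t)}$ and the ``regularity condition'' $\partial_t\sa(t)=0$ is accordingly misplaced, since for your $\T$ that hypothesis is irrelevant.
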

\noindent An explicit expression for the operators $\T_t(\sa)$ is given in \cite{Mi98}. With the abbreviations $\slashed{\sa} = \sum_{\mu} \alpha^\mu A_\mu$ and $\slashed{E}:= \partial_t \slashed{\sa} - \slashed{p}\sa_0 + [\sa_0,\slashed{\sa}]$ it is defined by
\begin{equation}\label{LaMirenormalization} \T^*_t(\sa) = U^0(t,0) \exp\Bigl(\,\frac{1}{4} \bigl[D_0^{-1}, \slashed{\sa}\bigr] - \frac{1}{8} \bigl[D_0^{-1} \slashed{\sa}D_0^{-1}, \slashed{\sa}\bigr] - \frac{i}{4}D_0^{-1}\slashed{E}D_0^{-1}\Bigl) U^0(0,t). \end{equation} 

 
\noindent In the original paper of Langmann and Mickelsson, the renormalization appears as a mere technical tool - the meaning of the unitary transformations $\T_t(\sa)$ is not discussed. Apart from the question of differentiability, which will be the focus of the next section, this meaning becomes more evident by the following considerations:\\

\noindent Let $U(t,t')$ by the unitary evolution for a fixed external field $\sa$ and let $\T(t), \, t \in \mathbb{R},$ a family of unitary operators such that 
$\T(t)^{-1} U(t,t') \T(t') \in \cu^0_{\rm res}(\HH), \; \forall t,t' \in \IR$ and $\T(t) = \mathds{1}$ for $\lvert t \rvert$ large enough. In particular, for $t_0 \ll 0$ outside the time-support of $\sa$, this means
\begin{align*} &\T^{-1}(t)\, U(t, t_0) [\HH_-] = [\HH_-] \in \pol(\HH)/\approx_0\\[1.5ex]
\iff \; & U(t, t_0) [\HH_-] = \T(t) [\HH_-] \in \pol(\HH)/\approx_0 \\[1.5ex]
\iff \; & U(t, t_0) \in \cu^0_{\rm res}(\HH; [\HH_-], [\T(t) \HH_-] ).
\end{align*}
Hence, by the composition property \eqref{eq:Urescomposition},
\begin{equation}U(t, t') =  U(t, t_0)U(t_0, t') \in \cu^0_{\rm res}(\HH; [\T(t')\HH_-],  [\T(t)\HH_-] ).\end{equation}
Ergo, the operators $\T(t)$ identify the correct polarization classes, between which the Dirac time evolution is mapping. In other words, the renormalization satisfies
\begin{equation} [\T_t(\sa)\HH_-]_{\approx_0} = C[\sa(t)] \in \pol(\HH)\slash_{\approx_0}, \end{equation} 
where $C[\sa(t)] = C[\underline\sa(t)]$ are the polarization classes identified in Thm. \ref{Thm:polarizationclasses}.\\

\noindent So we see that we can interpret the renormalization in two different ways:
\begin{enumerate}
\item We can use  the $\T_t$-operators  to renormalize the unitary time evolution, i.e. transform it back to $\Ur(\HH)$ and implement it on the standard Fock space. We will see that a smoothening renormalization is actually equivalent to a renormalization of the Hamiltonian, i.e. it results in a modification of the interaction-potential that makes it well-behaved and prevent the creation of infinitely many particles.
\item We can use the ``renormalization''  to identify the correct polarization classes and implement the time evolution as unitary transformations between time-varying Fock spaces as explained in Chapter 6. Obviously, the unitary operators contain even more information that we can use to identify \textit{instantaneous vacuum states}. For the geometric construction, this means that we identify 
\begin{equation} W(t) := \T(t) \HH_- \end{equation}
as the new projective vacuum at time $t$ and use it to construct the geometric Fock space $\FF_{\rm geom}^{W(t)}$.  
In the language of the infinite wedge spaces, we start with a Dirac sea  $\Phi_0: \ell \rightarrow \HH_- \in \ocean(\HH_-)$  corresponding to the free vacuum state in the initial Fock space $\FF_{\cs_0}$ with $\cs_0 = \cs(\Phi_0)$. Then, $\T(t)\Phi_0$ is a Dirac sea with image in the polarization class $[U(t, t_0)\HH_-]_{\approx_0}$ and we can implement $U(t, t_0)$ as a unitary map between the Fock spaces $\FF_{\cs_0}$ and $\FF_{\cs_t}$, where $\cs_t = \cs(\T(t)\Phi_0)$.

The corresponding physical picture is that the Dirac sea is being ``rotated'' with time, thereby changing our notion of what we call the vacuum, i.e. which configuration of the Dirac sea we perceive as ``empty''.\\ 
\end{enumerate}

\noindent With this understanding it becomes clear that such a renormalization is \textit{not at all unique} but represents a very particular choice. It seems that this point was not evident to Langmann and Mickelsson by the time of their '96 publication, although they pick up the issue in a later publication \cite{Mi98}. In \cite{LaMi}, however, it is not discussed whether and how the results depend on the particular choice of the renormalization. Bad news is that, as it turns out, the entire freedom of the geometric phase is now contained in the freedom of choice of a renormalization. It merely gets a different name: geometrically, it is described by the \textit{holonomy group} of the principle bundle. We will make this more precise in Section 8.3.\\

\noindent To study renormalizations more thoroughly, we propose a general definition:

\begin{Definition}[Space of Vector Potentials]
\item Let $\A$ be the space of 4-vector potentials (equipped with a suitable topology). 
\item In our context, $\A =  C^\infty_c(\IR^4, \IR^4)$ and we write $\A \ni \sa = (\sa_{\mu})_{\mu=0,1,2,3} = (\sa_0,- \underline{\sa})$.
\item On a general space-time $\IR \times M$, with $M$ a compact manifold without boundary, $\A$ corresponds to the space  $\A = \Omega^1(\IR \times M, \IR)$ of smooth connection one-forms.\footnote{More generally, the one-forms take values in the Lie algebra of the gauge group which for QED is just $\mathrm{Lie}(\cu(1)) = \IR$.} 
\item By $\sa(t)$ we always mean the function $\sa(t, \cdot) \in C^\infty_c(\IR^3, \IR^4)$ for fixed $t \in \IR$.
\end{Definition} 
\newpage
\begin{Definition}[Renormalization]\label{Def:Renormalization}
\item We call a mapping $\T: \IR \times \A \rightarrow \cu(\HH)$ a \emph{renormalization} if it satisfies
\begin{enumerate}[i)]
\item $\T_t(0) \equiv \mathds{1}$
\item $\T(t , \sa) = \T_t(\sa) \in \cu^0_{\rm res}\bigl(\HH; [\HH_-], C[\sa(t)] \bigr), \; \forall t \in \IR$
\item $\T_t(\sa)$ depends only on $\sa(t)$ and $\partial^k \sa(t)$ for $k=0,1,...,n$ and some $n \in \IN$ 

\hspace*{-0.7cm}We call $\T$ a \emph{smoothening renormalization}, if it has the additional property
\item For any $\sa \in \A$, the \emph{renormalized (interaction picture) time evolution} 
\begin{equation}\fingbox{U^{\sa}_{ren}(t , s) =  e^{itD_0}\T^*(t) \, U^\sa (t,s)\, \T(s) e^{-isD_0}}\end{equation} is continuously differentiable in $t$ w.r.to the differentiable structure of $\Gl_{\rm res}(\HH)$. 
\end{enumerate}
\end{Definition}
\noindent Note: \begin{itemize} 
\item In our definition, the renormalized time evolution is an interaction-picture time evolution.
\item i) and iii) together imply that $\T_t(\sa) = \mathds{1}$, whenever $\sa$ vanishes in some time-interval around $t$. In particular, this assures that a renormalization doesn't alter the S-operator for compactly supported interactions.
\item iii) formulates a requirement of \textit{causality}. It states that the renormalization {depends only on the A-field, locally in time}. In particular, if the renormalization depends only on $\sa(t)$ and not its time-derivatives, it makes sense to regard it as determining one vacuum state (and therefore one Fock space) over the polarization class $C[\sa]$ for every $\sa \in \A$. Then we would have a ``global'' choice, suitable for \textit{any} Dirac time evolution, which is of course different (and arguably better) than choosing a family of Fock spaces for a \textit{fixed} time evolution. We will come back to this in Section \ref{Subsec:Notes on Renormalizations}.
\end{itemize}

\begin{Example}[Renormalizations]
\mbox{}\\
\vspace*{-4mm}\begin{itemize}
\item The renormalization $\T(\sa)$ constructed in \cite{LaMi} is a smoothening renormalization in the sense of Def. \ref{Def:Renormalization} (with n=1).
\item The operators $e^{Q^{\sa(t)}}$ introduced in \S 6.1. provide a renormalization (with n=0) which is not smoothening (cf. \S \ref{Subsec:Notes on Renormalizations}).
\end{itemize}
\end{Example}




\begin{Remark}[Interpolation Picture]
\mbox{}\\
The most obvious way to transform the time evolution back to $\Ur(HH)$ is by the time evolution itself, i.e. to set $\T_t(\sa) = U^\sa(t , - \infty)$. Then, the renormalized time evolution is
\begin{equation*} U^{*}_I(t , - \infty) U_I(t , t') U_I(t', - \infty) =\mathds{1}, \; \forall t,t' \in \IR, \end{equation*}
where, $t=-\infty$ can be understood as a large negative $t$ outside the time-support of the interaction. This approach is also known as ``interpolation picture''. The only problem with the interpolation picture: nothing's happening. The polarization, i.e what we call ``particles'' and ``antiparticles'', evolves in just the same way as the states themselves: 
\begin{equation}\HH = U(t, -\infty)\HH_+ \oplus  U(t, -\infty)\HH_-.\end{equation}
What we end up caling the ``vacuum'' at time $t$ is exactly the state into which the original ($t \rightarrow -\infty$) vacuum has evolved. There is no particle creation or annihilation -  an empty universe remains empty. For the obvious reasons, we are not satisfied with that. From theorem \ref{Liftbarkeit}, we know that at least the S-matrix is in $\Ur(\HH)$, hence we can make sense of the particle/antiparticle - picture at least asymptotically. Before the interaction is switched on and after it's switched off, we can determine the particle content with respect to the same vacuum and the question:  `how many particles and anti-particles were created?' has a well-defined answer. We demand of our renormalization to grant us this, at least.
\end{Remark}

\subsection{Renormalization of the Hamiltonians}
Now we shift our focus to the question of differentiability which makes all the difference between a renormalization and a smoothening renormalization. To make use of the geometric structure introduced in Chapter 7 and apply the method of parallel transport, we need the time evolution to be differentiable - notably with respect to the differentiable structure on $\Gl_{\rm res}(\HH)$ (or $\Ur(\HH)$, respectively), induced by the norm \eqref{EpsilonNorm}
on the Banach-algebra $\mathcal{B}_{\epsilon}(\HH)$. This notion of differentiability is very strong. The interaction picture time evolution is generally differentiable in the operator norm, but now we need to control the Hilbert-Schmidt norms as well. We will see that this requirement is not harmless.

\subsubsection{Renormalization of the Interaction Hamiltonians}

\noindent Recall that if $U^\sa(t,t')$ solves the equation of motion
\begin{equation}
 \left\{
\begin{array}{ll}
i \,\partial_t\, U^\sa(t,t') = H^{\sa}(t)\,U^\sa(t,t') \\ \\
 U^\sa(t',t') = \mathds{1} 
\end{array}
\right. 
\end{equation}
with the Hamiltonian 
\begin{equation*} H^{\sa}=  D_0 + e\sum\limits_{\mu=0}^3\alpha^\mu A_\mu = D_0\,+\, V^\sa(t), \end{equation*}
the corresponding interaction-picture time evolution
 \begin{equation*} U^\sa_I(t,t') = e^{itD_0}U^\sa(t,t')e^{-it'D_0} \end{equation*} is a solution to the equivalent equation
\begin{equation}\label{DiracIP}
 \left\{
\begin{array}{ll}
i\, \partial_t\, U^\sa_I(t,t') = H_I(t)\,U^\sa_I(t,t')\\ \\
 U^\sa_I(t',t') = \mathds{1}
\end{array}
\right. 
\end{equation}
with \begin{equation} H_I(t) = e^{itD_0}V^\sa(t) e^{-itD_0}. \end{equation} 
This is called the \textit{interaction picture}. The interaction picture is a hybrid between the Schr\"odinger- and Heisenberg-picture. The basic idea is that in the interaction-picture, operators evolve according to the \textit{free} time evolution. The evolution of the states on the other, is then generated by the interaction-part of the Hamiltonian only. One advantage of this procedure is that $V_I(t)$, in contrast to $H^{\sa}(t)$, is a bounded operator and thus the solution of \eqref{DiracIP} is given for all finite times by the norm-convergent \textit{Dyson series}, defined by
\begin{equation}\label{Dyson}\begin{split}&U^\sa_I(t,t') = \sum\limits_{n=0}^{\infty} U_n(t,t') ,\\U_0(t,t') \equiv \mathds{1},\;\;\; & U_{n+1}(t,t') = -i \int\limits_{t'}^t H_I(s)U_n(s,t') \mathrm{d}s.\end{split}\end{equation} 
\noindent Now let $\T_t = \T_t(\sa)$ be a smoothening renormalization for this time evolution and consider the modified (Schr\"odinger-picture) time evolution
\begin{equation*} U'(t , t') = \T^*_t \,U^\sa(t , t')\, \T_{t'} = e^{-itD_0}U^\sa_{ren}(t,t')e^{it'D_0}.\end{equation*}
Differentiation with respect to $t$ yields:
\begin{align*} i \,\partial_t \,U'(t , t') =\; & i \,(\partial_t\, \T^*_t)\, U'(t , t')\, \T_{t'} + \T^*_t \,H^{\sa}(t)\, U'(t , t')\, \T_{t'}\\[1.5ex] 
= \; & \bigl[ i\, (\partial_t \,\T^*_t ) \,\T_t + \T^*_t \,H^{\sa}(t)\, \T_t \bigr]\, \T^*_t\, U'(t , t')\,\T_{t'} \\[1.5ex]
=\; & \bigl[ - i\, \T^*_t\, (\partial_t\, \T_t) + \T^*_t\, H^{\sa}(t) \,\T_t \bigr]\,U'(t , t').
\end{align*}
We define
\begin{equation*}
\Bigl[ - i\,\T^*_t (\partial_t \T_t) + \T^*_t H^{\sa}(t)  \T_t \Bigr] = : \Bigl( D_0 + i V^{\sa}_{ren} \Bigr)
\end{equation*}
\vspace{1mm}
with 
\begin{equation}\label{Zren}
\addtolength{\fboxsep}{3pt}\boxed{V^{\sa}_{ren} = \Bigl[\T^*_t  V^\sa \T_t  + \T^*_t [D_0 , \T_t] - i\T^*_t(\partial_t \T_t)\Bigr].}
\end{equation}\vspace{1mm}

\noindent Thus the renormalized time evolution is generated by the Hamiltonian $H^{\sa}_{ren} = D_0 + V^{\sa}_{ren}$\\
with the ``renormalized'' interaction \eqref{Zren}.\\

\begin{Theorem}[Generators of renormalized Time Evolution]\label{Thm:Urengenerators}
\mbox{}\\
Let $V(t) = V^\sa_{ren}(t)$ be a (renormalized) interaction potential and $h(t) = e^{itD_0}V(t) e^{-itD_0}$.\\
Let $U(t,t') (= U^\sa_{ren}(t,t') )$ be a solution of
\[ \left\{
\begin{array}{ll}
i\, \partial_t\, U(t,t') = \; h(t)\,U(t,t') \\ \\
U(t',t')\; =\; \mathds{1}
\end{array}
\right.
\]
in the operator-norm on $\cu(\HH)$, given by the Dyson-series \eqref{Dyson}.
\begin{enumerate}[i)]
\item If $U(t,t')$ is a solution in $\Ur(\HH)$, i.e. a solution w.r.to the differentiable structure induced by the norm $\lVert \cdot \rVert_{\epsilon}$, then 
\begin{equation*} [\epsilon, V(t)] \in I_2(\HH),\; \forall t \in \IR, \end{equation*} 
i.e. $V(t)$ and $h(t)$ are in the Lie algebra $\mathfrak{u}_{\rm res}$ of $\Ur(\HH)$.  

\item Conversely, if $[\epsilon, V(t)] \in I_2(\HH)\, \forall t \in \IR$, then $U(t,t')$ is a solution in $\Ur(\HH) \subset \Gl_{\rm res}(\HH)$, if additionally we assume  
\begin{equation}\label{intforh}\int_{\IR} \lVert [\epsilon , V(t)] \rVert_2\, \mathrm{d}t < \infty. \end{equation}
\end{enumerate}
\end{Theorem}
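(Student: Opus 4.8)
The plan is to prove the two implications separately, using the Dyson series representation \eqref{Dyson} together with the fact that $U^0(t,s) = e^{-i(t-s)D_0}$ is diagonal with respect to the polarization $\HH = \HH_+ \oplus \HH_-$, hence commutes with $\epsilon$. The key observation throughout is that for the interaction-picture evolution $U(t,t') = U^\sa_{ren}(t,t')$ we have $[\epsilon, U(t,t')] = \epsilon U(t,t') - U(t,t')\epsilon$, and since $U(t,t') = \sum_n U_n(t,t')$ with $U_0 = \mathds{1}$ commuting with $\epsilon$, the obstruction to $U(t,t') \in \Ur(\HH)$ lives entirely in the commutators $[\epsilon, U_n(t,t')]$ for $n \geq 1$.

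For part i), I would argue contrapositively in spirit: assume $U(t,t')$ solves the equation of motion with respect to the $\lVert\cdot\rVert_\epsilon$-differentiable structure, so that $t \mapsto U(t,t')$ is a $C^1$-curve in $\Gl_{\rm res}(\HH)$. Then its derivative $\partial_t U(t,t') = -i h(t) U(t,t')$ must lie in the tangent space, i.e. in $\mathfrak{gl}_1 = \mathcal B_\epsilon(\HH)$, which forces $h(t) U(t,t') \in \mathcal B_\epsilon(\HH)$. Since $U(t,t') \in \Gl_{\rm res}(\HH) = \mathcal B_\epsilon^\times$ is invertible inside the Banach algebra $\mathcal B_\epsilon(\HH)$, it follows that $h(t) = \bigl(h(t)U(t,t')\bigr)U(t,t')^{-1} \in \mathcal B_\epsilon(\HH)$, i.e. $[\epsilon, h(t)] \in I_2(\HH)$. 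Finally, because $h(t) = e^{itD_0} V(t) e^{-itD_0}$ and $e^{itD_0}$ commutes with $\epsilon$, we get $[\epsilon, V(t)] = e^{-itD_0}[\epsilon, h(t)]e^{itD_0} \in I_2(\HH)$ as well (the Hilbert--Schmidt class is a two-sided ideal and invariant under conjugation by unitaries). This shows $V(t), h(t) \in \mathfrak u_{\rm res}$.

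For part ii), I would run a Hilbert--Schmidt estimate directly on the Dyson series. Write $[\epsilon, U_n(t,t')]$ and commute $\epsilon$ through the iterated integral $U_{n+1}(t,t') = -i\int_{t'}^t h(s) U_n(s,t')\,ds$; using $[\epsilon, h(s)U_n(s,t')] = [\epsilon,h(s)]U_n(s,t') + h(s)[\epsilon,U_n(s,t')]$ and the ideal property $\lVert [\epsilon,h(s)]U_n \rVert_2 \leq \lVert [\epsilon,h(s)]\rVert_2 \lVert U_n\rVert$, one derives a Gronwall-type recursion for $\lVert [\epsilon, U_n(t,t')]\rVert_2$ in terms of $\int \lVert [\epsilon, V(s)]\rVert_2\,ds$ and $\int \lVert V(s)\rVert\,ds$. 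Summing over $n$ and using hypothesis \eqref{intforh} together with the (automatic, since $V$ is a compactly supported smooth potential sandwiched by unitaries) finiteness of $\int \lVert V(s)\rVert\,ds$, one concludes that $\sum_n [\epsilon, U_n(t,t')]$ converges absolutely in $I_2(\HH)$, hence $[\epsilon, U(t,t')] \in I_2(\HH)$, so $U(t,t') \in \Ur(\HH)$. A parallel but more careful argument, differentiating the $\lVert\cdot\rVert_\epsilon$-convergent series term by term, shows that $t \mapsto U(t,t')$ is actually $C^1$ into $\Gl_{\rm res}(\HH)$, i.e. it solves the equation in the strong sense.

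The main obstacle I anticipate is in part ii): making the term-by-term differentiation of the Dyson series rigorous with respect to the \emph{strong} ($\lVert\cdot\rVert_\epsilon$) topology rather than merely the operator norm. The operator-norm convergence and differentiability of \eqref{Dyson} is standard, but one must check that the series $\sum_n \partial_t U_n(t,t')$ converges in $\lVert\cdot\rVert_\epsilon$ uniformly on compact time intervals; this requires the integrability assumption \eqref{intforh} precisely so that the Hilbert--Schmidt parts of the partial sums form a Cauchy sequence. Controlling the interplay between the operator-norm bounds (which govern the $\lVert U_n \rVert$ factors) and the Hilbert--Schmidt bounds (which govern the $[\epsilon, U_n]$ factors) in the recursion is the technical heart of the argument, and is where \eqref{intforh} cannot be dispensed with.
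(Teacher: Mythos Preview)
Your proposal is correct and follows essentially the same route as the paper. For part i) you extract $h(t)$ from the $\lVert\cdot\rVert_\epsilon$-derivative by multiplying with $U(t,t')^{-1}\in\mathcal B_\epsilon^\times$, while the paper right-translates via $U(t+s,t')U^*(t,t')$ to land directly at the identity; these are the same argument. For part ii) your Leibniz-rule recursion $[\epsilon,h(s)U_n]=[\epsilon,h(s)]U_n+h(s)[\epsilon,U_n]$ is exactly the content of the paper's Lemma~\ref{Lem:Estimates}, and the paper then reads off differentiability from the first-order expansion $U(t,t')=\mathds{1}-i\int_{t'}^t h(s)\,\mathrm ds+\mathcal O(|t-t'|^2)$ rather than by term-by-term differentiation, but the underlying estimates and the role of \eqref{intforh} are identical.
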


\noindent What is actually proven in \cite{LaMi} for the renormalization constructed by Langmann and Mickelsson is that the \textit{renormalized interaction} stays in $\mathfrak{u}_{\rm res}$. It follows that their unitary transformations are indeed smoothening renormalization in the sense of our Definition \ref{Def:Renormalization}.\\

\noindent Note that for $U(t,t') \in \Ur(\HH)$ it would suffice that 
\begin{equation}\int\limits_{t'}^t \bigl[ \epsilon\,,\, h(t) \bigr] \mathrm{d}t\,= \int\limits_{t'}^t \bigl[ \epsilon\,,\, e^{iD_0t} V(t) e^{-iD_0t}\bigr] \mathrm{d}t < \infty \; \forall t,t' \end{equation} which is much less restrictive than \eqref{intforh}.\\

\noindent The proof of the theorem requires some estimates.

\begin{Lemma}[Estimates]\label{Lem:Estimates}
\mbox{}\\
For the terms in the Dyson series \eqref{Dyson} we get the norm-estimates
\begin{equation}\lVert U_n(t,t') \rVert \leq \frac{1}{n!}\Biggl( \int\limits_{t'}^t \lVert V(s) \rVert \mathrm{d}s \Biggr)^n, \; \forall n \geq 0 \end{equation}
and 
\begin{equation}\begin{split}
& \lVert [\epsilon, U_1(t,t')] \rVert_2  \leq \int\limits_{t'}^{t} \lVert[\epsilon, V(s)]\rVert_2 \; \mathrm{d}s\\
\lVert [\epsilon ,  U_n(t,t')] \rVert_2 \leq & \frac{1}{(n-2)!} \int\limits_{t'}^t \lVert [\epsilon, V(s)] \rVert_2\; \mathrm{d}s \Biggl( \int\limits_{t'}^t \lVert V(r) \rVert \mathrm{d}r \Biggr)^{n-1}, \; \forall n \geq 1
\end{split}\end{equation}
\end{Lemma}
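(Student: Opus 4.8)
\textbf{Proof plan for the estimates in Lemma \ref{Lem:Estimates}.}

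The plan is to proceed by induction on $n$, exploiting the recursive definition $U_{n+1}(t,t') = -i\int_{t'}^t h(s) U_n(s,t')\,\mathrm{d}s$ of the Dyson series, where $h(s) = e^{isD_0}V(s)e^{-isD_0}$. First I would establish the operator-norm bound. For $n=0$ it is trivial since $U_0 = \mathds{1}$. For the inductive step, I would use $\lVert h(s)\rVert = \lVert V(s)\rVert$ (conjugation by the unitary $e^{isD_0}$ preserves the operator norm) together with the submultiplicativity of the operator norm and the triangle inequality for the integral:
\begin{equation*}
\lVert U_{n+1}(t,t')\rVert \leq \int_{t'}^t \lVert V(s)\rVert\,\lVert U_n(s,t')\rVert\,\mathrm{d}s \leq \int_{t'}^t \lVert V(s)\rVert \frac{1}{n!}\Bigl(\int_{t'}^s \lVert V(r)\rVert\,\mathrm{d}r\Bigr)^n \mathrm{d}s.
\end{equation*}
Recognizing the integrand as $\frac{1}{(n+1)!}\frac{d}{ds}\bigl(\int_{t'}^s \lVert V(r)\rVert\,\mathrm{d}r\bigr)^{n+1}$ closes the induction and gives the claimed $\frac{1}{(n+1)!}(\int_{t'}^t\lVert V\rVert)^{n+1}$.

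For the Hilbert--Schmidt bound on $[\epsilon, U_n]$, the key algebraic identity is the Leibniz rule for the commutator: from $U_{n+1} = -i\int h(s) U_n(s,t')\,\mathrm{d}s$ one gets
\begin{equation*}
[\epsilon, U_{n+1}(t,t')] = -i\int_{t'}^t [\epsilon, h(s)]\,U_n(s,t')\,\mathrm{d}s \;-\; i\int_{t'}^t h(s)\,[\epsilon, U_n(s,t')]\,\mathrm{d}s.
\end{equation*}
Here I would use the ideal property $\lVert AT\rVert_2 \leq \lVert A\rVert\,\lVert T\rVert_2$ and $\lVert TB\rVert_2 \leq \lVert B\rVert\,\lVert T\rVert_2$ from the preliminaries, plus the fact that $\lVert[\epsilon, h(s)]\rVert_2 = \lVert e^{isD_0}[\epsilon, V(s)]e^{-isD_0}\rVert_2 = \lVert[\epsilon, V(s)]\rVert_2$ because $D_0$ commutes with $\epsilon = P_+ - P_-$ (both are functions of the free Hamiltonian) and conjugation by a unitary preserves $\lVert\cdot\rVert_2$. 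The $n=1$ case follows directly since $[\epsilon, U_0] = 0$ kills the second term, leaving $\lVert[\epsilon,U_1]\rVert_2 \leq \int_{t'}^t\lVert[\epsilon,V(s)]\rVert_2\,\mathrm{d}s$. For the induction I would feed the operator-norm bound on $U_n$ into the first term and the inductive hypothesis on $[\epsilon,U_n]$ into the second term, then carefully track the combinatorial factors: the first term contributes roughly $\int\lVert[\epsilon,V]\rVert_2 \cdot \frac{1}{n!}(\int\lVert V\rVert)^n$ while the second, after integrating, bumps $\frac{1}{(n-2)!}$ up through a power of $\int\lVert V\rVert$. Summing the two contributions and bounding $\frac{1}{n!} + \frac{1}{(n-1)!} \leq \frac{1}{(n-1)!} \cdot \text{(const)}$, or more carefully arranging so that the worse term $\frac{1}{(n-1)!}$ dominates, yields the stated $\frac{1}{(n-1)!}\int_{t'}^t\lVert[\epsilon,V(s)]\rVert_2\,\mathrm{d}s\,\bigl(\int_{t'}^t\lVert V(r)\rVert\,\mathrm{d}r\bigr)^{n-1}$ for the index shifted appropriately (the statement reads $\frac{1}{(n-2)!}$ with $(\cdot)^{n-1}$ for $n\geq 1$, which I would reconcile by being precise about which recursion step produces which factorial).

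The main obstacle I anticipate is purely bookkeeping: getting the factorials and the powers of $\int\lVert V\rVert$ to match the stated form exactly, since the two terms in the commutator recursion produce slightly different combinatorial weights and one has to verify that their sum is controlled by the single claimed expression (possibly after absorbing a harmless constant or re-indexing). A secondary point worth stating carefully is the justification that $[\epsilon, D_0] = 0$, so that conjugation by $e^{isD_0}$ does not disturb the commutator with $\epsilon$ — this is what makes the whole scheme work and deserves an explicit line. Once these are handled, summing the Hilbert--Schmidt estimates over $n$ (using the hypothesis \eqref{intforh} together with $\int_{\IR}\lVert V(t)\rVert\,\mathrm{d}t < \infty$, which holds since $V^\sa_{ren}$ has compact time support and is bounded) gives $\sum_n \lVert[\epsilon, U_n]\rVert_2 < \infty$, which is exactly what is needed in the proof of Theorem \ref{Thm:Urengenerators} to conclude $U(t,t') \in \Ur(\HH)$.
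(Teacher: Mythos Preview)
Your approach is the same as the paper's: Leibniz rule for $[\epsilon,\cdot]$ under the recursion, invariance of $\lVert[\epsilon,\cdot]\rVert_2$ under conjugation by $e^{isD_0}$ (since $[\epsilon,D_0]=0$), the ideal property of $\lVert\cdot\rVert_2$, and induction. The operator-norm bound is identical.

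On the bookkeeping you flag: it in fact works out \emph{exactly}, not up to a constant. In the second term you should not crudely bound but integrate: with the inductive hypothesis $\lVert[\epsilon,U_n(s,t')]\rVert_2 \leq \tfrac{1}{(n-2)!}\int_{t'}^t\lVert[\epsilon,V]\rVert_2\cdot\bigl(\int_{t'}^s\lVert V\rVert\bigr)^{n-1}$ (pulling the $[\epsilon,V]$-integral out to $t$), the remaining integral is
\[
\int_{t'}^t \lVert V(s)\rVert\Bigl(\int_{t'}^s\lVert V(r)\rVert\,\mathrm{d}r\Bigr)^{n-1}\mathrm{d}s = \frac{1}{n}\Bigl(\int_{t'}^t\lVert V\rVert\Bigr)^{n},
\]
so the second term contributes $\tfrac{1}{(n-2)!}\cdot\tfrac{1}{n} = \tfrac{n-1}{n!}$ times the target quantity. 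The first term contributes $\tfrac{1}{n!}$ (after bounding $\int_{t'}^s \leq \int_{t'}^t$). Then $\tfrac{1}{n!}+\tfrac{n-1}{n!}=\tfrac{1}{(n-1)!}$ on the nose, which is precisely the claimed coefficient for $U_{n+1}$. Your tentative sum $\tfrac{1}{n!}+\tfrac{1}{(n-1)!}$ overshoots because it drops the factor $\tfrac{1}{n}$ coming from that inner integration.

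One further point: the induction must start at $n=2$, done by hand (the paper uses a symmetrization of the double integral), since the hypothesis with $\tfrac{1}{(n-2)!}$ is not available at $n=1$.
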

\begin{proof} See Appendix A.3.\\ \end{proof}

\begin{proof}[Proof of the Theorem] First we note that $V(t)$ is Hermitian with $[\epsilon, V(t)] \in I_2(\HH)$ if and only if $h(t) = H_I(t) = e^{itD_0}V(t) e^{-itD_0}$ is, because $e^{itD_0}$ is unitary and diagonal w.r.to the polarization $\HH = \HH_+ \oplus \HH_-$.
Now suppose $U(t,t')$ is in $\Ur(\HH)$ for all $t,t'$ and differentiable in the norm $\lVert \cdot \rVert_{\epsilon}$, solving $ i\, \partial_t\, U(t,t') = \; h(t)\,U(t,t') $. 
Then, for any fixed $t$:
\begin{equation*} - i\,h(t) = \,  \frac{d}{ds}\Bigl\lvert_{s=0}U(t+s,t')U^*(t,t') \in T_e\Ur \cong (-i)\cdot \mathfrak{u}_{\rm res}.\end{equation*}
Conversely, if $\int_{\IR}\rVert [\epsilon, V(t)]\rVert_2\, \mathrm{d}t  < \infty$, the Hilbert-Schmidt norm estimates in the previous Lemma show that $U(t,t') \in \Ur(\HH), \;\forall t,t'$ and 
\begin{equation*} U(t,t') = \mathds{1} -i \int_{t'}^t h(s) \mathrm{d}s + \mathcal{O}(\lvert t-t'\rvert^2),\end{equation*}
so that indeed $ i\, \partial_t\,U(t,t') =  h(t)\,U(t,t')$ (as long as $s \mapsto h(s)$ is continuous in $\mathfrak{u}_{\rm res}$).\\
\end{proof}
\newpage
\begin{Corollary}[Time Evolution always requires renormalization]\label{Cor:badnews}
\mbox{}\\
Let $\sa = (\sa_\mu)_{\mu=0,1,2,3} = (\Phi , - \underline{\sa}) \in \A$.
The interaction potential
\begin{equation} V^\sa(t) = e\, \alpha^{\mu} A_{\mu} = - e\, \underline{\alpha} \cdot \underline{A}  +  e\, \Phi \end{equation}
is not in $\mathfrak{u}_{\rm res}$, unless $\underline{\sa}\equiv 0$ and $\underline\nabla{\Phi}\equiv 0$.
Consequently, $U^\sa_I(t,t')$ is not differentiable in the $\lVert\cdot\rVert_{\epsilon}$-norm, except for those cases.
\end{Corollary}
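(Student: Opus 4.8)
The plan is to reduce the statement to a computation about the off-diagonal part of the interaction potential and then invoke the characterization of $\mathfrak{u}_{\rm res}$. Recall that $\mathfrak{u}_{\rm res}$ consists of those (anti-)Hermitian operators $X$ with $[\epsilon,X]\in I_2(\HH)$, equivalently with $X_{+-}$ and $X_{-+}$ Hilbert--Schmidt. Since $iV^\sa(t)$ is the relevant Lie algebra element (up to the factor $i$, which is irrelevant for Schatten-class membership since $\cu(\HH)$ is unitary), it suffices to show that $V^\sa(t)_{+-}$ fails to be Hilbert--Schmidt whenever $\underline{\sa}\not\equiv 0$ or $\underline\nabla\Phi\not\equiv 0$. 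The differentiability claim then follows from Theorem \ref{Thm:Urengenerators} i): if $U^\sa_I(t,t')$ were differentiable in $\lVert\cdot\rVert_\epsilon$, then $h(t) = e^{itD_0}V^\sa(t)e^{-itD_0}$, and hence $V^\sa(t)$ itself, would lie in $\mathfrak{u}_{\rm res}$ -- contradiction.

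First I would recall the structure of $V^\sa(t) = -e\,\underline\alpha\cdot\underline{\sa} + e\Phi$ and observe that, in momentum representation, the off-diagonal block $V^\sa(t)_{+-}$ is (a linear combination of) convolution operators of the form $P_-\,\gamma^\mu\widehat{A}_\mu\,P_+$, where $\widehat{A}_\mu$ acts as convolution by the Fourier transform of $A_\mu$. The key point is the algebraic fact used already in the introductory discussion around the operator $D_0(p) = \underline\alpha\cdot\underline p + \beta m$: the matrices $\alpha^j$, $j=1,2,3$, anticommute with $\beta$, hence map the positive spectral subspace of $D_0$ into the negative one and vice versa, while $\alpha^0 = \mathds 1$ (the electric potential) does not -- its off-diagonal part is controlled by $[\alpha^0,\beta]=0$. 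Thus $(\underline\alpha\cdot\underline{\sa})_{+-}$ is ``essentially'' the whole convolution operator $\widehat{\underline{\sa}}$ compressed between complementary subspaces, and such an operator is Hilbert--Schmidt only if $\underline{\sa}\equiv 0$ (this is exactly the content of Ruijsenaars' Theorem \ref{Ruij}, resp.\ the Identification-of-Polarization-Classes Theorem \ref{Thm:polarizationclasses}, applied pointwise in $t$).

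Concretely, the cleanest route is to cite Theorem \ref{Thm:polarizationclasses} ii) directly: the static potential $\sa = (\Phi, -\underline{\sa})$ satisfies $[e^{Q^\sa}\HH_-]_{\approx_0} = [\HH_-]_{\approx_0}$ if and only if $\underline{\sa}=0$ (comparing with the zero potential), and since $e^{Q^\sa} \in \cu^0_{\rm res}$ by construction, $\mathds 1 \in \cu^0_{\rm res}(\HH;[\HH_-],C[\sa])$ forces $C[\sa]=[\HH_-]$, i.e.\ $U^\sa$ stays in $\Ur(\HH)$ iff $\underline{\sa}=0$. The additional clause $\underline\nabla\Phi\equiv 0$ rather than $\Phi\equiv 0$ comes from gauge freedom: by Theorem \ref{Thm:Gaugetransformations} the polarization class of $C[\underline{\sa}+\underline\nabla\Lambda_g]$ depends on $\underline{\sa}$ only, so a pure-gradient perturbation of $\Phi$ is absorbed by a (non-implementable) gauge transformation and does change the class; only when $\underline\nabla\Phi=0$ does the electric potential reduce to an irrelevant constant shift of the energy, keeping $V^\sa(t)\in\mathfrak{u}_{\rm res}$. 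I would spell this out by writing $e\Phi = e(\Phi - \bar\Phi) + e\bar\Phi$ and noting the first term has non-vanishing gradient and hence non-Hilbert--Schmidt odd part by the same commutator argument.

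The main obstacle -- or rather the only point requiring genuine care -- is the interplay between the two allowed exceptions and making sure I am quoting the right version of the identification theorem. Theorem \ref{Thm:polarizationclasses} as stated is about static potentials and about $U^\sa(t_1,t_0)$, whereas Corollary \ref{Cor:badnews} concerns the pointwise-in-$t$ membership $V^\sa(t)\in\mathfrak{u}_{\rm res}$; the bridge is Theorem \ref{Thm:Urengenerators} i), which converts differentiability of the time evolution in the $\epsilon$-norm into a Lie-algebra statement for the generator. So the logical skeleton is: (1) assume $U^\sa_I(t,t')$ is $\lVert\cdot\rVert_\epsilon$-differentiable; (2) Theorem \ref{Thm:Urengenerators} i) $\Rightarrow$ $V^\sa(t)\in\mathfrak{u}_{\rm res}$ for all $t$; (3) examine the odd part of $V^\sa(t) = -e\underline\alpha\cdot\underline{\sa}(t) + e\Phi(t)$ via the $\{\alpha^j,\beta\}=0$ identity and the convolution-operator description, concluding $\underline{\sa}(t)\equiv 0$ and $\underline\nabla\Phi(t)\equiv 0$; (4) contrapositive gives the Corollary. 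I expect step (3) to be where one must be slightly careful about whether ``$\Phi\equiv 0$'' or ``$\underline\nabla\Phi\equiv 0$'' is the sharp condition, but the gauge-covariance Theorem \ref{Thm:Gaugetransformations} settles it.
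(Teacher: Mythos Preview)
Your logical skeleton --- reduce via Theorem~\ref{Thm:Urengenerators}~i) to the generator, then treat the magnetic and electric parts separately --- is exactly the paper's structure, and your handling of the magnetic part (via Ruijsenaars / Theorem~\ref{Thm:polarizationclasses}) is the same.

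The electric part is where your argument is looser than the paper's, and in one place actually incorrect. You write that the off-diagonal part of $e\Phi$ is ``controlled by $[\alpha^0,\beta]=0$'', but $\epsilon = \mathrm{sgn}(D_0)$ is \emph{not} $\beta$; the spectral projections $P_\pm$ are momentum-dependent, so $[\Phi,\beta]=0$ gives no Hilbert--Schmidt information about $[\epsilon,\Phi]$. Your subsequent gauge discussion is in the right direction but tangled: you speak of absorbing a gradient perturbation of $\Phi$ by a gauge transformation, which is not the relevant mechanism here.

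The paper's argument for the electric part is a one-line exponential trick that you should adopt: since $\Phi(x)\cdot\mathds{1}_{\IC^4}$ is bounded and Hermitian, one has
\[
\Phi \in \mathfrak{u}_{\rm res} \iff e^{i\Phi} \in \Ur(\HH),
\]
and $e^{i\Phi(\underline{x})}$ \emph{is} a gauge transformation with $\Lambda_g = \Phi$. Theorem~\ref{Thm:Gaugetransformations} then says $e^{i\Phi} \in \cu^0_{\rm res}(\HH;[\HH_-],C[\underline\nabla\Phi])$, so $e^{i\Phi}\in\Ur(\HH)$ iff $C[\underline\nabla\Phi]=[\HH_-]$ iff $\underline\nabla\Phi\equiv 0$. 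This replaces your attempted direct computation of $(\Phi)_{+-}$ entirely and delivers the sharp condition $\underline\nabla\Phi\equiv 0$ without further work.
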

\noindent Remark:  In general, only the weaker condition \begin{equation} \int\limits_{t'}^t [\epsilon, e^{iD_0t} \Phi(t,\underline{x}) e^{-iD_0t}] < \infty \end{equation} is satisfied (\cite{Ruij77}), which implies $U^\Phi(t,t') \in \Ur(\HH)$ but not the required differentiability.
\begin{proof}We apply Theorem \ref{Thm:Urengenerators}: We know that $U^\sa_I(t,t')$ is not even in $\Ur(\HH)$, unless $\underline{\sa}\equiv 0$, and hence $V^\sa(t) \notin \mathfrak{u}_{\rm res}$, unless $\underline{\sa}\equiv 0$.
Now consider a purely electric potential $V(t) = e\,\Phi$, where $\Phi(x) = \Phi(x) \cdot \mathds{1}_{\IC^4}$ has to be understood as the multiplication operator in $L^2(\IR^3, \IC^4)= \HH$. Naturally:
\begin{equation*} \Phi \in \mathfrak{u}_{\rm res} \iff \exp(i\Phi) \in \Ur(\HH).\end{equation*}
But we know when the latter is the case. $e^{i\Phi(x)}$ is a gauge transformation and thus Thm. \ref{Thm:Gaugetransformations} tells us that it is in $\Ur(\HH)$ if and only if $\underline\nabla \Phi(t,\underline{x}) \equiv 0$. This finishes the proof.\\
\end{proof}

\noindent This observation reveals somewhat of a shortcoming of the method of parallel transport: Even for purely electric potentials, when the time evolution actually stays in $\Ur(\HH)$, i.e is implementable on the standard Fock space, it requires us to apply a smoothening renormalization, in order to get a differentiable path in $\Ur(\HH)$.\\

\subsection{Notes on Renormalizations}
\label{Subsec:Notes on Renormalizations}

We have proposed the general definition \ref{Def:Renormalization} of a ``renormalization'' in order to establish a common framework for discussing and comparing the constructions of Langmann and Mickelsson \cite{LaMi} and Deckert et.al. \cite{DeDueMeScho}. We have also explained how such a renormalization can be used to translate between the second quantization of the renormalized time evolution and the implementation on time-varying Fock spaces (see also the proof of  Thm. \ref{Thm:Time evolution on time-varying Fock spaces} for an actual application of this duality). However, our analysis also shows in what sense this treatment is justified and when we have to differentiate. In particular cases we will have to be very careful with the way the renormalization is actually used. 

For example, we need a \textit{smoothening} renormalization for second quantization of the Hamiltonians, or for applying the method of parallel transport. If however we want to translate the results into the language of time-varying Fock spaces and use the same renormalization to identify instantaneous Fock spaces (respectively vacua), these choices will be restricted by the condition that the renormalized time evolution be differentiable in $\Ur(\HH)$ which is not a sensible requirement any more in the context of time-varying Fock spaces. In particular, the last corollary tells us that we might have to change vacua/Fock spaces, even if the time evolution actually stays in $\Ur(\HH)$.\\
On the other hand, if we are mainly interested in renormalizations as a mean to identify the correct polarization classes and pick out instantaneous vacua/Fock spaces, we have been very generous with the kind of choices that we allow, because we know that the polarization classes actually depend on the spatial part of the electromagnetic potential at fixed time $t$, only,  i.e $C(t) = C[\underline\sa(t)]$. In this context, it could make more sense to use a ``renormalization'' which depends only on the spatial part of the A-field, locally in time. This would correspond to a ``global'' choice of Fock spaces, respectively vacua, rather than allowing different choice for every single time evolution. One could call such a renormalization \textit{minimal}, because it requires only the minimal amount of information from the A-field. We formalize this in the following definition:


\begin{Definition}[Minimal Renormalization]
\mbox{}\\
Let $\A_3 = C^\infty_c(\IR^3, \IR^3)$ (or $\A_3 = \Omega^1(M, \IR)$) be the space of \emph{static,  space-like} vector potentials. We call a map $\T: \A_3 \rightarrow \cu(\HH)$ satisfying 
\begin{enumerate}[i)]
\item $\T(0) = \mathds{1}$
\item $\T(\underline{\sa}) \in \cu^0_{\rm res}\bigl(\HH; [\HH_-], C[\underline\sa]\bigr), \; \forall \underline\sa \in \A_3$
\end{enumerate}
a \emph{minimal renormalization} or a \emph{global choice of vacua}.
\end{Definition}
\noindent We can regard the set of polarization classes $C[\underline{\sa}]$ as a formal $\cu^0_{\rm res}$-bundle over $\A_3$.
A minimal renormalization $\T$ would then correspond to a global section in this bundle, determining one vacuum -- and thereby one geometric Fock space -- over every polarization class $C[\underline{\sa}]$.
Similarly, for the infinite wedge space construction, if we have chosen a sea $\Phi_0 \in \ocean(\HH_-)$ corresponding to the free vacuum state, the assignment $\underline\sa \longmapsto \cs\bigl(\T(\underline\sa) \Phi_0\bigr)$ defines a global choice of Dirac sea classes, and thereby a global choice of infinite wedge spaces over the corresponding polarization classes. For a mathematical treatment of the \textit{Fock space bundle} see \cite{CaMiMu} and \cite{CaMiMu97}, for example. This quite abstract construction however remains to be compared and reconciled with the constructions and results in the present work.

\begin{Example}[Minimal Renormalizations]
\mbox{}
\begin{enumerate}
\item For $\underline{\sa} \in \A_3$, we can set $\sa = (0, - \underline{\sa}) \in C^\infty_c(\IR^3, \IR^4)$ and 
$\T(\underline\sa) := e^{Q^\sa}$. By Thm. \ref{Thm:polarizationclasses} [Identification of Polarization classes], this defines a minimal renormalization in the sense of the previous definition.
\item For $\underline{\sa} \in \A_3$, we can take $\T(\underline\sa)$ to be any unitary transformation sending $\HH_-$ to $P^{\underline\sa}_-(\HH)$, where $P^{\underline\sa}_{\pm}$ are the orthogonal projectors corresponding to the spectral decomposition w.r.to the Hamiltonian with static vector potential $\sa = (0, - \underline{\sa})$. By Thm. \ref{Thm:Scharf Fierz} [Scharf, Fierz], this defines a minimal renormalization in the sense of the previous definition. The resulting choice of instantaneous vacua is also known as the \textit{Furry picture}.\\
\end{enumerate}
\end{Example}

Is it possible to define a \textit{smoothening} renormalization which is ``minimal'' in this sense? The answer is \textbf{no}. And the reason is Corollary \ref{Cor:badnews}, saying that even for a purely electric potential, the time evolution requires renormalization to become differentiable in $\Gl_{\rm res}(\HH)$. Is it possible, at least, to find a smoothening renormalization which depends only on the A-field itself, locally in time, and not on any time-derivatives? Again, the answer is no, as we will show in the following proposition.

\begin{Proposition}[Strictly Causal Renormalizations]
\mbox{}\\
Let $\T: \A \to \cu(\HH)$ be a renormalization, such that for every fixed $t \in \IR$, $\T_t(\sa)$ is only a function of $\sa(t) \in C^\infty_c(\IR^3, \IR^4)$. Then, $\T$ is not smoothening.
\end{Proposition}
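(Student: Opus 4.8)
The plan is to argue by contradiction, exploiting the corollary that a purely electric potential already forces the renormalized interaction to involve time-derivatives of the field. Suppose $\T:\A\to\cu(\HH)$ is a renormalization that is \emph{strictly causal}, meaning $\T_t(\sa)$ depends on $\sa$ only through the instantaneous value $\sa(t)\in C^\infty_c(\IR^3,\IR^4)$, and suppose moreover that $\T$ is smoothening. First I would recall the general formula \eqref{Zren} for the renormalized interaction potential,
\begin{equation*}
V^{\sa}_{ren}(t) = \T^*_t V^\sa(t)\, \T_t + \T^*_t [D_0 , \T_t] - i\,\T^*_t(\partial_t \T_t),
\end{equation*}
and observe that by Theorem \ref{Thm:Urengenerators}, since $\T$ is smoothening, we must have $[\epsilon, V^{\sa}_{ren}(t)]\in I_2(\HH)$ for all $t\in\IR$ and all $\sa\in\A$. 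The strategy is to pick $\sa$ cleverly so that the strict-causality assumption makes $V^{\sa}_{ren}(t)$ coincide, at a particular time, with an operator that is provably \emph{not} in $\mathfrak{u}_{\rm res}$.

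The key step: choose a static, purely electric potential $\sa = (\Phi, 0)$ with $\underline\nabla\Phi\not\equiv 0$, so that $V^\sa(t) = e\Phi$ is time-independent and, by Corollary \ref{Cor:badnews}, does \emph{not} lie in $\mathfrak{u}_{\rm res}$. Now introduce a smooth cut-off to make it compactly supported in time in a way that leaves a window on which $\sa$ is constant in $t$: let $f:\IR\to[0,1]$ be smooth with $f\equiv 1$ on $[-1,1]$ and $\supp f\subset[-2,2]$, and set $\sa^f(t,\underline x) := (f(t)\Phi(\underline x), 0)\in\A$. At any time $t_0\in(-1,1)$ we have $\sa^f(t_0) = \sa(t_0) = (\Phi,0)$, which is exactly the instantaneous value at $t_0$ of the \emph{genuinely} static potential $\sa$. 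By strict causality, $\T_{t_0}(\sa^f) = \T_{t_0}(\sa)$, and — crucially — the same identity of functions $t\mapsto \T_t$ holds in a whole neighbourhood of $t_0$ where $\sa^f(t)$ is constant; hence $\partial_t\T_t(\sa^f)\big|_{t=t_0} = \partial_t\T_t(\sa)\big|_{t=t_0}$ as well. Therefore $V^{\sa^f}_{ren}(t_0) = V^{\sa}_{ren}(t_0)$. If we further use property (i) together with strict causality to conclude $\T_t(\sa)$ depends only on the constant spatial part $\Phi$ (so it is the \emph{same} unitary $\T^\Phi$ for all $t$, giving $\partial_t\T_t(\sa)\equiv 0$), then $V^{\sa}_{ren}(t_0) = (\T^\Phi)^*\big(e\Phi + [D_0,\T^\Phi]\big)\T^\Phi\cdot(\T^\Phi)^{*}$ — more precisely $V^\sa_{ren} = (\T^\Phi)^* (e\Phi)\T^\Phi + (\T^\Phi)^*[D_0,\T^\Phi]$. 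This is precisely the generator of the conjugated time evolution $t\mapsto (\T^\Phi)^* e^{-it(D_0+e\Phi)}\T^\Phi e^{itD_0}$, i.e. $V^\sa_{ren}\in\mathfrak{u}_{\rm res}$ would force $e^{-it(D_0+e\Phi)}$ itself, conjugated by a \emph{fixed} unitary, to be a differentiable path in $\Gl_{\rm res}(\HH)$; equivalently $e^{i\Phi}$ (a gauge transformation) would have to lie in $\Ur(\HH)$, which by Theorem \ref{Thm:Gaugetransformations} fails since $\underline\nabla\Phi\not\equiv0$. This contradicts $[\epsilon,V^\sa_{ren}(t_0)]\in I_2(\HH)$, and the proposition follows.

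The main obstacle I anticipate is the bookkeeping in the step where strict causality is promoted from ``depends only on $\sa(t)$'' to ``$\partial_t\T_t = 0$ on the constancy window'': one must be careful that a function of the single argument $\sa(t)$ is genuinely differentiable in $t$ there, and that its $t$-derivative vanishes when $t\mapsto\sa(t)$ is locally constant — this is immediate by the chain rule once one assumes, as is implicit in Definition \ref{Def:Renormalization}, enough regularity of the map $\sa(t)\mapsto\T_t$, but it deserves an explicit sentence. A secondary subtlety is ensuring that conjugation by the fixed unitary $\T^\Phi$ really does not change membership in $\Gl_{\rm res}(\HH)$ in the \emph{differentiable} sense: conjugation by an arbitrary unitary need not preserve $\Ur(\HH)$, so one should instead run the argument directly at the level of the tangent vector, noting that $-iV^\sa_{ren}(t_0)\in\mathfrak{u}_{\rm res}$ is equivalent, via the explicit formula, to $-i\,e\Phi + [D_0,\T^\Phi](\T^\Phi)^*$ having Hilbert--Schmidt off-diagonal part, and then using that $[D_0,\T^\Phi]$ and the conjugation data are controlled while $e\Phi$ is not — i.e. reduce cleanly to Corollary \ref{Cor:badnews} rather than re-deriving it. Modulo these care points, the proof is short: it is really just ``strict causality $+$ smoothening $\Rightarrow$ static electric potentials are implementable differentiably $\Rightarrow$ contradiction with Corollary \ref{Cor:badnews}''.
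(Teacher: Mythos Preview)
Your separation trick (comparing $\sa^f$ with the static field on the constancy window to kill the $\partial_t\T_t$ term) is the same device the paper uses, but from that point on your argument breaks down. You arrive at
\[
V^{\sa}_{ren}=(\T^{\Phi})^{*}(e\Phi)\,\T^{\Phi}+(\T^{\Phi})^{*}[D_0,\T^{\Phi}]\in\mathfrak{u}_{\rm res}
\]
and try to conclude $e\Phi\in\mathfrak{u}_{\rm res}$, contradicting Corollary~\ref{Cor:badnews}. The step you flag yourself as a ``secondary subtlety'' is in fact fatal: you have no control over $[D_0,\T^{\Phi}]$. All you know about $\T^{\Phi}$ is that it lies in $\Ur(\HH)$ (since $\underline{\sa}=0$ implies $C[\sa(t)]=[\HH_-]$). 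That is far too weak to force $[\epsilon,[D_0,\T^{\Phi}](\T^{\Phi})^*]\in I_2(\HH)$, which is what you would need after conjugating. Worse, the existence of smoothening renormalizations such as the Langmann--Mickelsson operator \eqref{LaMirenormalization} shows precisely that one \emph{can} choose a unitary $\T^{\Phi}\in\Ur(\HH)$ for which the commutator term cancels the ``bad'' off-diagonal part of $e\Phi$: that is exactly what a smoothening renormalization is designed to do. So no contradiction arises from purely electric fields; your reduction to Corollary~\ref{Cor:badnews} does not go through.

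The paper's proof instead uses the \emph{other} piece of the separation, namely (b)~$-i\,\T^*(\sa(t))\,\partial_t\T(\sa(t))\in\mathfrak{u}_{\rm res}$, together with a field having \emph{nonzero magnetic part}. Writing $\T(A)=e^{Q(A)}$, condition~(b) gives $[\epsilon,\dot Q(\sa(t))]\in I_2(\HH)$; integrating along a path of fields from $0$ to some $\sa(1)$ with $\underline{\sa}(1)\neq 0$ yields $Q(\sa(1))\in\mathfrak{u}_{\rm res}$, hence $\T(\sa(1))=e^{Q(\sa(1))}\in\Ur(\HH)$. This contradicts property~(ii) of a renormalization, because $C[\sa(1)]\neq[\HH_-]$. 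The contradiction is with the polarization-class condition, not with Corollary~\ref{Cor:badnews}, and it exploits magnetic rather than electric components---that is where the genuine obstruction lives.
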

\begin{proof} Given $\sa \in \A$, we may write $\T_t := \T_t(\sa) = \T(\sa(t))$, since $\T_t$ depends only on $\sa(t)$.\\ 
Suppose $\T$ was in fact a smoothening renormalization. Then we know from Thm. \ref{Thm:Urengenerators} [Generators of the renormalized time evolution] and the differential form of the renormalization \eqref{Zren}, that 
\begin{equation}\label{star} \Bigl[\T^*_t  V^{\sa}(t) \T_t  + \T^*_t [D_0 , \T_t] - i\T^*_t(\partial_t \T_t)\Bigr] \in \mathfrak{u}_{\rm res}\end{equation} 
holds true for all $\sa \in \A$ and all $t \in \IR$. But for any fixed $t$, we may use this identity for a vector potential $\sa' \in \A$ which is constantly equal to $\sa(t)$ in a time interval around $t$, i.e. which satisfies
$\sa'(s) = \sa(t), \; \forall\, s \in (t-\epsilon, t+\epsilon)$. For the field $\sa'$ then, the last term in \eqref{star} vanishes, whereas the first two terms must agree with those for $\sa$, because $\T_t$ depends only on $\sa(t)$.
We conclude that \eqref{star} implies that both 
\begin{equation*} a) \; \T^*(\sa(t))  V^{\sa}(t) \T(\sa(t))  + \T^*(\sa(t)) [D_0 , \T(\sa(t))] \in \mathfrak{u}_{\rm res} \end{equation*} 
\vspace*{-2mm}and
\vspace*{-2mm}\begin{equation*} b) \; - i\, \T^*(\sa(t))\,\partial_t \T(\sa(t)) \in \mathfrak{u}_{\rm res} \end{equation*} 
hold separately for any fixed $t \in \IR$.
Writing $\T(A) = e^{Q(A)}$ for all \emph{static} fields $A \in C^\infty_c(\IR^3, \IR^4)$, with anti-Hermitian operators $Q(A)$ (not the ones defined in \S 6.1!), property $b)$ states that $t \to Q(\sa(t))$ is differentiable for all $\sa \in \A$ with $[\epsilon, \dot{Q}(\sa(t))] \in I_2(\HH)$. But this cannot be the case. If we take, for example, an $\sa \in C^\infty_c(\IR^4, \IR^4)$ with $\sa(t) \equiv 0$ for $t \leq 0$, but $\underline{\sa}(1) \neq 0$, we conclude
\begin{align*} \bigl\lVert [\epsilon, Q(\sa(1))] \bigr\rVert_2 = \bigl\lVert \bigl[ \epsilon , \int\limits_{0}^1 \dot{Q}(\sa(t)) \, \mathrm{d}t\bigr] \bigr\rVert_2 
 \leq \int\limits_{0}^1\, \bigl\lVert\bigl[ \epsilon,  \dot{Q}(\sa(t))\bigr] \bigr\rVert_2 \, \mathrm{d}t < \infty 
 \end{align*}
(assuming integrability on the right-hand side) and thus $Q(\sa(1)) \in \mathfrak{u}_{\rm res}$, hence $\T_1(\sa) = e^{Q(A(1))} \in \Ur(\HH)$. This is in contradiction with $\T$ being a renormalization, because $\underline{\sa}(1) \neq 0$ implies by Thm. \ref{Thm:polarizationclasses} :  $C[\sa(t)] \neq [\HH_-]$.
We conclude that the renormalization $\T$ cannot be smoothening.\\  
\end{proof}

\noindent These considerations help to illustrate how restrictive the ``smoothing'' property for a renormalization actually is. A smoothening renormalization will always require more information on the interaction potential, than is necessary to identify the polarization classes, because it has to control the infinitesimal variation of the field as well. In particular, it follows that the renormalization defined by the operators $e^{Q^{\sa(t)}}$, constructed in \cite{DeDueMeScho} and introduced in Section 6.1, is not smoothening.

\newpage



\section{Geometric Second Quantization}
Finally, we define the method of second quantization by parallel transport in the principle bundle
$\Gres(\HH) \rightarrow \Gl_{\rm res}(\HH)$, w.r.to the connection $\Gamma_\Theta$ defined by the one-form \eqref{connection}.\\

\noindent In the following, let $\sa \in C_c^{\infty} (\mathbb{R}^4, \mathbb{R}^4)$ and $U_I^\sa(t,t'), \; t,t', \in \IRinf$ the unitary (interaction picture) time evolution for the external field  $\sa$.
After applying a suitable smoothening renormalization $\T_t$ as definded in the pevious section, the renormalized time evolution $U^{\sa}_{ren}(t,t')$ is a two-parameter semi-group in $\Ur(\HH) \subset \Gl_{\rm res}(\HH)$ and continuously differentiable in $t$ with respect to the differentiable structure on $\Gl_{\rm res}(\HH)$. We construct its second quantization i.e. the lift to the group $\Ures(\HH)$ acting on the Fock space by the following prescription:


\begin{framed}{\parbox{15cm}{\onehalfspacing\centering{
\hspace{-3cm}\vspace{-2mm}For $t_1 \geq t_0 \in \IR$ we define the time evolution
\hspace{-3cm}\begin{equation}\hspace{-3cm}\vspace{-3mm}\mathfrak{U}(t, t_0), \; t \in [t_0 , t_1] \end{equation}
\hspace{-3cm} between  $t_1$ and $t_0$ on the Fock space as the $\Gamma_\Theta$-horizontal lift\\
\hspace{-3cm} of the renormalized one-particle time evolution
$U^\sa_{ren}(t , t_0), \; t \in [t_0 , t_1]$\\
  \hspace{-3cm} to $\Ures(\HH)$ with initial condition $\mathfrak{U}(t_0, t_0) = \mathds{1}_{\Ures}$.\\
 }}}\end{framed}
 
\vspace*{3mm}
\noindent In other words: for $t_1 \geq t_0 \in \IR$ we define the lift $\mathfrak{U}(t_1, t_0) \in \Ures(\HH)$
of $U^\sa_{ren}(t_1, t_0)$ as the \textit{parallel transport} of $\mathds{1} \in \Ures(\HH)$ along the path
\begin{equation*}s \rightarrow U^\sa_{ren}(t_0 + s , t_0), \; s \in [0, t_1-t_0]\end{equation*} 
in $\Ur(\HH)$ with respect to the connection determined by $\Theta$.\\

\noindent This definition yields a well-defined lift of the renormalized time evolution to the fermionic Fock space. In particular, it gives a smooth prescription for the phase of the implementations. 

 \hspace{-4cm}
\begin{figure}[h]
  \begin{center}
    {\includegraphics[scale=0.50]{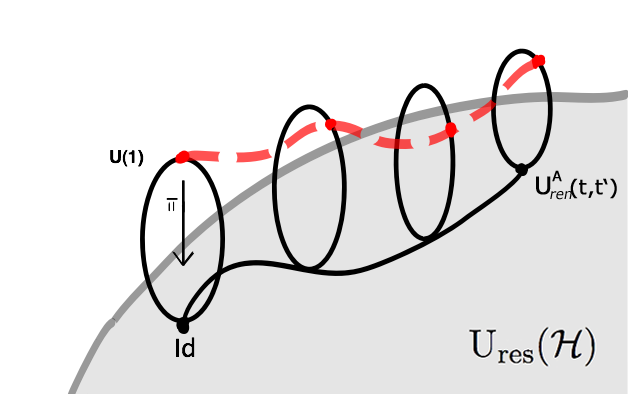}}
    \caption{\label{fig:Parallel Transport} Lifting the time evolution by parallel transport.}
  \end{center}
\end{figure}

\subsubsection{Second Quantization of the S-matrix}

\noindent Since $\sa$ has compact support in time, so do $U_I^\sa(\cdot, \cdot)$ and the renormalization $\T_t(\sa)$. 
Therefore, there exists $ T > 0$ such that $U^\sa_{ren}(t_1,t_0) = U_I^{\sa}(T , -T)$, whenever $t_1 \geq T, t_0 \leq -T$.\\
In particular, for \textit{any} such $T\in \IR$:
\begin{equation*}U^\sa_{ren}(T,-T)= U_I^{\sa}(\infty , -\infty) = S\end{equation*} 
and the S-matrix is unaltered by the renormalization. Therefore,
\begin{equation}\label{gammaforS}\gamma : s \longrightarrow  U^\sa_{ren}(-T +s , -T), \; s \in [0 , 2T]\end{equation} 
is a differentiable path from $\mathds{1}$ to $S$ in $\Ur(\HH) \subset \Gl_{\rm res}(\HH)$. The second quantization $\textbf{S}$ of the scattering operator $S$ is then defined as the parallel transport of  $\mathds{1} \in \Ures(\HH)$ along $\gamma$.\\

\noindent This procedure determines the second quantized scattering matrix in a well-defined manner. However, as we will see, the phase of $\textbf{S}$ will depend on the choice of the renormalization. Therefore, it makes sense to write \begin{equation}\mathbf{S} = \mathbf{S}[\sa,\T]; \, \sa \in \A, \T\; \text{a renormalization.} \end{equation}

\subsubsection{Second Quantization of the generators}
 
\noindent A connection on the principle bundle allows us to lift not only paths, but also vector fields from the base-manifold to the principle bundle in a unique way. In our setting, this means that it distinguishes unique lifts of the \textit{renormalized interaction Hamiltonians} to the universal covering of the Lie algebra. One might call this a \textit{second quantization of the Hamiltonians.}\\

\begin{Theorem}[Second Quantization of the Hamiltonians]
\mbox{}
\begin{enumerate}[i)]
\item Horizontal lifts with respect to the connection $\Gamma_{\Theta}$ yields a continuous section 
\begin{equation} \mathrm{d}\Gamma: \mathfrak{u}_{\rm res} \to \tilde{\mathfrak{u}}_{\rm res} \end{equation}
of the Lie algebra $\mathfrak{u}_{\rm res}$ of $\Ur(\HH)$ into its central extension $\tilde{\mathfrak{u}}_{\rm res}$. This defines a \emph{``second quantization''} of self-adjoint operators satisfying the Shale-Stinespring condition.\\ 
The section gives rise to a \emph{commutator anomaly}
\begin{equation}  [\mathrm{d}\Gamma(X),  \mathrm{d}\Gamma(Y) ] = c(X,Y)\, \mathrm{d}\Gamma([X,Y]) \end{equation}
with $c \in \mathrm{H}^2(\mathfrak{u}_{\rm res}, \IR)$ the Schwinger-cocylce \eqref{Schwingercocycle}.

\item If $V_{\rm ren}^\sa$ is the renormalized interaction \eqref{Zren} and
\begin{equation} h(t) = h[\sa,T](t) := e^{iD_0t} V_{\rm ren}^\sa(t) e^{-iD_0t} \end{equation} 
the renormalized interaction Hamiltonian, the so defined second quantization\\
$\mathfrak{h}(t):= i\, \mathrm{d}\Gamma(-i\, h(t))$ is the generator of the time evolution $\UU(t,t')$ on the Fock space.
\end{enumerate}
\end{Theorem}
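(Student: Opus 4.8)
<br />

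The plan is to prove the two parts of the theorem essentially by collecting earlier results. For part i), the key observation is that horizontal lift with respect to a connection on a principal bundle is, infinitesimally, a linear splitting of the corresponding Lie algebra sequence. Concretely, the connection form $\Theta = \mathrm{pr}_\IC\circ\omega_R$ identifies, via the local trivialization $\phi$ of \eqref{localtriv}, the horizontal subspace $\Gamma_e\subset\T_e\Gres$ with a linear complement of $\ker(\mathrm{D}_e\pi)=\IC$ inside $\tilde{\mathfrak g}_1$; restricting to the unitary case gives a linear section $\mathrm d\Gamma\colon\mathfrak u_{\rm res}\to\tilde{\mathfrak u}_{\rm res}$ with $\dot\pi\circ\mathrm d\Gamma=\mathrm{Id}$. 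Continuity is immediate since $\Theta$ is smooth and $\phi$ is a diffeomorphism. The commutator anomaly is then read off from the curvature computation in Proposition \ref{Prop:curvatureSchwinger}: the failure of $\mathrm d\Gamma$ to be a Lie algebra homomorphism is precisely $[\mathrm d\Gamma(X),\mathrm d\Gamma(Y)]-\mathrm d\Gamma([X,Y])$, which by the definition \eqref{Theta} of the Lie algebra cocycle attached to the section equals the curvature $\Omega(\widetilde X,\widetilde Y)=c(X,Y)$, with $c$ the Schwinger cocycle \eqref{Schwingercocycle}. (One should state the anomaly in the corrected form $[\mathrm d\Gamma(X),\mathrm d\Gamma(Y)]=\mathrm d\Gamma([X,Y])+c(X,Y)\cdot\mathds 1$, matching \eqref{Schwingerinrepr}.)

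For part ii) the plan is to differentiate the defining equation of geometric second quantization. By the boxed definition, $\UU(t,t')$ is the $\Gamma_\Theta$-horizontal lift of $s\mapsto U^\sa_{ren}(s,t')$, so $\UU(\cdot,t')$ is the unique curve in $\Ures(\HH)$ with $\pi\circ\UU(t,t')=U^\sa_{ren}(t,t')$, $\UU(t',t')=\mathds 1$, and tangent vector $\dot\UU(t,t')\in\ker\Theta$ for all $t$. Here Theorem \ref{Thm:Urengenerators} (together with the assumption that $\T$ is a smoothening renormalization) guarantees that $h(t)=e^{iD_0t}V^\sa_{ren}(t)e^{-iD_0t}\in\mathfrak u_{\rm res}$, so that $-ih(t)=\partial_t U^\sa_{ren}(t,t')\,U^\sa_{ren}(t,t')^{-1}$ is a well-defined element of $(-i)\mathfrak u_{\rm res}=\T_e\Ur$ in the $\|\cdot\|_\epsilon$-topology, hence $U^\sa_{ren}(\cdot,t')$ is a genuine differentiable path in $\Ur(\HH)$ to which parallel transport applies. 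Writing $\widetilde h(t)\in\tilde{\mathfrak u}_{\rm res}$ for the element generating $\UU$ in the sense $\partial_t\UU(t,t')=\widetilde h(t)\cdot\UU(t,t')$ (right-invariance of $\Theta$ is what makes this a left-multiplicative generator and preserves the semigroup property, cf. \eqref{eq:Urescomposition} and the discussion in \S\,8), the horizontality condition $\dot\UU(t,t')\in\ker\Theta$ together with right-invariance says exactly that $\omega_R(\dot\UU)$ has vanishing $\IC$-component, i.e. $\widetilde h(t)=(\mathrm d\phi)^{-1}(-ih(t),0)$, which is precisely $-i\,\mathrm d\Gamma(-ih(t))$ under the identification from part i). Multiplying by $i$ gives $\mathfrak h(t)=i\,\mathrm d\Gamma(-ih(t))$ as the generator, which is the claim.

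Concretely, I would carry out the steps in this order: (1) set up the identification $\mathrm d\phi\colon\tilde{\mathfrak u}_{\rm res}\xrightarrow{\ \cong\ }\mathfrak u_{\rm res}\oplus i\IR$ induced by the local section $\sigma$ of Lemma \ref{Uressection}, and define $\mathrm d\Gamma$ as $X\mapsto(\mathrm d\phi)^{-1}(X,0)$; verify $\dot\pi\circ\mathrm d\Gamma=\mathrm{Id}$ and continuity; (2) observe that $\ker\Theta$ at $e$ is exactly $\mathrm{im}(\mathrm d\Gamma)$, and that by right-invariance $\ker\Theta$ at an arbitrary point is the right-translate, so horizontal lift of a path through $e$ tangent to $X\in\mathfrak u_{\rm res}$ has initial velocity $\mathrm d\Gamma(X)$; (3) invoke Proposition \ref{Prop:curvatureSchwinger} and the cocycle identity \eqref{Theta} to get the commutator anomaly; (4) for part ii), use Theorem \ref{Thm:Urengenerators} and Corollary \ref{Cor:badnews} to justify that the renormalized time evolution is a differentiable path in $\Ur(\HH)$, then differentiate the horizontal-lift equation to extract the generator. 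The main obstacle I anticipate is step (2)/(4): making the passage from ``horizontal lift of a path'' to ``the generator of the lifted semigroup is the horizontal lift of the generator'' fully rigorous on an infinite-dimensional Banach Lie group — one needs that parallel transport along $U^\sa_{ren}(\cdot,t')$ is differentiable in $t$ with $\|\cdot\|_\epsilon$-valued derivative, and that the right-invariance of $\Theta$ genuinely translates into the multiplicativity $\UU(t_2,t_0)=\UU(t_2,t_1)\UU(t_1,t_0)$ needed to speak of ``the time evolution on the Fock space'' and its generator. This is where the Banach-manifold subtleties flagged in Chapter 7 actually bite, and I would treat it carefully rather than by analogy with the finite-dimensional case, appealing to the Inverse Mapping Theorem and the explicit local formula \eqref{ptformula} for parallel transport to control the $\IC$-component.
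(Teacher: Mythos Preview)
Your proposal is correct and follows essentially the same route as the paper: for part i) you define $\mathrm d\Gamma$ as the horizontal lift at the identity (equivalently, the section $X\mapsto(\dot\phi)^{-1}(X,0)$) and read off the commutator anomaly from Proposition~\ref{Prop:curvatureSchwinger}; for part ii) you use right-invariance of $\Theta$ to argue that the right-translate $-i\,\mathfrak h(t)\,\UU(t,t')$ of the horizontal vector $\mathrm d\Gamma(-ih(t))$ is horizontal and projects to $-ih(t)\,U^\sa_{ren}(t,t')$, so by uniqueness of the horizontal lift it equals $\partial_t\UU(t,t')$ --- exactly the paper's argument. Your correction of the anomaly formula to the additive form $[\mathrm d\Gamma(X),\mathrm d\Gamma(Y)]=\mathrm d\Gamma([X,Y])+c(X,Y)\cdot\mathds 1$ is well spotted (the statement as printed is a typo); watch the bookkeeping of $i$'s in part ii), since by your own definition $(\mathrm d\phi)^{-1}(-ih(t),0)=\mathrm d\Gamma(-ih(t))$ without an extra factor of $-i$.
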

\newpage
\noindent Diagrammatically, if $\Gamma$ denotes second quantization of unitary operators and $\mathrm{d}\Gamma$ second quantization of their self-adjoint generators, we have 

\begin{equation*}
\begin{xy}
  \xymatrix{
\tilde{\mathfrak{u}}_{\rm res} & \mathfrak{h}(t) \ar@<5pt>[d]^{\dot{\pi}} \ar[r]^{\exp} &  \UU(t,t')\ar@<5pt>[d]^{\pi}    & \Ures(\HH) \\
    \mathfrak{u}_{\rm res} & h_{\rm}(t)\ar@<5pt>[u]^{\mathrm{d}\Gamma}\ar[r]^{\exp}   & U_{\rm}^\sa(t,t') \ar@<5pt>[u]^{\Gamma}&\Ur(\HH) 
  }
\end{xy}
 \end{equation*}

\noindent What is essential here (and not expressed in the diagram) is that the lift of the Hamiltonians are determined by the geometric structure alone. 

\begin{proof}[Proof of the Theorem] 
\mbox{}
\begin{enumerate}[i)]
\item A connection in general and $\Gamma_{\Theta}$ in particular, determines unique horizontal lifts of vectors on the base manifold $\Ur(\HH)$ to the principle bundle $\Ures(\HH)$.\\
Under the canonical identification 
\begin{equation*} \mathfrak{u}_{\rm res} \cong \T_e\Ur, \hspace{7mm} \tilde{\mathfrak{u}}_{\rm res} \cong \T_e\Ures \end{equation*}
this yields the desired section $\mathrm{d}\Gamma$ in the central extension of Lie algebras:
\begin{equation}
\begin{xy}
\xymatrix{
0 \ar[r] &  \IR\ar[r]_{\dot{\imath}} & \tilde{\mathfrak{u}}_{\rm res} \ar[r]_{\dot{\pi}} & \mathfrak{u}_{\rm res}\ar@/_5mm/[l]_{\mathrm{d}\Gamma}\ar[r] & 0}
\end{xy}.
\end{equation}
We know from the discussion in Section 3.5 that such a section gives rise to a Lie algebra 2-~cocycle and it follows from Prop. \ref{Prop:curvatureSchwinger} that the cocycle corresponding to $\mathrm{d}\Gamma$ equals the Schwinger term.

\item $\mathfrak{h}(t)$ is defined by $-i\,\mathfrak{h}(t) = \mathrm{d}\Gamma(-i h(t))$ for every $t \in \IR$. \\
For fixed $t_0$, the renormalized time evolution $U^\sa_{ren}(t,t')$ satisfies 
\begin{equation*}i\,\partial_t\, U^\sa_{ren}(t,t_0) = h(t) U^\sa_{ren}(t,t_0). \end{equation*}
By construction, $t \mapsto \UU(t,t_0),\, t \in [t_0 , \infty)$ is precisely the integral curve to the
horizontal lift of the vector field 
\begin{equation}X_{\gamma}(t) := -i\, \bigl(R_{U(t,t_0)}\bigr)_* \,h(t) = -i \,h(t)\,U(t,t_0)
\end{equation}
along $\gamma(t) = U^\sa_{ren}(t,t_0),\,  t \in [t_0 , \infty)$.
Since the connection $\Gamma_{\Theta}$ is invariant under the right-action of $\Ures(\HH)$ on itself,
\begin{equation*}\widetilde{X}_{\gamma}(t) := -i\, \bigl(R_{\UU(t,t')}\bigr)_* \, \mathfrak{h}(t) = -i\, \mathfrak{h}(t) \UU(t,t')
\end{equation*}
is horizontal for all $t \in \IR$ and obviously a lift of $X_{\gamma}(t)$ to the tangent bundle $\T\Ures$ of $\Ures(\HH)$.
It follows that $\UU(t,t')$ satisfies
\begin{equation*} i\,\partial_t\, \UU(t,t_0) = \mathfrak{h}(t)\, \UU(t,t_0), \; \forall t,t_0 \in \IR\end{equation*} 
hence $\mathfrak{h}(t)$ generates the second quantized time evolution $\UU(t,t_0)$, as was claimed.
\end{enumerate}
\end{proof}

\subsection{Causality}
For the second quantized time evolution $\mathfrak{U}(t,t')$ we can derive the following
important result:

\begin{Theorem}[Semigroup Structure of the Time Evolution]\label{Thm:semigroup}
\mbox{}\\
The lifted time evolution $\mathfrak{U}(t,t')$ in $\Ures(\HH)$, defined by horizontal lifts as above, preserves the semi-group structure of the time evolution i.e. satisfies
\begin{equation}\label{semigroup} 
\left\{
\begin{array}{ll}
\UU(t,t)  = \mathds{1}_{\Ures} & \forall\, t \in \IR \\ \\
 \UU(t_2, t_1)\,\UU(t_1,t_0) = \UU(t_2,t_0)& \forall\, t_0\leq t_1\leq t_2 \in \IR
\end{array}
\right.
\end{equation}
These properties justify its denomination as a ``time evolution'' on the Fock space.
\end{Theorem}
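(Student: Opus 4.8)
The plan is to exploit the fact that the lifted time evolution $\UU(t,t')$ is, by definition, built entirely out of horizontal lifts with respect to the right-invariant connection $\Gamma_\Theta$, together with two structural inputs: the uniqueness of horizontal lifts for prescribed initial condition, and the right-invariance $iii)$ of $\Gamma_\Theta$ proved in the classification section (that $(R_g)_*\Gamma_p=\Gamma_{pg}$ for all $g\in\Gres$). The first identity $\UU(t,t)=\mathds{1}_{\Ures}$ is immediate: the renormalized one-particle path $s\mapsto U^\sa_{ren}(t+s,t)$ starts at $\mathds{1}_\HH$, its horizontal lift with initial condition $\mathds{1}_{\Ures}$ is the constant path at the degenerate parameter value $s=0$, so $\UU(t,t)=\mathds{1}_{\Ures}$ by construction.

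For the semigroup property the key is that the one-particle renormalized time evolution already satisfies the cocycle relation $U^\sa_{ren}(t_2,t_1)U^\sa_{ren}(t_1,t_0)=U^\sa_{ren}(t_2,t_0)$ (it is a two-parameter semigroup in $\Ur(\HH)$, as recalled just before Theorem \ref{Thm:semigroup}, because it is the renormalization of the genuine Dirac evolution, which has this property). Fix $t_0\le t_1\le t_2$. Consider the path $\gamma(s):=U^\sa_{ren}(t_0+s,t_0)$ for $s\in[0,t_2-t_0]$; by definition $\UU(t,t_0)$ is the $\Gamma_\Theta$-horizontal lift of $\gamma$ starting at $\mathds{1}_{\Ures}$, evaluated at parameter $t-t_0$. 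Now consider instead the curve $\delta(s):=\UU(t_1,t_0)\cdot\big[\text{horizontal lift of }s\mapsto U^\sa_{ren}(t_1+s,t_1)\text{ starting at }\mathds{1}_{\Ures}\big]$ for $s\in[0,t_2-t_1]$. The claim is that $\delta$ is a horizontal lift of $s\mapsto U^\sa_{ren}(t_1+s,t_0)$ passing through $\UU(t_1,t_0)$ at $s=0$.

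To verify this, first note $\pi\circ\delta(s)=U^\sa_{ren}(t_1+s,t_1)\cdot U^\sa_{ren}(t_1,t_0)=U^\sa_{ren}(t_1+s,t_0)$ by the one-particle semigroup law, using that $\pi$ is a group homomorphism and $\pi(\UU(t_1,t_0))=U^\sa_{ren}(t_1,t_0)$. Second, $\delta$ is horizontal: writing $\eta(s)$ for the horizontal lift of $s\mapsto U^\sa_{ren}(t_1+s,t_1)$ starting at $\mathds{1}_{\Ures}$, we have $\delta(s)=L_{\UU(t_1,t_0)}\eta(s)$; but here I actually need the \emph{right} translation, so I instead set $\eta$ to be the horizontal lift and observe $\delta(s)=R_{\UU(t_1,t_0)}$ applied to a horizontal curve only if the group element is on the right — which is exactly why the right-invariant connection was chosen. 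So the correct bookkeeping is: $\UU(t,t_0)$ is the horizontal lift starting at $\mathds{1}$, hence of the form where the group acts on the right of an integral curve, and the curve $s\mapsto \eta(s)\cdot\UU(t_1,t_0)=R_{\UU(t_1,t_0)}\eta(s)$ has tangent $(R_{\UU(t_1,t_0)})_*\dot\eta(s)\in(R_{\UU(t_1,t_0)})_*\Gamma_{\eta(s)}=\Gamma_{\eta(s)\UU(t_1,t_0)}$ by property $iii)$, so it is horizontal and it projects (again since $\pi$ is a homomorphism) to $U^\sa_{ren}(t_1+s,t_1)U^\sa_{ren}(t_1,t_0)=U^\sa_{ren}(t_1+s,t_0)$, with value $\mathds{1}\cdot\UU(t_1,t_0)=\UU(t_1,t_0)$ at $s=0$. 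By uniqueness of horizontal lifts (valid on Banach manifolds, since the lifting ODE has locally Lipschitz right-hand side — this is where the Banach-manifold remark in the chapter intro is used), this curve coincides with the restriction of $s\mapsto\UU(t_0+s,t_0)$ to $s\in[t_1-t_0,t_2-t_0]$, reparametrized. Evaluating at the endpoint gives $\UU(t_2,t_1)\UU(t_1,t_0)=\UU(t_2,t_0)$.

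The main obstacle I expect is purely organizational rather than conceptual: getting the left/right conventions consistent so that the right-invariance property $iii)$ is what one actually invokes (this is precisely the reason Langmann--Mickelsson's left-invariant connection is replaced here by the right-invariant one, and the proof must make that payoff visible), and justifying uniqueness of horizontal lifts in the infinite-dimensional Banach setting — which I would dispatch by citing \cite{Lang} for the relevant existence-uniqueness theorem for differential equations on Banach manifolds, exactly as flagged in the chapter preamble. A minor additional point worth a sentence is that $\pi:\Ures(\HH)\to\Ur(\HH)$ is a Lie group homomorphism so that $\pi(\UU\cdot\UU')=\pi(\UU)\pi(\UU')$, which is what lets the one-particle semigroup law propagate upward.
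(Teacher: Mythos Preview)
Your proposal is correct and follows essentially the same route as the paper: both arguments verify that the right-translated horizontal lift $s\mapsto \UU(t_1+s,t_1)\cdot\UU(t_1,t_0)$ projects to $U^\sa_{ren}(t_1+s,t_0)$ via the one-particle semigroup law, check horizontality via the right-invariance $(R_g)_*\Gamma_p=\Gamma_{pg}$ of the connection, and conclude by uniqueness of horizontal lifts. The paper is somewhat terser and does not explicitly cite a Banach-manifold ODE result for uniqueness, but the content is the same; your mid-course correction from left to right multiplication should simply be cleaned up in a final write-up so that only the right-translation argument appears.
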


\begin{proof}$\UU(t,t)  = \mathds{1}, \forall\, t \in \IR$ holds by construction. For the composition property, note that $\UU(t_2,t_0)$ is the end-point of the horizontal lift of 
\begin{equation*} s \rightarrow U^\sa_{ren}(t_1+s , t_0),\; s \in [0 , t_2 - t_1]\end{equation*}
with starting point $\UU(t_1 , t_0)$. On the other hand, consider the curve \begin{equation*} s \rightarrow  \UU(s, t_1)\,\UU(t_1,t_0),\; s \in [0 , t_2 - t_1] \end{equation*}
It has the same starting point  $\UU(t_1 , t_0) \in \Ures(\HH)$ and projects down to
\begin{equation*}\pi\bigl( \UU(s, t_1)\,\UU(t_1,t_0) ) = U^\sa_{ren}(s, t_1)\,U^\sa_{ren}(t_1,t_0) = U^\sa_{ren}(s, t_0) \in \Ur(\HH). \end{equation*}
Furthermore, $\frac{d}{ds} \,  \UU(s, t_1)\,\UU(t_1,t_0) = (R_{\UU(t_1,t_0)})_* \, \dot{\UU}(s, t_1)$ is horizontal, because $X_{(s)}:= \dot{\UU}(s, t_1)$ is horizontal by construction of the lift and the connection $\Gamma_{\Theta}$ is invariant under right-action of $\Ures$ on itself. That is, if $\Theta$ denotes the connection one-form, we have:
\begin{equation*}\Theta\bigl((R_{\UU(t_1,t_0)})_* X_{(s)}\bigr) = (R_{\UU(t_1,t_0)})^*\Theta\, (X_{(s)}) =  \Theta(X_{(s)}) = 0 \end{equation*}
By uniqueness of the parallel transport it follows that both curves are actually the same.\\ 
In particular  $\UU(t_2, t_1)\,\UU(t_1,t_0) = \UU(t_2,t_0)$.\\
\end{proof}
\noindent Note that  the same argument wouldn't go through with the left-invariant Langmann-Mickelsson connection. We could however obtain the analogous result by lifting the unitary evolution ``backwards'' in time, i.e. by lifting the path \begin{equation*}s \rightarrow U^\sa_{ren}(t_1, t_1-s), \; s \in [0, t_1-t_0] \end{equation*} 
This is a perfectly valid procedure, but seems more artificial from a physicist's point of view. Therefore we proposed the right-invariant version of the Langmann-Mickelsson connection; this convention fits better with the usual form of the time evolution where subsequent time-steps correspond to unitary operators multiplied from the left.\\

To appreciate the value of this result, let's first make clear what's \textit{not} the point.
The semi-group structure alone wouldn't be worth the trouble, it actually comes fairly cheap:
 \newpage

\begin{Proposition}[Lifting the semi-group Structure]\label{Prop:semigroup}
\mbox{}\\
Let $U(t,t'), t,t' \in \IR$ a two-parameter semi-group in $\Ur(\HH)$ with compact support in time.
\item For every $t \in \IR$ choose \emph{any} lift $\tilde{U}(t, -\infty)$ of $U(t, -\infty)$ to $\Ures(\HH)$ and set
\begin{equation*} \tilde{U}(t_1, t_0) :=  \tilde{U}(t_1, -\infty)\,\tilde{U}(t_0, -\infty)^{-1}\end{equation*}
for $t_1\geq t_0 \in \IRinf$. The so defined lift has the semi-group properties \eqref{semigroup}.\\
If  $t \rightarrow \tilde{U}(t, -\infty)$ is continuous/differentiable, then $\tilde{U}(t,t')$ is continuous/differentiable in t. 
\end{Proposition}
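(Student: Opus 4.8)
The plan is to prove the statement directly from the definitions, since it is essentially a formal verification. Let $U(t,t')$ be the given two-parameter semi-group in $\Ur(\HH)$ with compact time support, and fix for every $t\in\IR$ an arbitrary lift $\tilde U(t,-\infty)\in\Ures(\HH)$ of $U(t,-\infty)$, i.e.\ $\pi(\tilde U(t,-\infty))=U(t,-\infty)$. Since $\sa$ has compact support in time, for $t$ sufficiently negative (below the time support) we have $U(t,-\infty)=\mathds 1_{\HH}$, so there is freedom in how one chooses the lift there, but this does not affect the argument. Define $\tilde U(t_1,t_0):=\tilde U(t_1,-\infty)\,\tilde U(t_0,-\infty)^{-1}$ for $t_1\geq t_0$ in $\IRinf$.

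First I would check that $\tilde U(t_1,t_0)$ is indeed a lift of $U(t_1,t_0)$. Because $\pi$ is a group homomorphism (being the projection in the central extension \eqref{GLresExtension}, restricted to $\Ures$), we get
\begin{equation*}
\pi\bigl(\tilde U(t_1,t_0)\bigr)=\pi\bigl(\tilde U(t_1,-\infty)\bigr)\,\pi\bigl(\tilde U(t_0,-\infty)\bigr)^{-1}=U(t_1,-\infty)\,U(t_0,-\infty)^{-1}=U(t_1,t_0),
\end{equation*}
where the last equality uses the semi-group property of the one-particle time evolution. So $\tilde U(t_1,t_0)$ is a well-defined element of $\Ures(\HH)$ lying over $U(t_1,t_0)$.

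Next I would verify the two semi-group properties \eqref{semigroup}. The identity $\tilde U(t,t)=\tilde U(t,-\infty)\tilde U(t,-\infty)^{-1}=\mathds 1_{\Ures}$ is immediate. For the composition law, take $t_0\le t_1\le t_2$ and compute
\begin{equation*}
\tilde U(t_2,t_1)\,\tilde U(t_1,t_0)=\tilde U(t_2,-\infty)\tilde U(t_1,-\infty)^{-1}\,\tilde U(t_1,-\infty)\tilde U(t_0,-\infty)^{-1}=\tilde U(t_2,-\infty)\tilde U(t_0,-\infty)^{-1}=\tilde U(t_2,t_0),
\end{equation*}
so the cocycle-style cancellation is automatic by construction; this is exactly why the semi-group structure ``comes cheap''.

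Finally, for the regularity claim: the map $(g,h)\mapsto gh^{-1}$ on the Banach Lie group $\Ures(\HH)$ is smooth (composition of the smooth multiplication and the smooth inversion, inversion being smooth by the Inverse Mapping Theorem on Banach manifolds, as recalled in the chapter introduction). Hence if $t\mapsto\tilde U(t,-\infty)$ is continuous, respectively $C^1$, then so is $t\mapsto\tilde U(t,t_0)=\tilde U(t,-\infty)\,\tilde U(t_0,-\infty)^{-1}$ for each fixed $t_0$, as a composition of the continuous/$C^1$ curve with the smooth group operations; and similarly in $t_0$ by the same reasoning applied to inversion. I do not expect any real obstacle here — the only point requiring a word of care is that all statements are on infinite-dimensional Banach Lie groups, so one should cite the smoothness of multiplication and inversion on $\Ures(\HH)$ rather than argue in coordinates. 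The contrast with the horizontal-lift construction of Theorem \ref{Thm:semigroup} — which additionally encodes the \emph{causal} geometric phase — is what gives the proposition its point, and that contrast is already made in the surrounding text, so the proof itself need only record these elementary verifications.
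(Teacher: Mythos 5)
Your proof is correct and follows essentially the same route as the paper's: the composition law and the identity property are verified by the same telescoping cancellation. You add two small pieces that the paper leaves implicit — the check that $\tilde U(t_1,t_0)$ actually lies over $U(t_1,t_0)$ (via $\pi$ being a homomorphism) and the regularity claim (via smoothness of multiplication and inversion on the Banach Lie group $\Ures(\HH)$) — both of which are correct and appropriate to record.
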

\begin{proof}For $t_2\geq t_1\geq t_0 \in \IR$ we find
\begin{align*}\tilde{U}(t_2, t_1)\,\tilde{U}(t_1,t_0)& =  \tilde{U}(t_2, -\infty)\,\tilde{U}(t_1, -\infty)^{-1}\,\tilde{U}(t_1-\infty)\,\tilde{U}(t_0, -\infty)^{-1} \\
& =  \tilde{U}(t_2, -\infty)\,\tilde{U}(t_0, -\infty)^{-1} = \tilde{U}(t_2, t_0) \end{align*}
And of course $\tilde{U}(t,t) =  \tilde{U}(t, -\infty)\,\tilde{U}(t, -\infty)^{-1} = \mathds{1}, \forall t \in \IR$.\\
\end{proof}


\noindent To appreciate the virtue of the geometric construction of the time evolution, we need to understand the difference between the results in theorem \ref{Thm:semigroup} and the previous proposition \ref{Prop:semigroup}.
If the semi-group structure is not the point, then what is?

\vspace{1mm}
\noindent The composition property
\begin{equation}\label{compositionproperty} \tilde{U}(t_2, t_1)\,\tilde{U}(t_1,t_0) = \tilde{U}(t_2, t_0), \forall t_0<t_1<t_2 \in \IR \end{equation}
is often related to the physical principle of \textit{causality}. But at this point, this is actually too big a word. The algebraic identities by themselves merely express the basic requirements for a two-parameter family of unitary operators to deserve the title of a ``time evolution'' in the first place. We must arrive at exactly the same state, whether we follow the evolution of a physical system from time $t_0$ to time $t_1>t_0$ and then from time $t_1$ to $t_2 >t_1$, or whether we skip the intermediate time-step and follow the evolution from $t_0$ to $t_2$ directly. If what the theory calls a ``state'' involves a phase, then these phases need to agree as well. So far, this is a question of \textit{consistency} rather than \textit{causality}. To relate the semi-group structure to something like a \textit{causal structure} of the physical theory, we have to take a closer look at the construction of the (second quantized) time evolution and what it involves.\\

The construction in Prop. \ref{Prop:semigroup} is not a serious proposal for a second quantization, but rather an instructive example to demonstrate why the semi-group properties \eqref{semigroup} alone are insufficient. The two-parameter group $\tilde{U}(t_1, t_0)$ in $\Ures(\HH)$ was constructed by choosing lifts of the one-particle time evolution operators $U(t, -\infty)$ first and using those lifts to generate the entire family. This has the following effect: if we want to call $\tilde{U}(t_1, t_0)$ a (second quantized) time evolution, we find that the evolution between times $t_0$ and $t_1>t_0$ depends on the \textit{entire history} of the physical system, i.e. on $\sa(t)$ for $t \in (-\infty, t_1]$. If we altered the electromagnetic fields in the distant past $t \ll t_0$, we would most likely get a different lift for the time evolution between $t_0$ and $t_1$. It's debatable whether this should be called a violation of causality (and the answer will depend on our very definition of that concept). But certainly, such a solution would be at odds with our conventional understanding of a \textit{deterministic} evolution. We would sacrifice what one might call the ``Cauchy  property''\footnote{Or ``Markov property'', if we borrow the language of probability theory.} of the system: that its future evolution is completely determined by the laws of physics, given the current state of the system at any moment in time (or on a spacelike Cauchy-surface in a relativistic setting).\\ 

If in the construction of Prop. \ref{Prop:semigroup} we used a different ``reference time'' than $t = -\infty$ (which we could do), the situation would be even worse. The time evolution of the system before that particular reference time would then depend on the electromagnetic potential \textit{in the future}. Obviously, this constitutes a violation of causality in a very strong sense.\footnote{Of course, we shouldn't forget that in all this, we are merely talking about a phfibre bundlease-factor which is not measurable \textit{per se}. Still, the relative phase between two states or the variation of the phase with the electromagnetic potential can (theoretically) have observable consequences (recall \eqref{eq:Strom von Phase}, for instance). What this means in practice (in the context of QED), however exceeds the author's competence.}\\ 

In our geometric construction, the second quantization of the renormalized time evolution $\UU(t,t')$ and its generators $\mathfrak{h}(t)$ depend only on the corresponding (``first quantized'') objects in the one-particle theory, on the renormalization used to transform them back to $\cu_{\rm res}$ or $\mathfrak{u}_{\rm res}$, respectively, and finally on the bundle-connection introduced in Chapter 7. In particular, the lifts of the renormalized time evolution and the renormalized interaction Hamiltonians are determined solely by the \textit{geometric structure} of the $\Ures(\HH)$ bundle. We have required that the renormalization is causal in the sense that $\T_t(\sa)$ depends only on $\sa(t)$ and its time-derivatives up to finite order $n$. Consequently, the geometric second quantization assures that $\UU(t_1,t_0)$ depends on the external field $\sa(t)$ (and its first $n$ time-derivatives) only at times $t \in [t_0,t_1]$.  Similarly, for fixed $t \in \IR$, the lifted interaction Hamiltonian $\mathfrak{h}(t)$ depends only on $\sa(t)$ and its first $n+1$ time-derivatives at time $t$. Conclusively, we assert that \textit{geometric second quantization preserves the causal structure of the one-particle theory.}

\subsubsection{Application to time-varying Fock spaces}
\noindent We can translate these insights back into the language of time-varying Fock spaces and derive the following result:
\begin{Theorem}[Time Evolution on time-varying Fock Spaces]
\label{Thm:Time evolution on time-varying Fock spaces}
\mbox{}\\
Given a unitary time evolution $U^\sa(t,t')$ and a family of infinite wedge spaces $(\FF_t)_{t \in \IR}$ over $\ocean(C(t))$, there exists right-operations by $R(t,t') \in \cu(\ell)$ such that 
\begin{equation*} \widetilde{U}(t,t'):= \lop{U(t,t')}\rop{R(t,t')}: \FF_{t} \longrightarrow \FF_{t'}, \; \forall t,t' \in \IR \end{equation*} 
defines a time evolution, i.e satisfies:
\begin{equation}
\left\{
\begin{array}{ll}
\UUt(t,t)  = \mathrm{Id}_{\FF_t} & \forall\, t \in \IR \\ \\
 \widetilde{U}(t_2,t_1)\, \widetilde{U}(t_1,t_0) = \widetilde{U}(t_2,t_0), & \forall\, t_0 \leq t_1 \leq t_2 \in \IR
\end{array}
\right.
 \end{equation}
Moreover, for all $t \geq t' \in \IR$, the operator $R(t.t')$ depends on the A-field (and possibly its time derivates) only in the time-interval $[t',t]$.
\end{Theorem}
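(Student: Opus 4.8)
The plan is to reduce the statement about time-varying Fock spaces to the already-established result on the geometric second quantization of the renormalized time evolution, Theorem \ref{Thm:semigroup}, by exploiting the duality between the two constructions encoded in Corollary \ref{Ures on wedge spaces}. First I would fix a smoothening renormalization $\T$ in the sense of Definition \ref{Def:Renormalization}; such a renormalization exists (e.g.\ the Langmann--Mickelsson renormalization \eqref{LaMirenormalization}), and by Theorem \ref{Thm:Urengenerators} the renormalized interaction-picture time evolution $U^\sa_{ren}(t,t') = e^{itD_0}\T^*_t U^\sa(t,t') \T_{t'} e^{-it'D_0}$ is a $C^1$ two-parameter semi-group in $\Ur(\HH) \subset \Gl_{\rm res}(\HH)$. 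Then I would apply the geometric second quantization: parallel transport with respect to the right-invariant connection $\Gamma_\Theta$ yields a lift $\UU(t,t') \in \Ures(\HH)$ which, by Theorem \ref{Thm:semigroup}, satisfies the semi-group identities $\UU(t,t) = \mathds{1}$ and $\UU(t_2,t_1)\UU(t_1,t_0) = \UU(t_2,t_0)$ for $t_0 \leq t_1 \leq t_2$.

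The next step is to transport this back to the infinite wedge spaces. Write $\UU(t,t') = [(U^\sa_{ren}(t,t'), q(t,t'))] \in \Ures(\HH)$ for suitable $q(t,t') \in \cu(\HH_+)$, where $q$ is uniquely determined up to $\cu^1(\HH_+)$ by horizontality. Now choose isometric Dirac seas $\Phi_t: \ell \to \HH$ with $\im(\Phi_t) \in C[\underline\sa(t)]$ generating the given family $\FF_t = \FF_{\cs(\Phi_t)}$; concretely one may take $\Phi_t := e^{it D_0}\T_t(\sa)\,\Phi_0$ for a fixed reference sea $\Phi_0 \in \ocean(\HH_-)$, so that $\Phi_t$ depends only on $\sa(t)$ and its first $n$ time derivatives. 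Following the recipe of Corollary \ref{Ures on wedge spaces}, the element $[(U,R)] \in \Ures^0$ acts between wedge spaces as $\lop U \rop{R^*_{\Phi}}$ with $R_\Phi = \Phi^*R\Phi \in \cu(\ell)$. Applied to our situation with the correct bookkeeping for the varying $\Phi_t$ (left and right ends differ), the lift $\UU(t,t')$ corresponds under the anti-isomorphisms $\mathfrak{f}_t: \FF^{W(t)}_{\rm geom} \to \FF_{\cs(\Phi_t)}$ to a unitary $\widetilde U(t,t') = \lop{U^\sa(t,t')} \rop{R(t,t')}: \FF_{t} \to \FF_{t'}$, where $R(t,t') := \Phi_{t'}^* \,\bigl(e^{-it'D_0}\T_{t'}\, q(t,t')\, \T^*_t e^{itD_0}\bigr)^* \,\Phi_t \in \cu(\ell)$ (up to a determinant-one factor which is immaterial by Lemma \ref{Lemma:Uniquenessuptoaphase}). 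Since $\mathfrak{f}_t$ are fixed (anti-)isomorphisms independent of the dynamics and the semi-group identity holds for $\UU$, it transports verbatim to $\widetilde U$; one checks $\widetilde U(t,t) = \mathrm{Id}_{\FF_t}$ directly since $\UU(t,t) = \mathds{1}$.

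Finally, for the causality clause I would trace the dependencies: $U^\sa(t,t')$ depends on $\sa$ only on $[t',t]$ by finite propagation / the Dyson series; the renormalizers $\T_t$, $\T_{t'}$ and the seas $\Phi_t$, $\Phi_{t'}$ depend on $\sa$ only at the endpoints (and finitely many time derivatives there), by property iii) of Definition \ref{Def:Renormalization}; and the horizontal-transport factor $q(t,t')$ is, by the local formula \eqref{ptformula}, an integral over $s \in [t', t]$ of an expression built from $U^\sa_{ren}(s, t')$ and its $s$-derivative, hence depends on $\sa$ only on $[t', t]$ (with one extra derivative coming from $\dot q$ in the integrand, which enters $V^\sa_{ren}$ via \eqref{Zren}). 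Assembling $R(t,t')$ from these pieces shows it depends on $\sa$ and its time-derivatives only in $[t',t]$. The main obstacle I anticipate is purely bookkeeping: getting the two different reference seas $\Phi_t \neq \Phi_{t'}$ and the two different trivializations to match up consistently through the anti-isomorphism $\mathfrak{f}$ and the $\IZ$-shift needed for the charge-changing sectors — verifying that the determinant-one ambiguities at each stage really do cancel so that $\widetilde U$ is well-defined and the composition law holds on the nose rather than merely projectively. This is the step where one must be careful rather than clever; the semi-group property itself is then inherited for free from Theorem \ref{Thm:semigroup}.
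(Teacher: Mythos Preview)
Your overall strategy---reduce to Theorem \ref{Thm:semigroup} via the parallel transport and then push the semi-group structure back to infinite wedge spaces through Corollary \ref{Ures on wedge spaces}---is exactly what the paper does. But there is a genuine gap in how you handle the \emph{given} family $(\FF_t)_t$.

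The theorem hands you an \emph{arbitrary} family of wedge spaces over $\ocean(C(t))$. You write ``choose isometric Dirac seas $\Phi_t$ \ldots\ generating the given family $\FF_t = \FF_{\cs(\Phi_t)}$; concretely one may take $\Phi_t := e^{itD_0}\T_t(\sa)\Phi_0$.'' That concrete choice does not generate the given $\FF_t$; it generates a \emph{particular} family determined by the renormalization $\T$. In general $\FF_{\cs(e^{itD_0}\T_t\Phi_0)} \neq \FF_t$, and you have no control over the discrepancy. The paper handles this in two stages: first it introduces the intermediate family $\mathcal{G}_t := \FF_{\cs(\T_t\Phi_0)}$ coming from the renormalization, extracts from $\wedgeu\,\UU(t,t')$ right-operations $S(t,t') \in \cu(\ell)$ that inherit the semi-group law on the fixed space $\FF_{\cs(\Phi_0)}$, and repackages them as $\widetilde V(t,t') = \lop{U^\sa(t,t')}\rop{S(t,t')}: \mathcal{G}_{t'} \to \mathcal{G}_t$. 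Only then does it invoke the transitive action of $\cu(\ell)$ on oceans (Cor.~\ref{Cor:U(l) acts transitively}) to pick, for each $t$, an operator $r(t) \in \cu(\ell)$ with $\rop{r(t)}: \mathcal{G}_t \xrightarrow{\cong} \FF_t$, and sets $R(t,t') := r^*(t')\,S(t,t')\,r(t)$. The conjugation by the $r(t)$'s is what makes the semi-group identity survive the passage to the arbitrary family: $R(t_1,t_0)R(t_2,t_1) = r^*(t_0)S(t_1,t_0)S(t_2,t_1)r(t_2) = r^*(t_0)S(t_2,t_0)r(t_2) = R(t_2,t_0)$. Your attempt to go directly via varying $\Phi_t$'s and ``two different trivializations'' skips this conjugation step, and that is precisely where the bookkeeping you flag as the ``main obstacle'' becomes an actual missing argument rather than a routine check.

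One further remark: the determinant-one ambiguities you worry about do \emph{not} cancel. The paper notes right after the proof that the operators $r(t)$ are not unique (two choices differ by something in $\cu^1(\ell)$), so the $\cu(1)$-phase freedom reappears here; the theorem only claims existence of \emph{some} semi-group lift, not uniqueness.
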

\newpage
\begin{proof} Let $(\T_t)_t$ be a smoothening renormalization for the time evolution $U^\sa(t,t')$ and let $\UU(t,t')$ be the second quantized time evolution on $\FF_{\rm geom}$ constructed in Thm. \ref{Thm:semigroup}. We fix a $\Phi_0 \in \seas$ with $\im(\Phi_0) = \HH_-$ so that $\bigwedge \Phi_0$ in $\FF_{\cs(\Phi_0)}$ will play the role of the free vacuum state. Let $C(t)= C[\underline{\sa}(t)]$ be the polarization classes identified in Thm. \ref{Thm:polarizationclasses} [Identification of Polarization classes].  Then we have $\T_t \Phi_0 \in \ocean(C(t)), \; \forall t \in \IR$ and $\mathcal{G}_t := \FF_{\cs(T_t \Phi_0)}$ defines a family of Fock spaces over $C(t),\, t \in \IR.$ 
Under the isomorphism $\wedgeu$ defined in Cor. \ref{Ures on wedge spaces}, $\UU(t,t')$ becomes a time evolution on $\FF_{\cs(\Phi_0)}$. For every $t\geq t' \in \IR$, there exists an operator $S(t,t') \in \cu(\ell)$ such that:
\begin{equation*}\wedgeu\;\UU(t,t') = \lop{(U_{ren}^\sa(t,t'))}\, \rop{S(t,t')} = \lop{(U^0(0,t)\T^*_tU^\sa(t,t')\T_{t'}U^0(t,0))}\, \rop{S(t,t')}\end{equation*} 
and
\begin{equation*}S(t_1,t_0)S(t_2,t_1) = S(t_2,t_0), \; \forall\, t_0 \leq t_1  \leq t_2. \end{equation*}
We may also choose them such that $S(t,t) = \mathds{1}_{\ell}$ for all $t$. Note that the essential part is that those right-operations fit together in the right way to form a two-parameter semi-group. Thus, setting $\widetilde{V}(t,t') := \lop{U^\sa(t,t')}\, \rop{S(t,t')} $
we get a two-parameter family of unitary transformations $ \widetilde{V}(t,t'): \mathcal{G}_{t'} \to \mathcal{G}_t $
with $\widetilde{V}(t_2,t_1)\, \widetilde{V}(t_1,t_0) = \widetilde{V}(t_2,t_0), \; \forall\, t_0 < t_1 <t_2. $
This, however, is not really satisfying, yet, because the Fock spaces themselves are chosen by the smoothening renormalization.
If we have chosen a different family $(\FF_t)_{t \in \IR}$ of infinite wedge spaces over the correct polarization classes, it follows from Cor. \ref{Cor:U(l) acts transitively} [$\cu(\ell)$ acts transitively on oceans] that there exists for every $t$ a unitary transformation $r(t) \in \cu(\ell)$ with $\rop{r(t)}: \mathcal{G}_t \xrightarrow{\;\cong\,} \FF_t$.
For any $t \geq t' \in \IR$, we set
\begin{equation*} R(t,t') := r^*(t') S(t,t') r(t). \end{equation*} 
Then,
\begin{equation*} \widetilde{U}(t,t'):= \lop{U(t,t')}\rop{R(t,t')}  = \rop{r(t)}\widetilde{V}(t,t')\rop{r^*(t')} : \FF_t' \longrightarrow \FF_t \end{equation*}
and for all $t_0 \leq t_1 \leq t_2$ we find
\begin{align*} R(t_1,t_0)R(t_2,t_1) = \,&  r^*(t_0) S(t_1,t_0) r(t_1) \, r^*(t_1) S(t_2,t_1) r(t_2)\\[1.5ex]
= \,& r^*(t_0) S(t_2,t_0) r(t_2) = R(t_2, t_0) \end{align*}
so that 
\begin{equation*}\widetilde{U}(t_2,t_1)\, \widetilde{U}(t_1,t_0) = \widetilde{U}(t_2,t_0).  \end{equation*}
Also, $R(t,t) = r^*(t) S(t,t) r(t) = r^*(t)r(t) = \mathds{1}_{\ell}$, so that $\UUt(t,t) = \mathrm{Id}_{\FF_t}, \; \forall t \in \IR$.\\
\end{proof}

Note that the parallel transport applied to time-varying Fock spaces in this way does not anymore eliminate the $\cu(1)$-freedom of the lifts. The reason is that the operators $r(t)$ used in the proof to map the family of Fock spaces $(\mathcal{G}_t)_t$ (determined by the renormalization) to the Fock spaces $(\FF_t)_t$ are not unique. If we choose a different family $r'(t)$, they will differ by $r'r^*(t) \in \cu^1(\ell)$, i.e. $\rop {r'(t)} = \det( r'r^*(t)) \cdot \rop {r(t)}: \mathcal{G}_t \xrightarrow{\;\cong\,} \FF_t$. (See also Cor. \ref{Cor:equivalence of infinite wedge spaces} [Equivalence of infinite wedge spaces]). Thus, the phase-freedom now comes from the ambiguity in identifying different, equivalent Fock space representations over the same polarization class. 
\mbox{}\\


\subsubsection{The Causal Phase of the S-Matrix}

As mentioned before, it is unclear whether the time evolution is meaningful at all in Quantum Electrodynamics. The predominant position seems to be that only the S-matrix is physically significant. If one adheres to this view, the causal properties of the second quantized (renormalized) time evolution are of secondary interest. A better formulation of causality in terms of the second quantized S-matrix is therefore the following:\\

\noindent Given an external field 
\begin{equation}
\sa = \sa_1 + \sa_2  \in \A\end{equation} 
which splits in two parts with disjoint supports in time, i.e. $\exists \, r \in \IR$ such that
\begin{equation}\label{Asupport}
\mathrm{supp}_t \;\sa_1 \subset (-\infty, r)\; ,\hspace{5mm} \mathrm{supp}_t \;\sa_2 \subset (r, +\infty).
\end{equation}
That is, the field $\sa_1$ vanishes for times $t \geq r$ and the field $\sa_2$ vanishes for $t \leq r$.\\
We say that the phase of the S-matrix is \textit{causal} if
\begin{equation}\label{causalphase} \fingbox{\mathbf{S}[\sa] = \mathbf{S}[\sa_1 + \sa_2] = \mathbf{S}[\sa_2]\,\mathbf{S}[\sa_1].}\end{equation}
The importance of this causality condition was emphasized in particular by G.Scharf in \cite{Scha}.\\
But indeed, the causal properties of the geometric second quantization as discussed above do imply causality in the sense of Scharf:
As the A-field vanishes around time $r$, so does the renormalization. Therefore: 
\begin{align*} &U^\sa_{ren}(r, -\infty) = U_I^\sa(r, -\infty) = S[\sa_1]\\
&U^\sa_{ren}(+\infty,r) = U_I^\sa(+\infty,r) = S[\sa_2]
\end{align*} 
Hence, by construction of the lift and theorem \ref{Thm:semigroup}:
\begin{equation} \mathbf{S}[\sa] = \UU(+\infty , -\infty) = \UU(+\infty , r)\UU(r, -\infty) = \mathbf{S}[\sa_2]\,\mathbf{S}[\sa_1]. \end{equation}

\noindent We note that this finding seems to contradict the results in \cite{Scha}, where it is suggested that the phase of the second quantized scattering operator is completely determined by the causality condition \eqref{causalphase}. The geometric second quantization of Langmann and Mickelsson is causal, still it does not yield a unique phase (unfortunately). In fact, there is pretty much freedom left. We can choose a different $\Gres$-invariant connection and/or different renormalizations of the time evolution. Different connections will inevitable lead to different causal phases, in general, since a bundle connection is uniquely determined by its parallel transport maps. How the choice of the renormalization can affect the phase of $\mathbf{S}$ is discussed in the next section.


\newpage
\section{Holonomy of the Bundles}

We have seen that the choice of the renormalization is not unique. Therefore, the renormalized time evolution is -- without additional requirements -- very arbitrary and the urging question arises, how meaningful it can actually be in a physical description. The S-matrix, however, is invariant under renormalizations and one might hope that geometric second quantization yields a well-defined scattering operator on the Fock space, as was suggested in \cite{LaMi}. Unfortunately, the ambiguity in the renormalization is problematic for the second quantization of the S-matrix as well. If we use two different renormalizations to implement the S-matrix on the Fock space via parallel transport, we have two different curves in $\Ur(\HH) \subset \Gl_{\rm res}(\HH)$, both with starting-point $\mathds{1}_{\HH} \in \Ur$ and end-point $S \in \Ur$. Consequently, their horizontal lifts are two different curves in $\Ures(\HH)$. The starting-points of the lifted paths in $\Ures(\HH)$ are, by definition, the same (namely $\mathds{1} \in \Ures$), but their end-points -- the corresponding lift of the scattering matrix -- need not be. (We just know that they both lie in the same fiber above $\Ures(\HH) \ni \pi^{-1}(S) \cong \cu(1)$). Naturally, the question arises how these end-points, i.e. the phase of the scattering matrix, depend on the choice of the renormalization. Geometrically, this freedom is expressed by the \textit{holonomy group} of the bundle $\pi: \Ures(\HH) \rightarrow \Ur(\HH)$.
\vspace{0.4cm}

\begin{figure}[h]
  \begin{center}
    {\includegraphics[scale=0.49]{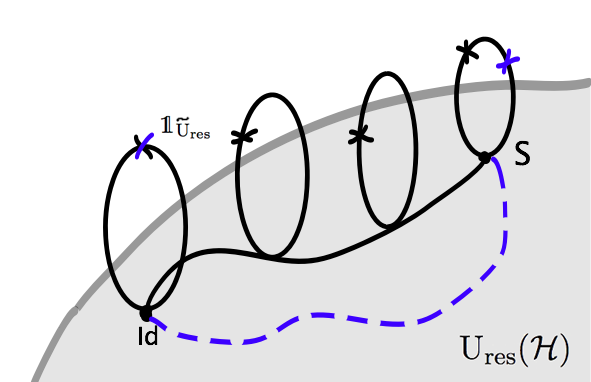}}
    \caption{\label{fig:Holonomy} PT along different paths can end in different points in the fibre over S}
  \end{center}
\end{figure}

Consider a principle-$G$-bundle $E \xrightarrow{\pi} M$ for a (finite-dimensional) Lie group $G$ over a connected, paracompact manifold $M$, equipped with a connection $\Gamma$.
Let $u_0, u_1$ two points on the base-manifold $M$ and $c, c': [0,1] \rightarrow M$ two piecewise differentiable curves from $u_0$ to $u_1$ . Let $p \in \pi^{-1}(u_0) \subset E$ be an element in the fibre over the starting point $u_0$. 
Now we want to know if parallel transport of $p$ along $c$ yields the same result as parallel transport of $p$ along the other path $c'$. In other words, we want to now if the horizontal lifts of $c$ and $c'$ with starting point $p \in \pi^{-1}(u_0)$ have the same end-point in the fibre $\pi^{-1}(u_1)$ above $u_1$. 

Equivalently, we can look at the closed loop $\gamma := c\circ (c')^{-1}$ in $M$, resulting from moving along the $c$ first and then backwards along $c'$, and ask, whether parallel transport of $p$ along $\gamma$ is the identity or not. As the loop starts and ends in $u_0$, the parallel transport $P_{\gamma}(p)$ is certainly a point in the same fibre $\pi^{-1}(u_0)$ over $u_0$. Since the Lie group $G$ acts transitively on every fibre, there exists a unique element $g := \mathrm{hol}(\gamma)$ in $G$ with $P_{\gamma}(p) = r_g(p) = p \cdot g$.

\noindent This motivates the following definition:
\begin{Definition}[Holonomy Group]
\mbox{}\\
Let $G$ a finite-dimensional Lie group and $E \xrightarrow{\pi} M$ a principle G-bundle with connection $\Gamma$. \\
For every point $p \in P$ we define
\begin{equation*}\Hol(\Gamma, p) := \lbrace g \in G \mid \exists\, \text{closed loop}\; \gamma\; \text{around}\; \pi(p)\; \text{in}\; M  \text{s.t.}\; P_{\gamma}(p) = p \cdot g \rbrace \end{equation*}
$\Hol(\Gamma, p)$ is a subgroup of $G$, called the \emph{holonomy group} of $\Gamma$ at $p$.
\item Furthermore, we define the \emph{restricted holonomy group}
\begin{equation*}\Hol_0(\Gamma, p) := \lbrace g \in \Hol(\Gamma, p) \mid \text{the corresponding}\; \gamma\;\text{can be chosen to be null-homotopic} \rbrace \end{equation*}
by restricting to null-homotopic curves in $M$.
\end{Definition}

\noindent In our case, the base-manifold $\Ur(\HH)$ or $\Gl_{\rm res}(\HH)$, respectively, is simply-connected and therefore $\Hol$ and $\Hol_0$ are the same.\\

It is easy to check that the holonomy group is indeed a group. Concatenation of two loops results in multiplication of the corresponding group elements, parallel transport along the constant path yields the identity and the inverse of a group element is obtained by reversing the sense of the corresponding path, i.e. moving along that path backwards. \\

\noindent More formally, if $p \in E$, $I = [0,1] \subset \IR$ and $\gamma: I \to M$ denotes closed loops around $\pi(p) \in M$, we have
\begin{enumerate}[I)]
\item $P_{\gamma_0}(p) = p$, for the constant path $\gamma_0(t) \equiv \pi(p)$.
\item $P_{\gamma}(p) = p\cdot g \Rightarrow P_{\gamma^{-1}}(p) = p \cdot g^{-1}$, for $\gamma^{-1}:I \to M$ defined by $\gamma^{-1}(t) = \gamma (1-t)$.
\item $P_{\gamma_i}(p) = p\cdot g_i \Rightarrow P_{\gamma_1 \cdot \gamma_2}(p) = p\cdot (g_1g_2)$, where $\gamma_1 \cdot \gamma_2 (t) = \left\{
\begin{array}{ll} \gamma_1(2t), & t \in [0,\frac{1}{2}]\\ 
\gamma_2(2t-1), & t \in [\frac{1}{2}, 1] \end{array} 
\right.$.\\
\end{enumerate}


\noindent The holonomy groups at two points which can be connected by parallel transport are conjugated to each other because, if $p, q $ can be connected by a horizontal curve in $E$, we can parallel transport from $q$ to $p$, then along a loop around $p$ and back from $p$ to $q$, which corresponds to parallel transport along a loop around $q$.\\



So much for the basics. Now we are going to show that the holonomy group of the bundle $\Ures(\HH) \xrightarrow{\pi} \Ur(\HH)$ at the identity is homomorphic to $\cu(1)$, i.e. to the entire structure group. We could perform the same calculation for the $\Gres$-bundle, but as we are mainly interested in lifting paths in $\Ur(\HH) \subset \Gl_{\rm res}(\HH)$, we find the unitary case more educative.\\

It will actually suffice to consider loops in a two-dimensional subspace. Thus, we will do the computations for $\cu(2, \IC)$ for simplicity and embed $\cu(2, \IC)$ into $\Ur(\HH)$ in the following way: Let $(e_k)_{k \in \IZ}$ be a basis of $\HH$ such that $(e_k)_{k \geq 0}$ is a basis of $\mathcal{H}_+$ and $(e_k)_{k < 0}$ a basis of $\mathcal{H}_-$. Now can identify $\cu(2)$ with $\cu(\spn(e_0, e_{-1}))$ i.e. 
\begin{align*}
\cu(2, \IC) \hookrightarrow \Ur(\HH);\; 
 \begin{pmatrix}
    a & b\\
    c & d
  \end{pmatrix}
 \longmapsto   \begin{pmatrix}
    a & \; & \vline & b & \; \\
    \; & \mathds{1}&\vline& \; & \textbf{0}\\
    \hline 
    c & \; & \vline & d & \; \\
    \; & \textbf{0} & \vline & \; & \mathds{1}
  \end{pmatrix},
  \end{align*}
 where the identity matrices are on $(e_0)^\perp \subset \HH_+$ and $(e_{-1})^\perp \subset \HH_-$, respectively.\\

\noindent A general (piecewise differentiable) path in $\cu(2)$ has the form 
\begin{align*}
U(t) =
   \begin{pmatrix}
    a(t) & b(t)\\
    c(t) & d(t)
  \end{pmatrix}
\end{align*}
with 
\begin{align*}
  U^{-1}(t) = U^{*}(t) 
= \begin{pmatrix}
    \alpha(t) & \beta(t) \\
    \gamma(t) & \delta(t) \end{pmatrix} = \begin{pmatrix}
    a^{*}(t) & c^{*}(t)\\
    b^{*}(t) & d^{*}(t)
  \end{pmatrix}.
\end{align*}
Unitarity requires (among other identitites) $\vert a \vert^2 + \vert b \vert^2 = 1$.\\
The formula \eqref{ptformula} for the parallel transport expressed in local coordinates becomes
 \begin{equation}\label{ptformula2D} \exp\Bigl[\int\limits_{-T}^{T} \trace\bigl[\dot{a}(t)(\alpha(t) - a^{-1}(t) )\,+\, \dot{b}(t) \gamma(t) \bigr] dt \Bigr] = \exp\Bigl[\int\limits_{-T}^{T} \bigl[ \dot{a}(t) (a^{*}(t) - a^{-1}(t) ) + \dot{b}(t)b^*(t)\bigr] dt \Bigr]. \end{equation}
We can write
\begin{align*} a(t) & = r(t)\, e^{i\varphi(t)} \\[1.5ex]
b(t) & = \sqrt{1- r^2(t)} \, e^{i \psi(t)}
\end{align*}
with $r(t), \varphi(t)$ and $\psi(t)$ piecewise differentiable, real functions. 
Our path has to stay in the neighborhood of $\mathds{1}$ where $a(t)$ is invertible, i.e. $r(t) \neq 0$ is required. Note that $\vert r(t) \vert\leq 1,\, \forall t$, so $\sqrt{1- r^2(t)}$ is a real, differentiable function.
Actually, we won't even have to exploit the freedom of choosing $\psi(t)$ and can set it to zero. Then we compute: 
\begin{align*}  \dot{a}(t) & = \dot{r}(t) e^{i\varphi(t)} + i \dot{\varphi}(t) r(t) e^{i\varphi(t)} \\[1.2ex]
\dot{a}(t)a^{*}(t) & = \dot{r}(t)r(t) +  i \dot{\varphi}(t) r^2(t) \\[1.2ex]
\dot{a}(t) a^{-1}(t) & = \dot{r}(t) r^{-1}(t) +  i \dot{\varphi}(t)\\[1.2ex]
\dot{b}(t)b^*(t) & = - r(t) \dot{r}(t) \end{align*}
The argument of the exponential in \eqref{ptformula2D} is thus
\begin{align}  \int\Bigl[ \dot{a}(t)a^{*}(t) - \dot{a}(t)a^{-1}(t)& + \dot{b}(t)b^{*}(t)\Bigr] \mathrm{d}t  \notag \\[1.2ex]
= \; & \int\Bigl[  i\, \dot{\varphi}(t) ( r^2(t) - 1) - r^{-1}(t) \dot{r}(t) \Bigr] \mathrm{d}t. 
\end{align}
The second summand is just the derivative of $\log\bigl(r(t)\bigr)$ and gives no contribution when integrated over a closed loop. Thus, we're left with
\begin{align}\label{anyvalue}\exp \Bigl[\oint\bigl[  i\, \dot{\varphi}(t) ( r^2(t) - 1) - r^{-1}(t) \dot{r}(t) \bigr] \mathrm{d}t  \Bigr] = \exp \Bigl[ i\, \oint\bigl[\dot{\varphi}(t) ( r^2(t) - 1) \bigl] \mathrm{d}t  \Bigr],
\end{align}
where the integral is over the parameterization of a closed path around the identity in $\cu(2)$, corresponding to the boundary conditions $r(-T) = r(+T) = 1$ and $\varphi(\pm T) \in 2\pi \IZ$.\\

\noindent But \eqref{anyvalue} can take \textit{any} value in $\cu(1)$: We parameterize over $t \in [0, 1]$ and set $\varphi(t) := 2\pi t  $, such that $\dot{\varphi}(t) \equiv 2\pi$. Now, for any $\delta \in [0,1)$  we can choose a smooth cut-off function $\rho$ with compact support in $[0,1]$ and values in $[0,1)$, such that $\int\limits \rho(t)\, \mathrm{d}t = \delta$. 
Let $r(t) := \sqrt{1-\rho(t)}, \; t \in [0,1]$. $r$ is smooth and strictly positive with $r(0) = r(1) = 1$.  For this, we compute \begin{equation*}\int\limits_{0}^{1}\bigl[\dot{\varphi}(t) ( r^2(t) - 1) \bigl] \mathrm{d}t = 2 \pi \int\limits_{0}^{1} \rho(t) \,\mathrm{d}t = 2\pi \delta \end{equation*}
which can take any value between $0$ and $2\pi$, including 0.\\ 

\noindent So, we have found that parallel transport along a closed path (about the identity) can result in multiplication by any complex phase. The holonomy group is the whole $\cu(1)$. An analogous computation reveals that the holonomy group of the $\Gres(\HH)$-bundle, is also homomorphic to its entire structure group $\IC^{\times}$.\\

\noindent We summarize:

\begin{Proposition}[Holonomy Groups]
\mbox{}\\
The holonomy groups of the connection $\Gamma_{\Theta}$ on the principle bundle $\Gres(\HH)$ and of its restriction to  $\Ures(\HH)$ are
\begin{equation*}\Hol(\Gres, \Gamma_{\Theta}) = \Hol_0(\Gres, \Gamma_{\Theta}) =  \IC^{\times}\end{equation*}
and
\begin{equation*}
\Hol(\Ures, \Gamma_{\Theta}) = \Hol_0(\Ures, \Gamma_{\Theta}) =  \cu(1).
\end{equation*}
\end{Proposition}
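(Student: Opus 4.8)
The plan is to reduce the statement to one fact that has essentially been verified already in the preceding pages — that there is at least one loop based at the identity with nontrivial parallel transport — and then to finish by classifying the connected subgroups of the \emph{finite-dimensional} structure groups $\cu(1)$ and $\IC^{\times}$. First I would record two standard facts about holonomy: $\Hol(\Gamma_{\Theta},p)$ is a subgroup of the structure group, and the restricted holonomy group $\Hol_0(\Gamma_{\Theta},p)$ is a \emph{connected} Lie subgroup of it. This is legitimate here even though $\Gres(\HH)$ and $\Ures(\HH)$ are infinite-dimensional, because the structure groups $\cu(1)$ and $\IC^{\times}$ are finite-dimensional, which is all the holonomy theory of Kobayashi--Nomizu requires. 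Since any loop based at $\mathds{1}_{\HH}$ lies in the identity component $\cu^0_{\rm res}(\HH)$, resp. $\Gl^0_{\rm res}(\HH)$, which is simply connected by the Lemma on homotopy groups of $\Greso$, every such loop is null-homotopic; hence $\Hol(\Gamma_{\Theta},\mathds{1})=\Hol_0(\Gamma_{\Theta},\mathds{1})$, and it suffices to compute the restricted group.

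For the unitary bundle, the explicit computation carried out above — embedding $\cu(2,\IC)$ as $\cu(\spn(e_0,e_{-1}))\subset\Ur(\HH)$, staying inside the domain $W\cap\Ur(\HH)$ of the section $\tau$ where the local parallel-transport formula \eqref{ptformula} is valid (and where, everything being finite-dimensional, the two traces occurring in it now converge individually so there is nothing formal left), and evaluating along the loop $a(t)=\sqrt{1-\rho(t)}\,e^{2\pi i t}$, $b(t)=\sqrt{1-|a(t)|^2}$ with $\rho$ a smooth bump function of integral $\delta$ — yields the parallel-transport element $e^{2\pi i\delta}$. In particular, choosing $\delta\neq0$ shows $\Hol_0(\Ures,\Gamma_{\Theta})\neq\{1\}$. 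But the only connected Lie subgroups of $\cu(1)=S^1$ are $\{1\}$ and $\cu(1)$ itself, so $\Hol_0(\Ures,\Gamma_{\Theta})=\cu(1)$; letting $\delta$ range over $[0,1)$ merely reconfirms this by hand.

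For the general linear bundle, any loop that happens to lie in $\Ur(\HH)$ has, by the Lemma ``Parallel transport stays in $\Ures$'', its horizontal lift inside $\Ures(\HH)$, so $\Hol_0(\Gres,\Gamma_{\Theta})\supseteq\cu(1)$. It then remains to exhibit a single holonomy element that is not a phase. I would do this either by repeating the local computation along a purely real loop in $\Gl(2,\IR)\hookrightarrow\Gl_{\rm res}(\HH)$ — where the integrand in \eqref{ptformula} is real, hence the holonomy is a positive real number, and a suitable small real loop is easily seen to give a value $\neq1$ — or, more slickly, by invoking that the curvature of $\Gamma_{\Theta}$ equals the Schwinger cocycle $c$ (Prop.~\ref{Prop:curvatureSchwinger}), which by the loop-group computation in the non-triviality theorem takes arbitrary \emph{complex} values, in particular values outside $i\IR$. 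Either way $\Hol_0(\Gres,\Gamma_{\Theta})$ is a connected subgroup of $\IC^{\times}$ containing the circle $\cu(1)$ together with an element $\lambda\in\IR_{>0}\setminus\{1\}$; since the unique connected subgroup with Lie algebra $i\IR$ is $\cu(1)$ itself, which does not contain $\lambda$, its Lie algebra must be strictly larger, hence all of $\IC=\mathrm{Lie}(\IC^{\times})$, so $\Hol_0(\Gres,\Gamma_{\Theta})=\IC^{\times}$. Finally, since parallel transport acts transitively over the connected base and conjugates holonomy groups at different points while the structure group is abelian, the holonomy group is independent of the base point, which completes the proof. I expect the only genuinely delicate point to be the honest bookkeeping in the non-unitary loop — in particular checking that the chosen real loop stays in $W$ (so that $a=A_{++}$ remains invertible) and in $\Gl_{\rm res}(\HH)$ (so that $\det\neq0$) — which is precisely why the curvature/Ambrose--Singer route is a convenient fallback.
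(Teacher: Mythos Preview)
Your argument is correct and, for the unitary bundle, tracks the paper's own computation in $\cu(2,\IC)\hookrightarrow\Ur(\HH)$ almost verbatim; the paper simply lets $\delta$ range over $[0,1)$ to hit every phase directly, whereas you exhibit one nontrivial holonomy element and finish with the classification of connected subgroups of $\cu(1)$. For the $\Gres$ bundle the paper says only ``an analogous computation'' and leaves it at that, while you give a more structured route: inherit $\cu(1)\subseteq\Hol_0(\Gres,\Gamma_\Theta)$ from the unitary case, produce one further element off the circle, and conclude via connected subgroups of $\IC^{\times}$. That is a genuine improvement in clarity over the paper's one-line gesture.

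One caveat worth flagging: the paper explicitly declines the Ambrose--Singer shortcut, remarking that its usual proof relies on the Frobenius theorem, which is problematic on infinite-dimensional base manifolds. Your appeal to it as a ``fallback'' is therefore in tension with the paper's stated scruples. That said, what you actually need is only the \emph{easy} direction --- that curvature values lie in $\mathrm{Lie}(\Hol_0)$ --- which follows from the Stokes-type holonomy formula $\mathrm{hol}(\gamma)=\exp\bigl[\int_{S(\gamma)}\sigma^*\Omega\bigr]$ applied to small 2-disks in a finite-dimensional slice such as $\Gl(2,\IR)$, and this direction does not require Frobenius. So your fallback is legitimate, but it would be cleaner to present it as ``curvature integral over a small real disk'' rather than invoking Ambrose--Singer by name. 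For the primary option, the explicit real loop, you are right that the bookkeeping (staying in $W$, nonvanishing of the integral) is the only honest work left; since the Schwinger cocycle is nonzero on $\mathfrak{gl}(2,\IR)$ --- for instance $c(E_{+-},E_{-+})=-1$ --- a sufficiently small loop in that plane does the job.
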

\mbox{}\\
\noindent For us, this means that the \textit{entire freedom of the geometric phase} that we had eliminated by parallel transport, is actually reintroduced through the ambiguity in the choice of the renormalization. It merely gets a new name: \textit{holonomy}.\\

We remark that on a finite-dimensional bundle we could have immediately derived this result by means of the \textit{Ambrose-Singer Theorem} which states that the Lie algebra of the holonomy group is generated by the curvature two-form of the connection (it would suffice to note that the Schwinger cocycle evaluated on $\mathfrak{u}_{\rm res} \times \mathfrak{u}_{\rm res}$ can take any value in $i \IR = \mathrm{Lie}(\cu(1))$). Unfortunately, generalization to infinite-dimensions is usually problematic, since the proof relies on the Frobenius theorem which has no equipollent infinite-dimensional analogon. Therefore, we prefer to avoid all complications by the direct computation performed above.\\
\newpage
\subsubsection{Holonomy: explicit formula}
\noindent There is an explicit formula to compute how the parallel transport along two curves differ.
In coordinates defined by a local section $\sigma$, the holonomy-group element $\mathrm{hol}(\gamma)$ corresponding to parallel transport along a closed loop $\gamma$ w.r.to the connection one-form $\Theta$ can be computed as
\begin{equation}\label{holformula} 
\mathrm{hol}(\gamma) =  \exp\Bigl[ \oint\limits_{\gamma} (\sigma^* \Theta) \Bigr] = \exp\Bigl[ \int\limits_{S(\gamma)} (\sigma^* \Omega) \Bigr] ,
\end{equation}
where $S(\gamma)$ is the surface enclosed by $\gamma$ and $\Omega = \mathrm{d}\Theta$ is the curvature 2-form. The first equality follows immediately from the local expression for parallel transport, whereas the second equality is an application of Stokes theorem using $\mathrm{d}\Theta = \Omega$.\\
Note that  application of this formula is unproblematic even on infinite-dimensional manifolds, because integration is just along 1-dimensional curves or 2-dimensional surfaces.\\

\noindent If we use two different renormalizations  $\T$ and $\T'$ to lift the S-matrix by parallel transport along a path $\gamma =  \gamma[\T]$ as in \eqref{gammaforS} we will find that
\begin{equation} \mathbf{S}[\sa,\T'] = \mathrm{hol}(\gamma[\T']\cdot\gamma[\T]^{-1}) \, \mathbf{S}[\sa,\T] \end{equation}
and the phase-difference can be computed from \eqref{holformula}.
 
\begin{Remark}[Mickelsson '98]
\mbox{}\\
In \cite{LaMi}, the effects of holonomy are discussed only in connection with gauge transformations, not related to the problem of non-uniqueness of the renormalization. However, J. Mickelsson adresses the issue in a later publication \cite{Mi98}, where he suggests that the second quantization of the S-matrix can be made gauge- and renormalization-independent by introducing suitable counterterms. Those counterterms can be viewed either as an additional modification of the renormalized Hamiltonian, or as choosing a different connection one-form, depending on the A-field. More precisely, it is proven that the phase of the second quantized scattering matrix, defined by parallel transport using the modified Hamiltonian (respectively the modified connection), is invariant under infinitesimal (chirially even) variations of the renormalization \ref{LaMirenormalization} as well as under infinitesimal gauge transformations. 
To me, the proposed solution seems very technically involved and rather ad-hoc. In particular, though, the construction violates causality in the sense discussed above, because the counterterms introduced by Mickelsson depend explicitly on the entire time evolution $U^\sa_I(t, -\infty)$.

In the following section, we propose a different, much simpler approach. We are not going to solve both problems, the gauge-dependence \textit{and} the renormalization-dependence of the parallel transport. Instead, we are going to argue that we can use the second freedom to our advantage and make the geometric second quantization gauge-invariant by choice of suitable renormalizations. 
\end{Remark}

\newpage

\section{Gauge Invariance}
\label{Sec:Gauge Invariance}

We have already seen that gauge transformations are not implementable on the Fock space. Actually, we knew that this must be true, because a gauge transformation  
\begin{equation*}\mathcal{G}\ni g: \Psi(\underline x) \rightarrow e^{i\, \Lambda_g(\underline x)} \Psi(\underline x),\;  \Lambda_g \in C^{\infty}_c(\IR^3, \IR)\end{equation*}
changes the spatial component of the A-field and therefore the polarization class.\\
This fact is troubling, but not necessarily a disaster. It just tells us that we have to give up the naive idea about gauge invariance (if we ever had it in the first place) that the symmetry translates directly from the one-particle theory to the second quantized theory. At this level of description, the significance of a gauge transformation in the second quantized theory is rather unclear, anyways. We would really have to understand what quantities in the theoretical description have actual physical significance and are in fact required to be gauge-invariant. However, such a discussion, in its full depth, is certainly beyond the scope of this work. Yet, there is little controversy about the fact that the \textit{scattering matrix} can be taken seriously for the physical description, and therefore, we should require that its gauge invariance carries over from the one-particle theory to the second quantized theory. This is indeed of particular importance, because then:
\begin{equation} \mathbf{S}[\sa_\mu] = \mathbf{S}[\sa_\mu - \epsilon \partial_\mu\Lambda], \; \forall \Lambda \in C^{\infty}_c(\IR^4, \IR) \end{equation}
implies, after taking the derivative with respect to $\epsilon$ at $\epsilon = 0$:
\begin{equation} 0 = - \int \mathrm{d}x\, \frac{\delta}{\delta \sa_\mu(x)} \mathbf{S}[\sa]\,\partial_\mu\Lambda = \int \mathrm{d}x\, \Lambda(x) \partial_\mu \frac{\delta}{\delta \sa_\mu(x)} \mathbf{S}[\sa] \end{equation}
and thus, by definition \eqref{currentdesity} of the current density:
\begin{equation}\label{continuityeq}\addtolength{\fboxsep}{5pt}\boxed{ \partial_\mu \frac{\delta \mathbf{S}}{\delta \sa_\mu(x)} =  \partial_\mu\, j^\mu(x) \equiv 0. }\end{equation}
\vspace{3mm}

\noindent So, the gauge invariance of the second quantized S-matrix \textit{is} physically significant because it implies the continuity equation \eqref{continuityeq} for the current density.
(Actually, equation \eqref{eq:Strom von Phase} shows that a much weaker condition would suffice.
The current density will be gauge-invariant if the first distributional derivative $\frac{\delta \varphi}{\delta A(x)}$ of the phase of $\mathbf{S}$ is gauge-invariant. In that case, the continuity equation follows analogously from $j^\mu [\sa_\mu] = j^\mu[\sa_\mu - \epsilon \partial_\mu\Lambda]$.)\\

\noindent Our method of geometric second quantization is -- a priori -- not gauge-invariant. Although the one-particle S-operator is invariant under compactly supported gauge transformations, the unitary time evolution and the renormalization are \textit{not}. Therefore, if we use parallel transport to lift the S-matrix to $\Ures(\HH)$, once for the external field $\sa = (\sa_\mu)_{\mu =0,1,2,3} \in C^\infty_c(\IR^4, \IR^4)$ and once for the gauge-transformed field $\sa' = \sa_\mu - \partial_\mu \Lambda,\; \Lambda \in C^\infty_c(\IR^4, \IR)$, we will perform the parallel transport along different paths in $\Ur(\HH)$ and again, the lifted S-matrix can differ by any complex phase. (This gauge-anomaly can be explicitly computed using formula \eqref{holformula}. See also the explicit computation in \cite{LaMi}.)\\ 
\newpage
\noindent However, we suggest that it is possible to define the renormalization precisely in such a way that the gauge transformation of the renormalization and the gauge transformation of the time evolution cancel each other out. We will call such a renormalization \textit{gauge-covariant}.\\

\begin{Construction}[Gauge-Covariant Renormalization]
\mbox{}\\
Let $\T: \IR \times \A \rightarrow \cu(\HH)$ be a smoothening renormalization, e.g. the renormalization \eqref{LaMirenormalization}, constructed by Langmann and Mickelsson.
Let $\A$ be the space of vector potentials and $\A \slash \mathcal{G}$ the set of gauge-classes, i.e. 4-vector potentials modula the equivalence relation
\begin{equation} \sa_\mu \sim \sa_\mu - \partial_\mu \Lambda(x), \; \text{for}\; \Lambda \in C^\infty_c(\IR^4, \IR). \end{equation}
More geometrically, we can identify $\A$ with the space $\Omega^1(M , \IR)$ of (compactly supported) one-forms on Minkowski-space (or a general space-time manifold $M$). Then,  $\A \slash \mathcal{G}$ corresponds to the first de-Rham cohomology class $\mathrm{H}^1_{dR}(M, \IR)$.\\

\noindent By axiom of choice, we choose one representative $\sa$ out of each gauge-class $[\sa] \in \A \slash \mathcal{G}$. In particular, we choose $\sa \equiv 0$ as a representative of $\overline{0} \in \A \slash \mathcal{G}$. 
Now, we define a renormalization $\widetilde{\T}: \IR \times \A \rightarrow \cu(\HH)$ by setting $\widetilde{\T}_t\bigl(\sa\bigr) = \T_t\bigl(\sa\bigr)$ and 
\begin{equation}\label{gaugerenormalization} \widetilde{\T}_t \bigl(\sa_\mu - \partial_\mu \Lambda\bigr) := e^{i \Lambda(t, \underline x)}\,\widetilde{\T}_t\bigl(\sa\bigr), \; \forall \Lambda \in C^\infty_c(\IR^4, \IR).\end{equation}  

\noindent By theorem \ref{Thm:Gaugetransformations}, this is indeed a renormalization. Now, with this setting, we find that the renormalized (Schr\"odinger picture) time evolution is in fact gauge-invariant:
\begin{align*} U^{\sa}_{ren}(t,t') =\;  &  \widetilde{\T}^*_t(\sa)\; U^\sa(t, t') \; \widetilde{\T}(\sa(t'))\\[1.5ex]
\vspace{3cm} \xrightarrow{\;\;g\; \in\; \mathcal{G}\;} \; \;& U^{\sa - \partial\Lambda}_{ren}(t,t')= \widetilde{\T}_t^*\bigl(\sa - \partial_\mu\Lambda \bigr) \, U^{\sa - \partial\Lambda}(t, t')\, \widetilde{\T}_t'\bigl(\sa - \partial_\mu\Lambda \bigr) \\[1.5ex]
=\; & \widetilde{\T}^*_t(\sa)\;  e^{-i \Lambda(x)}  \; U^{\sa - \partial\Lambda}(t, t')\; e^{i \Lambda(x)} \; \widetilde{\T}_t({\sa})\\[1.5ex]
=\;  & \widetilde{\T}^*_t({\sa})\; U^{\sa}(t, t') \; \widetilde{\T}_t({\sa}) = \; U^{\sa}_{ren} (t,t')\end{align*}
and because the renormalized time evolution for the distinguished representative is (after switching to the interaction picture) differentiable in $t$ with respect to the differentiable structure on $\Ur(\HH)$, the renormalization $\widetilde{\T}$ is also smoothening. \\

Using such a gauge-covariant renormalization, the second quantization of the S-matrix is \textit{completely invariant} under gauge transformations -- and so is the second quantization of the time evolution for intermediate times. De facto, we do not see gauge transformations in the second quantized theory at all.
In particular, the renormalization acts by ``gauging the field away'' whenever this is possible, which seems like a very elegant solution.

To summarize, we suggest that the problem of the \textit{gauge-anomaly} for the second quantization of the S-matrix by parallel transport, as discussed in \cite{LaMi} and \cite{Mi98}, is not necessarily a problem at all. Since the choice of a renormalization is ambiguous anyways, we can use this freedom to cancel out the effect of gauge transformations entirely. In fact, we can just choose one particular vector potential out of each gauge-class. This choice, however, as well as the smoothing renormalization applied to it, remains ambiguous. 
\end{Construction}


\chapter{Closing Arguments}
\vspace*{-1mm}
\subsubsection{What was done.}

We have present several equivalent constructions of the fermionic Fock space. In all cases, we arrive at the Shale-Stinespring criterion \eqref{SS} as a necessary and sufficient condition for a unitary transformation to be implementable on the Fock space. In this case, the second quantization is unique up to a complex phase of modulus one, called the ``geometric phase'' in QED (because it corresponds to the freedom in lifting elements from $\Ur(\HH)$ to the principle-$\cu(1)$-bundle $\Ures(\HH)$).
It turns out that the Dirac time evolution does in general \textit{not} satisfy the Shale-Stinespring condition and thus cannot be second quantized in the usual way. In fact, the time evolution will ``leave'' the Fock space as soon as the spatial component of the external field becomes non-zero (Thm. \ref{Ruij}, Ruijsenaar, Thm. \ref{Thm:polarizationclasses},  Deckert et.al.). Physically, this corresponds to the problem of infinite particle creation in the presence of a magnetic field. In other, more technical terms, it means that the free vacuum and the vacua of an interacting theory correspond to different, non-equivalent representations of the CAR-algebra. Note that those are intrinsic properties of the mathematical structure and not pathologies due to the problem of self-interaction and ultra-violet divergences appearing in the full theory. 

The situation is better in the asymptotic case, if we study the S-matrix only. Before the interaction is turned on and after it is turned off, we have a more or less canonical construction of the Fock space corresponding to the usual Fock representation and we can compare ``in states'' (at $t= -\infty$) and ``out states'' (at $t= +\infty$) with respect to the same vacuum. 

We emphasize that these problems, that were discussed here in the context of external field Quantum Electrodynamics, are certainly not resolved in the ``full-blown'' theory (where the photon-field is quantized and treated as another dynamic variable), nor are they restricted to QED or electromagnetic interactions. In fact, a corresponding result is known as \textit{Haag's theorem} in algebraic quantum field theory ever since the mid 50's (see e.g. \cite{StWigh} for a nice treatment). In the words of A.S. Wightman:
\begin{quote}\textit{``Not only do strange representations occur ... but different strange representations at very different time. The defining equations of the interaction picture ... are as wrong as they can possibly be. [...] The strange representations associated with Haag's theorem are, in fact, an entirely elementary phenomenon and appear as soon as a theory is euclidean invariant and has a Hamiltonian which does not have the no-particle state as proper vector. This will happen whether or not the theory is relativistically invariant and whether or not there are ultra-violet problems in the theory.''} \cite{Wigh}, S. 255.
\end{quote}

After acknowledging those brute and troubling facts, we have presented two different solutions for realizing a time evolution in the external field setting of QED. In a way, the two alternatives seem to be the best that can be done within the existing framework. In Chapter 6, we followed Deckert et.al. and realized the time evolution as unitary transformations between time-varying Fock spaces. In Chapter 8, we introduced the concept of a ``renormalization'' used to render the time evolution implementable on the standard Fock space. The nomenclature, borrowed from \cite{LaMi}, can be misleading, as it doesn't refer to the usual renormalization schemes applied in perturbative field theory but merely to a simple and well-defined prescription for mapping the time evolution back into $\Ur(\HH)$.\footnote{If and how the two concepts are related might be an interesting question to pursue in a future analysis.} In both cases, whether we use renormalizations or time-varying Fock spaces, the implementations of the unitary operators are again unique up to a complex phase. We have proven that the two descriptions are dual to each other and showed how the renormalization can be used to translate between them. One advantage of the renormalized theory might be that it allows a second quantization of the Hamiltonians as well, after a suitable (``smoothening'') renormalization is applied. However, the renormalization introduces a bunch of artificial terms into the renormalized Hamiltonian \eqref{Zren}, and the physical relevance of these objects remains unclear.\\

In Chapter  7, we introduced the ``Langmann-Mickelsson connection'' on the principle bundle $\Ures(\HH)$, or rather its complexification $\Gres(\HH) \rightarrow \Gl_{\rm res}(\HH)$. Parallel transport with respect to this connection defines a unique second quantization of a continuous family of unitary operators, including a smooth prescription for the phase. It also defines a unique second quantization of the renormalized Hamiltonians, generating the second quantized time evolution. However, the hope that the additional geometric structure of a bundle connection can eliminate the $\cu(1)$-freedom of the geometric phase completely, at least for the second quantization of the scattering matrix, did not stand up to scrutiny. The reason is that if we want to apply the method of parallel transport to the unitary time evolution, we will always (except for the most trivial cases) need a smoothening renormalization in the sense of Def. \ref{Def:Renormalization} to make the time evolution differentiable with respect to the differentiable structure on $\GL_{\rm res}(\HH)$. The choice of such a renormalization, however, is not unique and different choices will lead to different renormalized time evolutions. Although the scattering matrix remains invariant under renormalization, different choices will correspond to parallel transport along different paths in $\Ur(\HH)$ which can lead to different phases for the second quantized scattering operator. The path-dependence of the phase is expressed by the holonomy group of the $\Ures(\HH)$ bundle which we computed to be isomorphic the the whole structure group $\cu(1)$. In this sense, the entire freedom of the geometric phase, supposedly eliminated by the parallel transport, reappears in different disguise. We have to conclude that the result stated in \cite{LaMi}, saying that \textit{``the phase [of the second quantized scattering operator] is uniquely determined ... by the geometric structure of the central extension of the group of one-particle (renormalized) time evolution operators''} is too optimistic. We emphasize that the authors themselves have revised their initial statement and addressed the problem in \cite{Mi98}.

However, we have shown that this method of \textit{geometric second quantization} has other important benefits. In Chapter 8 we proved that second quantization by parallel transport preserves the semi-group structure of the time evolution and -- more importantly -- the causal structure of the one-particle theory. In particular, the phase of the second quantized S-matrix is causal in the sense of Scharf (\cite{Scha}, \S 2.8).\footnote{This feature was already mentioned in \cite{Mi98}, however without explanation or proof.} The results can be applied in the context of time-varying Fock spaces, showing that the implementations can be chosen in such a way to yield an actual time evolution. Furthermore, we have proposed a simple solution to the problem of gauge invariance, showing that the geometric second quantization prescription for the S-matrix can be made gauge-invariant by using suitable, ``gauge-covariant'' renormalizations. Those were the ``optimistic'' results of our analysis.\\ 

Still, we have to realize that at this point, the ambiguities in the construction of the time evolution seem too vast to provide meaningful physical content. In the framework of time-varying Fock spaces, this fact might be somewhat concealed because the language might suggest the intuition that the physical state is what it is (modulo phase) and only the mathematical space that inhibits it somehow requires additional specification. But in the renormalized theory, this ambiguity translates into the freedom to lift basically \textit{any} unitary family of operators to the (fixed) Fock space, as long as it stays in $\Ur(\HH)$ and agrees with the Dirac time evolution whenever the interaction is turned off. Therefore, it seems that -- without additional structure -- the time evolution doesn't actually tell us anything. In other words: we do not even know what physical quantities should characterize the states represented on different Fock spaces. Without additional ingredients in the theory, specifying (instantaneous) vacuum states, we cannot even say how many particles and anti-particles exist at any given time.\footnote{This is one of the big problems of quantum field theory in general. QFT on curved space-time is a good conscious raiser for this issue. Rather spectacular phenomena like the Unruh effect or Hawking radiation can be traced back to it.} So, we have to ask: what actually is the \textit{physics} behind our mathematical formalism?

\subsubsection{What it means.}

We concede that -- up to this point -- our rigorous approaches to the time evolution for the external field problem contain too many unspecified degrees of freedom to allow an unambiguously defined time evolution and a straight forward physical interpretation in the language of many-body Quantum Mechanics. It is an interesting realization on its own -- apparent mainly in the context of time varying Fock spaces -- that this problem is closely related to the question of a consistent \textit{particle interpretation}. Such a particle interpretation requires a well-defined specification of (instantaneous) vacuum states that would also define a unique time evolution on time-varying Fock spaces, up to the geometric phase.\\ 

Now, there are basically three general attitudes that  can be taken towards this and I'd like to conclude this work by taking the liberty to discuss them from a rather abstract point of view. First, we could hope that we can reduce the degrees of freedom -- and thereby the ambiguities -- by introducing additional structure, or that we can impose \textit{physical} principles to render the choices involved in our constructions essentially unique. Our analysis shows that the principles of \textit{causality} and \textit{gauge invariance} are not sufficient to do the job. The question of \textit{Lorentz invariance}, however, was addressed only briefly and requires further discussion. The situation might also be better, if we restrict the class of permissible interaction potentials, e.g. to actual solutions of Maxwell's equations. Most likely, though, the crucial insights would have to come from a fully-interacting theory. We will return to this speculative area later on.\\

Second, we could conclude that our theory of QED -- despite its great success within its limits -- is deeply and fundamentally flawed and that a fundamentally different approach will be necessary to resolve the various issues. This would certainly be a premature judgement if based only on our narrow treatment of the external field problem. But given the fact that \textit{all} known version of QED stubbornly resist a consistent mathematical formulation, I personally think that this alternative has great appeal.\\ 

Lastly, one may regard the freedoms contained in the construction not as ambiguities, but rather as a kind of ``internal symmetry'' of the theory and look for a physical description in terms of physical quantities that are invariant under these symmetries, i.e. independent of the particular choice of Fock spaces, right-operations etc.. This seems to be more or less the way in which quantum field theories are usually handled, with indisputable practical success. However, I find such an approach very unsatisfying, not because the endeavor is practically unpromising, but because it must inevitably lead to the ontological vagueness that I criticized in the established formulations of quantum field theory. I am convinced that a good physical theory must provide a clear physical interpretation on its most fundamental level, preferably in terms of a clearly specified local ontology. 

For example, the results of this work might very well be understood as sustaining the predominant understanding that \textit{not the time evolution but only the S-matrix should be taken seriously in relativistic quantum field theory}. But the right question is, whether our failure to do better is a statement about nature, or merely a statement about the capability of our existing theories. We can very well conclude with Scharf and Fierz  that ``the notion of particles has only asymptotic meaning" (\cite{FS}, p. 453). But then, we have to ask, in what terms we are supposed to think about the world as it is \textit{right now}, i.e. between $t= -\infty$ and $t= +\infty$. If there are no particles, \textit{what else IS there}?\\

It is often suggested to be a very deep feature of relativistic quantum theory that physical relations can be defined only operationally, i.e. in terms of transition amplitudes, scattering cross-sections, etc.. (This point of view probably goes back to Heisenberg \cite{HeisA}, \cite{HeisB}, who was convinced that the S-matrix should take over the role of the Hamiltonian in a ``future'' theory and that all ``observable quantities'' are encoded in it.) However, one should be sceptical about the arguments commonly put forward to sustain this claim; quite often, they are either very vague or claim a generality which is unjustified, because they are actually trapped within the narrow limitations of the current, deficient, framework. In fact, there are some remarkable examples, demonstrating that a relativistic quantum theory is very well compatible with a local ontology. There is Tumulkas relativistic version of GRW with flash ontology, for instance, which has caused some furor in recent years (although the theory is not interacting, yet, see \cite{Tum}). It is also possible -- quite easily, in fact -- to write down a ``Bell-type`` quantum field theory with particle trajectories that reproduces the predictions of regularized QED, including particle creation and annihilation (although the formulation is not fully Lorentz-invariant, yet, see \cite{DGTZ}). Of course, neither of these examples are ultimately satisfying (or even meant to be), but they are striking counterexamples to widespread claims, that an understand of relativistic quantum theory in the proper sense cannot be possible.\\

\newpage

The analysis presented in this work reveals that the scattering matrix is in general a well-defined object in the second quantized theory, whereas the time evolution is not. But it's important to note what was actually proven. The one and only reason why things work better in the asymptotic case is that one assumes \textit{the interaction to be turned off} in the distant future and past (or at least falling off quickly, if the regularity conditions on the fields are weakened). In my opinion, this points to the conclusion that the difficulties are not a matter of Lorentz invariance or some deep feature of relativistic space-time, but that we need a better understanding of the fully interacting theory in terms of the fundamental interactions of its elementary physical entities -- presumably the particles.\\

It was suggested by Dirk Deckert and Detlef D\"urr (in talks and private communications) that the Dirac sea theory with infinitely many particles should be understood as an \textit{effective description} (a thermodynamic limit) of a more fundamental theory, involving a large, but finite number of particles with a well-defined pair interaction -- either transmitted by the photon field, or without any photon field at all, like in the Wheeler-Feynman formulation of classical electrodynamics \cite{WF45}, \cite{WF49}. An analysis of the particle dynamics would then tell us what the appropriate ``vacuum state'' is that we should use in the effective description for specific physical interactions. It would correspond to an equilibrium state of the ``Dirac sea''  in which the pair interactions within the sea effectively cancel, justifying the description in terms of Dirac's hole theory (see also the quote of Dirac cited in Chapter 1). At the moment, these are just speculations, of course. However, I think that this is a path worth pursuing, since it would provide the clearest and most ``down to earth'' account for the phenomena of Quantum Electrodynamics.



\singlespace

\newpage 
\thispagestyle{empty}
\quad 
\newpage

\appendix
\chapter*{}

\vspace{-2cm}
\begin{bfseries} \huge Appendix A \end{bfseries}\\
\setcounter{chapter}{1}

\addcontentsline{toc}{chapter}{Appendix}
\section{Commensurability and Polarization Classes}
\begin{nndef}[Commensurability]
\mbox{}\\
Two polarizations $V,W \in \pol(\mathcal{H})$ are called \emph{commensurable}, if $V\cap W$ has finite codimension in both $V$ and $W$.
\end{nndef}

\begin{nnprop}[Comsurable Polarizations and Polarization Classes]
\mbox{}\\
Let $ V \in \pol(\mathcal{H})$. We denote by $\mathrm{Gr}(\HH, V)$ the restricted Grassmannian of $V$ i.e. its polarization class endowed with the structure of a complex Hilbert-manifold modeled on $I_2(V, V^{\perp})$.
Then, the set of all polarizations $ W \in \pol(\mathcal{H})$ commensurable with $V$ is dense in $\mathrm{Gr}(\HH, V)$ and for any such $W$, $\charge(V,W)$ coincides with \eqref{relcharge}, i.e.
\begin{equation*} \charge(V,W) = \ind(P_V|_{W\rightarrow V}) = \dim(V/(V\cap W)) - \dim(W/(V\cap W)) \end{equation*}
\end{nnprop}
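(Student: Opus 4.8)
The plan is to prove the two assertions --- density of the commensurable polarizations inside $\mathrm{Gr}(\HH,V)$, and the agreement of the two formulas for the relative charge --- separately, using the manifold structure and the charts $U_W$ already established in Section \ref{Sec:Grassmannian}.

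First I would handle the charge formula on the commensurable locus. Suppose $W \in \pol(\HH)$ is commensurable with $V$, so $V \cap W$ has finite codimension $k$ in $V$ and finite codimension $\ell$ in $W$. I claim first that $W \approx V$, i.e.\ $W$ lies in the polarization class of $V$: writing $P_V - P_W$ and using that on the common subspace $V\cap W$ both projections act as the identity while the quotients $V/(V\cap W)$ and $W/(V\cap W)$ are finite-dimensional, one checks directly that $P_V - P_W$ is finite-rank, hence trivially Hilbert--Schmidt; this is the kind of computation carried out in the proof of Lemma \ref{approx} and in Appendix A.1's companion discussion, so I would just cite $\Gr$'s characterization. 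Given $W \approx V$, the Fredholm index $\charge(V,W) = \ind(P_V|_{W\to V})$ is defined by the Relative Charge definition. To evaluate it, decompose $W = (V\cap W) \oplus W'$ and $V = (V\cap W) \oplus V'$ orthogonally inside each, with $\dim V' = k$, $\dim W' = \ell$ both finite. The operator $P_V|_{W\to V}$ restricted to $V\cap W$ is the inclusion, which is an isomorphism onto $V\cap W \subset V$; hence its kernel is contained in $W'$ and its cokernel is a quotient of $V' $. A short linear-algebra argument (essentially that $P_V|_{W'\to V'}$ is a map between finite-dimensional spaces and the off-block contributions vanish because $W' \perp (V\cap W)$ need not imply $P_V W' \subset V'$, but the index is stable under the finite-rank correction) gives
\begin{equation*}
\ind(P_V|_{W\to V}) = \dim\ker - \dim\coker = \ell - k = \dim\bigl(V/(V\cap W)\bigr) - \dim\bigl(W/(V\cap W)\bigr),
\end{equation*}
which is exactly \eqref{relcharge}. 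I expect the bookkeeping with the finite-dimensional blocks --- making sure no index is dropped when passing through the $P_V$ compression --- to be the one genuinely fiddly point, handled cleanly by invoking stability of the Fredholm index under finite-rank perturbations rather than computing kernels by hand.

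Next I would prove density. Fix $V$ and let $W \in \mathrm{Gr}(\HH, V)$ be arbitrary. By the Characterization of $U_W$ (the Lemma preceding the Manifold Structure proposition), every neighbourhood of $W$ in $\mathrm{Gr}(\HH,V)$ contains the graphs $\mathrm{Graph}(T)$ of Hilbert--Schmidt operators $T: W \to W^\perp$ with $\lVert T\rVert_2$ small; and conversely such graphs exhaust a chart around $W$. The idea is to approximate a given $W'= \mathrm{Graph}(T)$ by cutting $T$ down to finite rank: pick an orthonormal basis $(w_j)$ of $W$ and let $T_n := T P_n$ where $P_n$ is the projection onto $\spn(w_1,\dots,w_n)$; then $\lVert T - T_n\rVert_2 \to 0$, so $\mathrm{Graph}(T_n) \to W'$ in the manifold topology. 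It remains to see $\mathrm{Graph}(T_n)$ is commensurable with $V$ --- indeed with $W$ --- for every $n$: on the co-finite subspace $\ker T_n = \spn(w_{n+1}, w_{n+2},\dots)$ the graph map $w \mapsto (w, T_n w) = (w,0)$ is the identity, so $\ker T_n \subset \mathrm{Graph}(T_n) \cap W$ has finite codimension in both, giving commensurability with $W$; since $W$ itself is commensurable with $V$ (as $W \in \mathrm{Gr}(\HH,V)$ and one can run the same graph argument one more time, or appeal transitively to a dense set of commensurable polarizations near $W$), commensurability of $\mathrm{Graph}(T_n)$ with $V$ follows from the fact that commensurability is an equivalence relation on any single polarization class (composition of finite codimensions). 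Actually, to avoid the transitivity subtlety it is cleanest to first establish density of polarizations commensurable with $V$ inside the standard chart $U_V$ --- there $W' = \mathrm{Graph}(T)$ for $T: V \to V^\perp$ Hilbert--Schmidt, the finite-rank truncations $\mathrm{Graph}(T_n)$ are manifestly commensurable with $V$, and density in $U_V$ follows; then translate to an arbitrary chart $U_W$ using a unitary $U \in \Ur(\HH)$ with $UV = W$, noting that $U$ preserves both the manifold structure and commensurability (it carries $V\cap X$ to $UV \cap UX$). The main obstacle here is purely organizational: ensuring the finite-rank truncation argument is phrased in a chart where ``commensurable with $V$'' reads off transparently, which is why I would anchor everything at the base chart $U_V$ and then transport by the transitive $\Ur(\HH)$-action.
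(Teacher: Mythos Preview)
Your treatment of the charge formula is essentially the same as the paper's: both decompose $W$ around the common piece $V\cap W$ and read off kernel and cokernel from the finite-dimensional complements. The paper writes $W=(W\cap V)\oplus\ker(P_V|_W)\oplus R$ to make the kernel explicit and then cancels $\dim R=\dim P_V R$; your version appeals to index stability under finite-rank perturbation instead. (As a minor point, your bookkeeping $\ell-k=\dim(V/(V\cap W))-\dim(W/(V\cap W))$ has the labels swapped; the paper's own computation yields the opposite sign too, so the stated formula appears to carry a sign slip regardless.)

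For density the two arguments are genuinely different. The paper does not work in charts at all: it observes that $\mathrm{Gr}(\HH,V)$ is the orbit of $V$ under the group $\{U\in\cu(\HH)\mid [P_V-P_{V^\perp},U]\in I_2(\HH)\}$, while the commensurable polarizations are exactly the orbit of $V$ under the subgroup where that commutator is \emph{finite rank}. Density of finite-rank operators in $I_2$ then gives density of the small orbit in the big one in one stroke, across all charge sectors.

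Your chart-based approach is sound inside $U_V$: finite-rank truncation of the graph operator $T:V\to V^\perp$ does produce polarizations commensurable with $V$, and since $U_V$ is open and dense in the charge-$0$ component this already gives density there. The gap is in your transport step. Pushing forward by a unitary $U$ with $UV=W$ sends ``commensurable with $V$'' to ``commensurable with $W$'', not back to ``commensurable with $V$''; and for a generic $W\in\mathrm{Gr}(\HH,V)$ there is no reason $W$ itself is commensurable with $V$ (that is the whole point --- commensurability is strictly finer than $\approx$), so you cannot close the loop by transitivity. The paper's group-action argument sidesteps this entirely because the approximation happens at the level of the acting unitary, which is always applied to $V$ itself, so the reference polarization never moves. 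If you want to repair the chart argument for the higher charge sectors you would need an additional device (e.g.\ a shift operator as in \S4.1.1) rather than the unitary translation you propose.
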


\begin{proof}
Let $\mathrm{Comns(\HH,V)} := \lbrace W \in \pol(\mathcal{H}) | W\;\text{commensurable with}\; V \rbrace$.

\begin{enumerate}[i)]
\item Claim: $\mathrm{Comns(\HH,V)} \subset  \mathrm{Gr}(\HH,V)$.\\
Let $W \in \mathrm{Comns(\HH,V)}$. We have to show $P_W - P_V \in I_2(\HH)$. Obviously, $P_W - P_V$ is zero on $V\cap W \subseteq \HH$, as well as on $(V + W)^{\bot} \subseteq \HH$. Thus, since $V$ and $W$ are commensurable, i.e. $V\cap W$ has finite codimension in $V$ and $W$, it has also finite codimension in $V + W$ and $P_W - P_V$ is non-zero on a finite-dimensional subspace only and therefore of Hilbert-Schmidt type.

\item Claim: $\charge(V,W) = \dim(V/(V\cap W)) - \dim(W/(V\cap W))$.\\ 
We can write $W = (W\cap V)\oplus \ker(P_V|_W)\oplus R$ with some subspace $R \subset \HH$.\\ 
Then, $\dim(\ker(P_V|_W)\oplus R) < \infty$ and: 
\begin{align*}\ind(P_V|_W) =& \dim \ker(P_V|_W) - \dim \coker(P_V|_W)\\
=& \dim\bigl(W/(V\cap W \oplus R)\bigr) - \dim\bigl(V/(V\cap W\oplus P_V \, R)\bigr)\\
=& \dim\bigl(W/(V\cap W)\bigr) - \dim\bigl(V/(V\cap W)\bigr) \end{align*}
since $P_V\lvert_R$ is a finite-dimensional isomorphism, hence $\dim(R) = \dim(P_V R) < \infty.$ 

\item Claim: $\mathrm{Comns(\HH,V)} \subset  \mathrm{Gr}(\HH,V)$ is dense (in the topology introduced in \S \ref{Sec:Grassmannian}).\\
$\mathrm{Gr}(\HH,V)$ is a homogeneous space for $\Ur(\HH; V) := \lbrace U \in \cu(\HH) \mid [P_V - P_{V^\perp}] \in I_2(\HH) \rbrace$.  $\mathrm{Comns(\HH,V)}$, on the other hand, is the orbit of $V$ under $\lbrace  \cu(\HH) \mid [P_V - P_{V^\perp}]\; \text{has finite rank} \; \rbrace$. Since the finite-rank operators are dense in the Hilbert-Schmidt class (and the Hilbert-Schmidt norms are smaller than the $\Ur$-norm) it follows that $\mathrm{Comns(\HH,V)}$ is dense in $\mathrm{Gr}(\HH,V)$. 


\end{enumerate}
\end{proof}
\newpage
\section{On Banach Lie groups}
\begin{Lemma}[Continuous One-parameter Groups are smooth]
\mbox{}\\
Let $\mathcal{A}$ a Banach algebra and $G \subseteq \mathcal{A}^{\times}$ a linear Lie group.\\
Then, every continuous one-parameter group $(\gamma(t))_{t \in \IR}$ in $G$ is smooth.
\end{Lemma}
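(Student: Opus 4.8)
The plan is to show that any continuous one-parameter group $\gamma\colon\IR\to G\subseteq\mathcal{A}^{\times}$ coincides with the map $t\mapsto\exp(tX)$ for a suitable $X\in\mathrm{L}(G)$; this is manifestly smooth, since $\exp$ is smooth and $t\mapsto tX$ is affine. The starting point is that $\exp$ is a local diffeomorphism at $0$ (as recalled in the definition of a linear Banach Lie group): there is an open ball $U\subseteq\mathcal{A}$ about $0$ such that $\exp|_U$ is a diffeomorphism onto an open neighborhood $V$ of $1$, with smooth local inverse $\log:=(\exp|_U)^{-1}$. After shrinking $U$ (to a convex ball) I may assume in addition that $\exp(\tfrac12 U)\subseteq V$ and that $\exp(x)=\exp(y)$ with $x,y\in U$ forces $x=y$.

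First I would use continuity of $\gamma$ at $0$ together with $\gamma(0)=1$ to choose $\delta>0$ so small that $\gamma(t)\in V$ and $\log(\gamma(t))\in\tfrac12 U$ for all $|t|\le\delta$; set $X_0:=\log(\gamma(\delta))\in\tfrac12 U$. The key algebraic step is a halving argument: from $\gamma(\delta/2)^2=\gamma(\delta)=\exp(X_0)$ and $\gamma(\delta/2)=\exp(Y)$ with $Y:=\log(\gamma(\delta/2))\in\tfrac12 U$, and using $\exp(Y)^2=\exp(2Y)$ with $2Y\in U$, the injectivity of $\exp$ on $U$ gives $2Y=X_0$, i.e. $\gamma(\delta/2)=\exp(X_0/2)$. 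Iterating yields $\gamma(\delta/2^n)=\exp(X_0/2^n)$ for all $n$, hence $\gamma(\tfrac{k}{2^n}\delta)=\gamma(\delta/2^n)^k=\exp(\tfrac{k}{2^n}X_0)$ for every dyadic rational $k/2^n$. By density of the dyadics in $\IR$ and joint continuity of $\gamma$ and $\exp$, $\gamma(s\delta)=\exp(sX_0)$ for all $s\in\IR$; putting $X:=X_0/\delta$ gives $\gamma(t)=\exp(tX)$ on all of $\IR$.

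To finish, since $\gamma(t)\in G$ for every $t$, in particular $\exp(X)=\gamma(1)\in G$, so $X\in\mathrm{L}(G)$, and $\gamma=\exp\circ(t\mapsto tX)$ is smooth as a composite of smooth maps. (Alternatively, once $\gamma(t)=\exp(tX)$ is known one may bootstrap: $\gamma$ is differentiable at $0$ with $\dot\gamma(0)=X$, and the group law $\gamma(t+h)=\gamma(t)\gamma(h)$ propagates differentiability, then smoothness, to all of $\IR$.)

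The main obstacle will be the bookkeeping in the halving/doubling steps: one must guarantee that all the vectors $X_0$, $Y$, $2Y$, and the dyadic multiples $\tfrac{k}{2^n}X_0$ with $|\tfrac{k}{2^n}|\le 1$ actually lie in the neighborhood $U$ on which $\exp$ is injective, so that the cancellations $\exp(2Y)=\exp(X_0)\Rightarrow 2Y=X_0$ are legitimate. This is arranged by choosing $\delta$ small enough at the outset (using continuity of $\gamma$) and by exploiting convexity of $U$ to keep the dyadic multiples inside it; the subsequent passage from dyadic $s$ to arbitrary real $s$, and the identification $X\in\mathrm{L}(G)$, are then soft continuity/density arguments that need no further control of neighborhoods.
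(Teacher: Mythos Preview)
Your proof is correct, and it takes a genuinely different route from the paper's. The paper uses a mollification argument: it convolves $\gamma$ with a smooth bump function $\rho$ supported near $0$, obtaining $\tilde\gamma(t)=(\rho*\gamma)(t)$, and then uses the one-parameter group property to factor $\tilde\gamma(t)=\gamma(t)\cdot g$ for a fixed element $g\in\mathcal{A}$; a simple estimate shows $\lVert g-\mathds{1}\rVert<1$, so $g$ is invertible, and $\gamma=\tilde\gamma\cdot g^{-1}$ inherits smoothness from the convolution. Your argument instead exploits the local invertibility of $\exp$ together with a dyadic halving scheme to show directly that $\gamma(t)=\exp(tX)$ for some $X\in\mathrm{L}(G)$. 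The paper's approach is slicker and purely analytic, never identifying the generator; your approach yields the stronger (and, for the subsequent Proposition, more immediately useful) conclusion that $\gamma$ is actually an exponential curve, at the cost of the neighborhood bookkeeping you correctly flagged. Both are standard; your identification of $X$ is in fact exactly what the paper needs next in Proposition~\ref{propfunctorialproperty}, where it rederives $\gamma=\exp(t\,\mathrm{Lie}(\varphi)x)$ from smoothness.
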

\begin{proof}
Let $\gamma(t), \, t \in \IR$ a continuous one-parameter subgroup in $G$.
We write $j: G \hookrightarrow \mathcal{A}^{\times}$ for the natural inclusion and define $\gamma_j(t) := j \circ \gamma(t)$. Since $\gamma$ is continuous, there exists $\epsilon >0$ with $\lVert \gamma(t) - \mathds{1} \rVert < \frac{1}{2}, \; \forall t \in \IR, \lvert t \rvert < \epsilon$. Let $\rho: \IR \to \IR_+$ be a smooth cut-off function with compact support in $[-\epsilon, \epsilon]$ and $\int\limits_{\IR} \rho(s) \mathrm{d}s = 1$.\\
In a Banach space (even in infinite dimensions) we can do path-integration and thus define
\begin{align*} \tilde{\gamma}(t)  := \rho * \gamma_j (t) & =  \int\limits_{\IR} \rho(s) \gamma_j(t-s) \mathrm{d} s =  \gamma_x(t) \int\limits_{\IR} \rho(s) \gamma_j(-s) \mathrm{d} s\\ 
& =  \gamma_j(t) \int\limits_{\epsilon}^{\epsilon} \rho(s) \gamma_j(-s) \mathrm{d} s
 =:  \gamma_j(t) \cdot g \end{align*}
with $g :=  \int\limits_{\epsilon}^{\epsilon} \rho(s) \gamma_j(-s) \mathrm{d} s \in \mathcal{A}$. We can estimate: 
\begin{align*} \lVert g - \mathds{1} \rVert  =  \rVert \int\limits_{\epsilon}^{\epsilon} \rho(s) \gamma_x(-s) \mathrm{d} s - \mathds{1} \rVert  \leq \int\limits_{\epsilon}^{\epsilon} \rho(s) \lvert \gamma_x(-s) - \mathds{1} \rVert \mathrm{d} s
 \leq \frac{1}{2} \int\limits_{\epsilon}^{\epsilon} \rho(s) \mathrm{d} s = \frac{1}{2}
\end{align*}
so that $g \in \mathcal{A}^{\times}$ i.e. $g$ is invertible. Therefore: $ \gamma_j(t) := \tilde{\gamma}(t) \cdot g^{-1}$.
But $\tilde{\gamma}(t)$ is smooth, since
\begin{equation*}\frac{d^k}{dt^k} \tilde{\gamma}(t) = \frac{d^k}{dt^k} \int\limits_{\IR} \rho(s) \gamma_x(t-s) \mathrm{d} s =   \frac{d^k}{dt^k} \int\limits_{\IR} \rho(t-s) \gamma_x(s) \mathrm{d} s =   \int\limits_{\IR} \rho^{(k)}(t-s) \gamma_x(s) \mathrm{d} s. \end{equation*}
Therefore, $\gamma_j = \tilde{\gamma} \cdot g^{-1}$ is smooth and so is $\gamma(t)$.\\ 
\end{proof}

\begin{Proposition}[Functorial property of the Lie Algebras]\label{propfunctorialproperty}
\mbox{}\\
Let $G_1, G_2$ be linear Banach Lie groups with Lie algebras $\mathrm{Lie}(G_1)$  and  $\mathrm{Lie}(G_2)$ , respectively. Let $\varphi: G_1 \to G_2$ a \emph{continuous} group homomorphism. Then, the derivative 
\begin{equation*} \frac{d}{dt}\bigl\lvert_{t=0} \varphi(\exp(t x) ) =: \mathrm{Lie}(\varphi) x \end{equation*}
exists for every $x \in \mathrm{Lie}(G_1)$. 
This defines the unique (continuous) Lie algebra homomorphism $\mathrm{Lie}(\varphi)$ such that the following diagram commutes:
\begin{equation*}
\begin{xy}
  \xymatrix{
G_1  \ar[r]^ \varphi & G_2    \\
   \mathrm{Lie}(G_1)\ar[u]^{\exp} \ar[r]^ {\mathrm{Lie}(\varphi)}         &  \mathrm{Lie}(G_2) \ar[u]_{\exp} 
  }
\end{xy}
 \end{equation*}
\end{Proposition}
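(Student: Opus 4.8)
The plan is to establish the three assertions of Proposition \ref{propfunctorialproperty} in the natural order: (1) existence of the derivative $\frac{d}{dt}\big|_{t=0}\varphi(\exp(tx))$ for every $x\in\mathrm{Lie}(G_1)$; (2) that the resulting map $\mathrm{Lie}(\varphi)$ is $\IR$-linear and a Lie algebra homomorphism, and that the exponential diagram commutes; (3) uniqueness. The key tool is the preceding Lemma (``Continuous one-parameter groups are smooth''), which we may assume.

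First I would fix $x\in\mathrm{Lie}(G_1)$ and consider the curve $\gamma_x(t):=\varphi(\exp(tx))$ in $G_2$. Since $\varphi$ is a continuous group homomorphism and $t\mapsto\exp(tx)$ is a continuous one-parameter subgroup of $G_1$, the composition $\gamma_x$ is a continuous one-parameter subgroup of $G_2\subseteq\mathcal{A}_2^\times$. By the preceding Lemma, $\gamma_x$ is therefore \emph{smooth}; in particular $\dot\gamma_x(0)$ exists, and by the standard theory of one-parameter groups in a Banach algebra (as recalled in the Definition of a linear Banach Lie group, using $D_0\exp x = x$) we have $\gamma_x(t)=\exp(t\,\dot\gamma_x(0))$ with $\dot\gamma_x(0)\in\mathrm{Lie}(G_2)$ — indeed $\exp(t\,\dot\gamma_x(0))\in G_2$ for all $t$, so $\dot\gamma_x(0)$ lies in $\mathrm{Lie}(G_2)$ by definition of the latter. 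This defines $\mathrm{Lie}(\varphi)x:=\dot\gamma_x(0)$, and commutativity of the diagram, $\exp_{G_2}\circ\mathrm{Lie}(\varphi)=\varphi\circ\exp_{G_1}$, is exactly the identity $\gamma_x(1)=\exp(\dot\gamma_x(0))$ applied with $x$ replaced by $tx$ and $t=1$.

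Next I would verify that $\mathrm{Lie}(\varphi)$ is linear and bracket-preserving. Homogeneity $\mathrm{Lie}(\varphi)(sx)=s\,\mathrm{Lie}(\varphi)(x)$ is immediate from the reparametrization $\gamma_{sx}(t)=\gamma_x(st)$. Additivity follows from the Trotter product formula: $\exp(t(x+y))=\lim_{n\to\infty}\bigl(\exp(\tfrac{t}{n}x)\exp(\tfrac{t}{n}y)\bigr)^n$, valid in a Banach algebra; applying the continuous homomorphism $\varphi$ commutes with the finite products, and passing to the limit (using continuity of $\varphi$ and local uniformity of the Trotter convergence) gives $\gamma_{x+y}(t)=\lim_n\bigl(\gamma_x(\tfrac tn)\gamma_y(\tfrac tn)\bigr)^n$, whence differentiating at $t=0$ yields $\mathrm{Lie}(\varphi)(x+y)=\mathrm{Lie}(\varphi)(x)+\mathrm{Lie}(\varphi)(y)$. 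The bracket identity follows similarly from the commutator formula $\exp([x,y])=\lim_n\bigl(\exp(\tfrac1n x)\exp(\tfrac1n y)\exp(-\tfrac1n x)\exp(-\tfrac1n y)\bigr)^{n^2}$, again transported through $\varphi$. Continuity of $\mathrm{Lie}(\varphi)$ can be obtained from the smoothness/estimates in the Lemma's proof (the convolution smoothing gives locally uniform control of $\gamma_x$ in $x$), or alternatively deduced post hoc from the fact that a linear map between Banach Lie algebras intertwining the exponentials is automatically bounded on a neighborhood of $0$. Uniqueness is then clear: any Lie algebra homomorphism $\psi$ with $\exp_{G_2}\circ\psi=\varphi\circ\exp_{G_1}$ must satisfy $\psi(x)=\frac{d}{dt}\big|_{0}\exp_{G_2}(\psi(tx))=\frac{d}{dt}\big|_{0}\varphi(\exp_{G_1}(tx))=\mathrm{Lie}(\varphi)x$ since $\exp_{G_1}$ is a local diffeomorphism at $0$.

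The main obstacle I anticipate is the passage to the limit in the Trotter and commutator formulas in the infinite-dimensional Banach setting: one must check that these product limits converge locally uniformly in $t$ (and ideally in $x,y$) so that differentiation at $t=0$ is justified, and that $\varphi$, being only \emph{continuous}, nevertheless commutes with the limit — this is where the smoothness Lemma is doing real work, since it upgrades mere continuity of $t\mapsto\varphi(\exp(tx))$ to differentiability, which is what makes $\dot\gamma_x(0)$ meaningful in the first place. Everything else (homogeneity, the diagram, uniqueness) is formal once existence and the algebraic identities are in hand.
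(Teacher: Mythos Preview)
Your proposal is correct and follows essentially the same line as the paper's proof: invoke the preceding Lemma to make $\gamma_x$ smooth, define $\mathrm{Lie}(\varphi)x$ as its derivative at $0$, read off the exponential diagram, and then use Trotter and a commutator formula for additivity and bracket preservation.

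Two small differences are worth noting. First, the paper avoids your flagged ``differentiation through the limit'' obstacle entirely: instead of differentiating the Trotter product at $t=0$, it applies $\varphi$ term-by-term to the Trotter product in $G_1$, uses the already-established identity $\varphi(\exp z)=\exp(\mathrm{Lie}(\varphi)z)$ on each factor, and then recognizes the resulting limit as the Trotter formula in $G_2$, yielding $\varphi(\exp(x+y))=\exp(\mathrm{Lie}(\varphi)x+\mathrm{Lie}(\varphi)y)$ directly; local injectivity of $\exp$ then gives additivity without any interchange of limit and derivative. Second, for the bracket the paper exploits the ambient Banach algebra and uses the difference formula $\lim_k\bigl(\exp(x/k)\exp(y/k)-\exp(y/k)\exp(x/k)\bigr)^k=\exp[x,y]$ rather than your group-commutator version; both work, but the paper's choice again sidesteps the need to differentiate. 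Continuity in the paper is obtained simply from the commuting diagram and the fact that $\exp$ is a local diffeomorphism near $0$.
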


\begin{proof} Let $x \in \mathrm{Lie}(G_1)$. Then, $\varphi(\exp(tx))$ is a continuous one-parameter group in $G_2$ and therefore, according to the previous Lemma, even differentiable, i.e. $\frac{d}{dt}\bigl\lvert_{t=0} \varphi(\exp(t x) )$ exists and we define this to be $\mathrm{Lie}(\varphi) x$. But the unique one-parameter group in $G_2$ whose differential at the identity equals $\mathrm{Lie}(\varphi) x$ is $t \mapsto \exp (t\,  \mathrm{Lie}(\varphi) x )$. We conclude:
\begin{equation}\label{Lievarphi} \varphi \circ  \exp_{G_1} (x)  = \exp_{G_2}  \circ\, \mathrm{Lie}(\varphi) (x) , \; \forall x \in \mathrm{Lie}(G_1). \end{equation}
\noindent Obviously, $\mathrm{Lie}(\varphi)$ is $\IR$-, respectively $\IC$- homogeneous.\\ 
To show that it defines a Lie algebra homomorphism, we use the Trotter product formula:
\begin{equation}\lim\limits_{k \to \infty} \bigl( \exp(x/k) \exp(y/k)\bigl) ^k = \exp(x+y) \end{equation}
and the commutator formula
\begin{equation}\label{commutatorformula}
\lim\limits_{k \to \infty} \bigl( \exp(x/k) \exp(y/k) - \exp(y/k) \exp(x/k)    \bigl) ^k = \exp(xy - yx). \end{equation}
Then:
\begin{align*} \varphi(\exp(x+y)) =  & \lim\limits_{k \to \infty} \varphi \bigl( \exp(x/k) \exp(y/k)\bigl) ^k =  \lim\limits_{k \to \infty} \varphi \bigl( \exp(x/k) \bigr)^k  \varphi \bigl( \exp(y/k)\bigl) ^k \\
= & \lim\limits_{k \to \infty}  \exp(\frac{1}{k}\mathrm{Lie}(\varphi)x ) \bigr)^k \exp(\frac{1}{k}\mathrm{Lie} (\varphi) y)\bigl) ^k  = \exp\bigl( \mathrm{Lie}(\varphi) x + \mathrm{Lie}(\varphi) y\bigr).
\end{align*}
which means \begin{equation*} \mathrm{Lie}(\varphi)(x+y) = \mathrm{Lie}(\varphi)x +  \mathrm{Lie}(\varphi)y. \end{equation*}
Similarly, with \eqref{commutatorformula}, 
$ \varphi( \exp([x,y] ) = \exp\bigl( [ \mathrm{Lie}(\varphi)x, \mathrm{Lie}(\varphi) y ] \bigr)$ and thus   \begin{equation*}\mathrm{Lie}(\varphi)([x,y]) = \bigl[\mathrm{Lie}(\varphi)x, \mathrm{Lie}(\varphi)y\bigr]. \end{equation*}

\noindent Hence $\mathrm{Lie}(\varphi)$ defines a Lie algebra homomorphism. From \eqref{Lievarphi} and the fact that the exponential map is a local diffeomorphism around $0$, it follows that $\mathrm{Lie}(\varphi)$ is continuous. 
 
\end{proof}
\newpage

\section{Derivation of the Charge Conjugation}
\noindent We present a general derivation of the charge conjugation operator $\CC$ mapping negative\\ 
energy solutions of the Dirac equation to positiv energy solutions with opposite charge.\\

\noindent More precisely, we want a transformation $\CC$ satisfying
\begin{enumerate}[i)]
\item $\CC H(e) \CC^{-1} = - H(-e)$\\
\item $ i \hbar \frac{\partial}{\partial t} \Psi = H(e) \Psi \iff  i \hbar \frac{\partial}{\partial t} \CC \Psi = H(-e) \CC\Psi$
\end{enumerate}
for H(e) the Dirac-Hamiltonian with charge e and any eigenstate $\Psi$.\\
i) and ii) imply
\begin{equation*} i \hbar \frac{\partial}{\partial t} \CC \Psi = - \CC H(e) \Psi \end{equation*}
For an eigenstate $\Psi \neq 0$ this is only possible, if $\CC$ is \emph{anti-linear}. We can thus write
\begin{equation*} \CC \Psi = \CC \Psi^{cc.}\end{equation*}
(cc. denotes complex conjugation).\\

\noindent Now we take a look at the Hamiltonian:
\begin{equation*} H(e) = -i \underline{\alpha} \cdot \underline{\nabla} - e \underline{\alpha} \cdot \underline{A} + m \beta + e \Phi \end{equation*}
and observe that in order to satisfy i), we need
\begin{equation}\label{conjtransform} \beta C = - C \beta^{cc.} \,; \; \alpha_k\,C = C\, \alpha^{cc.}_{k} \end{equation}
This, together with the (anti-) commutation relations for the $\alpha$ matrices (or $\gamma$ matrices, respectively) are enough to determine the form of the charge conjugation operator in any given representation. In particular, note that 
\begin{align*}
&\lbrace \gamma^i , \alpha^j \rbrace = 0, \; \text{for}\;  i=0\; \text{or} \; i=j  \\
&\left[ \gamma^i , \alpha^j \right]  = 0, \; else
\end{align*}
We look at the two most common examples:\\

\noindent \textbf{Standard Representation}: Only $\alpha^2$ (and thus $\gamma^2$) is imaginary, all the other matrices are real. Thus $C = const. \, \gamma^2$. Conventionally: $C = i \gamma^2 = i \beta \alpha^2$\\

\noindent \textbf{Standard Representation}: All the $\gamma$'s are imaginary. In particular, $\beta$ is purely imaginary and all the $\alpha-$matrices are real. Hence, looking at \eqref{conjtransform}, we see that we can take $C = \mathds{1}$, i.e. charge-conjugation is just complex conjugation.

\newpage
\section{Miscellaneous}
\vspace*{4mm}

\begin{Lemma}[For use in \eqref{L-M}]
\mbox{}\\
For all $U \in \Ur(\HH)$ it is true that 
\begin{align*} \dim\ker (U_{++}) = \dim\ker(U^*_{--})\\
\dim\ker(U_{--}) = \dim\ker(U^*_{++}) \end{align*}
\end{Lemma}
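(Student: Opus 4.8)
�The plan is to exploit the unitarity of $U$ together with the block structure with respect to $\HH = \HH_+ \oplus \HH_-$. Write
\[
U = \begin{pmatrix} U_{++} & U_{+-} \\ U_{-+} & U_{--} \end{pmatrix}, \qquad
U^* = \begin{pmatrix} U^*_{++} & U^*_{+-} \\ U^*_{-+} & U^*_{--} \end{pmatrix},
\]
recalling the paper's convention $U^*_{++} = (U^*)_{++} = (U_{++})^*$, etc. The conditions $U^*U = UU^* = \mathds{1}_{\HH}$ give, on the $(++)$-block, the two identities
\[
(U_{++})^*U_{++} + (U_{-+})^*U_{-+} = \mathds{1}_{\HH_+}, \qquad
U_{++}(U_{++})^* + U_{+-}(U_{+-})^* = \mathds{1}_{\HH_+},
\]
and analogously on the $(--)$-block. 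Since $U \in \Ur(\HH)$, the off-diagonal parts $U_{+-}, U_{-+}$ are Hilbert-Schmidt, hence compact, so $(U_{++})^*U_{++}$ is a compact perturbation of $\mathds{1}_{\HH_+}$; in particular $U_{++}$ is Fredholm (as already noted in the excerpt), and likewise $U_{--}$, so all the kernels in question are finite-dimensional and the Fredholm indices are well-defined.

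The key step is to relate $\ker(U_{++})$ and $\ker(U^*_{--})$ directly. First I would observe that for $f \in \HH_+$, $U_{++}f = 0$ means $Uf = U_{-+}f \in \HH_-$; since $U$ is an isometry this sets up an injective isometric map $\ker(U_{++}) \to \HH_-$, $f \mapsto Uf$. Next I would identify the image: if $f \in \ker(U_{++})$ and $g \in \HH_-$, then $\langle Uf, Ug\rangle = \langle f, g\rangle = 0$, and writing $Ug = U_{+-}g \oplus U_{--}g$ one gets $\langle Uf, Ug\rangle = \langle Uf, U_{--}g\rangle$ because $Uf \in \HH_-$; hence $Uf \perp \operatorname{im}(U_{--})$, i.e. $Uf \in (\operatorname{im} U_{--})^\perp = \ker(U^*_{--})$. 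Conversely, given $h \in \ker(U^*_{--}) \subseteq \HH_-$, set $f := U^{-1}h = U^*h$; one computes $U_{++}f = P_+ U U^* h = 0$ using $h \in \HH_-$ and $U^*h \in \HH_+$ — actually the cleanest route is to run the same argument with $U^*$ in place of $U$ and the roles of $\pm$ swapped, giving an isometric injection $\ker(U^*_{--}) \to \ker((U^*)^*_{++}) = \ker(U_{++})$. The two injections are mutually inverse, so $\dim\ker(U_{++}) = \dim\ker(U^*_{--})$. Replacing $U$ by $U^*$ throughout (which is again in $\Ur(\HH)$) yields the second identity $\dim\ker(U_{--}) = \dim\ker(U^*_{++})$.

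I expect the main obstacle to be purely bookkeeping: keeping the four block-adjoint conventions straight (the paper's footnote warns that $U^*_{+-} \neq (U_{+-})^*$ in their notation), and making sure the ``conversely'' direction really lands in the right kernel rather than merely in some orthogonal complement. An alternative, perhaps more robust, route that I would keep in reserve: combine $U^*U = \mathds{1}$ restricted appropriately to get $\ker(U_{++}) = \ker(P_+ U|_{\HH_+})$ and use that $U|_{\HH_+} : \HH_+ \to \HH$ is an isometry, so $\ker(U_{++})$ is exactly the part of $\HH_+$ mapped by $U$ into $\HH_-$; a short index computation via $\operatorname{ind}(U_{++}) = -\operatorname{ind}(U_{--})$ (already in the excerpt) plus $\dim\ker(U_{++}) - \dim\operatorname{coker}(U_{++}) = \operatorname{ind}(U_{++})$ and the identification $\operatorname{coker}(U_{++}) \cong \ker((U_{++})^*) = \ker(U^*_{++})$ then reduces everything to showing just one of the two equalities. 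Either way the content is elementary; the write-up just needs care with adjoints.
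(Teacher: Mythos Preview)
Your proof is correct and takes essentially the same approach as the paper: both establish that $U$ restricts to a bijection $\ker(U_{++}) \to \ker(U^*_{--})$ using the block identities coming from unitarity. The only cosmetic difference is that the paper packages the argument via the even/odd decomposition and shows surjectivity of $U_{\rm odd}|_{\ker U_{\rm even}}$ directly from the identity $U_{\rm odd}U^*_{\rm odd} = \mathds{1}$ on $\ker(U^*_{\rm even})$, whereas you obtain the reverse injection by applying the same argument to $U^*$ with the roles of $\pm$ swapped; the content is identical.
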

\begin{proof} With respect to the splitting $\HH = \HH_+ \oplus \HH_-$, we write $U = U_{\rm even} + U_{\rm odd}$, with $U_{\rm even}$ the diagonal parts and $U_{\rm odd}$ the off-diagonal parts. $UU^* = U^*U = \mathds{1}$ then implies
\begin{enumerate}[i)] 
\item $(U^*U)_{\rm even} = U^*_{\rm even}U_{\rm even} + U^*_{\rm odd}U_{\rm odd} = \mathds{1}$
\item $(U^*U)_{\rm odd} =  U^*_{\rm even}U_{\rm odd} + U^*_{\rm odd}U_{\rm even} = 0 $
\item $(UU^*)_{\rm even} = U_{\rm even}U^*_{\rm even} + U_{\rm odd}U^*_{\rm odd} = \mathds{1}$
\item $(UU^*)_{\rm odd} =  U_{\rm even}U^*_{\rm odd} + U_{\rm odd}U^*_{\rm even} = 0 $
\end{enumerate}
\mbox{}\\
\noindent 
ii) implies \hspace{3cm} $U_{\rm odd} \ker(U_{\rm even}) \subset \ker(U^*_{\rm even})$
\vspace{2mm}
\item since \hspace{2cm} $U^*_{\rm even} U_{\rm odd}\, x = - U^*_{\rm odd}U_{\rm even}\, x = 0, \; \forall x \in \ker(U_{\rm even})$.
\vspace{2mm}
\item Similarly, iv) implies\hspace{1cm} $U^*_{\rm odd} \ker(U^*_{\rm even}) \subset \ker(U_{\rm even})$.
\item We conclude:
\begin{equation*} U_{\rm odd}\, \ker(U_{\rm even}) \subset \ker(U^*_{\rm even}) \stackrel{iii)}{=} U_{\rm odd}U^*_{\rm odd} \ker(U^*_{\rm even}) \subset U_{\rm odd} \ker(U_{\rm even}) \end{equation*}
And thus
\begin{equation*} U \ker(U_{\rm even}) = U_{\rm odd} \ker(U_{\rm even}) = \ker(U^*_{\rm even}) \end{equation*}
Separating w.r.to the $\pm-$splitting yields
\begin{equation}U \ker(U_{++}) = \ker (U^*_{--})\; \text{and}\; U \ker(U_{--}) = \ker (U^*_{++}) \end{equation}
which implies the claimed identities.\\ 
\end{proof}
\newpage
\noindent\textbf{Norm Estimates for the Dyson Series, Lemma \ref{Lem:Estimates}}\\

\noindent For the terms in the Dyson series \eqref{Dyson} we proove the norm-estimates  [Lem. \ref{Lem:Estimates}]:
\begin{equation*}\lVert U_n(t,t') \rVert \leq \frac{1}{n!}\Biggl( \int\limits_{t'}^t \lVert V(s) \rVert \mathrm{d}s \Biggr)^n, \; \forall n \geq 0 \end{equation*}
and 
\begin{align*}
& \lVert [\epsilon, U_1(t,t')] \rVert_2  \leq \int\limits_{t'}^{t} \lVert[\epsilon, V(s)]\rVert_2 \; \mathrm{d}s\\
\lVert [\epsilon ,  U_n(t,t')] \rVert_2 \leq & \frac{1}{(n-2)!} \int\limits_{t'}^t \lVert [\epsilon, V(s)] \rVert_2\; \mathrm{d}s \Biggl( \int\limits_{t'}^t \lVert V(r) \rVert \mathrm{d}r \Biggr)^{n-1}, \; \forall n \geq 1
\end{align*}

\begin{proof}
Obviously, $\lVert U_0(t,t') \rVert = \lVert \mathds{1}\rVert = 1,\; \forall t,t'$. And for $n \geq 1$: 
\begin{align*}U_n (t,t') &= (-i)^{n} \int\limits^{t}_{t'} V_I(s_1)  \int\limits^{s_1}_{t'}V_I(s_2) \ldots \int\limits^{s_{n-1}}_{t'}V_I(s_n)\; \mathrm{d}s_1 \ldots  \mathrm{d}s_{n}\\
\Rightarrow\; \lVert U_n(t,t') \rVert & \leq  \int\limits^{t}_{t'} \lVert V_I(s_1)\rVert  \int\limits^{s_1}_{t'}\lVert V_I(s_2)\rVert \ldots \int\limits^{s_{n-1}}_{t'}\lVert V_I(s_n)\rVert \; \mathrm{d}s_1 \ldots  \mathrm{d}s_{n}\\
& =   \int\limits^{t}_{t'} \lVert V(s_1)\rVert  \int\limits^{s_1}_{t'}\lVert V(s_2)\rVert \ldots \int\limits^{s_{n-1}}_{t'}\lVert V(s_n)\rVert \; \mathrm{d}s_1 \ldots  \mathrm{d}s_{n}\\
& = \frac{1}{n!}\Bigl(\int\limits^t_{t'} \lVert V(s) \rVert \mathrm{d}s \Bigl)^n
\end{align*}
For the $I_2$-estimates we first note that conjugation with $e^{iD_0t}$ doesn't change the Hilbert-Schmidt norm of the odd parts. Thus:
\begin{equation*}\lVert [\epsilon, U_1(t,t')] \rVert_2 = \lVert \int\limits_{t'}^{t} [\epsilon, V_I(s)] \mathrm{d}s\;\rVert_2 \leq \int\limits_{t'}^{t} \lVert[\epsilon, V_I(s)]\rVert_2 \; \mathrm{d}s = \int\limits_{t'}^{t} \lVert[\epsilon, V(s)]\rVert_2 \; \mathrm{d}s\end{equation*}
Furthermore,
\begin{align*}\lVert [\epsilon, U_{n+1} (t,t')] \rVert_2 & \leq \int\limits_{t'}^{t} \lVert[\epsilon, V_I(s)U_n(s,t')]\rVert_2 \; \mathrm{d}s\\
& \leq \int\limits_{t'}^{t} \lVert[\epsilon, V_I(s)]\, U_n(s,t')\rVert_2 \; \mathrm{d}s + \int\limits_{t'}^{t} \lVert V_I(s)[\epsilon, U_n(s,t')] \rVert_2 \; \mathrm{d}s\\
&\leq  \int\limits_{t'}^{t} \lVert[\epsilon, V_I(s)]\rVert_2\, \lVert U_n(s,t')\rVert \; \mathrm{d}s + \int\limits_{t'}^{t} \lVert V_I(s)\rVert \; \lVert[\epsilon, U_n(s,t')] \rVert_2 \; \mathrm{d}s\\
& = \int\limits_{t'}^{t} \lVert[\epsilon, V(s)]\rVert_2\, \lVert U_n(s,t')\rVert \; \mathrm{d}s + \int\limits_{t'}^{t} \lVert V(s)\rVert \; \lVert[\epsilon, U_n(s,t')] \rVert_2 \; \mathrm{d}s
\end{align*}
For n=1 this yields
\begin{align*}\lVert [\epsilon, U_2 (t,t')] \rVert_2 &\leq  \int\limits_{t'}^t\int\limits_{t'}^s \bigl( \lVert[\epsilon, V(s)]\rVert_2\, \lVert V(r)\rVert + \lVert V(s)\rVert \,  \lVert[\epsilon, V(r)]\rVert_2 \bigl)\, \mathrm{d}r\, \mathrm{d}s\\
& = \frac{1}{2} \int\limits_{t'}^t\int\limits_{t'}^t \bigl( \lVert[\epsilon, V(s)]\rVert_2\, \lVert V(r)\rVert + \lVert V(s)\rVert \,  \lVert[\epsilon, V(r)]\rVert_2 \bigl)\, \mathrm{d}r\, \mathrm{d}s\\
& = \int\limits_{t'}^t  \lVert[\epsilon, V(s)]\rVert_2 \; \mathrm{d}s \int\limits_{t'}^t  \lVert V(s)\rVert\, \mathrm{d}s
\end{align*}
And thus, inductively, for $n \geq2$
\begin{align*}
&\lVert [\epsilon, U_{n+1} (t,t')] \rVert_2  \leq \int\limits_{t'}^{t} \lVert[\epsilon, V(s)]\rVert_2\, \lVert U_n(s,t')\rVert \; \mathrm{d}s + \int\limits_{t'}^{t} \lVert V(s)\rVert \; \lVert[\epsilon, U_n(s,t')] \rVert_2 \; \mathrm{d}s\\
\leq & \frac{1}{n!} \int\limits_{t'}^{t} \lVert[\epsilon, V(s)]\rVert_2 \Bigl(\int\limits^s_{t'} \lVert V(r) \rVert \mathrm{d}r \Bigl)^n \mathrm{d}s + \frac{1}{(n-2)!} \int\limits_{t'}^t \lVert V(s) \rVert \int\limits_{t'}^s \lVert [\epsilon, V(s')] \rVert_2\, \mathrm{d}s' \Bigl(\int\limits_{t'}^s \lVert V(r) \rVert \mathrm{d}r \Bigr)^{n-1}\mathrm{d}s\\
 \leq & \frac{1}{n!} \int\limits_{t'}^{t} \lVert[\epsilon, V(s)]\rVert_2\; \mathrm{d}s\, \Bigl(\int\limits^t_{t'} \lVert V(r) \rVert \mathrm{d}r \Bigl)^n + \int\limits_{t'}^t \lVert [\epsilon, V(s)] \rVert_2\,\mathrm{d}s\; \frac{(n-1)}{(n-1)!} \int\limits_{t'}^t \lVert V(s) \rVert  \Bigl( \int\limits_{t'}^s \lVert V(r) \rVert \mathrm{d}r \Bigr)^{n-1}\mathrm{d}s\\
=\;&  \frac{1}{n!} \int\limits_{t'}^{t} \lVert[\epsilon, V(s)]\rVert_2\, \mathrm{d}s\, \Bigl(\int\limits^t_{t'} \lVert V(r) \rVert \mathrm{d}r \Bigl)^n + \; \frac{(n-1)}{n!}\,\int\limits_{t'}^{t} \lVert[\epsilon, V(s)]\rVert_2\; \mathrm{d}s\, \Bigl(\int\limits^t_{t'} \lVert V(r) \rVert \mathrm{d}r \Bigl)^n\\
=\;&  \frac{1}{(n-1)!} \int\limits_{t'}^{t} \lVert[\epsilon, V(s)]\rVert_2\; \mathrm{d}s\, \Bigl(\int\limits^t_{t'} \lVert V(r) \rVert \mathrm{d}r \Bigl)^n
\end{align*}
\end{proof}

\newpage 
\thispagestyle{empty}
\quad 
\newpage

\newpage
\thispagestyle{empty}

\vspace*{2cm}
\noindent {\huge\textbf{Selbstst\"andigkeitserkl\"arung}}

\begin{flushleft}
    \large Hiermit erkl\"are ich, dass ich die vorliegende Arbeit selbstst\"andig
        und ohne Benutzung anderer als der von mir angegebenen Quellen und
        Hilfsmittel verfasst habe.
   
 \vspace{3cm}

\noindent {\Large\textbf{Academic Honor Principle}}\\
\mbox{}\\
\noindent \large I hereby certify that I have completed the present thesis independently and
without the use of sources other than those stated in the thesis.
     \newline
\vspace{5cm}

Dustin Lazarovici\\
M\"unchen, 30. September 2011
\end{flushleft}

\newpage 
\thispagestyle{empty}
\quad 
\newpage
\thispagestyle{empty}

\noindent {\huge\textbf{Danksagung}} \hfill {\LARGE{Acknowledgement}}

\vspace*{1.9cm}
\large{
Ich danke meinen Eltern, Doriana und Dan Lazarovici, die mir mein Studium erm\"oglicht haben, f\"ur ihre Liebe und ihre Unterst\"utzung und f\"ur ihre Geduld.\\

Ich danke meinem Bruder Remy daf\"ur, dass ich immer auf ihn z\"ahlen konnte.\\

Ich danke meinem Lehrer, Detlef D\"urr, von dem ich mehr gelernt habe, als man von irgendeinem Professor erwarten kann. Er hat mir meinen Glauben bewahrt, dass ein physikalisches Verst\"andins der Welt, im eigentlich Sinne, tats\"achlich m\"oglich ist.\\

Ich danke Peter Pickl f\"ur die Betreuung der Arbeit und daf\"ur, dass er immer die richtigen, das hei\ss t, physikalischen, Fragen stellt.\\

Ich danke Dirk Deckert und Franz Merkl f\"ur wertvolle Gespr\"ache und kompetenten Rat. Ohne sie h\"atte ich oft nicht gewusst, was ich eigentlich mache. \\

Ich danke meinen guten Freundinnen Janine Adomeit und Sarina Balkhausen f\"urs Korrekturlesen. Die verbliebenen Tippfehler -- es werden zahlreiche sein -- sind allein meiner Schludrigkeit und dem Zeitdruck geschuldet. \\

Ich danke Ir\`ene Lassmann f\"ur die K\"atzchen -- sie wei\ss\, schon, was gemeint ist.\\

Ich danke Ingrid Scherer und Verena Heinemann vom mathematischen Institut der LMU M\"unchen, ohne die ich im B\"urokratiedschungel verhungert w\"are.\\

}

\vspace{3cm}
\flushright Dustin Lazarovici\\

 
\end{document}